\DeclareFontFamily{OT1}{pzc}{}
\DeclareFontShape{OT1}{pzc}{m}{it}{<-> s * [1.10] pzcmi7t}{}
\DeclareMathAlphabet{\mathpzc}{OT1}{pzc}{m}{it}
\crefname{figure}{fig.}{fig.}
\Crefname{figure}{Fig.}{Fig.}
  \providecommand\BibTeX{{%
    \normalfont B\kern-0.5em{\scshape i\kern-0.25em b}\kern-0.8em\TeX}}}
\begin{document}

\title{An Investigation on Kripke-style Modal Type Theories}

\author{Jason Z. S. Hu}
\email{zhong.s.hu@mail.mcgill.ca}
\affiliation{%
  \department{School of Computer Science}
  \institution{McGill University}
  \streetaddress{McConnell Engineering Bldg. Room 225, 3480 University St.}
  \city{Montr\'eal}
  \state{Qu\'ebec}
  \country{Canada}
  \postcode{ H3A 0E9}
}

\author{Brigitte Pientka}
\email{bpientka@cs.mcgill.ca}
\affiliation{%
  \department{School of Computer Science}
  \institution{McGill University}
  \streetaddress{McConnell Engineering Bldg. Room 107N, 3480 University St.}
  \city{Montr\'eal}
  \state{Qu\'ebec}
  \country{Canada}
  \postcode{ H3A 0E9}
}


\begin{abstract}
  This technical report investigates Kripke-style modal type theories, both simply
  typed and dependently typed. We examine basic meta-theories of the type theories,
  develop their substitution calculi, and give normalization by evaluation
  algorithms. 
\end{abstract}



\keywords{modal type theory, normalization, normalization by evaluation, presheaf
  category, explicit
  substitution calculus}

\maketitle

\section{Introduction}\labeledit{sec:intro}

Modalities are a widely used concept. Certain modalities can readily be implemented in
existing languages, e.g. various (co)monad instances in Haskell
libraries. Unfortunately, not every modality can be implemented in well-known
languages and this problem triggers studies of adding various modalities into
different type
theories~\citep[etc]{davies_modal_2001,pfenning_judgmental_2001,gratzer_implementing_2019,birkedal_modal_2020,kavvos_dual-context_2017,clouston_fitch-style_2018,davies_temporal_2017,nakano_modality_2000}.
Though modalities have been researched quite extensively in the logic communities,
some applications of modalities in programming languages have more specialized
scenarios which allow programmers to assume more specific structures in the modalities
than commonly known as logical systems. On the other hand,
\citet{davies_modal_2001,pfenning_judgmental_2001,kavvos_dual-context_2017,clouston_fitch-style_2018}
give completely general formulation in natural deduction styles that is complete with
respect to their corresponding modal logical systems. 

\subsection{Modality by Axioms}

Traditionally, necessity in modal logics is described by some combination of
axioms. Among all, necessity must satisfy the following axiom:
\begin{align*}
  K: \square (A \to B) \to \square A \to \square B
\end{align*}
The system satisfying this axiom only is also called $K$. In addition, we might admit
the following axiom:
\begin{align*}
  T : \square A \to A
\end{align*}
The system satisfying both $K$ and $T$ is called $T$.  Another common axiom is
\begin{align*}
  4: \square A \to \square \square A
\end{align*}
Admittedly it is strange that the name of an axiom is $4$, but \emph{c'est la
  vie}. If we extend system $T$ with the axiom $4$, then we obtain $S4$. We call the
system which admits only axioms $K$ and $4$ system $K4$. 
There are other axioms but we will not cover them
here. An experienced eye may identify that $K$ effectively is the applicative functor
law and $T$ and $4$ combined are the comonad laws. Therefore, $S4$ describes the
necessity modality as a comonad, so necessity in $S4$ is sometimes called
\emph{a comonadic modality} and $S4$ is in \emph{a comonadic type theory}. 

\subsection{Different Formulations}

In this technical report, we consider the general formulations of necessity modality
in Kripke style (given as the implicit formulation in \citet{davies_modal_2001}). We
distinguish this style from another style, the dual-context or explicit style, given
in the same paper by how contexts are organized. The dual-context style is inspired by
the categorization of two different judgments for a proposition, validity or
truth. In $S4$, validity is a stronger assertion than truth; in particular, if a proposition is
valid, then it is also true, but not necessarily vice versa. This categorization of
propositions corresponds to two different contexts serving different purposes. In a
typing judgment in dual-context style, we have
\begin{mathpar}
  \typing[\Delta \sep \Gamma]t T
\end{mathpar}
where $\Delta$ records valid assumption while $\Gamma$ records the true ones. In
this view, the necessity modality, or $\square$, has the following introduction and
elimination rules in $S4$:
\begin{mathpar}
  \inferrule
  {\typing[\Delta \sep \cdot]t T}
  {\typing[\Delta \sep \Gamma]{\boxit t}{\square T}}

  \inferrule
  {\typing[\Delta \sep \Gamma]{t}{\square T} \\ \typing[\Delta, u : T \sep \Gamma]{t'}{T'}}
  {\typing[\Delta \sep \Gamma]{\letbox u t t'} T}
\end{mathpar}
where $\square T$ can be read as $T$ being valid. In this reading, the
introduction rule says that $T$ is valid if it can be derived only from valid
assumptions. The elimination rule allows use of $\square T$ by pattern matching it and
appends a valid assumption to $\Delta$ in the match body.

Kripke style, on the other hand, takes a different approach. It starts from Kripke
semantics from a logical point of view. Kripke semantics says that assuming an
accessibility relation via which different ``worlds'' are connected, we can model
various modal
logics by considering what structure this relation possesses:
\begin{itemize}
\item If the accessibility relation has no particular structure, then we have system $K$.
\item If the relation is reflexive, then we have system $T$.
\item If the relation is both reflexive and transitive, then we have system $S4$.
\item If the relation is only transitive, not reflexive, then we have system $K4$. 
\end{itemize}
Inspired by this interpretation, the calculus instead uses a stack of contexts
to keep track of assumptions in different worlds accessed through the relation, and
the structure of the relation is revealed in the form of elimination. All four aforementioned systems have
the following introduction and elimination rules for $\square$:
\begin{mathpar}
  \inferrule*
  {\mtyping[\vGamma; \cdot] t T}
  {\mtyping{\boxit t}{\square T}}

  \inferrule*
  {\mtyping t {\square T} \\ |\vDelta| = n}
  {\mtyping[\vGamma; \vDelta]{\unbox n t}{T}}
\end{mathpar}
From Kripke semantic point of view, the introduction rule states that we can follow
the accessibility relation to a new world, in which no assumption is made. On the
other hand, the elimination rule allows us to travel backward to the
already accessed worlds by specifying a proper \tunbox level $n$. This \tunbox level
$n$ is a reflection of Kripke semantics. Its range decides which modal logic the
calculus corresponds to:
\begin{itemize}
\item If $n = 1$, then the system is $T$.\footnote{In this case, we could have omitted
    the \tunbox level completely, but from a theoretical point of view, keeping the
    \tunbox level here helps greatly with uniformity. }
\item If $n \in [0, 1]$, then the system is $T$. 
\item If $n \in \N$ namely all natural numbers, then the system is $S4$.
\item If $n \in [1, +\infty)$ namely all positive numbers, then the system is $K4$. 
\end{itemize}
Notice that $1$ is always in the range. This is because $n = 1$ corresponds to the
axiom $K$, which is the minimal axiom the modality should admit. In this report, we
will apply this observation very frequently.
The fact that the calculus can be parameterized over the range of \tunbox levels is
one major advantage of Kripke style over dual-context style. In dual-context style,
different modal logics have very different formulations, as shown in
\citet{kavvos_dual-context_2017}.

Another advantage of Kripke style is that $\square T$ is considered as a product type,
as opposed to a sum type in dual-context style, where we have to handle commuting
conversions (c.f. \citet{girard_proofs_1989,kavvos_dual-context_2017}). In Kripke
style, this problem naturally goes away. These reasons lead to the choice of study in
this technical report. 

\subsection{Fitch-style Systems}

Some investigations effectively study the same group of systems under the name of ``Fitch
style''. Kripke style and Fitch style are fundamentally the same and only differ
slightly by how they organize contexts. As shown above, Kripke style manages 
context stacks. In Fitch style, a single context is maintained, but in the
context a delimiter symbol (usually $\thelock$) is used. The segment of context
appearing after $\thelock$ is considered assumptions in a new world until the next $\thelock$. For $K$, in Fitch style we
would have the following formulation:
\begin{mathpar}
  \inferrule
  {\typing[\Gamma, \thelock]{t}{T}}
  {\typing{t}{\square T}}

  \inferrule
  {\typing{t}{\square T} \\ \thelock \notin \Gamma'}
  {\typing[\Gamma, \thelock, \Gamma']{\tunbox\ t}{T}}
\end{mathpar}
Compared to the Kripke-style elimination rule of $K$,
\begin{mathpar}
  \inferrule*
  {\mtyping t {\square T}}
  {\mtyping[\vGamma; \Gamma]{\tunbox\ t}{T}}
\end{mathpar}
since the world structure is less clear in Fitch style, a predicate $\thelock \notin
\Gamma'$ is added to ensure all assumptions in $\Gamma'$ are in the same world. This
is unnecessary in Kripke style as the world structure is baked in the syntax of
context stacks. One can
also imagine writing out the elimination rule of $S4$ in Fitch style can also be
annoying as one must count the number of $\thelock$ ``just right''. Therefore, in this
technical report, we use Kripke style in the syntax. However, as we will see in the
constructions later, the line between Kripke and Fitch styles becomes much more blur
in the semantics.

\subsection{Contributions}

To summarize what has been achieved in this technical report, we
\begin{itemize}
\item provide a general scheme for substitutions for all four aforementioned systems,
  which generalizes well with dependent types;
\item provide two normalization by evaluation (NbE) algorithms, one based on a
  presheaf model and the other based on an untyped domain, so that they work for all
  four systems, and thus demonstrate their normalization at once;
\item define a substitution calculus, which, again, works for all four systems;
\item extend Martin-L\"of type theory with modality and explicit substitutions, such
  that the four systems have Martin-L\"of style dependent types;
\item at last, adapt the untyped domain model to dependent types, so that we
  obtain normalization for the dependently typed variants.
\end{itemize}
This technical report contains necessary discussion and proofs. A partial
formalization of certain concepts in Agda\footnote{primarily for simply types} can be found at
\url{https://gitlab.com/JasonHuZS/practice}. 

\section{Simply Typed Modal Type Theory}\labeledit{sec:st:syn}

In this section, we introduce the syntax and judgments for simply typed modal type
theory as per \citet{davies_modal_2001} and its related operations, parameterized by
some unspecified base type $B$. Effectively, we extend simply typed $\lambda$-calculus
with $\square$ modality shown in the previous section. We make the syntax precise below
\begin{alignat*}{2}
  S, T &:=&&\ B \sep \square T \sep S \func T
  \tag{Types, \Typ} \\
  i, j, k, l, m, n & && \tag{Meta-level natural numbers, $\mathbb{N}$} \\
  x, y & && \tag{variables, $\Var$} \\
  s, t, u &:=&&\ x \tag{Terms, $\Exp$} \\
  & && \sep \boxit t \sep \unbox n t
  \tag{box}\\
  & &&\sep \lambda x. t \sep s\ t \tag{functions}  \\
  \Gamma, \Delta, \Phi &:= &&\ \cdot \sep \Gamma, x : T
  \tag{Contexts, \Ctx}\\
  \vGamma, \vDelta &:= &&\ \epsilon \sep \vGamma; \Gamma \tag{Context
    stack, $\vect{\Ctx}$} \\
  w &:= &&\ v \sep \boxit w \sep \lambda x. w
  \tag{Normal form ($\Nf$)} \\
  v &:= &&\ x  \sep v\ w \sep \unbox n v \tag{Neutral form
    ($\Ne$)}
\end{alignat*}
$B$ represents the base type. In order to make our argument as general as possible, we
leave the naming convention abstract. We assume silent $\alpha$ renaming whenever
necessary. We use $\epsilon$ to represent the empty context stack and $\cdot$ to
represent the empty context. For example, $\vGamma; \cdot$ means to append an empty
context on top of the existing context stack $\vGamma$. We write $\epsilon; \cdot$ for
the singleton stack containing the empty context. We overload $;$ for combining two
context stacks as well. For example, we write $\vGamma; \vDelta$ for a combined
context stack where $\vDelta$ is appended on top of $\vGamma$. 

\subsection{Typing Rules}

The typing judgments for $\Exp$ are listed below:
\begin{mathpar}
  
  \inferrule*
  {x : T \in \Gamma}
  {\mtyping[\vGamma; \Gamma] x T}

  \inferrule*
  {\mtyping[\vGamma; \cdot] t T}
  {\mtyping{\boxit t}{\square T}}

  \inferrule*
  {\mtyping t {\square T} \\ |\vDelta| = n}
  {\mtyping[\vGamma; \vDelta]{\unbox n t}{T}}

  \\
  
  \inferrule*
  {\mtyping[\vGamma;\Gamma, x : S]t T}
  {\mtyping[\vGamma;\Gamma]{\lambda x. t}{S \func T}}

  \inferrule*
  {\mtyping t {S \func T} \\ \mtyping s S}
  {\mtyping{t\ s}{T}}
\end{mathpar}
A variable can only refer to bindings in the topmost context in the stack. The introduction
and elimination rules for $\square$ are the same as in \Cref{sec:intro} and by
changing the range of $n$ we obtain different systems. The last two rules for
functions are standard.  Moreover, we stipulate that a context stack being typed in
must have at least length $1$. That means we can always assume there exists a topmost context
which we are working in.

\subsection{Modal Transformation}\labeledit{sec:st:mtrans}

In order to define equivalence between well-typed terms, we need to define a few
operations. 
The \emph{modal transformation} operation, or \emph{MoT}, converts a term in one context stack to another by
re-index the \tunbox levels:
\begin{definition}
  MoT is defined as
  \begin{align*}
    x\{k/l\} &:= x \\
    \boxit t\{k/l\} &:= \boxit{(t\{k/l+1\})} \\
    \unbox n t \{k/l\} &:=
                         \begin{cases}
                           \unbox n {t\{k/l-n\}} & \text{if $n \le l$} \\
                           \unbox{k + n - 1}t &\text{if $n > l$}
                         \end{cases} \\
    \lambda x. t \{k/l\} &:= \lambda x. (t\{k/l\}) \\
    s\ t\{k/l\} &:= (s\{k/l\})\ (t\{k/l\})
  \end{align*}
\end{definition}
In particular, we call the cases where $k = 0$ \emph{modal fusion}\footnote{Since $k$
  comes from an \tunbox level, systems that do not admit the axiom $T$ do not have
  modal fusion. In the rest of this report, we will always assume that modal fusion is
  possible. This will not affect the validity of the discussion for systems that do
  not have modal fusion.}, and the rest \emph{modal weakening}\footnote{On the other
  hand, since $k = 1$ is always possible, all systems have modal weakening.}. The
reason can be seen from the following lemma:
\begin{lemma}\labeledit{lem:mtran-typ}
  If $\mtyping[\vGamma;\Gamma_0;\Delta_0;\cdots;\Delta_l]{t}{T}$, then
  $\mtyping[\vGamma;\Gamma_0; \cdots; (\Gamma_k,
  \Delta_0);\cdots;\Delta_l]{t\{k/l\}}{T}$.
\end{lemma}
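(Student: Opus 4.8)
The plan is to prove the statement by induction on the derivation of $\mtyping[\vGamma;\Gamma_0;\Delta_0;\cdots;\Delta_l]{t}{T}$ — equivalently, by structural induction on $t$, since the typing rules are syntax directed and MoT is defined by recursion on terms. Throughout, $k$ is a fixed parameter, as it is never altered by the recursive clauses of MoT, whereas $l$ and the ambient stack do change between recursive calls; so the induction hypothesis must be read as universally quantified over $l \in \N$, over every way of writing the ambient stack in the shape $\vGamma;\Gamma_0;\Delta_0;\cdots;\Delta_l$, and over the contexts $\Gamma_1,\dots,\Gamma_k$ inserted/prepended in the conclusion (with the understanding that for $k=0$ the conclusion degenerates to $\mtyping[\vGamma;(\Gamma_0,\Delta_0);\Delta_1;\cdots;\Delta_l]{t\{0/l\}}{T}$).

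The variable, abstraction, and application cases are immediate. For a variable, $x\{k/l\}=x$, and the only context a variable can see is the topmost one, which is left untouched ($\Delta_l$, or $\Delta_0$ when $l=0$, whose bindings survive inside $(\Gamma_k,\Delta_0)$). For $\lambda x.t$ we apply the induction hypothesis at the same $l$ with $\Delta_l$ replaced by $\Delta_l, x:S$ and re-abstract; for an application we apply it to both subterms at the same parameters. For $\boxit t$, the premise types $t$ in the stack with one more, empty, context on top, namely $\vGamma;\Gamma_0;\Delta_0;\cdots;\Delta_l;\cdot$; instantiating the induction hypothesis at $l+1$ with $\Delta_{l+1}:=\cdot$ gives $t\{k/l+1\}$ in the correspondingly reshaped stack, and re-boxing strips that top context off again, yielding $\boxit{(t\{k/l+1\})}=\boxit t\{k/l\}$ in the target stack.

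The real work is the $\unbox n t$ case, which follows the same case split as the definition of MoT. If $n\le l$, the $\unbox$ rule pops the top $n$ contexts $\Delta_{l-n+1},\dots,\Delta_l$, so the premise types $t$ in $\vGamma;\Gamma_0;\Delta_0;\cdots;\Delta_{l-n}$; the induction hypothesis at $l-n$ gives $t\{k/l-n\}$ in the reshaped stack, and re-applying $\unbox n$ puts the $n$ untouched contexts back, giving the claim for $\unbox n{t\{k/l-n\}}=\unbox n t\{k/l\}$. If $n>l$, the $\unbox$ rule pops all of $\Delta_0,\dots,\Delta_l$ together with the last $n-l-1$ contexts of $\vGamma;\Gamma_0$; writing $\vGamma;\Gamma_0=\vGamma'';\vect\Phi$ with $|\vect\Phi|=n-l-1$ — such a split exists and $\vGamma''$ is nonempty precisely because the given derivation is well-formed — the premise gives $\mtyping[\vGamma'']{t}{\square T}$. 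Since $\vGamma''$ is a common prefix of the source and target stacks, $t$ is still well-typed there, and one checks that the portion of the target stack above $\vGamma''$ has exactly $(n-l-1)+(k-1)+1+l=k+n-1$ contexts, so the $\unbox$ rule at level $k+n-1$ yields $\unbox{k+n-1}{t}=\unbox n t\{k/l\}$ with the desired type.

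I expect the main obstacle to be the bookkeeping in this last case: matching the popped count $k+n-1$ against the reshaped target stack, and handling the boundary situations $l=0$, $n=l+1$, and the modal-fusion case $k=0$. In the last of these, the subterm under an $\unbox{l+1}$ lands in $\vGamma;\Gamma_0$ in the source but in $\vGamma;(\Gamma_0,\Delta_0)$ in the target, so closing that subcase also requires the routine observation that extending the topmost context of a stack preserves typability. Everything else is mechanical.
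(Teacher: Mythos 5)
Your proof is correct and takes the same route as the paper, whose entire argument is the one-line ``induction on $\mtyping[\vGamma;\Gamma_0;\Delta_0;\cdots;\Delta_l]{t}{T}$''; you simply carry out the case analysis the paper leaves implicit, including the only delicate point (the $\unbox{}$ case with $n>l$, and the local-weakening observation needed when $k=0$, $n=l+1$). Nothing further is needed.
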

\begin{proof}
  Induction on $\mtyping[\vGamma;\Gamma_0;\Delta_0;\cdots;\Delta_l]{t}{T}$.
\end{proof}
To understand this lemma, we should consider some special values.
\begin{itemize}
\item When $k = l = 0$, the lemma states that if $\mtyping[\vGamma;\Gamma_0;\Delta_0]{t}{T}$, then
  $\mtyping[\vGamma;(\Gamma_0, \Delta_0)]{t\{0/0\}}{T}$. Notice that the topmost two
  contexts are fused into one, hence ``modal fusion''.
\item When $k = 2$ and $l = 0$, the lemma states that if $\mtyping[\vGamma;\Gamma_0;\Delta_0]{t}{T}$, then
  $\mtyping[\vGamma;\Gamma_0; \Gamma_1; (\Gamma_2, \Delta_0)]{t\{2/0\}}{T}$. A new
  context $\Gamma_1$ is inserted to the stack and the topmost context is (locally)
  weakened by $\Gamma_2$, hence ``modal weakening''. 
\end{itemize}

MoTs satisfy the following equations:
\begin{lemma}\labeledit{lem:mtran-eq}
  The following holds:
  \begin{itemize}
  \item If $l' < l$, then $t\{k/l\}\{k'/l'\} = t\{k'/l'\}\{k/l+k'-1\}$.
  \item If $l \le l' < l + k$, then $t\{k/l\}\{k'/l'\} = t\{k + k' - 1/l\}$.
  \item If $l + k \le l'$, then $t\{k/l\}\{k'/l'\} = t\{k'/l' - k + 1\}\{k/l\}$.
  \end{itemize}
\end{lemma}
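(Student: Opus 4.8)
The plan is to prove all three identities by a single structural induction on the term $t$. Since \emph{MoT} is defined by recursion on $t$ and acts homomorphically on the non-modal constructors, the cases for variables, $\lambda$-abstraction, and application are immediate: for a variable both sides reduce to the same $x$; for $\lambda x.\,t$ and $s\ t$ the \emph{MoT} indices $\{k/l\}$ and $\{k'/l'\}$ are passed through to the subterms unchanged, so the claim follows from the induction hypotheses (with the same index constraints).

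Next comes the case $\boxit t$. Here $\boxit t\{k/l\} = \boxit{(t\{k/l+1\})}$, so on both sides each composition descends into $t$ with every target index — $l$, $l'$, and the combined targets $l+k'-1$ and $l'-k+1$ — uniformly incremented by one. I would check that each of the three hypotheses ($l' < l$; $l \le l' < l+k$; $l+k \le l'$) is invariant under this simultaneous shift, and that the shifted combined indices are exactly what the statement prescribes (e.g.\ $(l+1)+k'-1 = (l+k'-1)+1$ and $(l'+1)-k+1 = (l'-k+1)+1$). The result then follows from the induction hypothesis on $t$.

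The crux is the case $\unbox n t$, where \emph{MoT} branches on whether the unbox level $n$ lies at or below the target index. The plan is to split on the position of $n$ relative to $l$ and $l'$ (and the derived thresholds $l+k-1$, $l'-k+1$, etc.), and in each region to track which branch fires at each of the four \emph{MoT} applications occurring in the two compositions. The key simplifying observation is that the weakening branch $\unbox n t\{k/l\} = \unbox{k+n-1}{t}$ does not recurse into $t$, so two such branches in a row merely accumulate the shifts ($n \mapsto n + (k-1) + (k'-1)$), whereas a fusion branch $\unbox n{t\{k/l-n\}}$ followed by a further \emph{MoT} recurses on $t$, and one appeals to the induction hypothesis with indices $l-n$, $l'-n$, which still satisfy the relevant hypothesis after subtracting $n$. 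The main obstacle is purely bookkeeping: in each subcase one must verify the arithmetic inequalities that pin down which branch is taken — for instance, that $n > l$ together with $l \le l'$ and $l' < l+k$ forces $k+n-1 > l'$ — and then confirm that the two sides collapse to the same unboxed term at the same level. Nothing here is conceptually deep, but the nested case analysis is where all the care is required.
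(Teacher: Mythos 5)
Your proposal is correct and takes essentially the same route as the paper, whose proof is exactly a structural induction on $t$, with all the real work in the branch analysis of the $\tunbox$ case that you describe. The only (minor) difference is that the paper derives the third identity as a consequence of the first (instantiating it with the two MoTs swapped, using $l < l'-k+1$), whereas you prove it directly by the same case analysis; both work.
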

\begin{proof}
  Induction on $t$. The third equation is actually a consequence of the first. 
\end{proof}
The equations cover all possible cases of composing MoTs. Notice that
in the third equation, $l' - k + 1 > l$ and thus can be used to turn an iterative
application of MoTs 
ascending in $l$ to one strictly descending in $l$.

\subsection{Term Substitution}

Just like in typical type theories, we also need an (ordinary) term substitution operation which
replaces a variable with a term:
\begin{definition}
  The term substitution is defined as
  \begin{align*}
    x[u/y] &:= \begin{cases}
      u & \text{if $x = y$} \\
      x & \text{if $x \neq y$}
    \end{cases} \\
    \boxit t [u/y] &:= \boxit{(t[u/y])} \\
    \unbox n t [u/y] &:= \unbox n {(t[u/y])} \\
    \lambda x. t [u/y] &:= \lambda x. (t[u/y])
                         \tag{Notice implicit $\alpha$ renaming here} \\
    s\ t[u/y] &:= (s[u/y])\ (t[u/y])
  \end{align*}
\end{definition}

The following lemma is standard.
\begin{lemma}
  If $\mtyping[\vGamma; \Gamma, x : S, \Gamma'; \vDelta]t T$ and
  $\mtyping[\vGamma;\Gamma]s S$,
  then $\mtyping[\vGamma; \Gamma, \Gamma'; \vDelta]{t[s/x]}T$.
\end{lemma}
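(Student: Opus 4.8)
The plan is to prove the statement directly by induction on the derivation of $\mtyping[\vGamma; \Gamma, x : S, \Gamma'; \vDelta]{t}{T}$, with $\vDelta$ (which may be the empty stack $\epsilon$), the tail $\Gamma'$, and the whole prefix $\vGamma; \Gamma$ universally quantified, and with $s$ together with its typing $\mtyping[\vGamma;\Gamma]{s}{S}$ held fixed. The statement is thus essentially its own strengthening: descending under $\tbox$ enlarges the stack to $\vGamma; \Gamma, x:S, \Gamma'; \vDelta; \cdot$, which is again of the required shape. Unlike the modal transformation of \Cref{sec:st:mtrans}, no re-indexing of $s$ is ever needed, because the substitution of the preceding definition pushes through $\tbox$ and $\tunbox$ without touching $s$, and $\vGamma; \Gamma$ remains a prefix of every context stack in which a recursive call is made; so $s$ stays well-typed there and is passed along unchanged, and it is weakened into $\vGamma;\Gamma,\Gamma'$ only at an actual occurrence of $x$.

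Before the induction I would record two routine structural facts. First, \emph{topmost-context weakening}: if $\mtyping[\vGamma;\Gamma]{s}{S}$ then $\mtyping[\vGamma;\Gamma, \Gamma']{s}{S}$, by induction on the derivation of $s$. Second, \emph{well-scopedness with identity substitution}: the free variables of a well-typed term are declared among the contexts of its stack --- a standard induction that tracks how $\tbox$ and $\tunbox$ move between worlds --- so if $x$ is declared in no context of a stack $\vect\Phi$ and $\mtyping[\vect\Phi]{t'}{T'}$, then $x$ does not occur free in $t'$ and hence $t'[s/x] = t'$. Throughout I use the $\alpha$-convention that the name $x$ being substituted for is chosen distinct from all other declared and bound names, so in particular $x$ is not declared anywhere in $\vGamma$.

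The cases for variables, $\lambda$, and application are routine. For a variable $y$, if the topmost context lies inside $\vDelta$, or if $y \neq x$, the term is unchanged and the same variable rule applies over the new stack; the one real subcase is $\vDelta = \epsilon$ and $y = x$, where $T = S$, substitution returns $s$, and $\mtyping[\vGamma;\Gamma,\Gamma']{s}{S}$ follows by topmost-context weakening. For $\tbox$, with $t = \boxit{t_0}$ and $t_0$ typed in $\vGamma;\Gamma,x:S,\Gamma';\vDelta;\cdot$, one applies the induction hypothesis with $\vDelta$ replaced by $\vDelta;\cdot$ and re-applies $\tbox$-introduction, using $\boxit{t_0}[s/x] = \boxit{(t_0[s/x])}$. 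For $\lambda$, $\alpha$-rename the bound variable fresh, apply the induction hypothesis (enlarging $\Gamma'$, or the topmost context of $\vDelta$, by the new variable), and re-apply the rule; application follows from two appeals to the induction hypothesis.

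The crux is the $\tunbox$ case. Here $t = \unbox n v$ and the rule splits $\vGamma;\Gamma,x:S,\Gamma';\vDelta$ as $\vect\Psi; \vDelta_0$ with $|\vDelta_0| = n$ and $\mtyping[\vect\Psi]{v}{\square T}$. If $n \le |\vDelta|$, then $\vect\Psi$ still has the form $\vGamma;\Gamma,x:S,\Gamma';\vDelta_1$ for a shorter tail $\vDelta_1$, so the induction hypothesis applies to $v$; re-applying $\tunbox$-introduction with the same $n$ over $\vGamma;\Gamma,\Gamma';\vDelta_1;\vDelta_0 = \vGamma;\Gamma,\Gamma';\vDelta$, and using $\unbox n v[s/x] = \unbox n{(v[s/x])}$, finishes this subcase. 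If $n > |\vDelta|$, then $\vDelta_0$ absorbs the entire layer $\Gamma,x:S,\Gamma'$ (and possibly part of $\vGamma$), so $\vect\Psi$ is a prefix of $\vGamma$ in which $x$ is not declared; by well-scopedness $v[s/x] = v$, and re-applying $\tunbox$-introduction --- noting that replacing $\Gamma,x:S,\Gamma'$ by $\Gamma,\Gamma'$ changes no context count, so the side condition $|\vDelta_0| = n$ still holds --- yields the claim. I expect this bookkeeping, namely how $n$ slices the stack relative to the layer carrying $x$, together with the well-scopedness observation in the second subcase, to be the only genuinely modal part of the argument; everything else goes exactly as in the non-modal simply typed calculus.
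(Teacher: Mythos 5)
Your proof is correct and takes the same route as the paper, which simply performs induction on the typing derivation of $t$; your additional bookkeeping (local weakening of the topmost context for the variable case, and the vacuity of the substitution when the $\tunbox$ level $n$ exceeds $|\vDelta|$ so that $v$ lives in a prefix of $\vGamma$ not declaring $x$) fills in exactly the details the paper leaves implicit.
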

\begin{proof}
  Induction on $\mtyping[\vGamma; \Gamma, x : S, \Gamma'; \vDelta]t T$.
\end{proof}

MoTs and term substitutions commute depending on their numerical relation:
\begin{lemma}
  If $x$ is in $n$'th context\footnote{One might find this statement is not precise
    as we use an abstract name representation. Nonetheless, this is possible if one
    uses more concrete name representation, e.g. de Bruijn indices, in which
    term substitutions must be defined with an offset of a world in which the substitutions
    happens. Therefore this $n$ in fact is known from the syntactic level. } and $n \le l$, then
  $t[u/x]\{k/l\} = t\{k/l\}[u\{k/l-n\}/x]$.
\end{lemma}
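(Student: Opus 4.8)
The plan is to prove this by structural induction on the term $t$, following exactly the pattern used for the earlier commutation lemmas (\Cref{lem:mtran-eq} and the MoT typing lemma). Since both MoT ($\{k/l\}$) and term substitution ($[u/x]$) are defined by recursion on term structure, each case reduces to unfolding the two definitions, applying the inductive hypothesis, and checking that the numerical side conditions line up. The variable, application, and $\lambda$ cases should be essentially immediate: for a variable $y$, either $y = x$, in which case $y[u/x]\{k/l\} = u\{k/l\} = y\{k/l\}[u\{k/l-n\}/x]$ using that $y\{k/l\} = y$; or $y \neq x$, in which case both sides are $y$. The application and abstraction cases follow by pushing both operations through the constructor and invoking the IH on subterms (with the usual implicit $\alpha$-renaming caveat for $\lambda$, and noting that binding a fresh variable does not change which context $x$ lives in, so $n$ is unaffected).

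The two interesting cases are $\boxit{}$ and $\unbox n{}$. For $\boxit t$, pushing $\{k/l\}$ through increments the level to $l+1$, while $[u/x]$ commutes untouched; the subtlety is that entering the box pushes a fresh empty context onto the stack, so $x$ — which was in the $n$'th context counting from the top — is now in the $(n+1)$'th context, and the side condition $n \le l$ becomes $n+1 \le l+1$, which holds. Applying the IH at level $l+1$ with context-index $n+1$ gives $t[u/x]\{k/l+1\} = t\{k/l+1\}[u\{k/l+1-(n+1)\}/x] = t\{k/l+1\}[u\{k/l-n\}/x]$, and wrapping both sides in $\boxit{}$ closes the case. For $\unbox{m} t$, one does a sub-case analysis on whether $m \le l$ or $m > l$, matching the definition of MoT. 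When $m > l$, the level is large enough that it sits strictly above where $x$ can be referenced (indeed $n \le l < m$), so the MoT acts purely as a re-indexing on the unbox level and does not reach into $t$ in a way that interacts with the substitution; here $\unbox m t[u/x]\{k/l\} = \unbox{k+m-1}{(t[u/x])}$, and since $t$ is typed in a stack with the $x$-context buried at least $m$ deep, pushing the substitution out requires only that $u$ itself is unaffected — i.e., $u\{k/l-n\}$ should coincide with $u$, or more carefully, the IH is applied to $t$ with a suitably shifted level argument. When $m \le l$, we get $\unbox m t\{k/l\} = \unbox m{t\{k/l-m\}}$, and applying the IH to $t$ at level $l - m$ with context-index $n$ (valid since, if $x$ is $n$ contexts from the top and we are unboxing $m$ levels, $x$ now sits $n - m$... ) — this is where the bookkeeping is delicate.

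The main obstacle I expect is precisely getting the context-index arithmetic right in the $\unbox m{}$ case: when we descend by $m$ contexts, the variable $x$'s position relative to the new topmost context changes, and one must verify both that the hypothesis $n \le l$ is preserved in the appropriate shifted form ($n - m \le l - m$, needing $m \le n$, which requires knowing that the unbox does not pass the context where $x$ lives, or handling the case where it does) and that the level on the substituted term, $l - n$, is the one that survives. The cleanest way to discharge this is to first nail down the convention — as the footnote in the statement suggests, making precise with de Bruijn-style indices that ``$x$ is in the $n$'th context'' is a syntactic datum carried at the level of the substitution — and then observe that $\unbox m t$ with $m \le l$ typechecks only when the target of $t$ has at least $m$ contexts above $\Gamma$, forcing the relationship between $m$, $n$, and $l$ needed to apply the IH. Once the convention is fixed, each sub-case is a mechanical unfolding. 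I would therefore state explicitly at the start of the proof which index convention is in force, then present the $\boxit{}$ and $\unbox{}$ cases in full and dismiss the remaining three as routine.
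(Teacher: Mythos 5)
Your overall strategy is the right one: the paper gives no proof for this lemma, and by analogy with \Cref{lem:mtran-eq} a structural induction on $t$ with careful index bookkeeping is exactly what is intended; your treatment of the $\boxit{}$ case (shifting $n \le l$ to $n+1 \le l+1$ and observing that $k/(l+1)-(n+1) = k/(l-n)$, so the substituted term stays $u\{k/l-n\}$) is correct, as are the application and $\lambda$ cases. The genuine gap is the $\unbox{m}{}$ case, which you flag but do not resolve, and the two escape routes you sketch there do not work. In the subcase $m > l$ the MoT does not recurse into the body at all, since $(\unbox m {t'})\{k/l\} = \unbox{k+m-1}{t'}$ by definition, so there is no IH to apply; and the identity you would then need, $t'[u/x] = t'[u\{k/l-n\}/x]$, cannot be obtained from ``$u$ is unaffected,'' because $u\{k/l-n\}$ is in general a different term from $u$. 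The missing idea is a scoping observation rather than typing ``forcing'' an arithmetic relation: since variables may only refer to the topmost context, $x$ can occur free in the body $t'$ of $\unbox m{t'}$ only when $m \le n$. Hence whenever $m > n$ --- which covers the entire subcase $m > l$, because $n \le l < m$, as well as the subcase $m \le l$ with $m > n$ --- both substitutions are vacuous and the two sides are literally the same term ($\unbox{k+m-1}{t'}$, respectively $\unbox m{(t'\{k/l-m\})}$). The only place the IH is used is the remaining subcase $m \le n \le l$, at level $l-m$ and index $n-m$, where the arithmetic $k/(l-m)-(n-m) = k/(l-n)$ shows the substituted term is $u\{k/l-n\}$ on both sides, matching what the outer equation demands.

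Two smaller points. First, your variable case silently uses $n = 0$: the chain $u\{k/l\} = y\{k/l\}[u\{k/l-n\}/x]$ is only an identity because when the substitution actually fires at a variable occurrence the variable must lie in the topmost context, so $n = 0$ and $u\{k/l-n\} = u\{k/l\}$; for $y \ne x$ the relevant fact is indeed $y\{k/l\} = y$, but for $y = x$ it is the vanishing of $n$. Second, your instinct to fix the offset convention up front is correct, but it should be fixed so that it also delivers the vacuity fact above (e.g.\ the offset-annotated substitution becomes the identity once an $\unbox{}{}$ consumes more contexts than the offset), since that, not an application of the IH, is what closes the out-of-scope subcases.
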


\begin{lemma}
  If $x$ is in $n$'th context and $n > l$, then $t[u/x]\{k/l\} = t\{k/l\}[u/x]$.
\end{lemma}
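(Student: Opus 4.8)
The plan is to prove this as a raw-syntactic identity by structural induction on $t$, carrying both the cut level $l$ and the context index $n$ as parameters of the statement. The organizing principle is that the difference $n - l$ is an \emph{invariant} of the recursion and, by hypothesis, strictly positive: going under $\tbox$ sends $(l,n)$ to $(l+1,n+1)$, and descending under an $\unbox m$ in the branch where the MoT actually recurses (i.e.\ $m \le l$) sends $(l,n)$ to $(l-m,\,n-m)$; both moves leave $n-l$ fixed. Because $n > l$ throughout, the MoT recursion never descends as far as the world in which $x$ is declared, and therefore $u$ is never itself transformed --- which is exactly what separates this statement from the preceding lemma (the case $n \le l$), where the recursion does reach $x$ and deposits the residual $u\{k/l-n\}$.

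Carrying this out case by case: for $t = \lambda z.\,t'$ and $t = s\ t'$ both operations recurse homomorphically and touch neither $l$ nor $n$, so the induction hypothesis closes them at once. For $t = \boxit{t'}$ both step under the box, the MoT level becomes $l+1$, the substitution shifts its world offset by one, $x$'s index becomes $n+1$ (a fresh world is pushed on top of its world), the inequality $n+1 > l+1$ is preserved, and re-wrapping with $\tbox$ finishes. For $t = \unbox m {t'}$ I would split on $m$ versus $l$: if $m > l$ then $\unbox m {t'}\{k/l\} = \unbox{k+m-1}{t'}$ without recursing, and since term substitution commutes through $\tunbox$ unconditionally both $t[u/x]\{k/l\}$ and $t\{k/l\}[u/x]$ collapse to $\unbox{k+m-1}{(t'[u/x])}$ (and, in the further sub-case $m \ge n$ where $x$ lies among the jumped-over worlds, $t'[u/x] = t'$ and this is just $\unbox{k+m-1}{t'}$); if $m \le l$ then $m < n$ since $n > l \ge m$, so the substitution genuinely descends into $t'$, the MoT recurses into $t'$ at level $l-m$, $x$ now has index $n-m$ with $(n-m) > (l-m)$, and the induction hypothesis applies to $t'$ at parameters $(l-m,\,n-m)$, after which re-applying $\tunbox$ closes the case. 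Finally, for a variable $t = y$: if $y \neq x$ neither side moves; and the sub-case $y = x$ is vacuous, since an occurrence of $x$ standing directly in $t$'s own context has index $n = 0$, which contradicts $n > l \ge 0$ --- that boundary is covered by the companion lemma, not this one.

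The obstacle I anticipate is purely the bookkeeping: verifying that $n$ really does track $l$ in lockstep through $\tbox$ and through the recursive branch of $\tunbox$, and that the non-recursive branch ($m > l$) is handled uniformly whether or not $x$ happens to fall among the $m$ worlds jumped over. As the footnote warns, ``$x$ is in the $n$'th context'' is only genuinely pinned down once variables are presented in de Bruijn style with a world-indexed substitution; granting that, the induction above needs nothing beyond the defining equations of MoT and of term substitution --- in particular, \Cref{lem:mtran-typ} is needed only if one wants the accompanying typing statement, and the composition laws of \Cref{lem:mtran-eq} are not required here.
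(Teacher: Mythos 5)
Your proof is correct and is essentially the intended argument: the paper leaves this lemma without a written-out proof, and the evident route---the same structural induction on $t$ used for the neighbouring lemmas, splitting the $\tunbox$ case on the level $m$ against $l$, observing that $n-l$ is preserved under $\tbox$ and under the recursive $\tunbox$ branch so that $n>l$ rules out the variable case ever hitting $x$---is exactly what you give. The only superfluous bit is the aside about the sub-case $m \ge n$ in the non-recursive $\tunbox$ branch, since there both sides are literally $\unbox{k+m-1}{(t'[u/x])}$ irrespective of where $x$ sits.
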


\subsection{Term Equivalence}\labeledit{sec:st:equiv-rules}

Given a context and a type, term equivalence forms a PER, which equips with symmetry
and transitivity. Here we employ the $\beta\eta$ equivalence between terms.
\begin{mathpar}
  \inferrule*
  {\mtyequiv s t T}
  {\mtyequiv t s T}

  \inferrule*
  {\mtyequiv s t T \\ \mtyequiv t u T}
  {\mtyequiv s u T}
\end{mathpar}

Congruence rules:
\begin{mathpar}
  
  \inferrule*
  {x : T \in \Gamma}
  {\mtyequiv[\vGamma; \Gamma] x x T}

  \inferrule*
  {\mtyequiv[\vGamma; \cdot]{t}{t'}T}
  {\mtyequiv{\boxit t}{\boxit t'}{\square T}}

  \inferrule*
  {\mtyequiv{t}{t'}{\square T} \\
    |\vDelta| = n}
  {\mtyequiv[\vGamma; \vDelta]{\unbox n t}{\unbox n t'}{T'}}  

  \inferrule*
  {\mtyequiv[\vGamma;(\Gamma, x : S)]{t}{t'}T}
  {\mtyequiv[\vGamma; \Gamma]{\lambda x. t}{\lambda x. t'}{S \func T}}

  \inferrule*
  {\mtyequiv{t}{t'}{S \func T} \\
  \mtyequiv{s}{s'}S}
  {\mtyequiv{t\ s}{t'\ s'}T}
\end{mathpar}

$\beta$ equivalence:
\begin{mathpar}
  \inferrule*
  {\mtyping[\vGamma; \cdot]{t}{T} \\
  |\vDelta| = n}
  {\mtyequiv[\vGamma; \vDelta]{\unbox{n}{(\boxit t)}}{t\{n / 0 \}}{T}}

  \inferrule*
  {\mtyping[\vGamma;(\Gamma, x : S)]t T \\ \mtyping[\vGamma; \Gamma] s S}
  {\mtyequiv[\vGamma; \Gamma]{(\lambda x. t) s}{t[s/x]}{T}}
\end{mathpar}

$\eta$ equivalence:
\begin{mathpar}
  \inferrule*
  {\mtyping{t}{\square T}}
  {\mtyequiv{t}{\boxit{\unbox 1 t}}{\square T}}
  
  \inferrule*
  {\mtyping{t}{S \func T}}
  {\mtyequiv t {\lambda x. (t\ x)}{S \func T}}
\end{mathpar}
Notice that a MoT is used in the $\beta$ equivalence for $\square T$. 

\begin{lemma}[Presupposition]\labeledit{lem:inv-equiv}
  If $\mtyequiv{t}{t'}T$, then $\mtyping t T$ and $\mtyping{t'}T$.
\end{lemma}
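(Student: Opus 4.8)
The plan is to proceed by a single induction on the derivation of $\mtyequiv{t}{t'}{T}$, establishing the two conjuncts $\mtyping{t}{T}$ and $\mtyping{t'}{T}$ simultaneously. The two PER rules are immediate: for symmetry, the induction hypothesis applied to $\mtyequiv{t'}{t}{T}$ already yields both $\mtyping{t}{T}$ and $\mtyping{t'}{T}$; for transitivity through some middle term, one induction hypothesis supplies the typing of the left endpoint and the other that of the right endpoint. The congruence rules are also routine: I would apply the induction hypothesis to each equivalence premise, carry over each non-equivalence side condition unchanged (such as $x : T \in \Gamma$ or $|\vDelta| = n$), and reassemble the two typing derivations by the matching typing rule --- e.g.\ the $\unbox$-congruence is closed by $\square$-elimination with the same $|\vDelta| = n$, the $\lambda$-congruence by the $\lambda$-rule, and the variable congruence directly by the variable rule.

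The substantive cases are the $\beta$- and $\eta$-rules, whose premises are themselves \emph{typing} judgments, so there is no induction hypothesis and the required derivations must be built by hand. For the function $\beta$-rule $\mtyequiv[\vGamma;\Gamma]{(\lambda x. t)\,s}{t[s/x]}{T}$ I would get the left-hand side from the two premises by the $\lambda$- and application rules, and the right-hand side from the term-substitution typing lemma, instantiated with an empty binding suffix ($\Gamma' := \cdot$) and an empty trailing stack ($\vDelta := \epsilon$). The $\eta$-rule for $\square$ is just as direct: from $\mtyping{t}{\square T}$ first apply $\square$-elimination with $\vDelta$ a single empty context --- so $|\vDelta| = 1$ while $\vGamma;\vDelta$ is just $\vGamma;\cdot$ --- to obtain $\mtyping[\vGamma;\cdot]{\unbox 1 t}{T}$, then $\square$-introduction to obtain $\mtyping{\boxit{\unbox 1 t}}{\square T}$; the function $\eta$-rule is handled symmetrically via application followed by $\lambda$.

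The case I expect to demand the most care is the $\square$ $\beta$-rule $\mtyequiv[\vGamma;\vDelta]{\unbox{n}{(\boxit t)}}{t\{n/0\}}{T}$ with premise $\mtyping[\vGamma;\cdot]{t}{T}$ and side condition $|\vDelta| = n$. The left-hand term is easy: $\square$-introduction gives $\mtyping{\boxit t}{\square T}$, and then $\square$-elimination (using $|\vDelta| = n$) gives $\mtyping[\vGamma;\vDelta]{\unbox{n}{(\boxit t)}}{T}$. For the right-hand term I would invoke \Cref{lem:mtran-typ} at $k = n$ and $l = 0$: decomposing $\vGamma$ as $\vGamma';\Gamma_0$ --- permissible because the implicit length-$\geq 1$ convention on typed context stacks makes $\vGamma$ nonempty --- and taking $\Delta_0 := \cdot$, the lemma turns $\mtyping[\vGamma';\Gamma_0;\cdot]{t}{T}$ into $\mtyping[\vGamma';\Gamma_0;\Gamma_1;\cdots;\Gamma_n]{t\{n/0\}}{T}$, i.e.\ a stack consisting of $\vGamma = \vGamma';\Gamma_0$ followed by exactly $n$ further contexts. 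Since those inserted contexts $\Gamma_1,\ldots,\Gamma_n$ are universally quantified in the lemma (this is the ``modal weakening'' direction), I may choose them to be the $n$ components of $\vDelta$, which yields precisely $\mtyping[\vGamma;\vDelta]{t\{n/0\}}{T}$. The only genuine obstacle anywhere in the argument is this bookkeeping of context-stack shapes --- lining up the premise's stack $\vGamma;\cdot$, the pattern in \Cref{lem:mtran-typ}, and the target stack $\vGamma;\vDelta$ --- after which every case closes.
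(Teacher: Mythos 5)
Your proposal is correct and matches the paper's proof, which is exactly the same (unelaborated) induction on the derivation of $\mtyequiv{t}{t'}{T}$, with the $\beta$ cases discharged by \Cref{lem:mtran-typ} (at $k=n$, $l=0$, choosing the freshly inserted contexts to be the components of $\vDelta$) and by the term-substitution typing lemma, just as you describe. The only step you leave implicit is the admissible local weakening of $t$ into $\vGamma;(\Gamma, x:S)$ needed to type $\lambda x.(t\ x)$ in the function $\eta$ case, which is standard and harmless.
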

\begin{proof}
  Induction on $\mtyequiv{t}{t'}T$.
\end{proof}

\section{Combining Modal Transformation and Substitution}\labeledit{sec:usubst}

In the previous section, we showed that Kripke-style systems have two seemingly unrelated
operations: MoTs and term substitutions. This unfortunately poses
difficulties in representing simultaneous substitutions between context stacks. In typical type
systems, simultaneous substitutions are morphisms between contexts and conceptually just a list of
terms. With context stacks, a list of terms would not be sufficient. Consider how to
represent a simultaneous substitution $\vGamma \To \Delta; \Delta'$. Intuition says that a list of
terms in $\vGamma$ can be used to substitute $\Delta'$, but how do we handle $\Delta$
and in which stack terms substituting $\Delta$
should be well-typed? In this section, we give \emph{unified} substitutions, which unify
MoTs and term substitutions as a single representation. Moreover, unified
substitutions have identity and are closed under composition. Therefore we can 
organize a category with context stacks as objects and unified substitutions as
morphisms, which is very helpful in later developments of normalization. We also show how unified
substitutions are applied to terms.

To obtain enough intuition and arrive at our final solution, let us first consider a
subquestion: can we view MoTs as substitutions?

\subsection{MoTs as Substitutions}

First of all, MoTs are an operation transforming a term in
one context stack to another. In general, when viewed as a morphism, a MoT has the following
type according to \Cref{lem:mtran-typ}:
\begin{align*}
  \{k/l\} : \vGamma;\Gamma_0; \cdots; (\Gamma_k, \Delta_0);\cdots;\Delta_l \To \vGamma;\Gamma_0;\Delta_0;\cdots;\Delta_l
\end{align*}
which is orthogonal to term substitutions defined by 
substituting variables with terms. By \Cref{lem:mtran-eq}, we require MoTs to
satisfy the following laws:
\begin{align*}
  \{k/l\} \circ \{k'/l'\} &= \{k'/l'\} \circ \{k/l+k'-1\}
                            \tag{if $l' < l$} \\
  \{k/l\} \circ \{k'/l'\} &= \{k + k' - 1 / l\}
                            \tag{if $l \le l' < l + k$}
\end{align*}

If we are to represent MoTs as morphisms, then this representation
must be general enough to represent the compositional closure of MoTs.
Hence our next question is what is this compositional closure?

\subsection{An Intuition}

First, let us compare MoTs with term substitutions: consider two adjacent term
substitutions $t[u/x][u'/y]$ where $x$ and $y$ are different (otherwise the second
substitution is voided). Then we can merge these two term substitutions into one by
simultaneously substituting both $x$ and $y$ for $u$ and $u'$, respectively. Can we do
the same for MoTs? If that is possible, then effectively we can merge arbitrary number
of adjacent MoTs into one. Let us first consider some examples.

\subsubsection{An Example for Modal Weakening}

From \Cref{lem:mtran-typ}, we know that $\{1 + k/l\}$ denotes the
morphism $\vGamma;\Gamma_1; \cdots; (\Gamma_k, \Delta_0);\cdots;\Delta_l \To
\vGamma;\Delta_0;\cdots;\Delta_l$. Now consider more concretely,
\begin{align*}
  \{2/3\} :& \epsilon;\Gamma_0; \Gamma_1; (\Gamma_2, \Delta_0) ; \Delta_1; \Delta_2; \Delta_3
             \To \epsilon;\Gamma_0; \Delta_0 ; \Delta_1; \Delta_2; \Delta_3 \\
  \{2/1\} :& \epsilon;\Gamma_0; \Gamma_1; (\Gamma_2, \Delta_0); \Delta_1 ; \Gamma_1'; (\Gamma_2', \Delta_2);
             \Delta_3
             \To \epsilon;\Gamma_0; \Gamma_1; (\Gamma_2, \Delta_0) ; \Delta_1; \Delta_2; \Delta_3
\end{align*}
Where $\epsilon$ is the empty stack. 
Therefore, we derive
\begin{align*}
  \{2/3\} \circ \{2/1\} :
  \epsilon;\Gamma_0; \Gamma_1; (\Gamma_2, \Delta_0); \Delta_1 ; \Gamma_1'; (\Gamma_2', \Delta_2);
  \Delta_3
  \To \epsilon;\Gamma_0; \Delta_0 ; \Delta_1; \Delta_2; \Delta_3
\end{align*}
We obtain from \Cref{lem:mtran-eq} that two modal weakenings can be commuted:
\begin{align*}
  \{2/3\} \circ \{2/1\} = \{2/1\} \circ \{2/4\}
\end{align*}
Notice that effectively these two modal weakenings are somewhat independent: they
modal-weaken $\Delta_0$ and $\Delta_2$, respectively. Thus, if we begin
with the original context stack $\Gamma_0; \Delta_0 ; \Delta_1; \Delta_2; \Delta_3$,
then we can turn $\{2/3\} \circ \{2/1\}$ into a single \emph{simultaneous} modal
weakening operation: $\{(2/3); (2/1)\}$. In general, any composition of modal
weakenings can be represented by $\{\vect{(k_i/l_i)}\}$, where $l_i$ are in strictly
descending order for the sake of obtaining a canonical representation. Thus we can conclude
\begin{align*}
  \{(2/3); (2/1)\} = \{2/3\} \circ \{2/1\} = \{2/1\} \circ \{2/4\}
\end{align*}
or equivalently in commutative diagram:
\begin{center}
  \begin{tikzcd}
    \epsilon;\Gamma_0; \Gamma_1; (\Gamma_2, \Delta_0); \Delta_1 ; \Gamma_1'; (\Gamma_2', \Delta_2);
    \Delta_3
    \ar{d}{\{2/1\}}
    \ar{r}{\{2/4\}}
    \ar{dr}{\{(2/3); (2/1)\}}
    & \epsilon;\Gamma_0; \Delta_0; \Delta_1 ; \Gamma_1'; (\Gamma_2', \Delta_2);
    \Delta_3
    \ar{d}{\{2/1\}}
    \\
    \epsilon;\Gamma_0; \Gamma_1; (\Gamma_2, \Delta_0) ; \Delta_1; \Delta_2; \Delta_3
    \ar{r}{\{2/3\}}
    & \epsilon;\Gamma_0; \Delta_0 ; \Delta_1; \Delta_2; \Delta_3
  \end{tikzcd}
\end{center}

We observe that such simultaneous modal weakening satisfies the following inductive
equations if all $k$'s are at least $1$:
\begin{align*}
  \{\} &= \vect\id \\
  \{(k/l); \vect{(k_i/l_i)}\} &= \{k/l\} \circ \{\vect{(k_i/l_i)}\}
\end{align*}
Note that on the left hand side, in order to write the $k$'s and $l$'s in the
simultaneous form, we have already assumed $l$'s are in descending order. 
These equations allow us to compose two simultaneous modal weakenings by
unpacking them into individual modal weakenings and use \Cref{lem:mtran-eq} to adjust the
order so that all $l$'s are in descending order. The lemma ensures that a strict
descending order is always possible even some $k$'s are $0$. That allows us to
conclude that modal weakening is closed under composition. 

\subsubsection{Examples with Fusion}

The previous observation works very well in systems without the axiom $T$\footnote{Or
  equivalently systems where \tunbox levels are not $0$ or modal fusion is not
  permitted}.  Once modal fusion is involved, complication arrives.  Consider the
following example:
\begin{center}
  \begin{tikzcd}
    \epsilon;\Gamma_0,\Delta,\Gamma_1
    \ar{r}{\{0/0\}}
    \ar{dr}[']{\{0/0\}} &
    \epsilon;\Gamma_0; (\Delta, \Gamma_1)
    \ar{d}{\{1/0\}} \\
    & \epsilon;\Gamma_0; \Gamma_1
  \end{tikzcd}
\end{center}
In this example, $\{0/0\}$ in the diagonal not only fuse $\Gamma_0$ and $\Gamma_1$,
but also introduces an extra (unspecified) $\Delta$. This is because modal fusion
necessarily carries a notion of local weakening. Due to our representation of names, details
involving the length of $\Delta$ is not exposed syntactically in MoTs,
but in other representations, e.g. de Bruijn indices, we will have to handle this
problem explicitly. To capture this complication, we need to allow modal fusion to
perform arbitrary local weakenings.

Even if local weakening caused by modal fusion is not involved, fusion breaks other
assumptions that were safe when only modal weakening are present. For instance,
previously, we stated that individual model weakenings from one simultaneous modal
weakening can be performed in isolation.  Here is a counterexample with the presence
of modal fusion:
\begin{align*}
  \{0/2\} : \epsilon;\Gamma_0; \Gamma_1; (\Gamma_2, \Delta_0, \Delta_1); \Delta_2; \Delta_3
  \To \epsilon;\Gamma_0; \Gamma_1; (\Gamma_2, \Delta_0) ; \Delta_1; \Delta_2; \Delta_3
\end{align*}
Notice that $(\Gamma_2, \Delta_0)$ and $\Delta_1$ are fused. Thus we have
\begin{align*}
  \{2/3\} \circ \{0/2\} :
  \epsilon;\Gamma_0; \Gamma_1; (\Gamma_2, \Delta_0, \Delta_1); \Delta_2; \Delta_3
  \To \epsilon;\Gamma_0; \Delta_0 ; \Delta_1; \Delta_2; \Delta_3
\end{align*}
where $\{2/3\}$ comes from the previous example. 

Even though the numbers in the $l$ position are already in descending order, the
effects of $\{2/3\}$ and $\{0/2\}$ apply to two fused contexts, namely $\Delta_0$ and
$\Delta_1$. This example seems to suggest a complex operation which first does local weakening, then some fusions and
at last modal weakening. 
\begin{definition}
  The \textbf{not-so-good} MoT is defined as follows:
  \begin{align*}
    t\{\vect m, n, k/l\} := t\underbrace{[\uparrow_m]}_{m \in \vect m}\{0/l\}^n\{k+1/l\}
  \end{align*}
  where $\uparrow_m$ is the weakening morphism which weakens the $m$'th context from
  top of the stack: $\uparrow_m : \vGamma ; (\Gamma_m, \Delta); \cdots;\Gamma_0
  \To \vGamma ; \Gamma_m; \cdots;\Gamma_0$. 
\end{definition}
Clearly, this operation is way too complex for a valuable investigation.
Nevertheless, we obtain one useful observation: if modal fusion embeds local
weakenings, a special case of term substitutions, why not add
more complex term substitutions to MoTs?

\subsection{Unified Substitution}\labeledit{sec:usubst}

To best motivate our unified representation of MoTs and term
substitutions, let us take a closer look at an example of multiple applications of
MoTs:
\begin{center}
  \begin{tikzpicture}
    \matrix (m) [matrix of math nodes, row sep=15pt]
    {
      \vGamma = \epsilon; & \Gamma_0; & \Delta_0; & \Delta_1; & (\Gamma_1, \Gamma_2, \Gamma_3); &
      \Delta_2; & \Gamma_4
      \\
      \vDelta = \epsilon; & \Gamma_0; & & & \Gamma_1; \Gamma_2; \Gamma_3; &
      & \Gamma_4 \\
    };
    \draw[<-] (m-2-2)  -- (m-1-2);
    \draw[<-] ($(m-2-5) + (-0.6, 0.3)$) -- ($(m-1-5) + (-0.6, -0.3)$);
    \draw[<-] ($(m-2-5) + (-0.1, 0.3)$) -- ($(m-1-5) + (-0.1, -0.3)$);
    \draw[<-] ($(m-2-5) + (0.5, 0.3)$) -- ($(m-1-5) + (0.5, -0.3)$);
    \draw[<-] (m-2-7) -- (m-1-7);
  \end{tikzpicture}
\end{center}
This diagram represents a composition of a number of MoTs and
contains both modal fusion and modal weakening. For example, we can see that
$\Gamma_4$ is modal-weakened with $\Delta_2$ prepended. Modal fusion happens
consecutively for $\Gamma_1$, $\Gamma_2$ and $\Gamma_3$. These three separate contexts
in the stack are fused into one, and then the resulting context is modal-weakened. The
arrows in the diagram denotes the correspondence of contexts before and after the
simultaneous MoT. Obviously, all these arrows are just some local
weakening typed in the context stacks to the left of the ends of arrows. Moreover, if
we replace these arrows with more complex local substitutions, then all $\Gamma_i$ at the
bottom can be some arbitrary $\Gamma_i'$. Additional bookkeeping is necessary to
maintain the relative positions for the contexts in domain and codomain context
stacks. Achieving all these requirements, we introduce unified substitution defined as follows:
\begin{definition}
  A unified substitution $\vsigma$, or just \textbf{substitution}, between context stacks is defined as
  \begin{align*}
    \sigma, \delta &:= () \sep \sigma, t/x
    \tag{Local substitutions, $\Subst$}\\
     \vsigma, \vdelta &:=  \snil \sigma \sep \sext \vsigma n \sigma
                            \tag{Unified substitutions, $\Substs$} 
  \end{align*}
  \begin{mathpar}
    \inferrule
    {\sigma : \vGamma \To \Gamma}
    {{\snil { \sigma}} : \vGamma \To \epsilon;\Gamma}
    \quad
    \inferrule
    {\vsigma : \vGamma \To \vDelta \quad |\vGamma'| = n \quad \sigma : \vGamma;\vGamma' \To \Delta}
    {\sext \vsigma n \sigma : \vGamma;\vGamma' \To \vDelta;\Delta}
  \end{mathpar}
  where local substitution $\sigma : \vGamma \To \Gamma$ is defined as a list of terms as
  usual. Moreover, the \emph{offset} $n$ is subject to the \tunbox level constraint of the
  system. That is, in system $K$, $n$ must be $1$; in system $T$, $n$ can be $0$ or
  $1$, etc. 
\end{definition}

The core idea of this definition is simple: a local substitution $\sigma$
replaces variables in a context $\Gamma$ and an offset $n$ specifies the
correspondence of contexts. Just as context stacks must be non-empty and consist of
at least one context, a unified substitution must have a topmost local substitution.
By definition, we know that a unified substitution always has a topmost
local substitution.  To illustrate, the morphism in the previous diagram can be represented as
$\varepsilon; \sext {\sext {\sext {\sext \id 3 \wk_1} 0 \wk_2} 0 \wk_3} 2 \id$ where $\id$ is local
identities and
$\wk_i : \Gamma_0;\Delta_0;\Delta_1;(\Gamma_1,\Gamma_2,\Gamma_3) \To {\Gamma_i}$
are appropriate local weakenings. We break down this representation:
\begin{enumerate}
\item We start with $\snil\id : \epsilon ; \Gamma_0 \To \epsilon; \Gamma_0$.
\item We add an offset $3$ and a local weakening $\wk_1$, forming
  $\sext {\snil \id} 3 \wk_1 : 
   \epsilon ; \Gamma_0; \Delta_0;\Delta_1;(\Gamma_1,\Gamma_2,\Gamma_3) \To
  \epsilon; \Gamma_0; \Gamma_1$. The offset $3$ adds three contexts to
  the domain stack ($\Delta_0$, $\Delta_1$ and $\Gamma_1,\Gamma_2,\Gamma_3$). Local weakening
  $\wk_1$ extracts $\Gamma_1$ from $\Gamma_1,\Gamma_2,\Gamma_3$.
\item We extend the substitution to 
$\sext
   {\sext {\snil \id} 3  \wk_1}
    0  \wk_2 : \epsilon ; \Gamma_0; \Delta_0;\Delta_1;(\Gamma_1,\Gamma_2,\Gamma_3) \To
  \epsilon; \Gamma_0; \Gamma_1; \Gamma_2$. Since the offset associated
  with $\wk_2$ is $0$, no context is added to the domain stack. This effectively represents fusion. $\wk_2$ is
  similar to $\wk_1$. 
\item The rest proceeds very similarly.
\end{enumerate}

Clearly, if we change the local substitutions in the example above to contain more
terms, we obtain unified substitutions that also do nontrivial term substitutions.
Subsequently, we may simply write
$\vsigma; \sigma$ instead of $\sext \vsigma 1 \sigma$. In particular,
we will write $\vsigma ; \wk$ instead of $\sext \vsigma 1 \wk$ and 
$\vsigma ; \id$ instead of $\sext \vsigma 1 \id$.  We often omit offsets
that are $1$ for readability.

Next, we shall define some operations on unified substitutions.
In order to define
composition, we describe two essential operations: \emph{truncation}
($\trunc \vsigma n$) drops $n$ substitutions from a unified substitution $\vsigma$;
\emph{truncation offset} ($\Ltotal \vsigma n$) computes when given a unified substitution $\vsigma$ the \emph{total} number of contexts that need to be dropped from the domain context stack, given that we truncate $\vsigma$ by $n$.

The $\Ltotal \vsigma n$ operation takes as input the unified substitution
$\vsigma$ and $n$, the number of local substitutions to be
dropped from $\vsigma$. It computes the sum of $n$ leading
offsets, called \emph{truncation offset}. Let 
$\vsigma = \sext{\sext{\vsigma'}{k_n}{\sigma_n} ; \ldots}{k_1}{\sigma_1}
$, then $\Ltotal {\vsigma}{n} = k_n + \ldots + k_1$ and $\trunc \vsigma n =
\vsigma'$. For the operation to be meaningful, $n$ must be less than $|\Delta|$.
\begin{align*}
  \Ltotal {\_} {\_} &: (\vGamma \To \vDelta) \to \N \to \N \tag{Truncation Offset} \\
  \Ltotal \vsigma 0 &:= 0 \\
  \Ltotal {\sext \vsigma  n  \sigma} {1 + m} &:= n + \Ltotal \vsigma m
\end{align*}

Truncation simply drops $n$ local
substitutions regardless of the offset that is associated with 
each local substitution. 
\begin{align*}
  \trunc {\_} {\_} &: (\vsigma : \vGamma \To \vDelta) \to  (n:\N) 
                     \;\to \trunc{\vGamma} {\Ltotal {\vsigma} n} \To
                     \trunc {\vDelta} n
  \tag{Truncation} \\
  \trunc \vsigma 0 &:= \vsigma \\
  \trunc {(\sext \vsigma m  \sigma)} {1 + n} &:= \trunc {\vsigma} n
\end{align*}
Similar to truncation of substitutions, we rely on truncation of contexts, written as $\trunc{\vGamma}{n}$ which simply drops $n$ contexts from the context stack $\vGamma$, i.e. if $\vGamma = \vGamma'; \Gamma_1; \ldots;\Gamma_n$, then $\trunc \vGamma n = \vGamma'$.
Note that $n$ must satisfy $n < |\vGamma|$, otherwise the operation
would not be meaningful. 
We emphasize that no further restrictions are placed on $n$ and hence
our definitions apply to any of the combinations of Axioms $K$, $T$ and $4$
described in the introduction. 

To organize the category of context stacks, we need to define the identity morphism.
The identity morphism is also a unified substitution:
\begin{align*}
  \vect{\id}_{\vGamma} &: \vGamma \To \vGamma \\
  \vect{\id}_{\vGamma} &:= \sext{\sext{\snil \id}{1}{\id};\cdots}{1}{\id}
\end{align*}
We omit the subscript $\vGamma$ when it can be inferred from the context and the
offset $1$:
\begin{align*}
  \vect{\id} = \snil \id; \cdots; \id
\end{align*}
which matches our intuition even better. 

Next, we will see that MoTs and term substitutions in
the form of \Cref{sec:st:syn} are just special cases of unified substitutions.

\subsubsection{MoTs}

As suggested in the previous section, we are readily able to represent MoTs
in the unified form. For clarity, let us distinguish the
cases for modal weakening and modal fusion.  When it is modal weakening, we know 
MoTs must be of the form $\{k+1/l\}$. Let $|\vDelta| = l$. In this case, 
\begin{align*}
  \{k+1/l\} &: \vGamma;\Gamma_1; \cdots; (\Gamma_{k + 1}, \Delta_0);\vDelta \To
              \vGamma;\Delta_0;\vDelta \\
  \{k+1/l\} &= \varepsilon; 
   \underbrace{\id; \cdots;\id}_{|\vGamma|};  \Uparrow^{k + 1} \wk;
    \underbrace{\id; \cdots; \id}_{|\vDelta|}
\end{align*}
where the offset $k + 1$ on the right adds $\Gamma_1; \cdots; (\Gamma_{k + 1},
\Delta_0)$ to $\vGamma$ in the domain
stack and $\wk : \vGamma;\Gamma_1; \cdots; (\Gamma_{k + 1}, \Delta_0) \To {\Delta_0}$.

Fusion is also easily defined:
\begin{align*}
  \{0/l\} &: \vGamma;(\Gamma_1, \Gamma_2); \vDelta \To \vGamma;\Gamma_1; \Gamma_2; \vDelta \\
  \{0/l\} &= \varepsilon; \underbrace{\id; \cdots; \id}_{|\vGamma|}; \wk_1; \Uparrow^0\! 
              \wk_2; \underbrace{\id; \cdots; \id}_{|\vDelta|}
\end{align*}
where the offset $0$ associated with $\wk_2$ allows us to fuse $\Gamma_1$ and $\Gamma_2$, and
$\wk_i : \vGamma;(\Gamma_1, \Gamma_2) \To {\Gamma_i}$ for $i \in \{1, 2\}$.

\subsubsection{Term Substitution}

Given $\mtyping[\vGamma;(\Gamma, \Gamma')]u U$, and
$\mtyping[\vGamma; (\Gamma, x : U, \Gamma'); \vDelta]t T$, the substituted
term $\mtyping[\vGamma; (\Gamma, \Gamma'); \vDelta]{t[u/x]}T$. Thus we can
give the following definition
\begin{align*}
  [u/x] &: \vGamma; (\Gamma, \Gamma'); \vDelta \To \vGamma; (\Gamma, x : U,
          \Gamma'); \vDelta \\
  [u/x] &= \varepsilon; \underbrace{\id; \cdots; \id}_{|\vGamma|}; \sigma; \underbrace{\id; \cdots; \id}_{|\vDelta|}
\end{align*}
where $\sigma : \vGamma; (\Gamma, \Gamma') \To{ (\Gamma, x : U, \Gamma')}$ is
defined by replacing $x$ with $u$ and keep all other variables as is.

\subsection{Applying Unified Substitution}\labeledit{sec:app-usubst}

Now we have already defined unified substitution, but we have not answered how it can
be applied to terms. The definition is actually very straightforward. 
\begin{definition}
  The simultaneous application of a unified substitution $\vsigma$ is defined as
  \begin{align*}
    x[\valpha; \sigma] &:= \sigma(x) \tag{lookup $x$ in $\sigma$}\\
    (\boxit t)[\vsigma] &:= \boxit{t[\vsigma; ()]} \\
    (\unbox n t)[\vsigma] &:= \unbox{m}{(t[\trunc \vsigma n])} 
                            \tag{note $m := \Ltotal {\vsigma}{n}$}\\
    (\lambda x. t) [\snil \sigma] &:= \lambda x. t[\snil{(\sigma, x/x)}] \\
    (\lambda x. t) [\sext \vsigma k \sigma] &:= \lambda x. t[\sext \vsigma k {(\sigma, x/x)}] \\
    (s\ t)[\vsigma] &:= s[\vsigma]\ u[\vsigma]
  \end{align*}
\end{definition}
There are a few comments worth making:
\begin{enumerate}
\item When we write $\vsigma; ()$, we should have written $\sext\vsigma 1 ()$ but since the
  offset is $1$, we take our liberty to omit it. 
\item Notice that with unified substitution, we no longer need to do case analysis on
  \tunbox level as opposed to \Cref{sec:st:mtrans}, because the result level can be
  computed from the offsets in the unified substitution using the truncation offset operation.
\item Moreover, truncation offset respects the UL constraint imposed by the target
  system. We can quickly verify this:
  \begin{itemize}
  \item In system $K$, since $n$ must be $1$, $\Ltotal \vsigma 1 = 1$ must hold because
    $\vsigma$ also only contains $1$'s.
  \item In system $T$, the sum of zero or one $0$ or $1$ gives $0$ or $1$.
  \item In system $S4$, clearly the sum of any number of natural numbers remains
    a natural number.
  \item In system $K4$, the sum of a positive number of positive numbers is positive.
  \end{itemize}
  Thus we are sure $L$ always gives valid UL for all valid inputs. 
\item In the $\lambda$ case, we extend the topmost substitution by substituting $x$
  for $x$.  This extension resembles the usual
  simultaneous substitution in typical type theories. 
\end{enumerate}

We can prove that the simultaneous application is well-defined:
\begin{lemma}
  If $\mtyping t T$ and $\vsigma : \vDelta \To \vGamma$, then
  $\mtyping[\vDelta]{t[\vsigma]}T$. 
\end{lemma}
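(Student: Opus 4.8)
The plan is to proceed by induction on the typing derivation $\mtyping t T$, following exactly the structure of the clauses in the definition of simultaneous application. For each syntactic form of $t$, I will need to check that the substituted term is well-typed in $\vDelta$, using the typing rule for that form together with the appropriate induction hypotheses and some bookkeeping about how the offsets and truncations interact with the context-stack structure. The key auxiliary facts I expect to lean on are the typing of truncation, $\trunc \vsigma n : \trunc \vGamma {\Ltotal \vsigma n} \To \trunc \vDelta n$ (with the matching facts about truncation of context stacks), and the observation, already made in the excerpt, that $\Ltotal \vsigma n$ always yields a valid \RUL{} for the system at hand, so the resulting $\unbox m {(\cdots)}$ is well-formed.

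\textbf{Main cases.} For a variable $x$ with $x : T \in \Gamma$ where the stack is $\vGamma; \Gamma$, the substitution must be $\sext \valpha 1 \sigma$ (or $\snil\sigma$ if $\vGamma$ is empty) with $\sigma : \vDelta' \To \Gamma$, so $\sigma(x)$ has type $T$ in the appropriate stack by the definition of local substitutions as lists of well-typed terms. For $\boxit t$ with $\mtyping[\vGamma;\cdot] t T$, I apply $\vsigma; ()$, which by the extension rule has type $\vDelta; \cdot \To \vGamma; \cdot$ since $()$ is the empty local substitution into $\cdot$ and the offset $1$ pushes an empty context; the IH gives $\mtyping[\vDelta;\cdot]{t[\vsigma;()]}T$, and the introduction rule for $\square$ closes the case. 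For $\unbox n t$ with $\mtyping t {\square T}$ in some stack $\vGamma'; \vDelta_0$ where $|\vDelta_0| = n$: truncation gives $\trunc \vsigma n : \trunc \vDelta {\Ltotal \vsigma n} \To \vGamma'$, so the IH yields $\mtyping[\trunc \vDelta m]{t[\trunc\vsigma n]}{\square T}$ with $m = \Ltotal \vsigma n$; since the portion of $\vDelta$ that was truncated has total length exactly $m$, the elimination rule applies with \RUL{} $m$, giving the result. The $\lambda$ and application cases are the standard ones: for $\lambda x. t$ I extend the topmost local substitution by $x/x$, which keeps it well-typed into the extended context $\Gamma, x : S$, and apply the IH and the $\func$-introduction rule; application is a direct use of both IHs and $\func$-elimination.

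\textbf{Main obstacle.} The routine-looking but genuinely fiddly part is matching up the \emph{offsets} with the \emph{context-stack lengths} in the $\square$-elimination case — verifying that the number of contexts dropped from the domain stack when truncating $\vsigma$ by $n$ is precisely $\Ltotal \vsigma n$, and that this quantity is a legal \RUL{} for the system. This is exactly what the typing rule for truncation and the earlier verification of the \RUL{} constraint are designed to supply, so in principle it is discharged by appeal to results already in the excerpt; but getting the indices to line up — especially reconciling the general typing rule $\sext \vsigma n \sigma : \vGamma;\vGamma' \To \vDelta;\Delta$ with $|\vGamma'| = n$ against the way $\unbox n t$ reads off the length of the \emph{codomain}'s topmost block — requires care. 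A secondary subtlety is that the statement as written has the substitution named $\vsigma : \vDelta \To \vGamma$ with $\vDelta$ as \emph{domain}, opposite to the orientation one might first guess, so one must consistently read $t[\vsigma]$ as transporting a term typed in $\vGamma$ to one typed in $\vDelta$; once that convention is fixed the induction goes through mechanically.
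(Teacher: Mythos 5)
Your proposal is correct and follows exactly the paper's own argument: an induction on the typing derivation $\mtyping t T$, where the only non-routine case is $\tunbox$, discharged by the truncation typing fact $\trunc\vsigma n : \trunc{\vDelta}{\Ltotal\vsigma n} \To \trunc\vGamma n$ together with the observation that the truncated portion of the domain stack has length exactly $\Ltotal\vsigma n$, a legal \tunbox level. The paper's proof is just a one-line sketch of this same induction, so your write-up is simply a fleshed-out version of it.
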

\begin{proof}
  Induction on $\mtyping t T$. In the $\tunbox$ case, recall that $\trunc\vsigma n :
  \trunc{\vDelta}{\Ltotal \vsigma n} \To \trunc \vGamma n$. 
\end{proof}

\subsection{Composing Unified Substitution}

Next we shall define composition of unified substitutions, and show that it satisfies the
categorical laws:
\begin{definition}
  The composition of unified substitutions is defined as
  \begin{align*}
    \_ \;\circ \_ \;&: \vGamma' \To \vGamma'' \to \vGamma \To \vGamma' \to \vGamma
                      \To \vGamma'' \\
    (\snil \sigma) \circ \vdelta &:= \snil{(\sigma[\vdelta])} \\
    (\sext \vsigma n \sigma) \circ \vdelta &:= 
                                             \sext {(\vsigma \circ  {(\trunc \vdelta n)})}{\Ltotal \vdelta n}{(\sigma[\vdelta])}
  \end{align*}
  where $\sigma_1[\vsigma_2]$ is the iterative application of $\vsigma_2$ to terms in
  $\sigma_1$ as defined in the previous section.
\end{definition}
The recursive case can be visualized in the following figure:
\begin{center}
  \begin{tikzpicture}
    \matrix (m) [matrix of math nodes, row sep=15pt]
    {
      \vGamma'' & = & \trunc{\vGamma''}{\Ltotal{\vsigma_2} n}; & \underbrace{\vDelta_1}_{m_1} & ; & \cdots & ;
      & \underbrace{\vDelta_n}_{m_n}
      \\
      \vGamma' & = & \trunc{\vGamma'}n; & \Gamma_1' & ; & \cdots & ; & \Gamma'_n \\
      \vGamma & = & \trunc\vGamma 1; & & &  &
      & \Gamma \\
    };
    \draw[->] (m-2-3)  -- (m-3-3) node[midway, right] {$\vsigma_1$};
    \draw[->] (m-1-3)  -- (m-2-3) node[midway, right] {$\trunc{\vsigma_2}n$};
    \draw[decorate, decoration={brace, mirror, raise=-1pt}]
    (m-2-3.south west) -- (m-2-8.south east)
    node[midway] (sigma1) {};
    \draw[->] (sigma1.south) -- (m-3-8) node[pos=.4, right=5pt] {$\sigma_1$};
  \end{tikzpicture}
\end{center}
where $\sum m_i = \Ltotal{\vsigma_2} n$.

With the definition of composition, we can prove the categorical laws for
substitutions. The laws are derived from the following lemma:
\begin{lemma}
  $t[\vsigma \circ \vdelta] = t[\vsigma][\vdelta]$
\end{lemma}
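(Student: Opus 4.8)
The plan is to prove $t[\vsigma \circ \vdelta] = t[\vsigma][\vdelta]$ by structural induction on the term $t$, exactly mirroring the cases in the definition of simultaneous application. The variable case will reduce to showing that looking up $x$ in the composed local substitution $\sigma[\vdelta]$ agrees with first substituting via $\vsigma$ (whose topmost local substitution is $\sigma$) and then applying $\vdelta$; this is the usual single-context composition fact, which follows because $x[\valpha;\sigma] = \sigma(x)$ only inspects the topmost local substitution $\sigma$, and $\sigma[\vdelta]$ is defined termwise. The function cases ($\lambda$, application) are routine congruences, with the one wrinkle that in the $\lambda$ case one must check that extending the topmost local substitution by $x/x$ commutes with composition, i.e. $(\sigma,x/x)[\vdelta] = (\sigma[\vdelta], x/x)$ after the appropriate renaming — this is immediate from the definition of composition on local substitutions together with $x[\vdelta] = x$ when $x/x$ is fresh in the topmost slot.

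The two genuinely modal cases are $\tbox$ and $\tunbox$, and I expect the $\tunbox$ case to be the crux. For $\tbox$: $(\boxit t)[\vsigma\circ\vdelta] = \boxit{t[(\vsigma\circ\vdelta);()]}$, and I need $(\vsigma\circ\vdelta);() = (\vsigma;()) \circ (\vdelta;())$ — that is, composition commutes with pushing an empty topmost local substitution with offset $1$. Unfolding the definition of $\circ$ on the extended substitutions, the new topmost slot contributes $()[\vdelta;()] = ()$ and offset $\Ltotal{(\vdelta;())}{1} = 1$, while the tail is $(\vsigma \circ \trunc{(\vdelta;())}{1}) = \vsigma\circ\vdelta$; so this identity holds definitionally, and then the induction hypothesis on $t$ finishes the case.

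For $\tunbox$: $(\unbox n t)[\vsigma\circ\vdelta] = \unbox{m}{t[\trunc{(\vsigma\circ\vdelta)}{n}]}$ with $m = \Ltotal{\vsigma\circ\vdelta}{n}$, whereas $(\unbox n t)[\vsigma][\vdelta] = (\unbox{m'}{t[\trunc\vsigma n]})[\vdelta] = \unbox{m''}{t[\trunc\vsigma n][\trunc\vdelta{m'}]}$ where $m' = \Ltotal\vsigma n$ and $m'' = \Ltotal\vdelta{m'}$. So the case comes down to two bookkeeping identities about truncation and truncation offset interacting with composition: first, the offset identity $\Ltotal{\vsigma\circ\vdelta}{n} = \Ltotal\vdelta{\Ltotal\vsigma n}$, and second, the truncation identity $\trunc{(\vsigma\circ\vdelta)}{n} = (\trunc\vsigma n) \circ (\trunc\vdelta{\Ltotal\vsigma n})$. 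Both should be provable by a short induction on $n$ (or on the spine of $\vsigma$), reading off the recursive clause $(\sext\vsigma k\sigma)\circ\vdelta = \sext{(\vsigma\circ\trunc\vdelta k)}{\Ltotal\vdelta k}{(\sigma[\vdelta])}$: one step of truncation on the left strips the topmost slot $\sext{-}{k}{-}$ and replaces $\vdelta$ by $\trunc\vdelta k$, which is precisely what the right-hand side does after accounting for the accumulated offset. Given these two lemmas, the $\tunbox$ case closes by applying the induction hypothesis to $t$ with the substitution pair $(\trunc\vsigma n, \trunc\vdelta{\Ltotal\vsigma n})$.

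The main obstacle is thus not the term induction itself but isolating and proving the two truncation/offset-composition identities cleanly; once those are in hand as auxiliary lemmas (each a routine induction on $n$), the proposition follows. I would state them explicitly before the main induction. A secondary, purely notational hazard is the silent omission of offset-$1$ annotations (writing $\vsigma;\sigma$ for $\sext\vsigma 1\sigma$): I would be careful to expand these in the $\tbox$ and $\lambda$ cases so that the definitional unfolding of $\circ$ is unambiguous.
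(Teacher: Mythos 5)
Your proposal is correct and follows essentially the same route as the paper: induction on $t$ with the $\tunbox$ case as the crux, discharged by exactly the two identities you isolate, which are the paper's distributivity lemmas of composition over truncation offset and over truncation (\Cref{lem:L-comp,lem:trunc-comp}), each proved there by induction on $n$ just as you propose (with the small extra note that those inductions also lean on the additive distributivity facts \Cref{lem:L-add,lem:trunc-sum}).
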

\begin{proof}
  We proceed by induction on $t$. Most cases are immediate by IH. We only consdier the
  case where $t = \unbox n t'$:
  \begin{align*}
    \unbox n t'[\vsigma \circ \vdelta]
    &= \unbox{\Ltotal{\vsigma \circ \vdelta} n}{(t'[\trunc{(\vsigma \circ \vdelta)} n])} \\
    &= \unbox{\Ltotal{\vdelta}{\Ltotal\vsigma n}}{(t'[{(\trunc \vsigma n)} \circ (\trunc{\vdelta}{\Ltotal\vsigma
      n})])}
    \tag{by \Cref{lem:L-comp,lem:trunc-comp}}\\
    &= \unbox{\Ltotal \vsigma n}{(t'[\trunc \vsigma n])}[\vdelta]  \\
    &= \unbox{n}{(t')}[\vsigma][\vdelta] 
  \end{align*}
\end{proof}
\begin{lemma}
  The following equations hold:
  \begin{itemize}
  \item $t[\vect\id][\vsigma] = t[\vsigma]$
  \item $t[\vsigma][\vect\id] = t[\vsigma]$
  \item $t[\vsigma_1][\vsigma_2 \circ \vsigma_3] = t[\vsigma_1 \circ
    \vsigma_2][\vsigma_3]$
  \end{itemize}
\end{lemma}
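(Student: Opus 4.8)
The plan is to reduce the three equations either to the preceding lemma $t[\vsigma \circ \vdelta] = t[\vsigma][\vdelta]$ or to one new auxiliary induction, namely that the identity substitution acts as a no-op on terms. The third equation needs nothing new: applying the preceding lemma with $\vsigma := \vsigma_1$ and $\vdelta := \vsigma_2$ gives $t[\vsigma_1 \circ \vsigma_2] = t[\vsigma_1][\vsigma_2]$, hence $t[\vsigma_1 \circ \vsigma_2][\vsigma_3] = t[\vsigma_1][\vsigma_2][\vsigma_3]$, while applying it to the term $t[\vsigma_1]$ with substitutions $\vsigma_2,\vsigma_3$ gives $t[\vsigma_1][\vsigma_2 \circ \vsigma_3] = t[\vsigma_1][\vsigma_2][\vsigma_3]$. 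Both sides of the third equation therefore collapse to $t[\vsigma_1][\vsigma_2][\vsigma_3]$.

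For the first two equations I would first isolate the auxiliary claim: if $\mtyping t T$ then $t[\vect{\id}_{\vGamma}] = t$. Granting it, the first equation $t[\vect{\id}][\vsigma] = t[\vsigma]$ is immediate, and the second $t[\vsigma][\vect{\id}] = t[\vsigma]$ follows by instantiating the claim at the (again well-typed) term $t[\vsigma]$. The auxiliary claim itself I would prove by induction on the derivation of $\mtyping t T$, leaning on a handful of small equalities about how $\vect{\id}$ interacts with the operations used by substitution application: the topmost local substitution of $\vect{\id}$ is the local identity and hence returns each variable unchanged (variable case); $\sext{\vect{\id}_{\vGamma}}{1}{()} = \vect{\id}_{\vGamma;\cdot}$, so the $\tbox$ case reduces to the IH; $\Ltotal{\vect{\id}}{n} = n$ and $\trunc{\vect{\id}_{\vGamma}}{n} = \vect{\id}_{\trunc{\vGamma}{n}}$, both holding because every offset in $\vect{\id}$ is $1$, so that $(\unbox n t)[\vect{\id}] = \unbox{n}{(t[\vect{\id}])} = \unbox n t$ by the IH ($\tunbox$ case); and extending the topmost local identity by $x/x$ is again a local identity, so the $\lambda$ case reduces to the IH at the extended stack. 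The application case is a plain appeal to the IH.

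I expect the difficulty to be bookkeeping rather than mathematics: one has to state and discharge those ``obviously true'' identities about $\vect{\id}$, and make sure the offsets line up in the $\tunbox$ and $\lambda$ cases, where the substitution threaded through the recursion changes shape. An alternative, essentially equivalent route would be to first prove the categorical identity laws at the level of substitutions, $\vect{\id} \circ \vsigma = \vsigma$ and $\vsigma \circ \vect{\id} = \vsigma$, and then obtain the first two term-level equations by composing with the preceding lemma; but that route still needs $t[\vect{\id}] = t$ in order to simplify $\sigma[\vect{\id}] = \sigma$ inside $\vsigma \circ \vect{\id}$, so I would favour the direct induction above.
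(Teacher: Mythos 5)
Your proposal is correct and matches the paper's approach: the paper also derives this lemma directly from the preceding composition lemma $t[\vsigma \circ \vdelta] = t[\vsigma][\vdelta]$, stating the rest as immediate. Your auxiliary induction showing $t[\vect{\id}] = t$ (via $\Ltotal{\vect{\id}}{n} = n$, $\trunc{\vect{\id}}{n} = \vect{\id}$, and the local-identity facts) simply spells out the identity-substitution details that the paper treats as evident, so it is a faithful elaboration rather than a different route.
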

\begin{proof}
  Immediate from the previous lemma.
\end{proof}
Thus we have established that context stacks and unified substitutions constitute a
categorical structure. 

\subsection{Other Properties of Unified Substitutions}\labeledit{sec:usubst-props}

In this section, we study some other basic properties of operations of unified
substitutions that will be very useful in later development. They are necessary
properties of truncation and truncation offset, and characterize the operations from an extensional
point of view. Let $\vsigma : \vGamma' \To \vGamma$ and $\vdelta : \vGamma'' \To \vGamma'$: 

\begin{lemma}[Distributivity of Addition over Truncation Offset]\labeledit{lem:L-add}
  If $n + m < |\vGamma|$, then 
  $\Ltotal \vsigma {n + m} = \Ltotal {\vsigma} {n} + \Ltotal {\trunc \vsigma n}  m$.
\end{lemma}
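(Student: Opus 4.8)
The plan is to prove \Cref{lem:L-add} by induction on $n$.

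\textbf{Base case $n = 0$.} When $n = 0$, by definition $\Ltotal \vsigma 0 = 0$ and $\trunc \vsigma 0 = \vsigma$, so the right-hand side becomes $0 + \Ltotal \vsigma m = \Ltotal \vsigma m$, which equals the left-hand side $\Ltotal \vsigma {0 + m}$.

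\textbf{Inductive step.} Suppose the claim holds for $n$; we show it for $n + 1$. Since $(n+1) + m < |\vGamma|$, in particular $|\vGamma| \ge 1$, so $\vGamma = \trunc\vGamma 1; \Gamma$ is nonempty and correspondingly $\vsigma$ has a topmost local substitution, i.e.\ $\vsigma = \sext {\vsigma'} k \sigma$ for some $\vsigma'$, offset $k$, and local substitution $\sigma$. Then by the defining equation $\Ltotal {\sext {\vsigma'} k \sigma} {1 + m'} = k + \Ltotal {\vsigma'} {m'}$ applied with $m' = n + m$, we get
\begin{align*}
  \Ltotal \vsigma {(n+1) + m} &= \Ltotal {\sext{\vsigma'}k\sigma}{1 + (n + m)} = k + \Ltotal {\vsigma'}{n + m}.
\end{align*}
By the induction hypothesis applied to $\vsigma'$ (noting $n + m < |\trunc\vGamma 1|$, which follows from $(n+1)+m < |\vGamma|$), this equals $k + \bigl(\Ltotal{\vsigma'}n + \Ltotal{\trunc{\vsigma'}n}m\bigr)$. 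Reassociating and using $k + \Ltotal{\vsigma'}n = \Ltotal{\sext{\vsigma'}k\sigma}{1+n} = \Ltotal\vsigma{n+1}$, together with $\trunc{\vsigma'}n = \trunc{(\sext{\vsigma'}k\sigma)}{1+n} = \trunc\vsigma{n+1}$, gives $\Ltotal\vsigma{n+1} + \Ltotal{\trunc\vsigma{n+1}}m$, as required.

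\textbf{Main obstacle.} The only real subtlety is purely bookkeeping: making sure that the well-formedness side conditions line up, so that every invocation of the recursive defining equations of $\Ltotal{\_}{\_}$ and $\trunc{\_}{\_}$ is legal. Concretely, one must check that $(n+1)+m < |\vGamma|$ forces $\vsigma$ to decompose as $\sext{\vsigma'}k\sigma$ (equivalently, that the codomain stack is long enough), and that the hypothesis of the induction hypothesis, $n + m < |\trunc\vGamma 1| = |\vGamma| - 1$, holds. Both are immediate arithmetic, so there is no genuine difficulty; the proof is a routine structural induction matching the recursive structure of the two operations.
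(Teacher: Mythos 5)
Your proof is correct and follows essentially the same route as the paper: induction on $n$, with the base case immediate and the inductive step decomposing $\vsigma$ as $\sext{\vsigma'}k\sigma$, applying the defining equation of $\Ltotal{\_}{\_}$, invoking the induction hypothesis, and reassembling via $\Ltotal{\sext{\vsigma'}k\sigma}{1+n} = k + \Ltotal{\vsigma'}{n}$ and $\trunc{(\sext{\vsigma'}k\sigma)}{(1+n)} = \trunc{\vsigma'}{n}$. The side-condition bookkeeping you flag is handled the same way (implicitly) in the paper, so there is nothing further to add.
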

\begin{proof}
  We proceed by induction on $n$.
  \begin{itemize}[label=Case]
  \item $n = 0$, then both sides are $\Ltotal \vsigma m$.
  \item $n = 1 + n'$, then
    \begin{align*}
      \Ltotal{\sext{\vsigma}k\sigma}{1 + n' + m}
      &= k + \Ltotal\vsigma {n' + m} \\
      &= k + \Ltotal\vsigma {n'}  + \Ltotal{\trunc\vsigma {n'}}m
        \tag{by IH}\\
      &= \Ltotal{\sext\vsigma k\sigma}{1 + n'}  + \Ltotal{\trunc{(\sext\vsigma k \sigma)}{(1 + n')}}m
    \end{align*}
  \end{itemize}
\end{proof}

\begin{lemma}[Distributivity of Composition over Truncation Offset]\labeledit{lem:L-comp}
  If $n < |\vGamma|$, then $\Ltotal {\vsigma \circ \vdelta} n = \Ltotal {\vdelta} {\Ltotal \vsigma n}$.
\end{lemma}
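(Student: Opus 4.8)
The plan is to prove the identity by induction on $n$, keeping $\vsigma$ and $\vdelta$ universally quantified so that the induction hypothesis can be reapplied to truncated substitutions. Observe first that the hypothesis $n < |\vGamma|$ forces $\vGamma$ to be non-empty; hence whenever $n \ge 1$ the substitution $\vsigma : \vGamma' \To \vGamma$ cannot have the shape $\snil \sigma$ (whose codomain has length one) and must instead be of the form $\sext{\vsigma'}{k}{\sigma}$, with $\vsigma' : \trunc{\vGamma'}{k} \To \trunc{\vGamma}{1}$ as dictated by the typing of $\sext{\vsigma'}{k}{\sigma}$.

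For the base case $n = 0$, both sides collapse to $0$: the left side because $\Ltotal{\vsigma \circ \vdelta}{0} = 0$ by definition of truncation offset, and the right side because $\Ltotal{\vsigma}{0} = 0$ and then $\Ltotal{\vdelta}{0} = 0$.

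For the step case $n = 1 + n'$, I would first unfold the composition with its defining equation, $\sext{\vsigma'}{k}{\sigma} \circ \vdelta = \sext{(\vsigma' \circ (\trunc{\vdelta}{k}))}{\Ltotal{\vdelta}{k}}{(\sigma[\vdelta])}$, and then unfold the leading truncation offset to get $\Ltotal{\vsigma \circ \vdelta}{1+n'} = \Ltotal{\vdelta}{k} + \Ltotal{\vsigma' \circ (\trunc{\vdelta}{k})}{n'}$. Applying the induction hypothesis to $\vsigma'$ and $\trunc{\vdelta}{k}$ at index $n'$ — legitimate because $1+n' < |\vGamma|$ yields $n' < |\trunc{\vGamma}{1}|$, the codomain length of $\vsigma'$ — rewrites the second summand as $\Ltotal{\trunc{\vdelta}{k}}{\Ltotal{\vsigma'}{n'}}$. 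The resulting expression $\Ltotal{\vdelta}{k} + \Ltotal{\trunc{\vdelta}{k}}{\Ltotal{\vsigma'}{n'}}$ is exactly the right-hand side of \Cref{lem:L-add} instantiated at $\vdelta$ with the two indices $k$ and $\Ltotal{\vsigma'}{n'}$, so that lemma collapses it to $\Ltotal{\vdelta}{k + \Ltotal{\vsigma'}{n'}}$. Finally, $k + \Ltotal{\vsigma'}{n'} = \Ltotal{\sext{\vsigma'}{k}{\sigma}}{1+n'} = \Ltotal{\vsigma}{1+n'}$ by the defining equation of truncation offset, so the left side equals $\Ltotal{\vdelta}{\Ltotal{\vsigma}{n}}$, closing the induction.

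I expect the only point needing care to be the side condition of \Cref{lem:L-add}, which demands $k + \Ltotal{\vsigma'}{n'} < |\vGamma'|$, the codomain length of $\vdelta$. Since the typing of $\sext{\vsigma'}{k}{\sigma} : \vGamma' \To \vGamma$ splits $\vGamma'$ into $\trunc{\vGamma'}{k}$ followed by $k$ further contexts, this reduces to $\Ltotal{\vsigma'}{n'} < |\trunc{\vGamma'}{k}|$, which is precisely the well-definedness invariant of truncation offset for $\vsigma' : \trunc{\vGamma'}{k} \To \trunc{\vGamma}{1}$ at the index $n' < |\trunc{\vGamma}{1}|$ (the same invariant that makes $\trunc{\vsigma'}{n'}$ meaningful). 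So the needed bound is available for free, and the whole argument is a routine structural induction whose arithmetic content is entirely delegated to \Cref{lem:L-add}.
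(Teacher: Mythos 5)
Your proof is correct and follows essentially the same route as the paper's: induction on $n$, unfolding the composition in the successor case, applying the induction hypothesis to $\vsigma'$ and $\trunc{\vdelta}{k}$, and closing with \Cref{lem:L-add}. Your extra attention to the side conditions ($n < |\vGamma|$ and the bound needed for \Cref{lem:L-add}) is sound and merely makes explicit what the paper leaves implicit.
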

\begin{proof}
  We proceed by induction on $n$.
  \begin{itemize}[label=Case]
  \item $n = 0$, it is immediate.
  \item $n = 1 + n'$, then
    \begin{align*}
      \Ltotal{(\sext\vsigma  k \sigma) \circ \vdelta}{1 + n'}
      &= \Ltotal{\sext{(\vsigma \circ \trunc{\vdelta}k)}{\Ltotal{\vdelta}k}{(\sigma[\vdelta])}}{1 + n'} \\
      &= \Ltotal\vdelta k + \Ltotal{\vsigma \circ \trunc\vdelta k}{n'} \\
      &= \Ltotal\vdelta k + \Ltotal{\trunc\vdelta k}{\Ltotal{\vsigma}{n'}}
        \tag{by IH} \\
      &= \Ltotal\vdelta{k + \Ltotal\vsigma {n'}}
        \tag{by \Cref{lem:L-add}} \\
      &= \Ltotal\vdelta{\Ltotal{\sext \vsigma k \sigma}{1 + n'}}
    \end{align*}
  \end{itemize}
\end{proof}

\begin{lemma}[Distributivity of Addition over Truncation]\labeledit{lem:trunc-sum}
  If $n + m < |\vGamma|$, then 
$\trunc \vsigma {(n + m)} = \trunc {(\trunc \vsigma n)} m$.
\end{lemma}
\begin{proof}
  Immediate.
\end{proof}

\begin{lemma}[Distributivity of Composition over Truncation]\labeledit{lem:trunc-comp}
  If $n < |\vGamma|$,   
  then $\trunc {(\vsigma \circ \vdelta)} n = (\trunc \vsigma n) \circ (\trunc {\vdelta} {\Ltotal \vsigma n })$.
\end{lemma}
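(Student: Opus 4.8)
The plan is to prove the identity by induction on $n$, exactly paralleling the proofs of \Cref{lem:L-comp} and \Cref{lem:L-add}. The base case $n = 0$ is immediate: the left-hand side is $\trunc{(\vsigma \circ \vdelta)}{0} = \vsigma \circ \vdelta$ by definition of truncation, while the right-hand side is $(\trunc{\vsigma}{0}) \circ (\trunc{\vdelta}{\Ltotal{\vsigma}{0}}) = \vsigma \circ (\trunc{\vdelta}{0}) = \vsigma \circ \vdelta$, using $\Ltotal{\vsigma}{0} = 0$.

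For the inductive step take $n = 1 + n'$. Since $n < |\vGamma|$ forces $|\vGamma| \ge 2$, the substitution $\vsigma$ is not of the form $\snil{\sigma}$, so I write $\vsigma = \sext{\vsigma'}{k}{\sigma}$. I would then unfold, in order: the definition of composition to rewrite $\vsigma \circ \vdelta$ as $\sext{(\vsigma' \circ \trunc{\vdelta}{k})}{\Ltotal{\vdelta}{k}}{(\sigma[\vdelta])}$; the definition of truncation to get $\trunc{(\vsigma \circ \vdelta)}{1 + n'} = \trunc{(\vsigma' \circ \trunc{\vdelta}{k})}{n'}$; the induction hypothesis applied to $\vsigma' \circ \trunc{\vdelta}{k}$ and $n'$ to rewrite this as $(\trunc{\vsigma'}{n'}) \circ (\trunc{(\trunc{\vdelta}{k})}{\Ltotal{\vsigma'}{n'}})$; and finally \Cref{lem:trunc-sum} to collapse the nested truncation into $\trunc{\vdelta}{k + \Ltotal{\vsigma'}{n'}}$. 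To close, I observe that $k + \Ltotal{\vsigma'}{n'} = \Ltotal{\vsigma}{1 + n'}$ and $\trunc{\vsigma'}{n'} = \trunc{\vsigma}{1 + n'}$ hold by the defining equations of $\Ltotal{\_}{\_}$ and of truncation, which yields precisely $(\trunc{\vsigma}{n}) \circ (\trunc{\vdelta}{\Ltotal{\vsigma}{n}})$. I would present this as a single chain of equalities in the style of the surrounding lemmas, and note that \Cref{lem:L-comp,lem:L-add} are not needed for this particular statement.

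I do not anticipate a real obstacle; the content is entirely bookkeeping about which expression has which shape. The one point requiring a moment of care is tracking side conditions: invoking the induction hypothesis needs $n' < |\trunc{\vGamma}{1}| = |\vGamma| - 1$, which follows from $1 + n' < |\vGamma|$, and invoking \Cref{lem:trunc-sum} needs $k + \Ltotal{\vsigma'}{n'} < |\vGamma'|$, which holds because $\Ltotal{\vsigma'}{n'}$ is by construction below $|\trunc{\vGamma'}{k}| = |\vGamma'| - k$ — this is exactly the well-definedness condition for the truncation $\trunc{\vsigma'}{n'}$. Since every intermediate expression is already well-typed whenever $\vsigma \circ \vdelta$ and $\trunc{(\vsigma \circ \vdelta)}{n}$ are, these conditions come for free, and I would mention them only in passing.
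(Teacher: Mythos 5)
Your proof is correct and matches the paper's own argument essentially step for step: induction on $n$, with the successor case unfolding the composition, truncating, applying the IH, and collapsing the nested truncation via \Cref{lem:trunc-sum} before folding back the defining equations of truncation and truncation offset. The only difference is cosmetic — you make explicit the appeal to \Cref{lem:trunc-sum} and the side conditions that the paper leaves implicit.
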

\begin{proof}
  We proceed by induction on $n$.
  \begin{itemize}[label=Case]
  \item $n = 0$, immediate.
  \item $n = 1 + n'$, then
    \begin{align*}
      \trunc{((\sext{\vsigma}k\sigma) \circ \vdelta)}{(1+n')}
      &= \trunc{(\sext{(\vsigma \circ \trunc\vdelta k)}{\Ltotal\vdelta k}{(\sigma[\vdelta])}}{(1+n')} \\
      &= \trunc{(\vsigma \circ \trunc\vdelta k)}{n'} \\
      &= \trunc{\vsigma}{n'} \circ (\trunc{(\trunc{\vdelta}k)}{\Ltotal\vsigma{n'}})
      \tag{by IH} \\
      &= \trunc\vsigma{n'} \circ (\trunc\vdelta{(k + \Ltotal\vsigma{n'})}) \\
      &= \trunc{(\sext\vsigma k \sigma)}{(1 + n')} \circ
        (\trunc\vdelta{\Ltotal{\sext\vsigma k\sigma}{1 + n'}})
    \end{align*}
  \end{itemize}
\end{proof}

\subsection{Comments on Unified Substitution}

It is worth noting that unified substitutions are quite adaptive. Recall that we can
control the range of $\tunbox$ levels to tune which modal logic the system corresponds
to. As long as the \tunbox level constraint is maintained for offsets
in all unified substitutions, then unified substitutions can be used in all four
systems we have been discussing as a representation for simultaneous substitutions.
In fact, the observation of unified substitutions is far reaching, and can be used to
simplify existing work, as to be shown in the next section. 

\section{A Simpler Idempotent Modal Type Theory}

\citet{gratzer_implementing_2019,clouston_fitch-style_2018} discussed Fitch-style
idempotent S4 modal type theories and provided their categorical semantics, substitution
calculi, as well as a normalization algorithm. It appears that many existing
applications of comonadic modal type theories are idempotent S4, so a simpler
formulation definitely helps.  In this section, we briefly discuss
their calculi and define an equivalent formulation with a simpler substitution
calculus.

\subsection{Idempotency in Fitch Style}

This type theory is given by \citet{clouston_fitch-style_2018} and generalized with
dependent types by \citet{gratzer_implementing_2019}. We will only discuss the simply
typed version. The result can be carried over to the dependently typed case in
\citet{gratzer_implementing_2019} as the
fundamental idea remains the same. In this section, we only discuss the core idea. 

In idempotent $S4$, the critical rules are:
\begin{mathpar}
  \inferrule
  {x : T \in \Gamma' \\ \thelock \notin \Gamma'}
  {\typing[\Gamma, \thelock, \Gamma']{x}{T}}

  \inferrule
  {\typing[\Gamma, \thelock]{t}{T}}
  {\typing{\boxit t}{\square T}}

  \inferrule
  {\typing[\Gamma^\thelock]{t}{\square T}}
  {\typing{\unboxf t}{T}}
\end{mathpar}
In the variable rule, according to Kripke semantics, $x$ is looked up in the current
``world'', as ensured by $\thelock \notin \Gamma'$. The introduction rule for $\square$
again simply accesses the next world by inserting a $\thelock$ in the context. The
elimination rule is very different as it does not have an $\tunbox$ level. This is
because with idempotency, all previously accessed worlds can no longer be
distinguished from each other. In other words, we cannot tell whether a world is accessed one world
ago, two worlds ago, or more. In the elimination rule, this is reflected by the operation
$\Gamma^\thelock$, which removes all $\thelock$'s from $\Gamma$. After removal of
$\thelock$'s, all previous worlds are considered as one and visible in the current
world.
Their relative positions are lost.

If we look closer, in the reading of Kripke semantics, the formulation already seems
redundant: despite the fact that we cannot distinguish more than two kinds of worlds
(all of the previous
v.s. the current), syntactically we still maintain more than one $\thelock$ in the
context. Moreover, the substitution calculus given by
\citet{gratzer_implementing_2019} seems quite complex because it needs to handle
more than one $\thelock$'s. For the simplest example, the identity substitution is
typed as
\begin{mathpar}
  \inferrule
  {\Gamma \rhd_\thelock \Delta}
  {\typing{\id}{\Delta}}
\end{mathpar}
where $\Gamma \rhd_\thelock \Delta$ is morally a judgment saying $\Delta$ is $\Gamma$
with more $\thelock$'s. But this judgment has the following rule
\begin{mathpar}
  \inferrule
  { }
  {\Gamma, \thelock, \thelock \rhd_\thelock \Gamma, \thelock}
\end{mathpar}
which makes sense but fails to line up with the immediate intuition. As a result, the eventual
substitution calculus is rather complex and does not have a clear
structure. Hence, in this section, we develop an equivalent system which takes
advantages of certain observations, including unified substitution, and has a simpler
formulation.

\subsection{Idempotency in Kripke Style}

In this section, we develop our target system in Kripke style; there is no technical
reason, as we just want to be consistent with our philosophical choice. The system can
be easily replayed in Fitch style by using $\thelock$ instead of a context stack. Even
though we call it a \emph{stack}, there are only two contexts in it, it is clearer to
write out the contexts explicitly:
\begin{mathpar}
  \inferrule
  {x : T \in \Gamma'}
  {\typing[\Gamma; \Gamma']{x}{T}}

  \inferrule
  {\typing[(\Gamma, \Gamma'); \cdot]{t}{T}}
  {\typing[\Gamma; \Gamma']{\boxit t}{\square T}}

  \inferrule
  {\typing[\cdot; (\Gamma, \Gamma')]{t}{\square T}}
  {\typing[\Gamma; \Gamma']{\unboxf t}{T}}
\end{mathpar}
An experienced eye might spot the similarity between these two contexts and the
dual-context style. They definitely have undeniable relation but they come from
different inspirations. As explained in \Cref{sec:intro}, dual-context formulation
comes from the categorization of validity and truth. In this system, the contexts are
still Kripke-style: for $\Gamma; \Gamma'$, $\Gamma$ denotes assumptions from all
previously accessed world while $\Gamma'$ captures the current one. This view explains
the introduction rule and the elimination rule. In the introduction rule, we just
insert the assumptions $\Gamma'$ in the current world in the conclusion to those in the previous
worlds in the premise, because we are entering a new world and differences between the current world
and previous ones are vanished. Similarly, when eliminating, all assumptions in
previous worlds become visible after being moved to the current world.

Next we do some basic syntactic sanity check. The following lemma is needed to justify the check:
\begin{lemma}\labeledit{lem:mv-prev-assum}
  If $\typing[(\Gamma, x : S); \Gamma']t T$, then $\typing[\Gamma; (x : S, \Gamma')]t
  T$. 
\end{lemma}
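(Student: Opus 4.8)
The plan is to prove this by structural induction on the derivation of $\typing[(\Gamma, x : S); \Gamma']t T$. The single observation that makes everything go through is that context concatenation is associative, so $((\Gamma, x : S), \Gamma')$ and $(\Gamma, (x : S, \Gamma'))$ are literally the same context; relocating $x : S$ across the stack separator is therefore invisible to any rule whose premise already merges the two contexts.

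First, the variable case is immediate: since a variable may only refer to bindings in the topmost context, we have $t = y$ with $y : T \in \Gamma'$, hence $y : T \in (x : S, \Gamma')$ as well (assuming, as usual, that $x$ is fresh so no shadowing occurs), and the variable rule re-applies in the new stack. Note that $t$ can never be $x$ itself here, since $x : S$ lives in the non-topmost context. The function cases — $\lambda$-abstraction and application — are equally routine: each acts only on the topmost context and involves no modality, so one simply invokes the induction hypothesis (for $t = \lambda y.\,t_0$, with the topmost context taken as $\Gamma', y : S_0$, again using associativity to read $(x : S, \Gamma'), y : S_0$ as $(x : S, (\Gamma', y : S_0))$) and re-applies the corresponding rule.

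The two modal cases are where the content of the lemma resides, yet neither needs the induction hypothesis. When the derivation ends with box-introduction, $t = \boxit{t_0}$ with $T = \square T_0$, and the premise is $\typing[((\Gamma, x : S), \Gamma'); \cdot]{t_0}{T_0}$, which by associativity is exactly $\typing[(\Gamma, (x : S, \Gamma')); \cdot]{t_0}{T_0}$; re-applying box-introduction directly yields $\typing[\Gamma; (x : S, \Gamma')]{\boxit{t_0}}{\square T_0}$. The unbox-elimination case $t = \unboxf{t_0}$ is symmetric: the premise $\typing[\cdot; ((\Gamma, x : S), \Gamma')]{t_0}{\square T}$ is, up to associativity, $\typing[\cdot; (\Gamma, (x : S, \Gamma'))]{t_0}{\square T}$, and the unbox rule applies. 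I do not expect any genuine obstacle here — this is a bookkeeping lemma; the only point worth flagging is precisely this one: the modal cases close without recursion, purely because each modal rule's premise already glues $\Gamma$ and $\Gamma'$ together, so the intuition ``moving a previous-world assumption into the current world is harmless'' is witnessed by nothing more than associativity of concatenation.
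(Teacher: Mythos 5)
Your proof is correct and follows essentially the same route as the paper, which simply performs induction on the derivation of $\typing[(\Gamma, x : S); \Gamma']{t}{T}$; your write-up just makes explicit the key bookkeeping fact (associativity of context concatenation) that lets the modal cases close without the induction hypothesis.
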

\begin{proof}
  Induction on $\typing[(\Gamma, x : S); \Gamma']t T$.
\end{proof}
We check the local soundness for $\square$:
\begin{mathpar}
  \inferrule
  {\inferrule
    {\typing[(\Gamma, \Gamma'); \cdot]{t}{T}}
    {\typing[\cdot; (\Gamma, \Gamma')]{\boxit t}{\square T}}}
  {\typing[\Gamma; \Gamma']{\unboxf{(\boxit t)}}{T}}

  \Longrightarrow

  \typing[\Gamma; \Gamma']{t}{T}
\end{mathpar}
We implicitly applied \Cref{lem:mv-prev-assum} to get $\typing[\Gamma;
\Gamma']{t}{T}$ from $\typing[(\Gamma, \Gamma'); \cdot]{t}{T}$.

We also check the local completeness:
\begin{mathpar}
  \typing[\Gamma; \Gamma']{t}{\square T}

  \Longrightarrow

  \inferrule
  {\inferrule
    {\typing[\cdot; (\Gamma, \Gamma')]{t}{\square T}}
    {\typing[(\Gamma, \Gamma'); \cdot]{\unboxf t}{T}}}
  {\typing[\Gamma; \Gamma']{\boxit{(\unboxf t)}}{\square T}}
\end{mathpar}
Again, we applied \Cref{lem:mv-prev-assum} to get
$\typing[\cdot; (\Gamma, \Gamma')]{t}{\square T}$ from
$\typing[\Gamma; \Gamma']{t}{\square T}$. Thus these rules syntactically make sense,
but we must make sure we are still talking about the same system. 

Comparing two formulations, it is quite easy to see why these two systems ought to be
equivalent. After all, the latter system simply stops maintaining a difference which
the system cannot tell. Moreover, we do not need to define the $\_^\thelock$ operation
because we simply make $\Gamma$ visible by moving it to the current context in the
elimination rule. 

We can formally show their equivalence. It is quite easy after we understand that we only
need to know whether a context is the current one or not. We define a function which
finds the current world in the Fitch-style system:
\begin{align*}
  S(\cdot) &:= (\cdot, \cdot) \\
  S(\Gamma, x : T) &:= (\Delta, (\Delta', x : T))
  \tag{where $(\Delta, \Delta') := S(\Gamma)$} \\
  S(\Gamma, \thelock) &:= (\Gamma^\thelock, \cdot)
\end{align*}
then we have
\begin{lemma}
  If $\typing t T$ in Fitch style and $(\Delta, \Delta') = S(\Gamma)$, then
  $\typing[\Delta; \Delta']t T$ in Kripke style.
\end{lemma}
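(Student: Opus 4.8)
The plan is to proceed by induction on the derivation of $\typing t T$ in Fitch style, following the structure of the typing rules for the idempotent $S4$ system. The key observation, as hinted in the text, is that $S(\Gamma)$ decomposes a Fitch-style context into exactly two pieces: $\Delta$, everything accessible from "previous" worlds (with all $\thelock$'s stripped), and $\Delta'$, the current world. So the translation should commute with every typing rule. Before starting, I would record two small facts about $S$: first, $S(\Gamma, \thelock, \Gamma')$, when $\thelock \notin \Gamma'$, yields $(\Delta\doubleplus\Delta', \Gamma')$ where $(\Delta, \Delta') := S(\Gamma)$ — i.e. adding a fresh $\thelock$ and some current-world assumptions $\Gamma'$ merges the old current world into the "previous" part and installs $\Gamma'$ as the new current world; second, $S(\Gamma, x : T)$ just extends the current-world component, as in its defining clause.

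Then I would handle the three cases. For the variable rule, we have $x : T \in \Gamma'$ with $\thelock \notin \Gamma'$, and $\typing[\Gamma, \thelock, \Gamma']{x}{T}$; by the first auxiliary fact, $S(\Gamma, \thelock, \Gamma') = (\Delta''; \Gamma')$ for the appropriate $\Delta''$, and since $x : T \in \Gamma'$ the Kripke-style variable rule $\inferrule{x : T \in \Gamma'}{\typing[\Gamma; \Gamma']{x}{T}}$ applies directly. For the introduction rule, from $\typing[\Gamma, \thelock]{t}{T}$ we get by IH $\typing[\Delta_1; \Delta_1']{t}{T}$ where $(\Delta_1, \Delta_1') = S(\Gamma, \thelock) = (\Gamma^\thelock, \cdot)$; meanwhile $S(\Gamma) = (\Delta, \Delta')$ and one checks $\Gamma^\thelock = \Delta \doubleplus \Delta'$ (stripping $\thelock$'s flattens everything). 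So the IH gives $\typing[(\Delta, \Delta'); \cdot]{t}{T}$, and the Kripke introduction rule yields $\typing[\Delta; \Delta']{\boxit t}{\square T}$, as needed. For the elimination rule, from $\typing[\Gamma^\thelock]{t}{\square T}$ the IH gives $\typing[\Delta_2; \Delta_2']{t}{\square T}$ where $(\Delta_2, \Delta_2') = S(\Gamma^\thelock)$; since $\Gamma^\thelock$ has no $\thelock$, $S$ puts all of it in the current component, so $\Delta_2 = \cdot$ and $\Delta_2' = \Gamma^\thelock = \Delta \doubleplus \Delta'$. Thus $\typing[\cdot; (\Delta, \Delta')]{t}{\square T}$, and the Kripke elimination rule gives $\typing[\Delta; \Delta']{\unboxf t}{T}$. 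The function and application cases are immediate by IH since $S$ is unaffected by those term constructors and the contexts do not change.

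I expect the main obstacle to be the bookkeeping around $\Gamma^\thelock$ and its relation to $S$: one needs the clean lemma that if $S(\Gamma) = (\Delta, \Delta')$ then $\Gamma^\thelock = \Delta \doubleplus \Delta'$ (as a single $\thelock$-free context), proved by a straightforward induction on $\Gamma$, with the $\thelock$ case using that stripping $\thelock$ from $\Gamma, \thelock$ gives $\Gamma^\thelock$. A secondary subtlety is that the variable and elimination rules in Fitch style carry side conditions ($\thelock \notin \Gamma'$, or an implicit use of $\Gamma^\thelock$) that must be matched against the shape of $S$'s output; handling these requires being careful that $S$ is total and that the ordering of assumptions within the current world is preserved — here \Cref{lem:mv-prev-assum} is the analogue that justifies freely moving assumptions between the two components, which we may need when reconciling the "previous" and "current" pieces in the $\square$ rules. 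None of these steps involves heavy computation; the proof is essentially a case analysis with the one supporting lemma about $(\cdot)^\thelock$ versus $S$.
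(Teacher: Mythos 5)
Your proof is correct and follows the same route as the paper, which simply says ``by induction on the typing judgment'' and leaves the details implicit. The auxiliary fact you isolate — that $S(\Gamma) = (\Delta, \Delta')$ implies $\Gamma^\thelock = \Delta, \Delta'$ — is exactly the bookkeeping needed to make the $\tbox$/$\tunbox$ cases go through (and note that \Cref{lem:mv-prev-assum} turns out not to be needed in this direction, since the IH already delivers the Kripke premises in exactly the right shape).
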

\begin{proof}
  By induction on the typing judgment.
\end{proof}
The other direction, to some extend, justifies the $\rhd_\thelock$ relation in the
previous section, because of the unnecessary maintenance of extra $\thelock$'s in the
context. We first need a relation between contexts in both styles:
\begin{definition}
  $\Gamma$ (in Fitch style) is more complex than $\Delta$ (in Kripke style), if
  $\Gamma$ is just $\Delta$ with any number of $\thelock$'s inserted.
\end{definition}
Note that this relation literally just adds $\thelock$'s to $\Delta$. After all,
$\Delta$ is in Kripke style and therefore there is no $\thelock$ in it.
\begin{lemma}
  \begin{itemize}
  \item If $\typing[\cdot; \Gamma']t T$ in Kripke style, then $\typing[\Gamma']t T$ in
    Fitch style. 
  \item If $\typing[\Delta; \Gamma']t T$ in Kripke style and $\Gamma$ is more complex
    than $\Delta$, then $\typing[\Gamma, \thelock, \Gamma']t T$ in Fitch style. 
  \end{itemize}
\end{lemma}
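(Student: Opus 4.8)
The plan is to prove both statements simultaneously by induction on the Kripke-style typing derivation $\typing[\Delta;\Gamma']t T$, strengthening the two bullets into a single inductive invariant: for every derivation of $\typing[\Delta;\Gamma']t T$ and every Fitch-style context $\Gamma$ that is more complex than $\Delta$ (in the sense just defined), we have $\typing[\Gamma,\thelock,\Gamma']t T$ in Fitch style; the first bullet is then recovered as the special case $\Delta = \cdot$, $\Gamma = \cdot$, after observing that a leading $\thelock$ can be dropped from a Fitch-style judgment — or, more cleanly, by noting that ``$\Gamma$ more complex than $\cdot$'' means $\Gamma$ is a (possibly empty) string of $\thelock$'s, and handling that degenerate shape directly. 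Actually, to avoid the awkwardness of a leading $\thelock$ in the first bullet, I would instead phrase the invariant as: if $\Delta = \cdot$ then $\typing[\Gamma,\Gamma']t T$ where $\Gamma$ is any string of $\thelock$'s (including the empty string), and if $\Delta \neq \cdot$ then $\typing[\Gamma,\thelock,\Gamma']t T$ for any $\Gamma$ more complex than $\Delta$. Both bullets are instances.

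The induction proceeds case by case on the last rule of the Kripke-style derivation. For the variable rule $x : T \in \Gamma'$, the Fitch-style variable rule applies directly since $\thelock \notin \Gamma'$ (no $\thelock$ occurs in a Kripke-style context), and $x : T \in \Gamma'$ in the extended Fitch context. For the $\square$-introduction rule, the premise is $\typing[(\Delta,\Gamma');\cdot]t T$; by the IH with the fact that $\Gamma,\thelock,\Gamma'$ is more complex than $\Delta,\Gamma'$ (just move the trailing $\thelock$ earlier — we need a small auxiliary observation that $\Gamma,\thelock,\Gamma'$ is Fitch-more-complex-than $\Delta,\Gamma'$ whenever $\Gamma$ is Fitch-more-complex-than $\Delta$), we get $\typing[(\Gamma,\thelock,\Gamma'),\thelock]t T$ in Fitch style, and the Fitch $\square$-introduction rule then yields $\typing[\Gamma,\thelock,\Gamma']{\boxit t}{\square T}$. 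For the $\square$-elimination rule, the premise is $\typing[\cdot;(\Delta,\Gamma')]t{\square T}$; here $\Delta$ has been moved to the current world, so by the first clause of the IH we obtain $\typing[\Gamma'']t{\square T}$ in Fitch style where $\Gamma''$ is $\Delta,\Gamma'$ with some $\thelock$'s inserted — we choose those inserted $\thelock$'s to make $\Gamma'' = \Gamma,\thelock,\Gamma'$, which is legitimate precisely because $\Gamma$ more complex than $\Delta$ means $\Gamma$ is $\Delta$ with $\thelock$'s inserted, so $\Gamma,\thelock,\Gamma'$ is $\Delta,\Gamma'$ with $\thelock$'s inserted. Then $(\Gamma,\thelock,\Gamma')^\thelock = \Gamma^-,\Gamma'$ where $\Gamma^-$ strips $\thelock$'s, but since $\Gamma$ is $\Delta$ with $\thelock$'s, $\Gamma^- = \Delta$ — hmm, this needs care: the Fitch elimination rule is $\typing[\Gamma^\thelock]t{\square T} \Rightarrow \typing[\Gamma']{\unboxf t}T$ only when the ``full'' context is $\Gamma'$ and its $\thelock$-erasure is what types $t$. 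I would therefore apply it with full context $\Gamma,\thelock,\Gamma'$, whose $\thelock$-erasure is $\Delta,\Gamma'$ (all $\thelock$'s removed, including the ones inside $\Gamma$), and the premise gives exactly $\typing[\Delta,\Gamma']t{\square T}$ from the first IH clause; so the rule delivers $\typing[\Gamma,\thelock,\Gamma']{\unboxf t}T$. The function and application cases are routine by the IH, extending $\Gamma'$ with the bound variable.

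The main obstacle is bookkeeping around the ``more complex than'' relation and the $\thelock$-erasure operation: I must check carefully that in each $\square$-rule the inserted $\thelock$'s can be chosen consistently, and that $\thelock$-erasure commutes with the context manipulations as needed — in particular that $(\Gamma,\thelock,\Gamma')^\thelock$ equals $\Delta,\Gamma'$ when $\Gamma$ is more complex than $\Delta$ and $\Gamma'$ is $\thelock$-free. Once that small lemma about erasure is isolated and proved (a trivial induction on the shape of $\Gamma$), the rest of the induction is mechanical, mirroring the local soundness/completeness checks already done in the excerpt via \Cref{lem:mv-prev-assum}. One subtlety worth flagging: the first bullet as stated allows the degenerate case where $\Delta = \cdot$ and we want a Fitch context \emph{without} a spurious leading $\thelock$; the strengthened invariant handles this because the ``string of $\thelock$'s'' may be empty, so no separate argument is needed.
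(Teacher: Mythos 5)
Your proof is correct and takes essentially the same route as the paper, which simply performs induction on the Kripke-style typing derivation, proving the two bullets simultaneously (the lock-free statement being exactly what the $\tunbox$ case needs, as you observed). Your strengthened invariant, the auxiliary fact that $\Gamma,\thelock,\Gamma'$ is more complex than $\Delta,\Gamma'$, and the erasure computation $(\Gamma,\thelock,\Gamma')^\thelock = \Delta,\Gamma'$ are just the details the paper leaves implicit.
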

\begin{proof}
  By induction on the typing judgment.
\end{proof}

\subsection{Unified Substitution with Idempotency}

Inspired by the idea of unified substitution in \Cref{sec:usubst}, we can also define
a version of idempotent S4. Unlike the ``full blown'' unified substitution, due to
idempotency, there is no longer need to keep track of $\tunbox$ levels and thus the
structure of substitution is largely contracted:
\begin{definition}
  A unified substitution for idempotent S4, $\sigma : \Gamma; \Gamma' \To \Delta;
  \Delta'$, is defined as two local substitutions:
  \begin{enumerate}
  \item $\sigma_1 : \cdot; (\Gamma, \Gamma') \To \Delta$,
  \item $\sigma_2 : \Gamma; \Gamma' \To \Delta'$.
  \end{enumerate}
  We let $\sigma := \sigma_1; \sigma_2$. 
\end{definition}
From the view of unified substitution, $\sigma_2 : \Gamma; \Gamma' \To \Delta'$
represents the top level substitution. However, $\sigma_1 : \cdot; (\Gamma, \Gamma')
\To \Delta$ is curious as it moves $\Gamma$ to the current context in domain. This
treatment is correct and necessary as terms in $\Delta$ must have access to all previous worlds. 

In order to define the application of unified substitution to terms, given
$\sigma : \Gamma; \Gamma' \To \Delta; \Delta'$, we need to supply two morphisms:
$(\Gamma, \Gamma'); \cdot \To (\Delta, \Delta'); \cdot$ and
$\cdot; (\Gamma, \Gamma') \To \cdot; (\Delta, \Delta')$ for substituting $\tbox$ and
$\tunbox$. We define
\begin{align*}
  \widehat\sigma_1 &: \cdot ; (\Gamma, \Gamma') \To (\Delta, \Delta') \\
  \widehat\sigma_1 &:= \sigma_1,\sigma_2
  \tag{concateneating $\sigma_1$ and $\sigma_2$}\\
  \widehat\sigma_2 &: (\Gamma, \Gamma'); \cdot \To \cdot \\
  \widehat\sigma_2 &:= () \\
  \widehat\sigma &: (\Gamma, \Gamma'); \cdot \To (\Delta, \Delta'); \cdot \\
  \widehat\sigma&:= \widehat\sigma_1; \widehat\sigma_2
\end{align*}
Note that we rely on \Cref{lem:mv-prev-assum} to type check $\widehat\sigma_1$. In
particular, since $\sigma_2 : \Gamma; \Gamma' \To \Delta'$, by
\Cref{lem:mv-prev-assum}, we can move $\Gamma$ to the current context, and thus
$\sigma_2 : \cdot; (\Gamma, \Gamma') \To \Delta'$.

Similarly, we have
\begin{align*}
  \widetilde\sigma_1 &: \cdot; (\Gamma, \Gamma') \To \cdot \\
  \widetilde\sigma_1 &:= () \\
  \widetilde\sigma_2 &: \cdot; (\Gamma, \Gamma') \To (\Delta, \Delta') \\
  \widetilde\sigma_2 &:= \sigma_1, \sigma_2 \\
  \widetilde\sigma &: \cdot; (\Gamma, \Gamma') \To \cdot; (\Delta, \Delta') \\
  \widetilde\sigma &:= \widetilde\sigma_1; \widetilde\sigma_2
\end{align*}
So we can define the application of unified substitution as
\begin{align*}
  x[\sigma_1; \sigma_2] &:= \sigma_2(x) \\
  \boxit t[\sigma] &:= \boxit{(t[\widehat{\sigma}])} \\
  \unboxf t[\sigma] &:= \unboxf{(t[\widetilde\sigma])}
\end{align*}
Other cases are omitted.

Clearly, unified substitution has much clearer structure as it simply is a combination
of two local substitutions. We can define the identity substitution as follows:
\begin{align*}
  \id &: \Gamma; \Gamma' \To \Gamma; \Gamma' \\
  \id &:= \wk_1; \wk_2
\end{align*}
where $\wk_1 : \cdot ; (\Gamma, \Gamma') \To \Gamma$ and $\wk_2 : \Gamma; \Gamma' \To
\Gamma'$ are two evident local substitutions.  This definition, again, is much cleaner
and easy to understand.

At last, we define composition:
\begin{align*}
  \_\circ\_ &: \Gamma';\Delta' \To \Gamma'' ; \Delta'' \to \Gamma; \Delta \To \Gamma';
              \Delta' \to \Gamma; \Delta \To \Gamma''; \Delta'' \\
  (\sigma_1; \sigma_2) \circ \delta &:= \sigma_1[\widetilde\delta] ; \sigma_2[\delta]
\end{align*}
Note that
\begin{align*}
  \sigma_1 &: \cdot ; (\Gamma', \Delta') \To \cdot; \Gamma'' \\
  \widetilde\delta &: \cdot ; (\Gamma, \Delta) \To \cdot; (\Gamma', \Delta')
\end{align*}
and $\sigma_2[\delta]$ denote iterative application of substitution as defined above,
so the definition is well-defined.

We can verify that under these definitions of identity and composition, we have a
category of contexts in Kripke style:
\begin{lemma}
  $\id$ and $\_\circ\_$ satisfy categorical laws. 
\end{lemma}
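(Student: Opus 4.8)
The plan is to mirror the development for full-blown unified substitutions in \Cref{sec:usubst}: first establish that application of a unified substitution is functorial, and then read off the three categorical laws ($\id \circ \sigma = \sigma$, $\sigma \circ \id = \sigma$, and $(\sigma \circ \delta) \circ \rho = \sigma \circ (\delta \circ \rho)$) by comparing the two local components of each side. Because a unified substitution here is just a pair $\sigma_1; \sigma_2$ of ordinary local substitutions, all three laws ultimately reduce to equalities between local substitutions, for which the categorical structure of local substitutions is already available; the only genuinely new content is how the auxiliary constructions $\widehat{(-)}$ and $\widetilde{(-)}$ interact with composition and identity.

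Concretely, I would first prove the interaction lemmas $\widehat{\id} = \id$, $\widetilde{\id} = \id$, $\widehat{\sigma \circ \delta} = \widehat\sigma \circ \widehat\delta$, and $\widetilde{\sigma \circ \delta} = \widetilde\sigma \circ \widetilde\delta$ directly by unfolding the definitions. The identity cases are immediate once one observes that $\widehat{\id}$ and $\widetilde{\id}$ collapse to the local identity on $(\Gamma, \Gamma')$, using \Cref{lem:mv-prev-assum} to see that $\wk_1, \wk_2$ is that identity. For the composition cases one uses that $(\sigma_1; \sigma_2) \circ \delta = \sigma_1[\widetilde\delta]; \sigma_2[\delta]$ together with the collapses $\widetilde{\widehat\delta} = \widetilde\delta$ and the analogue for $\widehat{(-)}$, so that the concatenations $\sigma_1, \sigma_2$ on the two sides line up term by term. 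From these, the functoriality lemmas $t[\id] = t$ and $t[\sigma \circ \delta] = t[\sigma][\delta]$ follow by a routine induction on $t$: the variable case is the corresponding fact for local substitutions, and the $\tbox$ and $\tunbox$ cases reduce to the induction hypothesis after rewriting with the interaction lemmas (e.g. $\boxit{t'}[\sigma \circ \delta] = \boxit{t'[\widehat{\sigma \circ \delta}]} = \boxit{t'[\widehat\sigma \circ \widehat\delta]} = \boxit{t'[\widehat\sigma][\widehat\delta]} = \boxit{t'}[\sigma][\delta]$).

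Finally I would derive the three laws. For left identity, $(\wk_1; \wk_2) \circ \sigma = \wk_1[\widetilde\sigma]; \wk_2[\sigma]$; since $\wk_1$ projects out the $\Delta$-prefix of $(\Delta, \Delta')$ and the top component of $\widetilde\sigma$ is the concatenation $\sigma_1, \sigma_2$, this first component is $\sigma_1$, while $\wk_2[\sigma] = \sigma_2$ because $\wk_2$ reads off the current context and $\sigma$'s top component is $\sigma_2$. For right identity, $\sigma_1[\widetilde{\id}]; \sigma_2[\id] = \sigma_1; \sigma_2$ using $\widetilde{\id} = \id$ and $t[\id] = t$. For associativity, $(\sigma \circ \delta) \circ \rho$ has second component $\sigma_2[\delta][\rho]$ and $\sigma \circ (\delta \circ \rho)$ has second component $\sigma_2[\delta \circ \rho]$, which agree by $t[\delta \circ \rho] = t[\delta][\rho]$ applied termwise; the first components are $\sigma_1[\widetilde\delta][\widetilde\rho]$ and $\sigma_1[\widetilde{\delta \circ \rho}] = \sigma_1[\widetilde\delta \circ \widetilde\rho]$, which agree by the same functoriality fact applied with $\widetilde\delta, \widetilde\rho$. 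The step I expect to be the main obstacle is the family of interaction lemmas for $\widehat{(-)}$ and $\widetilde{(-)}$: these constructions shuffle assumptions between the ``previous'' and ``current'' contexts, so making the bookkeeping precise — in particular that $\sigma_1$ and $\sigma_2$, which are substituted by $\widetilde\delta$ and $\delta$ respectively in $\sigma \circ \delta$, glue back into a single well-typed application after flattening — is where the care is needed; everything downstream is routine unfolding on top of the category of local substitutions.
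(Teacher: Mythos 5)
Your proposal is correct and takes essentially the same route as the paper, whose entire proof is ``follow the definition'': you simply spell out the computation the paper leaves implicit, correctly isolating the auxiliary facts it needs, namely functoriality of application ($t[\id]=t$ and $t[\sigma\circ\delta]=t[\sigma][\delta]$) and the interaction/flattening lemmas for $\widehat{(-)}$ and $\widetilde{(-)}$ (in particular that applying $\delta$ and $\widetilde\delta$ agree on terms of the flattened context, which is exactly where the componentwise comparison of $\id\circ\sigma$, $\sigma\circ\id$ and the two associations relies on care). Nothing in your plan conflicts with the paper's definitions; it is a fleshed-out version of the paper's one-line argument.
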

\begin{proof}
  Follow the definition.
\end{proof}
Here we conclude the substitution calculus for idempotent S4. We leave the
generalization to dependent types, an explicit substitution calculus in the style of
\Cref{sec:st:untyped} and a normalization algorithm as future work. 

\section{Presheaf Model and Normalization by Evaluation}\labeledit{sec:presheaf}

In this section, we build a presheaf model for the modal type theory
from which we extract a normalization by evaluation (NbE) algorithm. 

\subsection{Unified Weakening}\labeledit{sec:uweaken}

To understand the Kripke structure of context stacks, we restrict unified
substitutions to only variables. If the resulting sets of unified substitutions also
form a category, then we can use this category as the base category of a presheaf
category, from which we can compute normal forms of terms. Indeed this plan works
out. First let us see how this restricted version of unified substitutions, unified
weakening, is defined:
\begin{definition}
  A unified weakening $\vgamma : \vGamma \To_w \vDelta$ is:
  \begin{align*}
    \vgamma := \varepsilon \sep q(\vgamma) \sep p(\vgamma) \sep \sext
    \vgamma n {}
    \tag{Unified weakenings}
  \end{align*}
  \begin{mathpar}
    \inferrule
    { }
    {\varepsilon: \epsilon ; \cdot \To_w \epsilon ; \cdot}

    \inferrule
    {\vgamma : \vGamma; \Gamma \To_w \vDelta; \Delta}
    {q(\vgamma) : \vGamma; (\Gamma, x : T) \To_w \vDelta; (\Delta, x : T)}

    \inferrule
    {\vgamma : \vGamma; \Gamma \To_w \vDelta; \Delta}
    {p(\vgamma) : \vGamma; (\Gamma, x : T) \To_w \vDelta; \Delta}

    \inferrule
    {\vgamma : \vGamma \To_w \vDelta \\ |\vGamma'| = n}
    {\sextt \vgamma n : \vGamma; \vGamma' \To_w \vDelta; \cdot}
  \end{mathpar}
  In the last rule, the offset $n$ is again parametric and its choice
  determines which modal logic the system corresponds to.
\end{definition}
The $q$ constructor is the identity extension of the weakening
$\vgamma$, while $p$ accommodates weakening of an individual
context. These constructors are typical in the category of
weakenings. To accommodate MoTs, we add to the category of 
weakenings the last rule which modal-transforms a context stack. 
Note that we also write $\vect \id$ for the identity unified weakening.
Since a unified weakening is also a unified substitution after all, its
truncation, truncation offset, composition, application and their properties are all inherited from
unified substitutions. Intuitively, unified weakenings is unified substitutions in which
local substitutions are all local weakenings. This intuition suggests that unified
weakenings are closed under composition:
\begin{lemma}
  If $\vgamma' : \vGamma' \To_w \vGamma''$ and $\vgamma : \vGamma \To_w \vGamma'$, then
  $\vgamma' \circ \vgamma : \vGamma \To_w \vGamma''$. 
\end{lemma}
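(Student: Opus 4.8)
The plan is to prove this by induction on the derivation of the unified weakening $\vgamma : \vGamma \To_w \vGamma'$, following exactly the same structure as the corresponding composition operation on unified substitutions. Since a unified weakening is by definition a special unified substitution, we already know from the earlier development that $\vgamma' \circ \vgamma$ is defined and is a well-typed unified substitution $\vGamma \To \vGamma''$; the only real content here is that the result is again built solely from local \emph{weakenings} (i.e. the $q$, $p$, $\varepsilon$ and $\sextt{}{n}{}$ constructors), with no genuine term substitution appearing. So the induction should track that the composition formula $(\sext \vsigma n \sigma) \circ \vdelta = \sext{(\vsigma \circ \trunc \vdelta n)}{\Ltotal \vdelta n}{(\sigma[\vdelta])}$ preserves the weakening shape.

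First I would set up the induction on $\vgamma : \vGamma \To_w \vGamma'$ (the \emph{second} argument, the one closer to the domain — or equivalently mirror the recursion pattern used to define $\circ$, which recurses on its first argument $\vsigma$, so I would actually induct on $\vgamma' : \vGamma' \To_w \vGamma''$). In the base case $\vgamma' = \varepsilon : \epsilon;\cdot \To_w \epsilon;\cdot$, composition with any $\vgamma : \vGamma \To_w \epsilon;\cdot$ is handled by the $\snil{\sigma}$ clause of $\circ$, and since the topmost local piece of $\varepsilon$ is the empty local weakening $()$, applying $\vgamma$ to it yields $()$ again, so $\varepsilon \circ \vgamma$ is itself (up to the trivial local substitution) a unified weakening. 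For the $q(\vgamma')$ and $p(\vgamma')$ cases, the composition unfolds via the $\sext{-}{1}{-}$ clause; the recursive part $\trunc\vgamma 1 \circ (\text{tail of }\vgamma')$ is a unified weakening by the induction hypothesis, and the local part is the composite of a local weakening ($q$ or $p$'s top component) with the topmost local weakening of $\vgamma$, which is a local weakening because local weakenings are closed under composition (the standard fact for the category of weakenings on ordinary contexts). For the modal case $\sextt{\vgamma'}{n}{}$, the composition produces offset $\Ltotal\vgamma n$ — which satisfies the UL constraint by the argument already given in \Cref{sec:app-usubst} — a recursive part $\vgamma' \circ \trunc\vgamma n$ that is a unified weakening by the induction hypothesis (note $\trunc\vgamma n$ is a unified weakening since truncation of a unified weakening drops whole local pieces), and a top local part $()[\vgamma] = ()$, again a local weakening.

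The main obstacle, such as it is, is bookkeeping rather than mathematical depth: I need to be careful that every auxiliary operation invoked in the definition of $\circ$ — truncation $\trunc\vgamma n$, truncation offset $\Ltotal\vgamma n$, and the application $\sigma[\vgamma]$ of a unified weakening to a local weakening — stays within the class of weakenings, and that the offsets remain valid UL values. The truncation and truncation-offset claims are direct from the definitions (truncation just drops local pieces, and sums of valid offsets are valid offsets, exactly as verified in \Cref{sec:app-usubst}). The claim that applying a unified weakening to a variable yields a variable (hence $\sigma[\vgamma]$ is a local weakening when $\sigma$ is) follows from the application clauses: the variable case does a lookup returning a variable, and there are no $\tbox$/$\tunbox$/$\lambda$/application constructors inside a local weakening. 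Once these are noted, the induction closes mechanically in all four constructor cases.
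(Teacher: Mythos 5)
Your proposal is correct and follows essentially the same route as the paper, which simply does induction on $\vgamma' : \vGamma' \To_w \vGamma''$ and analyzes $\vgamma$ (via its truncations and topmost local pieces) where needed; your auxiliary observations — truncation and truncation offset preserve weakenings, valid offsets are closed under the sums arising in composition, and applying a weakening to a variable yields a variable — are exactly the bookkeeping that makes the cases close. Only trivia to fix: in the recursive part the composition is (tail of $\vgamma'$) $\circ$ $(\trunc{\vgamma}{n})$, not the reverse, and in the $q/p$ cases the topmost offset need not be $1$ (it is inherited from the wrapped weakening), neither of which affects the argument.
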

\begin{proof}
  Induction on $\vgamma' : \vGamma' \To_w \vGamma''$ and analyze $\vgamma : \vGamma
  \To_w \vGamma'$ when necessary. 
\end{proof}
With $\vect \id$ and closure
of composition, context stacks and unified weakenings form a category, called $\WC$, (in fact a
subcategory of the one with unified substitutions as morphisms).
Moreover, unified weakenings are sufficient to encode any MoT.

\subsection{Presheaf Model}

We consider the presheaf category over \WC, namely the category with objects as 
functors $\WC^{op} \To \SetC$ and morphisms as natural transformations. We
interpret both types and contexts as presheaves (denoted as $F$ and $G$), and terms as
natural transformations.

\subsubsection{Interpreting Types and Contexts}

Given two presheaves $F$ and $G$, we form a presheaf exponential $F \hfunc G$ by
\begin{align*}
  F \hfunc G &: \WC^{op} \To \SetC \\
  (F \hfunc G)_{\;\vGamma} &:= \forall \vDelta \To_w \vGamma. F_{\;\vDelta} \to G_{\;\vDelta}
\end{align*}
Note that the right hand side is in fact a set of natural transformations. 
This construction is justified by the Yoneda lemma.

To model $\square$, given a presheaf $F$, we define
\begin{align*}
  \hsquare F &: \WC^{op} \To \SetC \\
  (\hsquare F)_{\;\vGamma} &:= F_{\vGamma; \cdot}
\end{align*}
Alternatively, $\hsquare F = F \circ (- ; \cdot)$.

We then move on to interpret types in the context stack calculus:
\begin{align*}
  \intp{\_} &: \Typ \to \WC^{op} \To \SetC \\
  \intp{B} &:= \Ne\ B \\
  \intp{\square T} &:= \hsquare \intp{T} \\
  \intp{S \func T} &:= \intp{S} \hfunc \intp{T}
\end{align*}
where $\Ne\ T\ \vGamma$ is the set of well-formed neutral terms in $\vGamma$ of type
$T$ as specified by \Cref{sec:st:syn}. Similarly, $\Nf\ T\ \vGamma$ denotes those
well-formed normal terms in $\vGamma$ of type $T$. $\Ne\ T$ is a presheaf as all
neutrals can be applied to unified weakening and so is $\Nf\ T$. We use $a$ or $b$ to
represent an element in $\intp{T}_{\;\vGamma}$. 

We interpret the contexts and context stacks iteratively:
\begin{align*}
  \intp{\_} &: \Ctx \to \WC^{op} \To \SetC \\
  \intp{\cdot} &:= \hat\top \\
  \intp{\Gamma, x : T} &:= \intp{\Gamma} \hat\times \intp{T}
\end{align*}
where $\hat\top$ and $\hat\times$ denotes the pointwise constructions for a fixed terminal object
and products in presheaves, respectively. We use $*$ to represent the only element of the singleton
set of our chosen terminal presheaf. We use $\rho$ to represent a element in
$\intp{\Gamma}_{\;\vGamma}$. 

We interpret context stacks with $\Sigma$ types as we need to know the number of
contexts to truncate from the parameter context stack:
\begin{align*}
  \intp{\_} &: \vect\Ctx \to \WC^{op} \To \SetC \\
  \intp{\epsilon; \Gamma} &:= \hat\top \hat \times \intp{\Gamma} \\
  \intp{\vGamma; \Gamma}_{\;\vDelta} &:= (\Sigma n <
                                     |\vDelta|. \intp{\vGamma}_{\trunc\vDelta n}) \times \intp{\Gamma}_{\;\vDelta}
\end{align*}
We use $\vrho$ to represent a element in $\intp{\vGamma}_{\;\vDelta}$. By definition,
$\intp{\vGamma}_{\;\vDelta}$ must be a product type, in which the second projection
is the interpretation of the topmost context. We will use this fact very frequently in
our later development.

All these interpretations specify the object part. To prove the interpretations do
form functors, we also need to supply the morphism part. Namely given
$\vgamma : \vDelta \To_w \vGamma$ and a target presheaf $F$, we should supply a
function $F_{\;\vGamma} \to F_{\;\vDelta}$.  Given $a \in F_{\;\vGamma}$, we write
$a[\vgamma] : F_{\;\vDelta}$ for the result of applying this function to $a$. We call
this function \emph{monotonicity}. We
intentionally overload square brackets here because monotonicity is just the semantic
version of applying a unified weakening. We use the same notation for all interpretations and
they can be distinguished by the target to which monotonicity applies.
\begin{lemma}
  The interpretation of types $\intp{T}$ is monotone. 
\end{lemma}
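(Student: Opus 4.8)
The plan is to define monotonicity by structural recursion on the type $T$, in lockstep with how $\intp{T}$ was defined, and then verify that the resulting assignment respects identities and composition of unified weakenings. Concretely, for $T = B$ we take $a[\vgamma]$ to be the usual renaming action on neutral terms $\Ne\ B$, which is well-defined because applying a unified weakening to a neutral term yields a neutral term (as noted right after the definition of $\intp{\_}$ on types). For $T = S \func T'$, given $a \in (\intp S \hfunc \intp{T'})_{\;\vGamma}$, i.e. a family $a_{\vDelta} : \intp S_{\;\vDelta} \to \intp{T'}_{\;\vDelta}$ natural in $\vDelta \To_w \vGamma$, and given $\vgamma : \vGamma' \To_w \vGamma$, we set $(a[\vgamma])_{\vDelta'} := a_{\vDelta'}$ for $\vDelta' \To_w \vGamma'$, using that $\vDelta' \To_w \vGamma' \To_w \vGamma$ composes to an arrow $\vDelta' \To_w \vGamma$ by closure of composition for unified weakenings (the lemma just proved). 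For $T = \square T'$, an element of $(\hsquare \intp{T'})_{\;\vGamma}$ is an element of $\intp{T'}_{\;\vGamma;\cdot}$; given $\vgamma : \vGamma' \To_w \vGamma$ we form the extended weakening $\sextt{\vgamma}{1}{} : \vGamma';\cdot \To_w \vGamma;\cdot$ (the offset $1$ is always admissible) and define $a[\vgamma] := a[\sextt\vgamma 1{}]$ using the inductively available monotonicity of $\intp{T'}$.

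Next I would check the two functor laws, $a[\vect\id] = a$ and $a[\vgamma \circ \vgamma'] = (a[\vgamma])[\vgamma']$, again by induction on $T$. The base case is the corresponding fact for renamings of neutral terms, which is inherited from unified substitutions. The function case is essentially immediate once one unfolds the definition, provided the exponential $\hfunc$ has already been set up so that $\vect\id$ acts as the identity on the index and composition of weakenings is associative with the identity — both of which follow from $\WC$ being a category. The box case reduces to: $\sextt{\vect\id}{1}{} = \vect\id$ on $\vGamma;\cdot$, and $\sextt{(\vgamma \circ \vgamma')}{1}{} = \sextt\vgamma 1{} \circ \sextt{\vgamma'}1{}$; these are small computations with the definitions of composition, truncation, and truncation offset from \Cref{sec:usubst} (using $\Ltotal{\vgamma'}1 = 1$ when the leading offset of $\vgamma'$ is $1$), after which one applies the inductive hypothesis for $\intp{T'}$.

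The main obstacle I anticipate is the $\square$ case: one must be careful that the offset chosen when lifting $\vgamma$ to $\vGamma';\cdot \To_w \vGamma;\cdot$ is compatible with the \tunbox level constraint of whatever system is fixed, and that the two functoriality equations for unified weakenings behave correctly under this lifting. Since $1$ is always an admissible offset in every system under consideration, the lift $\sextt\vgamma 1{}$ is always legal, and the needed equations $\sextt{\vect\id}1{} = \vect\id$ and $\sextt\vgamma 1{} \circ \sextt{\vgamma'}1{} = \sextt{(\vgamma\circ\vgamma')}1{}$ are exactly the "offset-$1$" instances of the truncation/composition lemmas (\Cref{lem:L-comp}, \Cref{lem:trunc-comp}), so no genuinely new difficulty arises — it is just bookkeeping. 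Everything else is routine structural induction, so I would present the definition and then remark that the functor laws follow by induction on $T$, spelling out only the $\square$ case.
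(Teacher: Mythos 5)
Your construction matches the paper's proof case for case: the weakening action on neutral terms at the base type, precomposition with $\vgamma$ for the exponential case, and lifting $\vgamma$ to $\sextt\vgamma 1$ for the $\square$ case, all by induction on $T$. The functor-law verification you sketch is not actually part of this lemma (the paper establishes it separately in the functoriality lemma for $\intp{T}$), but your argument for it, including the offset-$1$ equations for identity and composition, agrees with what the paper does there.
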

\begin{proof}
  Given $\vgamma : \vDelta \To_w \vGamma$, the theorem requires to give a function
  $\intp{T}_{\;\vGamma} \to \intp{T}_{\;\vDelta}$. We do induction on $T$.
  \begin{itemize}[label=Case]
  \item $T = B$, then the goal becomes $\Ne\ B\ \vGamma \to \Ne\ B\ \vDelta$, which is
    just application of unified weakening on neutral terms.
  \item $T = \square T'$, then the goal becomes $\intp{T'}_{\vGamma; \cdot} \to
    \intp{T'}_{\vDelta; \cdot}$. This function is given by
    \begin{align*}
      (a : \intp{T'}_{\vGamma; \cdot}) \mapsto a[\sextt\vgamma 1]
    \end{align*}
    where $\sextt\vgamma 1 : \vDelta; \cdot \To_w \vGamma; \cdot$ by definition of
    unified weakening. 
  \item $T = S \func T'$, then the goal becomes $(\intp{S} \hfunc \intp{T'})_{\;\vGamma}
    \to (\intp{S} \hfunc \intp{T'})_{\;\vDelta}$, or after more expansion:
    \begin{align*}
      (\forall \vDelta' \To_w \vGamma. \intp{S}_{\;\vDelta'} \to \intp{T'}_{\;\vDelta'})
      \to (\forall \vDelta' \To_w \vDelta. \intp{S}_{\;\vDelta'} \to \intp{T'}_{\;\vDelta'})
    \end{align*}
    We give this function as
    \begin{align*}
      (f, \vgamma ': \vDelta' \To_w \vDelta, a : \intp{S}_{\;\vDelta'})
      \mapsto f(\vgamma \circ \vgamma', a)
    \end{align*}
  \end{itemize}
\end{proof}
Note that monotonicity of $\intp{\square T}$ is ambiguous, because $\intp{\square
  T}_{\;\vGamma} = \intp{T}_{\vGamma; \cdot}$. For example, given $a : \intp{\square
  T}_{\;\vGamma}$ and $\vgamma : \vDelta \To_w \vGamma$, both $a[\vgamma]$ and
$a[\sextt\vgamma 1]$ are valid and equal. This ambiguity unfortunately carries through
the whole section but hopefully it does not create more problems than requiring a
small amount of extra attention.

\begin{lemma}
  The interpretation of context $\intp{\Gamma}$ is monotone.
\end{lemma}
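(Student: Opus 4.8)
The plan is to run the same kind of structural induction as in the previous lemma, this time on the context $\Gamma$, reducing monotonicity of $\intp{\Gamma}$ entirely to monotonicity of its component types, which is exactly the content of the lemma just proved. Concretely, given $\vgamma : \vDelta \To_w \vGamma$, I must produce a function $\intp{\Gamma}_{\;\vGamma} \to \intp{\Gamma}_{\;\vDelta}$, written $\rho \mapsto \rho[\vgamma]$, and then verify that it sends $\vect\id$ to the identity and respects composition.

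For the base case $\Gamma = \cdot$ we have $\intp{\cdot} = \hat\top$, so $\intp{\cdot}_{\;\vGamma}$ is a one-element set for every $\vGamma$ and there is a unique map $* \mapsto *$, which is trivially functorial. For the step case $\Gamma = \Gamma', x : T$ we have, by definition,
\begin{align*}
  \intp{\Gamma', x : T}_{\;\vGamma} = \intp{\Gamma'}_{\;\vGamma} \times \intp{T}_{\;\vGamma},
\end{align*}
and we set $(\rho, a)[\vgamma] := (\rho[\vgamma], a[\vgamma])$, where on the first component $\rho[\vgamma]$ is supplied by the induction hypothesis on $\Gamma'$ and on the second $a[\vgamma]$ is supplied by monotonicity of $\intp{T}$. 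In other words, the interpretation of a context is a finite pointwise product of monotone presheaves, hence monotone.

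Functoriality then follows componentwise: preservation of the identity weakening reduces to $\rho[\vect\id] = \rho$ (from the induction hypothesis) together with $a[\vect\id] = a$ (from the previous lemma), and preservation of composition reduces to $\rho[\vgamma' \circ \vgamma] = \rho[\vgamma][\vgamma']$ and $a[\vgamma' \circ \vgamma] = a[\vgamma][\vgamma']$, using that unified weakenings are closed under composition so that $\vgamma' \circ \vgamma$ is again a unified weakening. I do not expect any genuine obstacle here; the only place that needs a moment's attention is keeping straight which instance of the overloaded bracket notation $[\vgamma]$ is meant on each component — precisely the ambiguity flagged in the remark following the previous lemma — but this is bookkeeping rather than mathematics.
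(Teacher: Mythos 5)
Your proof is correct and follows the same route as the paper, which simply observes that monotonicity of $\intp{\Gamma}$ is inherited pointwise from that of $\hat\top$, $\hat\times$ and $\intp{T}$; your structural induction on $\Gamma$ just spells this out. The functoriality checks you include are not needed for this statement (they belong to the later lemma that $\intp{\Gamma}$ is a functor), but they are harmless.
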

\begin{proof}
  The monotonicity follows immediately from that of $\hat\top$, of $\hat\times$ and of $\intp{T}$.
\end{proof}

\begin{lemma}
  The interpretation of context stacks $\intp{\vGamma}$ is monotone.
\end{lemma}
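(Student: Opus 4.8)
The plan is to proceed by induction on the structure of the context stack being interpreted, mirroring the recursive definition of $\intp{\vGamma}$. Concretely, for a unified weakening $\vgamma : \vDelta' \To_w \vDelta$ I must exhibit a map $\intp{\vGamma}_{\;\vDelta} \to \intp{\vGamma}_{\;\vDelta'}$, written $\vrho \mapsto \vrho[\vgamma]$. For the base case $\vGamma = \epsilon; \Gamma$ we have $\intp{\epsilon;\Gamma}_{\;\vDelta} = \hat\top_{\;\vDelta} \hat\times \intp{\Gamma}_{\;\vDelta}$, so the map is simply $(*, \rho) \mapsto (*, \rho[\vgamma])$, invoking monotonicity of $\hat\top$ (trivial) and of $\intp{\Gamma}$ (the preceding lemma).

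For the inductive step $\vGamma = \vGamma_0; \Gamma$, an element of $\intp{\vGamma_0;\Gamma}_{\;\vDelta}$ is a pair $((n, \vrho_0), \rho)$ with $n < |\vDelta|$, $\vrho_0 \in \intp{\vGamma_0}_{\trunc\vDelta n}$ and $\rho \in \intp{\Gamma}_{\;\vDelta}$. The essential move is to transport the $\Sigma$-component along the \emph{truncated} weakening: set $m := \Ltotal{\vgamma}{n}$; by the typing of truncation (inherited from unified substitutions) we have $\trunc{\vgamma}{n} : \trunc{\vDelta'}{m} \To_w \trunc{\vDelta}{n}$, so the induction hypothesis applied to $\vGamma_0$ yields $\vrho_0[\trunc{\vgamma}{n}] \in \intp{\vGamma_0}_{\trunc{\vDelta'}{m}}$, while monotonicity of $\intp{\Gamma}$ gives $\rho[\vgamma] \in \intp{\Gamma}_{\;\vDelta'}$. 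The output is $\bigl((m, \vrho_0[\trunc{\vgamma}{n}]), \rho[\vgamma]\bigr)$. The side condition $m < |\vDelta'|$ is not an additional obligation: since $n < |\vDelta|$, the truncation $\trunc{\vgamma}{n}$ is a genuine morphism of nonempty context stacks, so its domain $\trunc{\vDelta'}{m}$ is nonempty, i.e. $\Ltotal{\vgamma}{n} < |\vDelta'|$.

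The construction itself is thus direct; the remaining work, which is what ``monotone'' ultimately demands, is to check that this assignment is functorial. Respecting identities is immediate from $\trunc{\vect{\id}}{n} = \vect{\id}$ and $\Ltotal{\vect{\id}}{n} = n$. Respecting composition is the step I expect to be the main obstacle: one must verify that the offset $m$ and the truncated weakening chosen in the $\Sigma$-component for a composite $\vgamma_1 \circ \vgamma_2$ agree with those obtained by applying the two weakenings in sequence, and this is precisely where \Cref{lem:L-comp} and \Cref{lem:trunc-comp} of \Cref{sec:usubst-props} enter, together with associativity of $+$ for the offset bookkeeping. Given those distributivity lemmas the verification is routine.
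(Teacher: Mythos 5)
Your construction matches the paper's proof exactly: induction on the stack, sending $((n,\vrho_0),\rho)$ to $((\Ltotal{\vgamma}{n}, \vrho_0[\trunc{\vgamma}{n}]), \rho[\vgamma])$ via the truncation law for unified weakenings, the induction hypothesis, and monotonicity of $\intp{\Gamma}$. The functoriality checks you sketch at the end are not part of this lemma in the paper (it only asks for the morphism part); they are proved separately in the subsequent functoriality lemma, using exactly the distributivity properties you cite.
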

\begin{proof}
  We proceed by induction on $\vGamma$. The first case is immediate. We only
  consider the case where $\vGamma = \vGamma'; \Gamma$. Assuming
  $\vgamma : \vDelta \To_w \vDelta'$, we should define
  $\intp{\vGamma'; \Gamma}_{\;\vDelta'} \to \intp{\vGamma'; \Gamma}_{\;\vDelta}$. We
  further expand the goal:
  \begin{align*}
    (\Sigma n < |\vDelta'|. \intp{\vGamma'}_{\trunc{\vDelta'}n}) \times \intp{\Gamma}_{\;\vDelta'}
    \to (\Sigma n < |\vDelta|. \intp{\vGamma'}_{\trunc\vDelta n}) \times \intp{\Gamma}_{\;\vDelta}
  \end{align*}
  It is immediate to give $\intp{\Gamma}_{\;\vDelta}$ from $\intp{\Gamma}_{\;\vDelta'}$ due to
  the monotonicity of $\intp{\Gamma}$. We then shall show
  $\Sigma n < |\vDelta|. \intp{\vGamma'}_{\trunc\vDelta n}$ from
  $\Sigma n < |\vDelta'|. \intp{\vGamma'}_{\trunc{\vDelta'} n}$.
  \begin{align*}
    &n < |\vDelta'| \tag{assumption} \\
    &\intp{\vGamma'}_{\trunc{\vDelta'}n} \tag{assumption} \\
    &\trunc\vgamma n : \trunc\vDelta{\Ltotal\vgamma n} \To_w \trunc{\vDelta'} n \text{ and } \Ltotal\vgamma n < |\vDelta|
      \tag{truncation law for unified weakening} \\
    &\intp{\vGamma'}_{\trunc\vDelta{\Ltotal\vgamma n}}
      \byIH
  \end{align*}
  We let the first projection of the $\Sigma$ in the conclusion to be $\Ltotal\vgamma n$. To summarize, the proof
  corresponds to the following function
  \begin{align*}
    ((n, \vrho : \intp{\vGamma'}_{\trunc{\vDelta'} n}), \rho : \intp{\Gamma}_{\;\vDelta'})
    \mapsto ((\Ltotal\vgamma n, \vrho[\trunc\vgamma n]), \rho[\vgamma])
  \end{align*}

  This concludes the
  monotonicity of $\intp{\vGamma}$.
\end{proof}

\subsubsection{Functoriality of Interpretations}

These monotonicity proofs constructively return elements in target
sets following given unified weakenings and can be extracted as programs. For rigorousness, we can also verify the
functorial laws so that we can confirm the interpretations are actually all
functors. In particular, given the object and morphism parts specified above, we only
need to show the identity and composition laws. 
\begin{lemma}
  $\intp T$ is a functor. 
\end{lemma}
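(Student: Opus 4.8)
The plan is to realize $\intp T$ as a functor by taking the monotonicity maps constructed in the previous lemma as its action on morphisms, and then to verify the two functorial laws: $a[\vect\id] = a$ and $a[\vgamma' \circ \vgamma] = (a[\vgamma'])[\vgamma]$, for composable unified weakenings $\vgamma : \vGamma \To_w \vGamma'$ and $\vgamma' : \vGamma' \To_w \vGamma''$ and any $a \in \intp T_{\;\vGamma''}$. I proceed by induction on $T$, unfolding in each case the corresponding clause of the monotonicity construction and reducing the desired equation to facts already in hand, chiefly the categorical laws (identity and associativity of composition) for unified substitutions, which unified weakenings inherit. In the base case $T = B$ we have $\intp B = \Ne\ B$ and $a[\vgamma]$ is by definition the action of the weakening $\vgamma$ on the neutral term $a$; the two laws are then exactly the restriction to unified weakenings of the substitution equations $t[\vect\id] = t$ and $t[\vsigma \circ \vdelta] = t[\vsigma][\vdelta]$ proven earlier, so there is nothing to do.

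For $T = \square T'$, recall that $a \in \intp{T'}_{\;\vGamma''; \cdot}$ and $a[\vgamma] := a[\sextt\vgamma 1]$. The identity law follows from the structural identity $\sextt{\vect\id}{1} = \vect\id$ (immediate from the definition of the identity unified weakening: the topmost context is empty, so there are no variables to carry over) together with the induction hypothesis applied to $T'$. The composition law follows from $\sextt{(\vgamma' \circ \vgamma)}{1} = \sextt{\vgamma'}1 \circ \sextt{\vgamma}1$, a direct computation from the definitions of composition, truncation, and truncation offset at offset $1$, followed again by the induction hypothesis. This case is also where the harmless ambiguity of $\intp{\square T}$ noted above surfaces: writing $a[\vgamma]$ or $a[\sextt\vgamma 1]$ yields the same element, so the functorial laws hold under either reading.

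For $T = S \func T'$, an element $f \in (\intp S \hfunc \intp{T'})_{\;\vGamma''}$ is a natural family sending $(\vdelta : \vDelta \To_w \vGamma'', b \in \intp S_{\;\vDelta})$ to $f(\vdelta, b)$, and by construction $f[\vgamma](\vdelta, b) = f(\vgamma \circ \vdelta, b)$. Since two such elements are equal exactly when they agree on all arguments, $f[\vect\id] = f$ reduces to the left identity law $\vect\id \circ \vdelta = \vdelta$, and $f[\vgamma' \circ \vgamma] = (f[\vgamma'])[\vgamma]$ reduces to associativity $(\vgamma' \circ \vgamma) \circ \vdelta = \vgamma' \circ (\vgamma \circ \vdelta)$; both are instances of the categorical laws for unified weakenings. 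Note that the induction hypotheses on $S$ and $T'$ are not needed for these equations themselves — functoriality of a presheaf exponential is purely formal — although one does use that $\intp S$ and $\intp{T'}$ are presheaves in order to form $\intp S \hfunc \intp{T'}$ at all.

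I do not expect a genuine obstacle: the statement is essentially bookkeeping. The only points requiring care are the two structural identities for the offset-extension operator $\sextt{\cdot}{1}$ in the $\square$ case, which must be discharged by unwinding the definitions of truncation and truncation offset, and the pervasive overloading of the bracket notation $[\cdot]$, which one must keep straight (semantic monotonicity at a given interpretation, versus syntactic application of a unified weakening on neutrals).
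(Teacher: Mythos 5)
Your proposal is correct and follows essentially the same route as the paper: induction on $T$, with the base case discharged by the action of weakenings on neutral terms, the $\square$ case by the induction hypothesis, and the function case reduced to the identity and associativity laws of $\WC$. The only difference is that you spell out the structural identities $\sextt{\vect\id}{1} = \vect\id$ and $\sextt{(\vgamma' \circ \vgamma)}{1} = \sextt{\vgamma'}{1} \circ \sextt{\vgamma}{1}$ in the $\square$ case, which the paper leaves implicit via its earlier remark on the ambiguity of monotonicity for $\intp{\square T}$.
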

\begin{proof}
  We first prove that given $a : \intp{T}_{\;\vGamma}$, $a[\vect \id]$ is the
  identity function.
  \begin{itemize}[label=Case]
  \item $T= B$, then it is the same as applying identity weakening to a neutral term,
    so we surely get the same neutral term back.
  \item $T = \square T'$, then $a : \intp{T'}_{\vGamma; \cdot}$ and we can immediately
    resort to the IH.
  \item $T = S \func T'$, then we should show
    \begin{align*}
      (f, \vgamma: \vDelta \To_w \vGamma, a : \intp{S}_{\;\vDelta})
      \mapsto f(\vgamma \circ \vect \id, a)
    \end{align*}
    By categorical laws of $\WC$, we know $\vgamma \circ \vect \id = \vgamma$. Thus
    the definition just becomes an expansion of $f$.
  \end{itemize}

  We then show the composition law. Given $\vgamma : \vGamma' \To_w \vGamma$ and
  $\vgamma' : \vGamma'' \To_w \vGamma'$, we should show that
  $a[\vgamma \circ \vgamma'] = a[\vgamma][\vgamma']$. Again, when
  $T = B$ it reduces to a property for neutral terms and $T = \square T'$ we can apply
  IH immediately. When $T = S \func T'$, then both sides become
  \begin{align*}
    (f, \vgamma'': \vDelta \To_w \vGamma, a : \intp{S}_{\;\vDelta})
    \mapsto f(\vgamma'' \circ (\vgamma' \circ \vgamma), a)
  \end{align*}
  and 
  \begin{align*}
    (f, \vgamma'': \vDelta \To_w \vGamma, a : \intp{S}_{\;\vDelta})
    \mapsto f((\vgamma'' \circ \vgamma') \circ \vgamma, a)
  \end{align*}
  which just reduce to associativity of composition of unified weakenings. 
\end{proof}

\begin{lemma}
  $\intp \Gamma$ is a functor. 
\end{lemma}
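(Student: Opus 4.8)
The plan is to follow the same pattern as the preceding monotonicity lemma and argue by induction on $\Gamma$, reducing the two functorial laws for $\intp\Gamma$ to those of the pointwise terminal presheaf $\hat\top$, of pointwise products $\hat\times$, and of $\intp T$, which we have just shown to be a functor. Since the object and morphism parts of $\intp\Gamma$ were already fixed in the monotonicity proof, the only thing left is to verify the identity and composition laws.

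For the base case $\Gamma = \cdot$, we have $\intp\cdot = \hat\top$, the chosen fixed terminal presheaf; its action on any unified weakening is the unique map into the singleton set, and both laws hold trivially because every element equals $*$. For the inductive case $\Gamma = \Gamma', x : T$, recall $\intp\Gamma = \intp{\Gamma'} \hat\times \intp T$ and that the monotonicity proof makes a unified weakening $\vgamma$ act componentwise via $(\rho, a) \mapsto (\rho[\vgamma], a[\vgamma])$. The identity law then unfolds to $(\rho, a)[\vect\id] = (\rho[\vect\id], a[\vect\id]) = (\rho, a)$, using the induction hypothesis on the first component and functoriality of $\intp T$ on the second. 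The composition law, for $\vgamma : \vGamma' \To_w \vGamma$ and $\vgamma' : \vGamma'' \To_w \vGamma'$, unfolds to $(\rho, a)[\vgamma \circ \vgamma'] = (\rho[\vgamma][\vgamma'], a[\vgamma][\vgamma']) = (\rho, a)[\vgamma][\vgamma']$, again appealing to the IH on the first coordinate and to $\intp T$ being a functor on the second.

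I do not anticipate a genuine obstacle: the entire content is that a pointwise product of two functors is again a functor, which is standard, combined with the fact that the two constituents $\hat\top$ and $\intp T$ are already known to be functorial. The only real point of care is the usual bookkeeping of which bracket notation denotes which interpretation's monotonicity — exactly the ambiguity flagged in the remark following the monotonicity of $\intp{\square T}$.
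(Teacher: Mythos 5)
Your proof is correct and takes essentially the same route as the paper, which simply notes that the functorial laws follow from those of $\hat\top$, $\hat\times$ and $\intp{T}$; your version just spells out the induction on $\Gamma$ and the componentwise verification explicitly.
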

\begin{proof}
  The functorial laws follow immediately from those of $\hat \top$, $\hat\times$ and
  $\intp{T}$. 
\end{proof}

\begin{lemma}
  $\intp{\vGamma}$ is a functor. 
\end{lemma}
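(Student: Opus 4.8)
The plan is to proceed by induction on $\vGamma$, treating the object part as already given and the morphism part as the monotonicity established just above, so that only the identity and composition laws remain to be checked.

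For the base case $\vGamma = \epsilon; \Gamma$, since $\intp{\epsilon; \Gamma} = \hat\top \hat\times \intp{\Gamma}$, both functorial laws will follow immediately from the corresponding laws for $\hat\top$, for $\hat\times$, and for $\intp{\Gamma}$ (the latter already shown to be a functor).

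For the inductive step $\vGamma = \vGamma'; \Gamma$, I would unfold the monotonicity action along $\vgamma : \vDelta \To_w \vDelta'$, which sends $((n, \vrho), \rho)$ to $((\Ltotal\vgamma n, \vrho[\trunc\vgamma n]), \rho[\vgamma])$. To prove the identity law $\vrho[\vect\id] = \vrho$ I would reduce it to: $\Ltotal{\vect\id}{n} = n$ (which holds because $\vect\id$ carries offset $1$ in every position), $\trunc{\vect\id}{n} = \vect\id$ (immediate from the definition of truncation), the inductive hypothesis on the inner $\intp{\vGamma'}$-component, and functoriality of $\intp{\Gamma}$ on the $\rho$-component. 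To prove the composition law $\vrho[\vgamma \circ \vgamma'] = \vrho[\vgamma][\vgamma']$ for $\vgamma : \vGamma' \To_w \vGamma$ and $\vgamma' : \vGamma'' \To_w \vGamma'$, I would again expand the monotonicity action on both sides: the $\intp{\Gamma}$-component reduces to functoriality of $\intp{\Gamma}$; the first projection of the $\Sigma$-component becomes the identity $\Ltotal{\vgamma \circ \vgamma'}{n} = \Ltotal{\vgamma'}{\Ltotal{\vgamma}{n}}$ of \Cref{lem:L-comp} (inherited by unified weakenings); and the second projection becomes $\vrho'[\trunc{(\vgamma \circ \vgamma')}{n}] = \vrho'[\trunc{\vgamma}{n}][\trunc{\vgamma'}{\Ltotal{\vgamma}{n}}]$, which I would obtain by rewriting with \Cref{lem:trunc-comp} (so that $\trunc{(\vgamma \circ \vgamma')}{n} = \trunc{\vgamma}{n} \circ \trunc{\vgamma'}{\Ltotal{\vgamma}{n}}$) and then invoking the inductive hypothesis for $\intp{\vGamma'}$.

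The main obstacle I expect is not mathematical but bookkeeping: at every use of a truncation or truncation-offset law one must ensure the index being truncated by is strictly below the length of the relevant context stack (for instance $n < |\vDelta'|$, and then $\Ltotal{\vgamma}{n} < |\vDelta|$ by the truncation law for unified weakenings), and one must keep straight which truncation offset feeds into which truncation inside the nested composition. All the genuine content is already packaged in the distributivity lemmas of \Cref{sec:usubst-props}; the remaining effort is purely in aligning the indices and the $\Sigma$-components.
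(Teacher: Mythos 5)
Your proof is correct and follows essentially the same route as the paper: induction on $\vGamma$, with the identity law reduced to $\Ltotal{\vect\id}{n} = n$, $\trunc{\vect\id}{n} = \vect\id$, the IH, and functoriality of $\intp{\Gamma}$, and the composition law reduced to \Cref{lem:L-comp} and \Cref{lem:trunc-comp} (as inherited by unified weakenings) plus the IH. The paper's proof is just this argument, stated for the case $\vGamma = \vGamma'; \Gamma$ with the base case left implicit.
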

\begin{proof}
  We only show the case where $\vGamma = \vGamma'; \Gamma$. First we consider the
  identity law. We should show that given $\vrho : \intp{\vGamma; \vGamma}_{\;\vDelta}$,
  $\vrho[\vect \id]$ is still $\vrho$. Expanding the definition, it
  means the following function should be identity
  \begin{align*}
    ((n, \vrho' : \intp{\vGamma'}_{\trunc \vDelta n}), \rho : \intp{\Gamma}_{\;\vDelta})
    \mapsto
    ((\Ltotal{\vect \id} n, \vrho'[\trunc{\vect \id} n]), \rho[\vect \id])
  \end{align*}
  Since $\Ltotal{\vect \id} n = n$ and $\trunc{\vect \id}n = \vect \id$, we have
  $\vrho'[\trunc{\vect \id}n] = \vrho'$ by IH. $\rho = \rho[\vect \id]$ by the
  functoriality of $\intp{\Gamma}$.
  
  We then show the composition law. Given $\vgamma : \vDelta' \To_w \vDelta$ and
  $\vgamma' : \vDelta'' \To_w \vDelta'$, we should show
  $\vrho[\vgamma][\vgamma'] = \vrho[\vgamma \circ
  \vgamma']$. Expanding the definitions and let $((n, \vrho'), \rho) := \vrho$, we have
  \begin{align*}
    \vrho[\vgamma][\vgamma']
    = ((\Ltotal{\vgamma'}{\Ltotal\vgamma n}, \vrho'[\trunc \vgamma n][\trunc{\vgamma'}{\Ltotal \vgamma n}]), \rho[\vgamma][\vgamma'])
  \end{align*}
  and
  \begin{align*}
    \vrho[\vgamma \circ \vgamma']
    = ((\Ltotal{\vgamma \circ \vgamma'}n,\vrho'[\trunc{(\vgamma \circ \vgamma')}n]),
    \rho[\vgamma \circ \vgamma'])
  \end{align*}
  We know that $\Ltotal{\vgamma \circ \vgamma'}n = \Ltotal{\vgamma'}{\Ltotal\vgamma n}$ and
  $\trunc{(\vgamma \circ \vgamma')}n = (\trunc\vgamma n) \circ (\trunc{\vgamma'}{\Ltotal \vgamma n})$ are
  basic properties for unified weakenings. These allow us to prove
  \begin{align*}
    \vrho'[\trunc \vgamma n][\trunc{\vgamma'}{\Ltotal \vgamma n}] = \vrho'[\trunc{(\vgamma \circ \vgamma')}n]
  \end{align*}
  by IH.

  Lastly, $\rho[\vgamma][\vgamma'] = \rho[\vgamma \circ \vgamma']$ holds by
  functoriality of $\intp{\Gamma}$.
\end{proof}

\subsubsection{Normalization}

Next, we shall interpret the terms as natural transformation. Just like the syntactic
case, we need to define $L$ and truncation for the interpretation of context
stacks:
\begin{align*}
  \Ltotal{\_}{\_} &: \intp{\vGamma}_{\;\vDelta} \to \N \to \N \tag{Truncation Offset} \\
  \Ltotal \vrho 0 &:= 0 \\
  \Ltotal {((n, \vrho),\rho)} {1 + m}&:= n + \Ltotal \vrho m  \\[5pt]
  \trunc {\_} {\_} &: (\vrho : \intp{\vGamma}_{\;\vDelta})\; (n:\N) \to
                     \intp{\trunc\vGamma n}_{\trunc \vDelta {\Ltotal \vrho  n}}
  \tag{Truncation}\\
  \trunc \vrho 0 &:= \vrho \\
  \trunc {((n, \vrho), \rho)}{1 + m} &:= \trunc \vrho m
\end{align*}
In these two functions, we implicitly assume the truncation size, $n$, is strictly smaller
than $|\vGamma|$ for the truncation to make sense. With these two functions defined, we
can define the interpretation of terms as natural transformations:
\begin{align*}
  \intp{\_} &:  \mtyping t T \to \intp{\vGamma} \To \intp{T} \\
  \intp{t}_{\;\vDelta} &: \intp{\vGamma}_{\;\vDelta} \to \intp{T}_{\;\vDelta}
  \tag{expanded form}\\
  \intp{x}_{\;\vDelta} ((\_, \rho))
            &:= \rho(x) \tag{where $\rho(x)$ performs projections properly for lookup} \\
  \intp{\boxit t}_{\;\vDelta}(\vrho)
            &:= \intp{t}_{\vDelta; \cdot} ((1, \vrho), *) \tag{$*$ is \emph{the}
              element of the chosen singleton set} \\
  \intp{\unbox n t}_{\;\vDelta}(\vrho)
            &:= \intp{t}_{\;\trunc \vDelta m}(\trunc \vrho n)[\sextt{\vect \id} m]
  \tag{where $m := \Ltotal \vrho n$ and $\sextt{\vect \id} m: \vDelta \To_w \trunc \vDelta m ; \cdot$}\\
  \intp{\lambda x : S. t}_{\;\vDelta}(\vrho)
            &:= (\vgamma : \vDelta' \To_w \vDelta)(a :
              \intp{S}_{\;\vDelta'}) \mapsto
              \intp{t}_{\;\vDelta'} ((\pi, (\rho, a)))
              \tag{where $(\pi, \rho) := \vrho[\vgamma]$}  \\
  \intp{t\ s}_{\;\vDelta}(\vrho) &:= \intp{t}_{\;\vDelta} (\vrho, \vect{\id}_{\;\vDelta}~,~
                                     \intp{s}_{\;\vDelta}(\vrho))    
\end{align*}

We need a reification function to read from values in natural transformations back to
syntactic terms. This is done by defining two natural transformations: reflect goes
from neutral terms to natural transformations, and reify goes from natural transformations to normal
terms.
\begin{align*}
  \downarrow^T &: \intp{T} \To \Nf\ T \\
  \downarrow^B_{\;\vGamma}(a) &:= a \\
  \downarrow^{\square T}_{\;\vGamma}(a : \intp{T}_{\vGamma; \cdot})
               &:= \boxit \downarrow^T_{\vGamma; \cdot}(a) \\
  \downarrow^{S \func T}_{\vGamma; \Gamma}(a)
               &:= \lambda x. \downarrow^T_{\vGamma; (\Gamma, x : S)}(a~(p(\vect \id)~,~{\uparrow^S_{\vGamma; (\Gamma, x : S)}\!(x)}))
                 \tag{where $p(\vect \id) : \vGamma; \Gamma, x{:} S \To_w \vGamma; \Gamma$}\\[5pt]
  \uparrow^T &: \Ne\ T \To \intp{T} \\
  \uparrow^B_{\;\vGamma}(v) &:= v \\
  \uparrow^{\square T}_{\;\vGamma}(v)
               &:= \uparrow^T_{\vGamma; \cdot}({\unbox 1 v}) \\
  \uparrow^{S \func T}_{\vGamma; \Gamma}(v)
               &:= (\vgamma : \vDelta \To_w \vGamma;\Gamma)(a)
                 \mapsto \uparrow^T_{\vDelta}(v[\vgamma]\ \downarrow^S_{\vDelta}(a))
\end{align*}

Finally, we define the identity environment $\intp{\vGamma}_{\;\vGamma}$ as 
\begin{align*}
  \uparrow &: (\vGamma : \vect\Ctx) \to \intp{\vGamma}_{\;\vGamma} \\
  \uparrow^{\epsilon; \cdot} &:= (*, *) \\
  \uparrow^{\vGamma; \cdot} &:= ((1, \uparrow^{\vGamma}), *) \\
  \uparrow^{\vGamma; \Gamma, x : T} &:= (\pi, (\rho, \uparrow^{T}_{\vGamma; \Gamma, x : T}\!\!(x)))\hfill
      \tag{(where $(\pi, \rho) := \uparrow^{\vGamma; \Gamma} [p(\vect \id)]$)}
\end{align*}

This allows us to finally write down the normalization algorithm:
\begin{definition}
  A normalization by evaluation algorithm given $\mtyping t T$ is
  \begin{align*}
    \nbe^T_{\;\vGamma}(t) &:= \downarrow^T_{\;\vGamma} (\intp{t}_{\;\vGamma}(\uparrow^{\vGamma}))
  \end{align*}
\end{definition}

\subsection{Properties of Interpretations}

In this section we consider properties used for correctness proofs below.

\subsubsection{Basic Properties}

Lemmas in this section are semantic counterparts of those in \Cref{sec:usubst-props}.
Let $\vrho \in \intp{\vGamma}_{\;\vDelta}$ and $\vgamma : \vDelta' \To_w \vDelta$:

\begin{lemma}[Distributivity of Addition over Truncation Offset] \labeledit{lem:L-add-sem}
  If $n + m < |\vGamma|$, then $\Ltotal \vrho {(n + m)} = \Ltotal \vrho n +
  \Ltotal{\trunc \vrho n} m$. 
\end{lemma}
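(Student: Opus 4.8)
The plan is to mirror the proof of the syntactic counterpart, \Cref{lem:L-add}, and proceed by induction on $n$, unfolding the two defining equations for $\Ltotal{\_}{\_}$ on the interpretation of context stacks. The key structural observation is that whenever $n \geq 1$ we have $n + m \geq 1$, so $|\vGamma| > n + m \geq 1$ forces $\vGamma$ to be of the form $\vGamma'; \Gamma$; hence $\vrho \in \intp{\vGamma}_{\;\vDelta}$ is (by definition of the interpretation of a non-singleton stack) a pair $((k, \vrho'), \rho)$ with $\vrho' \in \intp{\vGamma'}_{\trunc \vDelta k}$. This is exactly the shape required for the recursive equations $\Ltotal{((k,\vrho'),\rho)}{1+j} = k + \Ltotal{\vrho'}{j}$ and $\trunc{((k,\vrho'),\rho)}{1+j} = \trunc{\vrho'}{j}$ to fire.

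In the base case $n = 0$, both sides evaluate directly: the left-hand side is $\Ltotal \vrho m$, and the right-hand side is $\Ltotal \vrho 0 + \Ltotal{\trunc \vrho 0} m = 0 + \Ltotal \vrho m$, using $\trunc \vrho 0 = \vrho$ and $\Ltotal \vrho 0 = 0$. In the step case $n = 1 + n'$, writing $\vrho = ((k, \vrho'), \rho)$, I would compute
\begin{align*}
  \Ltotal{((k,\vrho'),\rho)}{(1 + n') + m}
  &= k + \Ltotal{\vrho'}{n' + m} \\
  &= k + \Ltotal{\vrho'}{n'} + \Ltotal{\trunc{\vrho'}{n'}}{m} \byIH \\
  &= \Ltotal{((k,\vrho'),\rho)}{1 + n'} + \Ltotal{\trunc{((k,\vrho'),\rho)}{1 + n'}}{m},
\end{align*}
which is the desired equality. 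The induction hypothesis applies at $\vrho'$ because $n' + m < |\vGamma'| = |\vGamma| - 1$, so the side condition is preserved.

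I do not expect any real obstacle here: the statement is a purely computational identity about the $\Sigma$-structured environments, and the argument is the literal transcription of the syntactic case with $\sext{\vsigma}{k}{\sigma}$ replaced by $((k, \vrho'), \rho)$. The only point requiring a sentence of care is justifying that $\vrho$ has product shape in the step case — which, as noted, follows from $n \geq 1$ together with the hypothesis $n + m < |\vGamma|$ and the definition of $\intp{\vGamma'; \Gamma}$ as a product whose first component is a $\Sigma$-type over truncation indices.
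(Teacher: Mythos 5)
Your proof is correct and follows exactly the route the paper takes: the paper's own proof simply says the argument is the same as for the syntactic \Cref{lem:L-add}, i.e.\ induction on $n$ unfolding the defining equations of $\Ltotal{\_}{\_}$ and $\trunc{\_}{\_}$ on environments, which is precisely your computation. Your additional remark justifying the product shape of $\vrho$ in the step case is a fine (if implicit in the paper) bit of care, but nothing differs in substance.
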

\begin{proof}
  This lemma is essentially the semantic counterpart for \Cref{lem:L-add}. The proof
  is also essentially the same. 
\end{proof}

\begin{lemma}[Monotonicity of Truncation Offset]\labeledit{lem:L-mon}
  If $n < |\vGamma|$, then $\Ltotal{\vrho[\vgamma]} n = \Ltotal\vgamma{\Ltotal\vrho n}$.
\end{lemma}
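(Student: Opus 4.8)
The plan is to prove this by induction on $n$, exactly mirroring the proof of \Cref{lem:L-comp}, but with the semantic monotonicity action $\vrho[\vgamma]$ playing the role that composition $\vsigma \circ \vdelta$ plays there. The base case $n = 0$ is immediate: the left-hand side is $\Ltotal{\vrho[\vgamma]}{0} = 0$ by definition of $\Ltotal{\_}{\_}$ on environments, and the right-hand side is $\Ltotal\vgamma{\Ltotal\vrho 0} = \Ltotal\vgamma 0 = 0$ by definition of $\Ltotal{\_}{\_}$ on unified weakenings.

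For the inductive step, suppose $n = 1 + n' < |\vGamma|$. Then $\vGamma$ must be of the form $\vGamma'; \Gamma$, and by the definition of $\intp{\vGamma'; \Gamma}$ the environment decomposes as $\vrho = ((k, \vrho'), \rho)$ with $k < |\vDelta|$, $\vrho' \in \intp{\vGamma'}_{\trunc\vDelta k}$, and $\rho \in \intp{\Gamma}_{\;\vDelta}$. Unfolding the monotonicity action on $\intp{\vGamma'; \Gamma}$ gives $\vrho[\vgamma] = ((\Ltotal\vgamma k, \vrho'[\trunc\vgamma k]), \rho[\vgamma])$, where the truncation law for unified weakenings supplies $\trunc\vgamma k : \trunc{\vDelta'}{\Ltotal\vgamma k} \To_w \trunc\vDelta k$ together with the bound $\Ltotal\vgamma k < |\vDelta'|$ needed for this to typecheck. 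We then compute
\begin{align*}
  \Ltotal{\vrho[\vgamma]}{1 + n'}
  &= \Ltotal\vgamma k + \Ltotal{\vrho'[\trunc\vgamma k]}{n'} \\
  &= \Ltotal\vgamma k + \Ltotal{\trunc\vgamma k}{\Ltotal{\vrho'}{n'}}
     \tag{by IH, since $n' < |\vGamma'|$} \\
  &= \Ltotal\vgamma{k + \Ltotal{\vrho'}{n'}}
     \tag{by \Cref{lem:L-add} for unified weakenings} \\
  &= \Ltotal\vgamma{\Ltotal\vrho{1 + n'}}
\end{align*}
which closes the induction.

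I do not expect a genuine obstacle here; the only points requiring attention are bookkeeping. First, one must check that $\trunc\vgamma k$, the truncation $\trunc\vDelta{\Ltotal\vgamma k}$, and the offset $k + \Ltotal{\vrho'}{n'}$ are all within bounds, which follows from the length constraints carried by the $\Sigma$-type interpretation of context stacks and from the bound $\Ltotal\vgamma k < |\vDelta'|$ already established in the monotonicity proof of $\intp{\vGamma}$. Second, one should note that the distributivity fact invoked in the third step is \Cref{lem:L-add} applied to the unified weakening $\vgamma$ (inherited from unified substitutions), not its semantic counterpart \Cref{lem:L-add-sem}. Everything else is definitional unfolding.
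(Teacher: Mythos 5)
Your proof is correct and follows essentially the same route as the paper's: induction on $n$, unfolding the monotonicity action on $\intp{\vGamma';\Gamma}$ to get $\vrho[\vgamma] = ((\Ltotal\vgamma k, \vrho'[\trunc\vgamma k]), \rho[\vgamma])$, applying the IH, and finishing with \Cref{lem:L-add} applied to $\vgamma$. Your remark that the distributivity step uses the syntactic \Cref{lem:L-add} rather than \Cref{lem:L-add-sem} matches the paper's citation exactly.
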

\begin{proof}
  This lemma is essentially the semantic counterpart for \Cref{lem:L-comp}. We shall
  proceed the proof in a similar way:
  \begin{itemize}[label=Case]
  \item $n = 0$, immediate.
  \item $n = 1 + n'$, then $((k, \vrho'), \rho) := \vrho$ and 
    \begin{align*}
      \Ltotal{((k, \vrho'), \rho)[\vgamma]}{1 + n'}
      &= \Ltotal{((\Ltotal\gamma k, \vrho'[\trunc \vgamma k]), \rho[\vgamma])}{(1+n')}\\
      &= \Ltotal\gamma k + \Ltotal{\vrho'[\trunc\vgamma k]}{n'} \\
      &= \Ltotal\gamma k + \Ltotal{\trunc \vgamma k}{\Ltotal{\vrho'}{n'}}
        \byIH\\
      &= \Ltotal\vgamma {k + \Ltotal{\vrho'}{n'}}
        \tag{by \Cref{lem:L-add}} \\
      &= \Ltotal\vgamma {\Ltotal{\vrho'}{(1+n')}}
    \end{align*}
  \end{itemize}
\end{proof}

\begin{lemma}[Monotonicity of Truncation]\labeledit{lem:trunc-mon}
  If $n < |\vGamma|$, then $\trunc{\vrho[\vgamma]}n = (\trunc\vrho n)[\trunc\vgamma{\Ltotal\vrho n}]$.
\end{lemma}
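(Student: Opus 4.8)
The plan is to prove this by induction on $n$, mirroring exactly the proof of its syntactic counterpart, \Cref{lem:trunc-comp} (distributivity of composition over truncation). The base case $n = 0$ is immediate: the left-hand side is $\trunc{\vrho[\vgamma]}0 = \vrho[\vgamma]$, while on the right $\trunc\vrho 0 = \vrho$ and $\Ltotal\vrho 0 = 0$, so $\trunc\vgamma 0 = \vgamma$ and the right-hand side is also $\vrho[\vgamma]$.

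For the step case $n = 1 + n'$ we must have $\vGamma = \vGamma'; \Gamma$, so by the definition of $\intp{\vGamma}$ the element $\vrho$ has the form $((k, \vrho'), \rho)$ with $k < |\vDelta|$, $\vrho' \in \intp{\vGamma'}_{\trunc\vDelta k}$, and $\rho \in \intp{\Gamma}_{\;\vDelta}$. Unfolding the definition of monotonicity on $\intp{\vGamma}$ gives $\vrho[\vgamma] = ((\Ltotal\vgamma k, \vrho'[\trunc\vgamma k]), \rho[\vgamma])$, hence $\trunc{\vrho[\vgamma]}{1+n'} = \trunc{\vrho'[\trunc\vgamma k]}{n'}$. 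I then apply the induction hypothesis to $\vrho' \in \intp{\vGamma'}_{\trunc\vDelta k}$ and the unified weakening $\trunc\vgamma k : \trunc{\vDelta'}{\Ltotal\vgamma k} \To_w \trunc\vDelta k$ (valid since $n' < |\vGamma'|$), obtaining $\trunc{\vrho'[\trunc\vgamma k]}{n'} = (\trunc{\vrho'}{n'})\bigl[\trunc{(\trunc\vgamma k)}{\Ltotal{\vrho'}{n'}}\bigr]$.

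To close, I rewrite the two annotations using bookkeeping facts. First, $\trunc{(\trunc\vgamma k)}{\Ltotal{\vrho'}{n'}} = \trunc\vgamma{k + \Ltotal{\vrho'}{n'}}$ by \Cref{lem:trunc-sum} (inherited by unified weakenings). Second, directly from the defining equations of $\Ltotal{\_}{\_}$ and $\trunc{\_}{\_}$ on interpretations, $k + \Ltotal{\vrho'}{n'} = \Ltotal\vrho{1+n'}$ and $\trunc{\vrho'}{n'} = \trunc\vrho{1+n'}$. Substituting yields $(\trunc\vrho n)[\trunc\vgamma{\Ltotal\vrho n}]$, as required.

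I do not expect a genuine obstacle here — as the preceding remark hints, the statement is essentially \Cref{lem:trunc-comp} transported to the semantic side, and the proof is a structural/arithmetic rearrangement. The only point needing care is index tracking: confirming that each truncated object is indexed by the correct (truncated) context stack — in particular that $\Ltotal\vgamma k < |\trunc\vDelta k|$ so that the outer truncation by $n'$ is meaningful — but this is precisely the well-typedness already guaranteed by the truncation law for unified weakenings, so it costs no extra work.
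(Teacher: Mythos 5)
Your proof is correct and matches the paper's own argument step for step: induction on $n$, unfolding $\vrho = ((k,\vrho'),\rho)$ and the monotonicity action in the successor case, applying the IH with the weakening $\trunc\vgamma k$, and then collapsing the nested truncation via \Cref{lem:trunc-sum} together with the defining equations $\Ltotal\vrho{(1+n')} = k + \Ltotal{\vrho'}{n'}$ and $\trunc\vrho{(1+n')} = \trunc{\vrho'}{n'}$. No gaps; the index-tracking remark is exactly the well-typedness the paper also relies on implicitly.
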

\begin{proof}
  This lemma is essentially the semantic counterpart for \Cref{lem:trunc-comp}. We shall
  proceed the proof in a similar way:
  \begin{itemize}[label=Case]
  \item $n = 0$, immediate.
  \item $n =1+n'$, then $((k, \vrho'), \rho) := \vrho$ and
    \begin{align*}
      \trunc{(((k, \vrho'), \rho)[\vgamma])}{(1 + n')}
      &= \trunc{((\Ltotal \vgamma k, \vrho'[\trunc \vgamma k]), \rho[\vgamma])}{(1 + n')} \\
      &= \trunc{\vrho'[\trunc \vgamma k]}{n'} \\
      &= (\trunc{\vrho'}{n'})[\trunc{\trunc \vgamma k}{\Ltotal{\vrho'}{n'}}]
        \byIH\\
      &= (\trunc{\vrho'}{n'})[\trunc\vgamma{(k + \Ltotal{\vrho'}{n'})}]
      \tag{by \Cref{lem:trunc-sum}} \\
      &= (\trunc\vrho{(1 + n')})[\trunc\vgamma{\Ltotal\vrho{1 + n'}}]
    \end{align*}
  \end{itemize}
\end{proof}

\subsubsection{Naturality of $\intp{t}$}

As we claimed, this interpretation is indeed a natural transformation:
\begin{lemma}
  Given $\mtyping t T$, $\intp{t}$ is a natural transformation from
  $\intp{\vGamma}$ to $\intp{T}$:
  \begin{itemize}
  \item Given $\vrho \in \intp{\vGamma}_{\;\vDelta}$, $\intp{t}_{\;\vDelta}(\vrho) \in \intp{T}_{\;\vDelta}$.
  \item Given $\vgamma : \vDelta \To_w \vDelta'$ and
    $\vrho \in \intp{\vGamma}_{\;\vDelta'}$,
    $\intp{t}_{\;\vDelta'}(\vrho)[\vgamma] = \intp{t}_{\;\vDelta}(\vrho[\vgamma])$:

    \begin{center}
      \begin{tikzcd}[column sep=large]
        \intp{\vGamma}_{\;\vDelta'}
        \ar{r}{\_[\vgamma]}
        \ar{d}{\intp{t}_{\;\vDelta'}}
        & \intp{\vGamma}_{\;\vDelta}
        \ar{d}{\intp{t}_{\;\vDelta}}
        \\
        \intp{T}_{\;\vDelta'}
        \ar{r}{\_[\vgamma]}
        & \intp{T}_{\;\vDelta} 
      \end{tikzcd}
    \end{center} 
  \end{itemize}
\end{lemma}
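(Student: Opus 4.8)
The plan is to prove both clauses at once by induction on $t$ (equivalently, on the derivation of $\mtyping t T$), since they are genuinely intertwined. For a $\lambda$-abstraction, showing that $\intp{\lambda x{:}S.\,t}_{\;\vDelta}(\vrho)$ actually lies in $\intp{S\func T}_{\;\vDelta} = (\intp S \hfunc \intp T)_{\;\vDelta}$ amounts to showing that $\vgamma \mapsto \big(a \mapsto \intp t_{\;\vDelta'}((\pi,(\rho,a)))\big)$ is a natural transformation, which is precisely a naturality statement for the body; conversely the naturality square for an application $t\ s$ is established by combining the naturality squares for $t$ and for $s$ with the fact, supplied by the first clause, that $\intp t_{\;\vDelta}(\vrho)$ is itself a natural transformation in the exponential. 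So the two clauses are carried through the induction simultaneously.

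For the first clause, the variable case is projection in $\intp\Gamma$. For $\boxit{t}$ one feeds $((1,\vrho),*) \in \intp{\vGamma;\cdot}_{\;\vDelta;\cdot}$ to the IH, landing in $\intp T_{\;\vDelta;\cdot} = \intp{\square T}_{\;\vDelta}$. For $\unbox n t$, set $m := \Ltotal\vrho n$; then $\trunc\vrho n \in \intp{\trunc\vGamma n}_{\;\trunc\vDelta m}$ by the typing of semantic truncation, the IH gives an element of $\intp{\square T}_{\;\trunc\vDelta m} = \intp T_{\;\trunc\vDelta m;\cdot}$, and transporting along the unified weakening $\sextt{\vect\id}m : \vDelta \To_w \trunc\vDelta m;\cdot$ (monotonicity of $\intp T$) yields the required element of $\intp T_{\;\vDelta}$. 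For $\lambda x{:}S.\,t$ one additionally discharges naturality of the produced map using the naturality IH for the body together with functoriality of $\intp{\vGamma;(\Gamma,x{:}S)}$, and for $t\ s$ one applies the exponential element $\intp t_{\;\vDelta}(\vrho)$ to $\vect\id_{\;\vDelta}$ and $\intp s_{\;\vDelta}(\vrho) \in \intp S_{\;\vDelta}$.

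For the second clause, the variable case is naturality of $\intp\Gamma$. For $\boxit t$, rewrite $(-)[\vgamma]$ on $\intp{\square T}$ as $(-)[\sextt\vgamma 1]$, apply the naturality IH for $t$ along $\sextt\vgamma 1 : \vDelta;\cdot \To_w \vDelta';\cdot$, and simplify $((1,\vrho),*)[\sextt\vgamma 1] = ((1,\vrho[\vgamma]),*)$ using $\Ltotal{\sextt\vgamma 1}{1} = 1$ and $\trunc{\sextt\vgamma 1}{1} = \vgamma$. For $\lambda x{:}S.\,t$, after unfolding $(-)[\vgamma]$ on an exponential the two sides differ only by $(\vrho[\vgamma])[\vgamma'']$ versus $\vrho[\vgamma \circ \vgamma'']$, equal by functoriality of $\intp{\vGamma;\Gamma}$. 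For $t\ s$, chain the naturality IHs for $s$ and $t$ with the naturality of the exponential element $\intp t_{\;\vDelta'}(\vrho)$ and the identity laws $\vgamma \circ \vect\id = \vect\id \circ \vgamma = \vgamma$ in $\WC$.

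The main obstacle is the $\unbox$ case of the square. With $m := \Ltotal\vrho n$ and $m' := \Ltotal{\vrho[\vgamma]}n$, one must show $\big(\intp t_{\;\trunc{\vDelta'}m}(\trunc\vrho n)[\sextt{\vect\id}m]\big)[\vgamma] = \intp t_{\;\trunc\vDelta{m'}}\big(\trunc{(\vrho[\vgamma])}n\big)[\sextt{\vect\id}{m'}]$. The first moves are $m' = \Ltotal\vgamma m$ by \Cref{lem:L-mon} and $\trunc{(\vrho[\vgamma])}n = (\trunc\vrho n)[\trunc\vgamma m]$ by \Cref{lem:trunc-mon}, so that the naturality IH for $t$ applies along $\trunc\vgamma m : \trunc\vDelta{m'} \To_w \trunc{\vDelta'}m$; by functoriality of $\intp T$ both sides then reduce to $\intp t_{\;\trunc{\vDelta'}m}(\trunc\vrho n)$ transported along a single unified weakening — $\sextt{\vect\id}m \circ \vgamma$ on the left, $\sextt{\trunc\vgamma m}{1} \circ \sextt{\vect\id}{m'}$ on the right — and it remains to verify these two composites coincide. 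Unfolding the definition of composition, both reduce to $\sextt{\trunc\vgamma m}{m'}$, using $\trunc{\sextt{\vect\id}{m'}}{1} = \vect\id$, $\Ltotal{\sextt{\vect\id}{m'}}{1} = m'$, and that $\vect\id$ is a left unit. Throughout the $\boxit{-}$ and $\unbox{-}$ cases one must keep track of the ambiguity $\intp{\square T}_{\;\vGamma} = \intp T_{\;\vGamma;\cdot}$ flagged after the monotonicity lemma, silently reading each monotonicity of $\intp{\square T}$ along $\vgamma$ as the equal monotonicity of $\intp T$ along $\sextt\vgamma 1$.
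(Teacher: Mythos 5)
Your proposal is correct and follows essentially the same route as the paper: induction on the typing derivation handling membership and naturality together, with the box and unbox cases resolved exactly as in the paper (via \Cref{lem:L-mon} and \Cref{lem:trunc-mon}, reducing both composite weakenings in the unbox case to $\sextt{\trunc\vgamma{\Ltotal\vrho n}}{\Ltotal{\vrho[\vgamma]}n}$), the $\lambda$ case by functoriality of the context-stack interpretation, and application by the presheaf exponential.
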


\begin{proof}
  We proceed by induction on $\mtyping t T$.
  \begin{itemize}[label=Case]
  \item $t= x$, then it works immediately by analysis.
  \item $t = \boxit t'$, then
    \begin{align*}
      \intp{t}_{\;\vDelta'}(\vrho)[\vgamma]
      &= \intp{t'}_{\vDelta' ; \cdot}((1, \vrho), *)[\sextt\vgamma 1] \\
      &= \intp{t'}_{\vDelta ; \cdot}(((1, \vrho), *)[\sextt\vgamma 1])
        \byIH \\
      &= \intp{t'}_{\vDelta ; \cdot}((1, \vrho[\vgamma]), *)
        \tag{by monotonicity of $\intp{\vGamma; \cdot}$} \\
      &= \intp{t}_{\;\vDelta}(\vrho[\vgamma])
    \end{align*}
    
  \item $t = \unbox n t'$, then the left hand side has
    \begin{align*}
      \intp{t}_{\;\vDelta'}(\vrho)[\vgamma]
      &= \intp{t'}_{\trunc{\vDelta'}{\Ltotal\vrho n}}(\trunc\vrho n)[\sextt{\vect \id}{\Ltotal\vrho n}][\vgamma] \\
      &= \intp{t'}_{\trunc{\vDelta'}{\Ltotal\vrho n}}(\trunc \vrho n)[\sextt{\vect \id}{\Ltotal\vrho n} \circ \vgamma]
        \tag{by functoriality}
    \end{align*}
    and the right hand side has
    \begin{align*}
      \intp{t}_{\;\vDelta}(\vrho[\vgamma])
      &= \intp{t'}_{\trunc\vDelta{\Ltotal{\vrho[\vgamma]} n}}(\trunc{\vrho[\vgamma]} n)[\sextt{\vect \id}{\Ltotal\vrho n}] \\
      &= \intp{t'}_{\trunc\vDelta m}(\vrho')[\sextt{\vect \id} m]
        \tag{where $m := \Ltotal\vgamma{\Ltotal\vrho n} = \Ltotal{\vrho[\vgamma]} n$ by
        \Cref{lem:L-mon}} \\
      \tag{$\vrho' := \trunc \vrho n[\trunc\vgamma{\Ltotal\vrho n}] = \trunc{\vrho[\vgamma]}n$
        by \Cref{lem:trunc-mon}} \\
      &= \intp{t'}_{\trunc{\vDelta'}{\Ltotal\vrho n}}(\trunc \vrho
        n)[\trunc\vgamma{\Ltotal\vrho n}][\sextt{\vect \id}m]
      \byIH \\
      &= \intp{t'}_{\trunc{\vDelta'}{\Ltotal\vrho n}}(\trunc \vrho
        n)[(\sextt{\trunc\vgamma{\Ltotal\vrho n}} 1) \circ (\sextt{\vect \id}m)]
         \tag{by functoriality}
    \end{align*}
    Comparing both sides, we realize that we only need to prove
    \begin{align*}
      \sextt{\vect \id}{\Ltotal\vrho n} \circ \vgamma =
      (\sextt{\trunc\vgamma{(\Ltotal\vrho n)}} 1) \circ (\sextt{\vect \id}m)
    \end{align*}
    but they both evaluate to $\sextt{\trunc \vgamma {\Ltotal\vrho n}} m$. Diagrammatically,
    
    \begin{center}
      \begin{tikzcd}[column sep=huge]
        \vDelta
        \ar{r}{\vgamma}
        \ar{d}{\sextt{\vect \id}m}
        & \vDelta'
        \ar{d}{\sextt{\vect \id}{\Ltotal\vrho n}}\\
        \trunc\vDelta m ; \cdot
        \ar{r}{\sextt{\trunc\vgamma{\Ltotal\vrho n}} 1}
        & \trunc{\vDelta'}{\Ltotal\vrho n}; \cdot
      \end{tikzcd}
    \end{center}
    
  \item $t = \lambda x. t'$ and $T = S \func T'$, we should verify $\intp{t}(\vrho) \in
    \intp{T}_{\;\vDelta}$. We assume $\vrho \in \intp{\vGamma}_{\;\vDelta}$, and further
    assume $\vgamma : \vDelta \To_w \vGamma$ and $a : \intp{S}_{\;\vDelta}$ because
    $\intp{T}$ is a presheaf exponential. Then we should check that the naturality
    condition of the presheaf exponential holds. This is given by further assuming
    $\vgamma' : \vDelta' \To_w \vDelta$, and our goal becomes
    \begin{align*}
      \intp{\lambda x. t}_{\;\vDelta}(\vrho)(\vgamma \circ \vgamma', a[\vgamma']) =
      \intp{\lambda x. t}_{\;\vDelta}(\vrho)(\vgamma, a)[\vgamma']
    \end{align*}
    We evaluate both sides:
    \begin{align*}
      \intp{\lambda x. t}_{\;\vDelta}(\vrho)(\vgamma \circ \vgamma', a[\vgamma'])
      &= \intp{t}_{\;\vDelta}((\pi, (\rho, a[\vgamma'])))
        \tag{where $(\pi, \rho) := \vrho[\vgamma \circ \vgamma']$}
    \end{align*}
    \begin{align*}
      \intp{\lambda x. t}_{\;\vDelta}(\vrho)(\vgamma, a)[\vgamma']
      &= \intp{t}_{\;\vDelta}((\pi', (\rho', a)))[\vgamma'] 
        \tag{where $(\pi', \rho') := \vrho[\vgamma]$} \\
      &= \intp{t}_{\;\vDelta}((\pi', (\rho', a))[\vgamma'])
        \byIH
    \end{align*}
    Notice that by the definition of monotonicity of context stacks, $(\pi, (\rho,
    a[\vgamma'])) = (\pi', (\rho', a))[\vgamma']$ and thus the interpretation is indeed
    well-defined. 
    
  \item $t = t'\ s$, immediate from the fact that $\intp{t'}$ gives a presheaf
    exponential and thus this case is immediately
    discharged.
  \end{itemize}
\end{proof}

\subsection{Relation with MoT}

As previously discussed, syntactic MoTs can be uniformly represented as
unified substitutions (even more specifically, unified weakening). In the presheaf
model, MoTs can be modeled by operations manipulating the evaluation environments. In order to state
the relation, we need the following operation, which intuitively shrinks the
interpretation of a context stack based on a syntactic MoT $\{k/l\}$:
\begin{definition}
  The $\shrink$ operation rearranges $\intp{\vGamma'}_{\;\vDelta}$ to
  $\intp{\vGamma''}_{\;\vDelta}$ according to $\{k/l\} : \vGamma' \To_w \vGamma''$. More concretely,
  if $\{k/l\} : \vGamma' \To_w \vGamma''$ where
  $\vGamma' = \vGamma; \Gamma_0; \cdots ; (\Gamma_k, \Delta_0); \cdots ; \Delta_l$ and
  $\vGamma'' = \vGamma; \Gamma_0; \Delta_0; \cdots ; \Delta_l$, then $\shrink$
  converts $\intp{\vGamma'}_{\;\vDelta}$ to $\intp{\vGamma''}_{\;\vDelta}$:
  \begin{align*}
    \shrink(\_, k, l)
    &: \intp{\vGamma'}_{\;\vDelta} \to \intp{\vGamma''}_{\;\vDelta} \\
    \shrink((\pi, \rho \in \intp{\Gamma_0, \Delta_0}_{\;\vDelta}), 0, 0)
    &:= ((0, (\pi, \rho_1)), \rho_2)
      \tag{where $\rho_1 \in \intp{\Gamma_0}_{\;\vDelta}$ and $\rho_2 \in
      \intp{\Delta_0}_{\;\vDelta}$ are extracted from $\rho$} \\
    \shrink((\pi, \rho), 1 + k', 0)
    &:= ((\Ltotal{\pi, \rho}{(1+k')}, \trunc{(\pi, \rho)}{(1+k')}), \rho_2)
      \tag{where $\rho_2$ is extracted from $\rho$ per above} \\
    \shrink(((n, \vrho), \rho), k, 1 + l')
    &:= ((n, \shrink(\vrho, k, l')), \rho)
  \end{align*}
\end{definition}
We can check that this function is well-defined and it decreases by $l$. Next we shall
investigate its interaction with truncation and truncation offset. We consistently use $\vGamma'$
and $\vGamma''$ in the next few lemmas as in the definition above.

\begin{lemma}\labeledit{lem:L-shrink-1}
  Given $\vrho \in \intp{\vGamma'}_{\;\vDelta}$ and $n \le l$,
  $\Ltotal{\shrink(\vrho, k, l)} n = \Ltotal\vrho n$. 
\end{lemma}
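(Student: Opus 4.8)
The plan is to induct on $l$, splitting on $n$ inside the step case. The guiding observation is that the parameter $l$ of $\shrink(\_,k,l)$ controls exactly how deep into the environment the rearrangement reaches: the outermost $l$ offsets of $\vrho$ (the first components of the leading $\Sigma$-pairs) are left completely untouched, and $\Ltotal{\_}{n}$ for $n \le l$ only inspects those outermost offsets. So the equality should fall out by simply unfolding both definitions.

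Concretely, when $l = 0$ the hypothesis $n \le l$ forces $n = 0$, and both sides equal $0$ by definition of $\Ltotal{\_}{0}$, regardless of how $\shrink(\vrho, k, 0)$ evaluates. When $l = 1 + l'$, the environment $\vrho$ necessarily has the shape $((m, \vrho'), \rho)$ because $\vGamma'$ has more than one context and $\intp{\vGamma'}_{\;\vDelta}$ is therefore a product whose first component is a $\Sigma$-pair; by definition $\shrink(((m, \vrho'), \rho), k, 1 + l') = ((m, \shrink(\vrho', k, l')), \rho)$ (here I rename the bound variable of the $\shrink$ clause to $m$ to avoid clashing with our $n$). If $n = 0$ both sides are again $0$. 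Otherwise $n = 1 + n'$ with $n' \le l'$, and unfolding $\Ltotal{\_}{1+n'}$ gives
\[
  \Ltotal{\shrink(\vrho, k, 1 + l')}{1 + n'} = m + \Ltotal{\shrink(\vrho', k, l')}{n'}, \qquad \Ltotal{\vrho}{1 + n'} = m + \Ltotal{\vrho'}{n'},
\]
and these agree by the induction hypothesis applied to $l'$ and $n'$ (legitimate since $n' \le l'$).

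There is no genuine obstacle here; the only point requiring a little care is pinning down the shape of $\intp{\vGamma'}_{\;\vDelta}$ in the step case, namely that it is always a product whose first factor is a $\Sigma$-pair $(m,\vrho')$ — this is exactly the structural invariant noted right after the interpretation of context stacks, so it is available. I also expect the analogous statement $\trunc{\shrink(\vrho, k, l)}{n} = \trunc{\vrho}{n}$ for $n \le l$ to go through by the identical induction, and it may be cleanest to carry the two claims through the same recursion.
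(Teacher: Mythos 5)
Your proof is correct and is essentially the paper's argument: the paper inducts on $n$ (noting that inverting $n \le l$ forces the last clause of $\shrink$ except when $n = 0$), while you induct on $l$ and split on $n$, which unfolds to the identical case analysis and use of the induction hypothesis.
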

\begin{lemma}\labeledit{lem:trunc-shrink-1}
  Given $\vrho \in \intp{\vGamma'}_{\;\vDelta}$ and $n \le l$,
  $\trunc{\shrink(\vrho, k, l)}n = \shrink(\trunc \vrho n, k, l - n)$. 
\end{lemma}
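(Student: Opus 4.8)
The plan is to prove \Cref{lem:trunc-shrink-1} by induction on $n$, exploiting that $\shrink(\_, k, l)$ recurses on $l$ through its third clause while $\trunc{\_}{\_}$ peels contexts off the top one at a time, and that the side condition $n \le l$ keeps the truncation entirely within the portion of the stack that lies strictly above the context being modified by $\shrink$. So the top-level $\shrink$ never ``uses up'' the first or second (i.e.\ $l = 0$) clause before the truncation has consumed all of $n$, and only the recursive clause is ever in play during the argument.

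For the base case $n = 0$, both applications of $\trunc{\_}{0}$ are the identity and $l - 0 = l$, so the two sides coincide definitionally. For the step, take $n = 1 + n'$; then $l \ge 1$, so write $l = 1 + l'$ with $n' \le l'$, and by the presupposed shape of $\vGamma'$ the environment splits as $\vrho = ((m, \vrho'), \rho)$ with $\vrho' \in \intp{\trunc{\vGamma'}{1}}_{\trunc{\vDelta}{m}}$ and $\rho$ the interpretation of the topmost context. Unfolding the third clause gives $\shrink(\vrho, k, l) = ((m, \shrink(\vrho', k, l')), \rho)$, hence by the recursive clause of environment truncation $\trunc{\shrink(\vrho, k, l)}{1 + n'} = \trunc{\shrink(\vrho', k, l')}{n'}$. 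On the other side, $\trunc{\vrho}{1 + n'} = \trunc{\vrho'}{n'}$, and since $l - n = l' - n'$ we get $\shrink(\trunc{\vrho}{n}, k, l - n) = \shrink(\trunc{\vrho'}{n'}, k, l' - n')$. The induction hypothesis applied to $\vrho'$, $n'$, $l'$ (legitimate since $n' \le l'$) equates the two, closing the case.

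The only thing requiring care is index bookkeeping: one must check that the context-stack subscripts on the interpretation brackets line up so that, e.g., $\shrink(\trunc{\vrho'}{n'}, k, l' - n')$ even type-checks. This follows from $\trunc{\vGamma'}{n} = \vGamma; \Gamma_0; \cdots; (\Gamma_k, \Delta_0); \cdots; \Delta_{l - n}$ — i.e.\ truncating the domain of $\{k/l\}$ by $n \le l$ yields exactly the domain of $\{k/(l-n)\}$ — together with a \Cref{lem:trunc-sum}-style decomposition of nested truncations on the ambient stack $\vDelta$. I do not expect a genuine obstacle here; as with its companion \Cref{lem:L-shrink-1}, the content of the lemma is purely in keeping the depth indices straight, and the induction on $n$ is routine.
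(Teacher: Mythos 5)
Your proof is correct and follows essentially the same route as the paper: induction on $n$, observing that $n \le l$ forces the recursive (third) clause of $\shrink$ whenever $n > 0$, with the $n = 0$ case holding definitionally. Your write-up merely spells out the unfolding and index bookkeeping that the paper leaves implicit.
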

\begin{proof}[Proof of \Cref{lem:L-shrink-1,lem:trunc-shrink-1}]
  Induction on $n$. By inverting $n \le l$, we know we always in the last case of
  $\shrink$, except when $n = 0$, where the equations hold automatically.
\end{proof}

\begin{lemma}\labeledit{lem:L-shrink-2}
  Given $\vrho \in \intp{\vGamma'}_{\;\vDelta}$ and $n > l$,
  $\Ltotal{\shrink(\vrho, k, l)} n = \Ltotal\vrho{(k + n - 1)}$. 
\end{lemma}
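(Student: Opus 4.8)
The plan is to prove \Cref{lem:L-shrink-2} by induction on $l$ (equivalently, on the recursion structure of $\shrink$), in parallel with \Cref{lem:L-shrink-1} but now for truncation sizes $n$ that reach past the contexts $\shrink$ leaves intact. The three cases of the induction match the three defining clauses of $\shrink$, so no inversion is needed up front, and the only external input will be \Cref{lem:L-add-sem}.

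In the base case $l = 0$ we have $n > 0$, so write $n = 1 + n'$ and split on $k$. If $k = 0$, then with $\vrho = (\pi, \rho)$ and $\rho$ split into $\rho_1 \in \intp{\Gamma_0}_{\;\vDelta}$ and $\rho_2 \in \intp{\Delta_0}_{\;\vDelta}$ we have $\shrink(\vrho, 0, 0) = ((0, (\pi, \rho_1)), \rho_2)$; one unfolding of $\Ltotal{\_}{1 + n'}$ turns the left side into $0 + \Ltotal{(\pi, \rho_1)}{n'}$, and since $\Ltotal$ only ever inspects the $\Sigma$-spine of its argument and discards the topmost-context component, this equals $\Ltotal{(\pi, \rho)}{n'} = \Ltotal{\vrho}{n'} = \Ltotal{\vrho}{k + n - 1}$. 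If $k = 1 + k'$, then $\shrink(\vrho, 1 + k', 0) = ((\Ltotal{\vrho}{1 + k'}, \trunc{\vrho}{1 + k'}), \rho_2)$, so the left side unfolds to $\Ltotal{\vrho}{1 + k'} + \Ltotal{\trunc{\vrho}{1 + k'}}{n'}$, which \Cref{lem:L-add-sem} collapses to $\Ltotal{\vrho}{(1 + k') + n'} = \Ltotal{\vrho}{k + n - 1}$.

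In the inductive case $l = 1 + l'$, the hypothesis $n > l$ forces $n \ge 2$, so $n = 1 + n'$ with $n' > l'$. Writing $\vrho = ((j, \vrho'), \rho)$, the recursive clause gives $\shrink(\vrho, k, 1 + l') = ((j, \shrink(\vrho', k, l')), \rho)$, and one step of $\Ltotal$ peels off $j$, leaving $j + \Ltotal{\shrink(\vrho', k, l')}{n'}$; since $n' > l'$, the induction hypothesis rewrites this to $j + \Ltotal{\vrho'}{k + n' - 1}$. On the other side, $\Ltotal{\vrho}{k + n - 1} = \Ltotal{((j, \vrho'), \rho)}{k + n'}$, and because $n' \ge 1$ guarantees $k + n' \ge 1$, a single step of $\Ltotal$ peels the same $j$ to give $j + \Ltotal{\vrho'}{k + n' - 1}$; the two agree.

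The bulk of the work is bookkeeping: keeping the $\Sigma$-structures produced by $\shrink$ aligned with what $\Ltotal$ expects, and discharging the range side conditions implicit in the statement (that $n - 1 > l'$ in the recursive step, and that every index fed to $\Ltotal$ stays below the length of the relevant stack). There is no real obstacle beyond the one appeal to \Cref{lem:L-add-sem} in the weakening base case; the one point easy to overlook is the observation, used in the fusion base case, that $\Ltotal$ is insensitive to the topmost-context component of an environment.
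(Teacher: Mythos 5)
Your proof is correct and follows essentially the same route as the paper: reduce to the $l = 0$ case (you by explicit induction on $l$, the paper by invoking the same peeling argument as \Cref{lem:L-shrink-1}), then split on $k$, using \Cref{lem:L-add-sem} when $k \neq 0$ and the fact that $\Ltotalname$ ignores the topmost-context component when $k = 0$.
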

\begin{proof}
  Based on \Cref{lem:L-shrink-1}, it is sufficient to consider the case where $l =
  0$, and thus $n > 0$ and $n = 1 + m$ for some $m$. 

  Let $(\pi, \rho) := \vrho$. We consider the case whether $k = 0$:
  \begin{itemize}[label=Case]
  \item $k = 0$, then
    \begin{align*}
      \Ltotal{\shrink(\vrho, 0, 0)} n
      &= \Ltotal{((0, (\pi, \rho_1)), \rho_2)}{1 + m} \\
      &= \Ltotal{(\pi, \rho_1)}m\\[5pt]
      \Ltotal\vrho{n - 1} &= \Ltotal{(\pi, \rho)}m
    \end{align*}
    We know that for both sides, only $\pi$ impacts the value and thus they are equal.
    
  \item $k \neq 0$, then
    \begin{align*}
      \Ltotal{\shrink(\vrho, k, 0)}n
      &= \Ltotal{((\Ltotal \vrho k, \trunc \vrho k), \rho_2)}{1 + m} \\
      &= \Ltotal \vrho k + \Ltotal{\trunc \vrho k}m \\
      &= \Ltotal\vrho{k + m} \tag{\Cref{lem:L-add-sem}}
    \end{align*}
  \end{itemize}
\end{proof}

However, if we think $\trunc{\shrink(\vrho, k, l)}n = \trunc{\vrho}{(k + n - 1)}$ given the
preconditions of the previous lemma, then we are wrong. For a counterexample, if
$\vrho \in \intp{\vGamma; (\Gamma_0, \Delta_0)}_{\;\vDelta}$, then
$\shrink(\vrho, 0, 0) \in \intp{\vGamma; \Gamma_0; \Delta_0}_{\;\vDelta}$. Moreover, if
we take $n = 1$, then $\trunc{\shrink(\vrho, 0, 0)}1 \in \intp{\vGamma;
  \Gamma_0}_{\;\vDelta}$, while $\trunc\vrho 0 = \vrho$ so they are not even in the same set and cannot be compared!
Fortunately, this is the only group of counterexamples, and we can refine our equation a
bit further. 
\begin{lemma}\labeledit{lem:trunc-shrink-2}
  Given $\vrho \in \intp{\vGamma'}_{\;\vDelta}$, $n > l$ and $k \neq 0$,
  $\trunc{\shrink(\vrho, k, l)}n = \trunc\vrho{(k + n - 1)}$. 
\end{lemma}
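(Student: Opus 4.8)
The plan is to mirror the proof of \Cref{lem:L-shrink-2}, substituting the truncation machinery for its truncation-offset analogue. First I would reduce to the case $l = 0$: applying \Cref{lem:trunc-shrink-1} with truncation size exactly $l$ gives $\trunc{\shrink(\vrho, k, l)}{l} = \shrink(\trunc{\vrho}{l}, k, 0)$, and \Cref{lem:trunc-sum} then lets me write $\trunc{\shrink(\vrho, k, l)}{n} = \trunc{\trunc{\shrink(\vrho, k, l)}{l}}{(n - l)} = \trunc{\shrink(\trunc{\vrho}{l}, k, 0)}{(n - l)}$. So it suffices to prove the equation for $l = 0$, applied to the environment $\trunc{\vrho}{l}$ with truncation size $n - l > 0$; the right-hand side then reassembles via $\trunc{\trunc{\vrho}{l}}{(k + (n - l) - 1)} = \trunc{\vrho}{(k + n - 1)}$, again by \Cref{lem:trunc-sum}.

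For the base case $l = 0$ I would note $n > 0$, so $n = 1 + m$, and that $k \neq 0$ puts $\shrink(\vrho, k, 0)$ in its middle clause, producing an environment whose leading pair is $(\Ltotal{\vrho}{k}, \trunc{\vrho}{k})$. Truncating by $1 + m$ discards this pair, leaving $\trunc{(\trunc{\vrho}{k})}{m}$, which by \Cref{lem:trunc-sum} equals $\trunc{\vrho}{(k + m)} = \trunc{\vrho}{(k + n - 1)}$. Along the way I would discharge the side conditions on \Cref{lem:trunc-sum} and the well-formedness of the truncations using the length bookkeeping $|\vGamma''| = |\vGamma| + l + 2$ and $|\vGamma'| = |\vGamma| + l + k + 1$, so that the standing assumption $n < |\vGamma''|$ forces $k + n - 1 < |\vGamma'|$.

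I do not anticipate a serious obstacle; the computation is routine. The one point that needs care is the necessity of the hypothesis $k \neq 0$, which is exactly what the discussion before the statement flags: when $k = 0$ the $\shrink$ operation takes its first clause, which \emph{inserts} a fresh context (modal fusion carries a local weakening), so the leading offset of the resulting environment is $0$ rather than $\ge 1$; consequently $\trunc{\shrink(\vrho, 0, 0)}{1}$ and $\trunc{\vrho}{0}$ live in different sets and the equation is not even well-typed. Keeping the index arithmetic $k + n - 1$ lined up — which relies on $k \ge 1$ — so that both sides land in the same set is the only subtlety.
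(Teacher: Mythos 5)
Your proof is correct and follows essentially the same route as the paper: reduce to the $l = 0$ case (the paper does this by induction on $l$, you do it in one step via \Cref{lem:trunc-shrink-1} together with additivity of truncation, mirroring how \Cref{lem:L-shrink-2} was handled), then compute the base case from the middle clause of $\shrink$, with $k \neq 0$ ensuring both sides inhabit the same set. One small caveat: \Cref{lem:trunc-sum} as stated concerns unified substitutions, so the additivity $\trunc{(\trunc{\vrho}{a})}{b} = \trunc{\vrho}{(a+b)}$ you invoke is really its (unstated but immediate) semantic analogue for environments, provable directly from the definition of truncation on $\intp{\vGamma}_{\;\vDelta}$.
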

\begin{proof}
  If we know $k \neq 0$, then we can compare both sides as they can be verified in the
  same set. The proof goes similarly by induction on $l$ and compute accordingly. 
\end{proof}

If we truncate a bit more, then we can say something when $k = 0$:
\begin{lemma}\labeledit{lem:trunc-shrink-3}
  Given $\vrho \in \intp{\vGamma'}_{\;\vDelta}$, $n > l + 1$,
  $\trunc{\shrink(\vrho, 0, l)}n = \trunc\vrho{(n - 1)}$. 
\end{lemma}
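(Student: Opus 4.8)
The plan is to mirror the proof of \Cref{lem:trunc-shrink-2} and argue by induction on $l$, peeling off one level of truncation in each case to expose the $\Sigma$-structure that $\shrink(-,0,-)$ manipulates. The one genuinely delicate point is the reason the bound is $n > l+1$ rather than $n > l$: when $k = 0$ the operation $\shrink$ inserts a fusion offset $0$ at the bottom of the rearranged environment (the base case $\shrink((\pi,\rho),0,0) := ((0,(\pi,\rho_1)),\rho_2)$), and only after we have truncated \emph{past} that inserted $0$ --- which means discarding at least $l+2$ components --- do the two sides of the equation even live in the same set. Everything else is a direct calculation from the defining equations of $\shrink$ and $\trunc{}{}$; in particular, and unlike \Cref{lem:L-shrink-2}, no appeal to \Cref{lem:L-add-sem} (or to any distributivity lemma) is needed. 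When $\vGamma = \epsilon$ the hypothesis $n > l+1$ is incompatible with $n$ being a legal truncation size for $\vGamma''$, so the statement is vacuous; I would dispose of this at the outset.

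For the base case $l = 0$ we have $\vGamma' = \vGamma;(\Gamma_0,\Delta_0)$ and $\vGamma'' = \vGamma;\Gamma_0;\Delta_0$. Write $\vrho = (\pi,\rho)$ with $\pi = (n',\vrho_0)$ the $\Sigma$-component over $\intp{\vGamma}$ and $\rho = (\rho_1,\rho_2)$, $\rho_1 \in \intp{\Gamma_0}_{\;\vDelta}$, $\rho_2 \in \intp{\Delta_0}_{\;\vDelta}$. Then $\shrink(\vrho,0,0) = ((0,(\pi,\rho_1)),\rho_2)$. Since $n > 1$, put $n = 1+m$ with $m \ge 1$. Truncating the left-hand side by $n$ discards $\rho_2$ together with the inserted offset, and then one further level, giving $\trunc{\shrink(\vrho,0,0)}{n} = \trunc{(\pi,\rho_1)}{m} = \trunc{((n',\vrho_0),\rho_1)}{m} = \trunc{\vrho_0}{m-1}$. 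On the other side $\trunc{\vrho}{n-1} = \trunc{(\pi,\rho)}{m} = \trunc{((n',\vrho_0),\rho)}{m} = \trunc{\vrho_0}{m-1}$, so the two agree.

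For the inductive step $l = 1+l'$, write $\vrho = ((n'',\vrho_0),\rho)$, so that $\shrink(\vrho,0,1+l') = ((n'',\shrink(\vrho_0,0,l')),\rho)$ by the third clause of $\shrink$, where $\vrho_0$ and $\shrink(\vrho_0,0,l')$ are exactly the data of the lemma at parameter $l'$. From $n > l'+2$ write $n = 1+m$ with $m > l'+1$. Truncating by $n$ on the left discards the topmost component and the offset $n''$, leaving $\trunc{\shrink(\vrho_0,0,l')}{m}$, which by the induction hypothesis (applicable since $m > l'+1$) equals $\trunc{\vrho_0}{m-1}$. On the right, $\trunc{\vrho}{n-1} = \trunc{((n'',\vrho_0),\rho)}{1+(m-1)} = \trunc{\vrho_0}{m-1}$. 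Hence both sides coincide, completing the induction. The only obstacle worth flagging is bookkeeping: one must track precisely which $\Sigma$-component each successive truncation lands on, and the hypothesis $n > l+1$ is exactly what guarantees all intermediate and final objects are comparable --- this is the refinement over the naive (false) statement noted just before the lemma.
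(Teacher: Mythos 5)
Your proof is correct and follows essentially the same route as the paper's (very terse) proof: induction on $l$ with a direct calculation from the defining clauses of $\shrink$ and truncation, where the hypothesis $n > l+1$ is exactly what lets the truncation step past the inserted fusion offset so both sides live in the same set. The base-case and step calculations check out, so nothing further is needed.
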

\begin{proof}
  By truncating one more, we avoided the counterexample. The proof again goes very
  similarly. 
\end{proof}

We show that $\shrink$ and monotonicity commute:
\begin{lemma}\labeledit{lem:shrink-mon-comm}
  Given $\vrho \in \intp{\vGamma'}_{\;\vDelta}$ and $\vgamma : \vDelta' \To_w \vDelta$,
  $\shrink(\vrho[\vgamma], k, l) = \shrink(\vrho, k, l)[\vgamma]$. 
\end{lemma}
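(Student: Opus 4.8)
The plan is to induct on $l$, following the recursive structure of $\shrink$ (which, as noted, decreases on $l$); in the base case $l = 0$ I further split on whether $k = 0$. Throughout I will repeatedly use that monotonicity of the context–stack interpretation is defined component‑by‑component: for $\intp{\vGamma;\Gamma}$ it acts as $(\pi,\rho)[\vgamma] = (\pi[\vgamma],\rho[\vgamma])$ where $\pi[\vgamma]$ is the evident action on the leading $\Sigma$–component, namely $(n,\vrho_0)[\vgamma] = (\Ltotal\vgamma n, \vrho_0[\trunc\vgamma n])$; and for $\intp{\Gamma_0,\Delta_0} = \intp{\Gamma_0}\hat\times\cdots$ the projections extracting $\rho_1\in\intp{\Gamma_0}_{\;\vDelta}$ and $\rho_2\in\intp{\Delta_0}_{\;\vDelta}$ commute with $\_[\vgamma]$.

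For the inductive step $l = 1 + l'$, write $\vrho = ((n,\vrho_0),\rho)$ with $\rho$ the interpretation of the (unchanged) topmost context $\Delta_l$ and $\vrho_0$ the interpretation of the rest, living over $\trunc\vDelta n$. By the defining clause of monotonicity for $\intp{\vGamma''}$ we get $\vrho[\vgamma] = ((\Ltotal\vgamma n,\vrho_0[\trunc\vgamma n]),\rho[\vgamma])$, and then the third clause of $\shrink$ rewrites both $\shrink(\vrho[\vgamma],k,1+l')$ and $\shrink(\vrho,k,1+l')[\vgamma]$ so that (after matching the leading offsets, which are $\Ltotal\vgamma n$ on both sides) the goal reduces to $\shrink(\vrho_0[\trunc\vgamma n],k,l') = \shrink(\vrho_0,k,l')[\trunc\vgamma n]$. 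Since $\trunc\vgamma n$ is itself a unified weakening with codomain $\trunc\vDelta n$ (by the truncation law for unified weakenings) and $\vrho_0$ sits over $\trunc\vDelta n$, this is precisely the induction hypothesis at $l'$ applied to the truncated stacks.

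For the base case $l = 0$, $k = 0$: write $\vrho = (\pi,\rho)$ with $\rho\in\intp{\Gamma_0,\Delta_0}_{\;\vDelta}$ and split $\rho$ into $\rho_1,\rho_2$. Unfolding the first clause of $\shrink$ on the left gives $((0,(\pi[\vgamma],\rho_1[\vgamma])),\rho_2[\vgamma])$; unfolding it on the right and then applying monotonicity of $\intp{\vGamma''}$, using $\Ltotal\vgamma 0 = 0$ and $\trunc\vgamma 0 = \vgamma$, gives the same term. The base case $l = 0$, $k = 1 + k'$ is the substantive one: here $\shrink(\vrho,1+k',0) = ((\Ltotal\vrho{(1+k')},\trunc\vrho{(1+k')}),\rho_2)$, so the left‑hand side requires $\Ltotal{\vrho[\vgamma]}{(1+k')} = \Ltotal\vgamma{\Ltotal\vrho{(1+k')}}$ and $\trunc{\vrho[\vgamma]}{(1+k')} = (\trunc\vrho{(1+k')})[\trunc\vgamma{\Ltotal\vrho{(1+k')}}]$, which are exactly \Cref{lem:L-mon} and \Cref{lem:trunc-mon}; the right‑hand side, obtained by applying monotonicity of $\intp{\vGamma''}$ to $((\Ltotal\vrho{(1+k')},\trunc\vrho{(1+k')}),\rho_2)$, produces the very same two expressions, so the two sides agree.

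\textbf{Main obstacle.} The only delicate point is the $\Sigma$‑type bookkeeping in the $k = 1 + k'$ base case: one has to see that the offset and truncation baked into $\shrink$ are transported by a weakening exactly as described by \Cref{lem:L-mon} and \Cref{lem:trunc-mon}, and that after those rewrites both sides genuinely inhabit $\intp{\vGamma''}_{\;\vDelta'}$ (which needs $k\neq 0$, matching the remark preceding \Cref{lem:trunc-shrink-2}) so that the equation even typechecks. Everything else is routine unfolding of $\shrink$ together with the componentwise definition of monotonicity on context stacks.
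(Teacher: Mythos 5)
Your proof is correct and follows essentially the same route as the paper's: induction on $l$ with a case split on $k$ in the base case, using \Cref{lem:L-mon,lem:trunc-mon} for the $k \neq 0$ case and the induction hypothesis applied with $\trunc\vgamma n$ in the $l = 1 + l'$ step. (Your parenthetical worry that the equation only typechecks when $k \neq 0$ is unnecessary—the $k = 0$ clause of $\shrink$ covers that case, as your own base case shows—but it does not affect the argument.)
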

\begin{proof}
  We proceed by induction on $l$ and consider the case of $k$. The proof is just
  straightforward calculation.
  \begin{itemize}[label=Case]
  \item $l = 0$, $k = 0$, let $(\pi, \rho) := \vrho$,
    \begin{align*}
      \shrink(\vrho[\vgamma], 0, 0)
      &= ((0, (\pi, \rho_1)[\vgamma]), \rho_2[\vgamma]) \\
      &= ((0, (\pi, \rho_1), \rho_2))[\vgamma] \\
      &= \shrink(\vrho, 0, 0)[\vgamma]
    \end{align*}
    
  \item $l = 0$, $k \neq 0$, let $(\pi, \rho) := \vrho$,
    \begin{align*}
      \shrink(\vrho[\vgamma], k, 0)
      &= ((\Ltotal{\vrho[\vgamma]} k, \trunc{\vrho[\vgamma]}k), \rho_2[\vgamma]) \\
      &= ((\Ltotal\vgamma{\Ltotal \vrho k}, \trunc \vrho k[\trunc\vgamma{\Ltotal \vrho k}]),
        \rho_2[\vgamma])
        \tag{by \Cref{lem:L-mon,lem:trunc-mon}}\\
      &= ((\Ltotal \vrho k, \trunc \vrho k), \rho_2)[\vgamma] \\
      &= \shrink(\vrho, k, 0)[\vgamma]
    \end{align*}
    
  \item $l = 1 + l'$, let $((n, \vrho'), \rho) := \vrho$,
    \begin{align*}
      \shrink(\vrho[\vgamma], k, l)
      &= \shrink(((\Ltotal \vgamma n, \vrho'[\trunc \vgamma n]), \rho[\vgamma]), k,
        1+l') \\
      &= ((\Ltotal \vgamma n, \shrink(\vrho'[\trunc \vgamma n], k, l')), \rho[\vgamma])
      \\
      &= ((\Ltotal \vgamma n, \shrink(\vrho', k, l')[\trunc \vgamma n]), \rho[\vgamma])
      \byIH \\
      &= ((n, \shrink(\vrho', k, l')), \rho)[\vgamma] \\
      &= \shrink(\vrho, k, 1 + l')[\vgamma]
    \end{align*}
  \end{itemize}
\end{proof}

Then we shall state the relation between MoT and $\intp{t}$:
\begin{theorem}
  If $\mtyping[\vGamma'']t T$, then given
  $\vrho \in \intp{\vGamma'}_{\;\vDelta}$, $\intp{t\{k/l\}}_{\;\vDelta}(\vrho) =
  \intp{t}_{\;\vDelta}(\shrink(\vrho, k, l))$.
\end{theorem}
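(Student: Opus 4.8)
The plan is to proceed by induction on the derivation of $\mtyping[\vGamma'']{t}{T}$, computing each side from the definitions of MoT, of $\intp{-}$ and of $\shrink$ and matching them up with the $\shrink$-lemmas proved above. The cases $t = x$, $t = s\ u$ and $t = \lambda x. t'$ are essentially routine, since MoT commutes with these term formers and $\intp{-}$ is compositional, so $\shrink$ is simply threaded through. For $t = x$ one checks that $\shrink$ preserves the lookup of $x$ in the topmost environment: when $l \ge 1$ the top environment is untouched, and when $l = 0$ the lookup of $x \in \Delta_0$ inside the $(\Gamma_k,\Delta_0)$-environment agrees with the lookup in its $\Delta_0$-component. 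For $t = \lambda x. t'$ one additionally invokes \Cref{lem:shrink-mon-comm} to commute $\shrink$ past the monotonicity step $\vrho[\vgamma]$ appearing in $\intp{\lambda x. t'}$, together with the immediate fact that $\shrink$ commutes with extending the topmost context by the argument $a$.

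The substantive cases are $t = \boxit{t'}$ and $t = \unbox n {t'}$. For $t = \boxit{t'}$ we have $t\{k/l\} = \boxit{(t'\{k/l+1\})}$; unfolding $\intp{\boxit{t'}}$ on both sides and applying the induction hypothesis to $t'$ at level $l+1$ reduces the goal to $\shrink(((1,\vrho),*), k, l+1) = ((1, \shrink(\vrho, k, l)), *)$, which is exactly the third defining equation of $\shrink$. For $t = \unbox n {t'}$ we split on the comparison of $n$ with $l$, mirroring the hypotheses of the $\shrink$-lemmas. If $n \le l$, then $t\{k/l\} = \unbox n {(t'\{k/l-n\})}$; unfolding $\intp{\unbox n {t'}}$ on both sides and using the induction hypothesis on $t'$ at level $l-n$, the two sides coincide once we know that $\shrink$ leaves the relevant truncation offset unchanged and commutes with truncation, i.e. \Cref{lem:L-shrink-1} and \Cref{lem:trunc-shrink-1}. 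If $n > l$, then MoT gives $t\{k/l\} = \unbox{k+n-1}{t'}$ with $t'$ unchanged; \Cref{lem:L-shrink-2} makes the two truncation offsets agree, and when $k \ne 0$, \Cref{lem:trunc-shrink-2} makes the truncated environments agree, closing this subcase.

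The delicate point — and the step I expect to be the main obstacle — is the remaining subcase $n > l$ with $k = 0$, where $t\{0/l\} = \unbox{n-1}{t'}$. When $n > l+1$ it is handled by \Cref{lem:trunc-shrink-3} exactly as above. When $n = l+1$, however, \Cref{lem:trunc-shrink-3} is unavailable, and, as observed just before \Cref{lem:trunc-shrink-2}, the relevant environments do not even inhabit interpretations of the same context stack: $\trunc\vrho l$ is an environment for $\trunc{\vGamma'}l = \vGamma;(\Gamma_0,\Delta_0)$ while $\trunc{\shrink(\vrho,0,l)}{l+1}$ is one for $\trunc{\vGamma''}{l+1} = \vGamma;\Gamma_0$, the two stacks differing by a local weakening of the topmost context by $\Delta_0$. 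Since $t'$ is literally the same term, typed in $\vGamma;\Gamma_0$ on the right and in $\vGamma;(\Gamma_0,\Delta_0)$ on the left, the two interpretations are related by a routine compatibility of $\intp{-}$ with context weakening: the interpretation of $t'$ in the larger stack equals the one in the smaller stack precomposed with the evident projection forgetting the $\Delta_0$-component of the topmost environment. It then remains to verify, by unfolding $\shrink$ through its $l$ recursive steps followed by its $k=0,l=0$ base equation, that $\trunc{\shrink(\vrho,0,l)}{l+1}$ is exactly that projection applied to $\trunc\vrho l$; combined with \Cref{lem:L-shrink-2} this settles the boundary subcase and completes the induction.
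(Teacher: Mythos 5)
Your proposal is correct and follows essentially the same route as the paper's proof: induction on the term, the third defining equation of $\shrink$ for the $\tbox$ case, \Cref{lem:shrink-mon-comm} for $\lambda$, \Cref{lem:L-shrink-1,lem:trunc-shrink-1} for $\unbox n$ with $n \le l$, and \Cref{lem:L-shrink-2} together with \Cref{lem:trunc-shrink-2,lem:trunc-shrink-3} for $n > l$. Your handling of the boundary subcase $k = 0$, $n = l + 1$ — where the two environments live over different context stacks differing by a local weakening of the topmost context, and the interpretations agree because $t'$ cannot mention the extra $\Delta_0$-components — is exactly the weakening observation the paper makes at that point, only stated a bit more explicitly.
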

\begin{proof}
  We proceed by induction on $t$.
  \begin{itemize}[label=Case]
  \item $t = x$, immediate by analyzing all cases of $\shrink$. 
  \item $t = \boxit t'$,
    \begin{align*}
      \intp{t\{k/l\}}_{\;\vDelta}(\vrho)
      &= \intp{t'\{k/l+1\}}_{\vDelta; \cdot}((1, \vrho), \cdot) \\
      &= \intp{t'}_{\vDelta; \cdot}(\shrink(((1, \vrho), \cdot), k, l + 1))
        \byIH\\
      &= \intp{t'}_{\vDelta; \cdot}((1, \shrink(\vrho, k, l + 1)), \cdot) \\
      &= \intp{t}_{\;\vDelta}(\shrink(\vrho, k, l))
    \end{align*}
    
  \item $t = \unbox n t'$,
    \begin{itemize}[label=Subcase]
    \item $n \le l$, let $n' = \Ltotal{\shrink(\vrho, k, l)} n$ and $\vrho' =
      \shrink(\vrho, k, l)$. 
      \begin{align*}
        \intp{t\{k/l\}}_{\;\vDelta}(\vrho)
        &= \intp{t'\{k/l-n\}}_{\vDelta|_{\Ltotal\vrho n}}(\trunc \vrho n)[\sextt{\vect \id}{\Ltotal\vrho n}] \\
        &= \intp{t'}_{\trunc\vDelta{\Ltotal\vrho n}}(\shrink(\trunc \vrho n, k, l - n))[\sextt{\vect \id}{\Ltotal\vrho n}]
          \byIH \\
        &= \intp{t'}_{\trunc\vDelta{n'}}(\trunc{\vrho'} n)[\sextt{\vect \id}{n'}]
          \tag{by \Cref{lem:L-shrink-1,lem:trunc-shrink-1}} \\
        &= \intp{t}_{\;\vDelta}(\shrink(\vrho, k, l))
      \end{align*}
      
    \item $n > l$, let $n' = \Ltotal{\shrink(\vrho, k, l)}n$ and $\vrho' =
      \shrink(\vrho, k, l)$. 
      \begin{align*}
        \intp{t\{k/l\}}_{\;\vDelta}(\vrho)
        = \intp{t'}_{\trunc\vDelta{(\Ltotal\vrho{k + n - 1})}}(\trunc\vrho{(k + n - 1)})[\sextt{\vect \id}{\Ltotal \vrho{k + n - 1}}]
      \end{align*}
      Notice that there is an implicit syntactic weakening here for $t'$ for the
      interpretation to make sense. 
      \begin{align*}
        \intp{t}_{\;\vDelta}(\shrink(\vrho, k, l))
        = \intp{t'}_{\trunc\vDelta{n'}}(\trunc{\vrho'}n)[\sextt{\vect \id}{n'}]
      \end{align*}
      By \Cref{lem:L-shrink-2}, we can show $n' = \Ltotal\vrho{k + n - 1}$ but as
      previously discussed, we cannot have $\trunc\vrho{(k + n - 1)} = \trunc{\vrho'}n$ in
      general, so we have not done here. In fact, counterexamples for this equation
      only exist when $k = 0$ and $n = l + 1$.

      Nevertheless, we observe that
      $\intp{t'}_{\trunc\vDelta{n'}}(\trunc\vrho{(k + n - 1)}) =
      \intp{t'}_{\trunc\vDelta{n'}}(\trunc{\vrho'}n)$ holds. Notice that $\trunc\vrho{(k + n - 1)} =
      \trunc\vrho l \in
      \intp{\vGamma; (\Gamma_0, \Delta_0)}_{\trunc\vDelta{n'}}$ only
      has more values than $\trunc{\vrho'}n = \trunc{\vrho'}{(l + 1)} \in \intp{\vGamma; \Gamma_0}_{\trunc\vDelta{n'}}$ in the topmost context.
      $\intp{t'}_{\trunc\vDelta{n'}}(\trunc\vrho{(k + n - 1)}) =
      \intp{t'}_{\trunc\vDelta{n'}}(\trunc{\vrho'}n)$ can be seen as weakening as $t'$ does not
      refer to the extra values in $\trunc \vrho n$. 
    \end{itemize}
    
  \item $t = \lambda x : S. t'$, then
    \begin{align*}
      \intp{t\{k/l\}}_{\;\vDelta}(\vrho)
      &= (\vDelta', \vgamma : \vDelta' \To_w \vDelta, a : \intp{S}_{\;\vDelta'})
        \mapsto \intp{t'\{k/l\}}_{\;\vDelta'}((\pi, (\rho, a)))
        \tag{where $(\pi, \rho) := \vrho[\vgamma]$} \\
      &= (\vDelta', \vgamma : \vDelta' \To_w \vDelta, a : \intp{S}_{\;\vDelta'})
        \mapsto \intp{t'}_{\;\vDelta'}(\shrink((\pi, (\rho, a)), k, l)) \\
      &= (\vDelta', \vgamma : \vDelta' \To_w \vDelta, a : \intp{S}_{\;\vDelta'})
        \mapsto \intp{t'}_{\;\vDelta'}(\pi', (\rho', a))
      \tag{by \Cref{lem:shrink-mon-comm}, where $(\pi', \rho') := \shrink(\vrho, k,
        l)[\vgamma]$}\\
      &= \intp{t}_{\;\vDelta}(\shrink(\vrho, k, l))
    \end{align*}
    
  \item $t = t'\ s$, immediate by IH.
  \end{itemize}
\end{proof}

\begin{corollary}\labeledit{cor:modal-trans}
  If $\mtyping[\trunc \vGamma n; \cdot] t T$, then given
  $\vrho \in \intp{\vGamma}_{\;\vDelta}$,
  $\intp{t\{n/0\}}_{\;\vDelta}(\vrho) = \intp{t}_{\;\vDelta}((\Ltotal\vrho n, \trunc \vrho n), *)$.
\end{corollary}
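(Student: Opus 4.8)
The plan is to derive this directly from the preceding theorem, specialized to $k := n$ and $l := 0$, followed by a brief unfolding of $\shrink$. The first thing to do is to line up the notation: in the theorem the common prefix is written $\vGamma$, but here it plays the role of $\trunc{\vGamma}{n+1}$ (which is meaningful, since writing $\trunc{\vGamma}{n};\cdot$ already presupposes $n < |\vGamma|$). Taking the topmost context $\Delta_0$ of the theorem's $\vGamma''$ to be the empty context $\cdot$ and $\Gamma_0$ to be the top context of $\trunc{\vGamma}{n}$, the theorem's $\vGamma'' = \trunc{\vGamma}{n+1};\Gamma_0;\cdot$ is exactly $\trunc{\vGamma}{n};\cdot$, and its domain $\vGamma' = \trunc{\vGamma}{n+1};\Gamma_0;\cdots;(\Gamma_n,\cdot)$ is exactly $\vGamma$, once we name the top $n+1$ contexts of $\vGamma$ as $\Gamma_0,\ldots,\Gamma_n$ (so that $\Gamma_1,\ldots,\Gamma_n$ are the contexts $\trunc{\_}{n}$ discards and $(\Gamma_n,\cdot) = \Gamma_n$). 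With these identifications in place, the typing of $t\{n/0\}$ in $\vGamma$ is just an instance of \Cref{lem:mtran-typ}, and the theorem gives $\intp{t\{n/0\}}_{\;\vDelta}(\vrho) = \intp{t}_{\;\vDelta}(\shrink(\vrho, n, 0))$ for every $\vrho \in \intp{\vGamma}_{\;\vDelta}$.

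It then remains to show $\shrink(\vrho, n, 0) = ((\Ltotal{\vrho}{n}, \trunc{\vrho}{n}), *)$, which I would establish by a case split on $n$ using the first two clauses of the definition of $\shrink$. If $n = 0$, write $\vrho = (\pi, \rho)$ with $\rho \in \intp{\Gamma_0, \cdot}_{\;\vDelta} = \intp{\Gamma_0}_{\;\vDelta}$; the components extracted from $\rho$ are $\rho_1 = \rho$ and $\rho_2 = * \in \intp{\cdot}_{\;\vDelta}$, so $\shrink(\vrho, 0, 0) = ((0, (\pi, \rho)), *) = ((0, \vrho), *)$, which is $((\Ltotal{\vrho}{0}, \trunc{\vrho}{0}), *)$. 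If $n = 1 + n'$, the second clause gives directly $\shrink(\vrho, 1+n', 0) = ((\Ltotal{\vrho}{1+n'}, \trunc{\vrho}{1+n'}), \rho_2)$ with $\rho_2 = * \in \intp{\cdot}_{\;\vDelta}$, i.e.\ exactly $((\Ltotal{\vrho}{n}, \trunc{\vrho}{n}), *)$. Combining the two steps concludes the argument.

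Since the corollary is really just a repackaging of the theorem, I do not expect a genuine conceptual obstacle; the point needing the most care is the bookkeeping in the first step --- checking that $\trunc{\vGamma}{n};\cdot$ and $\vGamma$ honestly instantiate the $\Gamma_0;\cdots;(\Gamma_k,\Delta_0);\cdots$ template used by the theorem and by $\shrink$, including the degenerate instances $\Delta_0 = \cdot$ and $\trunc{\vGamma}{n+1} = \epsilon$ (the latter when $|\vGamma| = n+1$), neither of which is problematic. Along the way one should also record that the right-hand side is well-typed, i.e.\ $((\Ltotal{\vrho}{n}, \trunc{\vrho}{n}), *) \in \intp{\trunc{\vGamma}{n};\cdot}_{\;\vDelta}$; this is immediate from the typing of truncation for the interpretation of context stacks, $\trunc{\vrho}{n} \in \intp{\trunc{\vGamma}{n}}_{\trunc{\vDelta}{\Ltotal{\vrho}{n}}}$ with $\Ltotal{\vrho}{n} < |\vDelta|$.
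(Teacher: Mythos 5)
Your proposal is correct and follows the same route as the paper: instantiate the preceding theorem at $k := n$, $l := 0$ (with $\Delta_0 = \cdot$), and then observe that $\shrink(\vrho, n, 0) = ((\Ltotal\vrho n, \trunc\vrho n), *)$, which you verify by the two base clauses of $\shrink$. The paper's proof is exactly this specialization plus that identity, so your write-up just makes the bookkeeping explicit.
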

\begin{proof}
  Specialize the previous theorem. Also notice
  $\shrink(\vrho, n, 0) = ((\Ltotal\vrho n, \trunc \vrho n), *)$. 
\end{proof}

\subsection{Relation with Term Substitutions}

Next we consider the relation between term substitution and evaluation
environments. This relation is much more classical than the previous one.

\begin{definition}
  The $\inser$ operation inserts a semantic value into an interpretation of context
  stacks. More formally, given $\vrho \in \intp{\vGamma; (\Gamma, \Gamma');
    \vGamma'}_{\;\vDelta}$ where $|\vGamma'| = n$, and $a \in
  \intp{T}_{\trunc\vDelta{\Ltotal\vrho n}}$, then $\inser(\vrho, n, a) \in \intp{\vGamma;
    (\Gamma, x : T, \Gamma'); \vGamma'}_{\;\vDelta}$.
  \begin{align*}
    \inser((\pi, \rho), 0, a) &:= (\pi, \rho')
                              \tag{where $\rho'$ is $\rho$ with $a$ inserted in the proper spot} \\
    \inser(((m, \vrho'), \rho), 1 + n', a) &:= ((m, \inser(\vrho', n', a)), \rho)
  \end{align*}
\end{definition}

\begin{lemma}\labeledit{lem:insert-L}
  Given $\vrho \in \intp{\vGamma; (\Gamma, \Gamma'); \vGamma'}_{\;\vDelta}$ where
  $|\vGamma'| = n$, and $a \in \intp{T}_{\trunc\vDelta{\Ltotal\vrho n}}$, then
  $\Ltotal{\inser(\vrho, n, a)} m = \Ltotal\vrho m$.
\end{lemma}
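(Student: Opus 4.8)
The plan is to prove the identity by induction on $n = |\vGamma'|$, with an inner case split on $m$. The guiding observation is that $\inser(\vrho,n,a)$ only ever rewrites one local-context component (replacing some $\rho$ by $\rho'$, which is $\rho$ with $a$ spliced into the appropriate variable slot) and never touches any of the natural-number offset components sitting in the nested $\Sigma$-types, whereas $\Ltotal{\_}{\_}$ is computed purely from those offsets and discards the local components entirely. So both sides should coincide by a direct unfolding, exactly in the spirit of \Cref{lem:L-shrink-1}.

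In the base case $n = 0$ we have $\vGamma' = \epsilon$, so $(\Gamma,\Gamma')$ is the topmost context, $\vrho$ has the form $(\pi,\rho)$, and $\inser(\vrho,0,a) = (\pi,\rho')$. If $m = 0$ both sides are $0$ by definition of $\Ltotal{\_}{\_}$. If $m = 1 + m'$ then (for the truncation to be meaningful) $\pi$ must itself have the form $(k,\vrho_1)$, and both $\Ltotal{(\pi,\rho')}{1+m'}$ and $\Ltotal{(\pi,\rho)}{1+m'}$ unfold to $k + \Ltotal{\vrho_1}{m'}$, since $\Ltotal{\_}{\_}$ ignores the second projection. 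In the step case $n = 1 + n'$ we have $\vrho = ((k,\vrho_1),\rho)$ and $\inser(\vrho,n,a) = ((k,\inser(\vrho_1,n',a)),\rho)$; for $m = 0$ both sides are $0$, and for $m = 1 + m'$ the left side unfolds to $k + \Ltotal{\inser(\vrho_1,n',a)}{m'}$ and the right to $k + \Ltotal{\vrho_1}{m'}$, which agree by the induction hypothesis applied to $\vrho_1$, $n'$ and $m'$.

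The only point requiring a little care is keeping the two numeric parameters straight: $m$ ranges over all valid truncation sizes $m < |\vGamma;(\Gamma,\Gamma');\vGamma'|$ while $n$ is fixed by the statement, and in the step case one must invoke the induction hypothesis with $a$ at its retyped form $a \in \intp{T}_{\trunc\vDelta{\Ltotal{\vrho_1}{n'}}}$ — but this is precisely the typing that well-definedness of the recursive call $\inser(\vrho_1,n',a)$ already provides, so no genuine obstacle arises. Beyond this bookkeeping the argument is pure calculation.
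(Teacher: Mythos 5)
Your proof is correct and rests on exactly the observation the paper uses: $\inser$ only rewrites a local-environment component and never touches the numeric offsets, while $\Ltotal{\_}{\_}$ reads only those offsets, so the paper simply declares the lemma immediate. Your explicit induction on $n$ with the case split on $m$ is just a spelled-out verification of that same fact, and the bookkeeping you flag about retyping $a$ is indeed harmless.
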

\begin{proof}
  This is immediate because $\inser$ does not touch the numbers inside. 
\end{proof}

\begin{lemma}\labeledit{lem:insert-trunc-1}
  Given $\vrho \in \intp{\vGamma; (\Gamma, \Gamma'); \vGamma'}_{\;\vDelta}$ where
  $|\vGamma'| = n$, $a \in \intp{T}_{\trunc\vDelta{\Ltotal\vrho n}}$ and $m \le n$, then
  $\trunc{\inser(\vrho, n, a)} m = \inser(\trunc\vrho m, n - m, a)$. 
\end{lemma}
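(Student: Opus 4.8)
The plan is to proceed by induction on $m$ (equivalently, on the derivation of $m \le n$). Both $\inser$ and truncation on interpretations of context stacks recurse on their numeric argument, so the two sides will unfold in lockstep and the proof is a short computation once the indices are lined up.

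For the base case $m = 0$, truncation by $0$ is the identity, so the left-hand side is just $\inser(\vrho, n, a)$; on the right, $\trunc\vrho 0 = \vrho$ and $n - 0 = n$, so the right-hand side is also $\inser(\vrho, n, a)$, and we are done. For the step case $m = 1 + m'$, from $m \le n$ we get $n = 1 + n'$ with $m' \le n'$, hence $\vGamma' = \vGamma''; \Gamma_0$ with $|\vGamma''| = n'$. By the product shape of $\intp{\vGamma; (\Gamma, \Gamma'); \vGamma''; \Gamma_0}_{\;\vDelta}$ we may write $\vrho = ((p, \vrho'), \rho)$ with $\vrho' \in \intp{\vGamma; (\Gamma, \Gamma'); \vGamma''}_{\trunc\vDelta p}$. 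Unfolding $\inser$ once gives $\inser(\vrho, 1 + n', a) = ((p, \inser(\vrho', n', a)), \rho)$, and peeling one truncation yields
\[
\trunc{\inser(\vrho, n, a)}{1 + m'} = \trunc{\inser(\vrho', n', a)}{m'}.
\]
At this point I would invoke the induction hypothesis on $\vrho'$, $n'$, $m'$, giving $\inser(\trunc{\vrho'}{m'}, n' - m', a)$; this is exactly $\inser(\trunc\vrho{1 + m'}, n - m, a)$ after unfolding $\trunc{((p,\vrho'),\rho)}{1 + m'} = \trunc{\vrho'}{m'}$ and the subtraction $(1 + n') - (1 + m') = n' - m'$. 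Comparing the two expressions closes the case.

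The only point needing care — bookkeeping rather than a genuine obstacle — is checking that $a$ lives in the fiber required by the induction hypothesis: the hypothesis wants $a \in \intp{T}_{\trunc{(\trunc\vDelta p)}{\Ltotal{\vrho'}{n'}}}$, while we are given $a \in \intp{T}_{\trunc\vDelta{\Ltotal\vrho n}}$. These agree because $\Ltotal\vrho n = \Ltotal{((p,\vrho'),\rho)}{1 + n'} = p + \Ltotal{\vrho'}{n'}$ by definition of $\Ltotal{\_}{\_}$ (cf.\ \Cref{lem:L-add-sem}), together with the trivial identity $\trunc\vDelta{p + q} = \trunc{(\trunc\vDelta p)}{q}$ on context stacks. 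Moreover \Cref{lem:insert-L} confirms that $\inser$ leaves all offsets untouched, so no mismatch in $\Ltotal{\_}{\_}$ can arise along the recursion.
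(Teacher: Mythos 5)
Your proof is correct, and it is just the detailed elaboration of what the paper dismisses as ``immediate considering the behavior of $\inser$'': an induction on $m$ (with $n$ peeled in lockstep) where both $\inser$ and truncation unfold one layer of the pair structure, plus the definitional identity $\Ltotal\vrho{(1+n')} = p + \Ltotal{\vrho'}{n'}$ to place $a$ in the right fiber. No gaps; the bookkeeping point you flag is exactly the only thing that needed checking.
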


\begin{lemma}\labeledit{lem:insert-trunc-2}
  Given $\vrho \in \intp{\vGamma; (\Gamma, \Gamma'); \vGamma'}_{\;\vDelta}$ where
  $|\vGamma'| = n$, $a \in \intp{T}_{\trunc\vDelta{\Ltotal\vrho n}}$ and $m > n$, then
  $\trunc{\inser(\vrho, n, a)}m = \trunc\vrho m$. 
\end{lemma}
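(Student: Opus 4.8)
The plan is to prove this by induction on $n$, mirroring the recursive definition of $\inser$, and to record at the outset the one structural observation that makes the statement well-typed: since $m > n$, the truncation $\trunc{\cdot}m$ cuts the stack strictly above the context $(\Gamma, x : T, \Gamma')$ into which $a$ is inserted, so $\trunc{\vGamma; (\Gamma, x : T, \Gamma'); \vGamma'}m$ and $\trunc{\vGamma; (\Gamma, \Gamma'); \vGamma'}m$ are literally the same context stack, and both sides of the claimed equation therefore live in $\intp{\trunc{\vGamma; (\Gamma, \Gamma'); \vGamma'}m}_{\;\trunc\vDelta{\Ltotal\vrho m}}$. Because $m > n \ge 0$ we always have $m = 1 + m'$ for some $m'$, so both $\inser$ and $\trunc$ peel the stack from the top in every case.

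First I would handle the base case $n = 0$. Here $\vGamma'$ is empty and $\vrho = (\pi, \rho)$ with $\pi = (k, \vrho')$, and $\inser(\vrho, 0, a) = (\pi, \rho'')$ changes only the topmost local environment $\rho$ to $\rho''$. Since $m = 1 + m' \ge 1$, the definition of truncation gives $\trunc{\inser(\vrho, 0, a)}m = \trunc{\vrho'}{m'} = \trunc{\vrho}m$: the truncation discards the top local component outright, so the replacement of $\rho$ by $\rho''$ is invisible. For the step case $n = 1 + n'$ we have $\vrho = ((k, \vrho'), \rho)$ and $\inser(\vrho, n, a) = ((k, \inser(\vrho', n', a)), \rho)$; writing $m = 1 + m'$ with $m' > n'$, truncation peels the top to give $\trunc{\inser(\vrho, n, a)}m = \trunc{\inser(\vrho', n', a)}{m'}$, which the induction hypothesis (applicable since $m' > n'$) rewrites to $\trunc{\vrho'}{m'} = \trunc{\vrho}m$.

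The hard part, such as it is, is essentially nonexistent: this is a routine induction, strictly simpler than the proof of \Cref{lem:insert-trunc-1}, and the only things to watch are the type-matching observation above and the bookkeeping that the hypothesis $m > n$ descends to $m' > n'$ under simultaneous decrement in the inductive step.
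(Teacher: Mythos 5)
Your proof is correct and is just the spelled-out version of what the paper leaves implicit: the paper discharges \Cref{lem:insert-trunc-1,lem:insert-trunc-2} together with the remark that they are "immediate considering the behavior of $\inser$", and your induction on $n$ (peeling the top component, with the $n=0$ case noting that truncation by $m \ge 1$ discards the modified local environment) is exactly that behavior made explicit.
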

\begin{proof}
  The previous two lemmas should be immediate considering the behavior of $\inser$. 
\end{proof}

\begin{lemma}\labeledit{lem:insert-mon}
  Given $\vrho \in \intp{\vGamma; (\Gamma, \Gamma'); \vGamma'}_{\;\vDelta}$ where
  $|\vGamma'| = n$, $a \in \intp{T}_{\trunc\vDelta{\Ltotal\vrho n}}$ and
  $\vgamma : \vDelta' \To_w \vDelta$, then
  $\inser(\vrho, n, a)[\vgamma] = \inser(\vrho[\vgamma], n, a[\trunc\vgamma{\Ltotal\vrho n}])$.
\end{lemma}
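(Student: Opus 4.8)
The plan is to proceed by induction on $n$ (equivalently, on the length of $\vGamma'$), mirroring the recursion of $\inser$. Before starting I would record the well-typedness side conditions: by the truncation law for unified weakenings $\trunc\vgamma{\Ltotal\vrho n} : \trunc{\vDelta'}{\Ltotal\vgamma{\Ltotal\vrho n}} \To_w \trunc\vDelta{\Ltotal\vrho n}$, and by \Cref{lem:L-mon} we have $\Ltotal\vgamma{\Ltotal\vrho n} = \Ltotal{\vrho[\vgamma]}n$, so $a[\trunc\vgamma{\Ltotal\vrho n}] \in \intp{T}_{\trunc{\vDelta'}{\Ltotal{\vrho[\vgamma]}n}}$, which is exactly the type required for the right-hand side's call to $\inser$. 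This confirms both sides are well-formed and live in $\intp{\vGamma; (\Gamma, x : T, \Gamma'); \vGamma'}_{\;\vDelta'}$.

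For the base case $n = 0$ write $\vrho = (\pi, \rho)$ with $\rho \in \intp{\Gamma, \Gamma'}_{\;\vDelta}$; since $\Ltotal\vrho 0 = 0$ and $\trunc\vgamma 0 = \vgamma$, the goal collapses to showing that applying $\vgamma$ to the flat environment $\rho$ with $a$ inserted at the $x$-slot agrees with inserting $a[\vgamma]$ at the $x$-slot of $\rho[\vgamma]$ (the $\pi$-component is untouched by $\inser$ and so matches trivially). Because the monotone action of $\intp{\Gamma, x : T, \Gamma'}$ is the componentwise action on the iterated $\hat\times$, this is a routine induction on $\Gamma'$, peeling off one product factor at a time.

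For the step $n = 1 + n'$ write $\vGamma' = \vGamma''; \Gamma_0$ with $|\vGamma''| = n'$ and $\vrho = ((m, \vrho'), \rho)$. Unfolding $\inser$ at the stack layer and then the monotone action of the context-stack interpretation gives
\[
  \inser(\vrho, 1 + n', a)[\vgamma] = \big((\Ltotal\vgamma m,\ \inser(\vrho', n', a)[\trunc\vgamma m]),\ \rho[\vgamma]\big),
\]
whereas, using $\vrho[\vgamma] = ((\Ltotal\vgamma m, \vrho'[\trunc\vgamma m]), \rho[\vgamma])$,
\[
  \inser(\vrho[\vgamma], 1 + n', a') = \big((\Ltotal\vgamma m,\ \inser(\vrho'[\trunc\vgamma m], n', a')),\ \rho[\vgamma]\big)
\]
with $a' := a[\trunc\vgamma{\Ltotal\vrho{1+n'}}]$. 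Applying the induction hypothesis to $\vrho'$ with the weakening $\trunc\vgamma m$ rewrites the middle component of the first display as $\inser(\vrho'[\trunc\vgamma m], n', a[\trunc{(\trunc\vgamma m)}{\Ltotal{\vrho'}{n'}}])$, so the two sides coincide once we check $\trunc{(\trunc\vgamma m)}{\Ltotal{\vrho'}{n'}} = \trunc\vgamma{\Ltotal\vrho{1 + n'}}$; this is immediate from $\Ltotal\vrho{1 + n'} = m + \Ltotal{\vrho'}{n'}$ (the defining clause of $\Ltotal$) together with \Cref{lem:trunc-sum}, which holds for unified weakenings since they inherit all truncation laws from unified substitutions.

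I expect the step case to be pure bookkeeping; the only genuinely delicate point is the base case, where one cannot invoke any prior lemma and must instead appeal to the componentwise definition of the monotone action on flat contexts to see that weakening commutes with inserting a value in the middle of a context. Everything else is arithmetic on truncation offsets already packaged by \Cref{lem:L-mon} and \Cref{lem:trunc-sum}.
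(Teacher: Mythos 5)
Your proof is correct and follows essentially the same route as the paper's: induction on $n$, with the step case unfolding $\inser$ and the monotone action, applying the IH to $\vrho'$ along $\trunc\vgamma m$, and recombining the truncations via $\Ltotal\vrho{(1+n')} = m + \Ltotal{\vrho'}{n'}$ and \Cref{lem:trunc-sum}. The only difference is that you spell out the base case (componentwise commutation of weakening with insertion into the flat topmost context), which the paper simply declares immediate.
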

\begin{proof}
  We proceed by induction on $n$.
  \begin{itemize}[label=Case]
  \item $n = 0$, immediate.
  \item $n = 1 + n'$, let $((m, \vrho'), \rho) := \vrho$:
    \begin{align*}
      \inser(\vrho, n, a)[\vgamma]
      &= ((m, \inser(\vrho', n', a)), \rho)[\vgamma] \\
      &= ((\Ltotal\vgamma m, \inser(\vrho', n', a)[\trunc\vgamma m]), \rho[\vgamma]) \\
      &=((\Ltotal\vgamma m, \inser(\vrho'[\trunc\vgamma m], n', a[\trunc{\trunc\vgamma m}{\Ltotal{\vrho'}{n'}}])), \rho[\vgamma])
         \byIH \\
      &=\inser(((\Ltotal\vgamma m, \vrho'[\trunc\vgamma m]), \rho[\vgamma]), 1 + n', a[\trunc\vgamma{(m + \Ltotal{\vrho'}{n'})}]) \\
      &=\inser(\vrho[\vgamma], 1 + n', a[\trunc\vgamma{\Ltotal\vrho{(1 + n')}}])
    \end{align*}
  \end{itemize}
\end{proof}

\begin{lemma}\labeledit{lem:insert-weaken}
  If $\mtyping[\vGamma; (\Gamma, \Gamma'); \vGamma']{t}{T}$, given $\vrho \in \intp{\vGamma;
    (\Gamma, \Gamma'); \vGamma'}_{\;\vDelta}$, $|\vGamma'| = n$ and $a \in
  \intp{T}_{\trunc\vDelta{\Ltotal\vrho n}}$, then $\intp{t}_{\;\vDelta}(\vrho) =
  \intp{t}_{\;\vDelta}(\inser(\vrho, n, a))$.
\end{lemma}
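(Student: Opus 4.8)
The plan is to proceed by structural induction on $t$, inverting the typing derivation $\mtyping[\vGamma;(\Gamma,\Gamma');\vGamma']{t}{T}$ where needed, keeping the insertion offset $n = |\vGamma'|$ and the inserted value $a$ fixed throughout. Morally this is just a weakening argument: since $t$ type checks without the extra binder, its interpretation never projects out $a$; all the real content is checking that at each recursive call the side conditions of the lemma still line up -- in particular that $a$ still lands in the set the induction hypothesis expects -- which is exactly what \Cref{lem:insert-L,lem:insert-trunc-1,lem:insert-trunc-2,lem:insert-mon} (together with \Cref{lem:L-add-sem}) are for. Throughout I would tacitly read $t$ as also well typed in the extended stack, so that both sides of the equation are defined.

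The cheap cases are $x$, $t'\ s$, and $\boxit{t'}$. For $t = x$ the looked-up variable lies in the topmost context of the stack, which $\inser$ never disturbs -- it either sits strictly below the inserted context, inside $\vGamma'$, or inside the inserted context but is distinct from the fresh binder -- so the lookup returns the same value. For $t = t'\ s$ we simply apply the induction hypothesis to $t'$ and to $s$ and unfold the definition of $\intp{t'\ s}$. For $t = \boxit{t'}$ we have $\intp{\boxit{t'}}_{\;\vDelta}(\vrho) = \intp{t'}_{\vDelta;\cdot}((1,\vrho),*)$; we invoke the induction hypothesis on $t'$ in the stack $\vGamma;(\Gamma,\Gamma');\vGamma';\cdot$ at offset $n+1$, noting that $\Ltotal{((1,\vrho),*)}{n+1} = 1 + \Ltotal\vrho n$ and hence $\trunc{(\vDelta;\cdot)}{\Ltotal{((1,\vrho),*)}{n+1}} = \trunc\vDelta{\Ltotal\vrho n}$, so that $a$ already lives in the required set; the result then follows since $\inser(((1,\vrho),*), n+1, a) = ((1,\inser(\vrho,n,a)),*)$ by definition of $\inser$. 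The $\lambda$ case is the one that uses \Cref{lem:insert-mon}: expanding, $\intp{\lambda x : S. t'}_{\;\vDelta}(\vrho)$ is the natural transformation sending $\vgamma : \vDelta'\To_w\vDelta$ and $b : \intp S_{\;\vDelta'}$ to $\intp{t'}_{\;\vDelta'}((\pi,(\rho,b)))$ with $(\pi,\rho) := \vrho[\vgamma]$. By \Cref{lem:insert-mon}, $\inser(\vrho,n,a)[\vgamma] = \inser(\vrho[\vgamma], n, a[\trunc\vgamma{\Ltotal\vrho n}])$, and since the $\lambda$-binder is appended to the topmost context -- which either lies inside $\vGamma'$ or is the inserted context itself, so the insertion offset stays $n$ -- attaching $b$ commutes with the insertion; applying the induction hypothesis to $t'$ in the extended stack at offset $n$ with the relocated value (which lies in the set the hypothesis requires, by \Cref{lem:L-mon} and the truncation law for unified weakenings) discharges the equality pointwise in $\vgamma$ and $b$, hence as natural transformations.

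The main obstacle is $t = \unbox m {t'}$, which forces a case split on $m \le n$ versus $m > n$, i.e.\ on whether the inserted context survives truncation by $m$. Recall $\intp{\unbox m {t'}}_{\;\vDelta}(\vrho) = \intp{t'}_{\trunc\vDelta{\Ltotal\vrho m}}(\trunc\vrho m)[\sextt{\vect\id}{\Ltotal\vrho m}]$. When $m > n$ the inserted context has been dropped, so \Cref{lem:insert-trunc-2} gives $\trunc{\inser(\vrho,n,a)}m = \trunc\vrho m$ and \Cref{lem:insert-L} gives $\Ltotal{\inser(\vrho,n,a)}m = \Ltotal\vrho m$; the two interpretations are then literally equal, with no appeal to the induction hypothesis. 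When $m \le n$ the truncated stack $\trunc{(\vGamma;(\Gamma,\Gamma');\vGamma')}{m}$ still contains the inserted context, now at offset $n-m$; \Cref{lem:insert-trunc-1} gives $\trunc{\inser(\vrho,n,a)}m = \inser(\trunc\vrho m, n-m, a)$ and \Cref{lem:insert-L} keeps the truncation offset fixed, so it remains to apply the induction hypothesis to $t'$ at offset $n-m$ with environment $\trunc\vrho m$ and the same $a$. The delicate point is verifying $a \in \intp{T}_{\trunc{(\trunc\vDelta{\Ltotal\vrho m})}{\Ltotal{\trunc\vrho m}{(n-m)}}}$: using $\trunc{(\trunc\vDelta k)}{k'} = \trunc\vDelta{k+k'}$ and \Cref{lem:L-add-sem} (with $\Ltotal\vrho{m + (n-m)} = \Ltotal\vrho m + \Ltotal{\trunc\vrho m}{(n-m)}$), that set collapses to $\trunc\vDelta{\Ltotal\vrho n}$, which is exactly what the hypothesis supplies. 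I expect this truncation-offset arithmetic, together with the $n = 0$ versus $n > 0$ fussiness in the variable and $\lambda$ cases, to be the only genuinely fiddly part of the argument.
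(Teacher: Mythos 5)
Your proof is correct and takes essentially the same route as the paper, which simply observes that the result is immediate by a straightforward induction because $t$ does not bind the inserted variable; you have spelled out exactly that induction, with the appropriate appeals to \Cref{lem:insert-L,lem:insert-trunc-1,lem:insert-trunc-2,lem:insert-mon} and the truncation-offset arithmetic in the $\tunbox$ case. The only caveat, which you handle tacitly and correctly, is that the statement must be read with the type of $a$ decoupled from the type of $t$ (it is the type of the inserted binder), since the literal phrasing would not be stable under the induction in the $\tbox$ and $\lambda$ cases.
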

\begin{proof}
  This is immediate because $t$ does not bind $x$. This can be seen by a
  straightforward induction.
\end{proof}

\begin{theorem}
  If $\mtyping[\vGamma; (\Gamma, x : S, \Gamma'); \vGamma']{t}{T}$ and
  $\mtyping[\vGamma; (\Gamma, \Gamma')]{s}{S}$, 
  Given $\vrho \in \intp{\vGamma; (\Gamma, \Gamma'); \vGamma'}_{\;\vDelta}$ where
  $|\vGamma'| = n$,
  then
  $\intp{t[s/x]}_{\;\vDelta}(\vrho) = \intp{t}_{\;\vDelta}(\inser(\vrho, n,
  \intp{s}_{\trunc\vDelta{\Ltotal\vrho n}}(\trunc \vrho n)))$. 
\end{theorem}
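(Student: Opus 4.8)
We argue by induction on the derivation of $\mtyping[\vGamma; (\Gamma, x : S, \Gamma'); \vGamma']{t}{T}$ — equivalently, on the structure of $t$ — abbreviating $a := \intp{s}_{\trunc\vDelta{\Ltotal\vrho n}}(\trunc \vrho n)$. The variable case is settled by the fact that the typing rule forces a variable into the topmost context: when $\vGamma'$ is nonempty the variable lives strictly above the slot of $x$, so $t[s/x] = t$ and $\inser(\vrho, n, a)$ (with $n\ge 1$) leaves the topmost component of $\vrho$ untouched, so both sides perform the same lookup; when $\vGamma'$ is empty we have $n = 0$ and either $t = x$, whence $t[s/x] = s$ and looking $x$ up in $\inser(\vrho,0,a)$ returns $a = \intp{s}_{\;\vDelta}(\vrho)$, or $t = y \ne x$, whence both sides look up $y$, a component that $\inser(\vrho,0,a)$ does not alter. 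The application case follows at once from the two inductive hypotheses and the definition of $\intp{t_1\,t_2}$. For $t = \boxit{t'}$ the subterm is typed one context deeper, so the offset relevant to the IH is $n+1$; unfolding the definition of $\intp{\boxit{t'}}$, both sides reduce to comparing $\inser(((1,\vrho),*),\,n+1,\,a')$ with $((1,\inser(\vrho,n,a)),*)$, which agree by the defining clause of $\inser$, and $a' = a$ because $\Ltotal{((1,\vrho),*)}{n+1} = 1 + \Ltotal\vrho n$, $\trunc{((1,\vrho),*)}{n+1} = \trunc\vrho n$, and $\trunc{(\vDelta;\cdot)}{1+\Ltotal\vrho n} = \trunc\vDelta{\Ltotal\vrho n}$.

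The $\unbox$ case is where the truncation bookkeeping bites. Write $t = \unbox m t'$, so $(\unbox m t')[s/x] = \unbox m (t'[s/x])$, and recall $\intp{\unbox m u}_{\;\vDelta}(\vsigma) = \intp{u}_{\trunc\vDelta{\Ltotal\vsigma m}}(\trunc\vsigma m)[\sextt{\vect\id}{\Ltotal\vsigma m}]$. Split on $m \le n$ versus $m > n$. If $m > n$, the context in which $t'$ is typed no longer contains $x$, hence $t'[s/x] = t'$; by \Cref{lem:insert-L} and \Cref{lem:insert-trunc-2} we get $\Ltotal{\inser(\vrho,n,a)}m = \Ltotal\vrho m$ and $\trunc{\inser(\vrho,n,a)}m = \trunc\vrho m$, so both sides coincide verbatim. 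If $m \le n$, then $t'$ is typed in $\vGamma;(\Gamma,x{:}S,\Gamma');\trunc{\vGamma'}{m}$ with $|\trunc{\vGamma'}{m}| = n - m$, and we invoke the IH for $t'$ at the environment $\trunc\vrho m$ with offset $n-m$. Using \Cref{lem:L-add-sem} and the (trivial) distributivity of environment truncation over addition of indices — the analogue of \Cref{lem:trunc-sum} — the value of $s$ demanded by this instance of the IH is seen to be exactly $a$; and \Cref{lem:insert-L} together with \Cref{lem:insert-trunc-1} identifies $\inser(\trunc\vrho m, n-m, a)$ with $\trunc{\inser(\vrho,n,a)}m$ and the truncation offsets correspondingly, closing the case.

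For $t = \lambda y. t'$ we have $(\lambda y. t')[s/x] = \lambda y.(t'[s/x])$ and the two interpretations are presheaf-exponential elements; fix $\vgamma : \vDelta' \To_w \vDelta$ and an argument $b$, and put $(\pi,\rho') := \vrho[\vgamma]$. Applying the IH to $t'$ at the extended environment $(\pi,(\rho',b))$ — the offset is still $n$, since $y$ enters the topmost context — rewrites the left-hand side at $(\vgamma,b)$ as $\intp{t'}_{\;\vDelta'}(\inser((\pi,(\rho',b)),\,n,\,a_{\vgamma,b}))$, where $a_{\vgamma,b}$ is the value of $s$ in the correspondingly truncated environment. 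When $n\ge1$, naturality of $\intp{s}$ together with \Cref{lem:trunc-mon} gives $a_{\vgamma,b} = a[\trunc\vgamma{\Ltotal\vrho n}]$; when $n = 0$ the environment $\trunc{(\pi,(\rho',b))}0$ still carries $b$, but $s$ does not bind $y$, so \Cref{lem:insert-weaken} shows $\intp{s}$ ignores it and the same identity holds (this branch also re-types $s$, invisibly, in the enlarged context). Finally \Cref{lem:insert-mon} rewrites $\inser(\vrho,n,a)[\vgamma]$ as $\inser(\vrho[\vgamma],\,n,\,a[\trunc\vgamma{\Ltotal\vrho n}])$, which, after expanding the product decomposition, is precisely $\inser((\pi,(\rho',b)),\,n,\,a_{\vgamma,b})$, so the right-hand side at $(\vgamma,b)$ matches. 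The main obstacle throughout is keeping the triple consisting of the interpretation of the truncated stack, the truncation offset, and the inserted value synchronized between the ambient statement and the recursive call — which is exactly what \Cref{lem:L-add-sem}, \Cref{lem:insert-L}, \Cref{lem:insert-trunc-1}, \Cref{lem:insert-trunc-2} and \Cref{lem:insert-mon} were proved for — and the single genuinely delicate point is the $n=0$ branch of the $\lambda$ case, handled by the weakening lemma \Cref{lem:insert-weaken}.
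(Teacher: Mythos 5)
Your proposal is correct and follows essentially the same route as the paper's proof: induction on $t$ with the same case split, the same treatment of the $\unbox$ case via $m\le n$ versus $m>n$ using \Cref{lem:L-add-sem,lem:insert-L,lem:insert-trunc-1,lem:insert-trunc-2}, and the same handling of the $\lambda$ case via \Cref{lem:L-mon,lem:trunc-mon}, naturality, \Cref{lem:insert-weaken} and \Cref{lem:insert-mon}. The only differences are presentational (e.g., explicitly splitting $n=0$ from $n\ge 1$ in the $\lambda$ case, and noting $t'[s/x]=t'$ when $m>n$), which the paper leaves implicit.
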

\begin{proof}
  We proceed by induction on $t$ and analyze $n$. Let
  $\vrho' := \inser(\vrho, n, \intp{s}_{\trunc\vDelta{\Ltotal\vrho n}}(\trunc \vrho n))$.
  \begin{itemize}[label=Case]
  \item $t = y$,
    \begin{itemize}[label=Subcase]
    \item $n = 0$ and $y = x$, immediate. Both sides equal
      $\intp{s}_{\;\vDelta}(\vrho)$. 
    \item $n = 0$ and $y \neq x$, immediate. Both sides look up $\vrho$. 
    \item $n > 0$, immediate.
    \end{itemize}
  \item $t = \boxit t'$, then
    \begin{align*}
      \intp{t[s/x]}_{\;\vDelta}(\vrho)
      &= \intp{t'[s/x]}_{\vDelta; \cdot}((1, \vrho); *) \\
      &= \intp{t'}_{\vDelta; \cdot}(\inser(((1, \vrho); *), 1 + n,
        \intp{s}_{\trunc\vDelta{\Ltotal{((1, \vrho); *)}{1 + n}}}(\trunc{((1, \vrho); *)}{(1 + n)})))
        \byIH\\
      &= \intp{t'}_{\vDelta; \cdot}((1, \inser(\vrho, n, \intp{s}_{\trunc\vDelta{\Ltotal\vrho n}}(\trunc \vrho n))), *) \\
      &= \intp{\boxit t'}_{\;\vDelta}(\vrho')
    \end{align*}
    
  \item $t = \unbox m t'$, then
    \begin{align*}
      \intp{t[s/x]}_{\;\vDelta}(\vrho)
      &= \intp{t'[s/x]}_{\trunc\vDelta{\Ltotal\vrho m}}(\trunc\vrho m)[\sextt{\vect \id}{\Ltotal\vrho m}]
    \end{align*}
    Now we need to compare $m$ and $n$ to see if the substitution stays.
    \begin{itemize}[label=Subcase]
    \item $m \le n$, then the substitution is still effective,
      \begin{align*}
        \intp{t[s/x]}_{\;\vDelta}(\vrho) 
        &=\intp{t'}_{\trunc\vDelta{\Ltotal\vrho m}}(\inser(\trunc\vrho m, n - m,
          \intp{s}_{\trunc{(\trunc\vDelta{\Ltotal\vrho m})}{\Ltotal{\trunc\vrho m}{n - m}}}(\trunc{\trunc\vrho m}{(n - m)})))[\sextt{\vect \id}{\Ltotal\vrho m}]
          \byIH \\
        &=\intp{t'}_{\trunc\vDelta{\Ltotal\vrho m}}(\inser(\trunc\vrho m, n - m,
          \intp{s}_{\trunc\vDelta{\Ltotal\vrho n}}(\trunc \vrho n)))[\sextt{\vect \id}{\Ltotal\vrho m}]
           \tag{by \Cref{lem:L-add-sem}} \\
        &=\intp{t'}_{\trunc\vDelta{\Ltotal\vrho m}}(\trunc{\vrho'}m)[\sextt{\vect \id}{\Ltotal\vrho m}]
           \tag{by \Cref{lem:insert-trunc-1}} \\
        &=\intp{\unbox m t'}_{\;\vDelta}(\vrho')
           \tag{notice $\Ltotal\vrho m = \Ltotal{\vrho'} m$ per \Cref{lem:insert-L}}
      \end{align*}
      
    \item $m > n$, then the substitution is truncated. 
      \begin{align*}
        \intp{t[s/x]}_{\;\vDelta}(\vrho)
        &= \intp{t'}_{\trunc\vDelta{\Ltotal\vrho m}}(\trunc\vrho m)[\sextt{\vect \id}{\Ltotal\vrho m} ] \\
        &= \intp{t'}_{\trunc\vDelta{\Ltotal{\vrho'} m}}(\trunc{\vrho'}m)[\sextt{\vect \id}{\Ltotal{\vrho'} m}]
          \tag{by \Cref{lem:insert-L,lem:insert-trunc-2}} \\
        &= \intp{\unbox m t'}_{\;\vDelta}(\vrho')
      \end{align*}
    \end{itemize}
    
  \item $t = \lambda y : S'. t'$, then
    \begin{align*}
      \intp{t[s/x]}_{\;\vDelta}(\vrho)
      &= (\vDelta', \vgamma, a : \intp{S'}_{\;\vDelta'}) \mapsto \intp{t'[s/x]}_{\;\vDelta'}((\pi, (\rho, a)))
        \tag{where $(\pi, \rho) := \vrho[\vgamma]$} \\
      &= (\vDelta', \vgamma, a : \intp{S'}_{\;\vDelta'})
        \mapsto \intp{t'}_{\;\vDelta'}(\inser((\pi, (\rho, a )), n, \intp{s}_{\trunc{\vDelta'}{\Ltotal{(\pi, (\rho, a))}n}}(\trunc{(\pi, (\rho, a))}n)))
        \byIH
    \end{align*}
    We will clear up this equation a little. First,
    \begin{align*}
      \Ltotal{(\pi, (\rho, a))}n = \Ltotal{(\pi, \rho)}n = \Ltotal{\vrho[\vgamma]}n
    \end{align*}
    Next,
    \begin{align*}
      &\ \intp{s}_{\trunc{\vDelta'}{\Ltotal{\vrho[\vgamma]} n}}(\trunc{(\pi, (\rho, a))}n) \\
      =&\ \intp{s}_{\trunc{\vDelta'}{\Ltotal{\vrho[\vgamma]} n}}(\trunc{(\pi, \rho)}n)
      \tag{by \Cref{lem:insert-weaken}} \\
      =&\ \intp{s}_{\trunc{\vDelta'}{\Ltotal{\vrho[\vgamma]} n}}(\trunc{\vrho[\vgamma]}n) \\
      =&\ \intp{s}_{\trunc{\vDelta'}{\Ltotal\vgamma {(\Ltotal\vrho n)}}}(\trunc \vrho n[\trunc\vgamma{\Ltotal\vrho n}])
        \tag{by \Cref{lem:L-mon,lem:trunc-mon}} \\
      =&\ \intp{s}_{\trunc\vDelta{\Ltotal\vrho n}}(\trunc \vrho n)[\trunc\vgamma{\Ltotal\vrho n}]
         \tag{by naturality}
    \end{align*}
    Last, we can pop the $a$ to the top of the environment because $x$ must occur
    before $y$. This concludes the following equation:
    \begin{align*}
      \intp{t[s/x]}_{\;\vDelta}(\vrho)
      &= (\vDelta', \vgamma, a)
        \mapsto \intp{t'}_{\;\vDelta'}((\pi', (\rho', a)))
        \tag{where $(\pi', \rho') := \inser(\vrho[\vgamma], n, \intp{s}_{\trunc\vDelta{\Ltotal\vrho n}}(\trunc \vrho n)[\trunc\vgamma{\Ltotal\vrho n}])$}
    \end{align*}
    However,
    \begin{align*}
      (\pi', \rho') &= \inser(\vrho[\vgamma], n,
                      \intp{s}_{\trunc\vDelta{\Ltotal\vrho n}}(\trunc \vrho n)[\trunc\vgamma{\Ltotal\vrho n}])
      \\
                    &= \inser(\vrho, n, \intp{s}_{\trunc\vDelta{\Ltotal\vrho n}}(\trunc \vrho n))[\vgamma]
                      \tag{by \Cref{lem:insert-mon}}
    \end{align*}
    Therefore
    \begin{align*}
      \intp{t[s/x]}_{\;\vDelta}(\vrho)
      &= (\vDelta', \vgamma, a)
        \mapsto \intp{t'}_{\;\vDelta'}((\pi', (\rho', a))) \\
      &= \intp{t}_{\;\vDelta}(\inser(\vrho, n, \intp{s}_{\trunc\vDelta{\Ltotal\vrho n}}(\trunc \vrho n)))
    \end{align*}
    
  \item $t = t'\ s$, immediate by IH.
  \end{itemize}
\end{proof}

\begin{corollary}\labeledit{cor:subst}
  If $\mtyping[\vGamma; (\Gamma, x : S)]{t}{T}$ and
  $\mtyping[\vGamma; \Gamma]{s}{S}$, 
  Given $(\pi, \rho) \in \intp{\vGamma; \Gamma}_{\;\vDelta}$,
  then
  $\intp{t[s/x]}_{\;\vDelta}((\pi, \rho)) = \intp{t}_{\;\vDelta}((\pi, (\rho,
  \intp{s}_{\;\vDelta}((\pi, \vrho)))))$. 
\end{corollary}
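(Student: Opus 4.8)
The plan is to read off \Cref{cor:subst} as the special case of the immediately preceding theorem in which both the trailing stack suffix $\vGamma'$ and the trailing context segment $\Gamma'$ are empty. First I would instantiate that theorem with $\vGamma' = \epsilon$, so that its parameter $n = |\vGamma'|$ becomes $0$, and with $\Gamma' = \cdot$. Under this choice the domain stack $\vGamma; (\Gamma, x : S, \Gamma'); \vGamma'$ collapses to $\vGamma; (\Gamma, x : S)$ and $\vGamma; (\Gamma, \Gamma')$ collapses to $\vGamma; \Gamma$, so the two typing hypotheses of the theorem become precisely those of the corollary; likewise an environment $\vrho \in \intp{\vGamma; (\Gamma, \Gamma'); \vGamma'}_{\;\vDelta}$ becomes an element of $\intp{\vGamma; \Gamma}_{\;\vDelta}$, which by the definition of the interpretation of context stacks is a pair $(\pi, \rho)$ with $\rho \in \intp{\Gamma}_{\;\vDelta}$.

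Next I would discharge the three truncation-flavoured subterms on the right-hand side of the theorem at $n = 0$: directly from their defining equations $\trunc \vrho 0 = \vrho$ and $\Ltotal \vrho 0 = 0$, whence $\trunc \vDelta {\Ltotal \vrho 0} = \vDelta$, so that $\intp{s}_{\trunc\vDelta{\Ltotal\vrho n}}(\trunc \vrho n)$ is literally $\intp{s}_{\;\vDelta}((\pi,\rho))$. It then remains to unfold $\inser((\pi,\rho), 0, a)$ with $a := \intp{s}_{\;\vDelta}((\pi,\rho))$: the first clause of $\inser$ gives $(\pi, \rho')$, where $\rho'$ is $\rho$ with $a$ inserted at the position of $x$, and since $\Gamma'$ is empty the variable $x$ sits at the very top of the topmost context, so ``the proper spot'' is the end and $\rho' = (\rho, a)$. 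Hence $\inser((\pi,\rho), 0, a) = (\pi, (\rho, a))$, and the conclusion of the theorem becomes exactly the displayed equation of \Cref{cor:subst} (up to the apparent typo $\vrho$ for $\rho$ in that display).

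I do not anticipate a real obstacle: the argument is pure unfolding of definitions once the theorem is available. The only point worth spelling out rather than leaving implicit is the last one — that the $n = 0$ clause of $\inser$, applied in a context segment whose last-bound variable is $x$, places the new semantic value at the top of $\intp{\Gamma}_{\;\vDelta}$ — since this is what makes the shape $(\pi, (\rho, a))$ line up with the product structure $\intp{\Gamma, x : S}_{\;\vDelta} = \intp{\Gamma}_{\;\vDelta} \hat\times \intp{S}_{\;\vDelta}$ assumed on the right of the corollary.
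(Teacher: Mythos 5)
Your proposal is correct and matches the paper's intent exactly: the paper gives no separate proof for \Cref{cor:subst}, treating it as the immediate specialization of the preceding theorem to $\Gamma' = \cdot$ and an empty stack suffix (so $n = 0$), with $\trunc\vrho 0 = \vrho$, $\Ltotal\vrho 0 = 0$, and $\inser((\pi,\rho),0,a) = (\pi,(\rho,a))$, just as you spell out. Your note about the $\vrho$/$\rho$ typo in the corollary's display is also right.
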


\subsection{Completeness}

We shall move on to prove the completeness of the NbE algorithm given above. The
theorem states
\begin{theorem}\labeledit{thm:nbe-completeness}
  (completeness) If $\mtyequiv{t}{t'}{T}$, then $\nbe^T_{\;\vGamma}(t) = \nbe^T_{\;\vGamma}(t')$. 
\end{theorem}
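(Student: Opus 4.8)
The plan is to run the standard PER-based argument for NbE completeness, adapted to the context-stack setting. First I would define, by induction on the type $T$, a partial equivalence relation on each set $\intp{T}_{\;\vGamma}$: at the base type, $a \approx b \in \intp{B}_{\;\vGamma}$ means the two neutrals are literally equal; at $\square T$, $a \approx b \in \intp{\square T}_{\;\vGamma}$ means $a \approx b \in \intp{T}_{\;\vGamma;\cdot}$; and at $S \func T$, $f \approx g$ holds iff for every $\vgamma : \vDelta \To_w \vGamma$ and every $a \approx b \in \intp{S}_{\;\vDelta}$ we have $f(\vgamma, a) \approx g(\vgamma, b) \in \intp{T}_{\;\vDelta}$. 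This lifts pointwise to a PER on $\intp{\Gamma}_{\;\vGamma}$ and, via the $\Sigma$-structure, to a PER $\vrho \approx \vrho'$ on $\intp{\vGamma}_{\;\vDelta}$, where I crucially insist that related environments carry \emph{equal} numeric components in the nested $\Sigma$'s; from this one gets $\Ltotal\vrho n = \Ltotal{\vrho'} n$ and $\trunc\vrho n \approx \trunc{\vrho'} n$ whenever $\vrho \approx \vrho'$ and $n < |\vGamma|$, so that the two sides of an $\unbox$ always land in the same set. I would then record as routine the facts that each of these relations is symmetric and transitive and monotone, i.e.\ $a \approx b \in \intp{T}_{\;\vGamma}$ and $\vgamma : \vDelta \To_w \vGamma$ imply $a[\vgamma] \approx b[\vgamma] \in \intp{T}_{\;\vDelta}$, and likewise for $\intp{\vGamma}$; this is the semantic analogue of the functoriality proofs already established.

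Next I would prove the two adequacy lemmas by mutual induction on $T$: (reflection) if $v$ and $v'$ are the same neutral in $\Ne\ T\ \vGamma$ then $\uparrow^T_{\;\vGamma}(v) \approx \uparrow^T_{\;\vGamma}(v') \in \intp{T}_{\;\vGamma}$; and (reification) if $a \approx b \in \intp{T}_{\;\vGamma}$ then $\downarrow^T_{\;\vGamma}(a) = \downarrow^T_{\;\vGamma}(b)$ as normal forms. The $\square$ cases use the definitions directly ($\uparrow^{\square T}$ prepends $\unbox 1(-)$ and recurses at $\vGamma;\cdot$, $\downarrow^{\square T}$ wraps with $\boxit{-}$), and the function cases are the usual ones, applying reflection to a fresh variable under the weakening $p(\vect\id)$. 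From reflection applied to variables I get that the identity environment is self-related, $\uparrow^{\vGamma} \approx \uparrow^{\vGamma} \in \intp{\vGamma}_{\;\vGamma}$, by induction on $\vGamma$ following the definition of $\uparrow^{\vGamma}$.

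The core step is the fundamental theorem: if $\mtyequiv{t}{t'}{T}$ then for all $\vrho \approx \vrho' \in \intp{\vGamma}_{\;\vDelta}$ we have $\intp{t}_{\;\vDelta}(\vrho) \approx \intp{t'}_{\;\vDelta}(\vrho') \in \intp{T}_{\;\vDelta}$, proved by induction on the derivation of $\mtyequiv{t}{t'}{T}$. Symmetry and transitivity of the conclusion come from those of the PER. The congruence rules follow by unfolding $\intp{\cdot}$ and applying the IH: the $\unbox n$ congruence rule is where the ``equal offsets, related truncations'' clause pays off, and the $\lambda$ congruence rule uses monotonicity to push the extended environment through a weakening; the congruence rules together also yield, as part of the same induction, that $\intp{\cdot}$ preserves the PER on every well-typed term. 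For the $\beta$ rules I would invoke the corollaries already in hand: function $\beta$ reduces to \Cref{cor:subst}, and the $\square$ $\beta$ rule $\unbox{n}(\boxit t) \approx t\{n/0\}$ reduces to \Cref{cor:modal-trans} together with the identity $\shrink(\vrho, n, 0) = ((\Ltotal\vrho n, \trunc\vrho n), *)$. The $\eta$ rules follow by computing both interpretations: $\square$-$\eta$ collapses because $\intp{\boxit{\unbox 1 t}}_{\;\vDelta}(\vrho)$ unfolds, via the definitions of box/unbox interpretation with offset $1$ and functoriality, back to $\intp{t}_{\;\vDelta}(\vrho)$, and function $\eta$ collapses similarly using the presheaf-exponential clause of the PER; in both cases one also appeals to the preservation-of-PER fact just mentioned, since the $\eta$-rule premise is a typing, not an equivalence.

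Completeness is then immediate: given $\mtyequiv{t}{t'}{T}$ over $\vGamma$, apply the fundamental theorem with the self-related identity environment to obtain $\intp{t}_{\;\vGamma}(\uparrow^{\vGamma}) \approx \intp{t'}_{\;\vGamma}(\uparrow^{\vGamma}) \in \intp{T}_{\;\vGamma}$, and then apply the reification lemma to conclude $\downarrow^T_{\;\vGamma}(\intp{t}_{\;\vGamma}(\uparrow^{\vGamma})) = \downarrow^T_{\;\vGamma}(\intp{t'}_{\;\vGamma}(\uparrow^{\vGamma}))$, i.e.\ $\nbe^T_{\;\vGamma}(t) = \nbe^T_{\;\vGamma}(t')$. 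I expect the main obstacle to be bookkeeping rather than conceptual: pinning down the environment PER so that the $\unbox$ congruence and the $\square$-$\beta$ case are well-typed (both sides must inhabit the same $\intp{T}_{\;\vDelta'}$), and threading the truncation/offset laws (\Cref{lem:L-add-sem}, \Cref{lem:L-mon}, \Cref{lem:trunc-mon}) and \Cref{cor:modal-trans} through the modal cases of the fundamental theorem without index slips; the $\beta$/$\eta$ cases for $\square$ are the delicate ones precisely because of the MoT hidden in $\square$-$\beta$ and the offset-$1$ truncation hidden in $\square$-$\eta$.
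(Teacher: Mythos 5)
Your plan is correct, but it takes a genuinely different route from the paper. You carry out the full Kripke-PER construction that the paper only sketches under ``What We Should Do'': a type-indexed PER on $\intp{T}_{\;\vGamma}$, its lift to environments with equal offsets and related truncations, monotonicity of the PER under unified weakenings, mutually proved reflection/reification adequacy lemmas, self-relatedness of $\uparrow^{\vGamma}$, and a fundamental theorem stated over two related environments. The paper instead takes a shortcut: its semantic judgment $\msemtyeq{t}{t'}{T}$ demands literal equality $\intp{t}_{\;\vDelta}(\vrho) = \intp{t'}_{\;\vDelta}(\vrho)$ under a \emph{single} environment, so no value-level PER, no monotonicity-of-PER lemma, and no reification adequacy lemma are needed at all --- the final step is just that $\downarrow^T_{\;\vGamma}$ is a function, so equal arguments give equal normal forms; the work your PER-reflexivity and monotonicity would do is instead done by the separately proved naturality of $\intp{t}$ together with \Cref{lem:insert-weaken}, \Cref{cor:subst} and \Cref{cor:modal-trans}, exactly the lemmas you also invoke in the $\beta$/$\eta$ cases. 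What your approach buys is the standard, more robust argument (it is the shape the paper itself must adopt later for the untyped domain and the dependently typed case, where interpretations of equivalent terms are no longer literally equal); what the paper's approach buys is brevity, exploiting that in this typed presheaf model equivalence already collapses to equality of interpretations. One small bookkeeping point to watch in your version: in the function-$\eta$ case you will still need the semantic weakening fact (the analogue of \Cref{lem:insert-weaken}) in addition to typing-reflexivity, since the body $t$ is evaluated in an environment extended with the fresh argument; this is the same ingredient the paper uses there, so it is an omission of detail rather than a gap in the argument.
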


\subsubsection{What We Should Do}

Ideally, we should prove the completeness theorem through an partial equivalence relation(PER)
between values in the semantics as defined below. However, we take a shortcut here
because we can see that in fact the PER we are about to define induces
equality. Thus, we can just directly go for semantic equality and therefore the
completeness theorem automatically falls into one piece. For the sake of documentary completeness,
we sketch the PER below:
\begin{definition}
  $\Eq_{\;\vGamma}(T)$ is a relation between $\intp{T}_{\;\vGamma}$:
  \begin{align*}
    \Eq_{\;\vGamma}(T) &\subseteq \intp{T}_{\;\vGamma} \times \intp{T}_{\;\vGamma} \\
    \Eq_{\;\vGamma}(B) &:= \{ (a, b) \sep a = b \} \\
    \Eq_{\;\vGamma}(\square T) &:= \{ (a, b) \sep \forall \vgamma : \vDelta \To_w
                               \vGamma. (a[\vgamma], b[\vgamma]) \in
                               \Eq_{\;\vDelta}(T) \} \\
    \Eq_{\;\vGamma}(S \func T) &:= \{ (a, b) \sep \forall \vgamma : \vDelta \To_w
                               \vGamma, (a', b') \in \Eq_{\;\vDelta}(S). (a(\vgamma,
                               a'), b(\vgamma, b')) \in \Eq_{\;\vDelta}(T) \}
  \end{align*}
\end{definition}
We can verify that this relation is indeed a PER.  We generalize $\Eq$ to both
contexts and context stacks.
\begin{definition}
  $\Eq_{\;\vGamma}(\Gamma)$ is a relation between $\intp{\Gamma}_{\;\vGamma}$:
  \begin{align*}
    \Eq_{\;\vGamma}(\Gamma) &\subseteq \intp{\Gamma}_{\;\vGamma} \times \intp{\Gamma}_{\;\vGamma} \\
    \Eq_{\;\vGamma}(\cdot) &:= \{ (*, *) \} \\
    \Eq_{\;\vGamma}(\Gamma, x : T) &:= \{ ((\rho, a), (\rho', b)) \sep (\rho, \rho') \in
                                   \Eq_{\;\vGamma}(\Gamma) \tand (a, b) \in \Eq_{\;\vGamma}(T) \}
  \end{align*}
\end{definition}

\begin{definition}
  $\Eq_{\;\vDelta}(\vGamma)$ is a relation between $\intp{\vGamma}_{\;\vDelta}$:
  \begin{align*}
    \Eq_{\;\vDelta}(\vGamma)
    &\subseteq \intp{\vGamma}_{\;\vDelta} \times \intp{\vGamma}_{\;\vDelta} \\
    \Eq_{\;\vDelta}(\epsilon; \Gamma)
    &:= \{ ((*, \rho), (*, \rho')) \sep (\rho, \rho') \in \Eq_{\;\vDelta}(\Gamma) \} \\
    \Eq_{\;\vDelta}(\vGamma; \Gamma)
    &:= \{(((n, \vrho), \rho), ((n, \vrho'), \rho'))
      \sep (\vrho, \vrho') \in \Eq_{\trunc\vDelta n}(\vGamma) \tand (\rho, \rho') \in \Eq_{\;\vDelta}(\Gamma) \}
  \end{align*}
\end{definition}
At last, we would work with the following notion of semantic typing:
\begin{definition}
  \begin{align*}
    \msemtyeq{t}{t'}{T} :=
    \forall \vDelta', (\vrho, \vrho') \in
    \Eq_{\;\vDelta}(\vGamma). (\intp{t}_{\;\vDelta}(\vrho), \intp{t'}_{\;\vDelta}(\vrho'))
    \in Eq_{\;\vDelta}(T)
  \end{align*}
\end{definition}

\subsubsection{PER Rules}

Instead, we simply work with equality in the semantics, so we actually will work with
the following notion:
\begin{definition}
  We define semantic typing and semantic typing equivalence as follows:
  \begin{align*}
    \msemtyeq{t}{t'}{T} :=
    \mtyping{t}{T} \wedge \mtyping{t'}{T} \wedge \forall \vDelta, \vrho \in \intp{\vGamma}_{\;\vDelta}. \intp{t}_{\;\vDelta}(\vrho) = \intp{t'}_{\;\vDelta}(\vrho)
  \end{align*}
  \begin{align*}
    \msemtyp{t}{T} := \msemtyeq{t}{t}{T}
  \end{align*}
\end{definition}

We proceed to prove the semantic counterpart for the syntactic rules in
\Cref{sec:st:equiv-rules}.

\begin{lemma}
  \begin{mathpar}
    \inferrule*
    {\msemtyeq s t T}
    {\msemtyeq t s T}
  \end{mathpar}
\end{lemma}
\begin{lemma}
  \begin{mathpar}
    \inferrule*
    {\msemtyeq s t T \\ \msemtyeq t u T}
    {\msemtyeq s u T}
  \end{mathpar}
\end{lemma}
\begin{proof}
  We use the fact that $\intp{\vGamma}_{\;\vDelta}$ and $\intp{T}_{\;\vDelta}$ are PER. 
\end{proof}

\subsubsection{Congruence Rules}

\begin{lemma}
  \begin{mathpar}
    \inferrule*
    {x : T \in \Gamma}
    {\msemtyeq[\vGamma; \Gamma] x x T}
  \end{mathpar}
\end{lemma}
\begin{proof}
  Immediate because we are doing the same lookup.
\end{proof}

\begin{lemma}
  \begin{mathpar}
    \inferrule*
    {\msemtyeq[\vGamma; \cdot]{t}{t'}T}
    {\msemtyeq{\boxit t}{\boxit t'}{\square T}}
  \end{mathpar}
\end{lemma}
\begin{proof}
  In fact, since $\intp{\square T}_{\;\vDelta} = \intp{T}_{\vDelta; \cdot}$, the premise
  immediately induces the conclusion.
\end{proof}

\begin{lemma}
  \begin{mathpar}
    \inferrule*
    {\msemtyeq{t}{t'}{\square T} \\
      |\vDelta| = n}
    {\msemtyeq[\vGamma; \vDelta]{\unbox n t}{\unbox n t'}{T'}}  
  \end{mathpar}
\end{lemma}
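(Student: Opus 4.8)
The plan is to unfold the definition of $\msemtyeq{\cdot}{\cdot}{\cdot}$ into its three conjuncts and discharge each in turn. (We read the type in the conclusion as $T$; it is written $T'$ in the statement, but $\tunbox$ applied to a term of type $\square T$ has type $T$.) The two well-typedness conjuncts, $\mtyping[\vGamma; \vDelta]{\unbox n t}{T}$ and $\mtyping[\vGamma; \vDelta]{\unbox n t'}{T}$, come for free: the premise $\msemtyeq{t}{t'}{\square T}$ already contains $\mtyping{t}{\square T}$ and $\mtyping{t'}{\square T}$ as conjuncts, so applying the $\tunbox$ typing rule with the side condition $|\vDelta| = n$ yields both judgments immediately.

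For the remaining, semantic, conjunct, fix a fresh evaluation context stack $\vect\Phi$ and an environment $\vrho \in \intp{\vGamma; \vDelta}_{\;\vect\Phi}$, and set $m := \Ltotal\vrho n$. By the $\tunbox$ clause of the definition of $\intp{\cdot}$,
\[
  \intp{\unbox n t}_{\;\vect\Phi}(\vrho) = \intp{t}_{\;\trunc{\vect\Phi} m}(\trunc \vrho n)[\sextt{\vect\id} m],
\]
and symmetrically for $t'$. By the type signature of truncation on interpretations of context stacks, $\trunc\vrho n \in \intp{\trunc{(\vGamma; \vDelta)}{n}}_{\;\trunc{\vect\Phi} m}$; since $|\vDelta| = n$ we have $\trunc{(\vGamma; \vDelta)}{n} = \vGamma$, so $\trunc\vrho n$ is a legitimate element of $\intp{\vGamma}_{\;\trunc{\vect\Phi} m}$. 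Instantiating the third conjunct of the premise at the context stack $\trunc{\vect\Phi} m$ and the environment $\trunc\vrho n$ gives
\[
  \intp{t}_{\;\trunc{\vect\Phi} m}(\trunc \vrho n) = \intp{t'}_{\;\trunc{\vect\Phi} m}(\trunc \vrho n),
\]
and applying the monotonicity operation $\_[\sextt{\vect\id} m]$ to both sides — it is the functorial action of $\sextt{\vect\id} m$ on $\intp T$ (equivalently on $\intp{\square T}$), hence a well-defined function, and so preserves equalities — produces exactly $\intp{\unbox n t}_{\;\vect\Phi}(\vrho) = \intp{\unbox n t'}_{\;\vect\Phi}(\vrho)$, as required.

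I expect essentially no obstacle here. The only bookkeeping is checking that $m = \Ltotal\vrho n$ and $\trunc\vrho n$ land in the sets claimed above, which is immediate from the type signatures of $\Ltotal{\cdot}{\cdot}$ and $\trunc{\cdot}{\cdot}$ on environments stated alongside their definitions, together with the identity $\trunc{(\vGamma;\vDelta)}{n} = \vGamma$. No appeal to the $\shrink$ or $\inser$ machinery, to the naturality of $\intp{\cdot}$, or to \Cref{cor:modal-trans} and \Cref{cor:subst} is needed; this congruence case is strictly simpler than the $\beta$-equivalences, which do invoke those corollaries, and it parallels the $\tbox$ congruence, which is even more direct because $\intp{\square T}_{\;\vect\Phi} = \intp{T}_{\;\vect\Phi; \cdot}$ holds by definition.
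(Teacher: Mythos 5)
Your proof is correct and follows essentially the same route as the paper's: fix an environment, use truncation to land in $\intp{\vGamma}_{\trunc{\vDelta'}{\Ltotal\vrho n}}$, apply the premise's semantic equality there, and observe that applying $[\sextt{\vect\id}{\Ltotal\vrho n}]$ to equal values yields equal values, which by the $\tunbox$ clause of $\intp{\cdot}$ gives the conclusion (the paper leaves the two typing conjuncts implicit, and you are right that the $T'$ in the statement should be read as $T$).
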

\begin{proof}
  Assume $\vrho \in \intp{\vGamma; \vDelta}_{\;\vDelta'}$, then
  \begin{align*}
    H_1: &\ \msemtyeq{t}{t'}{\square T}
           \tag{assumption} \\
         &\ \trunc \vrho n \in \intp{\vGamma}_{\trunc{\vDelta'}{\Ltotal\vrho n}} \\
         &\ \intp{t}_{\trunc{\vDelta'}{\Ltotal\vrho n}}(\trunc \vrho n) = \intp{t'}_{\trunc{\vDelta'}{\Ltotal\vrho n}}(\trunc \vrho n)
      \tag{by $H_1$}
  \end{align*}
  Since $\intp{\unbox n t}_{\;\vDelta}(\vrho) = \intp{t}_{\trunc\vDelta{\Ltotal\vrho n}}(\trunc \vrho n)[\sextt{\vect \id}{\Ltotal\vrho n}]$, we can conclude
  $\intp{\unbox n t}_{\;\vDelta}(\vrho) = \intp{\unbox n t'}_{\;\vDelta}(\vrho)$. 
\end{proof}

\begin{lemma}
  \begin{mathpar}
    \inferrule*
    {\msemtyeq[\vGamma;(\Gamma, x : S)]{t}{t'}T}
    {\msemtyeq[\vGamma; \Gamma]{\lambda x. t}{\lambda x. t'}{S \func T}}
  \end{mathpar}
\end{lemma}
\begin{lemma}
  \begin{mathpar}
    \inferrule*
    {\msemtyeq{t}{t'}{S \func T} \\
      \msemtyeq{s}{s'}S}
    {\msemtyeq{t\ s}{t'\ s'}T}
  \end{mathpar}
\end{lemma}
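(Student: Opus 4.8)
The statement follows directly from the definition of semantic equivalence by propagating the two hypotheses through the interpretation of applications. The plan is to verify the three conjuncts of $\msemtyeq{t\ s}{t'\ s'}T$ in turn.

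First, the syntactic obligations $\mtyping{t\ s}{T}$ and $\mtyping{t'\ s'}{T}$ are immediate: by definition, each of $\msemtyeq{t}{t'}{S \func T}$ and $\msemtyeq{s}{s'}S$ carries the well-typedness of its constituents, so $\mtyping{t}{S \func T}$, $\mtyping{t'}{S \func T}$, $\mtyping s S$ and $\mtyping{s'}S$ all hold, and the application typing rule then delivers both $\mtyping{t\ s}{T}$ and $\mtyping{t'\ s'}{T}$.

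For the semantic conjunct, I would fix an arbitrary $\vDelta$ and $\vrho \in \intp{\vGamma}_{\;\vDelta}$ and unfold the definition of the interpretation on an application:
\[
  \intp{t\ s}_{\;\vDelta}(\vrho) = \intp{t}_{\;\vDelta}(\vrho, \vect{\id}_{\;\vDelta}, \intp{s}_{\;\vDelta}(\vrho)),
\]
and likewise $\intp{t'\ s'}_{\;\vDelta}(\vrho) = \intp{t'}_{\;\vDelta}(\vrho, \vect{\id}_{\;\vDelta}, \intp{s'}_{\;\vDelta}(\vrho))$. The hypothesis on $t, t'$ gives $\intp{t}_{\;\vDelta}(\vrho) = \intp{t'}_{\;\vDelta}(\vrho)$ as elements of the presheaf exponential $(\intp{S} \hfunc \intp{T})_{\;\vDelta}$, and the hypothesis on $s, s'$ gives $\intp{s}_{\;\vDelta}(\vrho) = \intp{s'}_{\;\vDelta}(\vrho)$ in $\intp{S}_{\;\vDelta}$. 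Substituting both equalities into the displayed expression yields $\intp{t\ s}_{\;\vDelta}(\vrho) = \intp{t'\ s'}_{\;\vDelta}(\vrho)$, which is exactly what the third conjunct requires.

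There is essentially no obstacle here. The only point deserving a moment's attention is that $\intp{t}_{\;\vDelta}(\vrho)$ is a morphism of presheaves rather than a bare element, so one should note that applying two equal such morphisms to the common argument $(\vect{\id}_{\;\vDelta}, -)$ produces equal results — trivially true, since the morphisms are literally equal. Naturality of $\intp{-}$, already established, ensures each side lands in $\intp{T}_{\;\vDelta}$, so nothing further is needed; this case is as immediate as the analogous $t = t'\ s$ cases in the naturality and substitution proofs above.
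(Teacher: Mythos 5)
Your proof is correct and matches the paper's argument, which simply records this case as ``Immediate by premises'': unfolding the interpretation of application and substituting the two equalities $\intp{t}_{\;\vDelta}(\vrho) = \intp{t'}_{\;\vDelta}(\vrho)$ and $\intp{s}_{\;\vDelta}(\vrho) = \intp{s'}_{\;\vDelta}(\vrho)$ is exactly what is intended, and the typing conjuncts follow from the premises as you say.
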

\begin{proof}
  Immediate by premises.
\end{proof}

\subsubsection{$\beta$ Rules}

\begin{lemma}
  \begin{mathpar}
    \inferrule*
    {\msemtyp[\vGamma; \cdot]{t}{T} \\
      |\vDelta| = n}
    {\msemtyeq[\vGamma; \vDelta]{\unbox{n}{(\boxit t)}}{t\{n / 0 \}}{T}}
  \end{mathpar}
\end{lemma}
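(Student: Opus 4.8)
The plan is to unfold the definition of semantic typing equivalence and verify its two conjuncts separately. For well-typedness, from $\msemtyp[\vGamma;\cdot]{t}{T}$ we extract $\mtyping[\vGamma;\cdot]{t}{T}$, whence $\mtyping[\vGamma]{\boxit t}{\square T}$ and therefore $\mtyping[\vGamma;\vDelta]{\unbox{n}{(\boxit t)}}{T}$ by the $\square$-rules together with $|\vDelta| = n$; for the right-hand side, $\mtyping[\vGamma;\vDelta]{t\{n/0\}}{T}$ follows from \Cref{lem:mtran-typ} with $l = 0$ and $k = n$. (Equivalently, both sides are well-typed by the Presupposition lemma \Cref{lem:inv-equiv} applied to the syntactic $\beta$-rule for $\square$.) The substance is the semantic equation.

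For the semantic equation I would fix a context stack $\vDelta'$ (renaming the bound stack of the definition, to avoid the $\vDelta$ of the statement) and $\vrho \in \intp{\vGamma;\vDelta}_{\;\vDelta'}$, and set $m := \Ltotal{\vrho}{n}$. Unfolding $\intp{\unbox{n}{(\cdot)}}$ and then $\intp{\boxit{(\cdot)}}$, the left-hand side is $\intp{t}_{\trunc{\vDelta'}{m};\cdot}((1,\trunc{\vrho}{n}),*)\,[\sextt{\vect{\id}}{m}]$. Since $t$ is typed in $\vGamma;\cdot$, naturality of $\intp{t}$ along $\sextt{\vect{\id}}{m} : \vDelta' \To_w \trunc{\vDelta'}{m};\cdot$ lets me push the weakening inward, rewriting this as $\intp{t}_{\;\vDelta'}\bigl(((1,\trunc{\vrho}{n}),*)[\sextt{\vect{\id}}{m}]\bigr)$. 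I would then compute the monotonicity action of $\sextt{\vect{\id}}{m}$ on $\intp{\vGamma;\cdot}$: using $\Ltotal{(\sextt{\vect{\id}}{m})}{1} = m$, $\trunc{(\sextt{\vect{\id}}{m})}{1} = \vect{\id}$, and $(\trunc{\vrho}{n})[\vect{\id}] = \trunc{\vrho}{n}$ by functoriality, this gives $((1,\trunc{\vrho}{n}),*)[\sextt{\vect{\id}}{m}] = ((m,\trunc{\vrho}{n}),*)$, so the left-hand side equals $\intp{t}_{\;\vDelta'}((m,\trunc{\vrho}{n}),*)$.

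For the right-hand side I would apply \Cref{cor:modal-trans} with its ``$\vGamma$'' instantiated to $\vGamma;\vDelta$: then $\trunc{(\vGamma;\vDelta)}{n} = \vGamma$ because $|\vDelta| = n$, so the corollary's hypothesis is exactly $\mtyping[\vGamma;\cdot]{t}{T}$, and it yields $\intp{t\{n/0\}}_{\;\vDelta'}(\vrho) = \intp{t}_{\;\vDelta'}((\Ltotal{\vrho}{n},\trunc{\vrho}{n}),*) = \intp{t}_{\;\vDelta'}((m,\trunc{\vrho}{n}),*)$. Both sides coincide, which establishes the equation and hence the lemma.

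I do not anticipate a real obstacle. The step needing the most care is the second paragraph: one must invoke naturality of $\intp{t}$ in the correct (contravariant) direction and then track the truncation-offset and truncation behaviour of $\sextt{\vect{\id}}{m}$ through the monotonicity of $\intp{\vGamma;\cdot}$; the mild ambiguity about monotonicity of $\intp{\square T}$ flagged earlier does not intervene here. The only conceptual point is to recognize that the syntactic MoT $\{n/0\}$ appearing on the right is realized semantically, via \Cref{cor:modal-trans}, by precisely the environment rearrangement $\vrho \mapsto ((\Ltotal{\vrho}{n},\trunc{\vrho}{n}),*)$ that the left-hand side reduces to after simplification.
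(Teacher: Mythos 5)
Your proof is correct and follows essentially the same route as the paper: unfold the left-hand side, use naturality of $\intp{t}$ along $\sextt{\vect \id}{\Ltotal\vrho n}$, compute the monotonicity action on $\intp{\vGamma;\cdot}$ to get $((\Ltotal\vrho n, \trunc\vrho n),*)$, and close with \Cref{cor:modal-trans}. You merely spell out two steps the paper leaves implicit (the well-typedness conjuncts and the explicit computation of the weakening action), which is fine.
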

\begin{proof}
  Assume $\vrho \in \intp{\vGamma; \vDelta}_{\;\vDelta'}$, then 
  \begin{align*}
    \intp{\unbox{n}{(\boxit t)}}_{\;\vDelta'}(\vrho)
    &= \intp{\boxit t}_{\trunc{\vDelta'}{\Ltotal\vrho n}}(\trunc \vrho n)[\sextt{\vect \id}{\Ltotal\vrho n}] \\
    &= \intp{t}_{\trunc{\vDelta'}{\Ltotal\vrho n}; \cdot}((1, \trunc \vrho n), *)[\sextt{\vect \id}{\Ltotal\vrho n}] \\
    &= \intp{t}_{\;\vDelta'}(((1, \trunc \vrho n), *)[\sextt{\vect \id}{\Ltotal\vrho n}])
      \tag{by naturality} \\
    &= \intp{t}_{\;\vDelta'}(((\Ltotal\vrho n, \trunc \vrho n), *)) \\
    &= \intp{t\{n/0\}}_{\;\vDelta'}(\vrho)
      \tag{by \Cref{cor:modal-trans}}
  \end{align*}
\end{proof}

\begin{lemma}
  \begin{mathpar}
    \inferrule*
    {\msemtyp[\vGamma;(\Gamma, x : S)]t T \\ \msemtyp[\vGamma; \Gamma] s S}
    {\msemtyeq[\vGamma; \Gamma]{(\lambda x. t) s}{t[s/x]}{T}}
  \end{mathpar}
\end{lemma}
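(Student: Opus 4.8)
The plan is to unfold both sides to elements of $\intp{T}_{\;\vDelta}$ and to recognize the right-hand side as an instance of the substitution corollary (\Cref{cor:subst}). First I would discharge the presupposition obligations built into $\msemtyeq$: from the hypotheses we extract $\mtyping[\vGamma;(\Gamma, x : S)]t T$ and $\mtyping[\vGamma; \Gamma]s S$; applying the $\lambda$-introduction rule and then the application rule gives $\mtyping[\vGamma; \Gamma]{(\lambda x. t)\ s}{T}$, while the syntactic substitution lemma for terms gives $\mtyping[\vGamma; \Gamma]{t[s/x]}T$. These settle the first two conjuncts of $\msemtyeq$, so what remains is the semantic equality.

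For that, fix $\vDelta$ and $\vrho \in \intp{\vGamma; \Gamma}_{\;\vDelta}$, and write $(\pi, \rho) := \vrho$, using that $\intp{\vGamma; \Gamma}_{\;\vDelta}$ is a product with the interpretation of the topmost context as its second component. Unfolding the interpretation of application, $\intp{(\lambda x. t)\ s}_{\;\vDelta}(\vrho) = \intp{\lambda x. t}_{\;\vDelta}(\vrho, \vect{\id}_{\;\vDelta}, \intp{s}_{\;\vDelta}(\vrho))$, and unfolding the interpretation of $\lambda$ this is $\intp{t}_{\;\vDelta}((\pi', (\rho', \intp{s}_{\;\vDelta}(\vrho))))$ where $(\pi', \rho') := \vrho[\vect{\id}_{\;\vDelta}]$. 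By the identity law for the functor $\intp{\vGamma; \Gamma}$ we have $\vrho[\vect{\id}_{\;\vDelta}] = \vrho$, hence $(\pi', \rho') = (\pi, \rho)$, and the left-hand side reduces to $\intp{t}_{\;\vDelta}((\pi, (\rho, \intp{s}_{\;\vDelta}((\pi, \rho)))))$.

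Now the right-hand side $\intp{t[s/x]}_{\;\vDelta}(\vrho)$ is precisely the quantity computed by \Cref{cor:subst} — the $n = 0$, $\Gamma' = \cdot$ specialization of the substitution theorem — which under the very same hypotheses evaluates to $\intp{t}_{\;\vDelta}((\pi, (\rho, \intp{s}_{\;\vDelta}((\pi, \rho)))))$. The two sides therefore coincide, and since $\vDelta$ and $\vrho$ were arbitrary, $\msemtyeq[\vGamma; \Gamma]{(\lambda x. t)\ s}{t[s/x]}{T}$. I do not anticipate a genuine obstacle: the only points needing care are the bookkeeping of the product structure of $\intp{\vGamma; \Gamma}_{\;\vDelta}$ and the appeal to functoriality to eliminate the $\vect{\id}$ weakening, both already in hand. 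In essence this lemma is just the semantic shadow of the syntactic $\beta$ rule for functions, with all the real work having been done in establishing \Cref{cor:subst}.
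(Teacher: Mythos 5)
Your proposal is correct and follows essentially the same route as the paper: unfold the interpretation of the application and the $\lambda$ (eliminating the identity weakening by functoriality) and then close the gap with \Cref{cor:subst}; the paper's proof is just a terser version of this computation, with the typing presuppositions left implicit.
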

\begin{proof}
  Assume $\vrho \in \intp{\vGamma; \Gamma}_{\;\vDelta}$, let $(\pi, \rho) := \vrho$, then
  \begin{align*}
    \intp{(\lambda x. t) s}_{\;\vDelta}(\vrho)
    &= \intp{t}_{\;\vDelta}((\pi, (\rho, \intp{s}_{\;\vDelta}(\vrho)))) \\
    &= \intp{t[s/x]}_{\;\vDelta}(\vrho)
      \tag{by \Cref{cor:subst}}
  \end{align*}
\end{proof}

\subsubsection{$\eta$ Rules}

\begin{lemma}
  \begin{mathpar}
    \inferrule*
    {\msemtyp{t}{\square T}}
    {\msemtyeq{t}{\boxit{(\unbox 1 t)}}{\square T}}
  \end{mathpar}
\end{lemma}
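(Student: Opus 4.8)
The plan is to fix an arbitrary stack $\vDelta$ and environment $\vrho \in \intp{\vGamma}_{\;\vDelta}$ and show that the two interpretations agree on $\vrho$; the typing side conditions demanded by our notion of semantic equivalence are immediate, since $\mtyping t {\square T}$ yields $\mtyping[\vGamma; \cdot]{\unbox 1 t}{T}$ (taking the appended stack to be the single empty context, whose length is $1$) and hence $\mtyping{\boxit{(\unbox 1 t)}}{\square T}$. Note also that every offset occurring below equals $1$, which is a legal \tunbox level in all four systems, so the computation stays within the system under consideration.

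Unfolding the interpretation twice, $\intp{\boxit{(\unbox 1 t)}}_{\;\vDelta}(\vrho) = \intp{\unbox 1 t}_{\vDelta; \cdot}((1, \vrho), *)$. I would then compute, directly from the definitions of truncation offset and truncation, that $\Ltotal{((1, \vrho), *)}{1} = 1 + \Ltotal{\vrho}{0} = 1$ and $\trunc{((1, \vrho), *)}{1} = \trunc{\vrho}{0} = \vrho$, together with $\trunc{(\vDelta; \cdot)}{1} = \vDelta$. Hence the $\unbox{}{}$ clause of the interpretation gives $\intp{\unbox 1 t}_{\vDelta; \cdot}((1, \vrho), *) = \intp{t}_{\;\vDelta}(\vrho)[\sextt{\vect\id}{1}]$, where $\sextt{\vect\id}{1} : \vDelta; \cdot \To_w \vDelta; \cdot$ and $\intp{t}_{\;\vDelta}(\vrho)$ is viewed as an element of $\intp{\square T}_{\;\vDelta} = \intp{T}_{\vDelta; \cdot}$, so that $[\sextt{\vect\id}{1}]$ is the monotonicity action of $\intp{T}$ along this weakening.

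It then remains to observe that $\sextt{\vect\id}{1}$ is the identity weakening on $\vDelta; \cdot$: since the topmost context of its codomain is the empty context, the only local weakening datum that can occupy that slot is the empty substitution, so $\sextt{\vect\id}{1}$ coincides with $\sext{\vect\id}{1}{}$, which is exactly $\vect\id_{\vDelta; \cdot}$. Functoriality of $\intp{T}$ (its identity law) then gives $\intp{t}_{\;\vDelta}(\vrho)[\sextt{\vect\id}{1}] = \intp{t}_{\;\vDelta}(\vrho)$, so $\intp{\boxit{(\unbox 1 t)}}_{\;\vDelta}(\vrho) = \intp{t}_{\;\vDelta}(\vrho)$, as required. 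The only point needing care is the standing ambiguity flagged after the monotonicity lemma for types, namely that $\intp{\square T}_{\;\vDelta}$ and $\intp{T}_{\vDelta; \cdot}$ are literally the same set: one must be explicit about which of these sets each intermediate value inhabits, and that the weakening applied to $\intp{t}_{\;\vDelta}(\vrho)$ is the one produced by the $\unbox{}{}$ clause (monotonicity of $\intp{T}$ landing back in $\vDelta;\cdot$) rather than a weakening of $\intp{\square T}$. Beyond this bookkeeping there is no substantial obstacle.
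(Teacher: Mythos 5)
Your proposal is correct and follows essentially the same route as the paper: unfold $\intp{\boxit{(\unbox 1 t)}}_{\;\vDelta}(\vrho)$ to $\intp{t}_{\;\vDelta}(\trunc{((1,\vrho),*)}{1})[\sextt{\vect\id}{\Ltotal{((1,\vrho),*)}{1}}] = \intp{t}_{\;\vDelta}(\vrho)[\sextt{\vect\id}{1}]$, then observe that $\sextt{\vect\id}{1} : \vDelta;\cdot \To_w \vDelta;\cdot$ is the identity weakening on $\intp{T}_{\vDelta;\cdot} = \intp{\square T}_{\;\vDelta}$, so the value is unchanged. Your extra remarks on the typing side conditions and on which set each value inhabits are harmless elaborations of the same argument.
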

\begin{proof}
  Assume $\vrho \in \intp{\vGamma}_{\;\vDelta}$, then
  \begin{align*}
    \intp{\boxit{(\unbox 1 t)}}_{\;\vDelta}(\vrho)
    &= \intp{t}_{\trunc{\vDelta; \cdot} 1}(\trunc{((1, \vrho), *)} 1)[\sextt{\vect \id}{\Ltotal{((1, \vrho), *)}{1}}] \\
    &= \intp{t}_{\;\vDelta}(\vrho)[\sextt{\vect \id}1] \\
    &= \intp{t}_{\;\vDelta}(\vrho)
  \end{align*}
  In the last equation, since $\intp{t}_{\;\vDelta}(\vrho) : \intp{T}_{\vDelta; \cdot}$,
  $\sextt{\vect \id} 1 : \vDelta; \cdot \To_w \vDelta; \cdot$ is just identity. 
\end{proof}

\begin{lemma}
  \begin{mathpar}
    \inferrule*
    {\msemtyp{t}{S \func T}}
    {\msemtyeq t {\lambda x. (t\ x)}{S \func T}}
  \end{mathpar}
\end{lemma}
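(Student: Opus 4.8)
The plan is a direct unfolding of the two interpretations. The well-typedness obligations in $\msemtyeq t {\lambda x. (t\ x)}{S \func T}$ are immediate: $\mtyping t {S \func T}$ is part of the hypothesis $\msemtyp t {S \func T}$, and $\mtyping{\lambda x. (t\ x)}{S \func T}$ follows from the syntactic $\eta$ rule together with presupposition (\Cref{lem:inv-equiv}). So the real content is the semantic equation: for every $\vDelta$ and $\vrho \in \intp{\vGamma}_{\;\vDelta}$ we must show $\intp{\lambda x. (t\ x)}_{\;\vDelta}(\vrho) = \intp{t}_{\;\vDelta}(\vrho)$. Both sides live in the presheaf exponential $(\intp{S} \hfunc \intp{T})_{\;\vDelta}$, i.e. they are natural transformations, so by extensionality it suffices to show they agree on an arbitrary $\vgamma : \vDelta' \To_w \vDelta$ and $a \in \intp{S}_{\;\vDelta'}$.

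First I would unfold the left-hand side. Writing $(\pi, \rho) := \vrho[\vgamma]$ (so that $\rho$ is the component of the environment for the topmost context of $\vGamma$), the definition of $\intp{\lambda x.\,{-}}$ gives $\intp{\lambda x. (t\ x)}_{\;\vDelta}(\vrho)(\vgamma, a) = \intp{t\ x}_{\;\vDelta'}((\pi, (\rho, a)))$. By the clauses for application and for variable lookup, and since $\intp{x}_{\;\vDelta'}((\pi,(\rho,a))) = a$, this equals $\intp{t}_{\;\vDelta'}((\pi, (\rho, a)),\, \vect{\id}_{\;\vDelta'},\, a)$. Note that inside the body $t\ x$ the term $t$ occurs implicitly weakened into the stack with $x{:}S$ appended to the topmost context, so here $\intp{t}$ is that weakened interpretation.

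Next I would strip the weakening. By \Cref{lem:insert-weaken} applied with offset $n = 0$ — the inserted variable $x$ sits in the topmost context with nothing after it, and $\inser((\pi,\rho),0,a) = (\pi,(\rho,a))$ — we get $\intp{t}_{\;\vDelta'}((\pi, (\rho, a)),\,\vect{\id}_{\;\vDelta'},\,a) = \intp{t}_{\;\vDelta'}((\pi, \rho),\,\vect{\id}_{\;\vDelta'},\,a)$, now with $\intp{t}$ read in the original stack $\vGamma$. Since $(\pi,\rho) = \vrho[\vgamma]$, naturality of $\intp{t}$ turns $\intp{t}_{\;\vDelta'}(\vrho[\vgamma])$ into $\intp{t}_{\;\vDelta}(\vrho)[\vgamma]$, and the definition of monotonicity on a presheaf exponential rewrites $\intp{t}_{\;\vDelta}(\vrho)[\vgamma]\,(\vect{\id}_{\;\vDelta'}, a)$ as $\intp{t}_{\;\vDelta}(\vrho)(\vgamma \circ \vect{\id}_{\;\vDelta'}, a)$, which by the identity law of $\WC$ is $\intp{t}_{\;\vDelta}(\vrho)(\vgamma, a)$. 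Thus the two natural transformations agree pointwise, establishing the equation.

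There is no deep obstacle here; the only points needing care are (i) recognizing that the $t$ inside $t\ x$ is an implicitly weakened copy of the original term, so that \Cref{lem:insert-weaken} with $n=0$ is exactly the right lemma to discard the spurious $a$ from the environment, and (ii) that equality at a function (presheaf-exponential) type is extensional, so the proof must quantify over all $\vgamma$ and $a$ and then invoke naturality of $\intp{t}$ to re-package $\intp{t}_{\;\vDelta}(\vrho)$ into the shape forced by the definition of $\intp{\lambda x.\,{-}}$.
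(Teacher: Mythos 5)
Your proposal is correct and follows essentially the same route as the paper: unfold the $\lambda$, application and variable clauses at an arbitrary $\vgamma$ and $a$, discard the spurious binding for $x$ via \Cref{lem:insert-weaken} (with the inserted value at the top of the current context), then use naturality of $\intp{t}$ together with the identity law $\vgamma \circ \vect\id = \vgamma$ to recover $\intp{t}_{\;\vDelta}(\vrho)(\vgamma, a)$. The only differences are presentational — you make the extensionality step at the presheaf exponential and the typing side conditions explicit, which the paper leaves implicit.
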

\begin{proof}
  Assume $\vrho \in \intp{\vGamma}_{\;\vDelta}$, then
  \begin{align*}
    \intp{\lambda x. (t\ x)}_{\;\vDelta}(\vrho)
    &= (\vDelta', \vgamma, a \in \intp{S}_{\;\vDelta'}) \mapsto
      \intp{t}_{\;\vDelta'}((\pi, (\rho, a)), \vect\id, a)
      \tag{where $(\pi, \rho) :=\vrho[\vgamma] = \vrho[\vgamma \circ \vect
      \id]$}
    \\
    &= (\vDelta', \vgamma, a \in \intp{S}_{\;\vDelta'}) \mapsto
      \intp{t}_{\;\vDelta'}(\vrho[\vgamma], \vect\id, a)
      \tag{by \Cref{lem:insert-weaken}} \\
    &= (\vDelta', \vgamma, a \in \intp{S}_{\;\vDelta'}) \mapsto
      \intp{t}_{\;\vDelta}(\vrho, \vgamma \circ \vect\id, a)
      \tag{by naturality} \\
    &= \intp{t}_{\;\vDelta}
  \end{align*}
\end{proof}

\subsubsection{Fundamental Theorem}

We have finished all the semantic typing rules. Thus we can establish semantic typing
and semantic typing equivalence.
\begin{theorem}
  (fundamental)
  \begin{itemize}
  \item If $\mtyping t T$, then $\msemtyp t T$.
  \item If $\mtyequiv{t}{t'}T$, then $\msemtyeq{t}{t'}T$.
  \end{itemize}
\end{theorem}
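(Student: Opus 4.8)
The plan is to prove both clauses simultaneously by a single induction on the structure of the two derivations (the first clause by induction on $\mtyping t T$, the second by induction on $\mtyequiv t {t'} T$), with the two inductions permitted to call one another on subderivations. The essential point is that almost all of the work has already been carried out: every typing rule and every equivalence rule of \Cref{sec:st:equiv-rules} has a semantic counterpart established among the preceding PER, congruence, $\beta$, and $\eta$ lemmas, so the fundamental theorem amounts only to threading those lemmas through the syntax.

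For the first clause, recall that $\msemtyp t T$ is by definition $\msemtyeq t t T$, so each case reduces to instantiating the corresponding semantic congruence lemma with its two arguments identified. The variable case is the semantic variable lemma; the $\tbox$ case follows from $\msemtyp[\vGamma;\cdot]{t}{T}$, supplied by the induction hypothesis, together with the semantic $\tbox$-congruence lemma; the $\tunbox$, $\lambda$, and application cases are handled analogously, feeding the induction hypotheses to the respective semantic congruence lemmas. The two well-typedness conjuncts hidden in $\msemtyeq$ are discharged immediately by the very syntactic typing rule being analyzed.

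For the second clause, symmetry and transitivity follow from the PER lemmas for $\intp{\vGamma}$ and $\intp{T}$; the congruence rules follow from the matching semantic congruence lemmas applied to the induction hypotheses; the two $\beta$ rules follow from the semantic $\beta$ lemmas, whose typing hypotheses (for instance $\mtyping[\vGamma;\cdot]{t}{T}$ in the rule for $\unbox n {(\boxit t)}$) are premises of the syntactic rule, hence structurally smaller, so the first clause of the theorem applies to them; and the two $\eta$ rules follow from the semantic $\eta$ lemmas, supplied with the semantic typing of $t$ obtained by invoking the first clause on the premise. In every equivalence case the two well-typedness conjuncts of $\msemtyeq$ are provided by Presupposition (\Cref{lem:inv-equiv}).

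I do not expect a genuine mathematical obstacle here: the difficulty was front-loaded into \Cref{cor:modal-trans}, \Cref{cor:subst}, and the individual semantic rule lemmas — most delicately the $\beta$-rule lemma for $\square$, where the syntactic modal transformation $t\{n/0\}$ had to be reconciled with the semantic $\shrink$ operation. The only point requiring care is the organization of the mutual recursion: because the $\beta$ and $\eta$ cases of the equivalence clause appeal to the typing clause on their typing premises, the two statements must be proved by one shared induction (or a well-founded induction on the combined size of the derivations) rather than two separate ones. Once that is set up, each case is a one-line appeal to an already-established lemma.
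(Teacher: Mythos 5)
Your proposal is correct and matches the paper's (implicit) proof exactly: the paper states the fundamental theorem as an immediate consequence of the preceding semantic PER, congruence, $\beta$, and $\eta$ lemmas, i.e.\ precisely the induction on derivations you describe, with typing premises of the $\beta$/$\eta$ rules fed through the first clause and the well-typedness conjuncts of $\msemtyeq$ supplied by the rules themselves and presupposition.
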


We next finish the completeness proof. 
\begin{proof}[Proof of \Cref{thm:nbe-completeness}]
  Given $\mtyequiv{t}{t'}T$, we know $\msemtyeq{t}{t'}T$. We plug in
  $\uparrow^{\vGamma}$, so we obtain
  $\intp{t}_{\Gamma}(\uparrow^{\vGamma}) = \intp{t'}_{\Gamma}(\uparrow^{\vGamma})$, as
  well as
  $\downarrow^T_{\;\vGamma} (\intp{t}_{\;\vGamma}(\uparrow^{\vGamma})) =
  \downarrow^T_{\;\vGamma}(\intp{t'}_{\Gamma}(\uparrow^{\vGamma}))$.
  
\end{proof}

\subsection{Soundness and Gluing Model}

\subsubsection{Gluing Model}

A gluing model glues the syntax and the semantics of terms, from which we can extract
the soundness theorem. We write $t \sim a \in \glu{T}_{\;\vGamma}$ in place of $(t,
a)\in \glu{T}_{\;\vGamma}$. The model is parameterized by a context stack
$\vGamma$, so the model moves along unified weakenings, hence forming a Kripke model. 
\begin{align*}
  \glu{T}_{\;\vGamma} &\subseteq \Exp \times \intp{T}_{\;\vGamma} \\
  \glu{B}_{\;\vGamma} &:= \{(t, a) \sep \mtyequiv{t}{a}{B} \} \\
  \glu{\square T}_{\;\vGamma} &:= \{(t, a) \sep \mtyping{t}{\square T} \tand \forall
                              \vDelta. \unbox{|\vDelta|}{t} \sim a[\sextt{\vect \id}{|\vDelta|}] \in \glu{T}_{\vGamma; \vDelta} \} \\ 
  \glu{S \func T}_{\;\vGamma} &:= \{(t, a) \sep \mtyping{t}{S \func T} \tand \forall
                              \vgamma : \vDelta \To_w \vGamma, s \sim b \in
                              \glu{S}_{\;\vDelta}. t[\vgamma]\ s \sim a(\vgamma, b) \in
                              \glu{T}_{\;\vDelta} \} 
\end{align*}

\begin{lemma}
  If $t \sim a \in \glu{T}_{\;\vGamma}$, then $\mtyping t T$. 
\end{lemma}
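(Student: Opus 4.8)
The plan is to proceed by induction on the structure of the type $T$, since the gluing relation $\glu{T}$ is defined by recursion on $T$. Only the base case requires any argument at all; the two inductive cases are immediate projections out of the definition.

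First, for $T = B$, membership $t \sim a \in \glu{B}_{\;\vGamma}$ unfolds by definition to $\mtyequiv{t}{a}{B}$, where $a \in \Ne\ B$ is regarded as an ordinary term. Applying the presupposition lemma (\Cref{lem:inv-equiv}) to this equivalence gives both $\mtyping t B$ and $\mtyping a B$; we keep the former. Next, for $T = \square T'$, the definition of $\glu{\square T'}_{\;\vGamma}$ explicitly conjoins $\mtyping{t}{\square T'}$, so the conclusion is already in hand; likewise, for $T = S \func T'$, the definition of $\glu{S \func T'}_{\;\vGamma}$ explicitly conjoins $\mtyping{t}{S \func T'}$. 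In neither inductive case do we even need the induction hypothesis.

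There is essentially no obstacle here: the only non-routine ingredient is the presupposition lemma invoked in the base case, and the inductive cases were deliberately formulated to carry the well-typedness witness so that this projection is trivially available. The point of recording this lemma separately is that later developments on the gluing model (monotonicity under unified weakening, the realizability lemmas relating $\glu{T}$ to $\Ne$ and $\Nf$, and ultimately the fundamental theorem feeding the soundness proof) will repeatedly need to extract syntactic typing from a gluing witness, and it is cleanest to isolate this observation once at the outset.
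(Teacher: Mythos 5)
Your proof is correct: the paper states this lemma without any explicit proof, treating it as immediate from the definition of the gluing model, and your case analysis — presupposition (\Cref{lem:inv-equiv}) for the base type $B$, and direct projection of the conjoined typing judgment in the $\square T'$ and $S \func T'$ cases — is exactly the intended argument.
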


\begin{lemma}\labeledit{lem:glue-mon}
  If $t \sim a \in \glu{T}_{\;\vGamma}$, given $\vgamma : \vDelta \To_w \vGamma$, then
  $t[\vgamma] \sim a[\vgamma] \in \glu{T}_{\;\vDelta}$. 
\end{lemma}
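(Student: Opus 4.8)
\emph{The plan} is to proceed by induction on the structure of the type $T$, treating the cases $T = B$, $T = \square T'$, and $T = S \func T'$. In the $\square$- and $\func$-cases the definition of $\glu{\cdot}$ additionally asks for $\mtyping[\vDelta]{t[\vgamma]}{T}$, and this part is uniform: by the preceding lemma $\mtyping{t}{T}$, and since a unified weakening is in particular a unified substitution $\vgamma : \vDelta \To \vGamma$, the lemma that simultaneous substitution preserves typing gives $\mtyping[\vDelta]{t[\vgamma]}{T}$. So I only need to check the relational clauses.

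\emph{The easy cases.} If $T = B$, then $t \sim a \in \glu{B}_{\;\vGamma}$ unfolds to $\mtyequiv{t}{a}{B}$, and the goal $\mtyequiv[\vDelta]{t[\vgamma]}{a[\vgamma]}{B}$ is stability of term equivalence under the unified weakening $\vgamma$ (apply $\vgamma$ to both sides); note that the two readings of $a[\vgamma]$ — syntactic weakening of the neutral term $a$, and the functorial action of $\Ne\ B$ — coincide, so there is no ambiguity. If $T = S \func T'$, unfold the goal and fix $\vgamma_1 : \vDelta_1 \To_w \vDelta$ and $s \sim b \in \glu{S}_{\;\vDelta_1}$; I must show $(t[\vgamma])[\vgamma_1]\ s \sim (a[\vgamma])(\vgamma_1, b) \in \glu{T'}_{\;\vDelta_1}$. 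By the composition law $t[\vsigma \circ \vdelta] = t[\vsigma][\vdelta]$ we have $(t[\vgamma])[\vgamma_1] = t[\vgamma \circ \vgamma_1]$, and by the definition of monotonicity on the presheaf exponential $(a[\vgamma])(\vgamma_1, b) = a(\vgamma \circ \vgamma_1, b)$. Since $\vgamma \circ \vgamma_1 : \vDelta_1 \To_w \vGamma$ is again a unified weakening, instantiating the hypothesis $t \sim a \in \glu{S \func T'}_{\;\vGamma}$ at it (and at $s \sim b$) is exactly the goal. Observe that the function case needs no appeal to the induction hypothesis.

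\emph{The modal case $T = \square T'$} carries the content. Unfolding, fix an arbitrary context stack $\vDelta_0$; I must show $\unbox{|\vDelta_0|}{(t[\vgamma])} \sim a[\vgamma][\sextt{\vect\id}{|\vDelta_0|}] \in \glu{T'}_{\;\vDelta; \vDelta_0}$. Let $\vgamma^{+} : \vDelta; \vDelta_0 \To_w \vGamma; \vDelta_0$ be the evident extension of $\vgamma$ by an identity local weakening over each context of $\vDelta_0$; it satisfies $\trunc{\vgamma^{+}}{|\vDelta_0|} = \vgamma$ and $\Ltotal{\vgamma^{+}}{|\vDelta_0|} = |\vDelta_0|$. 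Instantiating the hypothesis at $\vDelta_0$ gives $\unbox{|\vDelta_0|}{t} \sim a[\sextt{\vect\id}{|\vDelta_0|}] \in \glu{T'}_{\;\vGamma; \vDelta_0}$, and feeding this to the induction hypothesis for $T'$ along $\vgamma^{+}$ yields $(\unbox{|\vDelta_0|}{t})[\vgamma^{+}] \sim (a[\sextt{\vect\id}{|\vDelta_0|}])[\vgamma^{+}] \in \glu{T'}_{\;\vDelta; \vDelta_0}$. It then remains to identify both sides with the desired ones. On the syntactic side, by the definition of substitution application on $\unbox{}$ together with the two properties of $\vgamma^{+}$, $(\unbox{|\vDelta_0|}{t})[\vgamma^{+}] = \unbox{\Ltotal{\vgamma^{+}}{|\vDelta_0|}}{(t[\trunc{\vgamma^{+}}{|\vDelta_0|}])} = \unbox{|\vDelta_0|}{(t[\vgamma])}$. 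On the semantic side, using $a[\vgamma] = a[\sextt\vgamma 1]$ and functoriality of $\intp{T'}$, both $(a[\sextt{\vect\id}{|\vDelta_0|}])[\vgamma^{+}]$ and $a[\vgamma][\sextt{\vect\id}{|\vDelta_0|}]$ equal $a$ applied to a composite unified weakening $\vDelta; \vDelta_0 \To_w \vGamma; \cdot$, and both composites compute to $\sextt\vgamma{|\vDelta_0|}$. Verifying this last equality of unified weakenings — a short unfolding of the definitions of composition, truncation, and truncation offset (see \Cref{lem:L-comp,lem:trunc-comp}) — is the only genuinely finicky step; it is essentially the same bookkeeping that appeared in the naturality proof for $\intp{\unbox{n}{t}}$. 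Everything else is routine.
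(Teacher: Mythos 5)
Your proposal is correct and follows essentially the same route as the paper: induction on $T$, with the function case discharged by instantiating the hypothesis at $\vgamma \circ \vgamma_1$ (no IH needed), and the $\square$ case handled by extending $\vgamma$ with identities to $\vDelta;\vDelta_0 \To_w \vGamma;\vDelta_0$, applying the IH, and then identifying both sides via the same truncation/truncation-offset bookkeeping, where both composites reduce to $\sextt{\vgamma}{|\vDelta_0|}$. The only cosmetic difference is that you treat the typing side condition uniformly up front, which the paper handles inline.
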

\begin{proof}
  We proceed by induction on $T$.
  \begin{itemize}[label=Case]
  \item $T = B$, immediate because $a \in \Ne\ T\ \vGamma$.
  \item $T = \square T'$, then
    \begin{align*}
      H_1: &\ \mtyping{t}{\square T'} \tag{by assumption} \\
      H_2: &\ \forall \vDelta'. \unbox{|\vDelta'|}{t} \sim a[\sextt{\vect \id}{|\vDelta'|}] \in \glu{T'}_{\vGamma; \vDelta'}
             \tag{by assumption} \\
           &\ \mtyping[\vDelta]{t[\vgamma]}{\square T'}
             \tag{by application of unified weakening} \\
           &\ \text{assume }\vDelta' \\
      H_3: &\ \unbox{|\vDelta'|}{t} \sim a[\sextt{\vect \id}{|\vDelta'|}] \in \glu{T'}_{\vGamma; \vDelta'}
             \tag{by $H_2$} \\ 
           &\ \text{let } \vgamma' := \vgamma; \underbrace{\id; \cdots;
             \id}_{|\vDelta'|} : \vDelta; \vDelta' \To_w \vGamma; \vDelta' \\
           &\ (\unbox{|\vDelta'|}{t})[\vgamma'] \sim a[\sextt{\vect \id}{|\vDelta'|}][\vgamma'] \in \glu{T'}_{\vDelta; \vDelta'} 
             \tag{by IH on $H_3$ with $\vgamma'$}
    \end{align*}
    We consider the term and the value:
    \begin{align*}
      (\unbox{|\vDelta'|}{t})[\vgamma']
      &= \unbox{|\vDelta'|}{(t[\trunc{\vgamma'}{\;|\vDelta'|}])} \\
      &= \unbox{|\vDelta'|}{(t[\vgamma])} 
    \end{align*}
    \begin{align*}
      a[\sextt{\vect \id}{|\vDelta'|}][\vgamma']
      &= a[\sextt{\vect \id}{|\vDelta'|} \circ \vgamma'] \\
      &= a[\sextt{(\vect \id \circ \trunc{\vgamma'}{\;|\vDelta'|})}{|\vDelta'|}] \\
      &= a[\sextt\vgamma{|\vDelta'|}] \\
      &= a[(\vgamma; \id) \circ (\sextt{\vect \id}{|\vDelta'|})] \\
      &= a[\vgamma][\sextt{\vect \id}{|\vDelta'|}]
    \end{align*}
    Then we can conclude
    \begin{align*}
      &\unbox{|\vDelta'|}{(t[\vgamma])} \sim a[\vgamma][\sextt{\vect \id}{|\vDelta'|}])
      \in \glu{T'}_{\vDelta; \vDelta'} \\
      &t[\vgamma] \sim a[\vgamma] \in \glu{\square T'}_{\;\vDelta}
        \tag{by abstraction}
    \end{align*}
    
  \item $T = S \func T'$, then
    \begin{align*}
      H_1: &\ \mtyping{t}{S \func T'} \tag{by assumption} \\
      H_2: &\ \forall \vgamma' : \vDelta' \To_w \vGamma, s \sim b \in
             \glu{S}_{\;\vDelta'}. t[\vgamma']\ s \sim a(\vgamma', b) \in \glu{T'}_{\;\vDelta'}
             \tag{by assumption} \\
      &\ \text{assume }\vgamma' : \vDelta' \To_w \vDelta, s \sim b \in
        \glu{S}_{\;\vDelta'} \\
      H_3: &\ t[\vgamma \circ \vgamma']\ s \sim a(\vgamma \circ \vgamma', b) \in
             \glu{T'}_{\;\vDelta'}
             \tag{by $H_2$} \\
      &\ t[\vgamma][\vgamma']\ s \sim a[\vgamma](\vgamma', b) \in
        \glu{T'}_{\;\vDelta'} \\
      &\ t[\vgamma] \sim a[\vgamma] \in \glu{S \func T'}_{\;\vDelta}
      \tag{by abstraction} 
    \end{align*}
  \end{itemize}
\end{proof}

\begin{lemma}\labeledit{lem:glue-resp-equiv}
  If $t \sim a \in \glu{T}_{\;\vGamma}$ and $\mtyequiv{t}{t'}{T}$, then $t'~ a \in
  \glu{T}_{\;\vGamma}$. 
\end{lemma}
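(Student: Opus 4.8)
The plan is to prove the statement by structural induction on the type $T$, following the shape of the definition of the gluing relation $\glu{\_}$. In the $\square$ and $\func$ clauses the relation carries a typing conjunct ($\mtyping{t'}{\square T'}$, resp. $\mtyping{t'}{S \func T'}$), and this I would discharge uniformly by the presupposition lemma (\Cref{lem:inv-equiv}): $\mtyequiv{t}{t'}{T}$ yields $\mtyping{t'}{T}$. So in every case the real work is transporting the ``semantic'' conjunct from $t$ to $t'$.

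For the base type $T = B$, membership $t \sim a \in \glu{B}_{\;\vGamma}$ unfolds to $\mtyequiv{t}{a}{B}$; chaining with $\mtyequiv{t}{t'}{B}$ using symmetry and transitivity of the equivalence PER yields $\mtyequiv{t'}{a}{B}$, i.e. $t' \sim a \in \glu{B}_{\;\vGamma}$. For $T = \square T'$, I would fix an arbitrary $\vDelta$, apply the congruence rule for $\tunbox$ to $\mtyequiv{t}{t'}{\square T'}$ to obtain $\mtyequiv[\vGamma; \vDelta]{\unbox{|\vDelta|}{t}}{\unbox{|\vDelta|}{t'}}{T'}$, and then invoke the induction hypothesis at the strictly smaller type $T'$ on the pair $\unbox{|\vDelta|}{t} \sim a[\sextt{\vect \id}{|\vDelta|}] \in \glu{T'}_{\vGamma; \vDelta}$, getting $\unbox{|\vDelta|}{t'} \sim a[\sextt{\vect \id}{|\vDelta|}] \in \glu{T'}_{\vGamma; \vDelta}$ for every $\vDelta$. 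For $T = S \func T'$, I would fix $\vgamma : \vDelta \To_w \vGamma$ and $s \sim b \in \glu{S}_{\;\vDelta}$, derive the equivalence $\mtyequiv[\vDelta]{t[\vgamma]\ s}{t'[\vgamma]\ s}{T'}$, and feed it together with $t[\vgamma]\ s \sim a(\vgamma, b) \in \glu{T'}_{\;\vDelta}$ into the induction hypothesis at $T'$.

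The main obstacle is the $\func$ case, specifically producing $\mtyequiv[\vDelta]{t[\vgamma]\ s}{t'[\vgamma]\ s}{T'}$ from the hypothesis $\mtyequiv{t}{t'}{S \func T'}$, which lives over $\vGamma$ rather than $\vDelta$. For this I need term equivalence to be stable under unified weakening, giving $\mtyequiv[\vDelta]{t[\vgamma]}{t'[\vgamma]}{S \func T'}$; I then combine it with the reflexive instance $\mtyequiv[\vDelta]{s}{s}{S}$ — available because $s \sim b \in \glu{S}_{\;\vDelta}$ entails $\mtyping[\vDelta]{s}{S}$ and the congruence rules make $\approx$ reflexive on well-typed terms — through the application congruence rule. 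If stability of equivalence under unified weakening has not been established already, I would prove it beforehand by a routine induction on the equivalence derivation; it is a standard congruence-style property of the substitution calculus. With that in hand the remaining steps are all bookkeeping, so I expect the proof to be short.
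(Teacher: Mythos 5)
Your proposal is correct and follows essentially the same route as the paper: induction on $T$, with the base case by transitivity, the $\square$ case by the $\tunbox$ congruence rule plus the induction hypothesis at $T'$, and the $\func$ case by the application congruence rule plus the induction hypothesis. You are in fact more explicit than the paper, which hides both the typing conjuncts and the stability of $\approx$ under unified weakening behind the phrase ``by congruence''; your plan to discharge these via presupposition and a separate weakening-stability lemma fills in exactly what the paper leaves implicit.
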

\begin{proof}
  We prove by induction on $T$.
  \begin{itemize}[label=Case]
  \item $T = B$, immediate by transitivity.
  \item $T = \square T'$, then 
    \begin{align*}
      &\ \forall \vDelta. \unbox{|\vDelta|}{t} \sim a[\sextt{\vect \id}{|\vDelta|}] \in
      \glu{T'}_{\vGamma; \vDelta}
      \tag{by assumption} \\
      &\ \text{assume }\vDelta \\
      &\ \unbox{|\vDelta|}{t} \sim a[\sextt{\vect \id}{|\vDelta|}] \in
      \glu{T'}_{\vGamma; \vDelta} \\
      &\ \mtyequiv[\vGamma; \vDelta]{\unbox{|\vDelta|}t}{\unbox{|\vDelta|}t'}{T'}
        \tag{by congruence} \\
      &\ \unbox{|\vDelta|}{t'} \sim a[\sextt{\vect \id}{|\vDelta|}] \in
        \glu{T'}_{\vGamma; \vDelta}
        \byIH \\
      &\ t' \sim a \in \glu{\square T'}_{\;\vGamma}
    \end{align*}
    
  \item $T = S \func T'$, then
    \begin{align*}
      &\ \forall \vgamma : \vDelta \To_w \vGamma,
        s \sim b \in \glu{S}_{\;\vDelta}. t[\vgamma]\ s \sim a(\vgamma,
        b) \in \glu{T'}_{\;\vDelta} \tag{by assumption} \\
      &\ \text{assume }\vgamma : \vDelta \To_w \vGamma,
        s \sim b \in \glu{S}_{\;\vDelta} \\
      &\ \mtyequiv[\vDelta]{t[\vgamma]\ s}{t'[\vgamma]\ s}{T'}
        \tag{by congruence} \\
      &\ t'[\vgamma]\ s \sim a(\vgamma, b) \in \glu{T'}_{\;\vDelta}
        \byIH \\
      &\ t' \sim a \in \glu{S \func T'}_{\;\vGamma}
    \end{align*}
  \end{itemize}
\end{proof}

One last lemma we should prove is that the gluing model implies equivalence after
reification. We need to prove two lemmas mutually:
\begin{lemma}\labeledit{lem:ne-glue}
  If $\mtyping t T$ and $t$ is neutral, then $t \sim \uparrow^T_{\;\vGamma}(t) \in
  \glu{T}_{\;\vGamma}$. 
\end{lemma}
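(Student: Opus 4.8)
The plan is to prove \Cref{lem:ne-glue} by induction on the type $T$, \emph{mutually} with its reification counterpart — the companion lemma announced above — which I take to say that if $t \sim a \in \glu{T}_{\;\vGamma}$ then $\mtyequiv{t}{\downarrow^T_{\;\vGamma}(a)}{T}$. The mutual structure is unavoidable: at the arrow and box types the reflection clause of $\uparrow^T$ feeds a reified subterm $\downarrow^S(b)$ back into a neutral, while reification of a function applies it to a reflected fresh variable. Along the way I will use freely that neutral (resp.\ normal) forms are closed under unified weakening, that term equivalence is reflexive on well-typed terms, the $\eta$ and congruence rules, and \Cref{lem:glue-resp-equiv}.

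For $T = B$ reflection is immediate, since $\uparrow^B_{\;\vGamma}(t) = t$ and $t \sim t \in \glu{B}_{\;\vGamma}$ unfolds to $\mtyequiv t t B$. For $T = S \func T'$ with $\vGamma = \vGamma'; \Gamma$, I unfold the definition of $\glu{S \func T'}$ and fix $\vgamma : \vDelta \To_w \vGamma$ and $s \sim b \in \glu{S}_{\;\vDelta}$; the goal is $t[\vgamma]\ s \sim \uparrow^{T'}_{\;\vDelta}(t[\vgamma]\ \downarrow^S_{\;\vDelta}(b)) \in \glu{T'}_{\;\vDelta}$. Since $t[\vgamma]$ stays neutral and $\downarrow^S_{\;\vDelta}(b)$ is normal, $t[\vgamma]\ \downarrow^S_{\;\vDelta}(b)$ is a well-typed neutral of type $T'$, so the reflection IH at $T'$ relates it to $\uparrow^{T'}_{\;\vDelta}(t[\vgamma]\ \downarrow^S_{\;\vDelta}(b))$; the reification IH at $S$ gives $\mtyequiv[\vDelta]{s}{\downarrow^S_{\;\vDelta}(b)}{S}$, so that $\mtyequiv[\vDelta]{t[\vgamma]\ \downarrow^S_{\;\vDelta}(b)}{t[\vgamma]\ s}{T'}$ by congruence, and \Cref{lem:glue-resp-equiv} carries the gluing relation over to $t[\vgamma]\ s$. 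The reification clause at $S \func T'$ dually instantiates $\vgamma := p(\vect\id)$, reflects the fresh variable $x$, applies the reification IH at $T'$, and closes with the $\eta$ rule for $\func$ together with congruence and transitivity.

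The substantive case is $T = \square T'$. For reflection I must show, for every stack $\vDelta$ whose length is permitted by the constraint on \tunbox levels of the ambient system, that $\unbox{|\vDelta|}{t} \sim (\uparrow^{T'}_{\vGamma; \cdot}(\unbox 1 t))[\sextt{\vect\id}{|\vDelta|}] \in \glu{T'}_{\vGamma; \vDelta}$. The term $\unbox{|\vDelta|}{t}$ is a well-typed neutral in $\vGamma; \vDelta$, so the reflection IH at $T'$ yields $\unbox{|\vDelta|}{t} \sim \uparrow^{T'}_{\vGamma; \vDelta}(\unbox{|\vDelta|}{t}) \in \glu{T'}_{\vGamma; \vDelta}$, and the only remaining task — and the main obstacle — is to identify the two semantic values. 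I expect to do this by first computing, from the definition of substitution application together with $\trunc{(\sextt{\vect\id}{|\vDelta|})}{1} = \vect\id$ and $\Ltotal{(\sextt{\vect\id}{|\vDelta|})}{1} = |\vDelta|$, that $(\unbox 1 t)[\sextt{\vect\id}{|\vDelta|}] = \unbox{|\vDelta|}{t}$, and then invoking the naturality of $\uparrow^{T'}$ along $\sextt{\vect\id}{|\vDelta|} : \vGamma; \vDelta \To_w \vGamma; \cdot$ to rewrite $(\uparrow^{T'}_{\vGamma; \cdot}(\unbox 1 t))[\sextt{\vect\id}{|\vDelta|}]$ as $\uparrow^{T'}_{\vGamma; \vDelta}(\unbox{|\vDelta|}{t})$. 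The reification clause at $\square T'$ specializes $\vDelta$ to the singleton empty stack (so $\sextt{\vect\id}{1}$ acts as the identity on $\intp{\square T'}_{\;\vGamma}$, which also absorbs the monotonicity ambiguity for $\intp{\square T'}$), invokes the reification IH at $T'$, and finishes via the $\eta$ rule $\mtyequiv{t}{\boxit{\unbox 1 t}}{\square T'}$ together with congruence and transitivity. Besides the value-alignment step, the only delicate point is threading the constraint on \tunbox levels through the $\unbox{|\vDelta|}{(\cdot)}$'s so that everything stays well-formed in whichever of the four systems is fixed.
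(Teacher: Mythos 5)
Your proposal is correct and follows the paper's overall design: a mutual induction on $T$ with the reification lemma (\Cref{lem:glue-nf}), an immediate base case, and an arrow case that is essentially identical to the paper's (reflect the neutral $t[\vgamma]\ \downarrow^S_{\;\vDelta}(b)$ by the IH, use the reification IH on $S$ plus congruence, and transfer with \Cref{lem:glue-resp-equiv}). Where you genuinely diverge is the $\square T'$ case. The paper applies the reflection IH at the stack $\vGamma;\cdot$ to the neutral $\unbox 1 t$, transports the resulting gluing along $\sextt{\vect\id}{|\vDelta|}$ using monotonicity of the gluing model (\Cref{lem:glue-mon}), and then replaces $\unbox 1 t[\sextt{\vect\id}{|\vDelta|}]$ by $\unbox{|\vDelta|}{t}$ via the $\eta$ and $\beta$ rules together with \Cref{lem:glue-resp-equiv}. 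You instead apply the IH directly at $\vGamma;\vDelta$ to the neutral $\unbox{|\vDelta|}{t}$ and then identify the semantic values on the nose, computing $(\unbox 1 t)[\sextt{\vect\id}{|\vDelta|}] = \unbox{|\vDelta|}{t}$ and invoking naturality of $\uparrow^{T'}$ along $\sextt{\vect\id}{|\vDelta|} : \vGamma;\vDelta \To_w \vGamma;\cdot$. This is sound, and it buys you a cleaner case: no gluing-monotonicity lemma and no $\eta$/$\beta$ equational detour are needed there. The cost is that naturality of the reflection map, i.e. $\uparrow^{T}_{\;\vGamma}(v)[\vgamma] = \uparrow^{T}_{\;\vDelta}(v[\vgamma])$, is only asserted in the paper (reflect and reify are announced as natural transformations) and never proved as a standalone lemma; you would have to supply it yourself, by a routine induction on $T$ using functoriality of weakening on neutrals and the definition of monotonicity for $\hsquare$ and the presheaf exponential (together with the small identity fact $t[\vect\id]=t$). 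With that lemma in hand, your argument goes through in all four systems, since the quantification over $\vDelta$ in $\glu{\square T'}$ already respects the \tunbox-level constraint.
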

\begin{proof}
  We proceed by induction on $T$.
  \begin{itemize}[label=Case]
  \item $T= B$, immediate.
  \item $T = \square T'$, then assuming $\vDelta$, we shall prove
    \begin{align*}
      \unbox{|\vDelta|}{t} \sim \uparrow^{T'}_{\vGamma; \cdot}(\unbox 1 t)[\sextt{\vect \id}{|\vDelta|}] \in \glu{T'}_{\vGamma; \vDelta}
    \end{align*}
    We perform the following reason on the syntactic term:
    \begin{align*}
      \unbox{|\vDelta|}{t}
      &\approx \unbox{|\vDelta|}{(\boxit{(\unbox 1 t)})}
        \tag{$\eta$ rule} \\
      &\approx (\unbox 1 t)\{|\vDelta|/ 0\}
        \tag{$\beta$ rule} \\
      &\approx \unbox 1 t [\sextt{\vect \id}{|\vDelta|}]
        \tag{as unified weakening}
    \end{align*}
    Thus the goal becomes
    \begin{align*}
      \unbox 1 t [\sextt{\vect \id}{|\vDelta|}] \sim \uparrow^{T'}_{\vGamma; \cdot}(\unbox 1 t)[\sextt{\vect \id}{|\vDelta|}]) \in \glu{T'}_{\vGamma; \vDelta}
    \end{align*}
    by \Cref{lem:glue-resp-equiv}.

    Now we reason forwardly:
    \begin{align*}
      &\unbox 1 t \sim \uparrow^{T'}_{\vGamma; \cdot} (\unbox 1 t) \in \glu{T}_{\;\vGamma}
      \byIH \\
      &\unbox 1 t [\sextt{\vect \id}{|\vDelta|}] \sim \uparrow^{T'}_{\vGamma; \cdot}(\unbox 1 t)[\sextt{\vect \id}{|\vDelta|}] \in \glu{T'}_{\vGamma; \vDelta}
      \tag{by \Cref{lem:glue-mon}}
    \end{align*}
    
  \item $T = S \func T'$, then assume $\vgamma : \vDelta \To_w \vGamma$ and $s \sim b :
    \glu{S}_{\;\vDelta}$,
    \begin{align*}
      \uparrow^{S \func T'}_{\;\vGamma}(t)(\vgamma, b)
      &= \uparrow^{T'}_{\;\vDelta}(t[\vgamma]\ \downarrow^S_{\;\vDelta}(b)) 
    \end{align*}
    By IH, we have
    \begin{align*}
      t[\vgamma]\ \downarrow^S_{\;\vDelta}(b) \sim \uparrow^{T'}_{\;\vDelta}(t[\vgamma]\
      \downarrow^S_{\;\vDelta}(b))
      \in \glu{T'}_{\;\vDelta}
    \end{align*}
    We further obtain from \Cref{lem:glue-resp-equiv,lem:glue-nf}
    \begin{align*}
      t[\vgamma]\ s \sim \uparrow^{S \func T'}_{\;\vGamma}(t)(\vgamma, b)
      \in \glu{T'}_{\;\vDelta}
    \end{align*}
    Therefore $(t, \uparrow^{S \func T'}_{\;\vGamma} t) \in \glu{S \func T'}_{\;\vGamma}$.
  \end{itemize}
\end{proof}

\begin{lemma}\labeledit{lem:glue-nf}
  If $t \sim a \in \glu{T}_{\;\vGamma}$, then
  $\mtyequiv{t}{\downarrow^T_{\;\vGamma}(a)}{T}$. 
\end{lemma}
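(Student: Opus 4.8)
The plan is to prove \Cref{lem:glue-nf} by induction on the type $T$, mutually with the reflection lemma \Cref{lem:ne-glue}. This mutual recursion is well founded because whenever one statement invokes the other the type strictly decreases: in the function case below I will use \Cref{lem:ne-glue} at the domain type $S$, and the function case of \Cref{lem:ne-glue} in turn appeals to \Cref{lem:glue-nf} at the same $S$. The base case $T = B$ is immediate, since $\glu{B}_{\;\vGamma}$ consists by definition of pairs $(t, a)$ with $\mtyequiv{t}{a}{B}$, and $\downarrow^B_{\;\vGamma}(a) = a$.

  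For $T = \square T'$ the goal is $\mtyequiv{t}{\boxit{\downarrow^{T'}_{\vGamma;\cdot}(a)}}{\square T'}$. I would first instantiate the hypothesis $t \sim a \in \glu{\square T'}_{\;\vGamma}$ at the one-context stack $\epsilon;\cdot$, obtaining $\unbox 1 t \sim a[\sextt{\vect \id}1] \in \glu{T'}_{\vGamma;\cdot}$; as already observed in the $\eta$-lemma for $\square$, the weakening $\sextt{\vect \id}1 : \vGamma;\cdot \To_w \vGamma;\cdot$ is the identity, so this reads $\unbox 1 t \sim a \in \glu{T'}_{\vGamma;\cdot}$. The induction hypothesis then gives $\mtyequiv[\vGamma;\cdot]{\unbox 1 t}{\downarrow^{T'}_{\vGamma;\cdot}(a)}{T'}$, and applying the $\tbox$-congruence rule followed by the $\eta$-rule for $\square$ together with transitivity yields the goal.

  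For $T = S \func T'$ I write $\vGamma = \vGamma_0;\Gamma$, which is legitimate since stacks are always nonempty, and recall that $\downarrow^{S \func T'}_{\vGamma_0;\Gamma}(a) = \lambda x.\,\downarrow^{T'}_{\vGamma_0;(\Gamma,x:S)}\!\bigl(a\,(p(\vect \id),\uparrow^S_{\vGamma_0;(\Gamma,x:S)}(x))\bigr)$. I would $\eta$-expand $t$ to $\lambda x.(t\ x)$ and reduce, through a $\lambda$-congruence, to proving the equality of the function bodies at the context stack $\vGamma_0;(\Gamma,x:S)$. By \Cref{lem:ne-glue} at the strictly smaller type $S$ I obtain $x \sim \uparrow^S_{\vGamma_0;(\Gamma,x:S)}(x) \in \glu{S}_{\vGamma_0;(\Gamma,x:S)}$; feeding this and the weakening $p(\vect \id) : \vGamma_0;(\Gamma,x{:}S) \To_w \vGamma_0;\Gamma$ into the function clause of $t \sim a \in \glu{S \func T'}_{\;\vGamma}$ yields $t[p(\vect \id)]\ x \sim a\,(p(\vect \id),\uparrow^S_{\vGamma_0;(\Gamma,x:S)}(x)) \in \glu{T'}_{\vGamma_0;(\Gamma,x:S)}$, and $t[p(\vect \id)] = t$ since $x$ is fresh. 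The induction hypothesis at $T'$ then delivers precisely the required equality of the bodies, and one more congruence plus the $\eta$-rule and transitivity finish the case.

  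The delicate, though routine, part will be the weakening bookkeeping: checking that the weakenings $\sextt{\vect \id}1$ and $p(\vect \id)$ produced by unfolding $\downarrow^{\square T'}$ and $\downarrow^{S \func T'}$ are exactly the ones demanded by the defining clauses of $\glu{\square T'}$ and $\glu{S \func T'}$, and that a single $\eta$-step together with a single congruence step lands on exactly the equality the induction hypothesis supplies. The only genuine structural point is arranging the mutual induction with \Cref{lem:ne-glue} so that it terminates, which the strict decrease in $T$ takes care of; I expect no real obstacle beyond that.
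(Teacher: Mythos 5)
Your proposal matches the paper's own proof: the same mutual induction on $T$ with \Cref{lem:ne-glue}, the same instantiation of the $\square$-clause at a singleton stack with an empty context (so that $\sextt{\vect\id}1$ is the identity), and the same use of the fresh variable $x$ via \Cref{lem:ne-glue}, the weakening $p(\vect\id)$, congruence, $\eta$, and transitivity in the function case. No gaps; this is essentially the paper's argument.
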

\begin{proof}
  We proceed by induction on $T$.
  \begin{itemize}[label=Case]
  \item $T = B$, immediate.
  \item $T = \square T'$, then
    \begin{align*}
      H_1: &\ \forall \vDelta. \unbox{|\vDelta|}{t} \sim
             a[\sextt{\vect \id}{|\vDelta|}] \in
             \glu{T'}_{\vGamma; \vDelta}
             \tag{by assumption} \\
           &\ \unbox 1 t \sim a[\sextt{\vect \id}{|\vDelta|}] \in \glu{T'}_{\vGamma; \cdot}
             \tag{by $H_1$ with $\vDelta$ a singleton containing an empty context} \\
           &\ \unbox 1 t \sim a \in \glu{T'}_{\vGamma; \cdot}
             \tag{$\sextt{\vect \id}1 = \vect \id : \vGamma; \cdot \To_w \vGamma; \cdot$} \\
           &\ \mtyequiv{\unbox 1 t}{\downarrow^{T'}_{\vGamma; \cdot}(a)}{T'}
             \byIH\\
           &\ \mtyequiv{\boxit{(\unbox 1 t)}}{\boxit (\downarrow^{T'}_{\vGamma;
             \cdot}(a))}{\square T'}
             \tag{by congruence}\\
           &\ \mtyequiv{t}{\boxit{(\downarrow^{T'}_{\vGamma; \cdot}(a))}}{T'}
             \tag{by $\eta$ and transitivity}
    \end{align*}
  \item $T = S \func T'$, let $\vGamma';\Gamma := \vGamma$, then
    \begin{align*}
      H_2: &\ \forall \vgamma : \vDelta \To_w \vGamma,
        s \sim b \in \glu{S}_{\;\vDelta}. t[\vgamma]\ s \sim a(\vgamma,
             b) \in \glu{T'}_{\;\vDelta} \tag{by assumption} \\
           &\ x \sim \uparrow^S_{\vGamma'; (\Gamma, x : S)}(x) \in \intp{S}_{\vGamma'; (\Gamma, x : S)}
             \tag{by \Cref{lem:ne-glue}} \\
           &\ t[p(\vect \id)]\ x \sim a(p(\vect \id), \uparrow^S_{\vGamma'; (\Gamma, x : S)}(x)) \in
             \glu{T'}_{\vGamma'; (\Gamma, x : S)}
             \tag{by $H_2$ and let $\vgamma$ be $p(\vect \id) : \vGamma'; (\Gamma, x
             : S) \To_w \vGamma'; \Gamma$} \\
           &\ \mtyequiv[\vGamma'; (\Gamma, x : S)]{t\ x}{\downarrow^{T'}_{\vGamma';
             (\Gamma, x : S)}(a(p(\vect \id), \uparrow^S_{\vGamma'; (\Gamma, x
             : S)}(x)))}{T'}
             \byIH \\
           &\ \mtyequiv{\lambda x.t\ x}
             {\lambda x.\downarrow^{T'}_{\vGamma';
             (\Gamma, x : S)}(a(p(\vect \id), \uparrow^S_{\vGamma'; (\Gamma, x
             : S)}(x)))}{S \func T'}
             \tag{by congruence} \\
           &\ \mtyequiv{t}
             {\lambda x.\downarrow^{T'}_{\vGamma';
             (\Gamma, x : S)}(a(p(\vect \id), \uparrow^S_{\vGamma'; (\Gamma, x
             : S)}(x)))}{S \func T'}
             \tag{by $\eta$ and transitivity}
    \end{align*}
  \end{itemize}
\end{proof}

After setting up all the technical lemmas, we should formulate the semantic typing
judgment. But first, we will need to generalize the gluing model to contexts and
context stacks.
\begin{align*}
  \glu{\Gamma}_{\;\vDelta} &\subseteq \vDelta \To \Gamma \times \intp{\Gamma}_{\;\vDelta} \\
  \glu{\cdot}_{\;\vDelta} &:= \{((), *)\} \\
  \glu{\Gamma, x : T}_{\;\vDelta} &:= \{ ((\sigma, t/x), (\rho, a)) \sep \sigma \sim \rho \in \glu{\Gamma}_{\;\vDelta}
                                  \tand t \sim a \in \glu{T}_{\;\vDelta} \} \\\\
  \glu{\vGamma}_{\;\vDelta} &\subseteq \vDelta \To \vGamma \times \intp{\vGamma}_{\;\vDelta} \\
  \glu{\epsilon; \Gamma}_{\;\vDelta} &:= \{ ((\varepsilon; \sigma), (*, \rho)) \sep \sigma \sim \rho
                                  \in \glu{\Gamma}_{\;\vDelta} \} \\
  \glu{\vGamma; \Gamma}_{\;\vDelta} &:= \{ ((\vsigma; n; \sigma), ((n, \vrho), \rho))
                                    \sep \vsigma \sim \vrho \in
                                    \glu{\vGamma}_{\trunc \vDelta n} \tand \sigma \sim \rho \in \glu{\Gamma}_{\;\vDelta} \}
\end{align*}

We need the follow lemmas to understand operations on $\vsigma \sim \vrho \in
\glu{\vGamma}_{\;\vDelta}$.
\begin{lemma}\labeledit{lem:glue-L}
  If $\vsigma \sim \vrho \in \glu{\vGamma}_{\;\vDelta}$, then $\Ltotal \vsigma n = \Ltotal\vrho n$.
\end{lemma}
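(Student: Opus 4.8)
The plan is to prove this by a straightforward induction on $n$. The crux is that the gluing relation $\glu{\vGamma}_{\;\vDelta}$ is set up so that, for a non-empty stack $\vGamma = \vGamma'; \Gamma$, an element $\vsigma \sim \vrho \in \glu{\vGamma'; \Gamma}_{\;\vDelta}$ consists of $\vsigma = \sext{\vsigma'}{k}{\sigma}$ and $\vrho = ((k, \vrho'), \rho)$ where the \emph{same} number $k$ appears as the offset of $\vsigma$ and as the first component of the $\Sigma$-pair in $\vrho$, together with $\vsigma' \sim \vrho' \in \glu{\vGamma'}_{\trunc{\vDelta}{k}}$. Since both the syntactic and the semantic definitions of $\Ltotal{\_}{\_}$ accumulate exactly these offsets, the two computations run in lockstep.

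Concretely, for $n = 0$ both sides are $0$ by definition of truncation offset, so there is nothing to do. For $n = 1 + m$, the implicit well-formedness condition $n < |\vGamma|$ ensures that $\vGamma$ decomposes as $\vGamma'; \Gamma$; unfolding the definition of the gluing relation as above gives $\vsigma = \sext{\vsigma'}{k}{\sigma}$, $\vrho = ((k, \vrho'), \rho)$, and $\vsigma' \sim \vrho' \in \glu{\vGamma'}_{\trunc{\vDelta}{k}}$ with $m < |\vGamma'|$. Then
\begin{align*}
  \Ltotal{\sext{\vsigma'}{k}{\sigma}}{1 + m}
  &= k + \Ltotal{\vsigma'}{m} \\
  &= k + \Ltotal{\vrho'}{m} \byIH \\
  &= \Ltotal{((k, \vrho'), \rho)}{1 + m},
\end{align*}
which closes the induction.

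I expect no real obstacle here: the lemma is essentially a bookkeeping fact confirming that the offsets stored in the syntactic substitution and in the corresponding semantic environment agree on the nose. The only point requiring a moment's care is keeping track of the side conditions $n < |\vGamma|$ (needed for the truncations $\trunc{\vDelta}{\cdot}$ and $\trunc{\vGamma}{\cdot}$ to be meaningful), which propagate cleanly through the inductive step since $m < |\vGamma'|$ whenever $1 + m < |\vGamma'; \Gamma|$.
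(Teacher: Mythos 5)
Your proof is correct and matches the paper's, which simply says "Immediate by induction on $n$": the gluing relation $\glu{\vGamma; \Gamma}_{\;\vDelta}$ stores the same offset in the substitution and in the environment, so both computations of $\Ltotal{\_}{\_}$ unfold in lockstep exactly as you describe. Your explicit handling of the side condition $n < |\vGamma|$ is a fine elaboration of the same argument, not a different route.
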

\begin{proof}
  Immediate by induction on $n$. 
\end{proof}

\begin{lemma}\labeledit{lem:glue-trunc}
  If $\vsigma \sim \vrho \in \glu{\vGamma}_{\;\vDelta}$, then $\trunc \vsigma n \sim \trunc \vrho n \in
  \glu{\trunc \vGamma n}_{\trunc \vDelta{\Ltotal \vsigma n}}$.
\end{lemma}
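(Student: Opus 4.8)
The plan is to induct on $n$, carrying the implicit side condition $n < |\vGamma|$ that makes every truncation meaningful (as throughout this section). For the base case $n = 0$ all the truncations collapse: $\trunc\vsigma 0 = \vsigma$, $\trunc\vrho 0 = \vrho$, $\trunc\vGamma 0 = \vGamma$, and $\Ltotal\vsigma 0 = 0$ so $\trunc\vDelta 0 = \vDelta$. Hence the goal is literally the hypothesis $\vsigma \sim \vrho \in \glu{\vGamma}_{\;\vDelta}$.

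For the step case $n = 1 + n'$, the bound $n < |\vGamma|$ forces $\vGamma$ to have the form $\vGamma_0; \Gamma$ with $n' < |\vGamma_0|$, so we are in the recursive clause of $\glu{\_}$. Unfolding $\glu{\vGamma_0; \Gamma}_{\;\vDelta}$, we may write $\vsigma = \sext{\vsigma'}{m}{\sigma}$ and $\vrho = ((m, \vrho'), \rho)$ with $\vsigma' \sim \vrho' \in \glu{\vGamma_0}_{\trunc\vDelta m}$ and $\sigma \sim \rho \in \glu{\Gamma}_{\;\vDelta}$. Now every relevant operation just peels off the topmost component: $\trunc{\vsigma}{1+n'} = \trunc{\vsigma'}{n'}$, $\trunc{\vrho}{1+n'} = \trunc{\vrho'}{n'}$, $\Ltotal{\vsigma}{1+n'} = m + \Ltotal{\vsigma'}{n'}$, and $\trunc{\vGamma}{1+n'} = \trunc{\vGamma_0}{n'}$. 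Applying the induction hypothesis to $\vsigma' \sim \vrho' \in \glu{\vGamma_0}_{\trunc\vDelta m}$ at index $n'$ gives $\trunc{\vsigma'}{n'} \sim \trunc{\vrho'}{n'} \in \glu{\trunc{\vGamma_0}{n'}}_{\trunc{(\trunc\vDelta m)}{\Ltotal{\vsigma'}{n'}}}$, which is the desired conclusion up to rewriting the ambient context stack in the subscript.

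The only genuine work is that rewriting, plus a well-typedness check. For the statement even to parse we need $\trunc{\vrho}{n}\in \intp{\trunc\vGamma n}_{\trunc\vDelta{\Ltotal\vsigma n}}$, i.e. $\Ltotal\vsigma n = \Ltotal\vrho n$, which is exactly \Cref{lem:glue-L}. For the subscript, $\trunc{(\trunc\vDelta m)}{\Ltotal{\vsigma'}{n'}} = \trunc{\vDelta}{(m + \Ltotal{\vsigma'}{n'})} = \trunc{\vDelta}{\Ltotal{\vsigma}{1+n'}}$, using the (immediate) context-stack analogue of \Cref{lem:trunc-sum} and the unfolding of $\Ltotal{\vsigma}{1+n'}$ noted above. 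No step here is hard; the care required is purely in keeping the nested truncations and their offsets aligned.
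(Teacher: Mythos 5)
Your proof is correct and matches the paper's approach: the paper's entire proof is ``Immediate by induction on $n$,'' and your write-up simply fills in the routine unfolding of the gluing relation, the use of \Cref{lem:glue-L}, and the truncation-offset bookkeeping that the paper leaves implicit.
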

\begin{proof}
  Immediate by induction on $n$. 
\end{proof}

\begin{lemma}\labeledit{lem:glue-stack-mon}
  If $\vsigma \sim \vrho \in \glu{\vGamma}_{\;\vDelta}$, given $\vgamma : \vDelta' \To_w
  \vDelta$,  then
  $\vsigma \circ \vgamma \sim \vrho[\vgamma] \in \glu{\vGamma}_{\;\vDelta'}$. 
\end{lemma}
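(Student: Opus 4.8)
The plan is to prove the statement by induction on the context stack $\vGamma$, following the same recursive skeleton that defines $\glu{\vGamma}_{\;\vDelta}$, composition of unified substitutions, and the monotonicity action $\vrho[\vgamma]$ on $\intp{\vGamma}$. Before that, I would record a small sublemma at the local (list-of-terms) level that is the exact analogue of \Cref{lem:glue-mon}: if $\sigma \sim \rho \in \glu{\Gamma}_{\;\vDelta}$ and $\vgamma : \vDelta' \To_w \vDelta$, then $\sigma[\vgamma] \sim \rho[\vgamma] \in \glu{\Gamma}_{\;\vDelta'}$. This is a quick induction on $\Gamma$: the empty context is trivial, and for $\Gamma, x : T$ the pair splits as $(\sigma, t/x) \sim (\rho, a)$ with $t \sim a \in \glu{T}_{\;\vDelta}$, so \Cref{lem:glue-mon} gives the new top component $t[\vgamma] \sim a[\vgamma] \in \glu{T}_{\;\vDelta'}$ and the induction hypothesis handles the remainder.

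For the base case $\vGamma = \epsilon; \Gamma$ we have $\vsigma = \varepsilon; \sigma$ and $\vrho = (*, \rho)$ with $\sigma \sim \rho \in \glu{\Gamma}_{\;\vDelta}$. Unfolding the definitions, $\vsigma \circ \vgamma = \varepsilon; (\sigma[\vgamma])$ and $\vrho[\vgamma] = (*, \rho[\vgamma])$, so the claim is precisely the sublemma. For the inductive step $\vGamma = \vGamma'; \Gamma$, membership gives $\vsigma = \sext{\vsigma_0}{n}{\sigma}$ and $\vrho = ((n, \vrho_0), \rho)$ with $\vsigma_0 \sim \vrho_0 \in \glu{\vGamma'}_{\trunc\vDelta n}$ and $\sigma \sim \rho \in \glu{\Gamma}_{\;\vDelta}$. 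Unfolding composition and the monotonicity action on context stacks,
\[ \vsigma \circ \vgamma = \sext{(\vsigma_0 \circ \trunc\vgamma n)}{\Ltotal\vgamma n}{(\sigma[\vgamma])}, \qquad \vrho[\vgamma] = ((\Ltotal\vgamma n,\ \vrho_0[\trunc\vgamma n]),\ \rho[\vgamma]). \]
The recorded offsets coincide (both are $\Ltotal\vgamma n$), so it remains to match the two components. The top component, $\sigma[\vgamma] \sim \rho[\vgamma] \in \glu{\Gamma}_{\;\vDelta'}$, is the sublemma again. For the truncated component, the truncation law for unified weakenings (inherited from unified substitutions) supplies $\trunc\vgamma n : \trunc{\vDelta'}{\Ltotal\vgamma n} \To_w \trunc\vDelta n$, so applying the induction hypothesis to $\vsigma_0 \sim \vrho_0 \in \glu{\vGamma'}_{\trunc\vDelta n}$ with this weakening yields $\vsigma_0 \circ \trunc\vgamma n \sim \vrho_0[\trunc\vgamma n] \in \glu{\vGamma'}_{\trunc{\vDelta'}{\Ltotal\vgamma n}}$. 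Packaging the two components together is exactly the definition of $\glu{\vGamma'; \Gamma}_{\;\vDelta'}$, closing the induction.

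The only thing requiring care here is the index bookkeeping: I must check that $\trunc\vgamma n$ really has domain $\trunc{\vDelta'}{\Ltotal\vgamma n}$ and codomain $\trunc\vDelta n$ (so the induction hypothesis applies verbatim), and that the offset emitted by $\vsigma \circ \vgamma$, namely $\Ltotal\vgamma n$, agrees with the first projection produced by $\vrho[\vgamma]$. Both facts are immediate from the definitions of composition and of stack monotonicity together with the truncation laws for unified weakenings, so no genuinely difficult step is involved — the lemma is essentially a routine unfolding once the local-level sublemma is in place.
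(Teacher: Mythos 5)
Your proof is correct and follows the same route as the paper, which simply says the lemma is immediate by induction on $\vGamma$, applying \Cref{lem:glue-mon} where appropriate. You have merely spelled out the details the paper leaves implicit (the pointwise local-context sublemma and the offset/truncation bookkeeping), which is exactly what that terse proof intends.
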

\begin{proof}
  Immediate by induction on $\vGamma$ and apply \Cref{lem:glue-mon} when proper. 
\end{proof}

Finally, we can define the judgment:
\begin{definition}
  \begin{align*}
    \mSemtyp t T := \forall \vsigma \sim \vrho \in
    \glu{\vGamma}_{\;\vDelta}. t[\vsigma] \sim \intp{t}_{\;\vDelta}(\vrho) \in \glu{T}_{\;\vDelta}
  \end{align*}
\end{definition}

\subsubsection{Semantic Typing Rules}

\begin{lemma}
  \begin{mathpar}
    \inferrule*
    {x : T \in \Gamma}
    {\mSemtyp[\vGamma; \Gamma] x T}
  \end{mathpar}
\end{lemma}
\begin{proof}
  Immediate.
\end{proof}

\begin{lemma}
  \begin{mathpar}
    \inferrule*
    {\mSemtyp[\vGamma; \cdot] t T}
    {\mSemtyp{\boxit t}{\square T}}
  \end{mathpar}
\end{lemma}
\begin{proof}
  Assume $\vsigma \sim \vrho \in \glu{\vGamma}_{\;\vDelta}$ and $\vDelta'$, then
  \begin{align*}
    H_1: &\ \mSemtyp[\vGamma; \cdot] t T
           \tag{by assumption} \\
         &\ (\sext\vsigma{|\vDelta'|}{()}) \sim ((|\vDelta'|, \vrho), *) \in \glu{\vGamma; \cdot}_{\vDelta; \vDelta'}
    \\
         &\ t[\sext\vsigma{|\vDelta'|}{()}] \sim \intp{t}_{\vDelta; \vDelta'}(((|\vDelta'|, \vrho), *)) \in \glu{T}_{\vDelta; \vDelta'}
           \tag{by $H_1$}
  \end{align*}
  We also know
  \begin{align*}
    t[\sext{\vsigma}{|\vDelta'|}{()}]
    &= t[\vsigma; ()][\sext{\vect \id}{|\vDelta'|}{()}] \\
    &\approx \unbox{|\vDelta'|}{(\boxit{(t[\vsigma; ()])})} \\
    &\approx \unbox{|\vDelta'|}{(\boxit{t}[\vsigma])}
  \end{align*}
  Then
  \begin{align*}
    &\ \unbox{|\vDelta'|}{(\boxit{t}[\vsigma])} \sim \intp{\boxit t}_{\;\vDelta}(\vrho)[\sext{\vect \id}{|\vDelta'|}{()}] \in \glu{T}_{\vDelta; \vDelta'}
      \tag{by \Cref{lem:glue-resp-equiv}} \\
    &\ \boxit{t}[\vsigma] \sim \intp{\boxit t}_{\;\vDelta}(\vrho) \in \glu{\square
      T}_{\;\vDelta}
      \tag{by abstraction}
  \end{align*}
\end{proof}

\begin{lemma}
  \begin{mathpar}
    \inferrule*
    {\mSemtyp t {\square T} \\ |\vDelta| = n}
    {\mSemtyp[\vGamma; \vDelta]{\unbox n t}{T}}
  \end{mathpar}
\end{lemma}
\begin{proof}
  Immediate by definition of $\glu{\square T}$ and \Cref{lem:glue-trunc}. 
\end{proof}

\begin{lemma}
  \begin{mathpar}
    \inferrule*
    {\mSemtyp[\vGamma;\Gamma, x : S]t T}
    {\mSemtyp[\vGamma;\Gamma]{\lambda x. t}{S \func T}}
  \end{mathpar}
\end{lemma}
\begin{proof}
  Assume $\vsigma \sim \vrho \in \glu{\vGamma; \Gamma}_{\;\vDelta}$, $\vDelta'$,
  $\vgamma : \vDelta' \To_w \vDelta$ and $s \sim b \in \glu{S}_{\;\vDelta'}$. Let
  $(\valpha; \sigma) := \vsigma$, $(\pi, \rho) := \vrho$ and
  $(\vbeta; \gamma) := \vgamma$. Then we should prove that
  \begin{align*}
    (\lambda x. (t[\valpha'; \sigma', x/x]))\ s \sim
    \intp{t}_{\;\vDelta'}((\pi', (\rho', b))) \in \glu{T}_{\;\vDelta'}
  \end{align*}
  where $(\valpha'; \sigma') := \vsigma \circ \vgamma$ and $(\pi', \rho') := \vrho[\vgamma]$. 

  We need to construct $\glu{\vGamma;\Gamma, x : S}_{\;\vDelta'}$ from what we have. We
  claim
  \begin{align*}
    (\valpha'; \sigma', s/x) \sim (\pi', (\rho', b)) \in \glu{\vGamma;\Gamma, x : S}_{\;\vDelta'}
  \end{align*}
  This is because
  $\vsigma \circ \vgamma \sim \vrho[\vgamma] \in \glu{\vGamma;
    \Gamma}_{\;\vDelta'}$ by \Cref{lem:glue-stack-mon}. Therefore
  \begin{align*}
    t[\valpha'; \sigma', s/x] \sim \intp{t}_{\;\vDelta'}((\pi', (\rho', b))) \in \glu{T}_{\;\vDelta'}
  \end{align*}
  Comparing the goal and what we have, we just need to reason about the syntactic
  term, and they are just one $\beta$ rule away. 
\end{proof}

\begin{lemma}
  \begin{mathpar}
    \inferrule*
    {\mSemtyp t {S \func T} \\ \mSemtyp s S}
    {\mSemtyp{t\ s}{T}}
  \end{mathpar}
\end{lemma}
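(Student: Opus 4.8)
The plan is to unfold the definition of $\mSemtyp{t\ s}{T}$ and discharge it directly from the two premises, using only the function clause of the gluing model together with the identity law for substitution application. Fix $\vsigma \sim \vrho \in \glu{\vGamma}_{\;\vDelta}$; the goal is $(t\ s)[\vsigma] \sim \intp{t\ s}_{\;\vDelta}(\vrho) \in \glu{T}_{\;\vDelta}$. By the definition of applying a unified substitution, $(t\ s)[\vsigma] = t[\vsigma]\ s[\vsigma]$, and by the interpretation of application, $\intp{t\ s}_{\;\vDelta}(\vrho) = \intp{t}_{\;\vDelta}(\vrho, \vect{\id}_{\;\vDelta}, \intp{s}_{\;\vDelta}(\vrho))$, i.e.\ the element $\intp{t}_{\;\vDelta}(\vrho) \in \intp{S \func T}_{\;\vDelta}$ applied to the identity unified weakening $\vect{\id} : \vDelta \To_w \vDelta$ and to $\intp{s}_{\;\vDelta}(\vrho)$.

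First I would instantiate the premise $\mSemtyp s S$ at $\vsigma \sim \vrho$ to get $s[\vsigma] \sim \intp{s}_{\;\vDelta}(\vrho) \in \glu{S}_{\;\vDelta}$. Next I would instantiate the premise $\mSemtyp t {S \func T}$ at the same $\vsigma \sim \vrho$ to get $t[\vsigma] \sim \intp{t}_{\;\vDelta}(\vrho) \in \glu{S \func T}_{\;\vDelta}$, and then unfold the definition of $\glu{S \func T}_{\;\vDelta}$, instantiating its universally quantified morphism with $\vect{\id} : \vDelta \To_w \vDelta$ and its glued pair with $s[\vsigma] \sim \intp{s}_{\;\vDelta}(\vrho) \in \glu{S}_{\;\vDelta}$. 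This produces $t[\vsigma][\vect{\id}]\ s[\vsigma] \sim \intp{t}_{\;\vDelta}(\vrho)(\vect{\id}, \intp{s}_{\;\vDelta}(\vrho)) \in \glu{T}_{\;\vDelta}$.

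Finally I would rewrite both sides into the required form: on the syntactic side, $t[\vsigma][\vect{\id}] = t[\vsigma]$ by the identity law $t[\vsigma][\vect{\id}] = t[\vsigma]$ established for unified substitutions, so $t[\vsigma][\vect{\id}]\ s[\vsigma] = (t\ s)[\vsigma]$; on the semantic side, $\intp{t}_{\;\vDelta}(\vrho)(\vect{\id}, \intp{s}_{\;\vDelta}(\vrho))$ is exactly $\intp{t\ s}_{\;\vDelta}(\vrho)$ by the definition of the interpretation of application. Substituting these equalities gives $(t\ s)[\vsigma] \sim \intp{t\ s}_{\;\vDelta}(\vrho) \in \glu{T}_{\;\vDelta}$, which closes the goal. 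There is essentially no real obstacle here: the only points requiring attention are choosing the identity unified weakening as the witness for the function clause of the gluing model and noticing that this matches the $\vect{\id}_{\;\vDelta}$ appearing in $\intp{t\ s}$ once the identity law for substitution application is used; the typing side-condition $\mtyping[\vDelta]{(t\ s)[\vsigma]}{T}$ comes for free from membership in $\glu{T}_{\;\vDelta}$.
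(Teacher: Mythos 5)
Your proposal is correct and matches the paper's intent: the paper simply says the rule is ``immediate by just applying the premises,'' and your argument is exactly the expected unfolding of that remark, instantiating the function clause of the gluing model at the identity unified weakening and the glued pair $s[\vsigma] \sim \intp{s}_{\;\vDelta}(\vrho)$, then using $t[\vsigma][\vect\id] = t[\vsigma]$ to align the syntactic side with $(t\ s)[\vsigma]$ and the definition $\intp{t\ s}_{\;\vDelta}(\vrho) = \intp{t}_{\;\vDelta}(\vrho)(\vect\id, \intp{s}_{\;\vDelta}(\vrho))$ to align the semantic side.
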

\begin{proof}
  Immediate by just applying the premises. 
\end{proof}

\subsubsection{Fundamental Theorem}

Since we have proven the semantic typing rules, we can conclude the fundamental
theorem:
\begin{theorem}
  (fundamental) If $\mtyping t T$, then $\mSemtyp t T$. 
\end{theorem}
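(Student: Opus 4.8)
The plan is to prove this by a straightforward induction on the derivation of $\mtyping t T$. The key observation is that the preceding subsection has already established, as separate lemmas, the semantic analogue of every syntactic typing rule from \Cref{sec:st:syn}: the variable rule, the $\square$ introduction rule, the $\square$ elimination (\tunbox) rule, the $\lambda$ abstraction rule, and the application rule. So the inductive argument is purely one of bookkeeping: in each case the induction hypotheses supply exactly the semantic typing premises required by the matching semantic rule, and we conclude by invoking that rule.

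Concretely, I would structure the proof as five cases following the typing rules. In the variable case $t = x$ with $x : T \in \Gamma$, there is no hypothesis to invoke and the result is the semantic variable lemma directly. In the case $t = \boxit{t'}$, the typing premise is $\mtyping[\vGamma; \cdot]{t'}{T'}$; the induction hypothesis gives $\mSemtyp[\vGamma; \cdot]{t'}{T'}$, and the semantic $\square$ introduction lemma yields $\mSemtyp{\boxit{t'}}{\square T'}$. In the case $t = \unbox n {t'}$ typed in $\vGamma; \vDelta$ with $|\vDelta| = n$, the premise is $\mtyping{t'}{\square T}$; the induction hypothesis gives $\mSemtyp{t'}{\square T}$, and the semantic \tunbox lemma yields the conclusion. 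The $\lambda$ and application cases are handled identically, by applying the respective semantic typing lemmas to the induction hypotheses.

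I do not expect any real obstacle here: all of the analytic content---manipulating the gluing model, truncation and truncation offset, monotonicity (\Cref{lem:glue-mon,lem:glue-stack-mon}), and the congruence and $\beta$ reasoning about syntactic terms---has already been pushed into the individual semantic typing lemmas. The only point requiring a moment of care is confirming that each induction hypothesis is stated over the correct context stack (e.g.\ $\vGamma; \cdot$ for the body of a \tbox, or the truncated stack implicitly appearing in the \tunbox case), but this matches the shape of the corresponding semantic lemma by construction, so every case closes immediately.
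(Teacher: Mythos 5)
Your proposal is correct and matches the paper's own treatment: the paper proves the five semantic typing lemmas (variable, $\square$ introduction, \tunbox, $\lambda$, application) immediately beforehand and then concludes the fundamental theorem by exactly this induction on the typing derivation, with each case discharged by the corresponding lemma. No gap; the bookkeeping observation about context stacks ($\vGamma;\cdot$ for \tbox bodies, the extended stack for \tunbox) is the only point of care, and you have it right.
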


We also need to relate $\vect \id$ and $\uparrow^{\vGamma}$:
\begin{lemma}
  $\vect \id \sim \uparrow^{\vGamma} \in \glu{\vGamma}_{\;\vGamma}$.
\end{lemma}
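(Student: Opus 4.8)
The plan is to prove this by induction on $\vGamma$, after first isolating the single-context content of the claim as a subsidiary lemma. That subsidiary lemma states: for every context stack $\vGamma'$ and context $\Gamma$ with $\vGamma';\Gamma$ valid, if $\uparrow^{\vGamma';\Gamma}=(\pi,\rho)$ (it is always such a pair, with $\rho\in\intp{\Gamma}_{\vGamma';\Gamma}$ the identity environment of the topmost context), then the local identity substitution $\id:\vGamma';\Gamma\To\Gamma$ satisfies $\id\sim\rho\in\glu{\Gamma}_{\vGamma';\Gamma}$.

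I would prove the subsidiary lemma by induction on $\Gamma$. The case $\Gamma=\cdot$ is immediate since $\glu{\cdot}_{\;\vDelta}=\{((),*)\}$ and $\rho=*$. For $\Gamma=\Gamma_0,x:T$, unfolding the definition of $\uparrow$ and using that monotonicity of $\intp{\vGamma';\Gamma_0}$ acts componentwise, we get $\uparrow^{\vGamma';(\Gamma_0,x:T)}=(\pi,(\rho_0[p(\vect{\id})],\uparrow^T_{\vGamma';(\Gamma_0,x:T)}(x)))$, where $\rho_0$ is the topmost-context component of $\uparrow^{\vGamma';\Gamma_0}$. The local identity $\id:\vGamma';(\Gamma_0,x:T)\To(\Gamma_0,x:T)$ splits as $(\id_0,x/x)$ with $\id_0:\vGamma';(\Gamma_0,x:T)\To\Gamma_0$ the restriction acting as the identity on $\Gamma_0$. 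By the inductive hypothesis the local identity $\id_0'$ of $\Gamma_0$ in the stack $\vGamma';\Gamma_0$ glues with $\rho_0$; applying monotonicity of the gluing model for contexts (a routine induction on $\Gamma_0$ from \Cref{lem:glue-mon}, in the style of \Cref{lem:glue-stack-mon}) along $p(\vect{\id}):\vGamma';(\Gamma_0,x:T)\To_w\vGamma';\Gamma_0$ gives $\id_0'\circ p(\vect{\id})\sim\rho_0[p(\vect{\id})]\in\glu{\Gamma_0}_{\vGamma';(\Gamma_0,x:T)}$, and $\id_0'\circ p(\vect{\id})$ is exactly $\id_0$. Since $x$ is neutral, \Cref{lem:ne-glue} gives $x\sim\uparrow^T_{\vGamma';(\Gamma_0,x:T)}(x)\in\glu{T}_{\vGamma';(\Gamma_0,x:T)}$, and combining the two pieces yields $\id\sim\rho\in\glu{\Gamma_0,x:T}_{\vGamma';(\Gamma_0,x:T)}$.

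With the subsidiary lemma in hand I would prove $\vect{\id}\sim\uparrow^{\vGamma}\in\glu{\vGamma}_{\;\vGamma}$ by induction on $\vGamma$. When $\vGamma=\epsilon;\Gamma$ we have $\vect{\id}=\snil{\id}$ and $\uparrow^{\epsilon;\Gamma}=(*,\rho)$, so by the definition of $\glu{\epsilon;\Gamma}$ it remains only to check $\id\sim\rho\in\glu{\Gamma}_{\epsilon;\Gamma}$, which is the subsidiary lemma with $\vGamma'=\epsilon$. When $\vGamma=\vGamma'';\Gamma$ we have $\vect{\id}=\sext{\vect{\id}_{\vGamma''}}{1}{\id}$ and, by the computation above, $\uparrow^{\vGamma'';\Gamma}=((1,\uparrow^{\vGamma''}),\rho)$ with $\rho$ the topmost-context component; by the definition of $\glu{\vGamma'';\Gamma}$ it suffices to check $\vect{\id}_{\vGamma''}\sim\uparrow^{\vGamma''}\in\glu{\vGamma''}_{\trunc{(\vGamma'';\Gamma)}{1}}$, which is $\glu{\vGamma''}_{\vGamma''}$ and hence the inductive hypothesis, together with $\id\sim\rho\in\glu{\Gamma}_{\vGamma'';\Gamma}$, which is the subsidiary lemma with $\vGamma'=\vGamma''$.

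The main obstacle is the inductive step of the subsidiary lemma: one has to verify precisely that monotonicity of $\intp{\vGamma';\Gamma_0}$ along $p(\vect{\id})$ leaves the non-topmost data of the identity environment essentially untouched, so that $\uparrow^{\vGamma';(\Gamma_0,x:T)}$ really has the claimed shape, and that the composite $\id_0'\circ p(\vect{\id})$ is literally the weaken-then-identity substitution $\id_0$. These are elementary bookkeeping facts about how unified weakenings compose with local substitutions, but they need to be stated carefully; everything else is a direct appeal to \Cref{lem:ne-glue}, \Cref{lem:glue-mon}, and the two inductive hypotheses.
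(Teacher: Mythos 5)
Your proof is correct and follows essentially the same route as the paper, which simply states the result is immediate by induction on $\vGamma$ together with \Cref{lem:ne-glue}. Your write-up just makes explicit what that one-liner leaves implicit: the inner induction on the topmost context, the use of \Cref{lem:glue-mon} (componentwise, as in \Cref{lem:glue-stack-mon}) to transport the inductive hypothesis along $p(\vect\id)$, and the functoriality bookkeeping showing $\uparrow^{\vGamma'';\Gamma}$ has the expected shape.
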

\begin{proof}
  Immediate by induction on $\vGamma$ and apply \Cref{lem:ne-glue}. 
\end{proof}

\begin{theorem}
  (soundness) If $\mtyping t T$, then $\mtyequiv{t}{\nbe^T_{\;\vGamma}(t)}T$. 
\end{theorem}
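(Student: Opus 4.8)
The plan is to obtain soundness as a direct corollary of the fundamental theorem for the gluing model, using the two bridging facts already in place: that $\vect\id \sim \uparrow^{\vGamma} \in \glu{\vGamma}_{\;\vGamma}$, and that membership in the gluing model at type $T$ entails equivalence with the reified value (\Cref{lem:glue-nf}). First I would apply the fundamental theorem to $\mtyping t T$, giving $\mSemtyp t T$; unfolding the definition, this says that for every $\vsigma \sim \vrho \in \glu{\vGamma}_{\;\vDelta}$ we have $t[\vsigma] \sim \intp{t}_{\;\vDelta}(\vrho) \in \glu{T}_{\;\vDelta}$.

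Next I would instantiate this with $\vDelta := \vGamma$, $\vsigma := \vect\id$, and $\vrho := \uparrow^{\vGamma}$, which is legitimate precisely because of the lemma $\vect\id \sim \uparrow^{\vGamma} \in \glu{\vGamma}_{\;\vGamma}$. This yields $t[\vect\id] \sim \intp{t}_{\;\vGamma}(\uparrow^{\vGamma}) \in \glu{T}_{\;\vGamma}$. Since applying the identity unified substitution returns a term unchanged, $t[\vect\id] = t$, so in fact $t \sim \intp{t}_{\;\vGamma}(\uparrow^{\vGamma}) \in \glu{T}_{\;\vGamma}$ (if the identity law is only available up to $\approx$, \Cref{lem:glue-resp-equiv} closes the gap). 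Finally, \Cref{lem:glue-nf} gives $\mtyequiv{t}{\downarrow^T_{\;\vGamma}(\intp{t}_{\;\vGamma}(\uparrow^{\vGamma}))}{T}$, and the right-hand side is by definition $\nbe^T_{\;\vGamma}(t)$, which is the claim.

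All the real work has been done beforehand: constructing the gluing predicates $\glu{T}$, $\glu{\Gamma}$, $\glu{\vGamma}$, establishing their monotonicity and equivalence-respecting properties, the mutually inductive pair \Cref{lem:ne-glue,lem:glue-nf}, and the per-rule semantic typing lemmas that combine into the fundamental theorem. Consequently I expect no genuine obstacle in this concluding step; the only point requiring a word of care is the identity-substitution bookkeeping (that $t[\vect\id]$ is literally $t$, or at worst $\approx t$), which follows from the categorical laws for unified substitutions established earlier.
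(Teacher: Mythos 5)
Your proposal matches the paper's own argument: it applies the fundamental theorem together with the lemma $\vect\id \sim \uparrow^{\vGamma} \in \glu{\vGamma}_{\;\vGamma}$ to obtain $t[\vect\id] \sim \intp{t}_{\;\vGamma}(\uparrow^{\vGamma}) \in \glu{T}_{\;\vGamma}$, then concludes via \Cref{lem:glue-resp-equiv,lem:glue-nf}, exactly as you do (including the identity-substitution bookkeeping). No gaps; this is essentially the same proof.
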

\begin{proof}
  Applying the fundamental theorem, we have $t[\vect \id] \sim
  \intp{t}_{\;\vGamma}(\uparrow^{\vGamma})\in \glu{T}_{\;\vGamma}$.  We conclude the goal
  by applying \Cref{lem:glue-resp-equiv,lem:glue-nf}. 
\end{proof}

\section{Normalization by Evaluation in Untyped Domain}\labeledit{sec:st:untyped}

In the previous section, we developed an NbE algorithm based on a presheaf
model. However, there are some difficulties for extending the presheaf model to
dependent types because of the complexity of
bookkeeping. \citet{kaposi_normalisation_2017} shows how one can extract an NbE
algorithm from a category with families (CwF) model, but the proof is conceivably more
complex than \citet{abel_normalization_2013} and their formulation uses inverse
abstractions as the elimination for dependent functions instead of applications as it is
typically done. 
Therefore, we choose to follow \citet{abel_normalization_2013}, and our first goal is to develop an NbE algorithm based on an untyped
domain. To simplify certain proofs, we will first extend the calculus in
\Cref{sec:st:syn} with explicit substitution, and then interpret the extended calculus
to an untyped domain, from which we extract the NbE algorithm.

\subsection{Explicit Substitution Calculus}\labeledit{sec:st:subst-calc}

We formulate a version of the calculus in \Cref{sec:st:syn} with explicit
substitution. In this section, we complete the details of this calculus.
\begin{alignat*}{2}
  s, t, u &:=&&\ \cdots \sep t[\vect \sigma] \tag{application of unified substitution} \\
  \vsigma, \vdelta &:= &&\ \vect I \sep \vsigma, t/x \sep \wk_x
  \sep \sextt\vsigma n  \sep \vsigma \circ \vdelta \tag{Unified substitution, \Substs}
\end{alignat*}

In addition to the existing typing rules, we add
\begin{mathpar}
  \inferrule
  {\mtyping[\vDelta]t T \\ \mtyping{\vsigma}{\vDelta}}
  {\mtyping{t[\vsigma]}{T}}
\end{mathpar}
The following judgments specify well-formed unified substitutions:
\begin{mathpar}
  \inferrule
  { }
  {\mtyping{\vect I}{\vGamma}}

  \inferrule
  {\mtyping{\vsigma}{\vGamma'; \Gamma} \\ \mtyping t T}
  {\mtyping{\vsigma, t/x}{\vGamma';(\Gamma, x : T)}}

  \inferrule
  { }
  {\mtyping[\vGamma; (\Gamma, x : T)]{\wk_x}{\vGamma;\Gamma}}

  \inferrule
  {\mtyping{\vsigma}{\vDelta} \\ |\vGamma'| = n}
  {\mtyping[\vGamma; \vGamma']{\sextt\vsigma n}{\vDelta; \cdot}}

  \inferrule
  {\mtyping[\vGamma']{\vsigma}{\vGamma''} \\ \mtyping{\vdelta}{\vGamma'}}
  {\mtyping{\vsigma \circ \vdelta}{\vGamma''}}
\end{mathpar}
The truncation offset function is defined recursively on the syntax of $\vsigma$:
\begin{align*}
  \Ltotal \vsigma 0 &:= 0 \\
  \Ltotal {\vect I} {1 + n} &:= 1 + n \\
  \Ltotal {(\vsigma, t/x)} {1 + n} &:= \Ltotal {\vsigma} {1 + n} \\
  \Ltotal {\wk_x} {1 + n} &:= \vect I \\
  \Ltotal {\sextt \vsigma m} {1 + n} &:= m + \Ltotal \vsigma n \\
  \Ltotal {\vsigma \circ \vdelta} {1 + n} &:= \Ltotal \vdelta {\Ltotal \vsigma {1 +
                                              n}} 
\end{align*}
The following lemma states the
condition when truncation offset is well-defined:
\begin{lemma}
  If $\mtyping \vsigma \vDelta$ and $n < |\vDelta|$, then $\Ltotal\vsigma n <
  |\vGamma|$. 
\end{lemma}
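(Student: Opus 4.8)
The natural approach is induction on the derivation of $\mtyping\vsigma\vDelta$ --- equivalently, on the structure of $\vsigma$, since the five generators of $\vsigma$ correspond one-to-one with the substitution typing rules. In every case I first dispose of $n = 0$: then $\Ltotal\vsigma 0 = 0$ by definition, and by the standing convention that every context stack occurring in a judgment is non-empty we have $|\vGamma| \ge 1 > 0$. So from now on assume $n = 1 + n'$.

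For the remaining generators: if $\vsigma = \vect I : \vGamma \To \vGamma$ then $\vDelta = \vGamma$ and $\Ltotal{\vect I}{n} = n < |\vGamma|$ directly. If $\vsigma = \vsigma_0, t/x$ with $\vsigma_0 : \vGamma \To \vGamma_0; \Gamma$ and $\vDelta = \vGamma_0; (\Gamma, x{:}T)$, then $|\vDelta| = |\vGamma_0; \Gamma|$, so $n < |\vGamma_0; \Gamma|$, and since $\Ltotal{(\vsigma_0, t/x)}{n} = \Ltotal{\vsigma_0}{n}$ the bound follows from the induction hypothesis on $\vsigma_0$. If $\vsigma = \wk_x : \vGamma; (\Gamma, x{:}T) \To \vGamma; \Gamma$, the claim is immediate once one observes that $\wk_x$ only weakens within the topmost context, so its truncation offset at any index is just that index, which is $< |\vGamma; \Gamma| = |\vGamma; (\Gamma, x{:}T)|$ by hypothesis. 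If $\vsigma = \sextt{\vsigma_0}{m} : \vGamma; \vGamma' \To \vDelta_0; \cdot$ with $\vsigma_0 : \vGamma \To \vDelta_0$ and $|\vGamma'| = m$, then $n < |\vDelta_0; \cdot| = |\vDelta_0| + 1$ gives $n' < |\vDelta_0|$, so the induction hypothesis on $\vsigma_0$ yields $\Ltotal{\vsigma_0}{n'} < |\vGamma|$, whence $\Ltotal{\sextt{\vsigma_0}{m}}{n} = m + \Ltotal{\vsigma_0}{n'} < m + |\vGamma| = |\vGamma; \vGamma'|$.

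The only case requiring a little thought is composition, $\vsigma = \vsigma_0 \circ \vdelta$, with $\vsigma_0 : \vGamma' \To \vGamma''$ and $\vdelta : \vGamma \To \vGamma'$, so $\vDelta = \vGamma''$ and $n < |\vGamma''|$. I first apply the induction hypothesis to $\vsigma_0$ at index $n$ to get $\Ltotal{\vsigma_0}{n} < |\vGamma'|$; this inequality is exactly the side condition needed to invoke the induction hypothesis on $\vdelta$ at index $\Ltotal{\vsigma_0}{n}$, which gives $\Ltotal{\vdelta}{\Ltotal{\vsigma_0}{n}} < |\vGamma|$. Since $\Ltotal{\vsigma_0 \circ \vdelta}{n} = \Ltotal{\vdelta}{\Ltotal{\vsigma_0}{n}}$ by definition, this finishes the case. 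I expect this chaining of two instances of the induction hypothesis --- checking that the truncation offset produced by $\vsigma_0$ falls in the admissible range of indices for $\vdelta$ --- to be the only genuine (if small) obstacle; everything else amounts to reading the context-stack lengths off the substitution typing rules and routine arithmetic on $\Ltotal{\cdot}{\cdot}$.
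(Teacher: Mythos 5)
Your proof is correct and follows essentially the same route as the paper, which simply performs induction on $n$ and the typing derivation $\mtyping{\vsigma}{\vDelta}$; your elaboration of the cases, including chaining the two induction hypotheses in the composition case (where the inner one is applied at the re-indexed offset $\Ltotal{\vsigma_0}{n}$), is exactly how that induction is carried out.
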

\begin{proof}
  Induction on $n$ and $\mtyping \vsigma \vDelta$.
\end{proof}
In particular, it ensures the recursive cases always hit the defined cases. 

We can also define truncation in a similar manner:
\begin{align*}
  \trunc {\vsigma} 0&:=  \vsigma \\
  \trunc {\vect I} {1 + n}&:= \vect I \\
  \trunc {(\vsigma, t/x)} {1 + n} &:= \trunc {\vsigma} {1 + n} \\
  \trunc {\wk_x} {1 + n} &:= \vect I \\
  \trunc {(\sextt \vsigma m)} {1 + n} &:= \trunc \vsigma {n} \\
  \trunc {(\vsigma \circ \vdelta)} {1 + n} &:= (\trunc \vsigma {1 + n})
                                             \circ (\trunc \vdelta {\Ltotal \vsigma {1+n}})
\end{align*}
We can prove a similar lemma that quantifies the well behavior of truncation:
\begin{lemma}
  If $\mtyping \vsigma \vDelta$ and $n < |\vDelta|$, then 
  $\mtyping[\trunc\vGamma{\Ltotal\vsigma n}]{\trunc\vsigma n}{\trunc\vDelta n}$.
\end{lemma}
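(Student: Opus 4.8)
The plan is to prove the statement by induction on the derivation $\mtyping\vsigma\vDelta$, with $n$ kept universally quantified. Generalizing $n$ is essential: in the composition case the induction hypothesis is invoked at a numeric argument different from $n$, namely a truncation offset of a sub-substitution. Throughout I will use the immediately preceding lemma (if $\mtyping\vsigma\vDelta$ and $n < |\vDelta|$ then $\Ltotal\vsigma n < |\vGamma|$) to discharge the range side-conditions that each appeal to the induction hypothesis requires, together with the elementary arithmetic of context-stack truncation, in particular $\trunc{(\vGamma; \Gamma)}{1+n} = \trunc\vGamma n$ and $\trunc{(\vGamma; \Gamma_1; \cdots; \Gamma_m)}{m+k} = \trunc\vGamma k$, which are read off directly from the definition of $\trunc\vGamma n$.

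The case $n = 0$ is immediate for every rule since $\trunc\vsigma 0 = \vsigma$, $\trunc\vDelta 0 = \vDelta$ and $\Ltotal\vsigma 0 = 0$, so the goal coincides with the assumption. Assume then $n = 1 + n'$ and analyze the last rule used to derive $\mtyping\vsigma\vDelta$. For $\vect I$, both $\trunc{\vect I}{1+n'} = \vect I$ and the truncation offset evaluate so that the goal becomes $\mtyping[\trunc\vGamma{1+n'}]{\vect I}{\trunc\vGamma{1+n'}}$, which holds by the identity rule; the $\wk_x$ case is analogous, with $\trunc{\wk_x}{1+n'} = \vect I$ and both sides truncating to $\trunc\vGamma{n'}$. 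For $\vsigma, t/x$ with $\vsigma : \vGamma \To \vGamma'; \Gamma$ and $\vDelta = \vGamma'; (\Gamma, x{:}T)$, we have $\trunc{(\vsigma, t/x)}{1+n'} = \trunc\vsigma{1+n'}$, the truncation offset agrees with $\Ltotal\vsigma{1+n'}$, and $\trunc\vDelta{1+n'} = \trunc{(\vGamma'; \Gamma)}{1+n'}$, so from $1+n' < |\vDelta|$ (hence $1+n' < |\vGamma'; \Gamma|$) the induction hypothesis on the sub-derivation of $\vsigma$ at $1+n'$ gives exactly the goal. For $\sextt\vsigma m$ with $\vGamma; \vGamma' \To \vDelta'; \cdot$, $|\vGamma'| = m$ and $\vsigma : \vGamma \To \vDelta'$, we compute $\trunc{(\sextt\vsigma m)}{1+n'} = \trunc\vsigma{n'}$ and truncation offset $m + \Ltotal\vsigma{n'}$, so the domain truncates to $\trunc{(\vGamma; \vGamma')}{m + \Ltotal\vsigma{n'}} = \trunc\vGamma{\Ltotal\vsigma{n'}}$ and the codomain to $\trunc{\vDelta'}{n'}$; since $n' < |\vDelta'|$, the induction hypothesis on $\vsigma$ at $n'$ closes the case.

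The only case with any content is $\vsigma \circ \vdelta$, with $\vsigma : \vGamma' \To \vGamma''$ and $\vdelta : \vGamma \To \vGamma'$. Set $m := \Ltotal\vsigma{1+n'}$. By the definitions of truncation and truncation offset on composites, $\trunc{(\vsigma \circ \vdelta)}{1+n'} = (\trunc\vsigma{1+n'}) \circ (\trunc\vdelta m)$ and $\Ltotal{(\vsigma \circ \vdelta)}{1+n'} = \Ltotal\vdelta m$. Applying the induction hypothesis to the sub-derivation of $\vsigma$ at $1+n'$ (valid since $1+n' < |\vGamma''|$) yields $\mtyping[\trunc{\vGamma'}m]{\trunc\vsigma{1+n'}}{\trunc{\vGamma''}{1+n'}}$; by the preceding lemma $m < |\vGamma'|$, so the induction hypothesis on the sub-derivation of $\vdelta$ at $m$ yields $\mtyping[\trunc\vGamma{\Ltotal\vdelta m}]{\trunc\vdelta m}{\trunc{\vGamma'}m}$; composing these via the composition typing rule gives precisely $\mtyping[\trunc\vGamma{\Ltotal\vdelta m}]{(\trunc\vsigma{1+n'}) \circ (\trunc\vdelta m)}{\trunc{\vGamma''}{1+n'}}$. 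The step I expect to demand the most care is exactly this one: making the two nested induction-hypothesis appeals type-check simultaneously — one on $\vsigma$ at $1+n'$ and one on $\vdelta$ at the shifted index $m$ — which is why $n$ must be generalized and why the well-definedness lemma is needed to guarantee $m < |\vGamma'|$. Everything else is routine bookkeeping with the truncation-offset arithmetic.
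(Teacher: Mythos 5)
Your proof is correct and follows essentially the same route as the paper, whose proof is simply an induction on $n$ and on the derivation $\mtyping{\vsigma}{\vDelta}$; your generalization of $n$ and the use of the preceding lemma ($\Ltotal\vsigma n < |\vGamma|$) in the composition case are exactly the details that proof elides.
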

\begin{proof}
  Induction on $n$ and $\mtyping \vsigma \vDelta$.
\end{proof}

We reexamine the properties in \Cref{sec:usubst-props} and all of them hold. 

The equational theory of the newly added components all enjoy congruence and PER laws,
which I will not write out here. I will focus on the more interesting rules.
Let us first consider various application of unified substitution to terms:
\begin{mathpar}
  \inferrule
  {\mtyping t T}
  {\mtyequiv{t[\vect I]}{t}{T}}

  \inferrule
  {\mtyping[\vGamma']{\vsigma}{\vGamma''} \\ \mtyping[\vGamma]{\vdelta}{\vGamma'} \\
  \mtyping[\vGamma'']{t}{T}}
  {\mtyequiv{t[\vsigma \circ \vdelta]}{t[\vsigma][\vdelta]}{T}}
\end{mathpar}

Then we look specifically into variables:
\begin{mathpar}
  \inferrule
  {\mtyping{\vsigma}{\vGamma'; \Gamma} \\ \mtyping t T}
  {\mtyequiv{x[\vsigma, t/x]}{t}{T}}
 
  \inferrule
  {\mtyping{\vsigma}{\vGamma'; \Gamma} \\ \mtyping t T' \\ y : T \in \Gamma}
  {\mtyequiv{y[\vsigma, t/x]}{y[\vsigma]}{T}}

  \inferrule
  {y : T \in \Gamma}
  {\mtyequiv[\vGamma; (\Gamma, x : T')]{y[\wk_x]}{y}{T}}
\end{mathpar}

The we consider applying unified substitutions other terms:
\begin{mathpar}
  \inferrule
  {\mtyping{\vsigma}{\vDelta} \\ \mtyping[\vDelta; \cdot]{t}{T}}
  {\mtyequiv{\boxit t[\vsigma]}{\boxit{(t[\sextt\vsigma 1])}}{\square T}}

  \inferrule
  {\mtyping[\vDelta]{t}{\square T} \\ \mtyping[\vGamma; \vGamma']{\vsigma}{\vDelta;
      \vDelta'} \\\\
    |\vDelta'| = n \\ |\vGamma'| = \Ltotal\vsigma n}
  {\mtyequiv[\vGamma; \vGamma']{\unbox n t[\vsigma]}{\unbox {\Ltotal\vsigma n}{(t[\trunc\vsigma n])}}{T}}

  \inferrule
  {\mtyping[\vGamma; \Gamma]{\vsigma}{\vDelta;\Delta} \\
  \mtyping[\vDelta; \Delta, x : S]{t}{T}}
  {\mtyequiv[\vGamma; \Gamma]{\lambda x. t[\vsigma]}{\lambda x. (t[(\vsigma \circ
      \wk_x), x / x])}{S \func T}}

  \inferrule
  {\mtyping{\vsigma}{\vDelta} \\
    \mtyping[\vDelta]{s}{S \func T} \\
  \mtyping[\vDelta]{t}{S}}
  {\mtyequiv{s\ t[\vsigma]}{(s[\vsigma])\ (t[\vsigma])}{T}}
\end{mathpar}

The advantage of this explicit substitution is that the $\beta$ equivalences have
fixed syntax:
\begin{mathpar}
  \inferrule*
  {\mtyping[\vGamma; \cdot]{t}{T} \\
  |\vDelta| = n}
  {\mtyequiv[\vGamma; \vDelta]{\unbox{n}{(\boxit t)}}{t[\sextt{\vect I}n]}{T}}

  \inferrule*
  {\mtyping[\vGamma;(\Gamma, x : S)]t T \\ \mtyping[\vGamma; \Gamma] s S}
  {\mtyequiv[\vGamma; \Gamma]{(\lambda x. t) s}{t[\vect I, s/x]}{T}}
\end{mathpar}
The $\eta$ expansion is simpler:
\begin{mathpar}
  \inferrule*
  {\mtyping{t}{\square T}}
  {\mtyequiv{t}{\boxit{(\unbox 1 t)}}{\square T}}

  \inferrule*
  {\mtyping{t}{S \func T}}
  {\mtyequiv{t}{\lambda x. t[\wk_x]\ x}{S \func T}}
\end{mathpar}
We make the local weakening explicit in the function case. 

Next, we consider the interaction between different unified substitutions, or the
algebraic axiomization of unified substitutions. We first begin with categorical laws:
\begin{mathpar}
  \inferrule
  {\mtyping{\vsigma}{\vDelta}}
  {\mtyequiv{\vsigma \circ \vect I}{\vsigma}{\vDelta}}

  \inferrule
  {\mtyping{\vsigma}{\vDelta}}
  {\mtyequiv{\vect I \circ \vsigma}{\vsigma}{\vDelta}}

  \inferrule
  {\mtyping[\vGamma'']{\vsigma''}{\vGamma'''} \\ \mtyping[\vGamma']{\vsigma'}{\vGamma''} \\ \mtyping{\vsigma}{\vGamma'}}
  {\mtyequiv{(\vsigma'' \circ \vsigma') \circ \vsigma}{\vsigma'' \circ (\vsigma' \circ \vsigma)}{\vGamma'''}}
\end{mathpar}
We next consider composition:
\begin{mathpar}
  \inferrule
  {\mtyping[\vGamma']{\vsigma}{\vGamma''; \Gamma} \\ \mtyping[\vGamma']{t}{T} \\ \mtyping{\vdelta}{\vGamma'}}
  {\mtyequiv{\vsigma, t/x \circ \vdelta}{(\vsigma \circ \vdelta), t[\vdelta]/x}{\vGamma''; (\Gamma, x : T)}}
  
  \inferrule
  {\mtyping{\vsigma}{\vGamma'} \\ \mtyping[\vGamma'']{\vdelta}{\vGamma;\vDelta} \\
    |\vDelta| = n}
  {\mtyequiv[\vGamma'']{(\sextt\vsigma n) \circ \vdelta}{\sextt{(\vsigma \circ \trunc\vdelta n)}{\Ltotal \vdelta n} }{\vGamma';\cdot}}
\end{mathpar}

At last, we consider the extensionality principles which stipulates how syntax should
interact:
\begin{mathpar}
  \inferrule
  {\mtyping[\vGamma']{\vsigma}{\vGamma; \Gamma} \\ \mtyping[\vGamma']{t}{T}}
  {\mtyequiv[\vGamma']{\wk_x \circ (\vsigma, t/x)}{\vsigma}{\vGamma; \Gamma}}

  \inferrule
  {\mtyping[\vGamma']{\vsigma}{\vGamma; (\Gamma, x : T)}}
  {\mtyequiv[\vGamma']{\vsigma}{(\wk_x \circ \vsigma), (x[\vsigma]/x)}{\vGamma; (\Gamma, x : T)}}

  \inferrule
  {\mtyping{\vsigma}{\vDelta; \cdot} \\ |\vDelta| > 0 \\ \Ltotal\vsigma 1 = n}
  {\mtyequiv{\vsigma}{\sextt{\trunc \vsigma 1}{n}}{\vDelta; \cdot}}
\end{mathpar}

After defining the equivalence between unified substitutions, we shall examine truncation offset and
truncation respect the equivalence:
\begin{lemma}
  If $\mtyequiv{\vsigma}{\vsigma'}\vDelta$ and $n < |\vDelta|$, then $\Ltotal\vsigma n =
  \Ltotal{\vsigma'} n$.
\end{lemma}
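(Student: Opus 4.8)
The plan is to induct on the derivation of $\mtyequiv{\vsigma}{\vsigma'}\vDelta$. The case $n = 0$ is immediate since $\Ltotal{\vsigma}{0} = 0 = \Ltotal{\vsigma'}{0}$ by definition, so I henceforth assume $n = 1 + n'$. The reflexivity and symmetry rules are trivial, and transitivity composes the two equalities produced by the induction hypotheses: the intermediate substitution carries the same target annotation $\vDelta$, so the side condition $n < |\vDelta|$ applies unchanged to both subderivations.

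For the congruence rules I would proceed by cases on the head constructor of $\vsigma$ (and correspondingly $\vsigma'$). The cases $\vsigma = \vect I$, $\vsigma = \wk_x$, and $\vsigma = \vsigma_1, t/x$ are read off directly from the defining clauses of $\Ltotal{}{}$, using the induction hypothesis for the subsequence $\vsigma_1$ in the last case (its side condition $n < |\vGamma_1;\Gamma|$ is exactly the hypothesis $n < |\vDelta|$). For $\vsigma = \sextt{\vsigma_1}{m}$ we unfold $\Ltotal{\sextt{\vsigma_1}{m}}{1+n'} = m + \Ltotal{\vsigma_1}{n'}$, and the induction hypothesis for $\vsigma_1$ applies because $1 + n' < |\vDelta;\cdot| = |\vDelta|+1$ forces $n' < |\vDelta|$, the target length of $\vsigma_1$. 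For $\vsigma = \vsigma_1 \circ \vdelta_1$ we unfold $\Ltotal{\vsigma_1 \circ \vdelta_1}{1+n'} = \Ltotal{\vdelta_1}{\Ltotal{\vsigma_1}{1+n'}}$; the inner equality $\Ltotal{\vsigma_1}{1+n'} = \Ltotal{\vsigma_1'}{1+n'}$ is the induction hypothesis for the left component, and the outer equality is the induction hypothesis for the right component, whose side condition $\Ltotal{\vsigma_1}{1+n'} < |\vGamma'|$ is supplied by the well-definedness lemma for truncation offset of \Cref{sec:st:subst-calc} (applied to $\mtyping[\vGamma']{\vsigma_1}{\vGamma''}$, obtained from the presupposition of the premise).

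For each non-congruence rule whose conclusion is an equivalence of substitutions — the two unit laws and associativity of $\circ$, the two composition rules, and the three extensionality rules — the argument is to unfold the recursive definition of $\Ltotal{}{}$ on both sides and then appeal to \Cref{lem:L-add} (distributivity of addition over truncation offset), which continues to hold for the explicit-substitution syntax; the corresponding distributivity over composition, \Cref{lem:L-comp}, is here built directly into the clause $\Ltotal{\vsigma\circ\vdelta}{1+n} := \Ltotal{\vdelta}{\Ltotal{\vsigma}{1+n}}$ and so needs no separate appeal. Concretely: the unit and associativity laws reduce to $\Ltotal{\vect I}{m} = m$ and routine reassociation; $(\sextt{\vsigma}{n})\circ\vdelta \approx \sextt{(\vsigma\circ\trunc{\vdelta}{n})}{\Ltotal{\vdelta}{n}}$ reduces, after unfolding both sides, to $\Ltotal{\vdelta}{n + \Ltotal{\vsigma}{m}} = \Ltotal{\vdelta}{n} + \Ltotal{\trunc{\vdelta}{n}}{\Ltotal{\vsigma}{m}}$, which is \Cref{lem:L-add}; the extensionality rule $\vsigma \approx \sextt{\trunc{\vsigma}{1}}{n}$ with $\Ltotal{\vsigma}{1} = n$ reduces to $\Ltotal{\vsigma}{1+m} = \Ltotal{\vsigma}{1} + \Ltotal{\trunc{\vsigma}{1}}{m}$, again \Cref{lem:L-add}; and the two $\wk_x$-extensionality rules follow since $\Ltotal{\wk_x}{1+n} = 1+n$ and $\Ltotal{\vect I}{m} = m$. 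I expect no deep obstacle here: the only genuinely delicate part is purely administrative, namely verifying in the composition cases that the index handed to the induction hypothesis or to \Cref{lem:L-add} is still below the relevant stack length — and the well-definedness lemmas for truncation and truncation offset in \Cref{sec:st:subst-calc} are precisely what makes this go through.
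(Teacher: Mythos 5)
Your proposal is correct and follows essentially the same route as the paper: the paper's own proof is just ``Induction on $n$ and $\mtyequiv{\vsigma}{\vsigma'}{\vDelta}$'' (shared with the companion truncation lemma), and the cases you single out as nontrivial --- the composition rule $(\sextt{\vsigma}{m}) \circ \vdelta \approx \sextt{(\vsigma \circ \trunc{\vdelta}{m})}{\Ltotal{\vdelta}{m}}$ and the extensionality rule $\vsigma \approx \sextt{\trunc{\vsigma}{1}}{\Ltotal{\vsigma}{1}}$, both discharged via distributivity of addition over truncation offset --- are exactly the two cases the paper works out in detail for the dependently typed analogue of this lemma. Your bookkeeping of the side conditions (using the well-definedness lemma $\Ltotal{\vsigma}{n} < |\vGamma|$ for the composition/congruence cases) is also the right administrative glue, so there is nothing to fix.
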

\begin{lemma}
  If $\mtyequiv{\vsigma}{\vsigma'}\vDelta$ and $n < |\vDelta|$, then
  $\mtyequiv[\trunc\vGamma{\Ltotal\vsigma n}]{\trunc\vsigma n}{\trunc{\vsigma'}n}{\trunc\vDelta n}$.
\end{lemma}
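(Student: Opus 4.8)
The plan is to induct on the derivation of $\mtyequiv{\vsigma}{\vsigma'}\vDelta$. First note that the statement only typechecks because of the immediately preceding lemma: from $\mtyequiv{\vsigma}{\vsigma'}\vDelta$ and $n < |\vDelta|$ we get $\Ltotal\vsigma n = \Ltotal{\vsigma'}n$, so $\trunc\vsigma n$ and $\trunc{\vsigma'}n$ both land in $\trunc\vGamma{\Ltotal\vsigma n}$ and can legitimately be related by $\approx$. I will invoke this companion lemma freely to align the truncation offsets of the two sides. The PER rules are discharged at once: truncation-equivalence inherits symmetry and transitivity from $\approx$ on substitutions, with the IH transporting the hypotheses through $\trunc{-}n$.

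The reflexivity and congruence rules are handled former by former; all but one are trivial. Truncating $\vect I$ or $\wk_x$ by any $1 + n'$ returns $\vect I$, and truncating $\vsigma, t/x$ by $1 + n'$ returns $\trunc\vsigma{1+n'}$ (the term component is simply dropped), so these cases close because the two sides are syntactically identical — reflexivity, an instance of the congruence rules — or by the IH on the sub-substitution; the $\sextt\vsigma m$ congruence is similar since $\trunc{(\sextt\vsigma m)}{1+n'} = \trunc\vsigma{n'}$. The one substantive case is composition. Given $\mtyequiv{\vsigma_1}{\vsigma_1'}{\vGamma''}$ and $\mtyequiv{\vdelta}{\vdelta'}{\vGamma'}$ I unfold $\trunc{(\vsigma_1 \circ \vdelta)}{1+n'} = (\trunc{\vsigma_1}{1+n'}) \circ (\trunc\vdelta{\Ltotal{\vsigma_1}{1+n'}})$, and likewise for the primed side; the companion lemma gives $\Ltotal{\vsigma_1}{1+n'} = \Ltotal{\vsigma_1'}{1+n'}$, so the two IH instances ($\trunc{\vsigma_1}{1+n'} \approx \trunc{\vsigma_1'}{1+n'}$ and $\trunc\vdelta{\Ltotal{\vsigma_1}{1+n'}} \approx \trunc{\vdelta'}{\Ltotal{\vsigma_1}{1+n'}}$, the latter needing $\Ltotal{\vsigma_1}{1+n'} < |\vGamma'|$ from the well-definedness lemma for truncation offset) combine through the congruence rule for $\circ$; finally $\Ltotal{\vdelta}{\Ltotal{\vsigma_1}{1+n'}} = \Ltotal{\vsigma_1 \circ \vdelta}{1+n'}$ by \Cref{lem:L-comp} puts the result in the right stack.

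For each equational axiom I compute $\trunc{(-)}n$ of both sides and show they coincide up to a simpler instance of the same axioms, so that reflexivity finishes. These collapses are routine given that \Cref{lem:L-add}, \Cref{lem:L-comp}, \Cref{lem:trunc-sum}, \Cref{lem:trunc-comp} were noted to carry over to the explicit calculus: the unit and associativity laws reduce to unit/associativity after unfolding $\trunc{(\vsigma_1 \circ \vsigma_2)}{-}$; the distribution law $(\sextt\vsigma m) \circ \vdelta \approx \sextt{(\vsigma \circ \trunc\vdelta m)}{\Ltotal\vdelta m}$ truncated by $1+n'$ gives $(\trunc\vsigma{n'}) \circ (\trunc\vdelta{m + \Ltotal\vsigma{n'}})$ on the left and $(\trunc\vsigma{n'}) \circ (\trunc{(\trunc\vdelta m)}{\Ltotal\vsigma{n'}})$ on the right, which agree by \Cref{lem:trunc-sum}; the $\wk_x$-extensionality axioms truncate to $\vect I$ or to the sub-substitution; and the stack-extensionality axiom $\vsigma \approx \sextt{\trunc\vsigma 1}{\Ltotal\vsigma 1}$ truncated by $1 + m'$ yields $\trunc\vsigma{1+m'}$ on both sides via \Cref{lem:trunc-sum}, while truncating by $0$ merely reinstates the axiom.

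The main obstacle is purely the offset bookkeeping in the composition congruence and the composition-distribution axiom: there the truncation index on the right-hand factor is itself a truncation offset, and to relate the two sides by $\approx$ at all one must first know those offsets are equal — which is exactly the companion lemma — and then rewrite them into a common form with \Cref{lem:L-comp} and \Cref{lem:L-add}. Once these identities are in place, everything else is mechanical unfolding.
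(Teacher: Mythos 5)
Your proposal is correct and matches the paper's own proof, which is exactly an induction on $n$ and the derivation of $\mtyequiv{\vsigma}{\vsigma'}{\vDelta}$, using the companion lemma $\Ltotal\vsigma n = \Ltotal{\vsigma'}n$ to align contexts and the truncation/offset identities to collapse the axiom cases. The two cases you single out (composition and the $\sextt{\vsigma}{n}$-related laws, including stack extensionality) are precisely the ones the paper treats explicitly in its detailed (dependently typed) version of this lemma, and your unfoldings agree with its calculations.
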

\begin{proof}
  Induction on $n$ and $\mtyequiv{\vsigma}{\vsigma'}\vDelta$.
\end{proof}

\subsection{Untyped Domain}\labeledit{sec:st:domain}

In this section, we define the untyped domain in which we will operate and define the
NbE algorithm:
\begin{alignat*}{2}
  z & && \tag{Domain variables, $\N$} \\
  a, b &:=&&\ \Lambda(\vrho, x, t) \sep \tbox(a) \sep \uparrow^T(c )
  \tag{Domain terms, $D$}\\
  c &:= &&\ z  \sep c\ d \sep \tunbox(k, c)
  \tag{Neutral domain terms, $D^{\Ne}$} \\
  d &:= &&\ \downarrow^T(a)
  \tag{Normal domain terms, $D^{\Nf}$} \\
  \Env &:= &&\ \Var \rightharpoonup D \\
  \Envs &:= &&\ \N \to \N \times \Env \\
  \rho & && \tag{Local evaluation environment, $\Env$} \\
  \vect{\rho} & && \tag{(Global) evaluation environment, $\Envs$} 
\end{alignat*}
The idea is that we interpret the syntactic terms into the domain $D$ and perform
normalization in $\D$. Once it is done, we convert the value from $D$ back to
syntactic normal form using two readback functions. Type-directed $\eta$ expansion is performed
during the readback process.

In $D$, we distinguish local evaluation environments $\Env$, which are partial
functions mapping a syntactic variable to a domain value, and global evaluation
environments $\Envs$, which are functions mapping an $\tunbox$ level to an offset of \tunbox level
and an \Env. The purpose of the offsets is very similar to those in unified
substitutions, which are used to keep track of \tunbox levels for terms. In $D$, we
take one step further by modelling infinite context stacks. Thus a stream of offsets
and local environment
are just a function returning tuples. We of course will not need an
infinite stack and the typing judgments will only make use of some finite
prefix. Nevertheless, it is convenient to avoid thinking about finiteness in an untyped setting.
Notice that function abstractions in $D$, $\Lambda$, are modeled by three parameters:
the evaluation environment it captures, the syntactic variable and its
syntactic function body. This trick is defunctionalization~\citep{reynolds_definitional_1998} which uses data to
represent functions.

At last, it is worth mentioning that $\uparrow$ and $\downarrow$ become \emph{syntax}
in the untyped domain, as opposed to reflection and reification \emph{functions} in
the presheaf model. The roles of reflection and reification are taken by the readback
functions to be defined very shortly. 

Next we define some operations on $\Env$ and $\Envs$. First, since $\Env$ is partial,
we have $\emp : \Env$ which is just undefined everywhere.

We can define our first $\Envs$ using $\emp$:
\begin{align*}
  \empenv &: \Envs \\
  \empenv(\_) &:= (1, \emp)
\end{align*}
It is the empty environment which is just undefined everywhere. 
We can extend an $\Envs$ in various ways. We overload the function $\ext$ to do so. Which
concrete instance to apply depends on the arguments. First, we can insert an $\Env$ to
an $\Envs$:
\begin{align*}
  \ext &: \Envs \to \N \to \Env \to \Envs \\
  \ext(\vrho, n, \rho)(0) &:= (n, \rho) \\
  \ext(\vrho, n, \rho)(1 + m) &:= \vrho(m)
\end{align*}
If we leave the environment out, we stipulate it to be $\emp$. 
\begin{align*}
  \ext &: \Envs \to \N \to \Envs \\
  \ext(\vrho, n) &:= \ext(\vrho, n, \emp) 
\end{align*}
If we further leave the natural number unspecified, we stipulate to be $1$:
\begin{align*}
  \ext &: \Envs \to \Envs \\
  \ext(\vrho) &:= \ext(\vrho, 1) 
\end{align*}
We should also be able to insert a $D$ to the top of the stack and bind $x$ to it:
\begin{align*}
  \ext &: \Envs \to \Var \to D \to \Envs \\
  \ext(\vrho, x, a)(0) &:= (k, \rho[x \mapsto a])  \tag{where $(k, \rho) := \vrho(0)$} \\
  \ext(\vrho, x, a)(1 + m) &:= \vrho(1 + m)
\end{align*}
where $\rho[x \mapsto a]$ is the same as $\rho$ except that when given $x$ it returns
$a$. We write the iterative applications of this version of $\ext$ in just one pair of
parentheses:
\begin{align*}
  \ext(\vrho, x_1, a_1, \cdots, x_i, a_i) :=
  \ext(\cdots \ext(\vrho, x_1, a_1), \cdots, x_i, a_i)
\end{align*}

We use the $\drop$ operation to drop a binding in the topmost $\Env$:
\begin{align*}
  \drop &: \Envs \to \Var \to \Envs \\
  \drop(\vrho, x)(0) &:= (k, \rho')
                    \tag{where $(k, \rho) := \vrho(0)$ and $\rho'$ is $\rho$ with $x$
                    removed from its domain} \\
  \drop(\vrho, x)(1+n) &:= \vrho(1 + n)
\end{align*}

We also define the familiar truncation and truncation offset  on $\vrho$:
\begin{align*}
  \trunc {\_} {\_} &: \Envs \to \N \to \Envs \\
  (\trunc \vrho n)(m) &:= \vrho(n + m) \\
  \Ltotal {\_} {\_} &: \Envs \to \N \to \N \\
  \Ltotal \vrho 0 &:= 0 \\
  \Ltotal \vrho {1 + n} &:= m + \Ltotal {\trunc \vrho 1} n
                          \tag{where $\vrho(0) := (m, \_)$}
\end{align*}

The following lemma holds:
\begin{lemma}\labeledit{lem:L-add-envs}
  $\Ltotal\vrho{n + m} = \Ltotal\vrho n + \Ltotal{\trunc\vrho n} m$
\end{lemma}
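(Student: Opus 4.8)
The plan is to mirror the proof of \Cref{lem:L-add} and its presheaf analogue \Cref{lem:L-add-sem}: a straightforward induction on $n$, driven entirely by unfolding the defining clauses of $\trunc{\_}{\_}$ and $\Ltotal{\_}{\_}$ on $\Envs$. Before starting the induction I would record one auxiliary fact, the $\Envs$-analogue of \Cref{lem:trunc-sum}: iterated truncation adds, i.e. $\trunc{(\trunc\vrho n)}{m} = \trunc\vrho{n+m}$ for all $n,m$. This is immediate from the clause $(\trunc\vrho n)(m) := \vrho(n+m)$ and function extensionality: $(\trunc{(\trunc\vrho n)}{m})(k) = (\trunc\vrho n)(m+k) = \vrho(n+m+k)$. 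In particular $\trunc\vrho 0 = \vrho$.

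For the base case $n = 0$, both sides collapse to $\Ltotal\vrho m$: the left side is $\Ltotal\vrho{0+m} = \Ltotal\vrho m$ definitionally, and the right side is $\Ltotal\vrho 0 + \Ltotal{\trunc\vrho 0}m = 0 + \Ltotal\vrho m$ using $\Ltotal\vrho 0 = 0$ and $\trunc\vrho 0 = \vrho$.

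For the inductive step, write $n = 1 + n'$ and let $(k, \_) := \vrho(0)$. Unfolding the recursive clause of $\Ltotal{\_}{\_}$ on the left gives
\begin{align*}
  \Ltotal\vrho{(1+n')+m} = \Ltotal\vrho{1+(n'+m)} = k + \Ltotal{\trunc\vrho 1}{n'+m}.
\end{align*}
Applying the induction hypothesis to the environment $\trunc\vrho 1$ with $n'$ and $m$ rewrites $\Ltotal{\trunc\vrho 1}{n'+m}$ as $\Ltotal{\trunc\vrho 1}{n'} + \Ltotal{\trunc{(\trunc\vrho 1)}{n'}}{m}$, and the auxiliary fact collapses $\trunc{(\trunc\vrho 1)}{n'}$ to $\trunc\vrho{1+n'}$. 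Reassociating the sum, the left side becomes $\bigl(k + \Ltotal{\trunc\vrho 1}{n'}\bigr) + \Ltotal{\trunc\vrho{1+n'}}{m}$, whose first parenthesized summand is exactly $\Ltotal\vrho{1+n'}$ by the definition of $\Ltotal{\_}{\_}$; this is precisely the right-hand side.

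I do not expect a genuine obstacle here — the argument is a mechanical unfolding plus associativity of $+$ on $\N$. The only point that needs a moment's care is keeping the iterated-truncation identity $\trunc{(\trunc\vrho n)}{m} = \trunc\vrho{n+m}$ at hand, so that the stream $\trunc{(\trunc\vrho 1)}{n'}$ produced by the induction hypothesis is literally recognized as the same stream $\trunc\vrho{1+n'}$ appearing in the goal; without that identification the two $\Ltotal{\_}{\_}$ terms would merely be provably-equal rather than syntactically matching, which is harmless but worth stating explicitly.
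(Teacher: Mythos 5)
Your proof is correct and follows exactly the route the paper takes: the paper's proof is simply ``Induction on $n$,'' and your argument is a careful unfolding of that same induction, with the iterated-truncation identity $\trunc{(\trunc\vrho n)}{m} = \trunc\vrho{n+m}$ (immediate from the clause $(\trunc\vrho n)(m) := \vrho(n+m)$) supplying the only bookkeeping step. Nothing further is needed.
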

\begin{proof}
  Induction on $n$. 
\end{proof}

\subsection{Untyped Modal Transformations}\labeledit{sec:st:ut-mtrans}

To model $\beta$ reduction for $\square T$, we shall define untyped modal
transformations (\UnMoTs),
ranging over $\kappa$,  in
$D$. We even want to take a step further by considering a unified representation of
compositional closure of modal transformation in $D$. It turns out that $\N \to \N$
is a sufficient representation, which models a unified weakening from an infinite
context stack to another. 

We first define operations for this representation of modal transformation:
\begin{align*}
  \sextt{\_}{\_} &: (\N \to \N) \to \N \to \N \to \N \\
  (\sextt \kappa n)\;(0) &:= n \\
  (\sextt \kappa n)\;(m) &:= \kappa(1 + m) \\
  \trunc {\_}{\_} &: (\N \to \N) \to \N \to \N \to \N \\
  ({\trunc \kappa n})~(m) &:= \kappa(n + m)\\
  \Ltotal {\_} {\_} &: (\N \to \N) \to \N \to \N \\
  \Ltotal \kappa 0 &:= 0 \\
  \Ltotal \kappa {1 + n} &:= \kappa(0) + \Ltotal {\trunc \kappa 1} n  \\
  \_\circ\_ &: (\N \to \N) \to (\N \to \N) \to \N \to \N \\
  (\kappa \circ \kappa')(0) &:= \Ltotal{\kappa'}{\kappa(0)} \\
  (\kappa \circ \kappa')(1 + n) &:= ((\trunc\kappa 1) \circ (\trunc{\kappa'}{\kappa(0)}))(n)
\end{align*}
We also need an identity modal transformation:
\begin{align*}
  \vone &: \N \to \N \\
  \vone &:= \_ \mapsto 1
\end{align*}
We next define application of \UnMoTs to $D$ and $\Envs$:
\begin{align*}
  \tbox(a)[\kappa] &:= \tbox(a[\sextt\kappa 1]) \\
  \Lambda(\vrho, x, t)[\kappa] &:= \Lambda(\vrho[\kappa], x, t) \\
  \uparrow^T(c)[\kappa] &:= \uparrow^T(c[\kappa]) \\
  \\
  z[\kappa] &:= z \\
  c\ d[\kappa] &:= (c[\kappa])\ (d[\kappa]) \\ 
  \tunbox(k, c)[\kappa] &:= \tunbox(\Ltotal\kappa k, c[\trunc\kappa k]) \\
  \\
  \downarrow^T(a)[\kappa] &:= \downarrow^T(a[\kappa]) \\
  \\
  \vrho[\kappa](0) &:= (\Ltotal\kappa k, \rho[\kappa]) \tag{where $(k, \rho) :=
                     \vrho(0)$} \\
  \vrho[\kappa](1 + n) &:= \trunc\vrho 1[\trunc\kappa{\Ltotal\vrho 1}](n)
\end{align*}
where $\rho[\kappa]$ is defined by mapping $x$ to $\rho(x)[\kappa]$ if $x$ is defined
in $\rho$. Notice that the $1 + n$ case for $\vrho[\kappa]$ does not really decrease
by structure; nonetheless, it helps us to understand what $\vrho[\kappa]$ does. The following lemma
gives a structure-decreasing equation which can be plugged in as the definition, but
it is rather convoluted so we choose to do the otherwise:
\begin{lemma}
  If $\vrho(n) = (m, \rho)$, then $\vrho[\kappa](n) = (\Ltotal{\trunc\kappa{\Ltotal\vrho n}} m, \rho[\trunc\kappa{\Ltotal\vrho n}])$.
\end{lemma}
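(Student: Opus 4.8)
The plan is to prove the identity by induction on $n$, with the statement read as universally quantified over $\vrho$ and $\kappa$. The only mild awkwardness is that the defining clause $\vrho[\kappa](1 + n) := \trunc\vrho 1[\trunc\kappa{\Ltotal\vrho 1}](n)$ does not obviously recurse on a smaller environment; but for the pointwise claim at index $n$ this is harmless, since the recursive occurrence is evaluated at the strictly smaller index after replacing $\vrho$ by $\trunc\vrho 1$ and $\kappa$ by $\trunc\kappa{\Ltotal\vrho 1}$. Before the induction I would record two routine facts about the $\N \to \N$ representation of \UnMoTs: that $\trunc\kappa 0 = \kappa$ (immediate, since $(\trunc\kappa 0)(m) = \kappa(0 + m)$), and that truncations compose, $\trunc{(\trunc\kappa a)}{b} = \trunc\kappa{a + b}$ (immediate from the definition of $\trunc{\_}{\_}$ on $\N \to \N$, the analogue of \Cref{lem:trunc-sum}).

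For the base case $n = 0$, write $\vrho(0) = (m, \rho)$. By definition $\vrho[\kappa](0) = (\Ltotal\kappa m, \rho[\kappa])$, while the right-hand side of the claim is $(\Ltotal{\trunc\kappa{\Ltotal\vrho 0}}{m}, \rho[\trunc\kappa{\Ltotal\vrho 0}])$; since $\Ltotal\vrho 0 = 0$ and $\trunc\kappa 0 = \kappa$ the two coincide. For the inductive step $n = 1 + n'$, write $\vrho(1 + n') = (m, \rho)$ and set $\vrho' := \trunc\vrho 1$ and $\kappa' := \trunc\kappa{\Ltotal\vrho 1}$. Then $\vrho'(n') = \vrho(1 + n') = (m, \rho)$, and by definition $\vrho[\kappa](1 + n') = \vrho'[\kappa'](n')$. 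Applying the induction hypothesis to $\vrho'$, $\kappa'$ and $n'$ gives
\[
  \vrho'[\kappa'](n') = \bigl(\Ltotal{\trunc{\kappa'}{\Ltotal{\vrho'}{n'}}}{m},\ \rho[\trunc{\kappa'}{\Ltotal{\vrho'}{n'}}]\bigr).
\]

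It remains to rewrite the index $\trunc{\kappa'}{\Ltotal{\vrho'}{n'}}$. Unfolding $\kappa'$ and using composition of truncations, it equals $\trunc\kappa{\Ltotal\vrho 1 + \Ltotal{\trunc\vrho 1}{n'}}$, and by \Cref{lem:L-add-envs} (with offsets $1$ and $n'$) we have $\Ltotal\vrho 1 + \Ltotal{\trunc\vrho 1}{n'} = \Ltotal\vrho{1 + n'}$; hence $\trunc{\kappa'}{\Ltotal{\vrho'}{n'}} = \trunc\kappa{\Ltotal\vrho{1 + n'}}$. Substituting this back into the displayed equation yields exactly the claimed identity at $n = 1 + n'$, completing the induction. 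I do not expect a genuine obstacle here; the one place that needs care rather than ingenuity is keeping the three interacting operations $\trunc{\_}{\_}$, $\Ltotal{\_}{\_}$ and composition straight while simplifying $\trunc{\kappa'}{\Ltotal{\vrho'}{n'}}$, which is ultimately just the definitional recursions plus \Cref{lem:L-add-envs}.
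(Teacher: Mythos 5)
Your proof is correct. The paper itself gives no explicit proof of this lemma; the remark following it indicates the intended route, namely to observe $\vrho[\kappa](n) = (\trunc{\vrho[\kappa]}{n})(0)$ and then invoke the later truncation law $\trunc{\vrho[\kappa]}{n} = \trunc{\vrho}{n}[\trunc{\kappa}{\Ltotal{\vrho}{n}}]$ (\Cref{lem:envs-mon}, itself built on \Cref{lem:envs-mon-1}), after which the claim drops out of the index-$0$ clause of the definition. Your direct induction on $n$, with the statement quantified over $\vrho$ and $\kappa$, is essentially that same computation inlined: your inductive step is exactly the induction hidden inside \Cref{lem:envs-mon}, and the bookkeeping step $\trunc{(\trunc{\kappa}{\Ltotal{\vrho}{1}})}{\Ltotal{\trunc{\vrho}{1}}{n'}} = \trunc{\kappa}{\Ltotal{\vrho}{(1+n')}}$ via composition of truncations and \Cref{lem:L-add-envs} is the right and only nontrivial ingredient. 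What your version buys is self-containedness: it avoids appealing to lemmas the paper only states afterwards, at the cost of redoing an induction that the paper prefers to factor through the general monotonicity-of-truncation lemma, which it needs anyway later.
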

The right hand side might seem unmotivated at the first glance, but it will make more
sense after we establish some lemmas and notice that $\vrho[\kappa](n) =
(\trunc{\vrho[\kappa]} n)(0) = \trunc\vrho n[\trunc\kappa{\Ltotal\vrho n}](0)$.

We can prove a number of properties for modal transformations.
\begin{lemma}
  The following are properties for $\vone$:
  \begin{itemize}
  \item $\vone ; 1 = \vone$
  \item $\trunc\vone n = \vone$
  \item $\Ltotal\vone n = n$
  \item $\vone \circ \kappa = \kappa$ 
  \item $\kappa \circ \vone = \kappa$ 
  \item $a[\vone] = a$
  \item $c[\vone] = c$
  \item $d[\vone] = d$
  \item $\vrho[\vone] = \vrho$ 
  \end{itemize}
\end{lemma}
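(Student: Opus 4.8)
The plan is to establish the nine equations in an order that respects their dependencies, treating every equality of functions $\N\to\N$ or of environments as a pointwise statement (by function extensionality) and unfolding by the defining clauses. I would begin with the three facts that mention only $\vone$ together with the combinators. The equation $\trunc{\vone}{n}=\vone$ is immediate, since $(\trunc{\vone}{n})(m)=\vone(n+m)=1=\vone(m)$. Using it, $\Ltotal{\vone}{n}=n$ follows by induction on $n$: the base case is definitional, and $\Ltotal{\vone}{1+n}=\vone(0)+\Ltotal{\trunc{\vone}{1}}{n}=1+\Ltotal{\vone}{n}$ (using $\trunc{\vone}{1}=\vone$), which is $1+n$ by the induction hypothesis. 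Finally $\vone;1=\vone$, i.e. $\sextt{\vone}{1}=\vone$, is another pointwise check against the defining equations of $\sextt{\_}{\_}$: at argument $0$ both sides are $1$, and at $1+m$ both unfold to $1$.

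Next I would prove the two unit laws for composition, each pointwise by induction on the argument with $\kappa$ generalised. For $\vone\circ\kappa=\kappa$: at $0$, $(\vone\circ\kappa)(0)=\Ltotal{\kappa}{\vone(0)}=\Ltotal{\kappa}{1}=\kappa(0)$; at $1+n$, using $\trunc{\vone}{1}=\vone$ and $\vone(0)=1$, $(\vone\circ\kappa)(1+n)=(\vone\circ\trunc{\kappa}{1})(n)$, which equals $(\trunc{\kappa}{1})(n)=\kappa(1+n)$ by the induction hypothesis. The law $\kappa\circ\vone=\kappa$ is symmetric but additionally consumes the first two facts: at $0$, $(\kappa\circ\vone)(0)=\Ltotal{\vone}{\kappa(0)}=\kappa(0)$; at $1+n$, $(\kappa\circ\vone)(1+n)=((\trunc{\kappa}{1})\circ(\trunc{\vone}{\kappa(0)}))(n)=((\trunc{\kappa}{1})\circ\vone)(n)$ by $\trunc{\vone}{m}=\vone$, and this is $\kappa(1+n)$ by the induction hypothesis.

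The remaining four equations — $a[\vone]=a$, $c[\vone]=c$, $d[\vone]=d$, and $\vrho[\vone]=\vrho$ — I would prove simultaneously. The three over domain values go by mutual structural induction on $a$, $c$, $d$: domain variables and applications are trivial, the $\tbox$ case uses $\sextt{\vone}{1}=\vone$, the $\tunbox$ case uses $\Ltotal{\vone}{k}=k$ and $\trunc{\vone}{k}=\vone$, and the $\Lambda(\vrho,x,t)$ case appeals to $\vrho[\vone]=\vrho$. For the latter I would not unfold the non-structurally-decreasing clause of $\vrho[\kappa]$ directly, but instead invoke the structure-decreasing characterisation of $\vrho[\kappa]$ proved just above: if $\vrho(n)=(m,\rho)$ then $\vrho[\vone](n)=(\Ltotal{\trunc{\vone}{\Ltotal{\vrho}{n}}}{m},\rho[\trunc{\vone}{\Ltotal{\vrho}{n}}])=(\Ltotal{\vone}{m},\rho[\vone])=(m,\rho)=\vrho(n)$, where $\rho[\vone]=\rho$ because $\rho[\vone]$ sends each $x\in\dom\rho$ to $\rho(x)[\vone]=\rho(x)$. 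I expect the one genuine subtlety to be the well-foundedness of this last mutually recursive step: in $\Lambda(\vrho,x,t)$ the captured environment $\vrho$ is not a structural subterm of the domain value, so the interleaving of ``$a[\vone]=a$'' with ``$\vrho[\vone]=\vrho$'' is not licensed by induction on $D$ alone. I would close this gap either by restricting the statement to environments arising during evaluation of well-typed terms and inducting on that evaluation, or — since here $\vone$ acts purely as the structural identity — by a single size-indexed simultaneous induction; with the ordering above fixed, the rest is routine bookkeeping.
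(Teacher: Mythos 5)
Your calculations are all correct, and this is essentially the intended (and in the paper omitted as routine) argument: pointwise unfolding for the laws about $\vone$, $\trunc\vone n$, $\Ltotal\vone n$ and the two unit laws (with $\kappa$ generalised in the induction), then a simultaneous induction for $a[\vone]=a$, $c[\vone]=c$, $d[\vone]=d$, $\vrho[\vone]=\vrho$, using the structure-decreasing characterisation of $\vrho[\kappa]$ in the $\Lambda$ case. The well-foundedness worry you flag at the end resolves more simply than either of your proposed fixes: $\Envs$ occurs strictly positively in $D$, so for $(k,\rho)$ a component of the captured environment, $\rho(x)$ is a legitimately smaller argument for the mutual induction (this is precisely the structural recursion the Agda formalization accepts); restricting to environments arising from evaluation would needlessly weaken the lemma, and an $\N$-valued size does not exist because environments branch infinitely --- what one inducts on is the well-founded subterm order of this strictly positive inductive type, with function equalities read pointwise as you already do.
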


The following lemmas describe the general interaction between composition and truncation:
\begin{lemma}
  $\trunc{(\kappa \circ \kappa')} 1 = \trunc\kappa 1 \circ \trunc{\kappa'}{\Ltotal\kappa 1}$
\end{lemma}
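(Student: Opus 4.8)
The plan is to prove the equality by function extensionality: both sides live in $\N \to \N$, so it suffices to check that they agree on an arbitrary argument $n$. First I would record the auxiliary identity $\Ltotal{\kappa}{1} = \kappa(0)$, which is immediate from the defining equation $\Ltotal{\kappa}{1 + n} := \kappa(0) + \Ltotal{\trunc\kappa 1}{n}$ instantiated at $n = 0$, together with $\Ltotal{\trunc\kappa 1}{0} = 0$. This lets me rewrite the right-hand side $\trunc\kappa 1 \circ \trunc{\kappa'}{\Ltotal{\kappa}{1}}$ as $\trunc\kappa 1 \circ \trunc{\kappa'}{\kappa(0)}$.

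Next, I would evaluate the left-hand side at $n$. By the definition of truncation on $\N \to \N$, $(\trunc{(\kappa \circ \kappa')}{1})(n) = (\kappa \circ \kappa')(1 + n)$, and the recursive clause $(\kappa \circ \kappa')(1 + n) := ((\trunc\kappa 1) \circ (\trunc{\kappa'}{\kappa(0)}))(n)$ delivers exactly the value of the rewritten right-hand side at $n$. Hence the two functions agree pointwise and are therefore equal.

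There is essentially no obstacle here: the statement is baked into the recursive clause defining composition of \UnMoTs, so the only real content is the bookkeeping identity $\Ltotal{\kappa}{1} = \kappa(0)$ plus an appeal to extensionality; no induction is needed. The only point requiring a bit of care is whether function extensionality is assumed in the ambient setting, and if the clause for the argument $1 + n$ is treated as a definitional unfolding, even that is unnecessary.
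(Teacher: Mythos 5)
Your proof is correct and takes essentially the same route as the paper, whose proof is simply ``By definition'': you spell out the definitional unfolding (using $\Ltotal{\kappa}{1} = \kappa(0)$ and the recursive clause of $\circ$ at $1+n$) that the paper leaves implicit, with no induction needed.
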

\begin{proof}
  By definition.
\end{proof}
\begin{lemma}
  $\trunc{(\kappa \circ \kappa')}n = \trunc\kappa n \circ \trunc{\kappa'}{\Ltotal\kappa n}$
\end{lemma}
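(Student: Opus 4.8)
The plan is to induct on $n$, with the immediately preceding lemma (the $n=1$ case) as the crucial step. For $n=0$ both sides collapse to $\kappa \circ \kappa'$, since $\trunc\kappa 0 = \kappa$ and $\Ltotal\kappa 0 = 0$ so $\trunc{\kappa'}{\Ltotal\kappa 0} = \kappa'$. For the successor case I would first record the trivial fact that truncations compose additively, $\trunc{(\trunc\kappa a)}b = \trunc\kappa{a+b}$, which is immediate from $(\trunc\kappa a)(m) = \kappa(a+m)$. Using this and the preceding lemma, the left-hand side rewrites as
\begin{align*}
  \trunc{(\kappa \circ \kappa')}{1 + n'} = \trunc{(\trunc{(\kappa \circ \kappa')}1)}{n'} = \trunc{(\trunc\kappa 1 \circ \trunc{\kappa'}{\Ltotal\kappa 1})}{n'}.
\end{align*}

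Next I would apply the induction hypothesis to $\trunc\kappa 1$, $\trunc{\kappa'}{\Ltotal\kappa 1}$ and $n'$, and then fold the nested truncations back up with the same additivity fact:
\begin{align*}
  \trunc{(\trunc\kappa 1 \circ \trunc{\kappa'}{\Ltotal\kappa 1})}{n'} = \trunc{(\trunc\kappa 1)}{n'} \circ \trunc{(\trunc{\kappa'}{\Ltotal\kappa 1})}{\Ltotal{\trunc\kappa 1}{n'}} = \trunc\kappa{1 + n'} \circ \trunc{\kappa'}{\Ltotal\kappa 1 + \Ltotal{\trunc\kappa 1}{n'}}.
\end{align*}
It then remains to identify the offset $\Ltotal\kappa 1 + \Ltotal{\trunc\kappa 1}{n'}$ with $\Ltotal\kappa{1+n'}$. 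This is direct from the definition of $\Ltotal{\_}{\_}$, which unfolds $\Ltotal\kappa{1+n'}$ to $\kappa(0) + \Ltotal{\trunc\kappa 1}{n'}$, together with $\Ltotal\kappa 1 = \kappa(0)$ (equivalently, this is the $\N \to \N$ analogue of \Cref{lem:L-add-envs}). Substituting yields exactly $\trunc\kappa{1 + n'} \circ \trunc{\kappa'}{\Ltotal\kappa{1+n'}}$, which closes the induction.

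The proof is essentially routine — it is the $\N \to \N$ counterpart of \Cref{lem:trunc-comp} and \Cref{lem:trunc-mon} and follows the same pattern. The only point demanding a little care, and the closest thing to an obstacle, is the bookkeeping of the nested truncations and lining up the offset $\Ltotal\kappa 1$ that the preceding lemma supplies with the offset $\Ltotal\kappa{1+n'}$ that the statement requires; that reconciliation is precisely where additivity of truncation offset is used.
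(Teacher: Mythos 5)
Your proof is correct and follows exactly the paper's route: induction on $n$, using the preceding $n=1$ lemma in the successor case. The paper leaves the bookkeeping implicit, and your explicit use of additivity of truncation ($\trunc{(\trunc\kappa a)}b = \trunc\kappa{a+b}$) and of truncation offset to reconcile $\Ltotal\kappa 1 + \Ltotal{\trunc\kappa 1}{n'}$ with $\Ltotal\kappa{1+n'}$ is just the spelled-out version of that same argument.
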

\begin{proof}
  Do induction on $n$ and use the previous lemma.
\end{proof}

There is a similar lemma for truncation offset:
\begin{lemma}
  $\Ltotal{\kappa \circ \kappa'}n = \Ltotal{\kappa'}{\Ltotal\kappa n}$
\end{lemma}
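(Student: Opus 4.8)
The plan is to prove $\Ltotal{\kappa \circ \kappa'}n = \Ltotal{\kappa'}{\Ltotal\kappa n}$ by induction on $n$, exactly mirroring the argument for the corresponding statement about unified substitutions (\Cref{lem:L-comp}) and about environments (\Cref{lem:L-add-envs}). For $n = 0$ both sides collapse to $0$ immediately, since $\Ltotal\kappa 0 = 0$ and $\Ltotal{\kappa'}0 = 0$.

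For the step case $n = 1 + n'$, I would unfold the definition of truncation offset on both sides and reduce them to a common form. On the left, $\Ltotal{\kappa \circ \kappa'}{1 + n'} = (\kappa \circ \kappa')(0) + \Ltotal{\trunc{(\kappa \circ \kappa')}1}{n'}$; by definition $(\kappa \circ \kappa')(0) = \Ltotal{\kappa'}{\kappa(0)}$, and by the preceding lemma on truncation and composition (together with $\Ltotal\kappa 1 = \kappa(0)$) we have $\trunc{(\kappa \circ \kappa')}1 = \trunc\kappa 1 \circ \trunc{\kappa'}{\kappa(0)}$. Applying the induction hypothesis to $\trunc\kappa 1$ and $\trunc{\kappa'}{\kappa(0)}$ rewrites the second summand, so the left-hand side becomes $\Ltotal{\kappa'}{\kappa(0)} + \Ltotal{\trunc{\kappa'}{\kappa(0)}}{\Ltotal{\trunc\kappa 1}{n'}}$. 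On the right, unfolding $\Ltotal\kappa{1 + n'} = \kappa(0) + \Ltotal{\trunc\kappa 1}{n'}$ gives $\Ltotal{\kappa'}{\kappa(0) + \Ltotal{\trunc\kappa 1}{n'}}$. The two are then identified by the distributivity of addition over truncation offset applied to $\kappa'$ with summands $\kappa(0)$ and $\Ltotal{\trunc\kappa 1}{n'}$ — i.e. the analog of \Cref{lem:L-add-envs} for modal transformations, which holds by the same induction on the first argument.

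The step is routine and shallow; the only bookkeeping obstacle is keeping the nested truncations $\trunc\kappa 1$, $\trunc{\kappa'}{\kappa(0)}$ and their offsets correctly aligned when invoking the induction hypothesis, and noticing that the one auxiliary fact not literally stated above (the $L$-add law for $\N \to \N$) is an immediate transcription of \Cref{lem:L-add-envs}. I expect no genuine difficulty here.
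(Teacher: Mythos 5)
Your proof is correct and matches the route the paper implicitly intends (the lemma is stated right after the truncation--composition lemmas with no written proof): induction on $n$, using $(\kappa\circ\kappa')(0)=\Ltotal{\kappa'}{\kappa(0)}$, the fact that $\trunc{(\kappa\circ\kappa')}1=\trunc\kappa 1\circ\trunc{\kappa'}{\Ltotal\kappa 1}$ with $\Ltotal\kappa 1=\kappa(0)$, and the additivity law $\Ltotal{\kappa'}{m+m'}=\Ltotal{\kappa'}m+\Ltotal{\trunc{\kappa'}m}{m'}$. You correctly note that this additivity law for $\N\to\N$ is not literally stated in the paper but follows by the same easy induction as \Cref{lem:L-add-envs}, so there is no gap.
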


From this lemma we can derive the associativity:
\begin{lemma}
  $(\kappa \circ \kappa') \circ \kappa'' = \kappa \circ (\kappa' \circ \kappa'')$
\end{lemma}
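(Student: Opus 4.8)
Since UnMoTs are just functions $\N \to \N$, the plan is to prove the equation by extensionality: it suffices to show that $((\kappa \circ \kappa') \circ \kappa'')(m) = (\kappa \circ (\kappa' \circ \kappa''))(m)$ for every $m \in \N$. I would do this by induction on $m$, stating the induction hypothesis so that it quantifies over \emph{all} triples of UnMoTs (not just the fixed $\kappa,\kappa',\kappa''$), since in the step case it must be reapplied to truncations of them. It is also convenient to record at the outset the definitional facts $\Ltotal\kappa 1 = \kappa(0)$ and $(\kappa \circ \kappa')(0) = \Ltotal{\kappa'}{\kappa(0)}$, both of which are immediate by unfolding $\Ltotal{-}{-}$ and $\circ$.

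For the base case $m = 0$, both sides unfold to truncation offsets. The left side is $\Ltotal{\kappa''}{(\kappa \circ \kappa')(0)} = \Ltotal{\kappa''}{\Ltotal{\kappa'}{\kappa(0)}}$, and the right side is $(\kappa \circ (\kappa' \circ \kappa''))(0) = \Ltotal{(\kappa' \circ \kappa'')}{\kappa(0)}$, which equals $\Ltotal{\kappa''}{\Ltotal{\kappa'}{\kappa(0)}}$ by the already-established distributivity of composition over truncation offset, $\Ltotal{\kappa \circ \kappa'}n = \Ltotal{\kappa'}{\Ltotal\kappa n}$. Hence the two sides agree.

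For the step case $m = 1 + n$, I unfold $\circ$ once on each side. On the left this gives $\bigl((\trunc{(\kappa \circ \kappa')}{1}) \circ (\trunc{\kappa''}{(\kappa \circ \kappa')(0)})\bigr)(n)$; using the lemma $\trunc{(\kappa \circ \kappa')}1 = \trunc\kappa 1 \circ \trunc{\kappa'}{\Ltotal\kappa 1}$ together with $\Ltotal\kappa 1 = \kappa(0)$ and $(\kappa \circ \kappa')(0) = \Ltotal{\kappa'}{\kappa(0)}$, the left side becomes $\bigl((\trunc\kappa 1 \circ \trunc{\kappa'}{\kappa(0)}) \circ \trunc{\kappa''}{\Ltotal{\kappa'}{\kappa(0)}}\bigr)(n)$. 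On the right, unfolding gives $\bigl(\trunc\kappa 1 \circ \trunc{(\kappa' \circ \kappa'')}{\kappa(0)}\bigr)(n)$, and applying the same truncation--composition lemma in the form $\trunc{(\kappa' \circ \kappa'')}{\kappa(0)} = \trunc{\kappa'}{\kappa(0)} \circ \trunc{\kappa''}{\Ltotal{\kappa'}{\kappa(0)}}$ turns the right side into $\bigl(\trunc\kappa 1 \circ (\trunc{\kappa'}{\kappa(0)} \circ \trunc{\kappa''}{\Ltotal{\kappa'}{\kappa(0)}})\bigr)(n)$. The two results are equal by the induction hypothesis applied to the triple $\trunc\kappa 1,\ \trunc{\kappa'}{\kappa(0)},\ \trunc{\kappa''}{\Ltotal{\kappa'}{\kappa(0)}}$ at $n$.

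The only real subtlety — and hence the step I would watch most carefully — is the offset bookkeeping: one must check that the offset by which $\kappa''$ gets truncated on the left, namely $(\kappa \circ \kappa')(0)$, is literally $\Ltotal{\kappa'}{\kappa(0)}$, and that the offset $\Ltotal{\kappa'}{\Ltotal\kappa 1}$ produced by the truncation--composition lemma coincides with $\Ltotal{\kappa'}{\kappa(0)}$; both reductions are immediate once $\Ltotal\kappa 1 = \kappa(0)$ is in hand. No case analysis on the shape of the $\kappa$'s is required.
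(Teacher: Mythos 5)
Your proof is correct and follows essentially the same route as the paper: the paper derives associativity from the preceding lemmas $\Ltotal{\kappa \circ \kappa'}n = \Ltotal{\kappa'}{\Ltotal\kappa n}$ and $\trunc{(\kappa \circ \kappa')}n = \trunc\kappa n \circ \trunc{\kappa'}{\Ltotal\kappa n}$, which is exactly what you do, merely spelling out the pointwise induction on the argument (with the IH generalized over all triples) that the paper leaves implicit.
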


We have the following lemma by induction:
\begin{lemma}
  \begin{itemize}
  \item $a[\kappa][\kappa'] = a[\kappa \circ \kappa']$
  \item $c[\kappa][\kappa'] = c[\kappa \circ \kappa']$
  \item $d[\kappa][\kappa'] = d[\kappa \circ \kappa']$
  \item $\vrho[\kappa][\kappa'] = \vrho[\kappa \circ \kappa']$
  \end{itemize}
\end{lemma}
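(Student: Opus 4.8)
The plan is to prove all four equations by a (mutual) induction on the domain terms $a$, $c$, $d$ together with the environments $\rho$ and $\vrho$. The only tools needed are the identities already established for $\N \to \N$-valued modal transformations above, namely $\Ltotal{\kappa \circ \kappa'}n = \Ltotal{\kappa'}{\Ltotal\kappa n}$, $\trunc{(\kappa \circ \kappa')}n = \trunc\kappa n \circ \trunc{\kappa'}{\Ltotal\kappa n}$, associativity of $\circ$, and the $\vone$-laws. The equation $\rho[\kappa][\kappa'] = \rho[\kappa \circ \kappa']$ for a local environment is just the $a$-equation applied pointwise to the finitely many values $\rho(x) \in D$, so the real content is the clause for $\vrho$. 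One subtlety is that $a = \Lambda(\vrho, x, t)$ stores a whole environment, so the $a$-case leans on the $\vrho$-case; there is no genuine circularity, since closures only ever capture previously constructed environments, but in a formalization one would phrase the induction on a size measure counting closure nesting rather than on raw syntax.

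For the structural cases: $c = z$, $c = c'\ d$, $a = \uparrow^T(c)$ and $d = \downarrow^T(a)$ are immediate from the induction hypothesis. For $a = \Lambda(\vrho, x, t)$ we have $\Lambda(\vrho, x, t)[\kappa][\kappa'] = \Lambda(\vrho[\kappa][\kappa'], x, t)$, which equals $\Lambda(\vrho[\kappa \circ \kappa'], x, t)$ by the $\vrho$-case. For $a = \tbox(a')$ we compute $\tbox(a')[\kappa][\kappa'] = \tbox(a'[\sextt\kappa 1][\sextt{\kappa'}1])$, which by the induction hypothesis is $\tbox(a'[(\sextt\kappa 1)\circ(\sextt{\kappa'}1)])$; this case then closes once we verify the auxiliary identity $(\sextt\kappa 1)\circ(\sextt{\kappa'}1) = \sextt{(\kappa \circ \kappa')}1$, a short direct calculation from the definitions of $\circ$, $\sextt{\_}{\_}$, $\trunc{\_}{\_}$ and $\Ltotal{\_}{\_}$ (check at argument $0$ and at arguments $\ge 1$ separately). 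For $c = \tunbox(k, c')$ we get $\tunbox(k, c')[\kappa][\kappa'] = \tunbox\bigl(\Ltotal{\kappa'}{\Ltotal\kappa k},\, c'[\trunc\kappa k][\trunc{\kappa'}{\Ltotal\kappa k}]\bigr)$, which matches $\tunbox(k, c')[\kappa \circ \kappa'] = \tunbox\bigl(\Ltotal{\kappa \circ \kappa'}k,\, c'[\trunc{(\kappa \circ \kappa')}k]\bigr)$ exactly, using the induction hypothesis together with $\Ltotal{\kappa \circ \kappa'}k = \Ltotal{\kappa'}{\Ltotal\kappa k}$ and $\trunc{(\kappa \circ \kappa')}k = \trunc\kappa k \circ \trunc{\kappa'}{\Ltotal\kappa k}$.

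For the environment equation we show $\vrho[\kappa][\kappa'](n) = \vrho[\kappa \circ \kappa'](n)$ for every $n$ by a secondary induction on $n$, because the $1+n$ clause of $\vrho[\kappa]$ is not structurally decreasing. When $n = 0$, writing $(k, \rho) := \vrho(0)$, the left side is $(\Ltotal{\kappa'}{\Ltotal\kappa k},\, \rho[\kappa][\kappa'])$ and the right side is $(\Ltotal{\kappa \circ \kappa'}k,\, \rho[\kappa \circ \kappa'])$, which agree by the $\Ltotal{\_}{\_}$-composition identity and the local-environment equation. When $n = 1 + n'$, one reads off directly from the definition that the tail of $\vrho[\kappa]$ is $\trunc\vrho 1[\trunc\kappa{\Ltotal\vrho 1}]$ with leading offset $\Ltotal\kappa{\Ltotal\vrho 1}$; applying the $n'$-instance of the induction hypothesis to $\trunc\vrho 1$, $\trunc\kappa{\Ltotal\vrho 1}$ and $\trunc{\kappa'}{\Ltotal\kappa{\Ltotal\vrho 1}}$, and then using $\trunc\kappa j \circ \trunc{\kappa'}{\Ltotal\kappa j} = \trunc{(\kappa \circ \kappa')}j$ with $j := \Ltotal\vrho 1$, reduces the left side to $\trunc\vrho 1[\trunc{(\kappa \circ \kappa')}{\Ltotal\vrho 1}](n')$, which is precisely $\vrho[\kappa \circ \kappa'](1+n')$.

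I expect the environment case to be the main obstacle, precisely because $\vrho[\kappa]$ is defined by a non-structural (coinductive-style) recursion, so the structural induction cannot be reused and one has to set up the index induction above and keep careful track of how $\trunc{\_}{\_}$ and $\Ltotal{\_}{\_}$ of $\kappa$ interact with the leading offset $\Ltotal\vrho 1$. An alternative is to bypass the index induction by invoking the reformulation lemma for $\vrho[\kappa](n)$ stated right after the definition, but that route first requires the monotonicity fact $\Ltotal{\vrho[\kappa]}n = \Ltotal\kappa{\Ltotal\vrho n}$; either way the proof rests on the same bookkeeping around the distributivity of $\circ$ over $\trunc{\_}{\_}$ and $\Ltotal{\_}{\_}$, and all remaining cases are routine.
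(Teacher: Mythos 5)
Your proposal is correct and follows essentially the same route the paper intends: the paper only remarks that the lemma holds ``by induction'', relying on exactly the composition identities $\Ltotal{\kappa \circ \kappa'}{n} = \Ltotal{\kappa'}{\Ltotal{\kappa}{n}}$ and $\trunc{(\kappa \circ \kappa')}{n} = \trunc{\kappa}{n} \circ \trunc{\kappa'}{\Ltotal{\kappa}{n}}$ that you invoke, and your elaboration (the auxiliary identity $(\sextt{\kappa}{1})\circ(\sextt{\kappa'}{1}) = \sextt{(\kappa\circ\kappa')}{1}$ for the $\tbox$ case, and the index induction on $n$ for the non-structural $\Envs$ clause) fills in the details faithfully. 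No gaps.
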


The follow lemmas investigate the interactions between truncation offset, truncation and application. First we
consider the special case where the truncation is $1$. 
\begin{lemma}\labeledit{lem:envs-mon-1}
  $\trunc{\vrho[\kappa]}1 = \trunc\vrho 1[\trunc\kappa{\Ltotal\vrho 1}]$
\end{lemma}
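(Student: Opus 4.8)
The plan is to prove the equation by function extensionality, since both sides are elements of $\Envs = \N \to \N \times \Env$. So I would fix an arbitrary $m \in \N$ and compute the value of each side at $m$. For the left-hand side, the definition of truncation on global environments, $(\trunc{\vrho'}n)(m') := \vrho'(n + m')$, gives $(\trunc{\vrho[\kappa]}1)(m) = \vrho[\kappa](1 + m)$. Now the defining clause of the action of a \UnMoT on a global environment on successor inputs reads $\vrho[\kappa](1 + n) := \trunc\vrho 1[\trunc\kappa{\Ltotal\vrho 1}](n)$; instantiating it with $n := m$ rewrites the left-hand side to $(\trunc\vrho 1[\trunc\kappa{\Ltotal\vrho 1}])(m)$, which is precisely the right-hand side of the lemma evaluated at $m$. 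Hence the two sides agree pointwise and are therefore equal.

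I expect no real obstacle here: this lemma is essentially a notational repackaging of the successor clause in the definition of $\vrho[\kappa]$ — the clause the paper explicitly flagged as not structurally decreasing — so the entire argument is unfolding $\trunc{\_}{1}$ on $\Envs$ and reading off that definition. In particular no induction is needed, and none of the earlier properties of $\vone$, of composition, or of $\Ltotal{\_}{\_}$ are invoked. If one wants to be maximally careful, the only point deserving a word is that the codomain is a function space, so extensionality (rather than a definitional equality) is what is being used.

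Finally, I would note that this statement is the base case for the general form $\trunc{\vrho[\kappa]}n = \trunc\vrho n[\trunc\kappa{\Ltotal\vrho n}]$ that the surrounding discussion anticipates; that general version presumably goes by induction on $n$, peeling off one truncation at a time and appealing to \Cref{lem:L-add-envs} together with the present lemma, but none of that is required for \Cref{lem:envs-mon-1} itself.
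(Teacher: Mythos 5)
Your proof is correct and matches the paper's, which simply records this lemma as ``Immediate by definition'': unfolding $\trunc{\_}{1}$ on $\Envs$ and the successor clause of $\vrho[\kappa]$ gives the identity pointwise, with extensionality closing the gap. No further commentary needed.
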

\begin{proof}
  Immediate by definition. 
\end{proof}

We can then generalize the conclusion to all $n$.
\begin{lemma}\labeledit{lem:envs-mon}
  $\trunc{\vrho[\kappa]}n = \trunc\vrho n[\trunc\kappa{\Ltotal\vrho n}]$ 
\end{lemma}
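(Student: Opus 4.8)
The plan is to prove this by induction on $n$, using Lemma~\ref{lem:envs-mon-1} (the case $n = 1$) as the engine for the step case. The base case $n = 0$ is immediate: both sides collapse to $\vrho[\kappa]$, since $\trunc{\_}{0}$ is the identity on $\Envs$, $\Ltotal\vrho 0 = 0$, and $\trunc\kappa 0 = \kappa$. The case $n = 1$ is precisely Lemma~\ref{lem:envs-mon-1}, so nothing new is needed there.

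For the step I would write $n = 1 + n'$ and decompose truncation via the identity $\trunc{\_}{1 + n'} = \trunc{(\trunc{\_}{n'})}{1}$, which holds definitionally for $\Envs$ from $(\trunc\vrho n)(m) = \vrho(n + m)$ and likewise for $\N \to \N$; I would also use $\trunc{(\trunc\kappa a)}{b} = \trunc\kappa{a + b}$, again immediate from the definition. Then the calculation runs
\begin{align*}
  \trunc{\vrho[\kappa]}{1 + n'}
  &= \trunc{(\trunc{\vrho[\kappa]}{n'})}{1} \\
  &= \trunc{(\trunc\vrho{n'}[\trunc\kappa{\Ltotal\vrho{n'}}])}{1}
     \tag{by IH} \\
  &= \trunc{(\trunc\vrho{n'})}{1}\bigl[\trunc{(\trunc\kappa{\Ltotal\vrho{n'}})}{\Ltotal{\trunc\vrho{n'}}{1}}\bigr]
     \tag{by Lemma~\ref{lem:envs-mon-1}} \\
  &= \trunc\vrho{1 + n'}\bigl[\trunc\kappa{\Ltotal\vrho{n'} + \Ltotal{\trunc\vrho{n'}}{1}}\bigr] \\
  &= \trunc\vrho{1 + n'}\bigl[\trunc\kappa{\Ltotal\vrho{1 + n'}}\bigr]
     \tag{by Lemma~\ref{lem:L-add-envs}}
\end{align*}
which is exactly the desired equation.

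The one step that deserves attention is the third line: I apply Lemma~\ref{lem:envs-mon-1} not to $\vrho$ and $\kappa$ but to the environment $\trunc\vrho{n'}$ and the \UnMoT $\trunc\kappa{\Ltotal\vrho{n'}}$; this is legitimate because Lemma~\ref{lem:envs-mon-1} is stated for arbitrary arguments. After that, collapsing the nested truncation offsets — using $\Ltotal\vrho{n'} + \Ltotal{\trunc\vrho{n'}}{1} = \Ltotal\vrho{1 + n'}$ from Lemma~\ref{lem:L-add-envs} with $m = 1$, together with the definitional additivity of $\trunc{\_}{\_}$ on $\N \to \N$ — is pure bookkeeping. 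I do not anticipate any genuine obstacle: the whole argument is a short structural induction on $n$, and everything hinges on the already-established base case and the addition-over-truncation-offset lemma, so the only real risk is mismatching an offset index.
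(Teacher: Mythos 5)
Your proof is correct and takes essentially the same route as the paper's: induction on $n$, with \Cref{lem:envs-mon-1} supplying the single-step case and the additivity of truncation and truncation offset (\Cref{lem:L-add-envs}) recombining the offsets. The only difference is that you peel the extra truncation off the tail (IH applied to the original $\vrho,\kappa$, then \Cref{lem:envs-mon-1} on $\trunc\vrho{n'}$ and $\trunc\kappa{\Ltotal\vrho{n'}}$), whereas the paper peels it off the head by unfolding $\vrho[\kappa]$ at successor indices and recursing on $\trunc\vrho 1$ and $\trunc\kappa{\Ltotal\vrho 1}$ --- an immaterial variation.
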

\begin{proof}
  We proceed by induction on $n$.
  \begin{itemize}[label=Case]
  \item $n = 0$, immediate.
  \item $n = 1 + n'$, then we consider its argument $m$:
    \begin{align*}
      \trunc{\vrho[\kappa]}{(1 + n')}(m)
      &= \vrho[\kappa](1 + n' + m) \\
      &= \trunc\vrho 1[\trunc\kappa{\Ltotal\vrho 1}](n' + m) \\
      &= (\trunc{(\trunc\vrho 1[\trunc\kappa{\Ltotal\vrho 1}])}{n'})(m) \\
      &= (\trunc{\trunc{\vrho[\kappa]}{1}}{n'})(m)
      \tag{by \Cref{lem:envs-mon-1}}\\
      &= (\trunc{\vrho[\kappa]}{(1 + n')})(m)
    \end{align*}
  \end{itemize}
\end{proof}

\begin{lemma}\labeledit{lem:envs-L}
  $\Ltotal{\vrho[\kappa]}n = \Ltotal\kappa{\Ltotal\vrho n}$
\end{lemma}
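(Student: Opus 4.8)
The plan is to induct on $n$, following the same pattern as \Cref{lem:L-comp} and its semantic analogue \Cref{lem:L-mon}. For $n = 0$ both sides reduce to $0$, so the case is immediate. For $n = 1 + n'$ I would write $(k, \rho) := \vrho(0)$, so that by definition $\vrho[\kappa](0) = (\Ltotal\kappa k, \rho[\kappa])$, $\Ltotal\vrho 1 = k$, and $\Ltotal\vrho{1+n'} = k + \Ltotal{\trunc\vrho 1}{n'}$. Then the left-hand side would unfold as
\begin{align*}
  \Ltotal{\vrho[\kappa]}{1 + n'}
  &= \Ltotal\kappa k + \Ltotal{\trunc{\vrho[\kappa]}1}{n'} \\
  &= \Ltotal\kappa k + \Ltotal{\trunc\vrho 1[\trunc\kappa k]}{n'}
    \tag{by \Cref{lem:envs-mon-1} and $\Ltotal\vrho 1 = k$} \\
  &= \Ltotal\kappa k + \Ltotal{\trunc\kappa k}{\Ltotal{\trunc\vrho 1}{n'}}
    \tag{by the IH on $\trunc\vrho 1$, $\trunc\kappa k$, $n'$}
\end{align*}
while the right-hand side is $\Ltotal\kappa{\Ltotal\vrho{1+n'}} = \Ltotal\kappa{k + \Ltotal{\trunc\vrho 1}{n'}}$.

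Thus the step reduces to the distributivity of $\Ltotal\kappa{\_}$ over addition: $\Ltotal\kappa{a + b} = \Ltotal\kappa a + \Ltotal{\trunc\kappa a}b$. This is the exact counterpart of \Cref{lem:L-add-envs} with $\N \to \N$ in place of $\Envs$; it follows by the same one-line induction on $a$ (and in fact can be read off from \Cref{lem:L-add-envs} by viewing $\kappa$ as the environment $n \mapsto (\kappa(n), \emp)$, with which $\Ltotal{\_}{\_}$ and $\trunc{\_}{\_}$ commute). I would state and prove it as a small sublemma just before this one, or simply inline it.

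Once that sublemma is in hand the computation above is pure symbol pushing, so there is no real obstacle; the only thing requiring care is the bookkeeping — recognizing that the offset by which $\trunc\kappa{\_}$ gets truncated in \Cref{lem:envs-mon-1} is exactly $\Ltotal\vrho 1 = k$, and that the induction hypothesis must be instantiated at $\trunc\vrho 1$ and $\trunc\kappa k$ rather than at $\vrho$ and $\kappa$.
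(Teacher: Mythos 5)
Your proof is correct and follows essentially the same route as the paper's: induction on $n$, unfolding the head case via \Cref{lem:envs-mon-1} and the induction hypothesis instantiated at $\trunc\vrho 1$ and $\trunc\kappa{\Ltotal\vrho 1}$. The only difference is that you make explicit the additivity sublemma $\Ltotal\kappa{a+b} = \Ltotal\kappa a + \Ltotal{\trunc\kappa a}b$, which the paper silently absorbs into its final equation, so if anything your write-up is slightly more complete.
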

\begin{proof}
  We proceed by induction on $n$.
  \begin{itemize}[label=Case]
  \item $n = 0$, immediate.
  \item $n = 1 + n'$, then
    \begin{align*}
      \Ltotal{\vrho[\kappa]}n
      &= k + \Ltotal{\trunc{\vrho[\kappa]}1}{n'} \tag{where $(k, \_) := \vrho[\kappa](0)$}  \\
      &= k + \Ltotal{\trunc\vrho 1[\trunc\kappa{\Ltotal\vrho 1}]}{n'} \tag{by \Cref{lem:envs-mon-1}} \\
      &= k + \Ltotal{\trunc\kappa{\Ltotal\vrho 1}}{\Ltotal{\trunc\vrho 1}{n'}} \byIH \\
      &= \Ltotal\kappa{\Ltotal\vrho n}
    \end{align*}
  \end{itemize}
\end{proof}

\subsection{Evaluation}

Next we consider the evaluation function, which, given $\vrho$, translate an $\Exp$ to
a $D$.
\begin{align*}
  \intp{\_} &: \Exp \to \Envs \to D \\
  \intp{x}(\vrho) &:= \rho(x) \tag{where $(\_, \rho) := \vrho(0)$} \\
  \intp{\boxit t}(\vrho) &:= \tbox(\intp{t}(\ext(\vrho))) \\
  \intp{\unbox n t}(\vrho) &:= \tunbox \cdot (\Ltotal\vrho n, \intp{t}(\trunc\vrho n)) \\
  \intp{\lambda x. t}(\vrho) &:= \Lambda(\vrho, x, t) \\
  \intp{t\ s}(\vrho) &:= \intp{t}(\vrho) \cdot \intp{s}(\vrho) \\
  \intp{t[\vsigma]}(\vrho) &:= \intp{t}(\intp{\vsigma}(\vrho))
\end{align*}

Here we make use of two partial functions which evaluates $\tbox$ and $\Lambda$,
respectively.
We define the partial unbox as follows:
\begin{align*}
  \tunbox \cdot &: \mathbb{N} \rightharpoonup D \rightharpoonup D \\
  \tunbox \cdot (k, \tbox(a)) &:= a[\sextt\vone k]  \\
    \tunbox \cdot (k, \uparrow^{\square T} c) &:= \uparrow^{T} (\tunbox(k, c))
\end{align*}
We define the partial application as follows:
\begin{align*}
  \_ \cdot \_ &: D \rightharpoonup D \rightharpoonup D \\
  (\Lambda(\vrho, x, t)) \cdot a &:= \intp{t}(\ext(\vrho, x, a)) \\
  (\uparrow^{T \func T'} c) \cdot a &:= \uparrow^{T'} (c\ \downarrow^T(a))
\end{align*}
In the $t[\vsigma]$ case, we need the interpretation of substitutions:
\begin{align*}
    \intp{\_} &: \Substs \to \Envs \to \Envs \\
    \intp{\vect I}(\vrho) &:= \vrho \\
    \intp{\vsigma, t/x}(\vrho) &:= \ext(\intp{\vsigma}(\vrho), x, \intp{t}(\vrho)) \\
    \intp{\wk_x}(\vrho) &:= \drop(\vrho, x) \\
  \intp{\sextt\vsigma n}(\vrho) &:= \ext(\intp{\vsigma}(\trunc\vrho n), \Ltotal\vrho n) \\
  \intp{\vsigma \circ \vdelta}(\vrho) &:= \intp{\vsigma}(\intp{\vdelta}(\vrho))
\end{align*}

In the presheaf model shown in \Cref{sec:presheaf}, unified weakenings commute with
many operations, e.g. evaluation and exponential application, because of functoriality
and naturality. In the untyped domain, something similar exists.
\begin{lemma}\labeledit{lem:unbox.-mon}
  $\tunbox \cdot (k, a)[\kappa] = \tunbox \cdot (\Ltotal\kappa k, a[\trunc\kappa k])$
\end{lemma}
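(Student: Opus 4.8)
The plan is to prove this by case analysis on the shape of the domain value $a$, because $\tunbox \cdot$ is a partial function defined only on values of the form $\tbox(b)$ and $\uparrow^{\square T}(c)$. In each case I would unfold the relevant clause of $\tunbox \cdot$ on the left, unfold the clause of ``$\_[\kappa]$'' that rewrites $a[\trunc\kappa k]$ on the right, and then compare. For $a = \uparrow^{\square T}(c)$ this is essentially immediate: the left side reduces, via the definition of $\tunbox \cdot$ and then the action of \UnMoTs on neutral domain terms, to $\uparrow^T(\tunbox(\Ltotal\kappa k, c[\trunc\kappa k]))$, and the right side reduces to the same expression by first pushing $\trunc\kappa k$ inside $\uparrow^{\square T}$ and then applying $\tunbox \cdot$. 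The cases $a = \Lambda(\vrho, x, t)$ and $a = \uparrow^T(c)$ with $T$ not a box type contribute nothing: both sides are undefined, since $a[\trunc\kappa k]$ is again a $\Lambda$ (respectively an $\uparrow$ at a non-box type), so the equation holds as an equation of partial functions. I would make the partiality bookkeeping explicit at the start so these degenerate cases are dispatched uniformly.

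The only case with real content is $a = \tbox(b)$. There the left side is $(b[\sextt{\vone}{k}])[\kappa]$ and the right side is $(b[\sextt{\trunc\kappa k}{1}])[\sextt{\vone}{\Ltotal\kappa k}]$. Using the composition law $b[\kappa_1][\kappa_2] = b[\kappa_1 \circ \kappa_2]$ for \UnMoTs on $D$ (established earlier), it suffices to prove the identity of \UnMoTs
\[
  \sextt{\vone}{k} \circ \kappa \;=\; \sextt{\trunc\kappa k}{\Ltotal\kappa k} \;=\; \sextt{\trunc\kappa k}{1} \circ \sextt{\vone}{\Ltotal\kappa k}.
\]
I would prove each of these two equalities by evaluating both sides at an arbitrary argument $\in \N$: at $0$ both sides compute to $\Ltotal\kappa k$, using $\Ltotal{\kappa_1 \circ \kappa_2}{n} = \Ltotal{\kappa_2}{\Ltotal{\kappa_1}{n}}$, $\Ltotal{\vone}{n} = n$, and the defining clause of $\sextt{\_}{\_}$ at $0$; at an argument of the form $1 + n$ both sides compute to $(\trunc\kappa k)(n)$, using $\trunc{(\sextt{\vone}{k})}{1} = \vone$, $\trunc{(\sextt{\trunc\kappa k}{1})}{1} = \trunc\kappa k$, the recursive clause of composition, and the identity laws $\vone \circ \kappa' = \kappa'$ and $\kappa' \circ \vone = \kappa'$. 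All the ingredients here are exactly the algebraic lemmas about $\vone$, truncation, truncation offset, and composition of \UnMoTs proved just before this statement.

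I expect the \UnMoT identity in the $\tbox$ case to be the main obstacle — not because it is deep, but because it is the one spot where one must juggle the interaction of $\sextt{\_}{\_}$, $\trunc{\_}{\_}$, $\Ltotal{\_}{\_}$, and $\circ$ simultaneously, and getting the offsets lined up (in particular seeing that both $\sextt{\vone}{k}\circ\kappa$ and $\sextt{\trunc\kappa k}{1}\circ\sextt{\vone}{\Ltotal\kappa k}$ collapse to the single \UnMoT $\sextt{\trunc\kappa k}{\Ltotal\kappa k}$) requires care. Everything else is routine unfolding of definitions.
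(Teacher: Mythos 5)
Your proposal is correct and follows essentially the same route as the paper: case analysis on the two defined shapes of $a$, with the $\uparrow^{\square T}(c)$ case being immediate and the $\tbox(b)$ case reducing, via the composition law $b[\kappa_1][\kappa_2]=b[\kappa_1\circ\kappa_2]$, to the fact that both $\sextt{\vone}{k}\circ\kappa$ and $\sextt{\trunc\kappa k}{1}\circ\sextt{\vone}{\Ltotal\kappa k}$ equal $\sextt{\trunc\kappa k}{\Ltotal\kappa k}$. The only difference is one of explicitness: you verify that \UnMoT identity pointwise, while the paper just chains the corresponding rewrites using the previously established lemmas about $\vone$, truncation, truncation offset, and composition.
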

\begin{proof}
  We consider the cases where the equation is defined:
  \begin{itemize}[label=Case]
  \item $a = \tbox(b)$, then
    \begin{align*}
      \tunbox \cdot (k, a)[\kappa]
      &= b[\sextt\vone k][\kappa] \\
      &= b[\sextt{\trunc\kappa k}{\Ltotal\kappa k}] \\
      &= b[\sextt{\trunc\kappa k} 1][\sextt{\vone}{\Ltotal\kappa k}] \\
      &= \tunbox \cdot (\Ltotal\kappa k, b[\sextt{\trunc\kappa k} 1]) \\
      &= \tunbox \cdot (\Ltotal\kappa k, \tbox(b)[\trunc\kappa k])
    \end{align*}
  \item $a = \uparrow^{\square T} c$, then
    \begin{align*}
      \tunbox \cdot (k, a)[\kappa]
      &= \uparrow^T (\tunbox(k, c)[\kappa]) \\
      &= \uparrow^T (\tunbox(\Ltotal\kappa k, c[\trunc\kappa k])) \\
      &= \tunbox \cdot (\Ltotal\kappa k, (\uparrow^{\square T} c)[\trunc\kappa k])
    \end{align*}
  \end{itemize}
\end{proof}

In general, lemmas like this should characterize the interaction between untyped modal
transformations and operations.
\begin{lemma}[Naturality Equalities]$\;$
  \begin{itemize}
  \item $\intp{t}(\vrho[\kappa]) = \intp{t}(\vrho)[\kappa]$
  \item $\intp{\vsigma}(\vrho[\kappa]) = \intp{\vsigma}(\vrho)[\kappa]$
  \item $(a \cdot b)[\kappa] = (a[\kappa]) \cdot (b[\kappa])$
  \end{itemize}
\end{lemma}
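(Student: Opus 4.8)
The plan is to prove the three equalities together, by a single simultaneous induction. Because $\intp{\_}$ on terms, $\intp{\_}$ on substitutions, and partial application $\_ \cdot \_$ are mutually recursive and only partially defined, the cleanest setup is to induct on the finite derivation witnessing that every semantic operation occurring in a given instance is defined (i.e.\ reading the evaluator as an inductively defined relation, in the style of \citet{abel_normalization_2013}); within that derivation the first equality splits on $t \in \Exp$, the second on $\vsigma \in \Substs$, and the third on the shape of $a$, where only $a = \Lambda(\vrho', x, t)$ and $a = \uparrow^{T \func T'} c$ need be considered since otherwise $a \cdot b$ is undefined. The cross-references are the expected ones: the case $t = s\ u$ of the first equality appeals to the third on $\intp{s}(\vrho) \cdot \intp{u}(\vrho)$, and the case $a = \Lambda(\vrho', x, t)$ of the third appeals to the first on the body $t$; both are legitimate because the relevant evaluation is a strict sub-derivation of the one at hand.

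First I would record a few routine commutation lemmas for the overloaded $\ext$ and for $\drop$, each proved by comparing the two sides at argument $0$ and at $1 + m$ and unfolding the definition of $\vrho[\kappa]$: namely $\ext(\vrho, x, a)[\kappa] = \ext(\vrho[\kappa], x, a[\kappa])$, $\ext(\vrho, m)[\kappa] = \ext(\vrho[\trunc\kappa m], \Ltotal\kappa m)$, $\ext(\vrho)[\sextt\kappa 1] = \ext(\vrho[\kappa])$, and $\drop(\vrho, x)[\kappa] = \drop(\vrho[\kappa], x)$; in each, the $1 + m$ component is the only interesting one and there $\trunc{\ext(\cdots)}1$ and $\Ltotal{\ext(\cdots)}1$ collapse to $\vrho$ and to $\Ltotal\vrho 1$. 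With these in hand the easy cases are mechanical: $t = x$ uses $\vrho[\kappa](0) = (\Ltotal\kappa k, \rho[\kappa])$ (with $(k, \rho) := \vrho(0)$) so that lookup commutes with $[\kappa]$; $t = \lambda x. t'$ is immediate from the definition of $[\kappa]$ on $\Lambda$; $t = s\ u$ goes through the third equality; $t = t'[\vsigma]$ chains the first and second; and $t = \boxit{t'}$ combines $\ext(\vrho)[\sextt\kappa 1] = \ext(\vrho[\kappa])$ with $\tbox(a)[\kappa] = \tbox(a[\sextt\kappa 1])$. The second-equality cases $\vect I$, $\wk_x$, $\vdelta, t/x$ and $\vdelta \circ \vdelta'$ are all similarly routine.

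The one case that needs arithmetic is $t = \unbox n {t'}$. Here $\intp{\unbox n {t'}}(\vrho[\kappa]) = \tunbox \cdot (\Ltotal{\vrho[\kappa]}n, \intp{t'}(\trunc{\vrho[\kappa]}n))$; by \Cref{lem:envs-L} the offset rewrites to $\Ltotal\kappa{\Ltotal\vrho n}$, by \Cref{lem:envs-mon} the environment rewrites to $\trunc\vrho n[\trunc\kappa{\Ltotal\vrho n}]$, and the first-equality hypothesis on $t'$ pulls $[\trunc\kappa{\Ltotal\vrho n}]$ out of $\intp{t'}$, giving $\tunbox \cdot (\Ltotal\kappa{\Ltotal\vrho n}, \intp{t'}(\trunc\vrho n)[\trunc\kappa{\Ltotal\vrho n}])$, which is exactly $\tunbox \cdot (\Ltotal\vrho n, \intp{t'}(\trunc\vrho n))[\kappa]$ by \Cref{lem:unbox.-mon}, i.e.\ $\intp{\unbox n {t'}}(\vrho)[\kappa]$. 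The substitution case $\vsigma = \sextt\vdelta n$ runs identically, using additionally $\ext(\vrho', m)[\kappa] = \ext(\vrho'[\trunc\kappa m], \Ltotal\kappa m)$ to reconcile the two presentations of the extended environment. For the third equality, if $a = \uparrow^{T \func T'} c$ both sides unfold to $\uparrow^{T'}((c[\kappa])\ \downarrow^{T}(b[\kappa]))$ because $[\kappa]$ distributes over application of neutrals and over $\downarrow^{T}$ by definition; if $a = \Lambda(\vrho', x, t)$ one unfolds $a \cdot b = \intp{t}(\ext(\vrho', x, b))$, pushes $[\kappa]$ inside by the first equality, rewrites with $\ext(\vrho', x, b)[\kappa] = \ext(\vrho'[\kappa], x, b[\kappa])$, and recognises the result as $(\Lambda(\vrho', x, t)[\kappa]) \cdot (b[\kappa]) = (a[\kappa]) \cdot (b[\kappa])$.

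The only real difficulty I anticipate is organisational rather than mathematical: setting up the simultaneous induction over the definedness of the partial evaluator so that the mutual calls between the three equalities land on genuine sub-derivations, and keeping the truncation-offset and truncation indices on $\kappa$ aligned in the $\tunbox$ and $\sextt\vdelta n$ cases, which is exactly where \Cref{lem:envs-mon}, \Cref{lem:envs-L} and \Cref{lem:unbox.-mon} carry the load. Everything else is unfold-and-apply-the-hypothesis.
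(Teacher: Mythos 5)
Your proposal is correct and takes essentially the same route as the paper: the paper also proves the three equalities by a single mutual induction, "analyzing each defined case and applying IH whenever necessary," which is exactly your induction on the definedness derivation of the partial evaluator, with the $\tunbox$ and $\sextt{\vdelta}{n}$ cases discharged by \Cref{lem:envs-L}, \Cref{lem:envs-mon} and \Cref{lem:unbox.-mon}. Your auxiliary commutation facts for $\ext$ and $\drop$ are the routine unfoldings the paper leaves implicit, so nothing is missing.
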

These three equations including the previous one are collectively called
\emph{naturality equalities}. Similar to the presheaf model, we will also need these
equalities to esstablish the completeness proof. Unlike the presheaf model, in which
naturality has been built in as part of various natural transformations, in the
untyped domain model, we have to explicitly list and prove them.
\begin{proof}
  These three equalities must be proved mutually. They are proved by analyzing each
  defined case and apply IH whenever necessary.
\end{proof}

\subsection{Readback Functions}\labeledit{sec:st:rb}

After evaluating an $\Exp$ to $D$, we have already got the corresponding $\beta$
normal form of the term in $D$. We need a one last step to read from $D$ back to
normal form and do the $\eta$ expansion at the same time to obtain a $\beta\eta$
normal form:
\begin{align*}
  \Rnf &: (\N \rightharpoonup \Var) \rightharpoonup D^{\Nf} \rightharpoonup \Nf \\
  \Rnf_{\alpha} (\downarrow^{\square T} (a))
       &:= \boxit \Rnf_{\alpha} (\downarrow^{T} (\tunbox \cdot (1, a)))
  \\
  \Rnf_{\alpha} (\downarrow^{S \func T} (a))
       &:= \lambda x. \Rnf_{\alpha[z \mapsto x]} (\downarrow^T (a \cdot \uparrow^S(z)))
  \tag{where $z := \tnext(\alpha)$}\\
  \Rnf_{\alpha} (\downarrow^B (\uparrow^B (c)))
       &:= \Rne_{\alpha} (c) \\
  \\
  \Rne &: (\N \rightharpoonup \Var) \rightharpoonup D^{\Ne} \rightharpoonup \Ne \\
  \Rne_{\alpha}(z)
       &:= \alpha(z)
  \tag{if $\alpha$ is undefined at $z$, we assign some fixed default $\Var$}\\
  \Rne_{\alpha}(c\ d)
       &:= \Rne_{\alpha} (c)\ \Rnf_{\alpha}(d) \\
  \Rne_{\alpha}(\tunbox(k, c))
       &:= \unbox{k}{\Rne_{\alpha}(c)}
\end{align*}
In the readback function, a function $\alpha$ is maintained. It is a function keeping
track of the correspondence between syntactic and domain-level
variables. Conceptually, a domain variable $z$ is a natural number remembering the
\emph{location} of the syntactic variable in the context stack it corresponds
to. Concretely, there are two solutions:
\begin{enumerate}
\item $z$ is the $z$'th variable in its context. That is, if $\vGamma; (\Gamma, x : T,
  \Gamma')$ where $x$ and $z$ corresponds, then $z = |\Gamma|$. 
\item $z$ is the $z$'th variable in the whole context stack. That is, if $\vGamma; (\Gamma, x : T,
  \Gamma')$ where $x$ and $z$ corresponds, then $z = ||\vGamma|| + |\Gamma|$, where
  \begin{align*}
    ||\vGamma|| := \sum_{\Gamma'' \in \vGamma} |\Gamma''|
  \end{align*}
\end{enumerate}
Either solution would work in our NbE algorithm. The first solution seems to map
multiple syntactic variables to the same natural number, e.g. in the context stack
$\epsilon;x : T; y : S$, $\alpha$ would map both
$x$ and $y$ to $0$. This is fine because the $0$s representing $x$ and $y$ are
distinguished by worlds, and variables cannot cross different worlds.

We use $\alpha$ to represent the effect from $\N$ to $\Var$ and the opposite direction is
$\alpha^{-1}$. Technically speaking, both effects are only defined in finite domains, so
they can only be partial functions, but our logical relations will demonstrate that
invalid inputs are never given to the effects.  The $\tnext(\alpha)$ function used
in the function case in $\Rnf$ returns the next
available domain-level variable in $\alpha$. If $\alpha$ is undefined everywhere, then
$\tnext(\alpha) = 0$. Otherwise, $\tnext(\alpha) = \max(\dom(\alpha)) + 1$. We will discuss how
$\alpha$ is implemented when de Bruijn indices are used in the syntactic level in \Cref{sec:st:debruijn}. 

We define the initial
evaluation environment:
\begin{align*}
  \uparrow^{\vGamma} &: \Envs \\
  \uparrow^{\epsilon;\Gamma} &:= \ext(\empenv, x \mapsto \uparrow^T(\alpha^{-1}(x)) \text{
                            if }x : T \in \Gamma) \\
  \uparrow^{\vGamma;\Gamma} &:= \ext(\uparrow^{\vGamma}, x \mapsto \uparrow^T(\alpha^{-1}(x)) \text{
                            if }x : T \in \Gamma)
\end{align*}
where $\alpha^{-1} : \Var \to \N$ is the opposite of $\alpha$.
\begin{definition}
  The NbE algorithm is defined to be
  \begin{align*}
    \nbe_{\vGamma}^T(t) := \Rnf_{\alpha}(\downarrow^T (\intp{t}(\uparrow^{\vGamma})))
  \end{align*}
  where $\alpha$ is defined as above. 
\end{definition}

\subsection{Candidate Space and PER Model}\labeledit{sec:st:per}

We define two PERs over $D$ from which we form a candidate space for each type. For a PER over
$D$, $P$, we write $a \approx b \in P$ for $(a, b) \in P$:
\begin{mathpar}  
  \inferrule*
  {\forall \alpha, \kappa. \Rnf_{\alpha} (\downarrow^T(a[\kappa])) = \Rnf_{\alpha} (\downarrow^T(b[\kappa]))}
  {a \approx b \in \top^T}
  
  \inferrule*
  {\forall \alpha, \kappa. \Rne_{\alpha}(c[\kappa]) = \Rne_{\alpha}(c'[\kappa])}
  {\uparrow^T (c) \approx \uparrow^T (c') \in \bot^T}
\end{mathpar}
$\bot^T$ and $\top^T$ are binary relations over $D$. Furthermore, we can easily verify
that they are partial equivalence relations (PERs). Note that the universal
quantification over $\alpha$ here still maintain a correspondence.

We then interpret the types to a PER model:
\begin{align*}
  \intp{B} &:= \bot^B \\
  \intp{\square T} &:= \{ (a, b) \sep \forall k, \kappa. \tunbox \cdot (k, a[\kappa]) \approx \tunbox
                     \cdot (k, b[\kappa]) \in \intp{T} \} \\
  \intp{S \func T} &:= \{ (a, b) \sep \forall \kappa, a' \approx b' \in \intp{S}.
                     a[\kappa] \cdot a' \approx b[\kappa] \cdot b' \in \intp{T} \} 
\end{align*}

Next we demonstrate an important property of the PER model, realizability.
Realizability of $\intp{T}$ states $\bot_T \subseteq \intp{T} \subseteq \top_T$. The
realizability theorem states that all $\intp{T}$ are realizable.

We first need the following lemma:
\begin{lemma}\labeledit{lem:varbot}
  $\uparrow^T(z) \approx \uparrow^T(z) \in \bot^T$
\end{lemma}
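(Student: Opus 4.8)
The statement $\uparrow^T(z) \approx \uparrow^T(z) \in \bot^T$ unfolds, by definition of $\bot^T$, to the requirement that for all $\alpha$ and all $\kappa$, $\Rne_{\alpha}(z[\kappa]) = \Rne_{\alpha}(z[\kappa])$. This is trivially true by reflexivity of equality on $\Nf$ --- once we observe that the left and right sides are \emph{syntactically identical}. So the plan is: apply the introduction rule for $\bot^T$ (the inference rule with conclusion $\uparrow^T(c) \approx \uparrow^T(c') \in \bot^T$), specialized to $c = c' = z$; then the premise $\forall \alpha, \kappa.\ \Rne_{\alpha}(z[\kappa]) = \Rne_{\alpha}(z[\kappa])$ holds by reflexivity.

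The one point that deserves a sentence of justification is that $z$ is a legitimate neutral domain term, i.e. $z \in D^{\Ne}$, so that $\uparrow^T(z) \in D$ is well-formed and $\Rne_{\alpha}(z[\kappa])$ makes sense. This is immediate from the grammar of $D^{\Ne}$, which includes domain variables $z$ as a base case, together with the fact that $z[\kappa] = z$ by the definition of applying a \UnMoT to a neutral domain term (the $z[\kappa] := z$ clause). Using this, the premise simplifies further to $\forall \alpha.\ \Rne_{\alpha}(z) = \Rne_{\alpha}(z)$, which is reflexivity with nothing left to prove.

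There is no real obstacle here; the lemma is a trivial base case. Its only role is to seed the mutual induction that establishes realizability of the PER model (it is what makes $\bot^T$ nonempty at the variable, and hence lets reflection/reification machinery get off the ground). If anything, the ``work'' is purely bookkeeping: checking that the universal quantifier over $\alpha$ is vacuously satisfied and that $z[\kappa]$ reduces to $z$, so both occurrences inside $\Rne_\alpha(\cdot)$ are the same term. I would write the proof in two lines: ``By definition of $\bot^T$ it suffices to show $\Rne_\alpha(z[\kappa]) = \Rne_\alpha(z[\kappa])$ for all $\alpha, \kappa$; since $z[\kappa] = z$, this is immediate.''
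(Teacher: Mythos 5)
Your proposal is correct and matches the paper's proof, which simply states that the lemma ``holds by definition'': unfolding $\bot^T$ reduces the claim to $\Rne_{\alpha}(z[\kappa]) = \Rne_{\alpha}(z[\kappa])$, which is reflexivity (noting $z[\kappa]=z$). Your write-up just spells out the same one-step argument in more detail.
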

\begin{proof}
  Holds by definition. 
\end{proof}

\subsubsection{Realizability}

\begin{theorem}[Realizability]\labeledit{thm:real}
   For all type $T$, $\bot^T \subseteq \intp{T} \subseteq \top^T$.
\end{theorem}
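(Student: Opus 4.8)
The plan is to prove the two inclusions $\bot^T \subseteq \intp{T}$ and $\intp{T} \subseteq \top^T$ \emph{simultaneously}, by induction on the structure of $T$. The two directions are genuinely interdependent: discharging a composite type in one direction always appeals to the induction hypothesis for the \emph{opposite} direction at the component types. This is the standard reflect/reify argument of \citet{abel_normalization_2013}, adapted to the presence of untyped modal transformations $\kappa$. For the base type $T = B$ one inclusion is definitional, since $\intp{B} = \bot^B$; for $\bot^B \subseteq \top^B$, given $\uparrow^B(c) \approx \uparrow^B(c') \in \bot^B$ and arbitrary $\alpha, \kappa$, one uses $\uparrow^B(c)[\kappa] = \uparrow^B(c[\kappa])$ and the fact that $\Rnf_{\alpha}(\downarrow^B(\uparrow^B(c[\kappa]))) = \Rne_{\alpha}(c[\kappa])$, so the goal is exactly the defining clause of $\bot^B$.

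For $T = \square T'$, the forward inclusion $\bot^{\square T'} \subseteq \intp{\square T'}$ is handled by taking a $\bot^{\square T'}$-related pair of neutrals and arbitrary $k, \kappa$; using $\uparrow^{\square T'}(c)[\kappa] = \uparrow^{\square T'}(c[\kappa])$ and $\tunbox\cdot(k, \uparrow^{\square T'}(c[\kappa])) = \uparrow^{T'}(\tunbox(k, c[\kappa]))$, the goal becomes a membership in $\intp{T'}$, which by the IH $\bot^{T'} \subseteq \intp{T'}$ reduces to a $\bot^{T'}$ membership; unfolding $\Rne$ on $\tunbox$ and commuting a further $\kappa'$ inward via \Cref{lem:unbox.-mon} and the composition law $c[\kappa][\kappa'] = c[\kappa \circ \kappa']$ reduces this to the hypothesis instantiated at $\kappa \circ \trunc{\kappa'}{k}$. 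For the backward inclusion $\intp{\square T'} \subseteq \top^{\square T'}$, given $a \approx b \in \intp{\square T'}$ and $\alpha, \kappa$, we have $\Rnf_{\alpha}(\downarrow^{\square T'}(a[\kappa])) = \boxit \Rnf_{\alpha}(\downarrow^{T'}(\tunbox\cdot(1, a[\kappa])))$; since the unbox level $1$ is admissible in every one of the four systems, $\tunbox\cdot(1, a[\kappa]) \approx \tunbox\cdot(1, b[\kappa]) \in \intp{T'}$, and the IH $\intp{T'} \subseteq \top^{T'}$ (instantiated with $\vone$) equates the two readbacks, hence their boxes.

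For $T = S \func T'$, the forward inclusion is proved by taking a $\bot^{S \func T'}$-related pair, a $\kappa$, and $a' \approx b' \in \intp{S}$, rewriting $\uparrow^{S \func T'}(c)[\kappa] \cdot a' = \uparrow^{T'}(c[\kappa]\ \downarrow^S(a'))$, applying the IH $\bot^{T'} \subseteq \intp{T'}$, and discharging the residual $\bot^{T'}$ obligation from the hypothesis on $c, c'$ together with the IH $\intp{S} \subseteq \top^S$, which equates the readbacks of $\downarrow^S(a')$ and $\downarrow^S(b')$ after pushing $\kappa'$ through the application and naturality laws. For the backward inclusion, given $a \approx b \in \intp{S \func T'}$ and $\alpha, \kappa$, unfold $\Rnf_{\alpha}(\downarrow^{S \func T'}(a[\kappa]))$ to $\lambda x. \Rnf_{\alpha[z \mapsto x]}(\downarrow^{T'}(a[\kappa] \cdot \uparrow^S(z)))$ with $z = \tnext(\alpha)$; by \Cref{lem:varbot} and the IH $\bot^S \subseteq \intp{S}$ we get $\uparrow^S(z) \approx \uparrow^S(z) \in \intp{S}$, so $a[\kappa] \cdot \uparrow^S(z) \approx b[\kappa] \cdot \uparrow^S(z) \in \intp{T'}$, and the IH $\intp{T'} \subseteq \top^{T'}$ finishes the argument.

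The only genuinely nontrivial part is the bookkeeping around the modal transformations in the $\square$ case: one must repeatedly commute $\kappa$ and $\kappa'$ past $\tunbox$, $\tunbox\cdot$, applications, and reflections, which forces careful use of the monotonicity and naturality lemmas — namely $\tunbox\cdot(k, a)[\kappa] = \tunbox\cdot(\Ltotal{\kappa}{k}, a[\trunc{\kappa}{k}])$, $a[\kappa][\kappa'] = a[\kappa \circ \kappa']$, and associativity of $\circ$ — together with the observation, used implicitly throughout the paper, that the unbox level $1$ lies in the admissible range for \emph{all four} systems, so that an element of $\intp{\square T'}$ can indeed be unboxed once. Everything else is routine unfolding of $\Rnf$, $\Rne$, $\tunbox\cdot$, and $\_\cdot\_$.
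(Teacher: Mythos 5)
Your proposal is correct and follows essentially the same route as the paper: a single induction on $T$ proving both inclusions simultaneously, with the $\square$ case instantiating the $\bot$-hypothesis at $\kappa \circ \trunc{\kappa'}{k}$ for the forward direction and unboxing at level $1$ (admissible in all four systems) for the backward direction, and the function case using \Cref{lem:varbot} together with the cross-direction IHs. The only cosmetic difference is that the commutation $\tunbox(k,c)[\kappa'] = \tunbox(\Ltotal{\kappa'}{k}, c[\trunc{\kappa'}{k}])$ you attribute to \Cref{lem:unbox.-mon} is just the defining clause of UMoT application to neutrals, as the paper uses it.
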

\begin{proof}
  We do induction on $T$.
  \begin{itemize}[label=Case]
  \item $T= B$, immediate.
  \item $T = \square T'$,
    \begin{itemize}[label=Subcase]
    \item We shall show $\bot^{\square T'} \subseteq \intp{\square T'}$. Our goal is
      to apply $\bot^{T'} \subseteq \intp{T'}$ as IH and thus we will first need to
      construct some terms related by $\bot^{T'}$. 
      \begin{align*}
        H_1: &\ \forall \alpha, \kappa. \Rne_{\alpha}(c[\kappa]) = \Rne_{\alpha}(c'[\kappa]) \tag{assumption} \\
             &\ \text{assume } k, \kappa, \alpha, \kappa' \\
             &\ \Rne_{\alpha}(c[\kappa \circ (\trunc{\kappa'}k)]) = \Rne_{\alpha}(c'[\kappa \circ (\trunc{\kappa'}k)])
               \tag{by $H_1$} \\
             &\ \Rne_{\alpha}(\tunbox(\Ltotal{\kappa'}k, c[\kappa][\trunc{\kappa'}k])) =
               \Rne_{\alpha}(\tunbox(\Ltotal{\kappa'}k, c'[\kappa][\trunc{\kappa'}k]))
               \tag{by congruence}\\
             &\ \Rne_{\alpha}(\tunbox(k, c[\kappa])[\kappa']) =
               \Rne_{\alpha}(\tunbox(k, c'[\kappa])[\kappa'])
               \tag{by definition} \\
             &\ \uparrow^{T'}(\tunbox(k, c[\kappa])) \approx \uparrow^{T'}(\tunbox(k, c'[\kappa])) \in \bot^{T'}
               \tag{by abstraction} \\
             &\ \uparrow^{T'}(\tunbox(k, c[\kappa])) \approx \uparrow^{T'}(\tunbox(k, c'[\kappa])) \in
               \intp{T'}
               \byIH \\
             &\ \uparrow^{\square T'}(c) \approx \uparrow^{\square T'}(c) \in
               \intp{\square T'}
               \tag{by abstraction}
      \end{align*}
      
    \item Now we show $\intp{\square T'} \subseteq \top^{\square T'}$. Assuming
      $a, b : D$, $\alpha$ and $\kappa$, then we shall prove
      $\Rnf_\alpha (\downarrow^{\square T'}(a[\kappa])) = \Rnf_\alpha (\downarrow^{\square
        T'} (b[\kappa]))$:
      \begin{align*}
        &\forall k, \kappa. \tunbox \cdot (k, a[\kappa]) \approx \tunbox \cdot (k, b[\kappa]) \in \intp{T'}
          \tag{assumption} \\
        &\tunbox \cdot (1, a[\kappa]) \approx \tunbox \cdot (1, b[\kappa]) \in \intp{T'} \\
        &\tunbox \cdot (1, a[\kappa]) \approx \tunbox \cdot (1, b[\kappa]) \in
          \top^{T'}
          \byIH \\
        &\Rnf_\alpha(\downarrow^{T'}(\tunbox \cdot (1, a[\kappa]))) =
          \Rnf_\alpha(\downarrow^{T'}(\tunbox \cdot (1, b[\kappa])))
        \tag{by definition}\\
        &\boxit\Rnf_\alpha(\downarrow^{T'}(\tunbox \cdot (1, a[\kappa]))) =
          \boxit\Rnf_\alpha(\downarrow^{T'}(\tunbox \cdot (1, b[\kappa])))
        \tag{by congruence}\\
        &\Rnf_\alpha (\downarrow^{\square T'}(a[\kappa])) = \Rnf_\alpha (\downarrow^{\square T'}(b[\kappa]))
      \end{align*}
    \end{itemize}

  \item $T = S \func T'$, then 
    \begin{itemize}[label=Subcase]
    \item $\bot^{S \func T'} \subseteq \intp{S \func T'}$ is rather straightforward.
      \begin{align*}
        H_2: &\ \forall \alpha, \kappa. \Rne_{\alpha}(c[\kappa]) = \Rne_{\alpha}(c'[\kappa]) \tag{assumption} \\
        &\ \text{assume }\kappa', a \approx b \in \intp{S} \\
        &\ a \approx b \in \top^S \byIH \\
        &\ \forall \alpha, \kappa. \Rne_{\alpha}(c[\kappa' \circ \kappa]) =
          \Rne_{\alpha}(c'[\kappa' \circ \kappa]) \tag{by $H_2$} \\
        &\ \forall \alpha, \kappa. \Rne_{\alpha}((c[\kappa']\ \downarrow^S(a))[\kappa]) =
          \Rne_{\alpha}((c'[\kappa]\ \downarrow^S(b))[\kappa]) \\
        &\ \uparrow^{T'}(c[\kappa']\ \downarrow^S(a)) \approx
          \uparrow^{T'}(c'[\kappa']\ \downarrow^S(b)) \in \bot^{T'}
        \tag{by definition} \\
        &\ \uparrow^{T'}(c[\kappa']\ \downarrow^S(a)) \approx
          \uparrow^{T'}(c'[\kappa']\ \downarrow^S(b)) \in \intp{T'}
          \byIH \\
        &\ \uparrow^{S \func T'}(c) \approx \uparrow^{S \func T'}(c') \in \intp{S
          \func T'} \tag{by abstraction} 
      \end{align*}
      We will only show one of the two equations as they are symmetric. We assume
      $\kappa''$, then 
      \begin{align*}
        \uparrow^{S \func T'}(c)[\kappa'][\kappa''] \cdot a[\kappa'']
        &= \uparrow^{T'}(c[\kappa'][\kappa'']\ \downarrow^S(a[\kappa'']))  \\
        &= \uparrow^{T'}(c[\kappa']\ \downarrow^S(a))[\kappa'']  \\
        &= (\uparrow^{S \func T'}(c[\kappa']) \cdot a)[\kappa'']
      \end{align*}
      Thus they agree. 
      
    \item $\intp{S \func T'} \subseteq \top^{S \func T'}$, assume $a \approx b \in
      \intp{S \func T'}$, $\alpha$ and $\kappa$, then
      \begin{align*}
        &\uparrow^S(z) \approx \uparrow^S(z) \in \bot^S
        \tag{by \Cref{lem:varbot}} \\
        &\uparrow^S(z) \approx \uparrow^S(z) \in \intp{S}
          \byIH \\
        &a[\kappa] \cdot \uparrow^S(z) \approx b[\kappa] \cdot \uparrow^S(z) \in
          \intp{T'} \\
        &a[\kappa] \cdot \uparrow^S(z) \approx b[\kappa] \cdot \uparrow^S(z) \in
          \top^{T'} \byIH \\
        &\Rnf_{\alpha[z \mapsto x]} \downarrow^T (a[\kappa] \cdot \uparrow^S(z)) =
          \Rnf_{\alpha[z \mapsto x]} \downarrow^T (b[\kappa] \cdot \uparrow^S(z)) \\
        &\lambda x. \Rnf_{\alpha[z \mapsto x]} \downarrow^T (a[\kappa] \cdot \uparrow^S(z)) =
          \lambda x. \Rnf_{\alpha[z \mapsto x]} \downarrow^T (b[\kappa] \cdot \uparrow^S(z))
          \tag{by congruence} \\
        &a \approx b \in \top^{S \func T'} \tag{by abstraction}
      \end{align*}
    \end{itemize}    
  \end{itemize}
\end{proof}

\begin{lemma}
  $\intp{T}$ is a PER.
\end{lemma}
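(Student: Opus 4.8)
The plan is to prove, by a single induction on the type $T$, that $\intp{T}$ is both symmetric and transitive; I will establish the two properties \emph{simultaneously}, because the function case of transitivity appeals to symmetry of the argument type. The base case $T = B$ is immediate, since $\intp{B} = \bot^B$ and $\bot^T$ (hence $\bot^B$) was already observed to be a PER when $\bot^T$ and $\top^T$ were introduced.

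For $T = \square T'$ both directions are essentially pointwise. For symmetry, if $(a,b) \in \intp{\square T'}$ then for every $k$ and $\kappa$ we have $\tunbox \cdot (k, a[\kappa]) \approx \tunbox \cdot (k, b[\kappa]) \in \intp{T'}$, and the induction hypothesis (symmetry of $\intp{T'}$) flips this to $\tunbox \cdot (k, b[\kappa]) \approx \tunbox \cdot (k, a[\kappa]) \in \intp{T'}$, so $(b,a) \in \intp{\square T'}$. Transitivity is the same argument, using transitivity of $\intp{T'}$ to chain the two families of relations indexed uniformly by the same $k$ and $\kappa$.

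For $T = S \func T'$, symmetry goes as follows: assume $(a,b) \in \intp{S \func T'}$, fix $\kappa$ and $a' \approx b' \in \intp{S}$; by IH (symmetry of $\intp{S}$) we have $b' \approx a' \in \intp{S}$, so applying the hypothesis on $(a,b)$ yields $a[\kappa] \cdot b' \approx b[\kappa] \cdot a' \in \intp{T'}$, and IH (symmetry of $\intp{T'}$) flips it to $b[\kappa] \cdot a' \approx a[\kappa] \cdot b' \in \intp{T'}$, which is exactly what $(b,a) \in \intp{S \func T'}$ requires. Transitivity is the delicate case: assume $(a,b), (b,c) \in \intp{S \func T'}$, fix $\kappa$ and $a' \approx b' \in \intp{S}$; to feed the hypothesis on $(b,c)$ I first need a self-related witness, so from $a' \approx b' \in \intp{S}$ and the IH that $\intp{S}$ is a PER I derive $b' \approx b' \in \intp{S}$; the two hypotheses then give $a[\kappa] \cdot a' \approx b[\kappa] \cdot b' \in \intp{T'}$ and $b[\kappa] \cdot b' \approx c[\kappa] \cdot b' \in \intp{T'}$, and transitivity of $\intp{T'}$ (IH) yields $a[\kappa] \cdot a' \approx c[\kappa] \cdot b' \in \intp{T'}$, establishing $(a,c) \in \intp{S \func T'}$.

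The only real subtlety, and hence the main obstacle, is precisely this last step: transitivity of a function PER cannot be derived from transitivity alone but requires the argument relation to already be a full PER, which is exactly why symmetry and transitivity must be bundled into one induction with a strengthened hypothesis. Everything else is routine bookkeeping with the monotonicity action $[\kappa]$, which never interferes with the relational structure.
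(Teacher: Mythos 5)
Your proof is correct and is essentially the argument the paper intends (it leaves this lemma without an explicit proof, but its proof of the analogous dependent statement that $\Uc_i$ and $\El_i(A)$ are PERs proceeds exactly this way): a single induction on $T$ establishing symmetry and transitivity together, with the key step in the $S \func T'$ transitivity case being the derivation of $b' \approx b' \in \intp{S}$ from $a' \approx b' \in \intp{S}$ via the PER induction hypothesis. No gaps.
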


Next, we generalize the PER model to both contexts and context stacks.
\begin{align*}
  \intp{\Gamma} &:= \{ (\rho, \rho') \sep \rho(x) \approx \rho'(x) \in \intp{T}
                  \text{ if } x : T \in \Gamma \} \\
  \intp{\epsilon; \Gamma} &:= \{ (\vrho, \vrho') \sep \rho \approx \rho
                   \in \Gamma \text{ where } (k, \rho) := \vrho(0) \tand (k', \rho')
                   := \vrho'(0) \} \\
  \intp{\vGamma; \Gamma} &:= \{ (\vrho, \vrho') \sep k = k' \tand \rho \approx \rho
                   \in \Gamma \tand \trunc\vrho 1 \approx \trunc{\vrho'}1 \in \intp{\vGamma} \text{ where } (k, \rho) := \vrho(0) \tand (k', \rho')
                   := \vrho'(0) \}
\end{align*}

Following lemma holds:
\begin{lemma}
  If $\vrho \approx \vrho' \in \intp{\vGamma}$, then $\Ltotal\vrho n = \Ltotal{\vrho'}n$.
\end{lemma}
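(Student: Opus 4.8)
The plan is to prove the equality by a straightforward induction on $n$, peeling off one layer of the context stack at a time and reading off the offset-equality that the PER $\intp{\vGamma}$ builds into its definition. For the base case $n = 0$, both $\Ltotal\vrho 0$ and $\Ltotal{\vrho'}0$ are $0$ by definition of truncation offset on $\Envs$, so there is nothing to do.

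For the step case $n = 1 + n'$, I would first note that for the recursion in $\Ltotal{\_}{\_}$ to be meaningful the stack must have a non-trivial lower part, i.e. $\vGamma = \vGamma''; \Gamma$ for some $\vGamma''$ — this is the usual implicit side condition $n < |\vGamma|$ that accompanies truncation and truncation offset throughout \Cref{sec:st:domain}. Unfolding $\vrho \approx \vrho' \in \intp{\vGamma''; \Gamma}$ then hands me exactly the two facts I need: writing $(m, \rho) := \vrho(0)$ and $(m', \rho') := \vrho'(0)$, the definition of $\intp{\vGamma''; \Gamma}$ forces $m = m'$, and it also yields $\trunc\vrho 1 \approx \trunc{\vrho'}1 \in \intp{\vGamma''}$. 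With these in hand I compute
\begin{align*}
  \Ltotal\vrho{1 + n'}
  &= m + \Ltotal{\trunc\vrho 1}{n'} \\
  &= m' + \Ltotal{\trunc{\vrho'}1}{n'} \\
  &= \Ltotal{\vrho'}{1 + n'}
\end{align*}
where the first and last steps are just the defining equation of $\Ltotal{\_}{\_}$ on $\Envs$, and the middle step combines $m = m'$ with the induction hypothesis applied to $\trunc\vrho 1 \approx \trunc{\vrho'}1 \in \intp{\vGamma''}$ (at $n'$, which is below $|\vGamma''|$).

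There is essentially no genuine obstacle here; the only point that needs a little care is that the offset-equality $m = m'$ is available precisely because $\intp{\vGamma; \Gamma}$ — unlike $\intp{\epsilon; \Gamma}$ — explicitly records $k = k'$ in its definition. This is why the induction must be arranged so that in the step case we are always examining a stack whose topmost layer sits over a further context stack, which is exactly the implicit well-formedness condition $n < |\vGamma|$ already assumed for $\trunc{\_}{\_}$ and $\Ltotal{\_}{\_}$ elsewhere in this section.
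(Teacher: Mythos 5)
Your proof is correct and follows exactly the route the paper intends: the paper states this lemma without proof, and its analogues (e.g.\ \Cref{lem:glue-L}) are discharged by precisely your argument, a straightforward induction on $n$ that unfolds the definition of $\intp{\vGamma}$ in the step case to obtain the offset equality $k = k'$ together with $\trunc\vrho 1 \approx \trunc{\vrho'}1 \in \intp{\vGamma''}$ for the induction hypothesis. Your observation that the implicit side condition $n < |\vGamma|$ is genuinely needed (since $\intp{\epsilon;\Gamma}$ records no offset information and elements of $\Envs$ are infinite stacks unconstrained beyond $\vGamma$) matches the paper's stated convention on truncation and truncation offset.
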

\begin{lemma}
  If $\vrho \approx \vrho' \in \intp{\vGamma}$, then $\trunc\vrho n \approx \trunc{\vrho'}n \in \intp{\vGamma|_n}$.
\end{lemma}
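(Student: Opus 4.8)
The plan is to proceed by induction on $n$, following the structure of the definition of $\trunc{\_}{\_}$ on $\Envs$ and of $\intp{\vGamma}$ on context stacks. Throughout I keep track of the implicit well-definedness side condition $n < |\vGamma|$, which guarantees that $\vGamma$ has enough contexts for every truncation appearing in the statement to make sense; this is precisely what lets $\vGamma$ be decomposed in the step case.

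For the base case $n = 0$ there is nothing to do: $\trunc\vrho 0 = \vrho$, $\trunc{\vrho'}0 = \vrho'$, and $\trunc\vGamma 0 = \vGamma$, so the conclusion is literally the hypothesis. For the step case $n = 1 + n'$, the condition $n < |\vGamma|$ forces $|\vGamma| \ge 2$, so $\vGamma$ has the form $\vGamma''; \Gamma$ with $\vGamma''$ nonempty and $n' < |\vGamma''|$. Unfolding the definition of $\intp{\vGamma''; \Gamma}$, the hypothesis $\vrho \approx \vrho' \in \intp{\vGamma}$ yields in particular $\trunc\vrho 1 \approx \trunc{\vrho'}1 \in \intp{\vGamma''}$. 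Applying the induction hypothesis with $n'$ to this fact gives $\trunc{(\trunc\vrho 1)}{n'} \approx \trunc{(\trunc{\vrho'}1)}{n'} \in \intp{\trunc{\vGamma''}{n'}}$. It then remains only to reconcile this with the desired statement by the arithmetic of truncation: from the definition of $\trunc{\_}{\_}$ on $\Envs$ one has $(\trunc{(\trunc\vrho a)}b)(m) = \vrho(a + b + m) = (\trunc\vrho{(a+b)})(m)$, hence $\trunc{(\trunc\vrho 1)}{n'} = \trunc\vrho{(1 + n')} = \trunc\vrho n$ and likewise for $\vrho'$; and the same associativity on context stacks gives $\trunc{\vGamma''}{n'} = \trunc{(\vGamma''; \Gamma)}{1 + n'} = \trunc\vGamma n$. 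Substituting these identities produces exactly $\trunc\vrho n \approx \trunc{\vrho'}n \in \intp{\trunc\vGamma n}$.

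The companion lemma asserting $\Ltotal\vrho n = \Ltotal{\vrho'}n$ runs by the same induction: at $n = 1 + n'$ the definition of $\intp{\vGamma''; \Gamma}$ supplies $k = k'$ (the top offsets agree) together with $\trunc\vrho 1 \approx \trunc{\vrho'}1 \in \intp{\vGamma''}$, and then $\Ltotal\vrho{1 + n'} = k + \Ltotal{\trunc\vrho 1}{n'}$ by definition, after which the IH closes the case. I expect no genuine obstacle here; the one point requiring minor care is bookkeeping the side condition $n < |\vGamma|$ so that $\vGamma$ really splits as $\vGamma''; \Gamma$ and the induction hypothesis is applicable at the strictly smaller stack, while the rest is unfolding the PER definitions and invoking associativity of truncation.
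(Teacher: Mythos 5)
Your proof is correct and matches the intended argument: the paper states this lemma (together with its companion on $\Ltotal$) without proof, treating it as immediate by induction on $n$, and your induction — unfolding $\intp{\vGamma'';\Gamma}$ to obtain $\trunc\vrho 1 \approx \trunc{\vrho'}1 \in \intp{\vGamma''}$, applying the IH, and collapsing iterated truncations via $(\trunc\vrho n)(m) = \vrho(n+m)$ — is exactly the expected filling-in of those details. No gaps.
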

\begin{lemma}
  $\intp{\Gamma}$ and $\intp{\vGamma}$ are PERs. 
\end{lemma}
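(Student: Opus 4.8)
The plan is to reduce everything to the fact, just established, that each $\intp{T}$ is a PER (as are $\top^T$ and $\bot^T$), together with a structural induction on the context stack for $\intp{\vGamma}$. The key observation is that, once unfolded, all three relations are conjunctions of component relations that are extracted deterministically from the environments, and a conjunction of symmetric (resp.\ transitive) relations is again symmetric (resp.\ transitive); so the bulk of the work is just recording the right projections.

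First I would handle $\intp{\Gamma}$. By definition $(\rho, \rho') \in \intp{\Gamma}$ amounts to requiring $\rho(x) \approx \rho'(x) \in \intp{T}$ for every $x : T \in \Gamma$. For symmetry, given such a pair, apply symmetry of $\intp{T}$ at each $x : T \in \Gamma$ to obtain $\rho'(x) \approx \rho(x) \in \intp{T}$, hence $(\rho', \rho) \in \intp{\Gamma}$. For transitivity, given $(\rho_1, \rho_2)$ and $(\rho_2, \rho_3)$ in $\intp{\Gamma}$, apply transitivity of $\intp{T}$ at each $x : T \in \Gamma$ to the chain $\rho_1(x), \rho_2(x), \rho_3(x)$. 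That settles $\intp{\Gamma}$.

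Then I would treat $\intp{\vGamma}$ by induction on $\vGamma$. In the base case $\vGamma = \epsilon;\Gamma$, membership of $(\vrho, \vrho')$ is exactly $\rho \approx \rho' \in \intp{\Gamma}$ for the local environments $\rho, \rho'$ read off at index $0$, so the claim follows from the previous part. In the step case $\vGamma = \vGamma';\Gamma$, writing $(k,\rho) := \vrho(0)$ and $(k',\rho') := \vrho'(0)$, membership decomposes into three conjuncts: the equality $k = k'$, the condition $\rho \approx \rho' \in \intp{\Gamma}$, and the condition $\trunc\vrho 1 \approx \trunc{\vrho'}1 \in \intp{\vGamma'}$. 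Equality is a PER, $\intp{\Gamma}$ is a PER by the first part, and $\intp{\vGamma'}$ is a PER by the induction hypothesis; since each component ($\vrho \mapsto k$, $\vrho \mapsto \rho$, $\vrho \mapsto \trunc\vrho 1$) depends on $\vrho$ alone, reversing each conjunct gives symmetry and chaining each conjunct through the shared middle environment gives transitivity. I do not expect a genuine obstacle here; the only point needing care is to apply the projections $\vrho(0)$ and $\trunc\vrho 1$ consistently to both arguments before invoking the component PERs, and the argument is visibly uniform in the choice of \tunbox level constraint, so it covers all four systems at once.
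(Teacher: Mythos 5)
Your proof is correct and takes exactly the route the paper intends (the paper simply omits the proof as immediate): reduce $\intp{\Gamma}$ pointwise to the already-established fact that each $\intp{T}$ is a PER, then induct on the stack, observing that each conjunct of $\intp{\vGamma; \Gamma}$ is a PER applied to deterministic projections of the environments ($\vrho(0)$ and $\trunc\vrho 1$), so symmetry and transitivity lift componentwise. No gaps.
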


We define the semantic judgments:
\begin{definition}
  We define semantic typing and semantic typing equivalence as follows:
  \begin{align*}
    \msemtyeq{t}{t'}{T} &:=
    \forall \vrho \approx \vrho' \in \intp{\vGamma}. \intp{t}(\vrho) \approx
                          \intp{t'}(\vrho') \in \intp{T} \\
                        & \qquad \qquad \tand \forall \kappa. \intp{t}(\vrho[\kappa]) =
                          \intp{t}(\vrho)[\kappa] \tand \intp{t'}(\vrho'[\kappa]) =
                          \intp{t'}(\vrho')[\kappa] \\
    \msemtyeq{\vsigma}{\vsigma'}{\vDelta} &:=
    \forall \vrho \approx \vrho' \in \intp{\vGamma}. \intp{\vsigma}(\vrho) \approx
                                            \intp{\vsigma'}(\vrho') \in \intp{\vDelta} \\
                        & \qquad \qquad \tand \forall \kappa. \intp{\vsigma}(\vrho[\kappa]) =
                          \intp{\vsigma}(\vrho)[\kappa] \tand \intp{\vsigma'}(\vrho'[\kappa]) =
                          \intp{\vsigma'}(\vrho')[\kappa]
  \end{align*}
  \begin{align*}
    \msemtyp{t}{T} &:= \msemtyeq{t}{t}{T} \\
    \msemtyp{\vsigma}{\vDelta} &:= \msemtyeq{\vsigma}{\vsigma}{\vDelta}
  \end{align*}
\end{definition}
Notice that we add \Cref{conj:intp-mon} in the semantic judgment, so that we can make
sure the conjecture holds for the well-typed terms. We call these equations \emph{naturality equalities}.

\subsubsection{Kripke Structure of PER Model}

It turns out that the PER model $\intp{T}$ is closed under modal transformations. In
fact, we can view $\intp{T}$ as a Kripke relation where the wolds are connected by
modal transformations. This intuition is justified by the following lemma.
\begin{lemma}\labeledit{lem:intpt-resp-mon}
  If $a \approx b \in \intp{T}$, then $a[\kappa] \approx b[\kappa] \in \intp{T}$. 
\end{lemma}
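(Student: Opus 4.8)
The plan is to prove the statement by a straightforward induction on the type $T$ (in fact a case analysis suffices, since the induction hypothesis never gets used). The guiding observation is that every clause defining $\intp{T}$ already universally quantifies over an arbitrary \UnMoT, so prepending a fixed $\kappa$ is absorbed by composing it with that quantified transformation. The only external facts needed are the composition law $a[\kappa][\kappa'] = a[\kappa \circ \kappa']$ together with its variant for neutral domain terms $c$, and the associativity of $\circ$ on \UnMoTs, both of which are recorded among the earlier properties of untyped modal transformations.

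Concretely, for the base case $T = B$ we have $\intp{B} = \bot^B$, so $a = \uparrow^B(c)$ and $b = \uparrow^B(c')$ with $\Rne_\alpha(c[\kappa']) = \Rne_\alpha(c'[\kappa'])$ for all $\alpha, \kappa'$. Since $\uparrow^B(c)[\kappa] = \uparrow^B(c[\kappa])$ by definition, it remains to check $\Rne_\alpha(c[\kappa][\kappa'']) = \Rne_\alpha(c'[\kappa][\kappa''])$ for all $\alpha, \kappa''$; rewriting $c[\kappa][\kappa''] = c[\kappa \circ \kappa'']$ reduces this to the hypothesis instantiated at $\kappa \circ \kappa''$, and likewise $a[\kappa], b[\kappa]$ land back in the domain of $\bot^B$. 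For $T = \square T'$, unfolding the definition the hypothesis gives $\tunbox \cdot (k, a[\kappa']) \approx \tunbox \cdot (k, b[\kappa']) \in \intp{T'}$ for all $k, \kappa'$; the goal asks the same with $a[\kappa][\kappa'']$ in place of $a[\kappa']$, and $a[\kappa][\kappa''] = a[\kappa \circ \kappa'']$ discharges it directly. The case $T = S \func T'$ has the same shape: the hypothesis provides $a[\kappa'] \cdot a' \approx b[\kappa'] \cdot b' \in \intp{T'}$ for every $\kappa'$ and every $a' \approx b' \in \intp{S}$, and the goal for $(a[\kappa], b[\kappa])$ follows by instantiating $\kappa' := \kappa \circ \kappa''$.

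Since the defining clauses already bake in closure under \UnMoTs, there is essentially no obstacle beyond the bookkeeping of the $\_[\kappa][\kappa'] = \_[\kappa \circ \kappa']$ rewrites and associativity; the one point worth care is the base case, where one must remember that $\uparrow^B$ commutes with $\_[\kappa]$ definitionally so that the reindexed terms are again of the right syntactic form to lie in $\bot^B$. With this lemma in hand, $\intp{T}$ is legitimately a Kripke relation whose worlds are connected by \UnMoTs, which is exactly what is used downstream.
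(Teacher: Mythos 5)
Your proof is correct and follows essentially the same route as the paper: each case is discharged by instantiating the universally quantified \UnMoT in the defining clause of $\intp{T}$ at $\kappa \circ \kappa''$ and rewriting via $a[\kappa][\kappa''] = a[\kappa \circ \kappa'']$, with the induction hypothesis never actually needed. Your extra detail in the base case (that $\uparrow^B$ commutes with $\_[\kappa]$ so the reindexed terms remain in the domain of $\bot^B$) is a fine elaboration of what the paper dismisses as immediate.
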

\begin{proof}
  We proceed by induction on $T$.
  \begin{itemize}[label=Case]
  \item $T = B$, immediate because $\kappa$ is closed under composition. 
  \item $T = \square T'$, assume $\kappa$, then
    \begin{align*}
      H_1: &\ \forall k, \kappa'. \tunbox \cdot (k, a[\kappa']) \approx \tunbox \cdot (k, b[\kappa']) \in \intp{T'}
        \tag{by assumption} \\
      &\ \text{assume }k, \kappa'
      \\
      &\ (\tunbox \cdot (k, a[\kappa \circ \kappa'])) \approx (\tunbox \cdot (k, b[\kappa \circ \kappa'])) \in
      \intp{T'}
      \tag{by $H_1$} \\
      &\ (\tunbox \cdot (k, a[\kappa][\kappa'])) \approx (\tunbox \cdot (k, b[\kappa][\kappa'])) \in
        \intp{T'}
        \tag{by definition} \\
      &\ a[\kappa] \approx b[\kappa] \in \intp{\square T'}
      \tag{by abstraction}
    \end{align*}
    
  \item $T = S \func T'$, assume $\kappa$, then
    \begin{align*}
      H_2: &\ \forall \kappa', a' \approx b' \in \intp{S}.
             a[\kappa'] \cdot a' \approx b[\kappa'] \cdot b' \in \intp{T'}
        \tag{by assumption} \\
           &\ \text{assume }\kappa', a' \approx b' \in \intp{S} \\
           &\ a[\kappa \circ \kappa'] \cdot a' \approx b[\kappa \circ \kappa']
             \cdot b' \in \intp{T'}
             \tag{by $H_2$ with $\kappa \circ \kappa'$} \\
           &\ a[\kappa][\kappa'] \cdot a' \approx b[\kappa][\kappa']
             \cdot b' \in \intp{T'} \\
           &\ a[\kappa] \approx b[\kappa] \in \intp{S \func T'}
             \tag{by abstraction}
    \end{align*}
  \end{itemize}
\end{proof}
This lemma holds because untyped modal transformations are closed under composition. 

We can generalize this lemma of types to context stacks:
\begin{lemma}\labeledit{lem:intpvg-resp-mon}
  If $\vrho \approx \vrho' \in \intp{\vGamma}$, then $\vrho[\kappa] \approx \vrho'[\kappa] \in \intp{\vGamma}$.
\end{lemma}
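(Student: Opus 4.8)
The plan is to prove the lemma by induction on the context stack $\vGamma$, following the inductive structure of the relation $\intp{\vGamma}$ and splitting each case into its ``local'' component (the topmost $\Env$, to be handled by \Cref{lem:intpt-resp-mon} applied pointwise at every variable) and its ``tail'' component (the truncated remainder, to be handled by the induction hypothesis). Throughout, the key fact linking truncation and application of a \UnMoT is \Cref{lem:envs-mon-1}.

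For the base case $\vGamma = \epsilon; \Gamma$, write $(k, \rho) := \vrho(0)$ and $(k', \rho') := \vrho'(0)$; the hypothesis gives $\rho(x) \approx \rho'(x) \in \intp{T}$ for every $x : T \in \Gamma$. By the definition of applying a \UnMoT to an $\Envs$, $\vrho[\kappa](0) = (\Ltotal\kappa k, \rho[\kappa])$ and similarly $\vrho'[\kappa](0) = (\Ltotal\kappa k', \rho'[\kappa])$; since the $\epsilon;\Gamma$ case imposes no constraint on offsets, it suffices to show $\rho[\kappa](x) \approx \rho'[\kappa](x) \in \intp{T}$. But $\rho[\kappa](x) = \rho(x)[\kappa]$ and $\rho'[\kappa](x) = \rho'(x)[\kappa]$, so this is exactly \Cref{lem:intpt-resp-mon}.

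For the inductive step $\vGamma = \vGamma''; \Gamma$, the hypothesis supplies three facts with $(k,\rho) := \vrho(0)$ and $(k',\rho') := \vrho'(0)$: that $k = k'$, that $\rho \approx \rho' \in \intp{\Gamma}$, and that $\trunc\vrho 1 \approx \trunc{\vrho'}1 \in \intp{\vGamma''}$. We must re-establish the same three for $\vrho[\kappa]$ and $\vrho'[\kappa]$. The offsets of $\vrho[\kappa](0)$ and $\vrho'[\kappa](0)$ are $\Ltotal\kappa k$ and $\Ltotal\kappa k'$, which agree because $k = k'$; the local component is again \Cref{lem:intpt-resp-mon} applied pointwise as in the base case; and for the tail, \Cref{lem:envs-mon-1} gives $\trunc{\vrho[\kappa]}1 = \trunc\vrho 1[\trunc\kappa{\Ltotal\vrho 1}]$ and $\trunc{\vrho'[\kappa]}1 = \trunc{\vrho'}1[\trunc\kappa{\Ltotal{\vrho'}1}]$. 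Since $\Ltotal\vrho 1 = k = k' = \Ltotal{\vrho'}1$, both are transformed by the single \UnMoT $\trunc\kappa k$, so applying the induction hypothesis to $\trunc\vrho 1 \approx \trunc{\vrho'}1 \in \intp{\vGamma''}$ with $\trunc\kappa k$ yields $\trunc{\vrho[\kappa]}1 \approx \trunc{\vrho'[\kappa]}1 \in \intp{\vGamma''}$, which closes the case.

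There is no genuine obstacle here; the only point needing a little care is lining up the truncations after $\kappa$ has been applied, i.e. recognizing that $\trunc{\vrho[\kappa]}1$ is precisely $\trunc\vrho 1$ transformed by the restricted \UnMoT $\trunc\kappa{\Ltotal\vrho 1}$ and that the relevant truncation offset $\Ltotal\vrho 1$ is just the stored offset $k$ of $\vrho(0)$. Both are immediate from the definitions together with \Cref{lem:envs-mon-1} (equivalently \Cref{lem:envs-mon} specialized to $n = 1$), so the whole argument is a short calculation in each of the two cases.
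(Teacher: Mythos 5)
Your proof is correct, and it is exactly the argument the paper intends: the paper states this lemma without a written proof as a routine generalization of \Cref{lem:intpt-resp-mon} to context stacks, which is precisely your induction on $\vGamma$ using \Cref{lem:intpt-resp-mon} pointwise on the topmost environment and \Cref{lem:envs-mon-1} (with $\Ltotal\vrho 1$ equal to the stored offset $k$) to line up the truncated tails for the induction hypothesis.
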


\subsection{Completeness}

In this section, we will cover the semantic equivalence typing rules and the
fundamental theorem which justifies the the completeness theorem.

\subsubsection{PER Rules}

\begin{lemma}
  \begin{mathpar}
    \inferrule*
    {\msemtyeq s t T}
    {\msemtyeq t s T}
  \end{mathpar}
\end{lemma}
\begin{lemma}
  \begin{mathpar}
    \inferrule*
    {\msemtyeq s t T \\ \msemtyeq t u T}
    {\msemtyeq s u T}
  \end{mathpar}
\end{lemma}
\begin{proof}
 We use the fact that $\intp{\vGamma}$ and $\intp{T}$ are PER. 
\end{proof}

\subsubsection{Congruence Rules}

\begin{lemma}
  \begin{mathpar}
    \inferrule*
    {x : T \in \Gamma}
    {\msemtyeq[\vGamma; \Gamma] x x T}
  \end{mathpar}
\end{lemma}
\begin{proof}
  Immediate because we are doing the same lookup.
\end{proof}

\begin{lemma}
  \begin{mathpar}
    \inferrule*
    {\msemtyeq[\vGamma; \cdot]{t}{t'}T}
    {\msemtyeq{\boxit t}{\boxit t'}{\square T}}
  \end{mathpar}
\end{lemma}
\begin{proof}
  \begin{align*}
    H_1: &\ \msemtyeq[\vGamma; \cdot]{t}{t'}T
           \tag{by assumption} \\
    H_2: &\ \vrho \approx \vrho' \in \intp{\vGamma}
           \tag{by assumption} \\
    H_3: &\ \ext(\vrho) \approx \ext(\vrho') \in \intp{\vGamma; \cdot}
           \tag{by definition} \\
         &\ \intp{t}(\ext(\vrho)) \approx \intp{t'}(\ext(\vrho')) \in \intp{T}
           \tag{by $H_1$ and $H_3$} \\
         &\ \text{assume }k, \kappa\\
         &\ \intp{t}(\ext(\vrho))[\sextt\kappa 1][\sextt\vone k] \approx
           \intp{t'}(\ext(\vrho'))[\sextt\kappa 1][\sextt\vone k] \in \intp{T}
           \tag{by \Cref{lem:intpt-resp-mon}} \\
         &\ \forall k, \kappa. \tunbox \cdot (k, \tbox(\intp{t}(\ext(\vrho)))[\kappa]) \approx \tunbox
           \cdot (k, \tbox(\intp{t'}(\ext(\vrho')))[\kappa]) \in \intp{T}
           \tag{by abstraction} \\
         &\ \intp{\boxit t}(\vrho) \approx \intp{\boxit t'}(\vrho') \in \intp{\square
           T}
           \tag{by definition}
  \end{align*}
\end{proof}

\begin{lemma}
  \begin{mathpar}
    \inferrule*
    {\msemtyeq[\vGamma;(\Gamma, x : S)]{t}{t'}T}
    {\msemtyeq[\vGamma; \Gamma]{\lambda x. t}{\lambda x. t'}{S \func T}}
  \end{mathpar}
\end{lemma}
\begin{proof}
  \begin{align*}
    H_1: &\ \msemtyeq[\vGamma;(\Gamma, x : S)]{t}{t'}T
           \tag{by assumption} \\
    H_2: &\ \vrho \approx \vrho' \in \intp{\vGamma; \Gamma}
           \tag{by assumption} \\
         &\ \text{assume }\kappa, a \approx a' \in \intp{S} \\
         &\ \vrho[\kappa] \approx \vrho'[\kappa] \in \intp{\vGamma; \Gamma}
           \tag{by \Cref{lem:intpvg-resp-mon}}  \\
    H_3: &\ \ext(\vrho[\kappa], x, a) \approx \ext(\vrho'[\kappa], x, a') \in
           \intp{\vGamma; (\Gamma, x : S)}
           \tag{by definition} \\
         &\ \intp{t}(\ext(\vrho[\kappa], x, a)) \approx
           \intp{t'}(\ext(\vrho'[\kappa], x, a'))
           \in \intp{T}
           \tag{by $H_1$ and $H_3$} \\
    H_4: &\ \forall \kappa'. \intp{t}(\ext(\vrho[\kappa], x, a)[\kappa']) =
           \intp{t}(\ext(\vrho[\kappa], x, a))[\kappa'] \\
         &\ \tand \intp{t'}(\ext(\vrho'[\kappa], x, a')[\kappa']) = \intp{t'}(\ext(\vrho'[\kappa], x, a'))[\kappa']
           \tag{by $H_1$ and $H_3$} \\
         &\ \Lambda(\vrho, x, t)[\kappa] \cdot a
           \approx
           \Lambda(\vrho', x, t')[\kappa] \cdot a'
           \in \intp{T}
           \tag{by definition} \\
         &\ \intp{\lambda x. t}(\vrho) \approx \intp{\lambda x. t}(\vrho) \in \intp{S
           \func T}
           \tag{by abstraction}
  \end{align*}
\end{proof}

\begin{lemma}
  \begin{mathpar}
    \inferrule
    {\msemtyeq{t}{t'}{S \func T} \\
      \msemtyeq{s}{s'}S}
    {\msemtyeq{t\ s}{t'\ s'}T}
  \end{mathpar}
\end{lemma}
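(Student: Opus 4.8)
The plan is to unfold the definition of semantic equivalence in the untyped-domain setting. Recall that $\msemtyeq{t}{t'}{T}$ here has two conjuncts: a PER component, asserting $\intp{t}(\vrho) \approx \intp{t'}(\vrho') \in \intp{T}$ for every $\vrho \approx \vrho' \in \intp{\vGamma}$, and the naturality equalities $\intp{t}(\vrho[\kappa]) = \intp{t}(\vrho)[\kappa]$ and $\intp{t'}(\vrho'[\kappa]) = \intp{t'}(\vrho')[\kappa]$. I would establish each of these two components for $t\ s$ and $t'\ s'$ in turn, using the two premises $\msemtyeq{t}{t'}{S \func T}$ and $\msemtyeq{s}{s'}{S}$.

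For the PER component, fix $\vrho \approx \vrho' \in \intp{\vGamma}$. The premises supply $\intp{t}(\vrho) \approx \intp{t'}(\vrho') \in \intp{S \func T}$ and $\intp{s}(\vrho) \approx \intp{s'}(\vrho') \in \intp{S}$. Unfolding the definition of $\intp{S \func T}$ and instantiating its quantified modal transformation with the identity $\vone$ and the related arguments $\intp{s}(\vrho) \approx \intp{s'}(\vrho') \in \intp{S}$, I obtain $\intp{t}(\vrho)[\vone] \cdot \intp{s}(\vrho) \approx \intp{t'}(\vrho')[\vone] \cdot \intp{s'}(\vrho') \in \intp{T}$. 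Since $a[\vone] = a$ for all $a \in D$ and $\intp{t\ s}(\vrho) = \intp{t}(\vrho) \cdot \intp{s}(\vrho)$ by the evaluation clause for application, this is exactly $\intp{t\ s}(\vrho) \approx \intp{t'\ s'}(\vrho') \in \intp{T}$.

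For the naturality component, fix $\vrho \approx \vrho' \in \intp{\vGamma}$ and a modal transformation $\kappa$. Using the evaluation clause for application, then the naturality equalities supplied by the two premises, and then the naturality equality $(a \cdot b)[\kappa] = (a[\kappa]) \cdot (b[\kappa])$ among the naturality equalities, I would compute
\begin{align*}
  \intp{t\ s}(\vrho[\kappa])
  &= \intp{t}(\vrho[\kappa]) \cdot \intp{s}(\vrho[\kappa]) \\
  &= (\intp{t}(\vrho)[\kappa]) \cdot (\intp{s}(\vrho)[\kappa]) \\
  &= (\intp{t}(\vrho) \cdot \intp{s}(\vrho))[\kappa]
   = \intp{t\ s}(\vrho)[\kappa],
\end{align*}
and symmetrically for $t'\ s'$ with $\vrho'$. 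Combining the two components discharges the lemma.

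This argument is essentially routine bookkeeping; there is no genuine obstacle. The only points that call for a bit of attention are the instantiation of the function PER at the identity $\vone$ rather than at an arbitrary $\kappa$ (together with the appeal to $a[\vone] = a$), and the reliance on the naturality equality for partial application — which is precisely why that equality was bundled into the Naturality Equalities lemma earlier, since it is exactly what is needed to propagate naturality through application.
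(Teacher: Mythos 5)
Your proof is correct and follows essentially the same route as the paper: unfold the two conjuncts of the semantic judgment, obtain the PER component from the two premises, and push $\kappa$ through the application of the evaluation clause using the naturality equalities. The only cosmetic differences are that you make the instantiation at $\vone$ (with $a[\vone]=a$) explicit for the PER part and justify $(\intp{t}(\vrho)\cdot\intp{s}(\vrho))[\kappa] = \intp{t}(\vrho)[\kappa]\cdot\intp{s}(\vrho)[\kappa]$ by the Naturality Equalities lemma, whereas the paper compresses these steps into the remark that the final identification is ``exactly provided by'' $\intp{t}(\vrho)\approx\intp{t'}(\vrho')\in\intp{S \func T}$.
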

\begin{proof}
  Assuming $\vrho \approx \vrho' \in \intp{\vGamma; \Gamma}$, we demonstrate that
  $\forall \kappa. \intp{t\ s}(\vrho[\kappa']) = \intp{t\ s}(\vrho)[\kappa]$. Assuming
  $\kappa$, we compute:
  \begin{align*}
    \intp{t\ s}(\vrho[\kappa'])
    &= \intp{t}(\vrho[\kappa']) \cdot \intp{s}(\vrho[\kappa']) \\
    &= \intp{t}(\vrho)[\kappa'] \cdot \intp{s}(\vrho)[\kappa']
  \end{align*}
  and
  \begin{align*}
    \intp{t\ s}(\vrho)[\kappa]
    &= (\intp{t}(\vrho) \cdot \intp{s}(\vrho))[\kappa]
  \end{align*}
  Now we need to equate both sides, and that is exactly provided by $\intp{t}(\vrho)
  \approx \intp{t'}(\vrho') \in \intp{S \func T}$. 
\end{proof}

All following are immediate just by expanding the definitions:
\begin{lemma}
  \begin{mathpar}
    \inferrule
    {\msemtyeq{t}{t'}{\square T} \\
      |\vDelta| = n}
    {\msemtyeq[\vGamma; \vDelta]{\unbox n t}{\unbox n t'}{T'}}  

    \inferrule
    {\msemtyeq[\vDelta]{t}{t'}{T} \\
      \msemtyeq{\vsigma}{\vsigma'}{\vDelta}}
    {\msemtyeq{t[\vsigma]}{t'[\vsigma']}T}

    \inferrule
    { }
    {\msemtyeq{\vect I}{\vect I}{\vGamma}}

    \inferrule
    {\msemtyeq{\vsigma}{\vsigma'}{\vGamma'; \Gamma} \\
      \msemtyeq{t}{t'}T}
    {\msemtyeq{\vsigma, x : t}{\vsigma', x : t'}{\vGamma'; (\Gamma, x : T)}}
    
    \inferrule
    { }
    {\msemtyeq[\vGamma; (\Gamma, x : T)]{\wk_x}{\wk_x}{\vGamma;\Gamma}}
  
    \inferrule
    {\msemtyeq{\vsigma}{\vsigma'}{\vDelta} \\ |\vGamma'| = n}
    {\msemtyeq[\vGamma; \vGamma']{\sextt\vsigma n}{\sextt{\vsigma'}n}{\vDelta; \cdot}}

    \inferrule
    {\msemtyeq[\vGamma']{\vsigma}{\vsigma'}{\vGamma''} \\ \msemtyeq{\vdelta}{\vdelta'}{\vGamma'}}
    {\msemtyeq{\vsigma \circ \vdelta}{\vsigma' \circ \vdelta'}{\vGamma''}}
\end{mathpar}
\end{lemma}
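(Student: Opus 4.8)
The plan is to dispatch the seven rules one at a time, in each case unfolding the semantic equivalence $\msemtyeq{t}{t'}{T}$ (resp.\ $\msemtyeq{\vsigma}{\vsigma'}{\vDelta}$) into its two conjuncts — the partial-equivalence membership $\intp{t}(\vrho) \approx \intp{t'}(\vrho') \in \intp{T}$ for every $\vrho \approx \vrho' \in \intp{\vGamma}$, and the naturality equalities $\intp{t}(\vrho[\kappa]) = \intp{t}(\vrho)[\kappa]$ (and the same for $t'$) — and verifying each conjunct separately. As the sentence preceding the statement announces, every case amounts to expanding the definition of $\intp{-}$ at the relevant syntactic form, invoking the hypotheses together with the lemmas already established about $\intp{\vGamma}$, and, for the modal cases, the truncation and truncation-offset lemmas for $\Envs$.

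For the purely substitutional rules — $\vect I$, $\vsigma, t/x$, $\wk_x$, $\vsigma \circ \vdelta$, and $t[\vsigma]$ — the membership conjuncts are immediate: $\intp{\vect I}(\vrho) = \vrho$ preserves $\intp{\vGamma}$; $\intp{\vsigma, t/x}(\vrho) = \ext(\intp{\vsigma}(\vrho), x, \intp{t}(\vrho))$ and $\intp{\vGamma'; (\Gamma, x : T)}$ is by definition closed under extending the top environment by related values; $\intp{\wk_x}(\vrho) = \drop(\vrho, x)$ leaves the remaining lookups and the truncated tail related; $\intp{\vsigma \circ \vdelta}(\vrho) = \intp{\vsigma}(\intp{\vdelta}(\vrho))$ just composes the two premises; and $\intp{t[\vsigma]}(\vrho) = \intp{t}(\intp{\vsigma}(\vrho))$ chains the premise $\msemtyeq[\vDelta]{t}{t'}{T}$ onto $\msemtyeq{\vsigma}{\vsigma'}{\vDelta}$. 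The corresponding naturality conjuncts are obtained by chaining the naturality equalities supplied by the premises, using the evident commutations $\ext(\vrho, x, a)[\kappa] = \ext(\vrho[\kappa], x, a[\kappa])$ and $\drop(\vrho, x)[\kappa] = \drop(\vrho[\kappa], x)$, both of which fall straight out of the definitions of $\ext$, $\drop$, and $\vrho[\kappa]$.

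The two genuinely modal rules are the $\unbox n$ congruence and the $\sextt\vsigma n$ rule. For $\unbox n$: given $\vrho \approx \vrho' \in \intp{\vGamma; \vDelta}$ with $|\vDelta| = n$, the partial-equivalence lemmas for context stacks give $\trunc\vrho n \approx \trunc{\vrho'}n \in \intp{\vGamma}$ and $\Ltotal\vrho n = \Ltotal{\vrho'}n$, hence $\intp{t}(\trunc\vrho n) \approx \intp{t'}(\trunc{\vrho'}n) \in \intp{\square T}$; unfolding $\intp{\square T}$ at offset $\Ltotal\vrho n$ with the identity \UnMoT yields $\intp{\unbox n t}(\vrho) = \tunbox \cdot (\Ltotal\vrho n, \intp{t}(\trunc\vrho n)) \approx \intp{\unbox n t'}(\vrho') \in \intp{T}$. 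For naturality one computes $\intp{\unbox n t}(\vrho[\kappa]) = \tunbox \cdot (\Ltotal{\vrho[\kappa]}n, \intp{t}(\trunc{\vrho[\kappa]}n))$ and rewrites it by \Cref{lem:envs-L}, \Cref{lem:envs-mon}, the naturality of $\intp{t}$, and \Cref{lem:unbox.-mon} into $\tunbox \cdot (\Ltotal\vrho n, \intp{t}(\trunc\vrho n))[\kappa] = \intp{\unbox n t}(\vrho)[\kappa]$. The $\sextt\vsigma n$ rule is analogous, and is the step that deserves the most care, since there the membership conjunct uses closure of $\intp{\vGamma}$ under truncation plus the fact that the freshly inserted $\emp$ trivially inhabits $\intp{\cdot}$, while the naturality conjunct must combine \Cref{lem:envs-L}, \Cref{lem:envs-mon}, the naturality of $\vsigma$, and the auxiliary identity $\ext(\vrho, m)[\kappa] = \ext(\vrho[\trunc\kappa m], \Ltotal\kappa m)$ obtained by unfolding $\ext(-, m, \emp)[\kappa]$. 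None of this is a real obstacle — it is all definitional expansion — so the only thing to watch is keeping the offsets $\Ltotal\vrho n$ and $\Ltotal\kappa{(\Ltotal\vrho n)}$ and the truncated \UnMoT $\trunc\kappa{(\Ltotal\vrho n)}$ aligned on both sides of each equation.
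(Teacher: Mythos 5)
Your proposal is correct and follows the same route as the paper, which simply dispatches all these rules as ``immediate by expanding the definitions and applying the premises''; your write-up is just the explicit version of that, correctly invoking the truncation/offset lemmas for $\Envs$, \Cref{lem:envs-L}, \Cref{lem:envs-mon}, \Cref{lem:unbox.-mon}, and the evident commutations of $\ext$ and $\drop$ with \UnMoTs. (The $T'$ in the conclusion of the $\unbox$ rule is a typo for $T$, and you rightly treat it as such.)
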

\begin{proof}
  Immediate by premises.
\end{proof}

\subsubsection{$\beta$ Rules}

\begin{lemma}
  \begin{mathpar}
    \inferrule*
    {\msemtyp[\vGamma; \cdot]{t}{T} \\
      |\vDelta| = n}
    {\msemtyeq[\vGamma; \vDelta]{\unbox{n}{(\boxit t)}}{t[\sextt{\vect I}n]}{T}}
  \end{mathpar}
\end{lemma}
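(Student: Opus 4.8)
The plan is to unfold the judgment $\msemtyeq[\vGamma;\vDelta]{\unbox{n}{(\boxit t)}}{t[\sextt{\vect I}n]}{T}$ into its two components — PER-relatedness of the two interpretations on related environments, and the naturality equalities for each term — and to discharge both by reducing each side of the equation to $\intp{t}$ applied to a suitable reshuffling of the truncated environment. Concretely, I first compute the two interpretations on a fixed $\vrho \in \Envs$. Unfolding $\intp{\unbox{n}{(\boxit t)}}$, then $\intp{\boxit t}$, and finally the clause $\tunbox \cdot (k,\tbox(a)) = a[\sextt\vone k]$ of the partial unbox gives
\begin{align*}
  \intp{\unbox{n}{(\boxit t)}}(\vrho)
  &= \tunbox \cdot (\Ltotal\vrho n, \tbox(\intp{t}(\ext(\trunc\vrho n)))) \\
  &= \intp{t}(\ext(\trunc\vrho n))[\sextt{\vone}{\Ltotal\vrho n}],
\end{align*}
while unfolding $\intp{t[\sextt{\vect I}n]}$ and $\intp{\sextt{\vect I}n}$ (using $\intp{\vect I}(\cdot)=\cdot$) gives $\intp{t[\sextt{\vect I}n]}(\vrho) = \intp{t}(\ext(\trunc\vrho n,\Ltotal\vrho n))$. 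The bridge is the small $\Envs$-level identity $\ext(\trunc\vrho n)[\sextt{\vone}{\Ltotal\vrho n}] = \ext(\trunc\vrho n,\Ltotal\vrho n)$: at slot $0$ the left-hand side is $(\Ltotal{\sextt{\vone}{\Ltotal\vrho n}}1,\emp)=(\Ltotal\vrho n,\emp)$, and at the remaining slots both sides read $\trunc\vrho n$, all by chasing the definitions of $[\kappa]$ on an $\Envs$ and of $\sextt{-}{-},\trunc{-}{-},\Ltotal{-}{-}$ on $\N\to\N$. Since $\trunc\vrho n\in\intp{\vGamma}$ (as $|\vDelta|=n$), hence $\ext(\trunc\vrho n)\in\intp{\vGamma;\cdot}$, the naturality equality packaged in the hypothesis $\msemtyp[\vGamma;\cdot]{t}{T}$ lets me push $[\sextt{\vone}{\Ltotal\vrho n}]$ inside, so in fact $\intp{\unbox{n}{(\boxit t)}}(\vrho) = \intp{t}(\ext(\trunc\vrho n,\Ltotal\vrho n)) = \intp{t[\sextt{\vect I}n]}(\vrho)$ — the two interpretations are literally equal for every $\vrho$.

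The PER obligation is then immediate: for $\vrho\approx\vrho'\in\intp{\vGamma;\vDelta}$ we have $\trunc\vrho n\approx\trunc{\vrho'}n\in\intp{\vGamma}$ and $\Ltotal\vrho n=\Ltotal{\vrho'}n$ because truncation and truncation offset respect the PER on environments, whence $\ext(\trunc\vrho n,\Ltotal\vrho n)\approx\ext(\trunc{\vrho'}n,\Ltotal{\vrho'}n)\in\intp{\vGamma;\cdot}$, and the PER part of $\msemtyp[\vGamma;\cdot]{t}{T}$ delivers $\intp{t}(\ext(\trunc\vrho n,\Ltotal\vrho n))\approx\intp{t}(\ext(\trunc{\vrho'}n,\Ltotal{\vrho'}n))\in\intp{T}$. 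For the naturality obligation, given $\kappa$ it suffices (using the established pointwise equality and again naturality of $t$) to check $\ext(\trunc{\vrho[\kappa]}n,\Ltotal{\vrho[\kappa]}n) = \ext(\trunc\vrho n,\Ltotal\vrho n)[\kappa]$, which is exactly \Cref{lem:envs-mon} and \Cref{lem:envs-L} combined with the definition of $[\kappa]$ on an $\ext$; the same computation covers $t[\sextt{\vect I}n]$ since it has the identical interpretation.

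The proof is therefore almost entirely mechanical; the one point needing care is the indexing bookkeeping in the identity $\ext(\trunc\vrho n)[\sextt{\vone}{\Ltotal\vrho n}] = \ext(\trunc\vrho n,\Ltotal\vrho n)$ and its $\kappa$-decorated variant, where the conventions of $\sextt{-}{-}$ and of the $\Envs$-application must be tracked precisely. This is the untyped-domain analogue of the presheaf-model $\beta$-lemma for $\square$, which went through \Cref{cor:modal-trans} rather than the explicit-substitution reduction, but the structure is the same.
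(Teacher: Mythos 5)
Your proposal is correct and follows essentially the same route as the paper's proof: evaluate both sides to $\intp{t}(\ext(\trunc\vrho n,\Ltotal\vrho n))$ using the partial-unbox clause, the naturality equality packaged in $\msemtyp[\vGamma;\cdot]{t}{T}$, and the $\Envs$-level identity, then conclude PER-relatedness from the hypothesis applied to $\ext(\trunc\vrho n,\Ltotal\vrho n)\approx\ext(\trunc{\vrho'}n,\Ltotal{\vrho'}n)$. The only (harmless) difference is presentational: you establish literal equality of the two interpretations at a single environment before invoking the PER, and you spell out the naturality obligation of the conclusion via \Cref{lem:envs-mon,lem:envs-L}, which the paper leaves implicit.
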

\begin{proof}
  \begin{align*}
    H_1: &\ \msemtyp[\vGamma; \cdot]{t}{T}
           \tag{by assumption} \\
    H_2: &\ \vrho \approx \vrho' \in \intp{\vGamma; \vDelta}
           \tag{by assumption}
  \end{align*}
  We compute both sides:
  \begin{align*}
    \intp{\unbox n{(\boxit t)}}(\vrho)
    &= \tunbox \cdot (\Ltotal\vrho n, \tbox(\intp{t}(\ext(\trunc\vrho n)))) \\
    &= \intp{t}(\ext(\trunc\vrho n))[\sextt\vone{\Ltotal\vrho n}] \\
    &= \intp{t}(\ext(\trunc\vrho n))[\sextt{\vone}{\Ltotal\vrho n}])
    \tag{by naturality equalities} \\
    &= \intp{t}(\ext(\trunc\vrho n, \Ltotal\vrho n))
      \tag{by definition}
  \end{align*}
  \begin{align*}
    \intp{t[\sextt{\vect I}n]}(\vrho')
    &= \intp{t}(\intp{\sextt{\vect I}n}(\vrho')) \\
    &= \intp{t}(\ext(\trunc{\vrho'}n, \Ltotal{\vrho'}n))
  \end{align*}
  We also know
  \begin{align*}
    &\trunc\vrho n \approx \trunc{\vrho'}n \in \intp{\vGamma}
    \tag{by $H_2$} \\
    &\Ltotal\vrho n = \Ltotal{\vrho'}n
    \tag{by $H_2$} \\
    &\ext(\trunc\vrho n, \Ltotal\vrho n) \approx \ext(\trunc{\vrho'}n, \Ltotal{\vrho'}n) \in \intp{\vGamma; \vDelta}
  \end{align*}
  Thus the target holds by $H_1$. 
\end{proof}

\begin{lemma}
  \begin{mathpar}
    \inferrule*
    {\msemtyp[\vGamma;(\Gamma, x : S)]t T \\
      \msemtyp[\vGamma; \Gamma] s S}
    {\msemtyeq[\vGamma; \Gamma]{(\lambda x. t)\ s}{t[\vect I, s/x]}{T}}
  \end{mathpar}
\end{lemma}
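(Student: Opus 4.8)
The plan is to unfold both sides of the claimed equivalence by the evaluation function and then discharge each conjunct of $\msemtyeq{(\lambda x. t)\ s}{t[\vect I, s/x]}{T}$ by invoking the two semantic hypotheses. Recall that in the untyped-domain setting $\msemtyeq u {u'} T$ asks for (i) $\intp{u}(\vrho) \approx \intp{u'}(\vrho') \in \intp{T}$ whenever $\vrho \approx \vrho' \in \intp{\vGamma; \Gamma}$, together with (ii) the naturality equalities $\intp{u}(\vrho[\kappa]) = \intp{u}(\vrho)[\kappa]$ and $\intp{u'}(\vrho'[\kappa]) = \intp{u'}(\vrho')[\kappa]$. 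I would establish (i) first, which is the heart of the $\beta$-rule.

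Fix $\vrho \approx \vrho' \in \intp{\vGamma; \Gamma}$. Computing by the definitions of $\intp{\_}$ on application, on $\lambda$-abstraction, and on partial application, $\intp{(\lambda x. t)\ s}(\vrho) = \Lambda(\vrho, x, t) \cdot \intp{s}(\vrho) = \intp{t}(\ext(\vrho, x, \intp{s}(\vrho)))$; and by the clauses for $t[\vect I, s/x]$ and for $\vect I$, $\intp{t[\vect I, s/x]}(\vrho') = \intp{t}(\ext(\vrho', x, \intp{s}(\vrho')))$. From $\msemtyp[\vGamma; \Gamma]{s}{S}$ applied to $\vrho \approx \vrho'$ we obtain $\intp{s}(\vrho) \approx \intp{s}(\vrho') \in \intp{S}$; combining this with $\vrho \approx \vrho' \in \intp{\vGamma; \Gamma}$ and the definition of the PER $\intp{\vGamma; (\Gamma, x : S)}$ gives $\ext(\vrho, x, \intp{s}(\vrho)) \approx \ext(\vrho', x, \intp{s}(\vrho')) \in \intp{\vGamma; (\Gamma, x : S)}$. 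Feeding this into $\msemtyp[\vGamma; (\Gamma, x : S)]{t}{T}$ yields $\intp{t}(\ext(\vrho, x, \intp{s}(\vrho))) \approx \intp{t}(\ext(\vrho', x, \intp{s}(\vrho'))) \in \intp{T}$, which is precisely (i).

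For (ii), fix $\kappa$. The only extra ingredient is the commutation $\ext(\vrho, x, a)[\kappa] = \ext(\vrho[\kappa], x, a[\kappa])$, a member of the family of naturality equalities for the domain operations: it unfolds straight from the definitions of $\ext$ and of $\_[\kappa]$ on $\Envs$, since $\ext$ only edits position $0$, where $(\rho[x \mapsto a])[\kappa] = \rho[\kappa][x \mapsto a[\kappa]]$, and leaves the tail untouched. Then $\intp{(\lambda x. t)\ s}(\vrho[\kappa]) = \intp{t}(\ext(\vrho[\kappa], x, \intp{s}(\vrho[\kappa])))$; rewriting $\intp{s}(\vrho[\kappa]) = \intp{s}(\vrho)[\kappa]$ by the naturality equality for $s$ and then applying the commutation turns the argument into $\ext(\vrho, x, \intp{s}(\vrho))[\kappa]$; finally the naturality equality for $t$—available because $\ext(\vrho, x, \intp{s}(\vrho))$ is related to itself in $\intp{\vGamma; (\Gamma, x : S)}$, using reflexivity of the PERs—moves $[\kappa]$ outside, giving $\intp{(\lambda x. t)\ s}(\vrho)[\kappa]$. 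The identical computation on the $\vrho'$-side with $t[\vect I, s/x]$, noting also $\intp{\vect I}(\vrho'[\kappa]) = \vrho'[\kappa]$, yields the remaining naturality equality. I do not anticipate a genuine obstacle: the lemma is a $\beta$-rule and the argument is driven entirely by expanding $\intp{\_}$ and invoking the semantic hypotheses; the only spots needing care are the bookkeeping equality $\ext(\vrho, x, a)[\kappa] = \ext(\vrho[\kappa], x, a[\kappa])$ and keeping the reflexivity instances of $\intp{\vGamma; \Gamma}$ and $\intp{\vGamma; (\Gamma, x : S)}$ in scope when the naturality equality for $t$ is applied.
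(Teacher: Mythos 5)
Your proposal is correct and follows essentially the same route as the paper: evaluate both sides to $\intp{t}(\ext(\vrho, x, \intp{s}(\vrho)))$ and $\intp{t}(\ext(\vrho', x, \intp{s}(\vrho')))$, relate the extended environments using the semantic hypothesis on $s$, and apply the semantic hypothesis on $t$. The only difference is that you spell out the naturality-equality conjunct (via $\ext(\vrho, x, a)[\kappa] = \ext(\vrho[\kappa], x, a[\kappa])$ and the hypotheses' naturality equalities), which the paper leaves implicit in "the target immediately follows."
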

\begin{proof}
  \begin{align*}
    H_1: &\msemtyp[\vGamma;(\Gamma, x : S)]t T
           \tag{by assumption} \\
    H_2: &\msemtyp[\vGamma; \Gamma] s S
           \tag{by assumption} \\
    H_3: &\ \vrho \approx \vrho' \in \intp{\vGamma; \Gamma}
           \tag{by assumption}
  \end{align*}
  We compute both sides:
  \begin{align*}
    \intp{(\lambda x. t)\ s}(\vrho)
    &= \intp{t}(\ext(\vrho, x, \intp{s}(\vrho)))
  \end{align*}
  \begin{align*}
    \intp{t[\vect I, s/x]}(\vrho')
    &= \intp{t}(\intp{\vect I, s/x}(\vrho')) \\
    &= \intp{t}(\ext(\vrho', x, \intp{s}(\vrho')))
  \end{align*}
  We follow the definition and can show
  \begin{align*}
    \ext(\vrho, x, \intp{s}(\vrho)) \approx \ext(\vrho', x, \intp{s}(\vrho')) \in
    \intp{\vGamma; (\Gamma, x : S)}
  \end{align*}
  The target immediately follows.
\end{proof}

\subsubsection{$\eta$ Rules}

\begin{lemma}
  \begin{mathpar}
    \inferrule*
    {\msemtyp{t}{\square T}}
    {\msemtyeq{t}{\boxit{\unbox 1 t}}{\square T}}
  \end{mathpar}
\end{lemma}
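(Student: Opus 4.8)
The plan is to unfold the semantic equivalence judgment in the untyped-domain model and reduce the whole statement to the single fact that membership in $\intp{\square T}$ is witnessed by the maps $a \mapsto \tunbox \cdot (k, a[\kappa])$. Recall that $\msemtyeq{t}{\boxit{\unbox 1 t}}{\square T}$ requires, for every $\vrho \approx \vrho' \in \intp{\vGamma}$, that $\intp{t}(\vrho) \approx \intp{\boxit{\unbox 1 t}}(\vrho') \in \intp{\square T}$, together with the two naturality equalities. From the hypothesis $\msemtyp{t}{\square T}$ I already obtain the naturality equality for $t$ and, by reflexivity of the PER, $\intp{t}(\vrho) \approx \intp{t}(\vrho') \in \intp{\square T}$.

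First I would compute the right-hand side. Unfolding $\intp{\boxit{\cdot}}$ and $\intp{\unbox n \cdot}$ and using that $\ext(\vrho')(0) = (1, \emp)$, so that $\Ltotal{\ext(\vrho')}1 = 1$ and $\trunc{\ext(\vrho')}1 = \vrho'$, I get $\intp{\boxit{\unbox 1 t}}(\vrho') = \tbox(\tunbox \cdot (1, \intp{t}(\vrho')))$; this is defined precisely because $\intp{t}(\vrho') \approx \intp{t}(\vrho') \in \intp{\square T}$ (instantiated at $k = 1$, $\kappa = \vone$) forces $\tunbox \cdot (1, \intp{t}(\vrho'))$ to be defined.

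Next, to establish $\intp{t}(\vrho) \approx \tbox(\tunbox \cdot (1, \intp{t}(\vrho'))) \in \intp{\square T}$, I would fix arbitrary $k$ and $\kappa$ and show $\tunbox \cdot (k, \tbox(\tunbox \cdot (1, \intp{t}(\vrho')))[\kappa]) = \tunbox \cdot (k, \intp{t}(\vrho')[\kappa])$. This is pure \UnMoT bookkeeping: pushing $\kappa$ through $\tbox$ turns it into $\sextt\kappa 1$, so the outer value becomes $\tbox(\tunbox \cdot (1, \intp{t}(\vrho'))[\sextt\kappa 1])$; \Cref{lem:unbox.-mon} together with $\Ltotal{\sextt\kappa 1}1 = 1$ and $\trunc{\sextt\kappa 1}1 = \kappa$ rewrites this to $\tbox(\tunbox \cdot (1, \intp{t}(\vrho')[\kappa]))$; then $\tunbox \cdot (k, \tbox(-)) = -[\sextt\vone k]$ and a second application of \Cref{lem:unbox.-mon} with $\Ltotal{\sextt\vone k}1 = k$, $\trunc{\sextt\vone k}1 = \vone$ and $b[\vone] = b$ yields $\tunbox \cdot (k, \intp{t}(\vrho')[\kappa])$. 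Since $\intp{t}(\vrho) \approx \intp{t}(\vrho') \in \intp{\square T}$ gives $\tunbox \cdot (k, \intp{t}(\vrho)[\kappa]) \approx \tunbox \cdot (k, \intp{t}(\vrho')[\kappa]) \in \intp{T}$ for all $k, \kappa$, substituting the displayed equality yields the claim, and since $k,\kappa$ were arbitrary we get $\intp{t}(\vrho) \approx \intp{\boxit{\unbox 1 t}}(\vrho') \in \intp{\square T}$.

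Finally, the naturality equality for $\boxit{\unbox 1 t}$: using $\intp{\boxit{\unbox 1 t}}(\vrho) = \tbox(\tunbox \cdot (1, \intp{t}(\vrho)))$, the left side $\intp{\boxit{\unbox 1 t}}(\vrho'[\kappa])$ rewrites by the naturality equality for $t$, and the right side $\intp{\boxit{\unbox 1 t}}(\vrho')[\kappa]$ rewrites by the $\tbox$/\Cref{lem:unbox.-mon} computation above; both land on $\tbox(\tunbox \cdot (1, \intp{t}(\vrho')[\kappa]))$. The only real obstacle is keeping the \UnMoT algebra straight --- in particular remembering that $\tbox$ shifts $\kappa$ to $\sextt\kappa 1$ whereas $\tunbox \cdot (k, \tbox(-))$ plants a $\sextt\vone k$ --- but once these are unwound the argument has the same shape as the presheaf-model $\eta$ rule, where $\sextt{\vect \id}1$ was simply the identity.
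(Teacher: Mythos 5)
Your proof is correct and follows essentially the same route as the paper's: evaluate $\boxit{(\unbox 1 t)}$ to $\tbox(\tunbox\cdot(1,\intp{t}(\vrho')))$, fix $k,\kappa$, and use \Cref{lem:unbox.-mon} (plus the $\sextt\kappa 1$ / $\sextt\vone k$ bookkeeping) to collapse the right-hand side to $\tunbox\cdot(k,\intp{t}(\vrho')[\kappa])$, which is then related to the left-hand side by the hypothesis. You are in fact slightly more thorough than the paper, which leaves the naturality equalities for $\boxit{(\unbox 1 t)}$ implicit, whereas you check them explicitly.
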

\begin{proof}
  \begin{align*}
    H_1:&\ \msemtyp{t}{\square T}
          \tag{by assumption} \\
    H_2: &\ \vrho \approx \vrho' \in \intp{\vGamma}
           \tag{by assumption} \\
    H_3: &\ \intp{t}(\vrho) \approx \intp{t}(\vrho') \in \intp{\square T}
          \tag{by $H_1$}
  \end{align*}
  Since
  \begin{align*}
    \intp{\boxit{(\unbox 1 t)}}(\vrho')
    &= \tbox(\tunbox \cdot (1, \intp{t}(\vrho)))
  \end{align*}
  we shall prove
  \begin{align*}
    \intp{t}(\vrho) \approx \tbox(\tunbox \cdot (1, \intp{t}(\vrho'))) \in \intp{\square T}
  \end{align*}
  We proceed by its definition. Assuming $k$ and $\kappa$, we need to prove
  \begin{align*}
    &\ \tunbox \cdot (k, \intp{t}(\vrho)[\kappa]) \approx (\tunbox \cdot (1,
    \intp{t}(\vrho')))[\sextt\kappa 1][\sextt\vone k] \in \intp{T} \\
    \Leftrightarrow &\ \tunbox \cdot (k, \intp{t}(\vrho)[\kappa]) \approx (\tunbox \cdot (k,
    \intp{t}(\vrho')[\kappa])) \in \intp{T} 
    \tag{by \Cref{lem:unbox.-mon}}
  \end{align*}
  but this holds by $H_3$.
\end{proof}

\begin{lemma}
  \begin{mathpar}
    \inferrule*
    {\msemtyp{t}{S \func T}}
    {\msemtyeq t {\lambda x. (t[\wk_x]\ x)}{S \func T}}
  \end{mathpar}
\end{lemma}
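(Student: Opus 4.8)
The plan is to unfold the semantic equivalence $\msemtyeq{t}{\lambda x. (t[\wk_x]\ x)}{S \func T}$ into its three components: the PER conjunct $\intp{t}(\vrho) \approx \intp{\lambda x. (t[\wk_x]\ x)}(\vrho') \in \intp{S \func T}$ for every $\vrho \approx \vrho' \in \intp{\vGamma}$, together with the two naturality equalities, one for $t$ and one for $\lambda x. (t[\wk_x]\ x)$. The naturality equality for $t$ is literally one of the conjuncts of the premise $\msemtyp{t}{S \func T}$, so it is immediate. The naturality equality for $\lambda x. (t[\wk_x]\ x)$ follows directly from the evaluation and \UnMoT rules: $\intp{\lambda x. (t[\wk_x]\ x)}(\vrho) = \Lambda(\vrho, x, t[\wk_x]\ x)$, and applying $\kappa$ to a $\Lambda$ only pushes $\kappa$ into the captured environment, so both $\intp{\lambda x. (t[\wk_x]\ x)}(\vrho[\kappa])$ and $\intp{\lambda x. (t[\wk_x]\ x)}(\vrho)[\kappa]$ equal $\Lambda(\vrho[\kappa], x, t[\wk_x]\ x)$.

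The substance is the PER conjunct. I would fix $\vrho \approx \vrho' \in \intp{\vGamma}$, and then, unfolding the definition of $\intp{S \func T}$, further fix a \UnMoT $\kappa$ and a pair $a \approx a' \in \intp{S}$; the goal becomes $\intp{t}(\vrho)[\kappa] \cdot a \approx \intp{\lambda x. (t[\wk_x]\ x)}(\vrho')[\kappa] \cdot a' \in \intp{T}$. The right-hand side I would evaluate in order: the \UnMoT enters the $\Lambda$, partial application binds $a'$ to $x$ in the environment, the body $t[\wk_x]\ x$ evaluates as an application whose operator interprets $\wk_x$ as a $\drop$ that removes the just-added $x$-binding and whose operand looks up $a'$. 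This collapses the right-hand side to $\intp{t}(\vrho'[\kappa]) \cdot a'$, and then the $\vrho'$-naturality conjunct of the premise rewrites it to $\intp{t}(\vrho')[\kappa] \cdot a'$. Since the left-hand side is already $\intp{t}(\vrho)[\kappa] \cdot a$, the goal is exactly what one obtains by instantiating the premise's PER conjunct $\intp{t}(\vrho) \approx \intp{t}(\vrho') \in \intp{S \func T}$ at $\kappa$ and $a \approx a'$.

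The only delicate point --- and the step I expect to be the main (and only mild) obstacle --- is justifying $\intp{t}(\drop(\ext(\vrho'[\kappa], x, a'), x)) = \intp{t}(\vrho'[\kappa])$. The operation $\drop$ merely removes $x$ from the domain of the topmost local environment, which need not syntactically rebuild $\vrho'[\kappa]$ if that environment already bound $x$; but $t$ is typed in $\vGamma$ and $x$ is the freshly $\lambda$-bound variable, which we may take fresh by $\alpha$-renaming, so $t$ has no free occurrence of $x$ and its interpretation is insensitive to the value at $x$. This is the untyped-domain counterpart of \Cref{lem:insert-weaken}, used for the same $\eta$ rule in the presheaf model, and is discharged by a routine induction on $t$ showing that $\intp{t}(\vrho)$ depends only on the restriction of $\vrho$ to the free variables of $t$. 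With that in hand, the remainder of the argument is pure calculation with the evaluation, partial-application, and \UnMoT operations.
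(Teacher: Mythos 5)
Your proof is correct and follows essentially the same route as the paper: unfold the semantic judgment, evaluate $\Lambda(\vrho', x, t[\wk_x]\ x)[\kappa] \cdot a'$ down to $\intp{t}(\vrho'[\kappa]) \cdot a'$, rewrite with the premise's naturality equality, and conclude from the premise's PER conjunct instantiated at $\kappa$ and $a \approx a'$. Your extra justification of the $\drop/\ext$ cancellation via freshness of $x$ is a step the paper dismisses as ``by definition,'' so it is a harmless (indeed slightly more careful) elaboration rather than a divergence.
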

\begin{proof}
  \begin{align*}
    H_1: &\ \msemtyp{t}{S \func T}
           \tag{by assumption} \\
    H_2: &\ \vrho \approx \vrho' \in \intp{\vGamma}
           \tag{by assumption} \\
    H_3: &\ \intp{t}(\vrho) \approx \intp{t}(\vrho') \in \intp{S \func T}
          \tag{by $H_1$}
  \end{align*}
  We reason about the goal, and we shall prove
  \begin{align*}
    \intp{t}(\vrho) \approx \Lambda(\vrho', x, t[\wk_x]\ x) \in \intp{S \func T}
  \end{align*}
  We proceed by its definition. Assuming $\kappa$, $a \approx b \in \intp{S}$, then
  we have the goal as
  \begin{align*}
    \Leftrightarrow&\ \intp{t}(\vrho)[\kappa] \cdot a \approx
                     \intp{t}(\drop(\ext(\vrho'[\kappa], x, b), x)) \cdot b
      \in \intp{S \func T} \\
    \Leftrightarrow&\ \intp{t}(\vrho)[\kappa] \cdot a \approx \intp{t}(\vrho'[\kappa]) \cdot b
      \in \intp{S \func T}
      \tag{by definition} \\
    \Leftrightarrow &\ \intp{t}(\vrho)[\kappa] \cdot a \approx \intp{t}(\vrho')[\kappa] \cdot b
      \in \intp{S \func T}
      \tag{by naturality equalities}
  \end{align*}
  We can conclude the last by $H_3$. 
\end{proof}

\subsubsection{Substitution Application Rules}

The following rules are immediate by following the definitions:
\begin{lemma}
  \begin{mathpar}
    \inferrule
    {\msemtyp t T}
    {\msemtyeq{t[\vect I]}{t}{T}}

    \inferrule
    {\msemtyp[\vGamma']{\vsigma}{\vGamma''} \\ \msemtyp[\vGamma]{\vdelta}{\vGamma'} \\
      \msemtyp[\vGamma'']{t}{T}}
    {\msemtyeq{t[\vsigma \circ \vdelta]}{t[\vsigma][\vdelta]}{T}}
    
    \inferrule
    {\msemtyp{\vsigma}{\vGamma'; \Gamma} \\ \msemtyp t T}
    {\msemtyeq{x[\vsigma, t/x]}{t}{T}}
    
    \inferrule
    {\msemtyp{\vsigma}{\vGamma'; \Gamma} \\ \msemtyp t T' \\ y : T \in \Gamma}
    {\msemtyeq{y[\vsigma, t/x]}{y[\vsigma]}{T}}

    \inferrule
    {y : T \in \Gamma}
    {\msemtyeq[\vGamma; (\Gamma, x : T')]{y[\wk_x]}{y}{T}}
    
    \inferrule
    {\msemtyp[\vGamma; \Gamma]{\vsigma}{\vDelta;\Delta} \\
      \msemtyp[\vDelta; \Delta, x : S]{t}{T}}
    {\msemtyeq[\vGamma; \Gamma]{\lambda x. t[\vsigma]}{\lambda x. (t[(\vsigma \circ
        \wk_x), x / x])}{S \func T}}
    
    \inferrule
    {\msemtyp{\vsigma}{\vDelta} \\
      \msemtyp[\vDelta]{s}{S \func T} \\
      \msemtyp[\vDelta]{t}{S}}
    {\msemtyeq{s\ t[\vsigma]}{(s[\vsigma])\ (t[\vsigma])}{T}}
\end{mathpar}  
\end{lemma}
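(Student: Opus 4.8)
The plan is to treat all seven rules uniformly: for a fixed pair $\vrho \approx \vrho' \in \intp{\vGamma}$, I unfold the evaluation function $\intp{\_}$ (and its action on substitutions) on the outermost syntactic former on each side, compare the resulting expressions, and then read off the two conjuncts of the semantic equivalence — relatedness in $\intp{T}$ for the pair $(\vrho,\vrho')$, and the naturality equalities $\intp{u}(\vrho[\kappa]) = \intp{u}(\vrho)[\kappa]$ — from the supplied semantic typings together with the fact that $\intp{T}$ and $\intp{\vGamma}$ are PERs.

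For the two categorical rules ($t[\vect I] \approx t$ and $t[\vsigma \circ \vdelta] \approx t[\vsigma][\vdelta]$), the three variable/lookup rules, and the application rule, the two sides coincide on the nose after expansion: e.g. $\intp{t[\vect I]}(\vrho) = \intp{t}(\intp{\vect I}(\vrho)) = \intp{t}(\vrho)$; $\intp{x[\vsigma, t/x]}(\vrho) = \intp{x}(\ext(\intp{\vsigma}(\vrho), x, \intp{t}(\vrho))) = \intp{t}(\vrho)$ because $\ext$ rebinds $x$ in the topmost local environment; the $y \ne x$ and $\wk_x$ lookups go through because $\ext$ and $\drop$ leave the binding of any $y \ne x$ untouched; and $\intp{(s\ t)[\vsigma]}(\vrho)$ and $\intp{(s[\vsigma])\ (t[\vsigma])}(\vrho)$ both equal $\intp{s}(\intp{\vsigma}(\vrho)) \cdot \intp{t}(\intp{\vsigma}(\vrho))$. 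In each case relatedness then follows by composing the semantic typings of the constituent terms and substitutions along $\vrho \approx \vrho'$ (for the application rule invoking the definition of $\intp{S \func T}$ with $\kappa = \vone$), and the naturality equalities carry over verbatim from those of the subterms and substitutions.

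The one rule demanding genuine computation is $\lambda$-substitution, where $\intp{(\lambda x. t)[\vsigma]}(\vrho) = \Lambda(\intp{\vsigma}(\vrho), x, t)$ and $\intp{\lambda x.\,(t[(\vsigma\circ\wk_x), x/x])}(\vrho') = \Lambda(\vrho', x, t[(\vsigma\circ\wk_x),x/x])$ are \emph{not} syntactically equal. Here I would fix $\kappa$ and $a' \approx b' \in \intp{S}$, apply the two closures at $\kappa$ to $a'$ and $b'$, and simplify the right-hand body: $\intp{(\vsigma\circ\wk_x), x/x}(\ext(\vrho'[\kappa], x, b')) = \ext(\intp{\vsigma}(\drop(\ext(\vrho'[\kappa], x, b'), x)), x, b') = \ext(\intp{\vsigma}(\vrho'[\kappa]), x, b')$, where the last step uses $\drop(\ext(\vrho, x, a), x) = \vrho$, valid because $x$ is fresh for $\vrho'$ by the $\alpha$-renaming convention. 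Using the naturality equality $\intp{\vsigma}(\vrho)[\kappa] = \intp{\vsigma}(\vrho[\kappa])$ on the left, both applied bodies become $\intp{t}$ of $\ext(\intp{\vsigma}(\vrho[\kappa]), x, a')$ and of $\ext(\intp{\vsigma}(\vrho'[\kappa]), x, b')$, respectively. These two lie related in $\intp{\vDelta; (\Delta, x : S)}$ because $\vrho[\kappa] \approx \vrho'[\kappa] \in \intp{\vGamma; \Gamma}$ by \Cref{lem:intpvg-resp-mon}, hence $\intp{\vsigma}(\vrho[\kappa]) \approx \intp{\vsigma}(\vrho'[\kappa]) \in \intp{\vDelta; \Delta}$ by the semantic typing of $\vsigma$, together with $a' \approx b' \in \intp{S}$; the semantic typing of $t$ then finishes relatedness, and the naturality equalities for both sides reduce to the definitional monotonicity of $\Lambda$ plus the naturality equality for $\intp{\vsigma}$.

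I expect the only real obstacle to be the environment bookkeeping in this last case — in particular justifying $\drop(\ext(\vrho, x, a), x) = \vrho$ from freshness of $x$ and threading the naturality equalities so that $[\kappa]$ commutes past $\intp{\vsigma}$ — while the remaining six rules are a matter of mechanically unfolding the evaluation function.
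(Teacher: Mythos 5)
Your proposal is correct and takes essentially the same route as the paper, which disposes of all seven rules with ``immediate by definitions''; your unfolding of the evaluator --- including the extra care you give the $\lambda$-substitution case, where the two closures differ syntactically and one must compare them under the PER $\intp{S \func T}$ using the naturality equality for $\intp{\vsigma}$ --- is precisely what that compressed proof amounts to. (One small nitpick: $\drop(\ext(\vrho'[\kappa], x, b'), x)$ need not be literally equal to $\vrho'[\kappa]$ as a function, but it agrees with it on every variable declared in the context, which is all that membership in $\intp{\vGamma; \Gamma}$ inspects, so your step goes through via relatedness rather than equality.)
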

\begin{proof}
  Immediate by definitions.
\end{proof}

\begin{lemma}
  \begin{mathpar}
    \inferrule
    {\msemtyp{\vsigma}{\vDelta} \\ \msemtyp[\vDelta; \cdot]{t}{T}}
    {\msemtyeq{\boxit t[\vsigma]}{\boxit{(t[\sextt\vsigma 1])}}{\square T}}
  \end{mathpar}
\end{lemma}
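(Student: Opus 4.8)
The plan is to unfold the definition of $\msemtyeq{-}{-}{\square T}$ in the untyped‑domain model: fixing $\vrho \approx \vrho' \in \intp{\vGamma}$ and an unMoT $\kappa$, I must establish (i) $\intp{\boxit t[\vsigma]}(\vrho) \approx \intp{\boxit{(t[\sextt\vsigma 1])}}(\vrho') \in \intp{\square T}$, and (ii) the two naturality equalities, one for each side. The first move is a short bookkeeping calculation that collapses both interpretations to the same shape. By the evaluation clauses, $\intp{\boxit t[\vsigma]}(\vrho) = \intp{\boxit t}(\intp{\vsigma}(\vrho)) = \tbox(\intp{t}(\ext(\intp{\vsigma}(\vrho))))$. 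For the other side, $\intp{\boxit{(t[\sextt\vsigma 1])}}(\vrho') = \tbox(\intp{t}(\intp{\sextt\vsigma 1}(\ext(\vrho'))))$, and since $\trunc{\ext(\vrho')}1 = \vrho'$ and $\Ltotal{\ext(\vrho')}1 = 1$ are both immediate from the definitions of truncation and truncation offset on $\Envs$, we get $\intp{\sextt\vsigma 1}(\ext(\vrho')) = \ext(\intp{\vsigma}(\trunc{\ext(\vrho')}1), \Ltotal{\ext(\vrho')}1) = \ext(\intp{\vsigma}(\vrho'))$; hence the right‑hand side equals $\tbox(\intp{t}(\ext(\intp{\vsigma}(\vrho'))))$.

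For (i) I would chain the hypotheses exactly as in the congruence rule for $\boxit$. From $\msemtyp{\vsigma}{\vDelta}$ we obtain $\intp{\vsigma}(\vrho) \approx \intp{\vsigma}(\vrho') \in \intp{\vDelta}$, hence $\ext(\intp{\vsigma}(\vrho)) \approx \ext(\intp{\vsigma}(\vrho')) \in \intp{\vDelta;\cdot}$ by the definition of $\intp{\vDelta;\cdot}$ (the new topmost local environment is $\emp$ and relates to itself vacuously). Applying $\msemtyp[\vDelta;\cdot]{t}{T}$ then gives $a' \approx b' \in \intp{T}$, where $a' := \intp{t}(\ext(\intp{\vsigma}(\vrho)))$ and $b' := \intp{t}(\ext(\intp{\vsigma}(\vrho')))$. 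Finally, to discharge the $\forall k,\kappa$ quantifier in the definition of $\intp{\square T}$, note that $\tunbox\cdot(k,\tbox(a')[\kappa]) = a'[\sextt\kappa 1][\sextt\vone k]$ and push these two modal transformations through $\intp{T}$ using \Cref{lem:intpt-resp-mon} (twice, or once on the composite unMoT); this yields $\tbox(a') \approx \tbox(b') \in \intp{\square T}$, which is precisely (i) by the shape computation above.

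For (ii) I would combine the naturality equalities (for $\vsigma$ and for $t$) with two definitional facts: $\tbox(a)[\kappa] = \tbox(a[\sextt\kappa 1])$ on $D$, and $\ext(\vrho''[\kappa]) = \ext(\vrho'')[\sextt\kappa 1]$, the latter following from $\Ltotal{\sextt\kappa 1}1 = 1$ and $\trunc{(\sextt\kappa 1)}1 = \kappa$. A four‑step rewrite — naturality of $\vsigma$, then commuting $\ext$ past $\kappa$, then naturality of $t$, then the definition of $\tbox(-)[\kappa]$ — gives $\intp{\boxit t[\vsigma]}(\vrho[\kappa]) = \intp{\boxit t[\vsigma]}(\vrho)[\kappa]$, and the analogous chain, reusing the simplification of $\intp{\sextt\vsigma 1}(\ext(-))$ from the first paragraph, settles the right‑hand side.

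I do not expect a genuine obstacle: the argument is structurally the congruence rule for $\boxit$ with one extra substitution layer bolted on, so the only delicate point is the $\Envs$‑offset bookkeeping — getting $\intp{\sextt\vsigma 1}(\ext(\vrho')) = \ext(\intp{\vsigma}(\vrho'))$ and $\ext(\vrho''[\kappa]) = \ext(\vrho'')[\sextt\kappa 1]$ right, which reduces to pinning down $\trunc{\ext(-)}1$, $\Ltotal{\ext(-)}1$, and $\trunc{(\sextt\kappa 1)}1$ precisely.
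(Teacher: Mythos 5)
Your proposal is correct and follows essentially the same route as the paper: compute both sides, simplify $\intp{\sextt\vsigma 1}(\ext(\vrho'))$ to $\ext(\intp{\vsigma}(\vrho'))$, and then conclude from the premises together with \Cref{lem:intpt-resp-mon}. You merely spell out the $\forall k,\kappa$ unfolding of $\intp{\square T}$ and the naturality equalities that the paper's proof leaves implicit.
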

\begin{proof}
  \begin{align*}
    H_1: &\ \msemtyp{\vsigma}{\vDelta}
           \tag{by assumption} \\
    H_2: &\ \msemtyp[\vDelta; \cdot]{t}{T}
           \tag{by assumption} \\
    H_3: &\ \vrho \approx \vrho' \in \intp{\vGamma}
           \tag{by assumption} 
  \end{align*}
  We compute both sides:
  \begin{align*}
    \intp{\boxit t[\vsigma]}(\vrho)
    &= \tbox(\intp{t}(\ext(\intp{\vsigma}(\vrho))))
  \end{align*}
  \begin{align*}
    \intp{\boxit{(t[\sextt\vsigma 1])}}(\vrho')
    &= \tbox(\intp{t}(\intp{\sextt\vsigma 1}(\ext(\vrho')))) \\
    &= \tbox(\intp{t}(\ext(\intp{\vsigma}(\vrho'))))
      \tag{by definition}
  \end{align*}
  The goal follows immediately by the premises and \Cref{lem:intpt-resp-mon}. 
\end{proof}

To prove the congruence for $\tunbox$, we need two helper lemmas:
\begin{lemma}\labeledit{lem:L-intp-vsigma-vrho}
  $\Ltotal{\intp{\vsigma}(\vrho)} n = \Ltotal\vrho{\Ltotal\vsigma n}$
\end{lemma}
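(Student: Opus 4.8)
The plan is to prove the identity by structural induction on the syntax of the unified substitution $\vsigma$, keeping both the index $n$ and the environment $\vrho$ universally quantified so that the induction hypothesis can be instantiated at proper subterms with a possibly different index and a possibly different (e.g.\ truncated) environment. The case $n = 0$ is immediate, since both sides collapse to $0$, so from here on I would assume $n = 1 + n'$. Before the induction it is worth recording two bookkeeping facts about environment extension that are used in almost every case: $\ext(\vrho, x, a)$ only rebinds the topmost local environment and hence leaves every offset intact, so $\trunc{\ext(\vrho,x,a)}1 = \trunc\vrho 1$ and $\Ltotal{\ext(\vrho,x,a)}m = \Ltotal\vrho m$; and $\ext(\vrho', k)(0) = (k, \emp)$ with $\trunc{\ext(\vrho',k)}1 = \vrho'$, whence $\Ltotal{\ext(\vrho',k)}{1 + m} = k + \Ltotal{\vrho'}m$. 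Similarly, $\drop(\vrho, x)$ preserves every offset, so $\Ltotal{\drop(\vrho,x)}m = \Ltotal\vrho m$.

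The cases $\vsigma = \vect I$, $\vsigma = \wk_x$, $\vsigma = (\vdelta, t/x)$, and $\vsigma = \vdelta \circ \vdelta'$ are then routine. For $\vect I$ and $\wk_x$ there is no recursion: unfolding $\intp{\vect I}(\vrho) = \vrho$, resp.\ $\intp{\wk_x}(\vrho) = \drop(\vrho,x)$, and using $\Ltotal{\vect I}{1+n'} = 1+n'$ and $\Ltotal{\wk_x}{1+n'} = 1+n'$ (weakening within a single context leaves offsets intact), both sides become $\Ltotal\vrho{1+n'}$. For $(\vdelta, t/x)$ the first $\ext$ fact gives $\Ltotal{\intp{\vdelta, t/x}(\vrho)}{1+n'} = \Ltotal{\intp{\vdelta}(\vrho)}{1+n'}$, and then the induction hypothesis on $\vdelta$ together with $\Ltotal{\vdelta, t/x}{1+n'} = \Ltotal{\vdelta}{1+n'}$ closes the case. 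For $\vdelta \circ \vdelta'$ one unfolds $\intp{\vdelta\circ\vdelta'}(\vrho) = \intp{\vdelta}(\intp{\vdelta'}(\vrho))$, applies the induction hypothesis first to $\vdelta$ at the environment $\intp{\vdelta'}(\vrho)$ and then to $\vdelta'$ at $\vrho$, and matches the result against the defining clause $\Ltotal{\vdelta\circ\vdelta'}{1+n'} = \Ltotal{\vdelta'}{\Ltotal{\vdelta}{1+n'}}$.

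The only case needing a genuine ingredient is $\vsigma = \sextt{\vdelta}m$. Here $\intp{\sextt{\vdelta}m}(\vrho) = \ext(\intp{\vdelta}(\trunc\vrho m), \Ltotal\vrho m)$, so by the second $\ext$ fact $\Ltotal{\intp{\sextt{\vdelta}m}(\vrho)}{1+n'} = \Ltotal\vrho m + \Ltotal{\intp{\vdelta}(\trunc\vrho m)}{n'}$; applying the induction hypothesis to $\vdelta$ at the truncated environment $\trunc\vrho m$ with index $n'$ rewrites this to $\Ltotal\vrho m + \Ltotal{\trunc\vrho m}{\Ltotal{\vdelta}{n'}}$. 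The syntactic side is $\Ltotal\vrho{\Ltotal{\sextt{\vdelta}m}{1+n'}} = \Ltotal\vrho{m + \Ltotal{\vdelta}{n'}}$, and these two are identified precisely by \Cref{lem:L-add-envs}. I expect this to be the main obstacle — not because it is hard, but because it is where the proof must commit to inducting on $\vsigma$ rather than on $n$ (the hypothesis is invoked at a strictly smaller substitution but with a different environment and index) and where the additivity law on environments is consumed; everything else is pure unfolding of definitions.
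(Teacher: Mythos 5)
Your proof is correct and follows essentially the same route as the paper: case analysis on the structure of $\vsigma$ (with $n$ and $\vrho$ kept general), trivial or purely definitional handling of $\vect I$, $\wk_x$, extension, and composition, and the additivity law \Cref{lem:L-add-envs} consumed exactly in the $\sextt{\vdelta}{m}$ case. The paper's proof is the same argument phrased as induction on $n$ and $\vsigma$ simultaneously, so there is no substantive difference.
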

\begin{proof}
  We proceed by induction on $n$ and $\vsigma$. We analyze the valid cases by
  inverting $\intp{\vsigma}(\vrho)$ and $\Ltotal\vsigma n$.
  \begin{itemize}[label=Case]
  \item $n = 0$, immediate.
  \item $n = 1 + n'$, then
    \begin{itemize}[label=Subcase]
    \item $\vsigma = \vect I$ or $\vsigma = \wk_x$, immediate.
    \item $\vsigma = \vsigma', t/x$, then
      \begin{align*}
        \Ltotal{\intp{\vsigma}(\vrho)} n
        &= \Ltotal{\intp{\vsigma'}(\vrho)}n \\
        &= \Ltotal\vrho{\Ltotal{\vsigma'} n}
          \byIH \\
        &= \Ltotal\vrho{\Ltotal\vsigma n}
      \end{align*}
      
    \item $\vsigma = \sextt{\vsigma'}m$, then
      \begin{align*}
        \Ltotal{\intp{\vsigma}(\vrho)} n
        &= \Ltotal\vrho m + \Ltotal{\intp{\vsigma'}(\trunc \vrho m)}{n'} \\
        &= \Ltotal\vrho m + \Ltotal{\trunc \vrho m}{\Ltotal{\vsigma'}{n'}}
        \byIH \\
        &= \Ltotal{\vrho}{m + \Ltotal{\vsigma'}{n'}}
        \tag{by \Cref{lem:L-add-envs}} \\
        &= \Ltotal\vrho{\Ltotal\vsigma n}
      \end{align*}
      
    \item $\vsigma = \vsigma' \circ \vdelta$, then
      \begin{align*}
        \Ltotal{\intp{\vsigma}(\vrho)} n
        &= \Ltotal{\intp{\vsigma'}(\intp{\vdelta}(\vrho))}n \\
        &= \Ltotal{\intp{\vdelta}(\vrho)}{\Ltotal{\vsigma'} n}
          \byIH \\
        &= \Ltotal\vrho{\Ltotal{\vdelta}{\Ltotal{\vsigma'} n}}
          \byIH \\
        &= \Ltotal\vrho{\Ltotal{\vsigma' \circ \vdelta} n} \\
        &= \Ltotal\vrho{\Ltotal\vsigma n}
      \end{align*}
    \end{itemize}
  \end{itemize}
\end{proof}

\begin{lemma}\labeledit{lem:trunc-intp-vsigma-vrho}
  $\trunc{\intp{\vsigma}(\vrho)}n = \intp{\trunc\vsigma n}(\trunc\vrho{\Ltotal\vsigma n})$
\end{lemma}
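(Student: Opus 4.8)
The plan is to prove this by induction on $n$, with an inner case analysis on the shape of $\vsigma$ (equivalently, by inverting both $\intp{\vsigma}(\vrho)$ and $\trunc\vsigma n$ / $\Ltotal\vsigma n$), exactly mirroring the proof of \Cref{lem:L-intp-vsigma-vrho} that was just carried out. The base case $n = 0$ is immediate: the left-hand side is $\intp{\vsigma}(\vrho)$ by definition of $\trunc{-}{0}$ on $\Envs$, and the right-hand side is $\intp{\trunc\vsigma 0}(\trunc\vrho{\Ltotal\vsigma 0}) = \intp{\vsigma}(\vrho)$ since $\trunc\vsigma 0 = \vsigma$ and $\Ltotal\vsigma 0 = 0$. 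Throughout I assume $n$ is within range (i.e.\ strictly below the length of the domain stack of $\vsigma$) so that all truncations are meaningful.

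For the step case $n = 1 + n'$ I would dispatch the subcases as follows. When $\vsigma = \vect I$ or $\vsigma = \wk_x$ the result is immediate: $\intp{\vect I}(\vrho) = \vrho$ and $\intp{\wk_x}(\vrho) = \drop(\vrho, x)$ both agree with $\vrho$ on every frame except possibly the topmost, so truncating by $1 + n'$ yields $\trunc\vrho{1 + n'}$, which matches $\intp{\trunc{\vsigma}{1+n'}}(\trunc\vrho{\Ltotal\vsigma{1+n'}})$ after unfolding the corresponding clauses of $\trunc{-}{-}$ and $\Ltotal{-}{-}$ on $\vsigma$. When $\vsigma = \vsigma', t/x$, the definition gives $\intp{\vsigma}(\vrho) = \ext(\intp{\vsigma'}(\vrho), x, \intp{t}(\vrho))$; since $\ext$ only changes the topmost frame, $\trunc{\intp{\vsigma}(\vrho)}{1+n'} = \trunc{\intp{\vsigma'}(\vrho)}{1+n'}$, and the IH together with $\trunc{\vsigma}{1+n'} = \trunc{\vsigma'}{1+n'}$ and $\Ltotal{\vsigma}{1+n'} = \Ltotal{\vsigma'}{1+n'}$ closes the case. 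When $\vsigma = \sextt{\vsigma'}{m}$, we have $\intp{\vsigma}(\vrho) = \ext(\intp{\vsigma'}(\trunc\vrho m), \Ltotal\vrho m)$, so truncating by $1 + n'$ discards the inserted frame and leaves $\trunc{\intp{\vsigma'}(\trunc\vrho m)}{n'}$; applying the IH to $\vsigma'$, $\trunc\vrho m$ and $n'$, then using that environment truncation is additive ($\trunc{(\trunc\vrho m)}{k} = \trunc\vrho{m+k}$, immediate from the definition) and \Cref{lem:L-add-envs}, rewrites this as $\intp{\trunc{\vsigma'}{n'}}(\trunc\vrho{m + \Ltotal{\vsigma'}{n'}})$, which is exactly $\intp{\trunc{\vsigma}{1+n'}}(\trunc\vrho{\Ltotal{\vsigma}{1+n'}})$. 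Finally, when $\vsigma = \vsigma' \circ \vdelta$, we have $\intp{\vsigma}(\vrho) = \intp{\vsigma'}(\intp{\vdelta}(\vrho))$; applying the IH first to $\vsigma'$ (with environment $\intp{\vdelta}(\vrho)$) and then to $\vdelta$, and using \Cref{lem:L-intp-vsigma-vrho} to identify the truncation offsets, we obtain $\intp{(\trunc{\vsigma'}{1+n'}) \circ (\trunc{\vdelta}{\Ltotal{\vsigma'}{1+n'}})}(\trunc\vrho{\Ltotal{\vdelta}{\Ltotal{\vsigma'}{1+n'}}})$ since $\intp{-}$ sends $\circ$ to function composition; this is precisely $\intp{\trunc{\vsigma}{1+n'}}(\trunc\vrho{\Ltotal{\vsigma}{1+n'}})$ by the defining clauses of $\trunc{-}{-}$ and $\Ltotal{-}{-}$ on composition.

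I do not expect a genuine obstacle here: the argument is essentially a bookkeeping exercise, and the only care required is in lining up the iterated truncation offsets in the $\sextt{-}{-}$ and $\circ$ cases, where \Cref{lem:L-add-envs} and \Cref{lem:L-intp-vsigma-vrho} are exactly what is needed. The one mild annoyance is that the printed clauses for $\wk_x$ in the definitions of $\Ltotal{-}{-}$ and $\trunc{-}{-}$ must be read for their intended content (dropping one frame), but under any reasonable reading both sides of the equation reduce to the same thing, so the $\wk_x$ case remains trivial.
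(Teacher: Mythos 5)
Your proposal is correct and takes essentially the same route as the paper's proof: induction with case analysis on the shape of $\vsigma$, dispatching the $\sextt{\vsigma'}{m}$ case via additivity of environment truncation together with \Cref{lem:L-add-envs}, and the composition case via two appeals to the claim plus \Cref{lem:L-intp-vsigma-vrho}. One small caveat: the induction should be understood as being on $n$ and the structure of $\vsigma$ jointly (as the paper states), not primarily on $n$, since in the $\vsigma', t/x$ and $\vsigma' \circ \vdelta$ cases you invoke the claim at the same or even a larger truncation index ($\Ltotal{\vsigma'}{n}$) with a structurally smaller substitution.
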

\begin{proof}
  The same as above, we proceed by induction on $n$ and $\vsigma$. We analyze the
  valid cases by inverting $\intp{\vsigma}(\vrho)$ and $\trunc\vsigma n$.
  \begin{itemize}[label=Case]
  \item $n = 0$, immediate.
  \item $n = 1 + n'$, then
    \begin{itemize}[label=Subcase]
    \item $\vsigma = \vect I$ or $\vsigma = \wk_x$, immediate.
    \item $\vsigma = \vsigma', t/x$, then
      \begin{align*}
        \trunc{\intp{\vsigma}(\vrho)}n
        &= \trunc{\intp{\vsigma'}(\vrho)}n \\
        &= \intp{\trunc{\vsigma'}n}(\trunc\vrho{\Ltotal{\vsigma'} n})
          \byIH \\
        &= \intp{\trunc\vsigma n}(\trunc\vrho{\Ltotal\vsigma n})
      \end{align*}
      
    \item $\vsigma = \sextt{\vsigma'}m$, then
      \begin{align*}
        \trunc{\intp{\vsigma}(\vrho)}n
        &= \trunc{\intp{\vsigma'}(\trunc \vrho m)}n \\
        &= \intp{\trunc{\vsigma'}n}(\trunc{\trunc \vrho m}{\Ltotal{\vsigma'} n})
          \byIH \\
        &= \intp{\trunc{\vsigma'}n}(\trunc\vrho{(m + \Ltotal{\vsigma'} n)}) \\
        &= \intp{\trunc\vsigma n}(\trunc\vrho{\Ltotal\vsigma n})
      \end{align*}
    \item $\vsigma = \vsigma' \circ \vdelta$, then
      \begin{align*}
        \trunc{\intp{\vsigma}(\vrho)}n
        &= \trunc{\intp{\vsigma'}(\intp{\vdelta}(\vrho))}n \\
        &= \intp{\trunc{\vsigma'}n}(\trunc{\intp{\vdelta}(\vrho)}{\Ltotal{\vsigma'} n})
          \byIH \\
        &= \intp{\trunc{\vsigma'}n}(\intp{\trunc\vdelta{\Ltotal{\vsigma'} n}}(\trunc\vrho{\Ltotal\vdelta{\Ltotal{\vsigma'} n}}))
          \byIH \\
        &= \intp{\trunc{\vsigma'}n}(\intp{\trunc\vdelta{\Ltotal{\vsigma'} n}}(\trunc\vrho{\Ltotal{\vsigma' \circ \vdelta}n})) \\
        &= \intp{\trunc\vsigma n}(\trunc\vrho{\Ltotal\vsigma n})
      \end{align*}

    \end{itemize}
  \end{itemize}
\end{proof}

\begin{lemma}
  \begin{mathpar}
    \inferrule
    {\msemtyp[\vDelta]{t}{\square T} \\ \msemtyp[\vGamma; \vGamma']{\vsigma}{\vDelta;
        \vDelta'} \\
      |\vDelta'| = n \\ |\vGamma'| = (\Ltotal\vsigma n)}
    {\msemtyeq[\vGamma; \vGamma']{\unbox n t[\vsigma]}{\unbox {\Ltotal\vsigma n}{(t[\trunc\vsigma n])}}{T}}
  \end{mathpar}
\end{lemma}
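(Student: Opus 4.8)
The plan is to unfold the definition of semantic typing equivalence in the untyped domain and reduce each of its three conjuncts --- the $\intp{T}$-relatedness of the two interpretations and the two naturality equalities --- to the helper lemmas \Cref{lem:L-intp-vsigma-vrho,lem:trunc-intp-vsigma-vrho} just proved, together with the Kripke-closure of $\intp{\square T}$ and the basic environment-naturality facts \Cref{lem:envs-L,lem:envs-mon,lem:unbox.-mon}. Fix $\vrho \approx \vrho' \in \intp{\vGamma; \vGamma'}$ and set $m := \Ltotal\vsigma n$, so $|\vDelta'| = n$ and $|\vGamma'| = m$. Evaluating the left-hand term gives $\intp{(\unbox n t)[\vsigma]}(\vrho) = \intp{\unbox n t}(\intp{\vsigma}(\vrho)) = \tunbox \cdot (\Ltotal{\intp{\vsigma}(\vrho)}{n}, \intp{t}(\trunc{\intp{\vsigma}(\vrho)}{n}))$, which by \Cref{lem:L-intp-vsigma-vrho,lem:trunc-intp-vsigma-vrho} rewrites to $\tunbox \cdot (\Ltotal\vrho m, \intp{t}(\intp{\trunc\vsigma n}(\trunc\vrho m)))$; evaluating the right-hand term directly gives $\tunbox \cdot (\Ltotal{\vrho'} m, \intp{t}(\intp{\trunc\vsigma n}(\trunc{\vrho'} m)))$. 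Thus both values have the shape $\tunbox \cdot (k, \intp{t}(-))$, and for the first conjunct what remains is to align the level $k$ and to relate the two inner applications of $\intp{t}$.

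For the level, $\Ltotal\vrho m = \Ltotal{\vrho'} m$ because equivalent environments have equal truncation offsets. For the bodies, I would first establish $\intp{\trunc\vsigma n}(\trunc\vrho m) \approx \intp{\trunc\vsigma n}(\trunc{\vrho'} m) \in \intp{\vDelta}$: from $\msemtyp[\vGamma; \vGamma']{\vsigma}{\vDelta; \vDelta'}$ we obtain $\intp{\vsigma}(\vrho) \approx \intp{\vsigma}(\vrho') \in \intp{\vDelta; \vDelta'}$; truncating by $n = |\vDelta'|$ yields $\trunc{\intp{\vsigma}(\vrho)}{n} \approx \trunc{\intp{\vsigma}(\vrho')}{n} \in \intp{\vDelta}$; and \Cref{lem:trunc-intp-vsigma-vrho} identifies these truncations with the environments above. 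Feeding this into $\msemtyp[\vDelta]{t}{\square T}$ gives $\intp{t}(\intp{\trunc\vsigma n}(\trunc\vrho m)) \approx \intp{t}(\intp{\trunc\vsigma n}(\trunc{\vrho'} m)) \in \intp{\square T}$, and instantiating the definition of $\intp{\square T}$ at the offset $\Ltotal\vrho m$ and the identity \UnMoT $\vone$ (using $a[\vone] = a$) produces precisely the required $\intp{T}$-relation between the two $\tunbox \cdot$ values.

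The two naturality equalities --- $\intp{(\unbox n t)[\vsigma]}(\vrho[\kappa]) = \intp{(\unbox n t)[\vsigma]}(\vrho)[\kappa]$ and its analogue for the right-hand term --- are routine calculations. On the left I would push $[\kappa]$ inward using the naturality equality for $\vsigma$ carried by $\msemtyp[\vGamma; \vGamma']{\vsigma}{\vDelta; \vDelta'}$, rewrite $\Ltotal{\intp{\vsigma}(\vrho)[\kappa]}{n}$ and $\trunc{\intp{\vsigma}(\vrho)[\kappa]}{n}$ via \Cref{lem:envs-L,lem:envs-mon}, apply the naturality equality for $t$ (legitimate since $\trunc{\intp{\vsigma}(\vrho)}{n}$ is $\intp{\vDelta}$-related to $\trunc{\intp{\vsigma}(\vrho')}{n}$), and finally pull $[\kappa]$ back out through $\tunbox \cdot$ using \Cref{lem:unbox.-mon}; the right-hand term is handled the same way after substituting $\trunc\vsigma n$ for $\vsigma$ and $m$ for $n$. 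I do not expect a genuine obstacle here: the two helper lemmas were introduced precisely to make this congruence go through, truncation of environments in the untyped domain is total so no length side conditions intrude, and the only real care needed is the arithmetic bookkeeping of offsets, keeping $|\vDelta'| = n$ and $|\vGamma'| = \Ltotal\vsigma n$ consistent throughout.
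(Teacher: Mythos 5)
Your proposal is correct and follows essentially the same route as the paper: evaluate both sides, rewrite the left-hand side with \Cref{lem:L-intp-vsigma-vrho,lem:trunc-intp-vsigma-vrho}, and conclude from the semantic typing of $t$ at $\square T$ (the paper compresses your instantiation of $\intp{\square T}$ at $\vone$ and the naturality equalities into the phrase ``they agree up to $H_1$''). Your extra detail on the offsets, the relatedness of the truncated environments, and the naturality calculations is just an explicit spelling-out of what the paper leaves implicit.
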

\begin{proof}
  \begin{align*}
    H_1: &\ \msemtyp[\vDelta]{t}{\square T}
           \tag{by assumption} \\
    H_2: &\ \msemtyp[\vGamma; \vGamma']{\vsigma}{\vDelta; \vDelta'}
           \tag{by assumption} \\
    H_3: &\ \vrho \approx \vrho' \in \intp{\vGamma; \vGamma'}
           \tag{by assumption} 
  \end{align*}
  We compute both sides:
  \begin{align*}
    \intp{\unbox n t[\vsigma]}(\vrho)
    &= \tunbox \cdot (\Ltotal{\intp{\vsigma}(\vrho)} n, \intp{t}(\trunc{\intp{\vsigma}(\vrho)}n))
    \\
    &= \tunbox \cdot (\Ltotal\vrho{\Ltotal\vsigma n},
      \intp{t}(\intp{\trunc\vsigma n}(\trunc\vrho{\Ltotal\vsigma n})))
      \tag{by \Cref{lem:L-intp-vsigma-vrho,lem:trunc-intp-vsigma-vrho}}
  \end{align*}
  \begin{align*}
    \intp{\unbox {\Ltotal\vsigma n}{(t[\trunc\vsigma n])}}(\vrho')
    &= \tunbox \cdot (\Ltotal{\vrho'}{\Ltotal\vsigma n}, \intp{t}(\intp{\trunc\vsigma n}(\trunc{\vrho'}{\Ltotal\vsigma n})))
  \end{align*}
  They agree up to $H_1$. 
\end{proof}

\subsubsection{Substitution Equivalence Rules}

The following rules are immediate following the definitions:
\begin{lemma}
  \begin{mathpar}
    \inferrule
    {\msemtyp{\vsigma}{\vDelta}}
    {\msemtyeq{\vsigma \circ \vect I}{\vsigma}{\vDelta}}

    \inferrule
    {\msemtyp{\vsigma}{\vDelta}}
    {\msemtyeq{\vect I \circ \vsigma}{\vsigma}{\vDelta}}

    \inferrule
    {\msemtyp[\vGamma'']{\vsigma''}{\vGamma'''} \\ \msemtyp[\vGamma']{\vsigma'}{\vGamma''} \\ \msemtyp{\vsigma}{\vGamma'}}
    {\msemtyeq{(\vsigma'' \circ \vsigma') \circ \vsigma}{\vsigma'' \circ (\vsigma' \circ \vsigma)}{\vGamma'''}}

    \inferrule
    {\msemtyp[\vGamma']{\vsigma}{\vGamma''; \Gamma} \\ \msemtyp[\vGamma']{t}{T} \\ \msemtyp{\vdelta}{\vGamma'}}
    {\msemtyeq{\vsigma, t/x \circ \vdelta}{(\vsigma \circ \vdelta), t[\vdelta]/x}{\vGamma''; (\Gamma, x : T)}}

    \inferrule
    {\msemtyp[\vGamma']{\vsigma}{\vGamma; \Gamma} \\ \msemtyp[\vGamma']{t}{T}}
    {\msemtyeq[\vGamma']{\wk_x \circ (\vsigma, t/x)}{\vsigma}{\vGamma; \Gamma}}

    \inferrule
    {\msemtyp[\vGamma']{\vsigma}{\vGamma; (\Gamma, x : T)}}
    {\msemtyeq[\vGamma']{\vsigma}{(\wk_x \circ \vsigma), x[\vsigma]/x}{\vGamma; (\Gamma, x : T)}}
  \end{mathpar}  
\end{lemma}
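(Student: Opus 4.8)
The plan is to prove each of the six rules by unfolding the definition of $\msemtyeq{\vsigma}{\vsigma'}{\vDelta}$, which decomposes into a partial-equivalence obligation — $\intp{\vsigma}(\vrho) \approx \intp{\vsigma'}(\vrho') \in \intp{\vDelta}$ whenever $\vrho \approx \vrho' \in \intp{\vGamma}$ — together with the two naturality-equality obligations $\intp{\vsigma}(\vrho[\kappa]) = \intp{\vsigma}(\vrho)[\kappa]$ and $\intp{\vsigma'}(\vrho'[\kappa]) = \intp{\vsigma'}(\vrho')[\kappa]$. Both halves are discharged by straightforward computation with the defining equations of $\intp{-}$ on substitutions.

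First I would handle the partial-equivalence half. For the categorical laws ($\vsigma \circ \vect I \approx \vsigma$, $\vect I \circ \vsigma \approx \vsigma$, associativity) and for the composition law $\vsigma, t/x \circ \vdelta \approx (\vsigma \circ \vdelta), t[\vdelta]/x$, the two sides are in fact \emph{definitionally equal} as maps $\Envs \to \Envs$: e.g.\ $\intp{(\vsigma'' \circ \vsigma') \circ \vsigma}(\vrho) = \intp{\vsigma''}(\intp{\vsigma'}(\intp{\vsigma}(\vrho))) = \intp{\vsigma'' \circ (\vsigma' \circ \vsigma)}(\vrho)$, and $\intp{\vsigma, t/x \circ \vdelta}(\vrho) = \ext(\intp{\vsigma}(\intp{\vdelta}(\vrho)), x, \intp{t}(\intp{\vdelta}(\vrho))) = \intp{(\vsigma \circ \vdelta), t[\vdelta]/x}(\vrho)$. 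So the partial-equivalence obligation collapses to $\intp{\vsigma}(\vrho) \approx \intp{\vsigma}(\vrho') \in \intp{\vDelta}$ (resp.\ the nested version), which is exactly what the premise semantic typings supply, instantiated via reflexivity and combined with the fact that $\intp{\vDelta}$ is a PER.

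For the two extensionality rules $\wk_x \circ (\vsigma, t/x) \approx \vsigma$ and $\vsigma \approx (\wk_x \circ \vsigma), x[\vsigma]/x$ the computation additionally needs two small facts about $\ext$, $\drop$ and lookup on $\Envs$: that $\drop(\ext(\vrho, x, a), x) = \vrho$ when $x$ does not lie in the domain of the topmost local environment of $\vrho$, and that $\ext(\drop(\vrho, x), x, \rho(x)) = \vrho$, with $(\_,\rho) := \vrho(0)$, when it does. The first applies inside $\wk_x \circ (\vsigma, t/x)$ because $x$ is the freshly chosen bound variable of $\vsigma, t/x$ and $\intp{\vsigma}(\vrho)$ tracks the codomain context $\vGamma;\Gamma$, in which $x$ does not occur; the second applies in the $\eta$ rule because there the codomain is $\vGamma;(\Gamma, x{:}T)$, so $\intp{\vsigma}(\vrho)$ does bind $x$. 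After these rewrites both equivalences again reduce to $\intp{\vsigma}(\vrho) \approx \intp{\vsigma}(\vrho')$ from the premise. The naturality half is uniform across all six rules: the premises already carry the naturality equalities for their component substitutions (and, for the $t/x$ rule, for the term $t$), and the compound substitutions inherit naturality by chaining those equalities through the composition, each time using the premise semantic typings once more to place the intermediate environment in the correct PER so that naturality may be reapplied; the naturality equalities of $\ext$, $\drop$ and lookup with respect to $[\kappa]$ close the last two rules.

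The one place demanding genuine care is the occurrence/freshness bookkeeping for $x$ in the domain environment in the two extensionality rules — that is, keeping the syntactic context of a substitution and the domain environment it evaluates to in sync so that the $\ext$/$\drop$ cancellation facts above apply exactly. Everything else is forced by the definitions, which is why the lemma is stated as immediate.
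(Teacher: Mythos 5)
Your overall structure is exactly the intended one (the paper itself offers no proof here, treating these rules as immediate from the definitions of $\intp{\vect I}$, $\intp{\vsigma, t/x}$, $\intp{\wk_x}$ and $\intp{\vsigma \circ \vdelta}$, plus the two naturality conjuncts of the semantic judgment), and your handling of the identity, associativity and $(\vsigma, t/x)\circ\vdelta$ cases, as well as the naturality bookkeeping, is fine.

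The one step that would not survive scrutiny as written is the treatment of the two extensionality rules by \emph{literal} equality of environments. You justify $\drop(\ext(\intp{\vsigma}(\vrho), x, \intp{t}(\vrho)), x) = \intp{\vsigma}(\vrho)$ by saying $\intp{\vsigma}(\vrho)$ ``tracks the codomain context $\vGamma;\Gamma$, in which $x$ does not occur''; but in this untyped model no such invariant is maintained. The environment $\vrho$ is only constrained up to the PER $\intp{\vGamma'}$, which says nothing about variables outside $\vGamma'$, so already for $\vsigma = \vect I$ the topmost local environment of $\intp{\vsigma}(\vrho) = \vrho$ may perfectly well be defined at $x$, and then the left-hand side (which has $x$ removed) is not literally equal to the right-hand side. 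The repair is short and is what the ``immediate'' verdict relies on: the goal is not equality of environments but relatedness in $\intp{\vGamma;\Gamma}$ (resp.\ $\intp{\vGamma;(\Gamma, x:T)}$), and by definition these PERs inspect only the offsets and the values $\rho(y)$ for $y$ declared in the context, so a discrepancy at the fresh $x$ (or at any undeclared variable) is invisible; for the $\eta$ rule the entry at $x$ on both sides is the same $\rho(x)$ anyway. With that substitution of ``equal up to what the PER observes'' for ``equal'', each of the six cases closes directly from the premises, as you intend.
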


We outline the proof for the following rules:
\begin{lemma}
  \begin{mathpar}
    \inferrule
    {\msemtyp{\vsigma}{\vGamma'} \\ \msemtyp[\vGamma'']{\vdelta}{\vGamma;\vDelta} \\
      |\vDelta| = n}
    {\msemtyeq[\vGamma'']{(\sextt\vsigma n) \circ \vdelta}{\sextt{(\vsigma \circ (\trunc\vdelta n))}{\Ltotal \vdelta n} }{\vGamma';\cdot}}
  \end{mathpar}
\end{lemma}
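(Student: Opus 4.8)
The plan is to unfold the definition of semantic equivalence of unified substitutions and reason pointwise. Fixing $\vrho \approx \vrho' \in \intp{\vGamma''}$, it suffices to show $\intp{(\sextt\vsigma n) \circ \vdelta}(\vrho) \approx \intp{\sextt{(\vsigma \circ \trunc\vdelta n)}{\Ltotal\vdelta n}}(\vrho') \in \intp{\vGamma'; \cdot}$, with the two naturality equalities handled at the end. First I would compute the left-hand side: unfolding $\intp{\cdot}$ on composition and then on $\sextt{\cdot}{\cdot}$ gives $\intp{(\sextt\vsigma n) \circ \vdelta}(\vrho) = \ext(\intp{\vsigma}(\trunc{\intp{\vdelta}(\vrho)}{n}),\, \Ltotal{\intp{\vdelta}(\vrho)}{n})$. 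Applying \Cref{lem:L-intp-vsigma-vrho} and \Cref{lem:trunc-intp-vsigma-vrho} with $\vdelta$ in place of $\vsigma$, this rewrites to $\ext(\intp{\vsigma}(\intp{\trunc\vdelta n}(\trunc\vrho{\Ltotal\vdelta n})),\, \Ltotal\vrho{\Ltotal\vdelta n})$, and re-folding $\intp{\vsigma}(\intp{\trunc\vdelta n}(\cdot))$ through the definition of $\intp{\cdot}$ on composition yields $\ext(\intp{\vsigma \circ \trunc\vdelta n}(\trunc\vrho{\Ltotal\vdelta n}),\, \Ltotal\vrho{\Ltotal\vdelta n})$. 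A direct computation on the right-hand side gives $\intp{\sextt{(\vsigma \circ \trunc\vdelta n)}{\Ltotal\vdelta n}}(\vrho') = \ext(\intp{\vsigma \circ \trunc\vdelta n}(\trunc{\vrho'}{\Ltotal\vdelta n}),\, \Ltotal{\vrho'}{\Ltotal\vdelta n})$.

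Since the topmost local environment of both sides is $\emp$ and the topmost context of $\vGamma'; \cdot$ is empty, membership in $\intp{\vGamma'; \cdot}$ reduces to two checks: the topmost offsets coincide, and the $1$-truncations lie in $\intp{\vGamma'}$. For the first, well-definedness of truncation offset (using that $\vdelta$ is well-typed, so $\Ltotal\vdelta n < |\vGamma''|$) together with the fact that $\Ltotal{\_}{\_}$ respects the PER $\intp{\vGamma''}$ gives $\Ltotal\vrho{\Ltotal\vdelta n} = \Ltotal{\vrho'}{\Ltotal\vdelta n}$. For the second, I would apply $\msemtyp[\vGamma'']{\vdelta}{\vGamma; \vDelta}$ to $\vrho \approx \vrho'$ to get $\intp{\vdelta}(\vrho) \approx \intp{\vdelta}(\vrho') \in \intp{\vGamma; \vDelta}$, then truncate this PER membership by $n = |\vDelta|$ (the side condition $n < |\vGamma; \vDelta|$ being trivial, and $\trunc{(\vGamma;\vDelta)}{n} = \vGamma$), and rewrite the truncations via \Cref{lem:trunc-intp-vsigma-vrho} to obtain $\intp{\trunc\vdelta n}(\trunc\vrho{\Ltotal\vdelta n}) \approx \intp{\trunc\vdelta n}(\trunc{\vrho'}{\Ltotal\vdelta n}) \in \intp{\vGamma}$. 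Feeding this related pair of environments into $\msemtyp{\vsigma}{\vGamma'}$ and re-folding the composition produces $\intp{\vsigma \circ \trunc\vdelta n}(\trunc\vrho{\Ltotal\vdelta n}) \approx \intp{\vsigma \circ \trunc\vdelta n}(\trunc{\vrho'}{\Ltotal\vdelta n}) \in \intp{\vGamma'}$, which is exactly the second check.

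Finally, the naturality equalities $\intp{(\sextt\vsigma n) \circ \vdelta}(\vrho[\kappa]) = \intp{(\sextt\vsigma n) \circ \vdelta}(\vrho)[\kappa]$ and its analogue for the right-hand term follow from the naturality equalities for $\intp{\cdot}$ applied componentwise to $\vsigma$ and $\vdelta$ (unfolded, from \Cref{lem:envs-L}, \Cref{lem:envs-mon}, and naturality of $\intp{\vsigma}$ and $\intp{\vdelta}$). I expect the main obstacle to be purely administrative: keeping the nested truncation offsets $\Ltotal\vrho{\Ltotal\vdelta n}$ straight and discharging the side conditions on lengths that license the truncation lemmas; once both sides are reduced to the common normal form $\ext(\intp{\vsigma \circ \trunc\vdelta n}(\cdot),\, \Ltotal{\cdot}{\Ltotal\vdelta n})$ there is no conceptual difficulty left.
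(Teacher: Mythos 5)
Your proposal is correct and matches the paper's own proof: the paper likewise evaluates both sides, rewrites the left-hand side using \Cref{lem:L-intp-vsigma-vrho,lem:trunc-intp-vsigma-vrho}, and concludes from the premises by feeding the $\vdelta$-related, truncated environments into the semantic judgment for $\vsigma$. Your version merely spells out the membership checks for $\intp{\vGamma';\cdot}$ and the naturality equalities that the paper leaves to "follows by the premises."
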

\begin{proof}
  \begin{align*}
    H_1: &\ \msemtyp{\vsigma}{\vGamma'}
           \tag{by assumption} \\
    H_2: &\ \msemtyp[\vGamma'']{\vdelta}{\vGamma;\vDelta}
           \tag{by assumption} \\
    H_3: &\ \vrho \approx \vrho' \in \intp{\vGamma''}
           \tag{by assumption}
  \end{align*}
  We compute both sides:
  \begin{align*}
    \intp{(\sextt\vsigma n) \circ \vdelta}(\vrho)
    &= \intp{\vsigma}(\ext(\trunc{\intp{\vdelta}(\vrho)}n), \Ltotal{\intp{\vdelta}(\vrho)}n) \\
    &= \intp{\vsigma}(\ext(\intp{\trunc\vdelta n}(\trunc\vrho{\Ltotal \vdelta n})), \Ltotal\vrho {\Ltotal \vdelta n})
      \tag{by \Cref{lem:L-intp-vsigma-vrho,lem:trunc-intp-vsigma-vrho}}
  \end{align*}
  \begin{align*}
    \intp{\sextt{(\vsigma \circ (\trunc\vdelta n))}{\Ltotal \vdelta n}}(\vrho')
    &= \intp{\vsigma}(\ext(\intp{\trunc\vdelta n}(\trunc{\vrho'}{\Ltotal \vdelta n})), \Ltotal{\vrho'}{\Ltotal \vdelta n})
  \end{align*}
  The goal follows by the premises.
\end{proof}

\begin{lemma}
  \begin{mathpar}
    \inferrule
    {\msemtyp{\vsigma}{\vDelta; \cdot} \\ |\vDelta| > 0}
    {\msemtyeq{\vsigma}{\sextt{\trunc \vsigma 1}{\Ltotal\vsigma 1}}{\vDelta; \cdot}}
  \end{mathpar}
\end{lemma}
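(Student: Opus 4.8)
The plan is to unfold the semantic equivalence $\msemtyeq{\vsigma}{\sextt{\trunc\vsigma 1}{\Ltotal\vsigma 1}}{\vDelta;\cdot}$ into its two conjuncts — the PER clause and the naturality clause — and to dispatch each by reducing $\intp{\sextt{\trunc\vsigma 1}{\Ltotal\vsigma 1}}(\vrho)$ to the shape of $\intp\vsigma(\vrho)$ via \Cref{lem:L-intp-vsigma-vrho,lem:trunc-intp-vsigma-vrho}. Write $n := \Ltotal\vsigma 1$; the premise reads $\msemtyp{\vsigma}{\vDelta;\cdot}$, i.e.\ $\vsigma : \vGamma \To \vDelta;\cdot$. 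Since $|\vDelta| > 0$ we have $1 < |\vDelta;\cdot|$, so $\Ltotal\vsigma 1$ and $\trunc\vsigma 1$ are defined, and the well-definedness lemma for truncation offset gives $n < |\vGamma|$; hence $\sextt{\trunc\vsigma 1}{n} : \vGamma \To \vDelta;\cdot$ is a well-formed unified substitution and the statement typechecks. The caveat to carry throughout is that the topmost context of the codomain is the empty context $\cdot$, so the relation $\intp{\vDelta;\cdot}$ places no constraint on the local environment sitting at position $0$; we therefore expect only equivalence up to $\intp{\vDelta;\cdot}$, not literal equality of environments, since $\intp{\sextt{\trunc\vsigma 1}{n}}(\vrho)$ carries $\emp$ at position $0$ whereas $\intp\vsigma(\vrho)$ need not.

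First I would record the basic reduction. By definition of the interpretation of substitutions, $\intp{\sextt{\trunc\vsigma 1}{n}}(\vrho) = \ext(\intp{\trunc\vsigma 1}(\trunc\vrho n), \Ltotal\vrho n)$, while \Cref{lem:L-intp-vsigma-vrho} gives $\Ltotal{\intp\vsigma(\vrho)}1 = \Ltotal\vrho n$ and \Cref{lem:trunc-intp-vsigma-vrho} gives $\trunc{\intp\vsigma(\vrho)}1 = \intp{\trunc\vsigma 1}(\trunc\vrho n)$; since $\trunc{(\ext(-,m))}1$ recovers its argument and $\Ltotal{(\ext(-,m))}1 = m$, the two environments $\intp\vsigma(\vrho)$ and $\intp{\sextt{\trunc\vsigma 1}{n}}(\vrho)$ have the same offset at position $0$ and the same $1$-truncation, differing only in the position-$0$ local slot. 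For the PER clause, take $\vrho \approx \vrho' \in \intp\vGamma$. The premise supplies $\intp\vsigma(\vrho) \approx \intp\vsigma(\vrho') \in \intp{\vDelta;\cdot}$; unfolding $\intp{\vDelta;\cdot}$ and rewriting the $1$-truncations with \Cref{lem:trunc-intp-vsigma-vrho}, this says $\Ltotal\vrho n = \Ltotal{\vrho'}n$ (also immediate from $\vrho \approx \vrho' \in \intp\vGamma$) and $\intp{\trunc\vsigma 1}(\trunc\vrho n) \approx \intp{\trunc\vsigma 1}(\trunc{\vrho'}n) \in \intp\vDelta$. Feeding these together with the reduction above into the definition of $\intp{\vDelta;\cdot}$ — whose empty topmost context imposes nothing on the position-$0$ local environment — yields $\intp\vsigma(\vrho) \approx \intp{\sextt{\trunc\vsigma 1}{n}}(\vrho') \in \intp{\vDelta;\cdot}$.

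For the naturality clause, the equation for $\vsigma$ is part of the premise $\msemtyp{\vsigma}{\vDelta;\cdot}$. For $\sextt{\trunc\vsigma 1}{n}$, I would compute $\intp{\sextt{\trunc\vsigma 1}{n}}(\vrho[\kappa]) = \ext(\intp{\trunc\vsigma 1}(\trunc{\vrho[\kappa]}n), \Ltotal{\vrho[\kappa]}n)$ and rewrite $\trunc{\vrho[\kappa]}n = \trunc\vrho n[\trunc\kappa{\Ltotal\vrho n}]$ via \Cref{lem:envs-mon} and $\Ltotal{\vrho[\kappa]}n = \Ltotal\kappa{\Ltotal\vrho n}$ via \Cref{lem:envs-L}; on the other side, unfolding the action of $\kappa$ on $\Envs$ (with $\emp[\kappa] = \emp$) yields $\ext(-,m)[\kappa] = \ext((-)[\trunc\kappa m], \Ltotal\kappa m)$, so $\intp{\sextt{\trunc\vsigma 1}{n}}(\vrho)[\kappa] = \ext(\intp{\trunc\vsigma 1}(\trunc\vrho n)[\trunc\kappa{\Ltotal\vrho n}], \Ltotal\kappa{\Ltotal\vrho n})$. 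The two sides agree once we know $\intp{\trunc\vsigma 1}(\trunc\vrho n[\kappa']) = \intp{\trunc\vsigma 1}(\trunc\vrho n)[\kappa']$ for $\kappa' := \trunc\kappa{\Ltotal\vrho n}$, a naturality equality for $\trunc\vsigma 1$; that in turn follows by applying $\trunc{(-)}1$ to the naturality equality for $\vsigma$ and simplifying with \Cref{lem:trunc-intp-vsigma-vrho,lem:L-intp-vsigma-vrho,lem:envs-mon}.

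The argument is entirely index bookkeeping and has no genuine obstacle; the one subtlety demanding attention is the position-$0$ mismatch flagged above, which is why the whole equivalence must be phrased against the relation $\intp{\vDelta;\cdot}$ rather than as on-the-nose equality of evaluation environments — and, relatedly, why one must recover naturality of $\trunc\vsigma 1$ from that of $\vsigma$ rather than assume it directly.
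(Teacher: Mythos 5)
Your proof is correct and follows essentially the same route as the paper: evaluate $\intp{\sextt{\trunc\vsigma 1}{\Ltotal\vsigma 1}}(\vrho')$, rewrite it to $\ext(\trunc{\intp{\vsigma}(\vrho')}1, \Ltotal{\intp{\vsigma}(\vrho')}1)$ via \Cref{lem:L-intp-vsigma-vrho,lem:trunc-intp-vsigma-vrho}, and conclude from the premise because the empty topmost context of $\vDelta;\cdot$ places no constraint on the position-$0$ local environment. You additionally spell out the naturality clause (deriving it for $\sextt{\trunc\vsigma 1}{\Ltotal\vsigma 1}$ from that of $\vsigma$ plus \Cref{lem:envs-mon,lem:envs-L}), which the paper's proof leaves implicit.
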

\begin{proof}
  \begin{align*}
    H_1: &\ \msemtyp{\vsigma}{\vDelta; \cdot}
         \tag{assumption} \\
    H_2: &\ \vrho \approx \vrho' \in \intp{\vGamma}
           \tag{by assumption}
  \end{align*}
  We evaluate the right hand side:
  \begin{align*}
    \intp{\sextt{\trunc \vsigma 1}{\Ltotal\vsigma 1}}(\vrho')
    &= \ext(\intp{\trunc \vsigma 1}(\trunc{\vrho'}{\Ltotal\vsigma 1}), \Ltotal\vrho{\Ltotal\vsigma 1}) \\
    &= \ext(\trunc{\intp{\vsigma}(\vrho')}1, \Ltotal{\intp{\vsigma}(\vrho')} 1)
      \tag{by \Cref{lem:L-intp-vsigma-vrho,lem:trunc-intp-vsigma-vrho}}
  \end{align*}
  We know
  \begin{align*}
    &\trunc{\intp{\vsigma}(\vrho)}1 \approx \trunc{\intp{\vsigma}(\vrho')}1
    \in \intp{\vDelta} \\
    &\Ltotal{\intp{\vsigma}(\vrho)}1 = \Ltotal{\intp{\vsigma}(\vrho')}1
  \end{align*}
  Since the topmost context of $\vDelta; \cdot$ is $\cdot$, we have finished the check
  and the goal follows immediately.
\end{proof}

\subsubsection{Fundamental Theorem}

We have proved all semantic rules. We can thus conclude the fundamental theorem.
\begin{theorem}
  (fundamental)
  \begin{itemize}
  \item If $\mtyping t T$, then $\msemtyp t T$.
  \item If $\mtyping{\vsigma}{\vDelta}$, then $\msemtyp{\vsigma}{\vDelta}$.
  \item If $\mtyequiv{t}{t'}T$, then $\msemtyeq{t}{t'}T$.
  \item If $\mtyequiv{\vsigma}{\vsigma'}{\vDelta}$, then $\msemtyeq{\vsigma}{\vsigma'}{\vDelta}$.
  \end{itemize}
\end{theorem}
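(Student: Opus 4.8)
The plan is to prove all four clauses simultaneously by mutual induction on the derivations of $\mtyping t T$, $\mtyping \vsigma \vDelta$, $\mtyequiv t {t'} T$, and $\mtyequiv \vsigma {\vsigma'} \vDelta$. The entire point of the preceding development is that every inference rule of these judgments has already been mirrored by a semantic lemma of the shape ``if the premises hold semantically, then so does the conclusion'': the variable rule, the introduction and elimination rules for $\square$ and $\func$, the explicit-substitution typing rules, the congruence rules, the $\beta$ and $\eta$ rules, the substitution-application rules, and the substitution-equivalence rules. So in each case of the induction I would take the semantic judgments produced by the induction hypotheses on the sub-derivations and feed them verbatim into the matching lemma, reading off exactly the required conclusion.

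Three structural points require attention. First, the induction must genuinely be simultaneous across the four clauses: several equivalence rules---the two $\beta$ rules, the $\eta$ rules, and the substitution-application rules---carry \emph{typing} premises rather than equivalence premises, so the ``$\mtyping \Rightarrow \msemtyp$'' clause must be available as an induction hypothesis when discharging the ``$\mtyequiv \Rightarrow \msemtyeq$'' clause (and dually for substitutions), while $\msemtyp t T := \msemtyeq t t T$ lets the typing clause be recovered from the equivalence one whenever that is more convenient. Second, the semantic judgments here bundle the \emph{naturality equalities} $\intp{t}(\vrho[\kappa]) = \intp{t}(\vrho)[\kappa]$ together with its analogue for substitutions; in each case this conjunct is discharged either directly by the Naturality Equalities lemma or, in the compound cases, by combining the induction hypotheses' naturality conjuncts with the truncation, truncation-offset, and composition identities of \Cref{sec:st:ut-mtrans} (for instance \Cref{lem:envs-mon,lem:envs-L,lem:unbox.-mon}). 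Third, since evaluation into $D$ relies on the partial operations for application and unboxing, one must check at each step that the value named in the conclusion is actually defined; but this is precisely what membership in the PER interpretations $\intp{T}$ and $\intp{\vGamma}$ guarantees---and realizability (\Cref{thm:real}) ensures that reflected neutrals inhabit every $\intp{T}$---so definedness rides along with the semantic statements and needs no separate bookkeeping.

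The main obstacle is therefore not any individual case but the orchestration: laying out the four clauses and the naturality conjunct so that the simultaneous induction closes, and checking that the recursion is well-founded (say, on derivation height, with an equivalence derivation permitted to mention the typing derivations occurring in its premises). Once that skeleton is fixed, every case is an immediate appeal to the corresponding semantic lemma established above, and no further computation is needed; the completeness theorem for the untyped-domain NbE algorithm then follows by instantiating the equivalence clause at the identity environment $\uparrow^{\vGamma}$ and reading back, exactly as in \Cref{sec:presheaf}.
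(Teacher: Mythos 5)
Your proposal is correct and matches the paper's approach: the paper proves each semantic typing and equivalence rule as a separate lemma in the preceding subsections and then concludes the fundamental theorem by (mutual) induction on the four judgments, exactly the orchestration you describe. The naturality conjuncts and definedness of the partial operations are indeed carried inside the semantic judgments as you note, so no additional argument beyond appealing to the already-established semantic rules is needed.
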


We also can show the initial environment is reflexive in $\intp{\vGamma}$.
\begin{lemma}
  $\uparrow^{\vGamma} \approx \uparrow^{\vGamma} \in \intp{\vGamma}$
\end{lemma}
\begin{proof}
  Immediate by \Cref{lem:varbot} and realizability. 
\end{proof}

\begin{theorem} (completeness)
  If $\mtyequiv{t}{t'}{T}$, then $\nbe^T_{\vGamma}(t) = \nbe^T_{\vGamma}(t')$. 
\end{theorem}
\begin{proof}
  Given $\mtyequiv{t}{t'}T$, we know $\msemtyeq{t}{t'}T$. Combining realizability, we
  can conclude the goal.
\end{proof}

\subsection{Restricted Weakening}\labeledit{sec:st:rweaken}

In \Cref{sec:usubst}, we motivated unified substitutions by first considering the
compositional closure of modal transformations, which turns out has a critical position
in the soundness proof. Luckily, To establish the soundness proof,
we only need to capture the compositional closure of a
special class of modal transformation and ensure that they form a category. The
morphisms in the resulting category are called \emph{restricted weakening}, as they
are semantically a subcategory of unified weakening defined in \Cref{sec:uweaken}.

\begin{definition}
  A substitution $\vsigma$ is a restricted weakening, if it satisfies the following 
  inductively generated predicates:
  \begin{enumerate}
  \item if $\mtyequiv[\vGamma]{\vsigma}{\vect I}{\vGamma}$;
  \item if $\mtyping[\vDelta]{\vdelta}{\vGamma; (\Gamma, x : T)}$ is restricted and
    $\mtyequiv[\vDelta]{\vsigma}{\wk_x \circ \vdelta}{\vGamma; \Gamma}$
    for some $m = |\Delta|$;
  \item if $\mtyping[\vDelta]{\vdelta}{\vGamma}$ is restricted and
    $\mtyequiv[\vDelta; \vDelta']{\vsigma}{\sextt\vdelta n}{\vGamma; \cdot}$ where $n = |\vDelta'|$.
  \end{enumerate}
\end{definition}
This definition essentially characterizes a restricted $\vsigma \approx \wk_{x_1}
\circ \cdots \circ \wk_{x_m} \circ
\sextt{\sextt{\vect I}{k_1}; \cdots}{k_n}$ for some $m$, $n$ and $k_i$. We write $\vsigma : \vGamma \To_r
\vDelta$ to denote $\vsigma$ being restricted. 

We consider some basic properties of restricted weakenings It is easy to see that
$\vect I$ is restricted, as well as $\sextt{\vect I}n$, the modal transformation in the $\beta$
rule for $\square$. If $\vsigma$ is restricted, then $\trunc\vsigma n$ is
also restricted. Moreover, this definition is closed under composition:
\begin{lemma}
  If $\vsigma' : \vGamma' \To_r \vGamma''$ and $\vsigma : \vGamma \To_r \vGamma'$,
  then
  $\vsigma' \circ \vsigma : \vGamma \To_r \vGamma''$.
\end{lemma}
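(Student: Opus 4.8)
The plan is to induct on the evidence that $\vsigma'$ is a restricted weakening, i.e.\ on which of the three generating clauses of the predicate produced $\vsigma' : \vGamma' \To_r \vGamma''$, while keeping $\vsigma : \vGamma \To_r \vGamma'$ available at full generality in the induction hypothesis (the recursive appeals will in fact land on truncations of $\vsigma$, so genericity is needed). A preliminary observation that makes everything go through smoothly is that the predicate ``$\vsigma$ is restricted'' is closed under $\approx$ by construction: each clause only constrains $\vsigma$ up to equivalence, so if $\vsigma$ is restricted and $\vsigma \approx \vsigma''$ then $\vsigma''$ is restricted. This lets me rewrite $\vsigma' \circ \vsigma$ freely using the equational theory of unified substitutions from \Cref{sec:st:subst-calc} before invoking a clause.

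If $\vsigma'$ comes from clause 1, so $\vsigma' \approx \vect I$, then $\vsigma' \circ \vsigma \approx \vect I \circ \vsigma \approx \vsigma$ by the left identity law, and $\vsigma$ is restricted by hypothesis. If $\vsigma'$ comes from clause 2, then there is a restricted $\vdelta'$ targeting $\vGamma''$ with its topmost context extended by some $x : T$, and $\vsigma' \approx \wk_x \circ \vdelta'$; by associativity $\vsigma' \circ \vsigma \approx (\wk_x \circ \vdelta') \circ \vsigma \approx \wk_x \circ (\vdelta' \circ \vsigma)$, the induction hypothesis on $\vdelta'$ makes $\vdelta' \circ \vsigma$ restricted, and clause 2 then makes $\wk_x \circ (\vdelta' \circ \vsigma)$ restricted.

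The interesting case, and the only place where I expect any real work, is clause 3. Here $\vGamma'' = \vGamma_0; \cdot$, the stack $\vGamma'$ has the form $\vDelta; \vDelta'$ with $n := |\vDelta'|$, there is a restricted $\vdelta' : \vDelta \To_r \vGamma_0$, and $\vsigma' \approx \sextt{\vdelta'} n$. The plan is to rewrite with the composition-with-$\sextt{\_}{\_}$ rule: $(\sextt{\vdelta'} n) \circ \vsigma \approx \sextt{(\vdelta' \circ (\trunc\vsigma n))}{\Ltotal\vsigma n}$. Since $n = |\vDelta'| < |\vGamma'|$ (context stacks being nonempty), $\Ltotal\vsigma n$ is well-defined and $< |\vGamma|$, so $\vGamma$ splits as $\trunc\vGamma{\Ltotal\vsigma n}$ followed by exactly $\Ltotal\vsigma n$ contexts, and $\trunc\vsigma n : \trunc\vGamma{\Ltotal\vsigma n} \To \vDelta$. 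The basic fact already recorded above the lemma --- that a truncation of a restricted weakening is restricted --- gives that $\trunc\vsigma n$ is restricted, so by the induction hypothesis $\vdelta' \circ (\trunc\vsigma n)$ is restricted; clause 3 applied with offset $\Ltotal\vsigma n$ (which matches the length of the removed tail of $\vGamma$) then yields that $\sextt{(\vdelta' \circ (\trunc\vsigma n))}{\Ltotal\vsigma n}$, hence $\vsigma' \circ \vsigma$, is restricted. The main obstacle is purely bookkeeping: checking that every context-stack decomposition lines up exactly as the clauses demand, which is routine given the typing of $\circ$, $\trunc{\_}{\_}$, and $\Ltotal{\_}{\_}$ established earlier.
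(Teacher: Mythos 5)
Your proof is correct and matches the paper's argument: the paper also proceeds by induction on the derivation of $\vsigma' : \vGamma' \To_r \vGamma''$ and discharges each clause with the appropriate equivalence rules (identity, associativity, the $\sextt{\vdelta}{n}$-composition law), together with the previously noted fact that truncations of restricted weakenings are restricted. Your explicit generalization over $\vsigma$ in the induction hypothesis and the use of closure of restrictedness under $\approx$ are exactly the details the paper's one-line proof leaves implicit.
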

\begin{proof}
  Do induction on the restricted predicate on $\vsigma' : \vGamma' \To_r \vGamma''$ and
  then apply the appropriate equivalence rules. 
\end{proof}
This concludes that restricted weakenings do form a (sub)category. The restrictedness
is important in the soundness proof to simplify the proof. 

\subsection{Gluing Model}

In order to conclude the soundness of the NbE algorithm, we construct a gluing model,
which intuitively relates a syntactic term with its corresponding semantic value. From
this relation, the soundness theorem can be extracted. Moreover, we need the gluing
relation between related syntactic terms and domain values to be invariant under
restricted weakening as in the PER model. We first need to extract an untyped modal
transformation from a unified substitution:
\begin{align*}
  \mt{\_} &: \vGamma \To \vDelta \to \N \to \N \\
  \mt{\vect I} &:= \vone \\
  \mt{\vsigma, t/x} &:= \mt{\vsigma} \\
  \mt{\wk_x} &:= \vone \\
  \mt{\sextt\vsigma n} &:= \sextt{\mt{\vsigma}} n \\
  \mt{\vsigma \circ \vdelta} &:= \mt{\vsigma} \circ \mt{\vdelta}
\end{align*}
This operation simply pulls out the offsets from the input
unified substitution. 
Moreover, for convenience, instead of writing $a[\mt{\vsigma}]$, we write $a[\vsigma]$ as
we know $\vsigma$ must be extracted first to be applied to a domain value. The
notation is justified by the following two lemmas. 

We prove the following properties:
\begin{lemma}\labeledit{lem:L-extract}
  $\Ltotal{\mt{\vsigma}} n = \Ltotal\vsigma n$
\end{lemma}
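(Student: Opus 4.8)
The plan is to prove this by induction on the structure of $\vsigma$, with the statement universally quantified over $n$, so as to mirror the recursive clauses of the two definitions of $\Ltotal{\_}{\_}$ (the one on syntactic unified substitutions and the one on untyped modal transformations in $\N \to \N$). Whenever $n = 0$ both sides reduce to $0$ by definition, so the real content is in the case $n = 1 + n'$, which I would dispatch by the shape of $\vsigma$.

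For $\vsigma = \vect I$ and $\vsigma = \wk_x$ we have $\mt{\vsigma} = \vone$, so by the previously established properties of $\vone$ (in particular $\Ltotal{\vone}{m} = m$) the left-hand side is $1 + n'$, which agrees with $\Ltotal{\vsigma}{1+n'}$ by definition. For $\vsigma = \vsigma', t/x$ the clause $\mt{\vsigma', t/x} = \mt{\vsigma'}$ collapses the goal directly onto the induction hypothesis for $\vsigma'$. For $\vsigma = \sextt{\vsigma'}{m}$, I would unfold $\mt{\sextt{\vsigma'}{m}} = \sextt{\mt{\vsigma'}}{m}$ and read off from the definitions of the $\N \to \N$ operations that $(\sextt{\mt{\vsigma'}}{m})(0) = m$ and $\trunc{(\sextt{\mt{\vsigma'}}{m})}{1} = \mt{\vsigma'}$, so the left-hand side computes to $m + \Ltotal{\mt{\vsigma'}}{n'}$; the induction hypothesis for $\vsigma'$ turns this into $m + \Ltotal{\vsigma'}{n'}$, which is exactly $\Ltotal{\sextt{\vsigma'}{m}}{1+n'}$.

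The only case that needs an auxiliary result is $\vsigma = \vsigma' \circ \vdelta$. Using $\mt{\vsigma' \circ \vdelta} = \mt{\vsigma'} \circ \mt{\vdelta}$ together with the earlier distributivity lemma for untyped modal transformations, $\Ltotal{\kappa \circ \kappa'}{n} = \Ltotal{\kappa'}{\Ltotal{\kappa}{n}}$, the left-hand side becomes $\Ltotal{\mt{\vdelta}}{\Ltotal{\mt{\vsigma'}}{1+n'}}$; applying the induction hypothesis for $\vsigma'$ at $1+n'$ and then the induction hypothesis for $\vdelta$ at the offset $\Ltotal{\vsigma'}{1+n'}$ rewrites this to $\Ltotal{\vdelta}{\Ltotal{\vsigma'}{1+n'}} = \Ltotal{\vsigma' \circ \vdelta}{1+n'}$. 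Since this step instantiates the hypothesis for $\vdelta$ at an argument other than $1+n'$, the induction must be set up structurally on $\vsigma$ (equivalently, lexicographically on $n$ and $\vsigma$ in the style of the earlier proofs), not as a bare induction on $n$.

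I do not anticipate a genuine obstacle: the lemma simply states that $\mt{\_}$ is a homomorphism for the $L$-operation, and the argument is a routine unfolding. The one point deserving care is keeping the two $L$-operations distinct — the syntactic one branches on the shape of $\vsigma$ whereas the $\N \to \N$ one has a single uniform recursive clause — and, in the composition case, threading the induction hypothesis through the changed second argument via the composition/truncation-offset distributivity lemma.
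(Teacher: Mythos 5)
Your proposal is correct and follows essentially the same route as the paper's proof: case analysis on $n$ (with $n=0$ immediate) and on the shape of $\vsigma$, dispatching $\vect I$, $\wk_x$, and $\vsigma', t/x$ directly, computing the $\sextt{\vsigma'}{m}$ case via $m + \Ltotal{\mt{\vsigma'}}{n'}$, and handling composition via $\mt{\vsigma'\circ\vdelta} = \mt{\vsigma'}\circ\mt{\vdelta}$ together with $\Ltotal{\kappa\circ\kappa'}{n} = \Ltotal{\kappa'}{\Ltotal{\kappa}{n}}$ and two uses of the induction hypothesis. Your remark that the induction must be structural on $\vsigma$ with $n$ generalized (because the hypothesis for $\vdelta$ is used at a different offset) is exactly the content of the paper's ``induction on $\vsigma$ and $n$.''
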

\begin{proof}
  We do induction on $\vsigma$ and $n$. 
  \begin{itemize}[label=Case]
  \item $n = 0$, immediate.
  \item $n = 1 + n'$, then
    \begin{itemize}[label=Subcase]
    \item $\vsigma = \vect I$, $\vsigma = \vsigma', t/x$ or $\vsigma = \wk_x$ are immediate.
    \item $\vsigma = \sextt{\vsigma'}m$, then
      \begin{align*}
        \Ltotal{\mt{\vsigma}} n
        &= m + \Ltotal{\mt{\vsigma'}}{n'} \\
        &= m + \Ltotal{\vsigma'}{n'}
          \byIH \\
        &= \Ltotal\vsigma n
      \end{align*}
    \item $\vsigma = \vsigma' \circ \vdelta$, then
      \begin{align*}
        \Ltotal{\mt{\vsigma}} n
        &= \Ltotal{\mt{\vdelta}}{\Ltotal{\mt{\vsigma'}} n} \\
        &= \Ltotal{\mt{\vdelta}}{\Ltotal{\vsigma'} n}
          \byIH \\
        &= \Ltotal\vdelta{\Ltotal{\vsigma'} n}
          \byIH \\
        &= \Ltotal\vsigma n
      \end{align*}
    \end{itemize}
  \end{itemize}  
\end{proof}
\begin{lemma}\labeledit{lem:trunc-extract}
  $\mt{\trunc\vsigma n} = \trunc{\mt{\vsigma}}n$
\end{lemma}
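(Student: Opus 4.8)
The plan is to prove \Cref{lem:trunc-extract} by induction on $n$ with a case split on the shape of $\vsigma$, exactly mirroring the proof of \Cref{lem:L-extract}. The only facts I will draw on from earlier in the excerpt are: the basic properties of the identity modal transformation $\vone$ (in particular $\trunc\vone n = \vone$); the $\N \to \N$-level distributivity lemma $\trunc{(\kappa \circ \kappa')}n = \trunc\kappa n \circ \trunc{\kappa'}{\Ltotal\kappa n}$; and \Cref{lem:L-extract} itself, $\Ltotal{\mt{\vsigma}}n = \Ltotal\vsigma n$, which is needed to reconcile the truncation offsets appearing in the composition case.

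In the base case $n = 0$ both sides are just $\mt{\vsigma}$, since $\trunc\vsigma 0 = \vsigma$ and $(\trunc\kappa 0)(m) = \kappa(m)$. For $n = 1 + n'$ I would dispatch the five forms of $\vsigma$ as follows. When $\vsigma = \vect I$ or $\vsigma = \wk_x$, the syntactic truncation collapses to $\vect I$, so the left side is $\mt{\vect I} = \vone$, while the right side is $\trunc{\vone}{1+n'} = \vone$ by the $\vone$ properties. When $\vsigma = \vsigma', t/x$, both $\mt{\_}$ and $\trunc{\_}{1+n'}$ simply step past the binding, so the goal reduces to the induction hypothesis on $\vsigma'$. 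When $\vsigma = \sextt{\vsigma'}m$, the syntactic truncation gives $\trunc{\vsigma'}{n'}$, so the left side is $\mt{\trunc{\vsigma'}{n'}} = \trunc{\mt{\vsigma'}}{n'}$ by the induction hypothesis, and unfolding the definitions of $\sextt{\_}{\_}$ and $\trunc{\_}{\_}$ on $\N \to \N$ shows that $\trunc{(\sextt{\mt{\vsigma'}}m)}{1+n'}$ is likewise $\trunc{\mt{\vsigma'}}{n'}$.

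The only case that requires the auxiliary lemmas is $\vsigma = \vsigma' \circ \vdelta$. There, $\mt{\trunc{(\vsigma' \circ \vdelta)}{1+n'}} = \mt{(\trunc{\vsigma'}{1+n'}) \circ (\trunc{\vdelta}{\Ltotal{\vsigma'}{1+n'}})} = \mt{\trunc{\vsigma'}{1+n'}} \circ \mt{\trunc{\vdelta}{\Ltotal{\vsigma'}{1+n'}}}$, and two applications of the induction hypothesis rewrite this to $\trunc{\mt{\vsigma'}}{1+n'} \circ \trunc{\mt{\vdelta}}{\Ltotal{\vsigma'}{1+n'}}$. On the other side, $\trunc{(\mt{\vsigma'} \circ \mt{\vdelta})}{1+n'}$ equals $\trunc{\mt{\vsigma'}}{1+n'} \circ \trunc{\mt{\vdelta}}{\Ltotal{\mt{\vsigma'}}{1+n'}}$ by the $\N \to \N$ distributivity lemma, and \Cref{lem:L-extract} identifies $\Ltotal{\mt{\vsigma'}}{1+n'}$ with $\Ltotal{\vsigma'}{1+n'}$, so the two sides coincide. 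I expect no conceptual obstacle here; the ``hard part'' is merely keeping the truncation offsets aligned in this composition case, and the argument is entirely parallel to that of \Cref{lem:L-extract}.
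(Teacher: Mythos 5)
Your proof is correct and follows essentially the same route as the paper: induction on $n$ and the structure of $\vsigma$, with the only nontrivial case being composition, where both you and the paper appeal to the $\N\to\N$-level distributivity of truncation over composition and to \Cref{lem:L-extract} to align the truncation offsets. The extra details you spell out (the $\vone$ cases and the unfolding in the $\sextt{\vsigma'}m$ case) are exactly what the paper dismisses as immediate.
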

\begin{proof}
  We do induction on $\vsigma$ and $n$. 
  \begin{itemize}[label=Case]
  \item $n = 0$, immediate.
  \item $n = 1 + n'$, then
    \begin{itemize}[label=Subcase]
    \item $\vsigma = \vect I$, $\vsigma = \vsigma', t/x$ or
      $\vsigma = \wk_x$ are immediate.
    \item $\vsigma = \sextt{\vsigma'}m$, then
      \begin{align*}
        \mt{\trunc\vsigma n}
        &= \mt{\trunc{\vsigma'}{n'}} \\
        &= \trunc{\mt{\vsigma'}}{n'}
          \byIH \\
        &= \trunc{\sextt{\mt{\vsigma'}} m}{(1 + n')} \\
        &= \trunc{\mt{\vsigma}}n
      \end{align*}
    \item $\vsigma = \vsigma' \circ \vdelta$, then
      \begin{align*}
        \mt{\trunc\vsigma n}
        &= \mt{\trunc{\vsigma'}n} \circ \mt{\trunc\vdelta{\Ltotal{\vsigma'} n}} \\
        &= \trunc{\mt{\vsigma'}}n \circ \trunc{\mt{\vdelta}}{\Ltotal{\vsigma'} n}
          \byIH \\
        &= \trunc{\mt{\vsigma'}}n \circ \trunc{\mt{\vdelta}}{\Ltotal{\mt{\vsigma'}} n}
        \tag{by \Cref{lem:L-extract}} \\
        &= \trunc{\mt{\vsigma}}n
      \end{align*}      
    \end{itemize}
  \end{itemize}
\end{proof}

The gluing model is constructed to realize a typed candidate space:
\begin{align*}
  \underline{T}_{\vGamma} &:= \{ (t, \uparrow^T(c)) \sep \mtyping t T \tand
                            \forall \vsigma : \vDelta \To_r \vGamma.\
                           \mtyequiv[\vDelta]{t[\vsigma]}{\Rne_{\alpha}(c[\vsigma])}T \} \\
  \overline{T}_{\vGamma} &:= \{ (t, a) \sep \mtyping t T \tand \forall
                           \vsigma : \vDelta \To_r \vGamma.\
                           \mtyequiv[\vDelta]{t[\vsigma]}{\Rnf_{\alpha}(\downarrow^T
                           (a[\vsigma]))}T \}
\end{align*}
where $\alpha$ is uniquely determined by the context stack it is working in. We write
$t \sim a \in P$ for $(t, a) \in P$ when $P$ is a gluing relation.  We can thus define
the gluing model:
\begin{align*}
  \glu{B}_{\vGamma} &:= \underline{B}_{\vGamma} \\
  \glu{\square T}_{\vGamma} &:= \{ (t, a) \sep \mtyping{t}{\square T} \tand
                              \forall \vDelta', \vsigma : \vDelta \To_r
                              \vGamma. \unbox{|\vDelta'|}{(t[\vsigma])} \sim \tunbox
                              \cdot (|\vDelta'|, a[\vsigma]) \in \glu{T}_{\vDelta;
                              \vDelta'} \}
  \\
  \glu{S \func T}_{\vGamma} &:= \{ (t, a) \sep \mtyping{t}{S \func T} \tand
                              \forall \vsigma : \vDelta \To_r \vGamma, s \sim b
                              \in \glu{S}_{\vDelta}. t[\vsigma]\ s \sim a[\vsigma]
                              \cdot b \in \glu{T}_{\vDelta} \}
\end{align*}
Notice that the $S \func T$ does not need a naturality equality and the
soundness proof does not rely on it. 

\begin{lemma}
  If $t \sim a \in \glu{T}_{\vGamma}$, then $\mtyping t T$. 
\end{lemma}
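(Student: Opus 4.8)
The plan is to proceed by a simple case analysis on the structure of the type $T$, observing that in every clause of the definition of the gluing model $\glu{T}_{\vGamma}$ the well-typedness conjunct $\mtyping t T$ appears explicitly, so the conclusion can just be read off. No recursive appeal is needed, because the syntactic typing data is stored at the top level of each clause rather than reconstructed from the components; hence this is really a matter of unfolding one layer of definition in each case.

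Concretely, for $T = B$ we have $\glu{B}_{\vGamma} = \underline{B}_{\vGamma}$, and membership in the typed candidate space $\underline{B}_{\vGamma}$ forces $a$ to be of the form $\uparrow^B(c)$ and carries the conjunct $\mtyping t B$; extracting this conjunct gives the result. For $T = \square T'$, the defining set $\glu{\square T'}_{\vGamma}$ lists $\mtyping t {\square T'}$ as its first conjunct, so we simply project it out. Similarly, for $T = S \func T'$ the set $\glu{S \func T'}_{\vGamma}$ has $\mtyping t {S \func T'}$ as its first conjunct.

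There is no genuine obstacle here; the only mild subtlety is that the base case routes through $\underline{B}$, so one must unfold that auxiliary definition one step to expose the $\mtyping t B$ conjunct. The lemma is stated because it is invoked pervasively in the soundness development to recover syntactic well-typedness from gluing membership.
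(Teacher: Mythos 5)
Your proof is correct and matches the paper's, which simply says ``Immediate by induction on $T$'': in each clause of $\glu{T}_{\vGamma}$ (via $\underline{B}_{\vGamma}$ in the base case) the conjunct $\mtyping t T$ is stored explicitly and can be projected out. Your observation that no inductive hypothesis is actually needed is a harmless refinement of the same argument.
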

\begin{proof}
  Immediate by induction on $T$. 
\end{proof}

\begin{lemma}\labeledit{lem:glu-varbot}
  $x \sim \uparrow^T(\alpha^{-1}(x)) \in
  \underline{T}_{\vGamma; \Gamma, x : T}$
\end{lemma}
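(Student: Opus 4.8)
The plan is to unfold the definition of $\underline{T}_{\vGamma; (\Gamma, x : T)}$ and discharge its two conjuncts. Well-typedness, $\mtyping[\vGamma; (\Gamma, x : T)] x T$, is immediate from the variable rule since $x : T \in (\Gamma, x : T)$. For the remaining conjunct I must show, for every restricted $\vsigma : \vDelta \To_r \vGamma; (\Gamma, x : T)$, that $\mtyequiv[\vDelta]{x[\vsigma]}{\Rne_{\alpha}((\alpha^{-1}(x))[\vsigma])}{T}$, where on the right $\alpha$ denotes the naming function of $\vDelta$ while $\alpha^{-1}(x)$ is computed against $\vGamma; (\Gamma, x : T)$.

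First I would simplify the right-hand side. Since $\alpha^{-1}(x)$ is a domain variable $z \in \N$, and domain variables are invariant under untyped modal transformations (we have $z[\kappa] = z$), applying $\vsigma$ to it changes nothing: $(\alpha^{-1}(x))[\vsigma] = z$, so the right-hand side is $\Rne_{\alpha}(z) = \alpha(z)$. The whole statement therefore reduces to two facts about how a restricted weakening into a stack whose topmost context is nonempty acts on a topmost variable.

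The core is an auxiliary lemma, proved by induction on the restricted predicate: if $\vsigma : \vDelta \To_r \vGamma'$ is restricted and $x : T$ lies in the topmost context of $\vGamma'$, then $x : T$ lies at the same position in the topmost context of $\vDelta$ (so the naming functions of $\vGamma'$ and $\vDelta$ agree on $x$, giving $\alpha(z) = x$), and moreover $\mtyequiv[\vDelta]{x[\vsigma]}{x}{T}$. The $\vsigma \approx \vect I$ case is immediate from $x[\vect I] \approx x$. In the $\vsigma \approx \wk_y \circ \vdelta$ case, $y$ is fresh, hence $y \neq x$, so $x[\wk_y] \approx x$ by the weakening equivalence of \Cref{sec:st:subst-calc}, and then $x[\vsigma] \approx (x[\wk_y])[\vdelta] \approx x[\vdelta] \approx x$ by congruence and the inductive hypothesis on the restricted $\vdelta$; position preservation is inherited the same way. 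The $\vsigma \approx \sextt{\vdelta}{n}$ case is vacuous, since such a $\vsigma$ has codomain with topmost context $\cdot$, which cannot contain $x$. Instantiating the auxiliary lemma with $\vGamma' = \vGamma; (\Gamma, x : T)$ yields both $x[\vsigma] \approx x$ and $\Rne_{\alpha}((\alpha^{-1}(x))[\vsigma]) = \Rne_{\alpha}(z) = \alpha(z) = x$, so the two sides are equivalent in $\vDelta$, as required.

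The main obstacle I anticipate is the bookkeeping around the naming functions: ensuring that the domain variable produced from $x$ relative to $\vGamma; (\Gamma, x : T)$ is read back to exactly $x$ relative to the possibly larger stack $\vDelta$. This is precisely the position-preservation half of the auxiliary lemma, and it is mildly delicate because "restricted" is defined only up to $\approx$, so one reasons with the canonical $\wk$-composition normal form of $\vsigma$ rather than with its literal syntax. Beyond that, no new ideas are needed: everything rests on the substitution-equivalence rules of \Cref{sec:st:subst-calc} and the invariance of domain variables under untyped modal transformations.
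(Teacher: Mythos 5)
Your proposal is correct and follows essentially the same route as the paper: induction on the restricted-weakening predicate, observing that the $\sextt{\vdelta}{n}$ case is impossible because its codomain's topmost context is empty, so $\vsigma$ is only a local weakening and hence $\mtyequiv[\vDelta]{x[\vsigma]}{\alpha(\alpha^{-1}(x))}{T}$. Your explicit auxiliary lemma (position preservation plus $x[\vsigma]\approx x$) and the remark that domain variables are invariant under \UnMoTs merely spell out details the paper leaves implicit.
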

\begin{proof}
  Assuming $\vsigma : \vDelta \To_r \vGamma; \Gamma, x : T$, we do induction on
  $\vsigma : \vDelta \To_r \vGamma; \Gamma, x : T$. There can only be two cases:
  $\vsigma \approx \vect I$ or $\vsigma \approx \wk_x \circ \vdelta$ for some restricted
  $\vdelta$. Thus $\vsigma$ can only be local weakening. From this observation, we
  conclude $\mtyequiv[\vDelta]{x[\vsigma]}{\alpha(\alpha^{-1}(x))}T$ holds.
\end{proof}

Similar to the PER model, we also need a realizability theorem.
\begin{theorem}[Realizability]
   $\underline{T}_{\vGamma} \subseteq \glu{T}_{\vGamma} \subseteq
  \overline{T}_{\vGamma}$
\end{theorem}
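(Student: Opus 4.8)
The plan is to prove the two inclusions simultaneously by induction on the type $T$, following the template of \Cref{lem:ne-glue,lem:glue-nf} from the presheaf development. First I would record the preliminary observation that every $\glu{T}_{\vGamma}$ is closed under restricted weakening: if $t \sim a \in \glu{T}_{\vGamma}$ and $\vdelta : \vDelta \To_r \vGamma$, then $t[\vdelta] \sim a[\vdelta] \in \glu{T}_{\vDelta}$. Because the gluing relations at every type are defined by quantifying over restricted weakenings into the index stack, this reduces to precomposition, using that restricted weakenings compose (\Cref{sec:st:rweaken}) and that extraction $\mt{\_}$ commutes with composition, truncation and truncation offset (\Cref{lem:L-extract,lem:trunc-extract}), with \Cref{lem:unbox.-mon} supplying the domain-level bookkeeping in the $\square$ clause. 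The base case $T = B$ is then immediate: $\glu{B}_{\vGamma} = \underline{B}_{\vGamma}$ gives the left inclusion, and $\underline{B}_{\vGamma} \subseteq \overline{B}_{\vGamma}$ holds since $\Rnf_{\alpha}(\downarrow^B(\uparrow^B(c))) = \Rne_{\alpha}(c)$.

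For $T = \square T'$ I would argue as follows. For $\underline{\square T'}_{\vGamma} \subseteq \glu{\square T'}_{\vGamma}$, given $t \sim \uparrow^{\square T'}(c)$, a context $\vDelta'$ and a restricted $\vdelta : \vDelta \To_r \vGamma$, the goal is $\unbox{|\vDelta'|}{(t[\vdelta])} \sim \tunbox \cdot (|\vDelta'|, \uparrow^{\square T'}(c)[\vdelta]) \in \glu{T'}_{\vDelta; \vDelta'}$. Since $\tunbox \cdot (|\vDelta'|, \uparrow^{\square T'}(c[\vdelta])) = \uparrow^{T'}(\tunbox(|\vDelta'|, c[\vdelta]))$ and $\unbox{|\vDelta'|}{(t[\vdelta])}$ is neutral, it suffices to verify the readback equation defining $\underline{T'}_{\vDelta; \vDelta'}$ and then invoke the IH $\underline{T'} \subseteq \glu{T'}$; under any further restricted weakening this equation follows from the $\Rne$-clause for $\tunbox$ together with the hypothesis on $c$. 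For $\glu{\square T'}_{\vGamma} \subseteq \overline{\square T'}_{\vGamma}$, given $t \sim a$ and a restricted $\vdelta$, I would expand $\Rnf_{\alpha}(\downarrow^{\square T'}(a[\vdelta])) = \boxit{\Rnf_{\alpha}(\downarrow^{T'}(\tunbox \cdot (1, a[\vdelta])))}$; instantiating the hypothesis with $\vDelta'$ the singleton stack holding the empty context gives $\unbox{1}{(t[\vdelta])} \sim \tunbox \cdot (1, a[\vdelta]) \in \glu{T'}_{\vDelta; \cdot}$, so the IH $\glu{T'}_{\vDelta;\cdot} \subseteq \overline{T'}_{\vDelta;\cdot}$ yields $\mtyequiv[\vDelta; \cdot]{\unbox{1}{(t[\vdelta])}}{\Rnf_{\alpha}(\downarrow^{T'}(\tunbox \cdot (1, a[\vdelta])))}{T'}$, and congruence under $\tbox$ with the $\eta$ rule for $\square$ and transitivity finishes the case. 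Here I rely on $\sextt{\vect I}{n}$ and its compositions with restricted weakenings being restricted.

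For $T = S \func T'$ the two directions parallel the presheaf case. For $\underline{S \func T'}_{\vGamma} \subseteq \glu{S \func T'}_{\vGamma}$, given $t \sim \uparrow^{S \func T'}(c)$, a restricted $\vdelta : \vDelta \To_r \vGamma$ and $s \sim b \in \glu{S}_{\vDelta}$, I compute $\uparrow^{S \func T'}(c)[\vdelta] \cdot b = \uparrow^{T'}(c[\vdelta]\ \downarrow^S(b))$; the argument is neutral, and using the IH $\glu{S} \subseteq \overline{S}$ to read back $b$, together with the $\Rne$-clause for application, its readback along any restricted weakening agrees with $t[\vdelta]\ s$, so the IH $\underline{T'} \subseteq \glu{T'}$ gives $t[\vdelta]\ s \sim \uparrow^{S \func T'}(c)[\vdelta] \cdot b \in \glu{T'}_{\vDelta}$. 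For $\glu{S \func T'}_{\vGamma} \subseteq \overline{S \func T'}_{\vGamma}$, write $\vGamma = \vGamma'; \Gamma$, fix a restricted $\vdelta : \vDelta \To_r \vGamma$, and expand $\Rnf_{\alpha}(\downarrow^{S \func T'}(a[\vdelta])) = \lambda x.\, \Rnf_{\alpha[z \mapsto x]}(\downarrow^{T'}(a[\vdelta] \cdot \uparrow^S(z)))$ with $z = \tnext(\alpha)$. By \Cref{lem:glu-varbot} and the IH, $x \sim \uparrow^S(\alpha^{-1}(x)) \in \glu{S}$ over the stack extended by the fresh variable; since $\wk_x$ is restricted, instantiating the function hypothesis with $\vdelta$ composed with $\wk_x$ gives $t[\vdelta][\wk_x]\ x \sim a[\vdelta] \cdot \uparrow^S(z) \in \glu{T'}$, and the IH $\glu{T'} \subseteq \overline{T'}$ followed by congruence under $\lambda x.\, (\_)$ and the function $\eta$ rule (which already carries $\wk_x$ explicitly) completes the case.

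The hard part will be the modal bookkeeping rather than any single computation: one must keep track of which weakenings arising in the proof — $\sextt{\vect I}{n}$, $\wk_x$, their $\sextt{\_}{n}$-extensions, compositions, and truncations — stay restricted, make the readback equations line up with \Cref{lem:L-extract,lem:trunc-extract,lem:unbox.-mon}, and thread the variable renaming $\alpha$ and its inverse consistently through fresh-variable extensions. This is heavier than in the presheaf model, where naturality is absorbed into the natural-transformation structure, but the overall case analysis mirrors \Cref{lem:ne-glue,lem:glue-nf}.
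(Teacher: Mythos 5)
Your proposal is correct and follows essentially the same route as the paper's proof: simultaneous induction on $T$, with the $\square$ case handled by the singleton empty-context instantiation plus the $\eta$ rule, and the function case handled via \Cref{lem:glu-varbot}, $\wk_x$, congruence and $\eta$, with restricted weakenings' closure under composition and the extraction lemmas doing the bookkeeping. The only deviation is your preliminary monotonicity observation, which the paper proves separately afterwards (\Cref{lem:glue-mon'}) and does not actually need here, since the universal quantification over restricted weakenings built into the gluing definitions already suffices.
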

\begin{proof}
  We proceed by induction on $T$.
  \begin{itemize}[label=Case]
  \item $T = B$, $\underline{B}_{\vGamma} \subseteq \glu{B}_{\vGamma}$ holds
    immediately and $\glu{B}_{\vGamma} \subseteq \overline{B}_{\vGamma}$ is also easy
    because $\Rnf$ of $B$ is just $\Rne$.
    
  \item $T = \square T'$, then
    \begin{itemize}[label=Subcase]
    \item To show $\underline{\square T'}_{\vGamma} \subseteq \glu{\square
        T'}_{\vGamma}$,
      \begin{align*}
        H_1: &\ \forall \vsigma : \vDelta \To_r \vGamma.\
               \mtyequiv[\vDelta]{t[\vsigma]}{\Rne_{\alpha}(c[\vsigma])}{\square T'}
               \tag{by assumption} \\
             &\ \text{assume }\vDelta', \vsigma : \vDelta \To_r \vGamma, k :=
               |\vDelta'|, \vdelta : \vDelta'' \To_r \vDelta; \vDelta'\\
             &\ \trunc\vdelta k : \trunc{\vDelta''}{\Ltotal\vdelta k} \To_r \vDelta \\
             &\ \vsigma \circ \trunc\vdelta k : \trunc{\vDelta''}{\Ltotal\vdelta k} \To_r \vGamma \\
             &\ \mtyequiv[\trunc{\vDelta''}{\Ltotal\vdelta k}]{t[\vsigma \circ (\trunc\vdelta k)]}{\Rne_{\alpha}(c[\vsigma \circ (\trunc\vdelta k)])}{\square
               T'}
               \tag{by $H_1$ and \Cref{lem:trunc-extract}} \\
             &\ \mtyequiv[\vDelta'']{\unbox{\Ltotal\vdelta k}(t[\vsigma][\trunc\vdelta k])}{\unbox{\Ltotal\vdelta k}\Rne_{\alpha}(c[\vsigma][\trunc\vdelta k])}{
               T'}
               \tag{by congruence} \\
             &\ \mtyequiv[\vDelta'']{\unbox k
               (t[\vsigma])[\vdelta]}{\Rne_{\alpha}(\tunbox(k, c[\vsigma])[\vdelta])}{T'}
               \tag{use \Cref{lem:L-extract}} \\
             &\ \unbox k {(t[\vsigma])} \sim \tunbox(k, c[\vsigma]) \in \underline{T'}_{\vDelta; \vDelta'}
               \tag{by abstraction} \\
             &\ \unbox k {(t[\vsigma])} \sim \tunbox(k, c[\vsigma]) \in \glu{T'}_{\vDelta; \vDelta'}
               \byIH \\
             &\ t \sim \uparrow^{\square T'}(c) \in \glu{\square T'}_{\vGamma}
               \tag{by abstraction}
      \end{align*}

    \item To show $\glu{\square T'}_{\vGamma} \subseteq \overline{\square
        T'}_{\vGamma}$,
      \begin{align*}
        H_2:&\ \forall \vDelta', \vsigma : \vDelta \To_r
              \vGamma. \unbox{|\vDelta'|}{(t[\vsigma])} \sim \tunbox
              \cdot (|\vDelta'|, a[\vsigma]) \in \glu{T'}_{\vDelta;
              \vDelta'}
              \tag{by assumption} \\
            &\ \text{assume }\vsigma : \vDelta \To_r \vGamma \\
            &\ \unbox{1}{(t[\vsigma])} \sim \tunbox
              \cdot (1, a[\vsigma]) \in \glu{T'}_{\vDelta; \cdot}
              \tag{by $H_2$, letting $\vDelta' = \cdot$} \\
        H_3: &\ \unbox{1}{(t[\vsigma])} \sim \tunbox
               \cdot (1, a[\vsigma]) \in \overline{T'}_{\vDelta; \cdot}
               \byIH \\
            &\ \mtyequiv[\vDelta; \cdot]{\unbox 1
              {(t[\vsigma])}}{\Rnf_{\alpha}(\downarrow^{T'}(\tunbox \cdot (1,
              a[\vsigma])))}{T'}
              \tag{by $H_3$} \\
            &\ \mtyequiv[\vDelta]{\boxit{(\unbox 1 (t[\vsigma]))}}{\boxit{(\Rnf_{\alpha}(\downarrow^{T'}(\tunbox \cdot (1,
              a[\vsigma]))))}}{\square T'}
              \tag{congruence} \\
            &\ \mtyequiv[\vDelta]{t[\vsigma]}{\Rnf_{\alpha}(\downarrow^{\square
              T'}(a[\vsigma]))}{\square T'}
              \tag{$\eta$ expansion} \\
            &\ t \sim a \in \overline{\square T'}_{\vGamma}
              \tag{by abstraction}
      \end{align*}
    \end{itemize}

  \item $T = S \func T'$, then
    \begin{itemize}[label=Subcase]
    \item To show $\underline{S \func T'}_{\vGamma} \subseteq \glu{S \func
        T'}_{\vGamma}$,
      \begin{align*}
        H_4: &\ \forall \vsigma : \vDelta \To_r \vGamma.\
               \mtyequiv[\vDelta]{t[\vsigma]}{\Rne_{\alpha}(c[\vsigma])}{S \func
               T'}
               \tag{by assumption} \\
             &\ \text{assume } \vsigma : \vDelta \To_r \vGamma, s \sim b \in
               \glu{S}_{\vDelta}, \vsigma' : \vDelta' \To_r \vDelta \\
             &\ \mtyequiv[\vDelta']{t[\vsigma \circ \vsigma']}{\Rne_{\alpha}(c[\vsigma \circ \vsigma'])}{S
               \func T'}
               \tag{by $H_4$} \\
        H_5: &\ s \sim b \in \overline{S}_{\vDelta}
               \byIH \\
             &\ \mtyequiv[\vDelta']{s[\vsigma']}{\Rnf_{\alpha}(\downarrow^S(b[\vsigma']))}{S}
               \tag{by $H_5$} \\
             &\ \mtyequiv[\vDelta']{t[\vsigma \circ \vsigma']\
               s[\vsigma']}{\Rne_{\alpha}(c[\vsigma \circ \vsigma'])\
               \Rnf_{\alpha}(\downarrow^S(b[\vsigma']))}{T'}
               \tag{by congruence} \\
             &\ \mtyequiv[\vDelta']{(t[\vsigma]\ s)[\vsigma']}{\Rne_{\alpha}((c[\vsigma]\ \downarrow^S(b))[\vsigma'])}{T'}\\
             &\ t[\vsigma]\ s \sim \uparrow^{T'}(c[\vsigma]\ \downarrow^S(b)) \in
               \underline{T'}_{\vDelta}
               \tag{by abstraction} \\
             &\ t[\vsigma]\ s \sim \uparrow^{T'}(c[\vsigma]\ \downarrow^S(b)) \in
               \glu{T'}_{\vDelta}
               \byIH \\
             &\ t \sim \uparrow^{S \func T'}(c) \in \glu{S \func T'}_{\vGamma}
               \tag{by abstraction}
      \end{align*}
      
    \item To show $\glu{S \func T'}_{\vGamma} \subseteq \overline{S \func
        T'}_{\vGamma}$,
      \begin{align*}
        H_6:&\ \forall \vsigma : \vDelta \To_r \vGamma, s \sim b
              \in \glu{S}_{\vDelta}. t[\vsigma]\ s \sim a[\vsigma]
              \cdot b \in \glu{T'}_{\vDelta} \\
            &\ \text{assume }\vsigma : \vDelta; \Delta \To_r \vGamma\\
            &\ \text{let }\vsigma' := \vsigma \circ \wk_x : \vDelta; \Delta,x : S \To_r
              \vGamma
              \tag{note $\overline{\vsigma'} = \overline{\vsigma}$}\\
            &\ x \sim z \in \underline{S}_{\vDelta; \Delta, x : S}
              \tag{for some $z$, by \Cref{lem:glu-varbot}} \\
            &\ x \sim z \in \glu{S}_{\vDelta; \Delta, x : S}
              \byIH \\
            &\ t[\vsigma']\ x \sim a[\vsigma] \cdot z \in
              \glu{T'}_{\vDelta; \Delta, x : S}
              \tag{by $H_6$} \\
        H_7: &\ t[\vsigma']\ x \sim a[\vsigma] \cdot z \in
               \overline{T'}_{\vDelta; \Delta, x : S}
               \byIH \\
            &\ \mtyequiv[\vDelta; \Delta, x : S]{t[\vsigma']\
              x}{\Rnf_{\alpha[z \mapsto x]}(\downarrow^{T'}(a[\vsigma] \cdot z))}{T'}
              \tag{by $H_7$} \\
            &\ \mtyequiv[\vDelta; \Delta]{\lambda x. t[\vsigma \circ \wk_x]\
              x}{\lambda x. \Rnf_{\alpha[z \mapsto x]}(\downarrow^{T'}(a[\vsigma] \cdot z))}{S \func T'}
              \tag{by congruence} \\
            &\ \mtyequiv[\vDelta; \Delta]{t[\vsigma]}{\Rnf_{\alpha}(\downarrow^{S \func T'}(a[\vsigma]))}{S \func T'} \\
            &\ t \sim a \in \overline{S \func T'}_{\vGamma}
              \tag{by abstraction}
      \end{align*}
    \end{itemize}

  \end{itemize}
\end{proof}

We need this lemma stating that the gluing model respects syntactic equivalence:
\begin{lemma}\labeledit{lem:glue-resp-equiv'}
  If $t \sim a \in \glu{T}_{\vGamma}$ and $\mtyequiv{t'}{t}{T}$, then
  $t' \sim a \in \glu{T}_{\vGamma}$. 
\end{lemma}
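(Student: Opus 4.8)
The plan is to prove this by induction on $T$, reusing essentially verbatim the skeleton of \Cref{lem:glue-resp-equiv}; the only genuinely new wrinkle is that in the untyped-domain setting the base-type gluing relation $\glu{B}_{\vGamma} = \underline{B}_{\vGamma}$ and the clauses for $\square$ and $\func$ all quantify over restricted weakenings rather than over unified weakenings. In every case I would first invoke the presupposition lemma \Cref{lem:inv-equiv} on $\mtyequiv{t'}{t}{T}$ to obtain $\mtyping{t'}{T}$, so that it only remains to re-establish the universally-quantified component of the gluing relation. The one structural observation I would set up first is that applying a restricted weakening to both sides of the hypothesis is legitimate: any $\vsigma : \vDelta \To_r \vGamma$ is in particular well typed ($\mtyping[\vDelta]{\vsigma}{\vGamma}$, read off the restricted predicate), so the substitution-application congruence rule gives $\mtyequiv[\vDelta]{t'[\vsigma]}{t[\vsigma]}{T}$ from $\mtyequiv{t'}{t}{T}$.

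With this in hand, the base case $T = B$ is immediate: writing $a = \uparrow^B(c)$, for each $\vsigma : \vDelta \To_r \vGamma$ I compose $\mtyequiv[\vDelta]{t'[\vsigma]}{t[\vsigma]}{B}$ with the hypothesis $\mtyequiv[\vDelta]{t[\vsigma]}{\Rne_{\alpha}(c[\vsigma])}{B}$ by transitivity, concluding $t' \sim a \in \underline{B}_{\vGamma} = \glu{B}_{\vGamma}$. For $T = \square T'$, I fix $\vDelta'$ and $\vsigma : \vDelta \To_r \vGamma$, push the equivalence $\mtyequiv[\vDelta]{t'[\vsigma]}{t[\vsigma]}{\square T'}$ under $\unbox{|\vDelta'|}{(-)}$ by congruence, and apply the induction hypothesis to the assumed membership $\unbox{|\vDelta'|}{(t[\vsigma])} \sim \tunbox \cdot (|\vDelta'|, a[\vsigma]) \in \glu{T'}_{\vDelta; \vDelta'}$; abstracting over $\vDelta'$ and $\vsigma$ gives $t' \sim a \in \glu{\square T'}_{\vGamma}$. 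For $T = S \func T'$, I fix $\vsigma : \vDelta \To_r \vGamma$ and $s \sim b \in \glu{S}_{\vDelta}$, use the congruence rule for application on $\mtyequiv[\vDelta]{t'[\vsigma]}{t[\vsigma]}{S \func T'}$ to obtain $\mtyequiv[\vDelta]{t'[\vsigma]\ s}{t[\vsigma]\ s}{T'}$, apply the induction hypothesis to $t[\vsigma]\ s \sim a[\vsigma] \cdot b \in \glu{T'}_{\vDelta}$, and abstract.

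Since every step only chains together presupposition, the substitution-application congruence, the $\tunbox$/application congruences, and transitivity of equivalence, there is no deep obstacle here; the one thing to be careful about---and the step I would flag as the main point---is ensuring that the substitution applied to both sides of the hypothesized equivalence is \emph{restricted}, not merely an arbitrary unified substitution, since the gluing relation is only closed under restricted weakening. This is exactly why I isolate the observation $\mtyping[\vDelta]{\vsigma}{\vGamma}$ for $\vsigma : \vDelta \To_r \vGamma$ at the outset, and why the induction hypothesis is invoked at the weakened context stacks $\vDelta; \vDelta'$ and $\vDelta$ rather than at $\vGamma$ itself.
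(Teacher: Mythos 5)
Your proof is correct and follows essentially the same route as the paper's: induction on $T$, with the base case by transitivity and the $\square$ and function cases obtained by applying the relevant congruence rules under a restricted weakening and then invoking the induction hypothesis at the weakened stack. The extra care you take in noting that a restricted weakening is in particular a well-typed substitution (so congruence applies) and in invoking presupposition for $\mtyping{t'}{T}$ is left implicit in the paper but is exactly the right justification.
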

\begin{proof}
  We do induction on $T$.
  \begin{itemize}[label=Case]
  \item $T = B$, immediate by transitivity. 
  \item $T = \square T'$, then
    \begin{align*}
      H_1: &\ \forall \vDelta', \vsigma : \vDelta \To_r \vGamma.
             \unbox{|\vDelta'|}{(t[\vsigma])} \sim \tunbox
             \cdot (|\vDelta'|, a[\vsigma]) \in \glu{T'}_{\vDelta;
             \vDelta'}
             \tag{by assumption} \\
           &\ \text{assume }\vDelta', \vsigma : \vDelta \To_r \vGamma \\
           &\ \mtyequiv[\vDelta; \vDelta']
             {\unbox{|\vDelta'|}{(t'[\vsigma])}}{\unbox{|\vDelta'|}{(t[\vsigma])}}{T'}
             \tag{by congruence} \\
           &\ \unbox{|\vDelta'|}{(t'[\vsigma])} \sim \tunbox
             \cdot (|\vDelta'|, a[\vsigma]) \in \glu{T'}_{\vDelta; \vDelta'}
             \byIH \\
           &\ t' \sim a \in \intp{\square T'}_{\vGamma}
             \tag{by abstraction}
    \end{align*}
    
  \item $T = S \func T'$, then
    \begin{align*}
      H_2:&\ \forall \vsigma : \vDelta \To_r \vGamma, s \sim b
            \in \glu{S}_{\vDelta}. t[\vsigma]\ s \sim a[\vsigma]
            \cdot b \in \glu{T'}_{\vDelta} \\
          &\ \text{assume }\vsigma : \vDelta \To_r \vGamma, s \sim b
            \in \glu{S}_{\vDelta} \\
          &\ \mtyping[\vDelta] s S \\
          &\ \mtyequiv[\vDelta]{t'[\vsigma]\ s}{t[\vsigma]\ s}{T'}
            \tag{by congruence} \\
          &\ t'[\vsigma]\ s \sim a[\vsigma]
            \cdot b \in \glu{T'}_{\vDelta}
            \byIH \\
          &\ t' \sim a \in \intp{S \func T'}_{\vGamma}
            \tag{by abstraction}
    \end{align*}
  \end{itemize}
\end{proof}

We would also like to reveal the Kripke structure in the gluing model:
\begin{lemma}\labeledit{lem:glue-mon'}
  If $t \sim a \in \intp{T}_{\vDelta}$ and $\vsigma : \vGamma \To_r \vDelta$, then
  $t[\vsigma] \sim a[\vsigma] \in \glu{T}_{\vGamma}$. 
\end{lemma}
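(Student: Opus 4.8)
The plan is to proceed by induction on the type $T$, mirroring the structure of the realizability theorem for the gluing model. In each case the pattern is identical: given the ambient restricted weakening $\vsigma : \vGamma \To_r \vDelta$ together with whatever further restricted weakening $\vdelta : \vGamma' \To_r \vGamma$ the definition of $\glu{T}_{\vGamma}$ hands us, I would form the composite $\vsigma \circ \vdelta : \vGamma' \To_r \vDelta$ — which is again restricted by closure of restricted weakenings under composition — and instantiate the hypothesis $t \sim a \in \glu{T}_{\vDelta}$ at $\vsigma \circ \vdelta$. This yields exactly what is wanted, up to two rewrites: on the syntactic side, $t[\vsigma \circ \vdelta] \approx t[\vsigma][\vdelta]$ by the substitution-composition equivalence rule, which I push through $\glu{T}$ using \Cref{lem:glue-resp-equiv'}; on the semantic side, $a[\vsigma \circ \vdelta] = a[\vsigma][\vdelta]$ as a strict equality, obtained from $\mt{\vsigma \circ \vdelta} = \mt{\vsigma} \circ \mt{\vdelta}$ (by definition of $\mt{\_}$) together with the \UnMoT composition lemma $a[\kappa][\kappa'] = a[\kappa \circ \kappa']$.

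Spelling out the cases: for $T = B$, $\glu{B}_{\vDelta} = \underline{B}_{\vDelta}$, so $a = \uparrow^B(c)$ and $a[\vsigma] = \uparrow^B(c[\vsigma])$; for each restricted $\vdelta : \vGamma' \To_r \vGamma$ one must check $\mtyequiv[\vGamma']{t[\vsigma][\vdelta]}{\Rne_{\alpha}(c[\vsigma][\vdelta])}{B}$, which is the hypothesis at $\vsigma \circ \vdelta$ after the two rewrites. For $T = \square T'$, unfolding $\glu{\square T'}_{\vGamma}$ asks, given $\vDelta'$ and $\vdelta : \vGamma' \To_r \vGamma$, for $\unbox{|\vDelta'|}{(t[\vsigma][\vdelta])} \sim \tunbox\cdot(|\vDelta'|, a[\vsigma][\vdelta]) \in \glu{T'}_{\vGamma'; \vDelta'}$; instantiating the hypothesis at the same $\vDelta'$ and at $\vsigma \circ \vdelta$ delivers this modulo the rewrites, and one closes with \Cref{lem:glue-resp-equiv'}. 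For $T = S \func T'$, given $\vdelta : \vGamma' \To_r \vGamma$ and $s \sim b \in \glu{S}_{\vGamma'}$, instantiate the hypothesis at $\vsigma \circ \vdelta$ with the same $s \sim b$ to get $t[\vsigma \circ \vdelta]\ s \sim a[\vsigma \circ \vdelta]\cdot b \in \glu{T'}_{\vGamma'}$, and rewrite as before.

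The only genuine content is the domain-level bookkeeping that licenses replacing $a[\vsigma \circ \vdelta]$ by $a[\vsigma][\vdelta]$, i.e. that applying a unified substitution to a domain value factors through its extracted \UnMoT and that $\mt{\_}$ is a homomorphism for composition — both already established (\Cref{lem:L-extract,lem:trunc-extract} and the \UnMoT composition lemmas) — so the step is routine but must be threaded through each case, including the truncated occurrences in the $\square$ case. I expect the mildest obstacle to be keeping the name assignments $\alpha$ straight: the relations $\underline{T}$ and $\overline{T}$ carry a context-stack-determined $\alpha$, and one must confirm these line up after composing with $\vsigma$; this is the same point the realizability proof glosses over, and it is discharged the same way here.
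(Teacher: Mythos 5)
Your proposal matches the paper's own proof: in each of the three cases you instantiate the hypothesis at the composite $\vsigma \circ \vdelta$ (using closure of restricted weakenings under composition), rewrite $t[\vsigma \circ \vdelta]$ to $t[\vsigma][\vdelta]$ via \Cref{lem:glue-resp-equiv'}, and use the strict equality $a[\vsigma \circ \vdelta] = a[\vsigma][\vdelta]$ on the domain side. The only cosmetic difference is that you frame it as induction on $T$, whereas the paper notes the argument is pure case analysis and never invokes an IH --- which is also true of your cases as written.
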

\begin{proof}
  We analyze $T$. Notice that this proof does not use IH.
  \begin{itemize}[label=Case]
  \item $T = B$, immediate because $\vsigma$ is closed under composition.
  \item $T = \square T'$, then
    \begin{align*}
      H_1: &\ \forall \vDelta'', \vdelta : \vDelta' \To_r \vDelta.
             \unbox{|\vDelta''|}{(t[\vdelta])} \sim \tunbox
             \cdot (|\vDelta''|, a[\vdelta]) \in \glu{T'}_{\vDelta';
             \vDelta''}
             \tag{by assumption} \\
           &\ \text{assume }\vDelta'', \vdelta : \vDelta' \To_r \vGamma \\
           &\ \unbox{|\vDelta''|}{(t[\vsigma \circ \vdelta])} \sim \tunbox
             \cdot (|\vDelta''|, a[\vsigma \circ \vdelta]) \in \glu{T'}_{\vDelta';
             \vDelta''}
             \tag{by $H_1$} \\
           &\ \unbox{|\vDelta''|}{(t[\vsigma][\vdelta])} \sim \tunbox
             \cdot (|\vDelta''|, a[\vsigma][\vdelta]) \in \glu{T'}_{\vDelta';
             \vDelta''}
             \tag{by \Cref{lem:glue-resp-equiv'}} \\
           &\ t[\vsigma] \sim a[\vsigma] \in \glu{\square T'}_{\vGamma}
             \tag{by abstraction}
    \end{align*}
    
  \item $T = S \func T'$, then
    \begin{align*}
      H_2:&\ \forall \vsigma' : \vDelta' \To_r \vDelta, s \sim b
            \in \glu{S}_{\vDelta'}. t[\vsigma']\ s \sim a[\vsigma']
            \cdot b \in \glu{T'}_{\vDelta'} \\
          &\ \text{assume }\vsigma' : \vDelta' \To_r \vGamma, s \sim b
            \in \glu{S}_{\vDelta'} \\
          &\ t[\vsigma \circ \vsigma']\ s \sim a[\vsigma \circ \vsigma']
            \cdot b \in \glu{T'}_{\vDelta'}
            \tag{by $H_2$} \\
          &\ t[\vsigma][\vsigma']\ s \sim a[\vsigma][\vsigma']
            \cdot b \in \glu{T'}_{\vDelta'}
            \tag{by \Cref{lem:glue-resp-equiv'}}\\
          &\ t[\vsigma] \sim a[\vsigma] \in \intp{S \func T'}_{\vGamma}
            \tag{by abstraction}
    \end{align*}
  \end{itemize}
\end{proof}

We generalize gluing to contexts and context stacks:
\begin{align*}
  \glu{\epsilon; \Gamma}_{\vDelta} &:= \{ (\vsigma, \vrho) \sep
                                     \mtyping[\vDelta]{\vsigma}{\epsilon; \Gamma} \tand
                                     \forall x : T \in
                                     \Gamma. x[\vsigma] \sim \rho(x) \in
                                  \glu{T}_{\vDelta} \text{ where }(k, \rho) :=
                                  \vrho(0) \} \\
  \glu{\vGamma; \Gamma}_{\vDelta} &:= \{ (\vsigma, \vrho) \sep
                                    \mtyping[\vDelta]{\vsigma}{\vGamma; \Gamma} \tand
                                    \Ltotal\vsigma 1 = k \\
  & \qquad \qquad \tand \trunc \vsigma 1 \sim \trunc\vrho 1
                                    \in \glu{\vGamma}_{\trunc\vDelta k} \tand \forall x : T \in
                                     \Gamma. x[\vsigma] \sim \rho(x) \in
                                  \glu{T}_{\vDelta}
                           \text{ where }(k, \rho) :=
                           \vrho(0) \}
\end{align*}

We need the follow lemmas to understand operations on $\vsigma \sim \vrho \in
\glu{\vGamma}_{\vDelta}$.
\begin{lemma}\labeledit{lem:glue-L'}
  If $\vsigma \sim \vrho \in \glu{\vGamma}_{\vDelta}$, then $\Ltotal\vsigma n = \Ltotal\vrho n$.
\end{lemma}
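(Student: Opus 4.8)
The plan is to prove this by induction on $n$, exactly in the spirit of the corresponding statement for the presheaf gluing model (\Cref{lem:glue-L}); the only wrinkle is that in the explicit substitution calculus $\Ltotal{\vsigma}{\_}$ is defined by recursion on the \emph{syntax} of $\vsigma$ rather than on its numeric argument, so I will avoid case-analysing $\vsigma$ and instead route all the bookkeeping through the additivity laws \Cref{lem:L-add} (which, as remarked in \Cref{sec:st:subst-calc}, still holds here) on the syntactic side and \Cref{lem:L-add-envs} on the domain side.

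First, for $n = 0$ both $\Ltotal{\vsigma}{0}$ and $\Ltotal{\vrho}{0}$ are $0$ by definition, and this base case already disposes of the shape $\vGamma = \epsilon; \Gamma$ entirely, since then $|\vGamma| = 1$ forces $n = 0$. For the step $n = 1 + n'$ we necessarily have $\vGamma = \vGamma'; \Gamma$ with $\vGamma'$ nonempty; unfolding $\vsigma \sim \vrho \in \glu{\vGamma'; \Gamma}_{\vDelta}$ and writing $\vrho(0) = (k, \rho)$, the defining clause of $\glu{\vGamma'; \Gamma}_{\vDelta}$ hands us both $\Ltotal{\vsigma}{1} = k$ and $\trunc{\vsigma}{1} \sim \trunc{\vrho}{1} \in \glu{\vGamma'}_{\trunc{\vDelta}{k}}$, while the definition of the $\Envs$-level truncation offset gives $\Ltotal{\vrho}{1} = k$ as well. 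Then
\begin{align*}
  \Ltotal{\vsigma}{1 + n'}
  &= \Ltotal{\vsigma}{1} + \Ltotal{\trunc{\vsigma}{1}}{n'}
    \tag{by \Cref{lem:L-add}} \\
  &= k + \Ltotal{\trunc{\vrho}{1}}{n'}
    \byIH \\
  &= \Ltotal{\vrho}{1} + \Ltotal{\trunc{\vrho}{1}}{n'} \\
  &= \Ltotal{\vrho}{1 + n'}
    \tag{by \Cref{lem:L-add-envs}}
\end{align*}
where the induction hypothesis is applied to the truncated pair at $n'$, which is legitimate because $n' < |\vGamma'|$ follows from $1 + n' < |\vGamma| = |\vGamma'| + 1$.

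I do not anticipate any genuine difficulty here: every step is a definitional unfolding or an appeal to an already-proved additivity lemma, and the overall structure is a direct transcription of \Cref{lem:glue-L}. The single point deserving care is simply checking that the ``leading'' agreement $\Ltotal{\vsigma}{1} = \Ltotal{\vrho}{1}$ is precisely what is recorded in the defining clause of $\glu{\vGamma; \Gamma}_{\vDelta}$ (through the stored offset $k$ at the top of the environment), so that the induction never needs to inspect the shape of $\vsigma$ at all.
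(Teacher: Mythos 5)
Your proof is correct and matches the paper's approach: the paper also proves this by induction on $n$, reading off $\Ltotal\vsigma 1 = k = \Ltotal\vrho 1$ and the truncated gluing $\trunc\vsigma 1 \sim \trunc\vrho 1$ from the defining clause of $\glu{\vGamma;\Gamma}_{\vDelta}$ and recombining via the additivity of truncation offsets. Your write-up simply makes explicit the appeals to \Cref{lem:L-add} and \Cref{lem:L-add-envs} that the paper leaves implicit in its one-line proof.
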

\begin{proof}
  Immediate by induction on $n$. 
\end{proof}

\begin{lemma}\labeledit{lem:glue-trunc'}
  If $\vsigma \sim \vrho \in \glu{\vGamma}_{\vDelta}$, then $\trunc\vsigma n \sim \trunc\vrho n \in
  \glu{\trunc\vGamma n}_{\trunc\vDelta{\Ltotal\vsigma n}}$.
\end{lemma}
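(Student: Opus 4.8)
The plan is to proceed by induction on $n$, exactly mirroring the proof of \Cref{lem:glue-trunc} in the presheaf setting. For $n = 0$ there is nothing to do: $\trunc\vsigma 0 = \vsigma$, $\trunc\vrho 0 = \vrho$, $\trunc\vGamma 0 = \vGamma$, $\trunc\vDelta 0 = \vDelta$ and $\Ltotal\vsigma 0 = 0$, so the goal is literally the hypothesis.

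For $n = 1 + n'$, since $\vGamma$ admits a truncation by $n \ge 1$ it must be of the form $\vGamma'; \Gamma$, and by the definition of $\glu{\vGamma'; \Gamma}_{\vDelta}$ the hypothesis $\vsigma \sim \vrho \in \glu{\vGamma'; \Gamma}_{\vDelta}$ gives, among its conjuncts, $\trunc\vsigma 1 \sim \trunc\vrho 1 \in \glu{\vGamma'}_{\trunc\vDelta k}$, where $(k, \rho) := \vrho(0)$ and $k = \Ltotal\vsigma 1$; note also $\Ltotal\vsigma 1 = \Ltotal\vrho 1$ by \Cref{lem:glue-L'}. I then apply the induction hypothesis at $n'$ to $\trunc\vsigma 1 \sim \trunc\vrho 1 \in \glu{\vGamma'}_{\trunc\vDelta k}$, obtaining $\trunc{(\trunc\vsigma 1)}{n'} \sim \trunc{(\trunc\vrho 1)}{n'} \in \glu{\trunc{\vGamma'}{n'}}_{\trunc{(\trunc\vDelta k)}{\Ltotal{\trunc\vsigma 1}{n'}}}$.

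It then remains to rewrite this into the claimed form. Using \Cref{lem:trunc-sum} (distributivity of addition over truncation), $\trunc{(\trunc\vsigma 1)}{n'} = \trunc\vsigma{(1 + n')} = \trunc\vsigma n$, and likewise for $\trunc\vrho 1$ and for $\trunc{\vGamma'}{n'} = \trunc\vGamma n$. For the codomain context stack, \Cref{lem:L-add} gives $\Ltotal\vsigma n = \Ltotal\vsigma{(1 + n')} = k + \Ltotal{\trunc\vsigma 1}{n'}$, so $\trunc{(\trunc\vDelta k)}{\Ltotal{\trunc\vsigma 1}{n'}} = \trunc\vDelta{(k + \Ltotal{\trunc\vsigma 1}{n'})} = \trunc\vDelta{\Ltotal\vsigma n}$ (again by \Cref{lem:trunc-sum}). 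Substituting these equalities yields exactly $\trunc\vsigma n \sim \trunc\vrho n \in \glu{\trunc\vGamma n}_{\trunc\vDelta{\Ltotal\vsigma n}}$.

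There is no genuine obstacle here; the whole content is the careful bookkeeping of the truncation and truncation-offset identities so that all subscripts line up, plus checking that the implicit side condition $n < |\vGamma|$ that keeps truncation meaningful propagates correctly to the recursive call at $n'$. The argument is entirely parallel to \Cref{lem:glue-trunc}, and, as in the earlier sections, none of the steps depends on the range of $\tunbox$ levels, so the lemma holds uniformly across all four systems.
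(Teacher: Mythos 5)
Your proposal is correct and follows exactly the route the paper intends: the paper's proof is simply ``Immediate by induction on $n$,'' and your argument is the straightforward expansion of that induction, unfolding the definition of $\glu{\vGamma; \Gamma}_{\vDelta}$ to extract the conjunct $\trunc\vsigma 1 \sim \trunc\vrho 1 \in \glu{\vGamma'}_{\trunc\vDelta k}$ and then discharging the subscript bookkeeping with the truncation/truncation-offset identities (which the paper notes carry over to the explicit substitution calculus). No gaps; this matches the paper's approach.
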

\begin{proof}
  Immediate by induction on $n$. 
\end{proof}

 \begin{lemma}\labeledit{lem:glue-stack-resp-equiv}
  If $\vsigma \sim \vrho \in \glu{\vGamma}_{\vDelta}$ and $\mtyequiv[\vDelta']{\vsigma'}{\vsigma}{\vGamma}$,  then
  $\vsigma' \sim \vrho \in \glu{\vGamma}_{\vDelta}$. 
\end{lemma}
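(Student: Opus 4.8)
The plan is to proceed by induction on the context stack $\vGamma$, following the recursive clauses that define $\glu{\vGamma}_{\vDelta}$; I would keep the target stack $\vDelta$ (the domain of $\vsigma$) universally quantified so that the induction hypothesis is available at truncations of it. Before the case analysis I would isolate the one ingredient I need about the equational theory of unified substitutions: from $\mtyequiv[\vDelta]{\vsigma'}{\vsigma}{\vGamma}$, presupposition (the substitution analogue of \Cref{lem:inv-equiv}) supplies $\mtyping[\vDelta]{\vsigma'}{\vGamma}$, which is the well-typedness clause of the gluing relation, and congruence of substitution application supplies $\mtyequiv[\vDelta]{x[\vsigma']}{x[\vsigma]}{T}$ for every variable $x : T$ occurring in the topmost context of $\vGamma$.

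In the base case $\vGamma = \epsilon; \Gamma$, write $(k, \rho) := \vrho(0)$. For each $x : T \in \Gamma$ I have $x[\vsigma] \sim \rho(x) \in \glu{T}_{\vDelta}$ from the hypothesis; feeding $\mtyequiv[\vDelta]{x[\vsigma']}{x[\vsigma]}{T}$ into \Cref{lem:glue-resp-equiv'} transports this to $x[\vsigma'] \sim \rho(x) \in \glu{T}_{\vDelta}$, and together with the well-typedness clause this is exactly $\vsigma' \sim \vrho \in \glu{\epsilon; \Gamma}_{\vDelta}$.

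In the step case $\vGamma; \Gamma$, write again $(k, \rho) := \vrho(0)$. The per-variable clauses are discharged exactly as in the base case. For the offset clause I would use the lemma that truncation offset respects substitution equivalence to obtain $\Ltotal{\vsigma'}1 = \Ltotal\vsigma 1 = k$; for the recursive clause I would use the lemma that truncation respects substitution equivalence to obtain $\mtyequiv[\trunc\vDelta k]{\trunc{\vsigma'}1}{\trunc\vsigma 1}{\vGamma}$, and then appeal to the induction hypothesis on the shorter stack $\vGamma$ with target $\trunc\vDelta k$, using the hypothesis $\trunc\vsigma 1 \sim \trunc\vrho 1 \in \glu{\vGamma}_{\trunc\vDelta k}$, to conclude $\trunc{\vsigma'}1 \sim \trunc\vrho 1 \in \glu{\vGamma}_{\trunc\vDelta k}$. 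Collecting the three clauses finishes this case.

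I do not expect a genuine obstacle here; the argument is essentially bookkeeping. The two points that require care are (i) that the induction must be run with the target stack generalized, since the recursive call lands at the truncation $\trunc\vDelta k$ rather than at $\vDelta$, and (ii) that it relies on congruence and presupposition for the equational theory of unified substitutions, which the excerpt states but leaves implicit, together with the two explicitly recorded facts that truncation and truncation offset are congruences for that theory. If anything, the only subtlety is notational: the ``truncation respects equivalence'' lemma and the gluing relation use the same letters with swapped domain/codomain conventions, so I would be careful to align them when instantiating $n = 1$.
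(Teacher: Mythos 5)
Your proposal is correct and takes essentially the same route as the paper, whose proof is precisely an induction on $\vGamma$ applying \Cref{lem:glue-resp-equiv'} where appropriate. Your extra bookkeeping (presupposition for the well-typedness clause, and the lemmas that truncation and truncation offset respect substitution equivalence, used for the offset and recursive clauses) just spells out what the paper leaves implicit.
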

\begin{proof}
  Immediate by induction on $\vGamma$ and apply \Cref{lem:glue-resp-equiv'} when
  appropriate.
\end{proof}

\begin{lemma}\labeledit{lem:glue-stack-mon'}
  If $\vsigma \sim \vrho \in \glu{\vGamma}_{\vDelta}$, given $\vdelta : \vDelta' \To_r
  \vDelta$,  then
  $\vsigma \circ \vdelta \sim \vrho[\vdelta] \in \glu{\vGamma}_{\vDelta'}$. 
\end{lemma}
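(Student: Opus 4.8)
I would prove this by induction on the structure of the context stack $\vGamma$, exactly parallel to \Cref{lem:glue-stack-mon} in the presheaf development but now using the untyped-domain lemmas. The key inputs are: the Kripke-monotonicity of the typed gluing relation \Cref{lem:glue-mon'}, the offset/truncation compatibility lemmas \Cref{lem:glue-L',lem:glue-trunc'}, the fact that restricted weakenings are closed under composition and truncation (stated in \Cref{sec:st:rweaken}), the extraction identities \Cref{lem:L-extract,lem:trunc-extract}, the semantic monotonicity identities \Cref{lem:envs-mon,lem:envs-L} for environments, the substitution-composition equivalence rule $\mtyequiv{t[\vsigma \circ \vdelta]}{t[\vsigma][\vdelta]}{T}$, and \Cref{lem:glue-resp-equiv'} to transport the term side along that equivalence. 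Throughout I abbreviate $\mt{\vdelta}$ by $\vdelta$ when it is applied to a domain value or environment, as the paper does.

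\textbf{Base case $\vGamma = \epsilon; \Gamma$.} Here $\vsigma \sim \vrho \in \glu{\epsilon;\Gamma}_{\vDelta}$ means $\mtyping[\vDelta]{\vsigma}{\epsilon;\Gamma}$ and $x[\vsigma] \sim \rho(x) \in \glu{T}_{\vDelta}$ for each $x : T \in \Gamma$, where $(k,\rho) := \vrho(0)$. Composition of well-typed substitutions gives $\mtyping[\vDelta']{\vsigma \circ \vdelta}{\epsilon;\Gamma}$. For the variable components, apply \Cref{lem:glue-mon'} to $x[\vsigma] \sim \rho(x)$ with $\vdelta : \vDelta' \To_r \vDelta$ to obtain $x[\vsigma][\vdelta] \sim \rho(x)[\vdelta] \in \glu{T}_{\vDelta'}$; since $\mtyequiv[\vDelta']{x[\vsigma \circ \vdelta]}{x[\vsigma][\vdelta]}{T}$, \Cref{lem:glue-resp-equiv'} yields $x[\vsigma\circ\vdelta] \sim \rho(x)[\vdelta] \in \glu{T}_{\vDelta'}$, and by definition of $\vrho[\vdelta]$ the topmost environment is $\rho[\vdelta]$ with $\rho[\vdelta](x) = \rho(x)[\vdelta]$. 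This is exactly the required membership.

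\textbf{Inductive case $\vGamma = \vGamma'; \Gamma$.} Write $(k,\rho) := \vrho(0)$ and $(k',\rho') := \vrho[\vdelta](0)$, so $k' = \Ltotal{\vdelta}{k}$ and $\rho' = \rho[\vdelta]$. From $\vsigma \sim \vrho \in \glu{\vGamma';\Gamma}_{\vDelta}$ we have $\Ltotal{\vsigma}1 = k$, $\trunc{\vsigma}1 \sim \trunc{\vrho}1 \in \glu{\vGamma'}_{\trunc{\vDelta}k}$, and the variable conditions on $\Gamma$. The typing of $\vsigma \circ \vdelta$ and the variable conditions at $\vDelta'$ are handled exactly as in the base case. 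For the offset, $\Ltotal{\vsigma\circ\vdelta}1 = \Ltotal{\vdelta}{\Ltotal{\vsigma}1} = \Ltotal{\vdelta}{k} = k'$. For the truncated part, use $\trunc{(\vsigma\circ\vdelta)}1 = (\trunc{\vsigma}1)\circ(\trunc{\vdelta}{k})$ and, on the semantic side, $\trunc{\vrho[\vdelta]}1 = (\trunc{\vrho}1)[\trunc{\vdelta}{\Ltotal{\vrho}1}] = (\trunc{\vrho}1)[\trunc{\vdelta}{k}]$ by \Cref{lem:envs-mon} together with $\Ltotal{\vrho}1 = \Ltotal{\vsigma}1 = k$ from \Cref{lem:glue-L'}. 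Since $\trunc{\vdelta}{k}$ is again restricted and has type $\trunc{\vDelta'}{k'} \To_r \trunc{\vDelta}{k}$, the induction hypothesis applied to $\trunc{\vsigma}1 \sim \trunc{\vrho}1 \in \glu{\vGamma'}_{\trunc{\vDelta}k}$ with $\trunc{\vdelta}{k}$ gives $(\trunc{\vsigma}1)\circ(\trunc{\vdelta}k) \sim (\trunc{\vrho}1)[\trunc{\vdelta}k] \in \glu{\vGamma'}_{\trunc{\vDelta'}{k'}}$, which is precisely the truncated condition for $\vsigma \circ \vdelta \sim \vrho[\vdelta] \in \glu{\vGamma';\Gamma}_{\vDelta'}$.

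\textbf{Main obstacle.} The genuine content is not any single step but keeping the offset/truncation bookkeeping coherent between the syntactic substitution $\vdelta$, its extracted untyped modal transformation $\mt{\vdelta}$, and the environment action $\vrho[\vdelta]$ — in particular matching $\Ltotal{\vsigma}1$, $\Ltotal{\vrho}1$ and the new offset $k'$, and checking that the truncation $\trunc{\vdelta}{k}$ remains a \emph{restricted} weakening so that the induction hypothesis applies. All of these facts are already available (\Cref{lem:L-extract,lem:trunc-extract,lem:envs-mon,lem:envs-L,lem:glue-L'}, and the closure properties of restricted weakenings), so the argument is a careful but routine assembly rather than a conceptual difficulty.
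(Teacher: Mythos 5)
Your proposal is correct and follows essentially the same route as the paper, which proves this lemma by induction on $\vGamma$ applying \Cref{lem:glue-mon'} and \Cref{lem:glue-resp-equiv'}; you have simply spelled out the offset/truncation bookkeeping (via \Cref{lem:glue-L',lem:envs-mon,lem:trunc-extract} and closure of restricted weakenings under truncation) that the paper leaves implicit in its one-line proof.
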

\begin{proof}
  Immediate by induction on $\vGamma$ and apply \Cref{lem:glue-resp-equiv',lem:glue-mon'} when
  appropriate.
\end{proof}

Finally, we can define the semantic judgment:
\begin{definition}
  \begin{align*}
    \mSemtyp t T &:= \forall \vsigma \sim \vrho \in
    \glu{\vGamma}_{\vDelta}. t[\vsigma] \sim \intp{t}(\vrho) \in
                   \glu{T}_{\vDelta} \\
    \mSemtyp{\vdelta}{\vGamma'} &:= \forall \vsigma \sim \vrho \in
    \glu{\vGamma}_{\vDelta}. \vdelta \circ \vsigma \sim \intp{\vdelta}(\vrho) \in \glu{\vGamma'}_{\vDelta}
  \end{align*}
\end{definition}

\subsection{Soundness}

\subsubsection{Semantic Typing Rules}

\begin{lemma}
  \begin{mathpar}
    \inferrule*
    {x : T \in \Gamma}
    {\mSemtyp[\vGamma; \Gamma] x T}
  \end{mathpar}
\end{lemma}
\begin{proof}
  Immediate.
\end{proof}

\begin{lemma}
  \begin{mathpar}
    \inferrule*
    {\mSemtyp[\vGamma; \cdot] t T}
    {\mSemtyp{\boxit t}{\square T}}
  \end{mathpar}
\end{lemma}
\begin{proof}
  \begin{align*}
    H_1: &\ \mSemtyp[\vGamma; \cdot] t T
           \tag{by assumption} \\
    H_2: &\ \vsigma \sim \vrho \in \glu{\vGamma}_{\vDelta} 
           \tag{by assumption} \\
         &\ \text{assume }\vDelta'', \vdelta : \vDelta' \To_r \vDelta, \text{let }k :=
           |\vDelta''| \\
    H_3: &\ \sextt\vsigma 1 \sim \ext(\vrho) \in \glu{\vGamma; \cdot}_{\vDelta; \cdot}
           \tag{by definition} \\
    H_4: &\ t[\sextt\vsigma 1] \sim \intp{t}(\ext(\vrho)) \in \glu{T}_{\vDelta; \cdot}
           \tag{by $H_1$ and $H_3$}
  \end{align*}
  By computation, we have
  \begin{align*}
    \unbox k {(\boxit t[\vsigma][\vdelta])}
    & \approx t[\sextt\vsigma 1][\sextt{\vdelta} 1][\sextt{\vect I} k] \\
    & \approx t[\sextt\vsigma 1][\sextt{\vdelta} k] : T
  \end{align*}
  as well as
  \begin{align*}
    \tunbox \cdot (k, \intp{\boxit t}(\vrho)[\vdelta])
    &= \intp{t}(\ext(\vrho))[\sextt{\mt{\vdelta}} 1][\sextt\vone k] \\
    &= \intp{t}(\ext(\vrho))[\sextt{\mt{\vdelta}} k]
  \end{align*}
  Thus $H_4$ is what we want after applying \Cref{lem:glue-resp-equiv',lem:glue-mon'}.
  \begin{align*}
    &\boxit t[\vsigma] \sim \intp{\boxit t}(\vrho) \in \glu{\square T}_{\vDelta}
      \tag{by abstraction} \\
    &\mSemtyp{\boxit t}{\square T} \tag{by abstraction}
  \end{align*}
\end{proof}

\begin{lemma}
  \begin{mathpar}
    \inferrule*
    {\mSemtyp t {\square T} \\ |\vDelta| = n}
    {\mSemtyp[\vGamma; \vDelta]{\unbox n t}{T}}
  \end{mathpar}
\end{lemma}
\begin{proof}
  Immediate by definition of $\glu{\square T}$ and \Cref{lem:glue-trunc'}. 
\end{proof}

\begin{lemma}
  \begin{mathpar}
    \inferrule*
    {\mSemtyp[\vGamma;\Gamma, x : S]t T}
    {\mSemtyp[\vGamma;\Gamma]{\lambda x. t}{S \func T}}
  \end{mathpar}
\end{lemma}
\begin{proof}
  Assume $\vsigma \sim \vrho \in \glu{\vGamma; \Gamma}_{\vDelta}$, $\vDelta'$,
  $\vdelta : \vDelta' \To_r \vDelta$ and $s \sim b \in \glu{S}_{\vDelta'}$. Then we should
  prove that
  \begin{align*}
    (\lambda x. t)[\vsigma][\vdelta]\ s \sim
    \intp{t}(\ext(\vrho[\vdelta], x, b)) \in \glu{T}_{\vDelta'}
  \end{align*}

  We need to construct $\glu{\vGamma;\Gamma, x : S}_{\vDelta'}$ from what we have. We
  claim
  \begin{align*}
    (\vsigma \circ \vdelta), s/x \sim \ext(\vrho[\vdelta], x, b) \in \glu{\vGamma;\Gamma, x : S}_{\vDelta'}
  \end{align*}
  by \Cref{lem:glue-stack-mon'}. Therefore
  \begin{align*}
    t[(\vsigma \circ \vdelta), s/x] \sim \intp{t}(\ext(\vrho[\vdelta], x, b)) \in \glu{T}_{\vDelta'}
  \end{align*}
  Comparing the goal and what we have, we can conlude by \Cref{lem:glue-resp-equiv'}
  using congruence rules and $\beta$ equivalence for $\lambda$.
\end{proof}

The following rules are proved just by following the definitions:
\begin{lemma}
  \begin{mathpar}
    \inferrule
    {\mSemtyp t {S \func T} \\ \mSemtyp s S}
    {\mSemtyp{t\ s}{T}}

    \inferrule
    {\mSemtyp[\vDelta]t T \\ \mSemtyp{\vsigma}{\vDelta}}
    {\mSemtyp{t[\vsigma]}{T}}    

    \inferrule
    { }
    {\mSemtyp{\vect I}{\vGamma}}

    \inferrule
    {\mSemtyp{\vsigma}{\vGamma'; \Gamma} \\ \mSemtyp t T}
    {\mSemtyp{\vsigma, t/x}{\vGamma';(\Gamma, x : T)}}

    \inferrule
    { }
    {\mSemtyp[\vGamma; (\Gamma, x : T)]{\wk_x}{\vGamma;\Gamma}}

    \inferrule
    {\mSemtyp[\vGamma']{\vsigma}{\vGamma''} \\ \mSemtyp{\vdelta}{\vGamma'}}
    {\mSemtyp{\vsigma \circ \vdelta}{\vGamma''}}
  \end{mathpar}
\end{lemma}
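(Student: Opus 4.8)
The plan is to discharge each of the six rules by unfolding the definition of $\mSemtyp{\cdot}{\cdot}$ (in both its term and substitution forms), computing the semantic value with the evaluation clauses, and transporting the syntactic component along the equational laws of \Cref{sec:st:subst-calc} together with \Cref{lem:glue-resp-equiv'} (gluing respects equivalence of terms) and \Cref{lem:glue-stack-resp-equiv} (its substitution analogue). For the application rule, fix a gluing pair $\vsigma \sim \vrho \in \glu{\vGamma}_{\vDelta}$; the premises give $t[\vsigma] \sim \intp{t}(\vrho) \in \glu{S \func T}_{\vDelta}$ and $s[\vsigma] \sim \intp{s}(\vrho) \in \glu{S}_{\vDelta}$. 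Instantiating the defining clause of $\glu{S \func T}$ at the restricted weakening $\vect I : \vDelta \To_r \vDelta$ (which is restricted because $\vect I \approx \vect I$) and at $s[\vsigma] \sim \intp s(\vrho)$ yields $t[\vsigma][\vect I]\ s[\vsigma] \sim \intp t(\vrho)[\vect I] \cdot \intp s(\vrho) \in \glu{T}_{\vDelta}$; since $t[\vsigma][\vect I] \approx t[\vsigma]$ and, on the domain, $\intp t(\vrho)[\vect I] = \intp t(\vrho)[\vone] = \intp t(\vrho)$ (because $\mt{\vect I} = \vone$ and $a[\vone] = a$), this becomes $t[\vsigma]\ s[\vsigma] \sim \intp t(\vrho)\cdot\intp s(\vrho) = \intp{t\ s}(\vrho)$, and \Cref{lem:glue-resp-equiv'} with $(t\ s)[\vsigma] \approx (t[\vsigma])\ (s[\vsigma])$ closes the case.

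For the substitution-application rule $t[\vsigma]$, take $\vdelta \sim \vrho \in \glu{\vGamma}_{\vDelta}$; the premise $\mSemtyp{\vsigma}{\vDelta}$ gives $\vsigma \circ \vdelta \sim \intp{\vsigma}(\vrho) \in \glu{\vDelta}_{\vDelta}$, and feeding this to $\mSemtyp[\vDelta] t T$ gives $t[\vsigma \circ \vdelta] \sim \intp{t}(\intp{\vsigma}(\vrho)) = \intp{t[\vsigma]}(\vrho)$, which transports along $t[\vsigma \circ \vdelta] \approx t[\vsigma][\vdelta]$. Composition $\vsigma \circ \vdelta$ is identical in shape: for a gluing pair $(\vsigma_0, \vrho_0)$, chain $\mSemtyp{\vdelta}{\vGamma'}$ and then $\mSemtyp[\vGamma']{\vsigma}{\vGamma''}$ to obtain $\vsigma \circ (\vdelta \circ \vsigma_0) \sim \intp{\vsigma}(\intp{\vdelta}(\vrho_0)) \in \glu{\vGamma''}_{\vDelta}$, and transport along associativity $\vsigma \circ (\vdelta \circ \vsigma_0) \approx (\vsigma \circ \vdelta) \circ \vsigma_0$ using \Cref{lem:glue-stack-resp-equiv}. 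The identity $\vect I$ is immediate: $\intp{\vect I}(\vrho) = \vrho$, $\vect I \circ \vsigma \approx \vsigma$, and \Cref{lem:glue-stack-resp-equiv} turns $\vsigma \sim \vrho \in \glu{\vGamma}_{\vDelta}$ into $\vect I \circ \vsigma \sim \vrho \in \glu{\vGamma}_{\vDelta}$.

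The two remaining constructors are where the offset bookkeeping of the $\glu{\vGamma; \Gamma}$ clause enters, though only superficially. For $\vsigma, t/x$, use $(\vsigma, t/x) \circ \vdelta \approx (\vsigma \circ \vdelta), t[\vdelta]/x$ and $\intp{\vsigma, t/x}(\vrho) = \ext(\intp{\vsigma}(\vrho), x, \intp{t}(\vrho))$; since $\ext(-, x, a)$ alters only the variable map at stack-index $0$, both $\Ltotal{\cdot}{1}$ and $\trunc{\cdot}{1}$ of the domain side agree with those of $\intp{\vsigma}(\vrho)$, so membership in $\glu{\vGamma'; (\Gamma, x:T)}_{\vDelta}$ reduces to its truncation part (supplied by the premise on $\vsigma$) plus the per-variable part --- the $x$-component by the premise $\mSemtyp t T$ and the variable-substitution law $x[(\vsigma\circ\vdelta),t[\vdelta]/x]\approx t[\vdelta]$, the others by the premise on $\vsigma$. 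For $\wk_x$, note $\intp{\wk_x}(\vrho) = \drop(\vrho, x)$ and the extensionality law $\vsigma \approx (\wk_x \circ \vsigma), x[\vsigma]/x$, so $\wk_x \circ \vsigma$ is just $\vsigma$ with the $x$-binding deleted --- matching $\drop$ exactly --- and $\glu{\vGamma; \Gamma}_{\vDelta}$ membership follows by restricting the hypothesis $\vsigma \sim \vrho \in \glu{\vGamma; (\Gamma, x:T)}_{\vDelta}$ to the variables of $\Gamma$. (The $\sextt{\vsigma}{n}$ constructor, not listed here but needed for the fundamental theorem, goes the same way, using \Cref{lem:glue-trunc',lem:glue-L'} and the law $(\sextt\vsigma n)\circ\vdelta \approx \sextt{(\vsigma\circ\trunc\vdelta n)}{\Ltotal\vdelta n}$.) I do not anticipate a genuine obstacle: each rule is one short forward chain of an evaluation clause, an equational law, and an application of \Cref{lem:glue-resp-equiv'} or \Cref{lem:glue-stack-resp-equiv}; the only thing demanding attention is remembering that the identity substitution acts on domain values literally as the identity \UnMoT $\vone$, so the semantic sides in the $t\ s$ and $t[\vsigma]$ cases really line up on the nose.
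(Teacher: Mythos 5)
Your proposal is correct and takes essentially the same route as the paper, which dispatches all six rules with ``immediate by just applying the premises'': unfold the semantic judgments, feed the gluing pair through the premises, and transport along the explicit-substitution equations using \Cref{lem:glue-resp-equiv'} and \Cref{lem:glue-stack-resp-equiv}, with the application and $t[\vsigma]$ cases closed by instantiating at the identity weakening and using $\mt{\vect I} = \vone$. The only blemishes are cosmetic (reusing $\vDelta$ both as the codomain stack and as the gluing world), not mathematical.
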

\begin{proof}
  Immediate by just applying the premises. 
\end{proof}

At last, we consider the following rule:
\begin{lemma}
  \begin{mathpar}
    \inferrule
    {\mSemtyp{\vsigma}{\vDelta} \\ |\vGamma'| = n}
    {\mSemtyp[\vGamma; \vGamma']{\sextt\vsigma n}{\vDelta; \cdot}}
  \end{mathpar}
\end{lemma}
\begin{proof}
  \begin{align*}
    H_1: &\ \mSemtyp{\vsigma}{\vDelta}
           \tag{by assumption} \\
    H_2: &\ \vdelta \sim \vrho \in \glu{\vGamma; \vGamma'}_{\vDelta'}
           \tag{by assumption} \\
    H_3: &\ \trunc\vdelta n \sim \trunc\vrho n \in \glu{\vGamma}_{\trunc{\vDelta'}{\Ltotal \vdelta n}}
           \tag{by $H_2$} \\
         &\ \vsigma \circ (\trunc\vdelta n) \sim \intp{\vsigma}(\trunc\vrho n)
           \in \glu{\vDelta}_{\trunc{\vDelta'}{\Ltotal \vdelta n}}
           \tag{by $H_1$} \\
         &\ \sextt{(\vsigma \circ (\trunc\vdelta n))}{\Ltotal \vdelta n} \sim
           \ext(\intp{\vsigma}(\trunc\vrho n), \Ltotal\vrho n)
           \in \glu{\vDelta; \cdot}_{\vDelta'}
           \tag{by definition and \Cref{lem:glue-L'}}
  \end{align*}
  Notice that the left hand side has
  \begin{align*}
    \sextt{(\vsigma \circ (\trunc\vdelta n))}{\Ltotal \vdelta n} \approx
    (\sextt\vsigma n) \circ \vdelta : \vDelta; \cdot
  \end{align*}
  Thus we conclude by \Cref{lem:glue-stack-resp-equiv}
  \begin{align*}
    (\sextt\vsigma n) \circ \vdelta \sim
    \intp{\sextt\vsigma n}(\vrho)
    \in \glu{\vDelta; \cdot}_{\vDelta'}
  \end{align*}
  and hence the goal by abstraction.
\end{proof}

\subsubsection{Fundamental Theorem}

Since we have proven the semantic typing rules, we can conclude the fundamental
theorem:
\begin{theorem} (fundamental)
  \begin{itemize}
  \item If $\mtyping t T$, then $\mSemtyp t T$. 
  \item If $\mtyping \vsigma \vDelta$, then $\mSemtyp \vsigma \vDelta$. 
  \end{itemize}
\end{theorem}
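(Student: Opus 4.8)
The plan is to proceed by simultaneous (mutual) induction on the two derivations $\mtyping t T$ and $\mtyping \vsigma \vDelta$, discharging each case by a one-step appeal to the corresponding semantic typing rule already established as a lemma in the preceding subsection. Every inference rule of the syntactic judgments has a semantic counterpart that has been proved: the variable rule; the introduction and elimination rules for $\square$; the rules for $\lambda$-abstraction and application; the rule for substitution application $t[\vsigma]$; and the five rules generating well-formed unified substitutions, namely $\vect I$, extension $\vsigma, t/x$, $\wk_x$, the lift $\sextt\vsigma n$, and composition $\vsigma \circ \vdelta$. For each case I would apply the induction hypotheses to the sub-derivations to obtain the semantic typing of the immediate subterms and subsubstitutions, then feed these into the matching semantic lemma to conclude $\mSemtyp t T$ (resp. $\mSemtyp{\vsigma}{\vDelta}$). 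Since $\mSemtyp{\_}{\_}$ is defined to bundle the naturality equalities as a conjunct and every semantic lemma was proved to carry that conjunct through, the conjuncts are threaded automatically.

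The only point where the mutual structure is genuinely used is the rule for $t[\vsigma]$: its derivation has one premise $\mtyping[\vDelta] t T$ about a term and one premise $\mtyping{\vsigma}{\vDelta}$ about a substitution, so the term part of the IH supplies $\mSemtyp[\vDelta] t T$, the substitution part supplies $\mSemtyp{\vsigma}{\vDelta}$, and the semantic rule for substitution application closes the goal. All remaining cases are similarly immediate — in particular the $\tunbox$ and lift cases, where the interaction of truncation, truncation offset, and $\mt{\_}$ with $\intp{\_}$ and with $\kappa$ has already been absorbed into the relevant lemmas (via \Cref{lem:L-intp-vsigma-vrho,lem:trunc-intp-vsigma-vrho,lem:L-extract,lem:trunc-extract} and the naturality equalities), so no fresh calculation remains.

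The main obstacle, to the extent there is one, is purely organizational rather than mathematical: one must be careful that the induction is on the \emph{typing derivation} and not on raw syntax, so that the premise of each semantic lemma is exactly what an induction hypothesis provides, and one must keep track of the two mutually defined semantic judgments (for $\Exp$ and for $\Substs$) so that the appeal in the $t[\vsigma]$, lift, and composition cases is to the correct one. Given that, I would present the proof compactly: "By mutual induction on the derivations of $\mtyping t T$ and $\mtyping{\vsigma}{\vDelta}$; each case follows immediately from the corresponding semantic typing lemma applied to the induction hypotheses," spelling out only the $t[\vsigma]$ case to make visible where both halves of the mutual induction are invoked.
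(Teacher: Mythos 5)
Your proposal matches the paper's (implicit) proof: the paper states the theorem immediately after establishing one semantic typing lemma per syntactic rule, so the intended argument is exactly the mutual induction on the derivations of $\mtyping t T$ and $\mtyping{\vsigma}{\vDelta}$ that you describe, with the $t[\vsigma]$, lift, and composition cases drawing on both halves of the induction. One small correction: the gluing judgment $\Vdash$ used for soundness does not bundle naturality equalities as a conjunct --- that belongs to the PER-model judgment $\vDash$ used for completeness, and the paper even notes the soundness proof does not rely on naturality for $S \func T$ --- but since this only removes an obligation, your argument is unaffected.
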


We also need to relate $\vect \id$ and $\uparrow^{\vGamma}$:
\begin{lemma}
  $\vect I \sim \uparrow^{\vGamma} \in \glu{\vGamma}_{\vGamma}$.
\end{lemma}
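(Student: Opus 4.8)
The plan is to prove this by induction on the context stack $\vGamma$, following the recursive shape of the gluing relation $\glu{-}$. In either clause the typing conjunct $\mtyping[\vGamma]{\vect I}{\vGamma}$ holds by the formation rule for $\vect I$, so the substance is entirely in the variable obligations, which I would reduce to \Cref{lem:glu-varbot} together with the realizability theorem for the gluing model ($\underline{T}_{\vGamma} \subseteq \glu{T}_{\vGamma}$).

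For the base case $\vGamma = \epsilon; \Gamma$ the remaining goal is $x[\vect I] \sim (\uparrow^{\epsilon;\Gamma})(0)(x) \in \glu{T}_{\epsilon;\Gamma}$ for each $x : T \in \Gamma$, where $(\uparrow^{\epsilon;\Gamma})(0)(x) = \uparrow^T(\alpha^{-1}(x))$ by definition of $\uparrow$. Writing $\Gamma = \Gamma_1, x : T, \Gamma_2$, I would apply \Cref{lem:glu-varbot} at the stack $\epsilon; (\Gamma_1, x : T)$ to get $x \sim \uparrow^T(\alpha^{-1}(x)) \in \underline{T}_{\epsilon; (\Gamma_1, x : T)}$, promote it into $\glu{T}$ by realizability, transport it along the restricted weakening $\vsigma : \epsilon; \Gamma \To_r \epsilon; (\Gamma_1, x : T)$ that drops $\Gamma_2$ via \Cref{lem:glue-mon'}, and conclude with \Cref{lem:glue-resp-equiv'}. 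The rewriting uses $x[\vsigma] \approx x$ (since $\vsigma$ only weakens away $\Gamma_2$, never $x$) and $\uparrow^T(\alpha^{-1}(x))[\vsigma] = \uparrow^T(\alpha^{-1}(x)[\vone]) = \uparrow^T(\alpha^{-1}(x))$, because $\mt{\vsigma} = \vone$ and $\alpha^{-1}(x)$ is unchanged as $x$ keeps its position inside its context.

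For the inductive step $\vGamma = \vGamma'; \Gamma$ the variable obligations for $x : T \in \Gamma$ are discharged exactly as above with ambient stack $\vGamma'; \Gamma$, and the remaining demands of $\glu{\vGamma'; \Gamma}_{\vGamma'; \Gamma}$ are $\Ltotal{\vect I}{1} = k$ and $\trunc{\vect I}{1} \sim \trunc{(\uparrow^{\vGamma'; \Gamma})}{1} \in \glu{\vGamma'}_{\trunc{(\vGamma'; \Gamma)}{k}}$, where $(k, \rho) := (\uparrow^{\vGamma'; \Gamma})(0)$. Here $k = 1$ and $\Ltotal{\vect I}{1} = 1$ are immediate from the definitions of $\uparrow^{\vGamma'; \Gamma}$ and of truncation offset on $\vect I$; $\trunc{\vect I}{1} = \vect I$; $\trunc{(\vGamma'; \Gamma)}{1} = \vGamma'$; and a short unfolding of $\uparrow$ and of environment truncation gives $\trunc{(\uparrow^{\vGamma'; \Gamma})}{1} = \uparrow^{\vGamma'}$. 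The residual goal $\vect I \sim \uparrow^{\vGamma'} \in \glu{\vGamma'}_{\vGamma'}$ is then precisely the induction hypothesis.

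I expect the only delicate part to be the variable clause: choosing the restricted weakening that reaches $\glu{T}_{\vGamma}$ from \Cref{lem:glu-varbot} and verifying that it leaves $\alpha^{-1}(x)$ fixed (because its extracted modal transformation is $\vone$) while dropping only the context material around $x$, so that $x[\vsigma] \approx x$ and \Cref{lem:glue-resp-equiv'} is applicable. The structural checks — the offset equality, the two truncations, and $\trunc{(\uparrow^{\vGamma'; \Gamma})}{1} = \uparrow^{\vGamma'}$ — are routine calculations on the definitions, in direct analogy with the presheaf-model version of this lemma established earlier.
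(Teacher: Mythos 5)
Your proof is correct and follows the same route as the paper: the paper's own argument is exactly induction on $\vGamma$ together with \Cref{lem:glu-varbot}, with the remaining steps (realizability to pass from $\underline{T}$ to $\glu{T}$, transport along a restricted weakening via \Cref{lem:glue-mon'} and \Cref{lem:glue-resp-equiv'}, and the routine offset/truncation checks) left implicit under the word ``immediate.'' Your write-up simply makes those hidden details explicit.
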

\begin{proof}
  Immediate by induction on $\vGamma$ and apply \Cref{lem:glu-varbot}. 
\end{proof}

\begin{theorem}
  (soundness) If $\mtyping t T$, then $\mtyequiv{t}{\nbe^T_{\vGamma}(t)}T$. 
\end{theorem}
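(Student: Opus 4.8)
The plan is to mirror the soundness argument already used for the presheaf model, now driven by the gluing model over the untyped domain and its realizability theorem. First I would invoke the fundamental theorem for terms: from $\mtyping t T$ we obtain $\mSemtyp t T$, i.e.\ $t[\vsigma] \sim \intp{t}(\vrho) \in \glu{T}_{\vDelta}$ for every $\vsigma \sim \vrho \in \glu{\vGamma}_{\vDelta}$. Instantiating this with the identity pair supplied by the lemma $\vect I \sim \uparrow^{\vGamma} \in \glu{\vGamma}_{\vGamma}$ yields $t[\vect I] \sim \intp{t}(\uparrow^{\vGamma}) \in \glu{T}_{\vGamma}$.

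Next I would push this membership through the second inclusion of the Realizability theorem, $\glu{T}_{\vGamma} \subseteq \overline{T}_{\vGamma}$, obtaining $t[\vect I] \sim \intp{t}(\uparrow^{\vGamma}) \in \overline{T}_{\vGamma}$. Unfolding the definition of $\overline{T}_{\vGamma}$ and instantiating its quantifier over restricted weakenings with $\vect I : \vGamma \To_r \vGamma$ (which is restricted by clause~1 of the definition of restricted weakening), I get
\begin{align*}
  \mtyequiv[\vGamma]{t[\vect I][\vect I]}{\Rnf_{\alpha}(\downarrow^T(\intp{t}(\uparrow^{\vGamma})[\vect I]))}{T}.
\end{align*}
It then remains to simplify both sides. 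The left-hand side reduces to $t$ by two applications of the $t[\vect I] \approx t$ rule together with transitivity, and the domain value on the right is unchanged since $\mt{\vect I} = \vone$ and $a[\vone] = a$, so the right-hand side is exactly $\Rnf_{\alpha}(\downarrow^T(\intp{t}(\uparrow^{\vGamma}))) = \nbe^T_{\vGamma}(t)$, where $\alpha$ is the function canonically determined by $\vGamma$ as in the definition of $\nbe$. Applying symmetry (and, if one prefers to route it through $\glu{T}$, \Cref{lem:glue-resp-equiv'}) gives $\mtyequiv{t}{\nbe^T_{\vGamma}(t)}{T}$, as desired.

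I do not expect a real obstacle here: all the heavy lifting is contained in the fundamental theorem and the Realizability theorem, both already established. The only points requiring a line of care are the bookkeeping facts that $\vect I$ is a restricted weakening, that $\vect I$ acts as the identity both syntactically ($t[\vect I] \approx t$) and on domain values ($a[\vone] = a$, via $\mt{\vect I} = \vone$), and that the $\alpha$ implicitly carried by $\overline{T}_{\vGamma}$ is the same one appearing in the definition of $\nbe^T_{\vGamma}$; each of these is immediate from the earlier development.
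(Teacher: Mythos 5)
Your proposal is correct and follows essentially the same route as the paper: instantiate the fundamental theorem at $\vect I \sim \uparrow^{\vGamma} \in \glu{\vGamma}_{\vGamma}$, then use realizability ($\glu{T}_{\vGamma} \subseteq \overline{T}_{\vGamma}$) together with the identity facts $t[\vect I] \approx t$, $\mt{\vect I} = \vone$, $a[\vone]=a$ (equivalently, \Cref{lem:glue-resp-equiv'}) to read off $\mtyequiv{t}{\nbe^T_{\vGamma}(t)}{T}$. Your write-up merely unfolds explicitly the bookkeeping that the paper's two-line proof leaves implicit.
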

\begin{proof}
  Applying the fundamental theorem, we have $t[\vect I] \sim
  \intp{t}_{\vGamma}(\uparrow^{\vGamma})\in \glu{T}_{\vGamma}$.  We conclude the goal
  by applying \Cref{lem:glue-resp-equiv'} and realizability. 
\end{proof}

\subsection{Details on De Bruijn Indices}\labeledit{sec:st:debruijn}

In the formulation above, $\alpha$ in the readback functions is somewhat mysterious:
indeed, our naming convention is mysterious after all. On the flip side, if we
concretize the naming convention, then we can say more about $\alpha$. In this section,
we consider names as de Bruijn indices.

With de Bruijn indices, given $\vGamma; (\Gamma, x : T, \Delta)$, $x$ is
represented by the number $|\Delta|$ in the syntax.  As discussed in \Cref{sec:st:rb},
in the domain, $x$ corresponds to $|\Gamma|$ in the domain. Therefore, given any
variable $x$ in the current world, its de Bruijn index $n$ and its domain variable $m$ must sum to
$|\Gamma, x : T, \Delta| - 1$. We can use this relation to keep
track of conversion between two representations. This suggests us to rewrite the
readback functions to
\begin{align*}
  \Rnf &: \vect \N \rightharpoonup D^{\Nf} \rightharpoonup \Nf \\
  \Rnf_{\vect n} (\downarrow^{\square T} (a))
       &:= \boxit \Rnf_{\vect n; 0} (\downarrow^{T} (\tunbox \cdot (1, a)))
  \\
  \Rnf_{\vect n; n} (\downarrow^{S \func T} (a))
       &:= \lambda x. \Rnf_{\vect n; n + 1} (\downarrow^T (a \cdot \uparrow^S(n))) \\
  \Rnf_{\vect n} (\downarrow^B (\uparrow^B (c)))
       &:= \Rne_{\vect n} (c) \\
  \\
  \Rne &: \vect \N \rightharpoonup D^{\Ne} \rightharpoonup \Ne \\
  \Rne_{\vect n; n}(z)
       &:= n - z - 1
  \tag{if $n - z - 1 < 0$, we return $0$ instead}\\
  \Rne_{\vect n}(c\ d)
       &:= \Rne_{\vect n} (c)\ \Rnf_{\vect n}(d) \\
  \Rne_{\vect n}(\tunbox(k, c))
       &:= \unbox{k}{\Rne_{\trunc{\vect n} k}(c)}
\end{align*}
Instead of some partial function $\N \rightharpoonup \Var$, we use more concrete $\vect
\N$ to keep track of the length of each context in the context stack. Since there must
be at least one context in the stack, we can assume all $\vect \N$ involved must be
nonempty. We use this knowledge in the function case, in which we increment the
topmost number in the number stack by one, and the variable case, in which we compute
a de Bruijn index from a domain variable. In the $\square T$ case, since we are
adding an empty context to the stack, the number stack also grows by one element which
is $0$, reflecting the newly added context is empty. Symmetrically, in the $\tunbox$
case, we should truncate the number stack in the recursive call to $\vect n|_k$ to
ensure that variables in the same worlds correspond. 

With de Bruijn indices, we can redefine $\nbe$ as
\begin{definition}
  The NbE algorithm is defined to be
  \begin{align*}
    \nbe_{\vGamma}^T(t) := \Rnf_{\tmap(\Gamma \mapsto |\Gamma|, \vGamma)}(\downarrow^T (\intp{t}(\uparrow^{\vGamma})))
  \end{align*}
\end{definition}
In this definition, we can see that indeed every context stack $\vGamma$ uniquely
determines a number stack as $\tmap(\Gamma \mapsto |\Gamma|, \vGamma)$, which indeed
satisfies our earlier claim.

The completeness and soundness proofs stay essentially the same. Minor adjustments are
needed due to change of readback functions, but they should be abstracted away
by realizability theorems.  Our formalization is based on de Bruijn indices and fully
checked by Agda. It can be found at
\url{https://gitlab.com/JasonHuZS/practice/-/tree/master/Unbox}.

\section{\mintslang: A Dependently Typed Extension}

In this section, we extend the type theories discussed in the previous sections with
dependent types. In the next section, we examine its meta-theory. In the sections
after this, we give its NbE algorithm and the completeness and soundness proofs. We
introduce \mintslang, a \textbf{M}odal \textbf{IN}tuitionistic \textbf{T}ype theory.
It is an extension of Martin-L\"of's intuitionistic type theory
with $\square$. It supports large elimination and a full cumulative
universe hierarchy. 
\begin{alignat*}{2}
  i, j, k, l, m, n & && \tag{given natural numbers, $\mathbb{N}$} \\
  x, y & && \tag{variables, $\Var$} \\
  s, t, u, M, S, T &:=&&\ x \tag{Terms, $\Trm$} \\
  & && \sep \Nat \sep \square T \sep 
  \Pi(x : S). T \tag{types}  \\
  & && \sep \Se_i \tag{universe} \\
  & && \sep \ze \sep \su t \sep \elimn{x.M}{s}{x, y.u}{t} \tag{natural numbers $\Nat$} \\
  & && \sep \boxit t \sep \unbox n t
  \tag{box}\\
  & &&\sep \lambda x. t \sep s\ t \tag{dependent functions}  \\
  & &&\sep t[\vsigma] \tag{substitution}  \\
  \vsigma, \vdelta &:= &&\ \vect I \sep \vsigma, t/x \sep \wk_x
  \sep \sextt \vsigma n \sep \vsigma \circ \vdelta \tag{Unified substitution, \Substs} \\
  \Gamma, \Delta, \Phi &:= &&\ \cdot \sep \Gamma, x : T
  \tag{Contexts, \Ctx}\\
  \vGamma, \vDelta &:= &&\ \epsilon \sep \vGamma; \Gamma \tag{Context
    stacks, $\vect{\Ctx}$} \\
  w, W &:= &&\ v \sep \Nat \sep \Se_i \sep \square W \sep \Pi(x : W). W' 
  \tag{Normal form ($\Nf$)}  \\
  & && \sep \ze \sep \su w \sep \boxit w \sep 
  \lambda x. w
  \\
  v, V &:= &&\ x \sep \elimn{x.w}{w'}{x,y.w''}{v} \sep \unbox n v \sep v\ w
  \tag{Neutral form ($\Ne$)}
\end{alignat*}

Since we are working in a dependently typed system, the distinction between terms and
types are gone. We use $S$ and $T$ to represent terms that are intended to be
types. $s$, $t$ and $u$ are terms of some types. We use $M$ exclusively for motives,
which are needed for the elimination principle of object-level natural numbers. In
general, we always use upper letters to represent those intended to be types and lower
letters to represent those intended to be terms. Otherwise, one finds many
resemblances from the simply typed case. We next move on to defining the rules.

In the judgments below, we use
\begin{align*}
  \vdash \vGamma
\end{align*}
to denote well-formedness of the context stack $\vGamma$. We use
\begin{align*}
  \mtyping t T
\end{align*}
to denote a term $t$ has type $T$ in the context stack $\vGamma$. When $T$ is $\Se_n$
for some $n$, it denotes the formation of $t$ as a type. Moreover, we assume $\vGamma$
to have length at least $1$. When stating the well-formedness of a type, we sometimes
do not care about the universe level it resides in, and thus we use
\begin{align*}
  \mjudge T := \mtyping{T}{\Se_i}
\end{align*}
for some $i$. Moreover, 
\begin{align*}
  \mtyequiv{t}{t'}T
\end{align*}
denotes terms $t$ and $t'$ of type $T$ are equivalent in the context stack
$\vGamma$.  Similarly, we define
\begin{align*}
  \mjudge T \approx T' := \mtyequiv{T}{T'}{\Se_i}
\end{align*}
Through this judgment, we can judge whether two types are equivalent. This
can be generalized to the whole context stack:
\begin{mathpar}
  \vdash \vGamma \approx \vGamma'
\end{mathpar}
The following judgments have similar meanings:
\begin{mathpar}
  \mtyping{\vsigma}{\vDelta}

  \mtyequiv{\vsigma}{\vsigma'}{\vDelta} 
\end{mathpar}
except that they are for substitutions.

We sometimes use iterative syntax for $\Pi$ types. That is,
\begin{align*}
  \Pi(x_1 : S_1, \cdots, x_n : S_n). T := \Pi(x_1: S_1). \cdots \Pi(x_n : S_n). T
\end{align*}

\subsection{Well-formedness of Context Stacks}

We use the judgment
\begin{align*}
  &\vdash \vGamma
\end{align*}
to denote that the context stack $\vGamma$ is well-formed. It is
given by the following judgments:
\begin{mathpar}
  \inferrule
  { }
  {\vdash \epsilon; \cdot}

  \inferrule
  {\vdash \vGamma}
  {\vdash \vGamma; \cdot}

  \inferrule
  {\vdash \vGamma; \Gamma  \\ \mtyping[\vGamma; \Gamma]{T}{\Se_i}}
  {\vdash \vGamma; (\Gamma, x : T)}
\end{mathpar}
Note that the base case means that we must start with a singleton stack containing the
empty context. Then we can choose to move to the next world or add a type to the local
world.  These rules capture the assumption we mentioned before that well-formed
context stacks are always non-empty:
\begin{lemma}
  If $\vdash \vGamma$, then $|\vGamma| > 0$. 
\end{lemma}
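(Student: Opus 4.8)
The plan is to prove the statement ``if $\vdash \vGamma$, then $|\vGamma| > 0$'' by a straightforward induction on the derivation of $\vdash \vGamma$, following the three rules for well-formedness of context stacks.

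First I would set up the induction on the structure of the derivation $\vdash \vGamma$. There are exactly three rules that can conclude such a judgment. For the base case, the rule with no premises concludes $\vdash \epsilon; \cdot$; here $\vGamma = \epsilon; \cdot$, which by definition of the $;$ operator on context stacks has length $1$ (one context, namely the empty context $\cdot$, appended to the empty stack $\epsilon$), so $|\vGamma| = 1 > 0$. For the second rule, from $\vdash \vGamma'$ we conclude $\vdash \vGamma'; \cdot$; then $|\vGamma'; \cdot| = |\vGamma'| + 1 \geq 1 > 0$ regardless of the induction hypothesis, though the IH gives $|\vGamma'| > 0$ as well. For the third rule, from $\vdash \vGamma'; \Gamma$ and $\mtyping[\vGamma'; \Gamma]{T}{\Se_i}$ we conclude $\vdash \vGamma'; (\Gamma, x : T)$; the context stack $\vGamma'; (\Gamma, x : T)$ has the same number of contexts as $\vGamma'; \Gamma$, which is $> 0$ by the induction hypothesis (or simply because the outermost constructor is a $;$, so the length is at least $1$).

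In all three cases the conclusion $|\vGamma| > 0$ holds, completing the induction. I would note that the essential observation is purely structural: every rule for $\vdash \vGamma$ produces a context stack whose top-level shape is $\vGamma'; \Gamma$ for some (possibly empty) $\vGamma'$ and some context $\Gamma$, and by the definition of length such a stack always has at least one context. There is no real obstacle here — the only thing to be careful about is that $\epsilon$ alone is not a well-formed context stack (it is never the conclusion of any rule), which is exactly what makes the bound strict; one should check that the base case genuinely starts from $\epsilon; \cdot$ rather than $\epsilon$, and indeed it does. Thus the proof is immediate by induction on $\vdash \vGamma$.

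\begin{proof}
  Induction on the derivation of $\vdash \vGamma$. In each of the three rules the
  resulting stack has the shape $\vGamma'; \Gamma$, so $|\vGamma| \ge 1 > 0$; the base
  case gives $\vGamma = \epsilon; \cdot$ with $|\vGamma| = 1$.
\end{proof}
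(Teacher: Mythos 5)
Your proof is correct and matches what the paper treats as immediate: the paper states this lemma without proof, and your case analysis on the last rule (every conclusion of the well-formedness judgment has the form $\vGamma'; \Gamma$, hence length at least $1$) is exactly the intended argument. As you note, the induction hypothesis is not even needed, since the structural observation about the shape of each rule's conclusion already suffices.
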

This means that we can always assume there is a top-level context whenever necessary. 

We also need the equivalence between contexts because this system has explicit substitutions:
\begin{mathpar}
  \inferrule
  { }
  {\vdash \epsilon; \cdot \approx \epsilon; \cdot}

  \inferrule
  {\vdash \vGamma \approx \vDelta}
  {\vdash \vGamma; \cdot \approx \vDelta; \cdot}

  \inferrule
  {\vdash \vGamma; \Gamma \approx \vDelta; \Delta  \\
    \mtyequiv[\vGamma; \Gamma]{T}{T'}{\Se_i} \\\\
    \boxed{\mtyequiv[\vDelta; \Delta]{T}{T'}{\Se_i}} \\
    \boxed{\mtyping[\vGamma; \Gamma]{T}{\Se_i}} \\
    \boxed{\mtyping[\vDelta; \Delta]{T'}{\Se_i}}}
  {\vdash \vGamma; (\Gamma, x : T) \approx \vDelta; (\Delta, x : T')}
\end{mathpar}
We surround some premises by boxes because they are not technically necessary. We add
them here in order to prove some structural properties and after that we will be able
to remove these boxed premises. Some boxed premises are also needed much later when we
establish the soundness proof of the NbE algorithm. We will use them until the end of discussion about
syntactic properties and then they will be removed. 

\subsection{Conversion}

We admit the conversion rule which effectively take quotient on types over
convertibility:
\begin{mathpar}
  \inferrule
  {\mtyping{t}{T} \\ \mtyequiv{T}{T'}{\Se_i}}
  {\mtyping{t}{T'}}

  \inferrule
  {\mtyequiv{t}{t'}{T} \\ \mtyequiv{T}{T'}{\Se_i}}
  {\mtyequiv{t}{t'}{T'}}
\end{mathpar}

\subsection{Substitution}

Following the simply typed case, the dependently typed system also uses explicit
substitutions. This requires truncation and truncation offset defined as in
\Cref{sec:st:subst-calc}. Next we define the rules:
\begin{mathpar}
  \inferrule
  {\mtyping[\vDelta]t T \\ \mtyping{\vsigma}{\vDelta}}
  {\mtyping{t[\vsigma]}{T[\vsigma]}}

  \inferrule
  {\mtyequiv[\vDelta]{t}{t'}T \\ \mtyequiv{\vsigma}{\vsigma'}{\vDelta}}
  {\mtyequiv{t[\vsigma]}{t'[\vsigma']}{T[\vsigma]}}
\end{mathpar}
The following judgments specify well-formed unified substitutions. Adjustments are
made due to the dependently typed nature:
\begin{mathpar}
  \inferrule
  {\vdash \vGamma}
  {\mtyping{\vect I}{\vGamma}}

  \inferrule
  {\vdash \vGamma; \Gamma, x : T}
  {\mtyping[\vGamma; \Gamma, x : T]{\wk_x}{\vGamma; \Gamma}}

  \inferrule
  {\mtyping{\vsigma}{\vGamma'; \Gamma} \\ \mtyping[\vGamma';
    \Gamma]{T}{\Se_i} \\ \mtyping{t}{T[\vsigma]}}
  {\mtyping{\vsigma, t/x}{\vGamma';(\Gamma, x : T)}}

  \inferrule
  {\mtyping{\vsigma}{\vDelta} \\ \vdash \vGamma; \vGamma' \\ |\vGamma'| = n}
  {\mtyping[\vGamma; \vGamma']{\sextt\vsigma n}{\vDelta; \cdot}}

  \inferrule
  {\mtyping[\vGamma']{\vsigma}{\vGamma''} \\ \mtyping{\vdelta}{\vGamma'}}
  {\mtyping{\vsigma \circ \vdelta}{\vGamma''}}
\end{mathpar}
The following are the congruence rules:
\begin{mathpar}
  \inferrule
  {\vdash \vGamma}
  {\mtyequiv{\vect I}{\vect I}{\vGamma}}

  \inferrule
  {\vdash \vGamma; \Gamma, x : T}
  {\mtyequiv[\vGamma; \Gamma, x : T]{\wk_x}{\wk_x}{\vGamma; \Gamma}}

  \inferrule
  {\mtyequiv{\vsigma}{\vsigma'}{\vGamma'; \Gamma} \\ \mtyping[\vGamma';
    \Gamma]{T}{\Se_i} \\ \mtyequiv{t}{t'}{T[\vsigma]}}
  {\mtyequiv{\vsigma, t/x}{\vsigma, t'/x}{\vGamma';(\Gamma, x : T)}}

  \inferrule
  {\mtyequiv{\vsigma}{\vsigma'}{\vDelta} \\ \vdash \vGamma; \vGamma' \\ |\vGamma'| = n}
  {\mtyequiv[\vGamma; \vGamma']{\sextt \vsigma n}{\sextt{\vsigma'} n}{\vDelta; \cdot}}

  \inferrule
  {\mtyequiv[\vGamma']{\vsigma}{\vsigma'}{\vGamma''} \\ \mtyequiv{\vdelta}{\vdelta'}{\vGamma'}}
  {\mtyequiv{\vsigma \circ \vdelta}{\vsigma' \circ \vdelta'}{\vGamma''}}
\end{mathpar}
We also need a conversion rule for substitution:
\begin{mathpar}
  \inferrule
  {\mtyping{\vsigma}{\vDelta} \\ \vdash \vDelta \approx \vDelta'}
  {\mtyping{\vsigma}{\vDelta'}}

  \inferrule
  {\mtyequiv{\vsigma}{\vsigma'}{\vDelta} \\ \vdash \vDelta \approx \vDelta'}
  {\mtyequiv{\vsigma}{\vsigma'}{\vDelta'}}
\end{mathpar}

Application of substitution satisfies the following equivalence:
\begin{mathpar}
  \inferrule
  {\mtyping t T}
  {\mtyequiv{t[\vect I]}{t}{T}}

  \inferrule
  {\mtyping[\vGamma']{\vsigma}{\vGamma''} \\ \mtyping[\vGamma]{\vdelta}{\vGamma'} \\
  \mtyping[\vGamma'']{t}{T}}
  {\mtyequiv{t[\vsigma \circ \vdelta]}{t[\vsigma][\vdelta]}{T[\vsigma \circ \vdelta]}}
\end{mathpar}

Substitution satisfies the PER rules (omitted) and the following algebraic rules:
\begin{mathpar}
  \inferrule
  {\mtyping{\vsigma}{\vDelta}}
  {\mtyequiv{\vsigma \circ \vect I}{\vsigma}{\vDelta}}

  \inferrule
  {\mtyping{\vsigma}{\vDelta}}
  {\mtyequiv{\vect I \circ \vsigma}{\vsigma}{\vDelta}}

  \inferrule
  {\mtyping[\vGamma'']{\vsigma''}{\vGamma'''} \\ \mtyping[\vGamma']{\vsigma'}{\vGamma''} \\ \mtyping{\vsigma}{\vGamma'}}
  {\mtyequiv{(\vsigma'' \circ \vsigma') \circ \vsigma}{\vsigma'' \circ (\vsigma' \circ \vsigma)}{\vGamma'''}}
\end{mathpar}
We next consider composition:
\begin{mathpar}
  \inferrule
  {\mtyping[\vGamma']{\vsigma}{\vGamma''; \Gamma} \\ \mtyping[\vGamma';
    \Gamma]{T}{\Se_i} \\
    \mtyping[\vGamma']{t}{T[\vsigma]} \\ \mtyping{\vdelta}{\vGamma'}}
  {\mtyequiv{\vsigma, t/x \circ \vdelta}{(\vsigma \circ \vdelta), t[\vdelta]/x}{\vGamma''; (\Gamma, x : T)}}

  \inferrule
  {\mtyping{\vsigma}{\vDelta} \\ \mtyping[\vDelta]{T}{\Se_i} \\ \mtyping{t}{T[\vsigma]}}
  {\mtyequiv{\wk_x \circ (\vsigma, t/x)}{\vsigma}{\vDelta}}
  
  \inferrule
  {\mtyping{\vsigma}{\vGamma'} \\ \mtyping[\vGamma'']{\vdelta}{\vGamma;\vDelta} \\
    |\vDelta| = n \\
  \boxed{\vdash \vGamma;\vDelta}}
  {\mtyequiv[\vGamma'']{(\sextt\vsigma n) \circ \vdelta}{\sextt{(\vsigma \circ \trunc\vdelta n)}
      {\Ltotal\vdelta n} }{\vGamma';\cdot}}
\end{mathpar}

At last, we consider the extensionality principles which stipulates how syntax should
interact:
\begin{mathpar}
  \inferrule
  {\mtyping[\vGamma']{\vsigma}{\vGamma; (\Gamma, x : T)}}
  {\mtyequiv[\vGamma']{\vsigma}{(\wk_x \circ \vsigma), x[\vsigma]/x}{\vGamma; (\Gamma, x : T)}}

  \inferrule
  {\mtyping{\vsigma}{\vDelta; \cdot} \\ |\vDelta| > 0}
  {\mtyequiv{\vsigma}{\sextt{\trunc\vsigma 1}{\Ltotal\vsigma 1}}{\vDelta; \cdot}}
\end{mathpar}

\subsection{Variables}

Variables are well-typed if they are bound:
\begin{mathpar}
  \inferrule
  {\vdash \vGamma; \Gamma \\ x : T \in \vGamma; \Gamma}
  {\mtyping[\vGamma; \Gamma]x T}

  \inferrule
  {\vdash \vGamma; \Gamma \\ x : T \in \vGamma; \Gamma}
  {\mtyequiv[\vGamma; \Gamma]x x T}

  \inferrule
  {\vdash \vGamma; (\Gamma, x : T) \\ y : T' \in \vGamma; \Gamma}
  {\mtyequiv[\vGamma; \Gamma]{y[\wk_x]}y{T'[\wk_x]}}
\end{mathpar}
The binding judgment $x : T \in \vGamma; \Gamma$ must be adjusted due to dependent types:
\begin{mathpar}
  \inferrule
  { }
  {x : T[\wk_x] \in \vGamma; \Gamma, x : T}

  \inferrule
  {x : T \in \vGamma; \Gamma}
  {x : T[\wk_y] \in \vGamma; \Gamma, y : T'}
\end{mathpar}
Notice that this judgment never looks up beyond the topmost context, so only variables
in the current world can be directly accessed.

The following specify the application of substitution to variables:
\begin{mathpar}
  \inferrule
  {\mtyping{\vsigma}{\vGamma'; \Gamma} \\ \mtyping[\vGamma';\Gamma]{T}{\Se_i}  \\ \mtyping t T[\vsigma]}
  {\mtyequiv{x[\vsigma, t/x]}{t}{T[\vsigma]}}
 
  \inferrule
  {\mtyping{\vsigma}{\vGamma'; \Gamma} \\ \mtyping[\vGamma';\Gamma]{T'}{\Se_i} \\\\ \mtyping t T'[\vsigma] \\ y : T \in \vGamma';\Gamma}
  {\mtyequiv{y[\vsigma, t/x]}{y[\vsigma]}{T[\vsigma]}}
\end{mathpar}

\subsection{$\Pi$ Types}

$\Pi$ types generalize function spaces by allowing return types to depend on the input
arguments.  For each type to be defined from now on, we will formulate the formation,
construction and elimination rules, as well as the equivalence rules.
\begin{mathpar}
  \inferrule
  {\mtyping[\vGamma; \Gamma]{S}{\Se_i} \\ \mtyping[\vGamma; (\Gamma, x : S)]{T}{\Se_i}}
  {\mtyping[\vGamma; \Gamma]{\Pi(x : S). T}{\Se_{i}}}
  
  \inferrule
  {\boxed{\mtyping[\vGamma; \Gamma]{S}{\Se_i}} \\ \mtyping[\vGamma; (\Gamma, x : S)]{t}{T}}
  {\mtyping[\vGamma; \Gamma]{\lambda x. t}{\Pi(x : S). T}}

  \inferrule
  {\boxed{\mtyping[\vGamma; \Gamma]{S}{\Se_i}} \\
    \boxed{\mtyping[\vGamma; (\Gamma, x : S)]{T}{\Se_i}} \\
    \mtyping{t}{\Pi(x : S). T} \\ \mtyping{s}{S}}
  {\mtyping{t\ s}{T[\vect I, s/x]}}
\end{mathpar}
Since we will be working on cumulative universes, it is sufficient to stipulate that $S$
and $T$ have the same universe level in the formation rule. 

Now we define the equivalence rules. Among the congruence rules, the following one is
interesting as it requires a redundant premise:
\begin{mathpar}
  \inferrule
  {\boxed{\mtyping[\vGamma'; \Gamma]{S}{\Se_i}} \\
    \mtyequiv[\vGamma'; \Gamma]{S}{S'}{\Se_i} \\ \mtyequiv[\vGamma'; (\Gamma, x : S)]{T}{T'}{\Se_i}}
  {\mtyequiv[\vGamma'; \Gamma]{\Pi(x : S). T}{\Pi(x : S'). T'}{\Se_i}}
\end{mathpar}
We omit the rest of congruence rules. Moreover, we
have $\beta$ and $\eta$ equivalences:
\begin{mathpar}
  \inferrule
  {\boxed{\mtyping[\vGamma; \Gamma]{S}{\Se_i}} \\
    \boxed{\mtyping[\vGamma; (\Gamma, x : S)]{T}{\Se_i}} \\\\
    \mtyping[\vGamma; (\Gamma, x : S)]{t}{T} \\ \mtyping[\vGamma;\Gamma]{s}{S}}
  {\mtyequiv[\vGamma; \Gamma]{(\lambda x. t)\ s}{t[\vect I, x : s]}{T[\vect I, x : s]}}

  \inferrule
  {\boxed{\mtyping[\vGamma; \Gamma]{S}{\Se_i}} \\
    \boxed{\mtyping[\vGamma; (\Gamma, x : S)]{T}{\Se_i}} \\\\
    \mtyping{t}{\Pi(x : S). T}}
  {\mtyequiv{t}{\lambda x. t[\wk_x]\ x}{\Pi(x : S). T}}
\end{mathpar}
Given $\mtyping[\vGamma; \Gamma]{\vsigma}{\vDelta;\Delta}$ and $\mtyping[\vDelta;
\Delta]T{\Se_i}$, we define
\begin{align*}
  q(\vsigma, x) &: \vGamma; (\Gamma, x : T[\vsigma]) \To \vDelta;(\Delta, x : T) \\
  q(\vsigma, x) &:= (\vsigma \circ \wk_x), x / x
\end{align*}
So we can stipulate that substitutions apply as follows:
\begin{mathpar}
  \inferrule
  {\mtyping[\vGamma; \Gamma]{\vsigma}{\vDelta;\Delta} \\\\
    \mtyping[\vDelta;\Delta]{S}{\Se_i} \\ \mtyping[\vDelta;(\Delta, x : S)]{T}{\Se_i}}
  {\mtyequiv[\vGamma; \Gamma]{\Pi(x : S). T[\vsigma]}{\Pi(x :
      S[\vsigma]).(T[q(\vsigma, x)])}{\Se_{i}}}

  \inferrule
  {\mtyping[\vGamma; \Gamma]{\vsigma}{\vDelta;\Delta} \\
    \mtyping[\vDelta; \Delta, x : S]{t}{T}}
  {\mtyequiv[\vGamma; \Gamma]{\lambda x. t[\vsigma]}{\lambda x. (t[q(\vsigma, x)])}{\Pi(x : S). T[\vsigma]}}

  \inferrule
  {\boxed{\mtyping[\vDelta; \Delta]{S}{\Se_i}} \\
    \boxed{\mtyping[\vDelta; (\Delta, x : S)]{T}{\Se_i}} \\
    \mtyping{\vsigma}{\vDelta;\Delta} \\
    \mtyping[\vDelta;\Delta]{s}{\Pi(x : S). T} \\
    \mtyping[\vDelta;\Delta]{t}{S}}
  {\mtyequiv{s\ t[\vsigma]}{(s[\vsigma])\ (t[\vsigma])}{T[\vsigma, x : t[\vsigma]]}}
\end{mathpar}

\subsection{Natural Numbers}

We use natural numbers as a representative for algebraic data types. Generalization to
actual inductive data types should follow very closely to MLTT. 
\begin{mathpar}
  \inferrule
  {\vdash \vGamma}
  {\mtyping \Nat \Se_i}

  \inferrule
  {\vdash \vGamma}
  {\mtyping \ze \Nat}

  \inferrule
  {\mtyping t \Nat}
  {\mtyping{\su t}\Nat}

  \inferrule
  {\mtyping[\vGamma; (\Gamma, x : \Nat)]{M}{\Se_i} \\ \mtyping[\vGamma; \Gamma]{s}{M[\vect I, \ze/x]} \\
    \mtyping[\vGamma; (\Gamma, x : \Nat, y : M)]{u}{M[(\wk_x \circ \wk_y), \su x/x]} \\ \mtyping[\vGamma; \Gamma] t \Nat}
  {\mtyping[\vGamma; \Gamma]{\elimn {x.M} s {x,y.u} t}{M[\vect I, t/x]}}
\end{mathpar}

One might have detected from the formation rule that we are forming a cumulative
universe. This allows us to reason about a simpler type theory. We will expand the
discussion when talking about the universes.  Again, we omit the
congruence rules as they are immediate. The $\beta$ equivalence rules are
\begin{mathpar}
  \inferrule
  {\mtyping[\vGamma; (\Gamma, x : \Nat)]{M}{\Se_i} \\ \mtyping[\vGamma; \Gamma]{s}{M[\vect I, \ze/x]} \\
    \mtyping[\vGamma; (\Gamma, x : \Nat, y : M)]{u}{M[(\wk_x \circ \wk_y), \su x/x]}}
  {\mtyequiv[\vGamma; \Gamma]{\elimn {x.M} s {x,y.u} \ze}{s}{M[\vect I, \ze/x]}}  

  \inferrule
  {\mtyping[\vGamma; (\Gamma, x : \Nat)]{M}{\Se_i} \\ \mtyping[\vGamma; \Gamma]{s}{M[\vect I, \ze/x]} \\
    \mtyping[\vGamma; (\Gamma, x : \Nat, y : M)]{u}{M[(\wk_x \circ \wk_y), \su x/x]} \\ \mtyping[\vGamma; \Gamma] t \Nat}
  {\mtyequiv[\vGamma; \Gamma]{\elimn{x.M}s{x,y.u}{(\su t)}}{u[\vect I, t/x,
      \elimn {x.M} s {x,y.u} t/y]} {M[\vect I, \su t/x]}}
\end{mathpar}

Next we define applications of substitutions:
\begin{mathpar}
  \inferrule
  {\mtyping{\vsigma}{\vDelta}}
  {\mtyequiv{\Nat[\vsigma]}{\Nat}\Se_i}

  \inferrule
  {\mtyping{\vsigma}{\vDelta}}
  {\mtyequiv{\ze[\vsigma]}\ze \Nat}

  \inferrule
  {\mtyping{\vsigma}{\vDelta} \\ \mtyping[\vDelta] t \Nat}
  {\mtyequiv{\su t[\vsigma]}{\su{(t[\vsigma])}}\Nat}

  \inferrule
  {\mtyping{\vsigma}{\vDelta;\Delta} \\
    \mtyping[\vDelta; (\Delta, x : \Nat)]{M}{\Se_i} \\ \mtyping[\vDelta; \Delta]{s}{M[\vect I, \ze/x]} \\
    \mtyping[\vDelta; (\Delta, x : \Nat, y : M)]{u}{M[(\wk_x \circ \wk_y), \su x/x]} \\ \mtyping[\vDelta; \Delta] t \Nat}
  {\mtyequiv[\vGamma]{\elimn {x.M} s {x,y.u} t[\vsigma]}{\elimn
      {x.M[q(\vsigma, x)]} {(s[\vsigma])}
      {x,y.u[q(q(\vsigma, x), y)]}{(t[\vsigma])}}{M[\vsigma, t[\vsigma]/x]}}
\end{mathpar}

\subsection{Universes}

Universes are where types reside. They are in a predicative hierarchy and universes
are indexed by meta-level natural numbers, so that Russell's paradox does not
occur. In this note, we use cumulative universes, so that every type in a lower level
also resides in a higher level. In contrast, non-cumulativity ensures every term must
have a unique type. This property is not true with cumulativity. We work with
cumulativity primarily because then in the normalization algorithm we do not have to
always keep track of universe levels.
\begin{mathpar}
  \inferrule
  {\vdash \vGamma}
  {\mtyping{\Se_n}{\Se_{n + 1}}}

  \inferrule
  {\mtyping{T}{\Se_i}}
  {\mtyping{T}{\Se_{i + 1}}}

  \inferrule
  {\mtyequiv{T}{T'}{\Se_i}}
  {\mtyequiv{T}{T'}{\Se_{i + 1}}}

  \inferrule
  {\mtyping{\vsigma}{\vDelta}}
  {\mtyequiv{\Se_n[\vsigma]}{\Se_n}{\Se_{n + 1}}}
\end{mathpar}

\subsection{Modality}

So far, we were still in MLTT, as we only operate in the top-level context in the
stack. The necessity modality allows us to manipulate the context stack depending on
the range of \tunbox levels.
\begin{mathpar}
  \inferrule
  {\mtyping[\vGamma; \cdot]{T}{\Se_i}}
  {\mtyping{\square T}{\Se_i}}

  \inferrule
  {\mtyping[\vGamma; \cdot]{t}{T}}
  {\mtyping{\boxit t}{\square T}}

  \inferrule
  {\boxed{\mtyping[\vGamma; \cdot]{T}{\Se_i}} \\ \mtyping{t}{\square T} \\ \vdash \vGamma; \vDelta \\ |\vDelta| = n}
  {\mtyping[\vGamma; \vDelta]{\unbox n t}{T[\sextt{\vect I} n]}}
\end{mathpar}
The equivalence rules are
\begin{mathpar}
  \inferrule
  {\boxed{\mtyping[\vGamma; \cdot]{T}{\Se_i}} \\ \mtyping[\vGamma; \cdot]{t}{T} \\ \vdash \vGamma; \vDelta \\ |\vDelta| = n}
  {\mtyequiv[\vGamma; \vDelta]{\unbox n {(\boxit t)}}{t[\sextt{\vect I} n]}{T[\sextt{\vect I} n]}}  

  \inferrule
  {\boxed{\mtyping[\vGamma; \cdot]{T}{\Se_i}} \\ \mtyping[\vGamma]{t}{\square T}}
  {\mtyequiv{t}{\boxit{(\unbox 1 t)}}{\square T}}  
\end{mathpar}
Substitution-related rules are
\begin{mathpar}
  \inferrule
  {\mtyping[\vDelta; \cdot]{T}{\Se_i} \\ \mtyping{\vsigma}{\vDelta}}
  {\mtyequiv{\square T[\vsigma]}{\square (T[\sextt\vsigma 1])}{\Se_i}}

  \inferrule
  {\mtyping[\vDelta; \cdot]{t}{T} \\ \mtyping{\vsigma}{\vDelta}}
  {\mtyequiv{\boxit t[\vsigma]}{\boxit {(t[\sextt\vsigma 1])}}{\square T [\vsigma]}}

  \inferrule
  {\boxed{\mtyping[\vDelta; \cdot]{T}{\Se_i}} \\
    \mtyping[\vDelta]{t}{\square T} \\ |\vDelta'| = n \\ \mtyping{\vsigma}{\vDelta; \vDelta'}}
  {\mtyequiv{\unbox n t[\vsigma]}{\unbox{\Ltotal\vsigma n}{(t[\trunc\vsigma
        n])}}{T[\sextt{\trunc\vsigma n}{\Ltotal \vsigma n}]}}
\end{mathpar}

\section{Syntactic Properties of \mintslang}

In this section, we examine some syntactic properties for \mintslang. We will prove
presupposition at the end of this section to ensure we are working on a sensible type
theory. We first start with some small properties that are required to prove presupposition.

\subsection{Basic Properties}

\begin{lemma}
  If $\vdash \vGamma \approx \vGamma'$, then $|\vGamma| = |\vGamma'|$. 
\end{lemma}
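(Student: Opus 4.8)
The plan is to proceed by a direct induction on the derivation of $\vdash \vGamma \approx \vGamma'$, using the three formation rules for context-stack equivalence. The key observation is that $|\vGamma|$ counts only the number of contexts in the stack, not the contents of any individual context, so the rule that extends the topmost context with a new binding leaves the length unchanged, while the rule that pushes a fresh empty context increments both lengths in lockstep.

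Concretely, I would first handle the base case: the rule concluding $\vdash \epsilon; \cdot \approx \epsilon; \cdot$ gives two stacks of length $1$, so the equation holds immediately. Next, for the rule concluding $\vdash \vGamma; \cdot \approx \vDelta; \cdot$ from $\vdash \vGamma \approx \vDelta$, the induction hypothesis yields $|\vGamma| = |\vDelta|$, and therefore $|\vGamma; \cdot| = |\vGamma| + 1 = |\vDelta| + 1 = |\vDelta; \cdot|$. Finally, for the rule concluding $\vdash \vGamma; (\Gamma, x : T) \approx \vDelta; (\Delta, x : T')$ from the premise $\vdash \vGamma; \Gamma \approx \vDelta; \Delta$ (together with the type-equivalence side conditions, which play no role here), the induction hypothesis gives $|\vGamma; \Gamma| = |\vDelta; \Delta|$, and since adding a variable to the topmost context does not change the number of contexts, $|\vGamma; (\Gamma, x : T)| = |\vGamma; \Gamma| = |\vDelta; \Delta| = |\vDelta; (\Delta, x : T')|$.

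There is no real obstacle in this lemma; it is a routine structural induction and the only thing to be careful about is matching the three clauses of the inductive definition of $\vdash \vGamma \approx \vGamma'$ exactly, and noting that the boxed premises are irrelevant to the length count. I would simply record it as "Induction on the derivation of $\vdash \vGamma \approx \vGamma'$."
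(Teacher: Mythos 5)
Your proof is correct and is exactly the expected argument: a routine structural induction on the derivation of $\vdash \vGamma \approx \vGamma'$, where the empty-context-push rule increments both lengths and the binding-extension rule leaves them unchanged (the paper itself states the lemma without spelling out a proof). Nothing is missing.
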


\begin{lemma}
  If $\vdash \vGamma \approx \vGamma'$ and $n < |\vGamma|$, then $\vdash \trunc\vGamma
  n \approx \trunc{\vGamma'}n$.
\end{lemma}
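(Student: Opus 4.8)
The plan is to prove this by induction on the derivation of $\vdash \vGamma \approx \vGamma'$, with $n$ universally quantified. (Equivalently one can do a nested induction on $n$ and the derivation; the point is that inversion on $\vdash \vGamma \approx \vGamma'$ simultaneously pins down the shapes of both $\vGamma$ and $\vGamma'$.) When $n = 0$ the statement is immediate, since $\trunc\vGamma 0 = \vGamma$ and $\trunc{\vGamma'}0 = \vGamma'$; so we may assume $n = 1 + n'$, which forces $|\vGamma| \ge 2$ and hence rules out the base rule $\vdash \epsilon; \cdot \approx \epsilon; \cdot$.

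First I would handle the rule concluding $\vdash \vGamma_0; \cdot \approx \vGamma_0'; \cdot$ from $\vdash \vGamma_0 \approx \vGamma_0'$. Here $\trunc{(\vGamma_0; \cdot)}{1 + n'} = \trunc{\vGamma_0}{n'}$ and likewise on the primed side, and $n' < |\vGamma_0|$ because $n < |\vGamma| = |\vGamma_0| + 1$; so the result follows by applying the induction hypothesis to the subderivation $\vdash \vGamma_0 \approx \vGamma_0'$ at level $n'$.

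The only slightly delicate case is the rule concluding $\vdash \vGamma_0; (\Gamma, x : T) \approx \vGamma_0'; (\Delta, x : T')$ from the subderivation $\vdash \vGamma_0; \Gamma \approx \vGamma_0'; \Delta$ (the type-equivalence and well-formedness premises play no role here). We have $\trunc{(\vGamma_0; (\Gamma, x : T))}{1 + n'} = \trunc{\vGamma_0}{n'}$ and $\trunc{(\vGamma_0; \Gamma)}{1 + n'} = \trunc{\vGamma_0}{n'}$ — dropping $1 + n' \ge 1$ contexts discards the entire topmost context regardless of what it contains — and analogously for the primed stacks. Since the subderivation's stack $\vGamma_0; \Gamma$ has the same length as $\vGamma$, we still have $1 + n' < |\vGamma_0; \Gamma|$, so I would apply the induction hypothesis to the subderivation at the \emph{same} level $1 + n'$ (not $n'$); this yields $\vdash \trunc{(\vGamma_0; \Gamma)}{1 + n'} \approx \trunc{(\vGamma_0'; \Delta)}{1 + n'}$, which is literally $\vdash \trunc\vGamma{(1 + n')} \approx \trunc{\vGamma'}{(1 + n')}$.

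The main obstacle — really the only point requiring care — is this last step: one must resist reflexively invoking the induction hypothesis at $n' = n - 1$, which would be circular (it is essentially the $n = 1$ instance of the very lemma being proved); the correct move is to keep the truncation level at $n$ and exploit that the subderivation's stack differs from $\vGamma$ only in its top context, not in its number of contexts. Throughout, the already-established fact that $\vdash \vGamma \approx \vGamma'$ implies $|\vGamma| = |\vGamma'|$ guarantees that the truncations on the two sides are simultaneously well-defined.
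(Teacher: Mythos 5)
Your proof is correct. The only difference from the paper is the choice of induction measure: the paper simply says ``induction on $n$'', whereas you induct on the derivation of $\vdash \vGamma \approx \vGamma'$ with $n$ generalized. For this lemma the two are essentially interchangeable, but your formulation is arguably the cleaner one to spell out: a pure induction on $n$ still has to invert (or nest an induction on) the derivation in the step case, because dropping one context must peel off an arbitrary number of variable-extension rules before reaching the rule that actually shortens the stack; your key observation that $\trunc{(\vGamma_0; (\Gamma, x : T))}{(1+n')} = \trunc{(\vGamma_0; \Gamma)}{(1+n')}$, so the IH can be reused at the \emph{same} truncation level on the structurally smaller derivation, is exactly what makes that case go through without circularity. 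Your side remarks (using $|\vGamma| = |\vGamma'|$ for simultaneous well-definedness, and ruling out the base rule when $n \ge 1$) are also accurate.
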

\begin{proof}
  Induction on $n$. 
\end{proof}

\begin{lemma}\labeledit{lem:dt:ty-lookup}
  If $\vdash \vGamma$ and $x : T \in \vGamma$, then $\mjudge T$. 
\end{lemma}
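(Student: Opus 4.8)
The plan is to prove \Cref{lem:dt:ty-lookup} by induction on the derivation of the binding judgment $x : T \in \vGamma$, with a side induction as needed on the well-formedness derivation $\vdash \vGamma$. Since the excerpt's binding judgment for dependent types never looks beyond the topmost context, we may write $\vGamma = \vDelta; \Gamma$ and proceed by cases on how $x : T \in \vDelta; \Gamma$ was derived.

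First I would handle the base case, where the binding rule is
\begin{mathpar}
  \inferrule
  { }
  {x : T[\wk_x] \in \vDelta; \Gamma, x : T}
\end{mathpar}
so the actual type recorded is $T[\wk_x]$. From $\vdash \vDelta; (\Gamma, x : T)$ we invert the well-formedness rule for context extension to obtain $\vdash \vDelta; \Gamma$ and $\mtyping[\vDelta; \Gamma]{T}{\Se_i}$ for some $i$. We also have $\mtyping[\vDelta; (\Gamma, x : T)]{\wk_x}{\vDelta; \Gamma}$ directly from the well-formedness of $\vDelta; (\Gamma, x : T)$ via the $\wk_x$ typing rule. Applying the substitution typing rule to $\mtyping[\vDelta; \Gamma]{T}{\Se_i}$ along $\wk_x$ gives $\mtyping[\vDelta; (\Gamma, x : T)]{T[\wk_x]}{\Se_i[\wk_x]}$, and since $\Se_i[\wk_x] \approx \Se_i$ (the substitution rule for universes), the conversion rule yields $\mtyping[\vDelta; (\Gamma, x : T)]{T[\wk_x]}{\Se_i}$, i.e.\ $\mjudge[\vDelta; (\Gamma, x : T)]{T[\wk_x]}$, as required. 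The inductive case, where the rule is
\begin{mathpar}
  \inferrule
  {x : T \in \vDelta; \Gamma}
  {x : T[\wk_y] \in \vDelta; (\Gamma, y : T')}
\end{mathpar}
proceeds analogously: invert $\vdash \vDelta; (\Gamma, y : T')$ to get $\vdash \vDelta; \Gamma$, apply the IH to the premise to obtain $\mtyping[\vDelta; \Gamma]{T}{\Se_i}$, then push along $\wk_y$ and convert as before.

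The only subtlety, and the step I expect to be the main obstacle, is the invertibility of the well-formedness judgment: I need that every derivation of $\vdash \vDelta; (\Gamma, x : T)$ really does end with the context-extension rule and hence supplies $\mtyping[\vDelta; \Gamma]{T}{\Se_i}$. This is true because the well-formedness rules are syntax-directed on the shape of the context stack (the three rules cover $\epsilon; \cdot$, $\vGamma; \cdot$, and $\vGamma; (\Gamma, x : T)$ respectively, with no overlap), so inversion is immediate; I would state this as an obvious observation rather than a separate lemma. A secondary bookkeeping point is the harmless mismatch of universe levels between different lookups — this is absorbed by cumulativity and the conversion rule, exactly as in the base case. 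No results beyond those already in the excerpt (substitution typing, the $\wk_x$ typing rule, the universe substitution equivalence, and the conversion rule) are needed.
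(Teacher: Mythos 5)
Your proposal is correct and follows the paper's own argument: induction on the derivation of $x : T \in \vGamma$, inverting the well-formedness derivation $\vdash \vGamma$ to extract the typing of the stored type, then transporting along $\wk_x$ (resp.\ $\wk_y$) using the substitution typing rule, the universe substitution equation, and conversion. You simply spell out the bookkeeping the paper leaves implicit.
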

\begin{proof}
  We do induction on $x : T \in \vGamma$ and invert $\vdash \vGamma$.
\end{proof}

\begin{lemma}\labeledit{lem:dt:tyequiv-lookup}
  If $\vdash \vGamma \approx \vGamma'$, $x : T \in \vGamma$, then there exists $T'$,
  such that $x : T' \in \vGamma'$ and $\mjudge{T \approx T'}$.
\end{lemma}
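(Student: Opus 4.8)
The plan is to prove \Cref{lem:dt:tyequiv-lookup} by induction on the derivation of the context-stack equivalence $\vdash \vGamma \approx \vGamma'$, simultaneously threading through the hypothesis $x : T \in \vGamma$ and producing the witness $T'$. Since the binding judgment $x : T \in \vGamma$ only ever inspects the topmost context (as noted just after its definition), and the equivalence rules for context stacks are structurally aligned with those for well-formedness, the two derivations decompose in lockstep. I would first dispatch the two base-like cases: if $\vGamma \approx \vGamma'$ was derived from $\vdash \epsilon;\cdot \approx \epsilon;\cdot$, then $x : T \in \epsilon;\cdot$ is vacuous; if it was derived from $\vdash \vGamma_0;\cdot \approx \vDelta_0;\cdot$ with $\vdash \vGamma_0 \approx \vDelta_0$, then again $x : T \in \vGamma_0;\cdot$ is impossible because the topmost context is empty. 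So the only real case is the cons rule.

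In the cons case we have $\vGamma = \vGamma_0; (\Gamma, x_0 : S)$ and $\vGamma' = \vDelta_0; (\Delta, x_0 : S')$, derived from $\vdash \vGamma_0; \Gamma \approx \vDelta_0; \Delta$, $\mtyequiv[\vGamma_0;\Gamma]{S}{S'}{\Se_i}$, and the boxed premises (in particular $\mtyequiv[\vDelta_0;\Delta]{S}{S'}{\Se_i}$). Now I invert $x : T \in \vGamma$. There are two subcases matching the two rules for the binding judgment. If the variable looked up is $x_0$ itself, then $T = S[\wk_{x_0}]$; I take $T' := S'[\wk_{x_0}]$, observe $x_0 : S'[\wk_{x_0}] \in \vGamma'$ by the same rule, and need $\mjudge{S[\wk_{x_0}] \approx S'[\wk_{x_0}]}$. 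This follows from the congruence rule for substitution application, $\mtyequiv{\vsigma}{\vsigma'}{\vDelta} \Rightarrow \mtyequiv{t[\vsigma]}{t'[\vsigma']}{T[\vsigma]}$ applied with $\wk_{x_0}$ reflexively and $S \approx S'$, provided $\wk_{x_0}$ is well-typed into $\vGamma_0;\Gamma$ — which it is, since $\vdash \vGamma_0; (\Gamma, x_0 : S)$ is derivable (we may need the earlier-established fact that equivalence entails well-formedness of both sides, or simply use that $\mtyequiv[\vGamma_0;\Gamma]{S}{S'}{\Se_i}$ presupposes $\mtyping[\vGamma_0;\Gamma]{S}{\Se_i}$; the boxed premise already hands us well-typedness on both sides, which is precisely why the box is there).

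The second subcase is where $x \neq x_0$, so $T = T_0[\wk_{x_0}]$ with $x : T_0 \in \vGamma_0;\Gamma$. Here I apply the induction hypothesis to $\vdash \vGamma_0;\Gamma \approx \vDelta_0;\Delta$ and $x : T_0 \in \vGamma_0;\Gamma$, obtaining $T_0'$ with $x : T_0' \in \vDelta_0;\Delta$ and $\mjudge{T_0 \approx T_0'}$. I then set $T' := T_0'[\wk_{x_0}]$; the membership $x : T_0'[\wk_{x_0}] \in \vDelta_0; (\Delta, x_0 : S')$ holds by the weakening rule for the binding judgment, and $\mjudge{T_0[\wk_{x_0}] \approx T_0'[\wk_{x_0}]}$ again follows from the substitution congruence rule with $\wk_{x_0}$ reflexive. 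The one bookkeeping subtlety — and the main (minor) obstacle — is making sure $\wk_{x_0}$ is well-typed at the right context stack in each application of the congruence rule, i.e. that we have the relevant $\vdash$-judgments in hand; these come either directly from inverting the cons rule (using the boxed premises $\mtyping[\vGamma_0;\Gamma]{S}{\Se_i}$ and $\mtyping[\vDelta_0;\Delta]{S'}{\Se_i}$) or from the routine structural lemma that $\vdash \vGamma \approx \vGamma'$ implies $\vdash \vGamma$ and $\vdash \vGamma'$. Everything else is a direct match of rule shapes, so the proof is short once the induction is set up correctly.
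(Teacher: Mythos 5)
Your proof is correct and is essentially the paper's own argument: the paper inducts on $x : T \in \vGamma$ and inverts $\vdash \vGamma \approx \vGamma'$, whereas you induct on the equivalence and invert the binding, but since the two derivations decompose in lockstep this is an inessential swap. The substantive content—finding the corresponding binding in $\vGamma'$ and applying congruence under $\wk$ to obtain $\mjudge{T \approx T'}$—is the same as in the paper.
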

\begin{proof}
  We do induction on $x : T \in \vGamma$ and invert $\vdash \vGamma \approx \vGamma'$,
  so we find $x : T' \in \vGamma'$ and apply congruence.
\end{proof}

\begin{lemma}\labeledit{lem:dt:csequiv-psup}
  If $\vdash \vGamma \approx \vGamma'$, then $\vdash \vGamma$ and $\vdash \vGamma'$. 
\end{lemma}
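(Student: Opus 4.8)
The plan is to argue by induction on the derivation of $\vdash \vGamma \approx \vGamma'$, of which there are exactly three rules. The base rule $\vdash \epsilon; \cdot \approx \epsilon; \cdot$ is immediate, since $\vdash \epsilon; \cdot$ is itself an axiom of context-stack well-formedness.

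For the rule that builds $\vdash \vGamma; \cdot \approx \vDelta; \cdot$ from a premise $\vdash \vGamma \approx \vDelta$, the induction hypothesis supplies $\vdash \vGamma$ and $\vdash \vDelta$, and then the well-formedness rule $\vdash \vGamma; \cdot$ (applied once to each side) gives $\vdash \vGamma; \cdot$ and $\vdash \vDelta; \cdot$, closing the case.

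The interesting case is the rule concluding $\vdash \vGamma; (\Gamma, x : T) \approx \vDelta; (\Delta, x : T')$ from the premises $\vdash \vGamma; \Gamma \approx \vDelta; \Delta$, $\mtyequiv[\vGamma; \Gamma]{T}{T'}{\Se_i}$, together with the boxed premises $\mtyequiv[\vDelta; \Delta]{T}{T'}{\Se_i}$, $\mtyping[\vGamma; \Gamma]{T}{\Se_i}$, and $\mtyping[\vDelta; \Delta]{T'}{\Se_i}$. Here the induction hypothesis applied to $\vdash \vGamma; \Gamma \approx \vDelta; \Delta$ yields $\vdash \vGamma; \Gamma$ and $\vdash \vDelta; \Delta$. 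This is precisely where the boxed premises earn their keep: pairing $\vdash \vGamma; \Gamma$ with the boxed premise $\mtyping[\vGamma; \Gamma]{T}{\Se_i}$ and applying the formation rule for extended context stacks produces $\vdash \vGamma; (\Gamma, x : T)$, and symmetrically pairing $\vdash \vDelta; \Delta$ with the boxed premise $\mtyping[\vDelta; \Delta]{T'}{\Se_i}$ produces $\vdash \vDelta; (\Delta, x : T')$.

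I do not anticipate any real obstacle: the lemma is deliberately stated before the boxed premises are discharged, so the argument is a routine structural induction. The only point worth noting is that the two boxed \emph{typing} premises $\mtyping[\vGamma; \Gamma]{T}{\Se_i}$ and $\mtyping[\vDelta; \Delta]{T'}{\Se_i}$ are exactly the data needed to rebuild the two well-formedness derivations directly from the induction hypotheses; without them one would be forced into a mutual induction with presupposition, which is the very circularity the boxes are designed to sidestep.
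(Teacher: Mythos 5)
Your proof is correct and follows the same route as the paper, which simply performs induction on the derivation of $\vdash \vGamma \approx \vGamma'$; the boxed premises in the extension rule are indeed there precisely so that each side's well-formedness can be rebuilt from the induction hypothesis without invoking presupposition. No gaps.
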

\begin{proof}
  Induction on $\vdash \vGamma \approx \vGamma'$.
\end{proof}

\begin{lemma}[Reflexivity]
  If $\mtyping t T$, then $\mtyequiv{t}{t}{T}$. 
\end{lemma}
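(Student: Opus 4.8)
The plan is to prove a slightly stronger statement by mutual induction, establishing reflexivity simultaneously for terms and for unified substitutions: if $\mtyping t T$ then $\mtyequiv{t}{t}{T}$, and if $\mtyping{\vsigma}{\vDelta}$ then $\mtyequiv{\vsigma}{\vsigma}{\vDelta}$. The two halves are genuinely entangled: the congruence rule for $t[\vsigma]$ needs reflexivity of $\vsigma$, while the congruence rule for $\vsigma, t/x$ and the one for $\vsigma \circ \vdelta$ need reflexivity of the component terms and substitutions. I would therefore set up the induction so that each half may appeal to the other.

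The argument is then an induction on the given derivation, with one case per typing rule, where in each case I invoke the corresponding $\approx$-congruence rule instantiated at equal arguments, discharging the premises with the inductive hypotheses on the immediate subterms and subsubstitutions. Concretely: the variable rule matches the variable congruence rule (both only require $\vdash \vGamma; \Gamma$ and $x : T \in \vGamma; \Gamma$, already available); $\Pi$-formation, $\lambda$-introduction and application match their congruence counterparts taken with $S' := S$ and $T' := T$; the $\Nat$, $\ze$, successor and $\Nat$-elimination rules match theirs; $\Se_n$ matches the $\Se_n \approx \Se_n$ rule, and the cumulativity rule $\mtyping{T}{\Se_{i+1}}$ from $\mtyping{T}{\Se_i}$ matches the rule lifting $\mtyequiv{T}{T'}{\Se_i}$ to $\Se_{i+1}$; the $\square$-formation, $\tbox$-introduction and $\tunbox$-elimination rules match theirs; and $t[\vsigma]$ matches the substitution-application congruence rule, using the IH on $t$ and the (mutual) IH on $\vsigma$. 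For the substitution half, $\vect I$, $\wk_x$, $\sextt{\vsigma}{n}$, $\vsigma, t/x$ and $\vsigma \circ \vdelta$ each map directly to the listed substitution congruence rules, with the IH supplying reflexivity of any component.

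Two bookkeeping points need care but present no real difficulty. First, several congruence rules carry the boxed redundant premises (e.g.\ $\boxed{\mtyping[\vGamma; \Gamma]{S}{\Se_i}}$ in the $\Pi$ rules, $\boxed{\mtyping[\vGamma; \cdot]{T}{\Se_i}}$ in the $\tunbox$ rule); since these boxes are still in force at this stage, I must supply them, but in each case the needed judgment is literally a subderivation of the typing rule being inverted. Second, the conversion rules: if the derivation of $\mtyping{t}{T'}$ ends in the term conversion rule from $\mtyping{t}{T}$ and $\mtyequiv{T}{T'}{\Se_i}$, then the IH gives $\mtyequiv{t}{t}{T}$ and the conversion rule for equivalence yields $\mtyequiv{t}{t}{T'}$; the substitution conversion rule using $\vdash \vDelta \approx \vDelta'$ is handled the same way. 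The only thing one must genuinely watch is completeness of the case analysis: every term and substitution former, including those whose congruence rules the text elides with ``we omit the rest'', must indeed possess a congruence rule to appeal to. That, rather than any calculation, is the ``hard part''.
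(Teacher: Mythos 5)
Your proposal is correct and matches the paper's proof, which is simply (mutual) induction on the typing derivation, closing each case with the corresponding congruence or conversion rule; the paper merely states the two reflexivity lemmas (for terms and for substitutions) side by side without spelling out the cases you elaborate. The extra bookkeeping you note (boxed premises, the elided congruence rules) is handled exactly as you describe and poses no obstacle.
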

\begin{proof}
  Induction on $\mtyping t T$.
\end{proof}

\begin{lemma}[Reflexivity]
  If $\mtyping{\vsigma}{\vDelta}$, then $\mtyequiv{\vsigma}{\vsigma}{\vDelta}$.
\end{lemma}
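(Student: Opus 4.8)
The plan is to proceed by a routine induction on the derivation of $\mtyping{\vsigma}{\vDelta}$, mirroring each substitution formation rule with its corresponding congruence rule, and invoking the reflexivity lemma for terms in the cases that mention terms. First I would dispatch the two leaf rules: for $\vsigma = \vect I$ and $\vsigma = \wk_x$ the congruence rules for $\vect I$ and $\wk_x$ produce $\mtyequiv{\vsigma}{\vsigma}{\vDelta}$ directly, since their premises ($\vdash \vGamma$, resp.\ $\vdash \vGamma; \Gamma, x : T$) are already available from the given typing derivation.

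For the inductive cases, each is closed by appealing to the induction hypothesis on the sub-substitutions and then applying the matching congruence rule. When $\vsigma = \vsigma', t/x$ with $\mtyping[\vGamma'; \Gamma]{T}{\Se_i}$ and $\mtyping{t}{T[\vsigma']}$, the induction hypothesis gives $\mtyequiv{\vsigma'}{\vsigma'}{\vGamma'; \Gamma}$, the reflexivity lemma for terms gives $\mtyequiv{t}{t}{T[\vsigma']}$, and the congruence rule for $\vsigma, t/x$ yields the goal. When $\vsigma = \sextt{\vsigma'}{n}$, the induction hypothesis on $\vsigma'$ together with the corresponding congruence rule suffices, its side conditions $\vdash \vGamma; \vGamma'$ and $|\vGamma'| = n$ coming straight from the typing derivation. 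When $\vsigma = \vsigma' \circ \vdelta$, applying the induction hypothesis to both factors and using the congruence rule for composition closes the case. Finally, for the conversion rule, where $\mtyping{\vsigma}{\vDelta'}$ is obtained from $\mtyping{\vsigma}{\vDelta}$ and $\vdash \vDelta \approx \vDelta'$, the induction hypothesis gives $\mtyequiv{\vsigma}{\vsigma}{\vDelta}$ and the conversion rule for substitution equivalence lifts this to $\mtyequiv{\vsigma}{\vsigma}{\vDelta'}$.

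There is essentially no obstacle here; the only point worth flagging is that the term case depends on reflexivity for terms, so in a fully formalized development the two reflexivity statements are most naturally established simultaneously by a single mutual induction on typing derivations. Since the reflexivity lemma for terms is already in hand, we may simply invoke it.
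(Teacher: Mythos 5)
Your proposal is correct and matches the paper's proof, which is simply stated as induction on $\mtyping{\vsigma}{\vDelta}$ with each case closed by the corresponding congruence (or conversion) rule together with reflexivity for terms. Your remark that the term and substitution reflexivity statements are most naturally proved by mutual induction is a fair observation, since term typing mentions substitutions (via $t[\vsigma]$) and substitution typing mentions terms.
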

\begin{proof}
  Induction on $\mtyping{\vsigma}{\vDelta}$.
\end{proof}

\begin{lemma}[Cumulativity]\labeledit{lem:dt:cumu}
  If $\mtyping{T}{\Se_i}$ and $i \le j$, then $\mtyping{T}{\Se_j}$.
\end{lemma}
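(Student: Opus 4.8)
The plan is to prove cumulativity by a straightforward induction on the derivation of $\mtyping{T}{\Se_i}$. The key observation is that we have a dedicated cumulativity rule for the universe hierarchy, namely
\begin{mathpar}
  \inferrule
  {\mtyping{T}{\Se_i}}
  {\mtyping{T}{\Se_{i + 1}}}
\end{mathpar}
so it suffices to show the single step, i.e. that $\mtyping{T}{\Se_i}$ implies $\mtyping{T}{\Se_{i+1}}$, and then iterate $j - i$ times. But the single-step case is literally an instance of the cumulativity rule above. Hence the statement as phrased follows by induction on the natural number $j - i$ (equivalently, on the proof of $i \le j$): the base case $i = j$ is immediate by assumption, and the successor case applies the universe cumulativity rule to the induction hypothesis.

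Concretely, I would proceed by induction on $j$ (or on the evidence $i \le j$). In the base case $j = i$, the goal $\mtyping{T}{\Se_j}$ is exactly the hypothesis $\mtyping{T}{\Se_i}$. In the step case, we have $i \le j$ with $j = j' + 1$ and $i \le j'$; by the induction hypothesis $\mtyping{T}{\Se_{j'}}$, and then the universe cumulativity rule gives $\mtyping{T}{\Se_{j' + 1}} = \mtyping{T}{\Se_j}$, as desired. No inversion on the structure of $T$ is needed, since the universe cumulativity rule is type-directed only in its conclusion index and applies uniformly.

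There is essentially no obstacle here; the lemma is a trivial consequence of having built cumulativity into the universe rules rather than, say, a subtyping relation on universes. The only mild subtlety is the choice of what to induct on: inducting directly on $\mtyping{T}{\Se_i}$ would be awkward because the statement is not obviously structural in that derivation, so the clean approach is to induct on the arithmetic gap $j - i$ (or on the derivation of $i \le j$ viewed as iterated successor), treating $\mtyping{T}{\Se_i}$ as a fixed hypothesis carried along. I would also note in passing that the analogous statement for equivalence, $\mtyequiv{T}{T'}{\Se_i}$ and $i \le j$ implies $\mtyequiv{T}{T'}{\Se_j}$, holds by the same argument using the equivalence-level cumulativity rule, and we will want it alongside this lemma.
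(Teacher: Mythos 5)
Your proof is correct and matches the paper's argument: the paper also writes $j = i + k$ and inducts on $k$, applying the explicit one-step universe cumulativity rule at each step. Your closing remark about the equivalence version is likewise exactly how the paper handles \Cref{lem:dt:eq-cumu}.
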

\begin{proof}
  Since $i \le j$, that means $j = i + k$ for some $k$. We do induction on $k$. 
\end{proof}

\begin{lemma}[Cumulativity]\labeledit{lem:dt:eq-cumu}
  If $\mtyequiv{T}{T'}{\Se_i}$ and $i \le j$, then $\mtyequiv{T}{T'}{\Se_j}$.
\end{lemma}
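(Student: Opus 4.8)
The plan is to follow the proof of \Cref{lem:dt:cumu} essentially verbatim, since the statement only manipulates universe indices and requires no structural or presupposition lemmas. Because $i \le j$, there is a $k \in \N$ with $j = i + k$, and I would proceed by induction on $k$.

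For the base case $k = 0$ we have $j = i$ and the hypothesis $\mtyequiv{T}{T'}{\Se_i}$ is already the goal. For the inductive step, assume the claim for $k$; by the induction hypothesis applied to $\mtyequiv{T}{T'}{\Se_i}$ and $i \le i + k$ we obtain $\mtyequiv{T}{T'}{\Se_{i + k}}$, and then a single application of the congruence-free universe-lifting equivalence rule (the rule $\mtyequiv{T}{T'}{\Se_i} \Rightarrow \mtyequiv{T}{T'}{\Se_{i+1}}$ introduced alongside cumulativity in the Universes subsection) yields $\mtyequiv{T}{T'}{\Se_{i + k + 1}}$, which is exactly $\mtyequiv{T}{T'}{\Se_j}$ for $j = i + (k+1)$. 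There is no genuine obstacle here: the only point worth noting is that the inductive measure is $j - i$ rather than $j$ itself, so that the lifting rule, which steps the universe index up by one, is precisely what the induction consumes.
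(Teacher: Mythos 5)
Your proof is correct and matches the paper's: the paper proves this lemma "similarly to" \Cref{lem:dt:cumu}, i.e.\ writing $j = i + k$ and inducting on $k$, discharging the step with the one-step rule $\mtyequiv{T}{T'}{\Se_i} \Rightarrow \mtyequiv{T}{T'}{\Se_{i+1}}$, exactly as you do.
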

\begin{proof}
  Proved similarly to above.
\end{proof}

\begin{lemma}\labeledit{lem:dt:pi-inv}
  If $\mjudge[\vGamma; \Gamma]{\Pi(x : S). T}$, then $\mjudge[\vGamma; \Gamma]S$ and
  $\mjudge[\vGamma; (\Gamma, x : S)] T$.
\end{lemma}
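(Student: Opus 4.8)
The plan is to prove a slightly more general statement by induction on the typing derivation, namely: if $\mtyping[\vGamma; \Gamma]{\Pi(x : S). T}{U}$ for \emph{any} term $U$, then $\mjudge[\vGamma; \Gamma]{S}$ and $\mjudge[\vGamma; (\Gamma, x : S)]{T}$. The lemma as stated then follows immediately by specializing $U := \Se_i$. This generalization is needed precisely because a typing of $\Pi(x : S). T$ need not end with the $\Pi$-formation rule: it may instead end with the conversion rule or with the cumulativity rule for universes, and in either case the type recorded in the immediate subderivation is no longer syntactically $\Se_i$, so the naive induction on $\mtyping[\vGamma;\Gamma]{\Pi(x:S).T}{\Se_i}$ does not go through.

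First I would observe that, by inspection of all the typing rules of \mintslang, the only rules whose conclusion is a typing judgment whose subject is syntactically of the form $\Pi(x : S). T$ are: (i) the $\Pi$-formation rule; (ii) the conversion rule; and (iii) the cumulativity rule for the universe. In particular, no structural rule is applicable here: the generic substitution rule has subject $t[\vsigma]$, the $\Pi$-introduction and $\Pi$-elimination rules have subjects $\lambda x. t$ and $t\ s$, the variable and natural-number rules have other head constructors, and the substitution-pushing equations (e.g.\ $\Pi(x:S).T[\vsigma] \approx \Pi(x:S[\vsigma]).(T[q(\vsigma,x)])$) are equivalence rules, not typing rules. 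Hence the induction has only these three cases.

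In case (i) the derivation ends with premises $\mtyping[\vGamma; \Gamma]{S}{\Se_i}$ and $\mtyping[\vGamma; (\Gamma, x : S)]{T}{\Se_i}$, which are exactly $\mjudge[\vGamma; \Gamma]{S}$ and $\mjudge[\vGamma; (\Gamma, x : S)]{T}$, so we are done. In cases (ii) and (iii) the immediate subderivation is again a typing judgment of the form $\mtyping[\vGamma; \Gamma]{\Pi(x : S). T}{U'}$ (with $U' = \Se_i$ in the cumulativity case), and the induction hypothesis applied to that subderivation yields the two desired judgments directly. This closes the induction, and the lemma follows.

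The only step that requires genuine care — and the one I would double-check most carefully — is the enumeration in the second paragraph: one must be certain that no rule of the calculus, especially none of the substitution- or modality-related rules that can reshape head constructors, has a conclusion of the form $\mtyping{\Pi(x:S).T}{\cdots}$ other than the three listed. Once that purely syntactic check is carried out, the remainder of the argument is routine. The same inversion pattern (generalize the ambient type, then induct, peeling off conversion and cumulativity) will also be the template for the analogous inversion lemmas for $\square$, $\Nat$, and $\Se_n$.
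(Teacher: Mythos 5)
Your proof is correct and follows essentially the same route as the paper, which simply says ``induction on $\mjudge[\vGamma;\Gamma]{\Pi(x:S).T}$'' --- i.e.\ induction on the typing derivation with a case split on the last rule ($\Pi$-formation, conversion, cumulativity). Your explicit strengthening to an arbitrary classifying type $U$ is just making precise the detail that the paper's one-line proof leaves implicit so that the conversion and cumulativity cases admit the induction hypothesis.
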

\begin{proof}
  Induction on $\mjudge[\vGamma; \Gamma]{\Pi(x : S). T}$.
\end{proof}

\begin{lemma}\labeledit{lem:dt:sq-inv}
  If $\mjudge{\square T}$, then $\mjudge[\vGamma; \cdot]{T}$.
\end{lemma}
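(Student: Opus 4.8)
The plan is to prove this by induction on the derivation witnessing $\mjudge{\square T}$, exactly in the spirit of \Cref{lem:dt:pi-inv}. Unfolding the abbreviation, the hypothesis is a derivation of $\mtyping{\square T}{\Se_i}$ for some $i$, but it is cleaner --- and, as I explain below, necessary --- to prove the slightly more general statement: if $\mtyping{\square T}{A}$ holds for \emph{any} term $A$, then $\mjudge[\vGamma; \cdot]{T}$. The lemma is then recovered by taking $A := \Se_i$.

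The core of the argument is a case analysis on the last rule of the derivation of $\mtyping{\square T}{A}$. Since $\square T$ is its own syntactic constructor, the subject cannot be the conclusion of the variable rule, of application, of $\Nat$-elimination, or of an explicit substitution application $t[\vsigma]$ --- these are all excluded by the shape of the term --- so only three rules remain. If the last rule is $\square$-formation, then its premise is precisely $\mtyping[\vGamma; \cdot]{T}{\Se_i}$, which is the desired conclusion; this is the base case. If the last rule is conversion, it supplies a strictly shorter derivation of $\mtyping{\square T}{A'}$ for some $A'$ with $\mtyequiv{A'}{A}{\Se_i}$, and since the subject $\square T$ is unchanged, the induction hypothesis applied to that subderivation gives the conclusion at once. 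If the last rule is cumulativity, then $A = \Se_{i+1}$ is obtained from a shorter derivation of $\mtyping{\square T}{\Se_i}$, and again the induction hypothesis applies verbatim.

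I expect the only subtle point to be the generalization over the ascribed type $A$. If one attempted the induction with $A$ pinned to a universe $\Se_i$, the conversion case would return a derivation of $\mtyping{\square T}{A'}$ with $A'$ merely convertible to $\Se_i$ and not literally a universe, so the induction hypothesis would not fire. Generalizing the statement to an arbitrary ascribed type removes this obstacle, and the remainder is a routine syntax-directed case split with no computation involved.
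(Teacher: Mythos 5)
Your proof is correct and follows essentially the same route as the paper, which simply performs induction on the derivation of $\mjudge{\square T}$ and observes that only the $\square$-formation, conversion, and cumulativity rules can apply. Your generalization over the ascribed type is a sensible technical refinement that makes the conversion case of that same induction go through cleanly, but it does not change the underlying argument.
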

\begin{proof}
  Induction on $\mjudge{\square T}$.
\end{proof}

We examine the properties of truncation and truncation offset:
\begin{lemma}
  If $\mtyping \vsigma \vDelta$ and $n < |\vDelta|$, then $\Ltotal \vsigma n <
  |\vGamma|$. 
\end{lemma}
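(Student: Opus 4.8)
The plan is to prove this by induction on the derivation of $\mtyping{\vsigma}{\vDelta}$, with $n$ generalized over and an auxiliary split according to whether $n = 0$ or $n = 1 + n'$. The case $n = 0$ is routine: $\Ltotal{\vsigma}{0} = 0$, so the goal reduces to $|\vGamma| > 0$, which holds because every formation rule for substitutions exhibits its domain stack as non-empty — most directly the $\vect I$ rule, whose premise $\vdash \vGamma$ forces $|\vGamma| > 0$, and otherwise by reading off the shape of the rule or a trivial recursion through the premise with the same domain. So the content is in the step $n = 1 + n'$, which I would discharge rule by rule.

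For $\vsigma = \vect I : \vGamma \To \vGamma$ we have $\vDelta = \vGamma$ and $\Ltotal{\vect I}{n} = n < |\vGamma|$ by hypothesis. For $\vsigma = \wk_x$, whose domain $\vGamma$ and codomain $\vDelta$ have the same length, $\Ltotal{\wk_x}{n} = n < |\vDelta| = |\vGamma|$. For $\vsigma = \vdelta, t/x$ with subderivation $\mtyping{\vdelta}{\vGamma'; \Gamma}$ and codomain $\vGamma'; (\Gamma, x : T)$ of the same length as $\vGamma'; \Gamma$, we have $\Ltotal{\vdelta, t/x}{n} = \Ltotal{\vdelta}{n}$, and the induction hypothesis applied to $\mtyping{\vdelta}{\vGamma'; \Gamma}$ at offset $n$ gives the bound. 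For $\vsigma = \sextt{\vdelta}{m}$, with domain $\vGamma_1; \vGamma_2$ where $|\vGamma_2| = m$, subderivation $\mtyping{\vdelta}{\vDelta_1}$, and codomain $\vDelta_1; \cdot$, the hypothesis $n = 1 + n' < |\vDelta_1| + 1$ gives $n' < |\vDelta_1|$; since $\Ltotal{\sextt{\vdelta}{m}}{1 + n'} = m + \Ltotal{\vdelta}{n'}$ and the induction hypothesis on $\mtyping{\vdelta}{\vDelta_1}$ gives $\Ltotal{\vdelta}{n'} < |\vGamma_1|$, we conclude $m + \Ltotal{\vdelta}{n'} < m + |\vGamma_1| = |\vGamma_1; \vGamma_2| = |\vGamma|$. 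Finally, for the conversion rule, passing from $\mtyping{\vsigma}{\vDelta}$ and $\vdash \vDelta \approx \vDelta'$ to $\mtyping{\vsigma}{\vDelta'}$ changes neither $\Ltotal{\vsigma}{n}$ (which depends only on the syntax of $\vsigma$) nor $|\vGamma|$, and $|\vDelta| = |\vDelta'|$ by the length-preservation lemma for context-stack equivalence, so the induction hypothesis transports directly.

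The one case that needs care is composition, $\vsigma = \vsigma_1 \circ \vsigma_2$ with $\vsigma_2 : \vGamma \To \vGamma'$ and $\vsigma_1 : \vGamma' \To \vDelta$. Here $\Ltotal{\vsigma_1 \circ \vsigma_2}{n} = \Ltotal{\vsigma_2}{\Ltotal{\vsigma_1}{n}}$, and the outer occurrence of $\Ltotal{\_}{\_}$ only lands in a defined case once its offset argument is smaller than $|\vGamma'|$ — which is exactly the statement of this lemma for $\vsigma_1$. So the argument is mildly self-supporting, and the order is forced: first apply the induction hypothesis to $\mtyping[\vGamma']{\vsigma_1}{\vDelta}$ at offset $n$ (legitimate since $n < |\vDelta|$) to get $\Ltotal{\vsigma_1}{n} < |\vGamma'|$, which simultaneously certifies that $\Ltotal{\vsigma_2}{\Ltotal{\vsigma_1}{n}}$ is well-defined and licenses a second application of the induction hypothesis, now to $\mtyping{\vsigma_2}{\vGamma'}$ at offset $\Ltotal{\vsigma_1}{n}$, yielding $\Ltotal{\vsigma_2}{\Ltotal{\vsigma_1}{n}} < |\vGamma|$. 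This nesting — outer substitution first, then feeding its bound into the inner one — is the only delicate point; everything else is routine unfolding of the definition of $\Ltotal{\_}{\_}$.
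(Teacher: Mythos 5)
Your proof is correct and follows essentially the same route as the paper, which simply states the argument as induction on $n$ and the derivation of $\mtyping{\vsigma}{\vDelta}$; your write-up just fills in the case analysis, including the only subtle case (composition, where the induction hypothesis for the outer substitution must be applied first to license the inner application at the new offset).
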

\begin{proof}
  Induction on $n$ and $\mtyping \vsigma \vDelta$.
\end{proof}

\begin{lemma}
  If $\mtyping \vsigma \vDelta$ and $n < |\vDelta|$, then 
  $\mtyping[\trunc\vGamma{\Ltotal \vsigma n}]{\trunc \vsigma n}{\trunc \vDelta n}$.
\end{lemma}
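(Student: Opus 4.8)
The plan is to prove this by the same nested induction on $n$ together with the derivation of $\mtyping \vsigma \vDelta$ that was used for the simply-typed analogue in \Cref{sec:st:subst-calc}, the only genuinely new work being to discharge the well-formedness side conditions of \mintslang's substitution rules and to accommodate the conversion rule. Throughout I would lean on four auxiliary facts already available: the companion lemma just proved (if $\mtyping \vsigma \vDelta$ and $n < |\vDelta|$ then $\Ltotal \vsigma n < |\vGamma|$), the obvious fact that truncation preserves well-formedness of context stacks (invert $\vdash \vGamma$ the required number of times), distributivity of truncation over addition on context stacks (the stack-level version of \Cref{lem:trunc-sum}), and the lemma that $\vdash \vGamma \approx \vGamma'$ with $n < |\vGamma|$ yields $\vdash \trunc\vGamma n \approx \trunc{\vGamma'}n$.

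The case $n = 0$ is immediate, since $\trunc\vsigma 0 = \vsigma$, $\Ltotal\vsigma 0 = 0$ and truncation by $0$ is the identity. For $n = 1 + n'$ I would case on the last rule of $\mtyping \vsigma \vDelta$. When $\vsigma = \vect I$ or $\vsigma = \wk_x$ the truncation collapses to $\vect I$ over the appropriate truncated stack, so the goal follows from the $\vect I$ rule once we observe the truncated stack is well-formed. When $\vsigma = \vsigma', t/x$, both $\trunc{\_}{1+n'}$ and $\Ltotal{\_}{1+n'}$ ignore the extension, so the goal is the induction hypothesis at the same $n$ on the subderivation $\mtyping{\vsigma'}{\vDelta'; \Delta}$, using $1 + n' < |\vDelta'; (\Delta, x : T)| = |\vDelta'; \Delta|$. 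When $\vsigma = \vsigma' \circ \vdelta$ I would unfold $\trunc{(\vsigma' \circ \vdelta)}{1+n'} = (\trunc{\vsigma'}{1+n'}) \circ (\trunc \vdelta {\Ltotal{\vsigma'}{1+n'}})$ and $\Ltotal{(\vsigma' \circ \vdelta)}{1+n'} = \Ltotal \vdelta {\Ltotal{\vsigma'}{1+n'}}$, apply the induction hypothesis to $\vsigma'$ at $1+n'$, then (using the truncation-offset lemma to see $\Ltotal{\vsigma'}{1+n'} < |\vGamma'|$) apply it to $\vdelta$ at $\Ltotal{\vsigma'}{1+n'}$, and conclude by the composition rule. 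The conversion rule is handled by invoking the induction hypothesis and transporting the codomain along $\vdash \trunc{\vDelta'}{n} \approx \trunc\vDelta n$.

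The case that requires real bookkeeping — and where I expect to spend the most effort — is $\vsigma = \sextt{\vsigma'}{m}$ with domain $\vGamma_0; \vGamma_0'$, $|\vGamma_0'| = m$, and codomain $\vDelta'; \cdot$. Here $\trunc{(\sextt{\vsigma'}{m})}{1+n'} = \trunc{\vsigma'}{n'}$, $\Ltotal{(\sextt{\vsigma'}{m})}{1+n'} = m + \Ltotal{\vsigma'}{n'}$ and $\trunc{(\vDelta'; \cdot)}{1+n'} = \trunc{\vDelta'}{n'}$, so the goal reduces to $\mtyping[\trunc{(\vGamma_0; \vGamma_0')}{m + \Ltotal{\vsigma'}{n'}}]{\trunc{\vsigma'}{n'}}{\trunc{\vDelta'}{n'}}$. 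The crux is to recognise, using $|\vGamma_0'| = m$ and distributivity of truncation over addition, that $\trunc{(\vGamma_0; \vGamma_0')}{m + \Ltotal{\vsigma'}{n'}} = \trunc{\vGamma_0}{\Ltotal{\vsigma'}{n'}}$ (and that this amount is below $|\vGamma_0|$, by the truncation-offset lemma); once this identity is in place the goal is precisely the induction hypothesis applied to $\mtyping[\vGamma_0]{\vsigma'}{\vDelta'}$ at $n'$, with $n' < |\vDelta'|$ coming from $1 + n' < |\vDelta'; \cdot|$. Everything else is routine unfolding of the definitions of $\trunc{\_}{\_}$ and $\Ltotal{\_}{\_}$.
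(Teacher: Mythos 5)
Your proof is correct and follows exactly the paper's route: the paper's own proof is the one-line ``Induction on $n$ and $\mtyping{\vsigma}{\vDelta}$,'' and your case analysis (including the $\sextt{\vsigma'}{m}$ bookkeeping, the use of the preceding lemma that $\Ltotal{\vsigma}{n} < |\vGamma|$, and the conversion-rule transport) is just the honest elaboration of that induction.
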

\begin{proof}
  Induction on $n$ and $\mtyping \vsigma \vDelta$.
\end{proof}

\begin{lemma}
  If $\mtyping{\vsigma}{\vDelta}$ and $n + m < |\vDelta|$, then
  $\Ltotal\vsigma {n + m} = \Ltotal \vsigma n + \Ltotal{\trunc \vsigma n} m$.
\end{lemma}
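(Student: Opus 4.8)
The plan is to argue by induction, exactly as for the simply typed analogue \Cref{lem:L-add} and for the $\Ltotal$-lemmas in the explicit-substitution settings (\Cref{lem:L-intp-vsigma-vrho,lem:L-extract}). The point to keep in mind is that here both $\Ltotal{\_}{\_}$ and $\trunc{\_}{\_}$ are defined by simultaneous recursion on the numerical argument \emph{and} on the syntax of the unified substitution, so the induction should be organized lexicographically: structurally on $\vsigma$, and within that on $n$, using the typing derivation $\mtyping\vsigma\vDelta$ only to know that the relevant defining clauses apply (and, in one case, to name an intermediate stack). At the outset one disposes of the trivial sub-cases: if $m = 0$ then $\Ltotal{\trunc\vsigma n}{0} = 0$ and the equation collapses to $\Ltotal\vsigma n = \Ltotal\vsigma n$; if $n = 0$ then $\trunc\vsigma 0 = \vsigma$ and $\Ltotal\vsigma 0 = 0$, so it is immediate.

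For the step case $n = 1 + n'$ with $m \ge 1$ I would case on $\vsigma$. If $\vsigma = \vect I$ or $\vsigma = \wk_x$, both sides evaluate directly from the defining clauses, using $\Ltotal{\vect I}{k} = k$ and $\trunc{\vect I}{1+n'} = \trunc{\wk_x}{1+n'} = \vect I$. If $\vsigma = \vsigma', t/x$, the operations $\Ltotal{\_}{1+k}$ and $\trunc{\_}{1+k}$ ignore the extension, so the statement for $\vsigma$ is literally the statement for the structurally smaller $\vsigma'$ at the same $n$ and $m$, discharged by the inductive hypothesis. If $\vsigma = \sextt{\vsigma'}{k}$, then $\Ltotal{\vsigma}{1+n'+m} = k + \Ltotal{\vsigma'}{n'+m}$; applying the inductive hypothesis to $\vsigma'$ at the smaller index $n'$ rewrites this as $k + \Ltotal{\vsigma'}{n'} + \Ltotal{\trunc{\vsigma'}{n'}}{m}$, which is exactly $\Ltotal{\vsigma}{1+n'} + \Ltotal{\trunc\vsigma{1+n'}}{m}$.

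The one genuinely delicate case — and the step I expect to be the main obstacle — is $\vsigma = \vsigma' \circ \vdelta$. Unfolding, $\Ltotal{\vsigma}{n+m} = \Ltotal{\vdelta}{\Ltotal{\vsigma'}{n+m}}$, and the inductive hypothesis on $\vsigma'$ rewrites the inner number as $\Ltotal{\vsigma'}{n} + \Ltotal{\trunc{\vsigma'}{n}}{m}$. Now one must split $\Ltotal{\vdelta}{\,\cdot + \cdot\,}$ again, i.e. invoke the lemma for $\vdelta$ but at the \emph{new} numerical arguments $\Ltotal{\vsigma'}{n}$ and $\Ltotal{\trunc{\vsigma'}{n}}{m}$ rather than at $n, m$. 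This is legitimate because the induction is structural in $\vsigma$, so $\vdelta$ is a smaller instance whatever the numbers are; the side condition $\Ltotal{\vsigma'}{n} + \Ltotal{\trunc{\vsigma'}{n}}{m} < |\vGamma'|$ (with $\vGamma'$ the intermediate stack of $\vsigma' \circ \vdelta$) holds by the well-definedness lemma for truncation offset stated just above, which gives $\Ltotal{\vsigma'}{n+m} < |\vGamma'|$, together with the rewrite already made. After the split the two summands are $\Ltotal{\vdelta}{\Ltotal{\vsigma'}{n}}$ and $\Ltotal{\trunc{\vdelta}{\Ltotal{\vsigma'}{n}}}{\Ltotal{\trunc{\vsigma'}{n}}{m}}$, and unfolding the definitions of $\Ltotal$ and $\trunc$ on compositions identifies the first with $\Ltotal{\vsigma}{n}$ and the second with $\Ltotal{\trunc\vsigma n}{m}$, closing the case. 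Everything outside this case is routine computation with the defining equations.
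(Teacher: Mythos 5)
Your proof is correct and follows essentially the same route as the paper, which proves the lemma by induction on $n$ together with the typing derivation $\mtyping{\vsigma}{\vDelta}$ (equivalently, the structure of $\vsigma$); your lexicographic organization, the use of the well-definedness lemma $\Ltotal{\vsigma'}{n+m} < |\vGamma'|$ to justify the re-instantiated inductive hypothesis in the composition case, and the unfolding of $\Ltotal$ and $\trunc$ on compositions are exactly the details the paper leaves implicit.
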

\begin{proof}
  Induction on $n$ and $\mtyping{\vsigma}{\vDelta}$.
\end{proof}

\begin{lemma}
  If $\mtyping{\vsigma}{\vDelta}$ and $n + m < |\vDelta|$, then
  $\trunc{\trunc \vsigma n}m = \trunc\vsigma{n+m}$.
\end{lemma}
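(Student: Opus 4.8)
The plan is to prove this by induction on the structure of the unified substitution $\vsigma$, keeping $n$ and $m$ universally quantified, and splitting on whether $n = 0$ or $n = 1 + n'$ inside each case, exactly mirroring the proof of the distributivity of addition over truncation in \Cref{sec:st:subst-calc} (the simply typed analogue of \Cref{lem:trunc-sum}). The hypothesis $\mtyping{\vsigma}{\vDelta}$ together with the well-definedness lemmas for truncation and truncation offset (already restated in this section) is what guarantees that all the recursive equations in the definitions of $\trunc{\_}{\_}$ and $\Ltotal{\_}{\_}$ are actually reached, so I would invoke those silently to justify each unfolding.

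First, the case $n = 0$: both sides are literally $\trunc\vsigma m$, so nothing to do. For $n = 1 + n'$ I would run through the five forms of $\vsigma$. For $\vsigma = \vect I$ and $\vsigma = \wk_x$ both $\trunc{\trunc\vsigma n}m$ and $\trunc\vsigma{n+m}$ reduce to $\vect I$. For $\vsigma = \vsigma_1, t/x$, truncation by any positive amount simply discards the local extension, so the statement follows immediately from the induction hypothesis applied to $\vsigma_1$. For $\vsigma = \sextt{\vsigma_1}k$, the first truncation peels off the offset ($\trunc{(\sextt{\vsigma_1}k)}{1+n'} = \trunc{\vsigma_1}{n'}$), and I reduce to the induction hypothesis on $\vsigma_1$ with the indices $n'$ and $m$ after also unfolding the right-hand side $\trunc{(\sextt{\vsigma_1}k)}{(1+n')+m} = \trunc{\vsigma_1}{n'+m}$.

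The only case with real content is the composition case $\vsigma = \vsigma_1 \circ \vsigma_2$, which I expect to be the main obstacle. Here $\trunc{(\vsigma_1 \circ \vsigma_2)}{1+n'} = (\trunc{\vsigma_1}{1+n'}) \circ (\trunc{\vsigma_2}{\Ltotal{\vsigma_1}{1+n'}})$, and truncating this again by $m$ (when $m > 0$) produces another composition whose components are $\trunc{(\trunc{\vsigma_1}{1+n'})}{m}$ and $\trunc{(\trunc{\vsigma_2}{\Ltotal{\vsigma_1}{1+n'}})}{\Ltotal{\trunc{\vsigma_1}{1+n'}}{m}}$. I would rewrite the first component by the induction hypothesis on $\vsigma_1$ to get $\trunc{\vsigma_1}{(1+n')+m}$, and rewrite the second by the induction hypothesis on $\vsigma_2$ to get $\trunc{\vsigma_2}{\Ltotal{\vsigma_1}{1+n'} + \Ltotal{\trunc{\vsigma_1}{1+n'}}{m}}$, which by the immediately preceding lemma (distributivity of addition over truncation offset, the $\Ltotal{}{}$ counterpart of \Cref{lem:L-add}) equals $\trunc{\vsigma_2}{\Ltotal{\vsigma_1}{(1+n')+m}}$. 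These are exactly the two components of $\trunc{(\vsigma_1 \circ \vsigma_2)}{(1+n')+m}$, so the two sides agree. The bookkeeping I need to be careful about is that the side condition $n + m < |\cdot|$ propagates to $\vsigma_1$ and $\vsigma_2$ (so that every recursive call above lands in a defined case), which again follows from well-typedness and the well-definedness lemmas for truncation offset.
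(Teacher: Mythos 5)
Your proof is correct and matches the paper's approach: the paper simply says ``Induction on $n$ and $\mtyping{\vsigma}{\vDelta}$,'' which is exactly your structural induction with the $n=0$ versus $n=1+n'$ split. Your elaboration of the composition case via the preceding distributivity-of-addition-over-truncation-offset lemma, together with the well-definedness side conditions, is precisely the content that one-line proof leaves implicit.
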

\begin{proof}
  Induction on $n$ and $\mtyping{\vsigma}{\vDelta}$.
\end{proof}

\begin{lemma}
  If $\mtyequiv{\vsigma}{\vsigma'}\vDelta$ and $n < |\vDelta|$, then $\Ltotal \vsigma n =
  \Ltotal \vsigma' n$.
\end{lemma}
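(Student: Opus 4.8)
The plan is to induct on the derivation of $\mtyequiv{\vsigma}{\vsigma'}{\vDelta}$, generalizing over $n$ (so that the inductive hypothesis can be applied at a different truncation size in the composition and lift cases). The crucial observation is that $\Ltotal{\_}{\_}$ is defined by recursion on the \emph{raw syntax} of unified substitutions and on $n$, with no reference to any typing judgment; consequently all the typing premises (and the boxed side conditions) appearing in the equivalence rules are irrelevant to the computation, and the only facts we need about $\Ltotal{\_}{\_}$ are the distributivity laws over truncation offset already reestablished for this explicit substitution calculus, namely $\Ltotal\vsigma{n+m} = \Ltotal\vsigma n + \Ltotal{\trunc\vsigma n}m$ and $\Ltotal{\vsigma \circ \vdelta}n = \Ltotal\vdelta{\Ltotal\vsigma n}$ (the analogues of \Cref{lem:L-add,lem:L-comp}), together with the well-definedness lemma ensuring $\Ltotal\vsigma n < |\vGamma|$ whenever $n < |\vDelta|$.

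\textbf{Easy cases.} The PER rules (symmetry, transitivity) follow immediately since equality of natural numbers is a PER. The conversion rule $\mtyping[\vsigma]{}{\cdot}$-to-$\vDelta'$ leaves $\vsigma$ syntactically unchanged, so there is nothing to prove. For the reflexivity-style congruences of $\vect I$ and $\wk_x$, both sides are syntactically identical. For $\mtyequiv{\vsigma, t/x}{\vsigma, t'/x}{}$ the two substitutions differ only in the term component, which $\Ltotal{\_}{\_}$ ignores, so again both sides are literally equal. For the $\sextt\vsigma m$ congruence and the $\vsigma \circ \vdelta$ congruence we apply the inductive hypothesis: in the former, $\Ltotal{\sextt\vsigma m}{1+n'} = m + \Ltotal\vsigma{n'}$ and IH equates $\Ltotal\vsigma{n'}$ with $\Ltotal{\vsigma'}{n'}$; in the latter, using distributivity of composition over truncation offset we reduce to $\Ltotal{\vdelta}{\Ltotal\vsigma n} = \Ltotal{\vdelta'}{\Ltotal{\vsigma'}n}$, and IH on the two premises (the second applied at the number $\Ltotal\vsigma n$, which is $< |\vGamma'|$ by well-definedness) finishes it.

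\textbf{Algebraic and extensionality cases.} These are all direct calculations. For the unit and associativity laws one expands $\Ltotal{\_}{\_}$ over the composites using $\Ltotal{\vect I}n = n$ and $\Ltotal{\vsigma \circ \vdelta}n = \Ltotal\vdelta{\Ltotal\vsigma n}$; e.g. $\Ltotal{(\vsigma''\circ\vsigma')\circ\vsigma}n = \Ltotal\vsigma{\Ltotal{\vsigma'}{\Ltotal{\vsigma''}n}} = \Ltotal{\vsigma''\circ(\vsigma'\circ\vsigma)}n$. For $\mtyequiv{\vsigma,t/x\circ\vdelta}{(\vsigma\circ\vdelta),t[\vdelta]/x}{}$ both sides compute to $\Ltotal\vdelta{\Ltotal\vsigma n}$ since the term slot is ignored; similarly $\wk_x\circ(\vsigma,t/x)\approx\vsigma$ and $\vsigma\approx(\wk_x\circ\vsigma),x[\vsigma]/x$ reduce via $\Ltotal{\wk_x}m = m$. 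The lift-composition law $\sextt\vsigma n\circ\vdelta\approx\sextt{(\vsigma\circ\trunc\vdelta n)}{\Ltotal\vdelta n}$ is exactly an instance of the two distributivity laws, and the $\square$-extensionality rule $\vsigma\approx\sextt{\trunc\vsigma 1}{\Ltotal\vsigma 1}$ follows from $\Ltotal{\sextt{\trunc\vsigma 1}{\Ltotal\vsigma 1}}{1+m'} = \Ltotal\vsigma 1 + \Ltotal{\trunc\vsigma 1}{m'} = \Ltotal\vsigma{1+m'}$ by distributivity of addition over truncation offset, with the $n=0$ subcase trivial.

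\textbf{Main obstacle.} There is no deep difficulty here; the only thing requiring care is tracking the numeric side conditions — when invoking the distributivity lemmas one must know the intermediate truncation sizes ($\Ltotal\vsigma n$, $\Ltotal\vdelta{\Ltotal\vsigma n}$, etc.) stay strictly below the lengths of the relevant context stacks, which is supplied by the well-definedness lemma for truncation offset applied to the typing information carried by the equivalence derivation. With that in hand every case is a short computation or a direct appeal to the inductive hypothesis.
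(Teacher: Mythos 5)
Your proposal is correct and takes essentially the same route as the paper, which proves this lemma (jointly with the companion statement for truncation) by induction on $n$ and the derivation of $\mtyequiv{\vsigma}{\vsigma'}{\vDelta}$; the only cases the paper spells out — the law $(\sextt\vsigma m)\circ\vdelta \approx \sextt{(\vsigma\circ\trunc\vdelta m)}{\Ltotal\vdelta m}$ and the extensionality rule $\vsigma \approx \sextt{\trunc\vsigma 1}{\Ltotal\vsigma 1}$ — are computed exactly as you do, via distributivity of addition over truncation offset, with all remaining cases definitional or by the inductive hypothesis. One small imprecision: in the conversion case there is not literally "nothing to prove"; you still invoke the inductive hypothesis, using that equivalent context stacks have equal length so the bound $n < |\vDelta'|$ transfers to $n < |\vDelta|$.
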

\begin{lemma}
  If $\mtyequiv{\vsigma}{\vsigma'}\vDelta$ and $n < |\vDelta|$, then
  $\mtyequiv[\trunc\vGamma{\Ltotal \vsigma n}]{\trunc \vsigma n}{\trunc{\vsigma'}n}{\trunc \vDelta n}$.
\end{lemma}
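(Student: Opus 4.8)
The plan is to prove this lemma together with the preceding one ($\Ltotal{\vsigma}{n} = \Ltotal{\vsigma'}{n}$) by a single induction on $n$ and then on the derivation of $\vsigma \approx \vsigma'$; this mirrors the corresponding lemma for the simply typed explicit-substitution calculus of \Cref{sec:st:subst-calc}, the only genuinely new case being the substitution conversion rule. Carrying the offset equality along is what makes the statement well posed: once $\Ltotal{\vsigma}{n} = \Ltotal{\vsigma'}{n}$ is known, $\trunc{\vsigma}{n}$ and $\trunc{\vsigma'}{n}$ share the domain $\trunc{\vGamma}{\Ltotal{\vsigma}{n}}$, so the equivalence judgment typechecks. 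Since presupposition is not yet available, I would additionally keep the bound $\Ltotal{\vsigma}{n} < |\vGamma|$ in the invariant, extracting it in each case from the typing premises of the rule at hand together with the already-established bound for well-typed substitutions. The base case $n = 0$ is immediate: truncation is the identity, $\Ltotal{\vsigma}{0} = 0$, and the goal collapses to the hypothesis.

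For $n = 1 + n'$ I would case on the last rule of $\vsigma \approx \vsigma'$. Symmetry and transitivity follow from the induction hypotheses, using the offset equality to check that the intermediate truncations share a common domain before re-applying the PER rules; the conversion rule (replacing the codomain $\vDelta$ by an equivalent $\vDelta'$) is discharged by applying the induction hypothesis over $\trunc{\vDelta}{n}$ and then transporting along $\vdash \trunc{\vDelta}{n} \approx \trunc{\vDelta'}{n}$, which is the earlier lemma that truncation respects context-stack equivalence, via the substitution conversion rule. Among the constructor congruences: $\vect I$ and $\wk_x$ both truncate syntactically to $\vect I$, so reflexivity suffices; $\vsigma, t/x$ truncates to $\trunc{\vsigma}{1+n'}$ with the offset unchanged, so the induction hypothesis applies verbatim; $\sextt{\vsigma}{m}$ truncated by $1+n'$ peels off the offset and leaves $\trunc{\vsigma}{n'}$ with offset $m + \Ltotal{\vsigma}{n'}$, so the induction hypothesis at $n'$ applies; and for $\vsigma \circ \vdelta$ the syntactic truncation clause gives $\trunc{(\vsigma \circ \vdelta)}{1+n'} = \trunc{\vsigma}{1+n'} \circ \trunc{\vdelta}{\Ltotal{\vsigma}{1+n'}}$, so the induction hypotheses on $\vsigma$ (at $1+n'$) and on $\vdelta$ (at $\Ltotal{\vsigma}{1+n'}$), together with the congruence rule for composition and the offset equality to align the two truncation levels of $\vdelta$, finish the case.

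The remaining work is the algebraic and extensionality axioms: the unit and associativity laws, $(\vsigma, t/x) \circ \vdelta \approx (\vsigma \circ \vdelta), t[\vdelta]/x$, $\wk_x \circ (\vsigma, t/x) \approx \vsigma$, $(\sextt{\vsigma}{m}) \circ \vdelta \approx \sextt{(\vsigma \circ \trunc{\vdelta}{m})}{\Ltotal{\vdelta}{m}}$, $\vsigma \approx (\wk_x \circ \vsigma), x[\vsigma]/x$, and $\vsigma \approx \sextt{\trunc{\vsigma}{1}}{\Ltotal{\vsigma}{1}}$. For each I would compute $\trunc{\_}{1+n'}$ of both sides from the syntactic truncation clauses, using the earlier identity $\trunc{(\trunc{\vsigma}{n})}{m} = \trunc{\vsigma}{n+m}$ and distributivity of truncation over composition; in every case the two sides reduce to either syntactically identical expressions (so reflexivity suffices) or expressions related by a single further instance of the same axiom or a unit law, with a short computation on the leading offsets — via the addition-distributivity law for truncation offset — closing the gap. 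I expect the $\vsigma \circ \vdelta$ congruence and the composition-with-$\sextt{\vsigma}{m}$ and extensionality axioms to be the main obstacle: nothing there is conceptually difficult, but the truncation-offset arithmetic has to be kept exactly aligned, and, lacking presupposition, one must repeatedly appeal to the typed forms of the truncation and offset lemmas — which the derivations do expose through their typing premises — rather than reasoning about the substitutions in isolation.
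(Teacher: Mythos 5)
Your proposal is correct and follows essentially the same route as the paper: the two lemmas (offset equality and truncation equivalence) are proved simultaneously by induction on $n$ and on the derivation of $\mtyequiv{\vsigma}{\vsigma'}{\vDelta}$, with each axiom case handled by computing the truncations of both sides syntactically (using the addition and composition distributivity laws) and closing by reflexivity, congruence, or a further instance of the same rule — exactly how the paper discharges its two spelled-out cases, $(\sextt{\vsigma}{m}) \circ \vdelta$ and $\vsigma \approx \sextt{\trunc{\vsigma}{1}}{\Ltotal{\vsigma}{1}}$. Your extra attention to the dependently typed conversion rule and to tracking $\Ltotal{\vsigma}{n} < |\vGamma|$ from the typing premises just fills in cases the paper declares immediate.
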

\begin{proof}
  Induction on $n$ and $\mtyequiv{\vsigma}{\vsigma'}\vDelta$. Most cases are
  immediate. We only consider two cases:
  \begin{itemize}[label=Case]
  \item
    \begin{mathpar}
      \inferrule
      {\mtyping{\vsigma}{\vGamma'} \\ \mtyping[\vGamma'']{\vdelta}{\vGamma;\vDelta} \\
        |\vDelta| = m}
      {\mtyequiv[\vGamma'']{(\sextt\vsigma m) \circ \vdelta}{\sextt{(\vsigma \circ (\trunc\vdelta m))}{\Ltotal\vdelta m} }{\vGamma';\cdot}}
    \end{mathpar}
    \begin{align*}
      \Ltotal{(\sextt\vsigma m) \circ \vdelta}{1 + n}
      &= \Ltotal\vdelta {m + \Ltotal \vsigma n} \\
      &= \Ltotal\vdelta m +  \Ltotal{\trunc \vdelta m}{\Ltotal \vsigma n} \\
      &= \Ltotal{\sextt{(\vsigma \circ (\trunc \vdelta m))}{\Ltotal\vdelta m}}{1 + n}
    \end{align*}
    Moreover, 
    \begin{align*}
      \trunc{((\sextt\vsigma m) \circ \vdelta)}{1 + n} = \trunc{(\vsigma \circ \trunc
      \vdelta m)} n
      = \trunc{(\sextt{(\vsigma \circ \trunc \vdelta m)}{\Ltotal\vdelta m})}{1+n}
    \end{align*}
    We also know
    \begin{mathpar}
      \inferrule
      {\inferrule
        {\mtyping{\vsigma}{\vGamma'} \\ \mtyping[\trunc{\vGamma''}{\Ltotal\vdelta m}]{\trunc \vdelta m}{\vGamma}}
        {\mtyping[\trunc{\vGamma''}{\Ltotal\vdelta m}]{\vsigma \circ \trunc \vdelta m}{\vGamma'}}}
      {\mtyping[\trunc{\trunc{\vGamma''}{\Ltotal\vdelta m}}{\Ltotal{\vsigma \circ \trunc \vdelta m} n}]{\trunc{(\vsigma \circ \trunc \vdelta m)}n}{\trunc{\vGamma'}n}}
    \end{mathpar}
    so we can apply reflexivity for the equivalence judgment.  At last, we shall
    examine the truncation size of $\vGamma''$ is $\Ltotal\vdelta{m + \Ltotal \vsigma n}$.
    We have
    \begin{align*}
      \Ltotal\vdelta m + \Ltotal{\vsigma \circ \trunc \vdelta m}n
      &= \Ltotal\vdelta m + \Ltotal{\trunc \vdelta m}{\Ltotal \vsigma n} \\
      &= \Ltotal\vdelta{m + \Ltotal \vsigma n}
    \end{align*}
    
  \item
    \begin{mathpar}
      \inferrule
      {\mtyping{\vsigma}{\vDelta; \cdot} \\ |\vDelta| > 0}
      {\mtyequiv{\vsigma}{\sextt{\trunc \vsigma 1}{\Ltotal\vsigma 1}}{\vDelta; \cdot}}
    \end{mathpar}
    \begin{align*}
      \Ltotal{\sextt{\trunc \vsigma 1}{\Ltotal\vsigma 1}}{1 + n}
      = \Ltotal\vsigma 1 + \Ltotal{\trunc \vsigma 1} n = \Ltotal\vsigma{1 + n}
    \end{align*}
    By the following equation, the other goal is immediate:
    \begin{align*}
      \trunc\vsigma{1 + n} = \trunc{\trunc \vsigma 1}n = \trunc{(\sextt{\trunc \vsigma 1}{\Ltotal\vsigma 1})}{1+n}
    \end{align*}
  \end{itemize}
\end{proof}

\subsection{Context Stack Conversion}

\begin{lemma}[Context stack conversion]\labeledit{lem:dt:cs-equiv}
  If $\vdash \vGamma \approx \vDelta$, and
  \begin{itemize}
  \item if $\mtyping t T$, then $\mtyping[\vDelta]t T$;
  \item if $\mtyequiv{t}{t'}T$, then $\mtyequiv[\vDelta]{t}{t'}T$;
  \item if $\mtyping{\vsigma}{\vDelta}$, then $\mtyping[\vDelta]{\vsigma}{\vDelta}$;
  \item if $\mtyequiv{\vsigma}{\vsigma'}{\vDelta'}$, then $\mtyequiv[\vDelta]{\vsigma}{\vsigma'}{\vDelta'}$.
  \end{itemize}
\end{lemma}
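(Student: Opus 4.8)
The plan is to prove all four statements simultaneously by induction on the derivations of $\mtyping t T$, $\mtyequiv{t}{t'}T$, $\mtyping{\vsigma}{\vDelta'}$ and $\mtyequiv{\vsigma}{\vsigma'}{\vDelta'}$, threading the hypothesis $\vdash\vGamma\approx\vDelta$ through and adjusting it whenever a rule changes the context stack. Three patterns recur. (i) Rules that leave the stack untouched --- application, the $\Nat$ introductions, the conversion rules, $\vect I$, $\wk_x$, composition (where only the right argument needs the hypothesis), and so on --- are discharged by recursing with the same $\vdash\vGamma\approx\vDelta$ and re-applying the rule in $\vDelta$; for the conversion rules we additionally re-apply conversion with the type equivalence produced by the equivalence induction hypothesis. (ii) Rules that extend the topmost context with a binding --- $\lambda$, $\Pi$-formation, the motive of the $\Nat$ eliminator, the $q(\vsigma, x)$ pattern --- first require extending the context-stack equivalence: the boxed premise $\mtyping[\vGamma; \Gamma]{S}{\Se_i}$ of the relevant introduction rule is itself a subderivation, so the induction hypothesis gives $\mtyping[\vDelta; \Delta]{S}{\Se_i}$, and then the (boxed) congruence rule for context-stack equivalence, together with reflexivity, produces $\vdash\vGamma;(\Gamma, x : S)\approx\vDelta;(\Delta, x : S)$, under which we recurse. (iii) Rules that open a new world --- $\square$-formation and $\tbox$-introduction --- extend the equivalence to $\vdash\vGamma;\cdot\approx\vDelta;\cdot$ and recurse; rules that cross back over world boundaries --- $\tunbox$, the $\sextt\vsigma n$ substitution rule, and the corresponding equivalence rules --- invoke the already-established truncation lemma for context-stack equivalence to transport $\vdash\vGamma\approx\vDelta$ onto the relevant prefix, recurse there, and reassemble, using $\vdash\vDelta$ from presupposition (\Cref{lem:dt:csequiv-psup}) to re-discharge the well-formedness side conditions.

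The delicate case is the variable rule $\mtyping[\vGamma; \Gamma]x T$, whose premises $\vdash\vGamma;\Gamma$ and $x : T \in \vGamma;\Gamma$ are neither typing nor equivalence subderivations, so nothing in the induction applies to them directly. From $\vdash\vGamma;\Gamma\approx\vDelta;\Delta$ and $x : T \in \vGamma;\Gamma$, \Cref{lem:dt:tyequiv-lookup} supplies $T'$ with $x : T' \in \vDelta;\Delta$; together with $\vdash\vDelta;\Delta$ from \Cref{lem:dt:csequiv-psup} the variable rule gives $\mtyping[\vDelta;\Delta]x{T'}$, and it remains to identify $T'$ with $T$ up to conversion. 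The equivalence relating $T$ and $T'$ must come from the context-stack-equivalence derivation rather than from an induction hypothesis, and this is precisely what the boxed premises $\mtyequiv[\vDelta;\Delta]{T}{T'}{\Se_i}$ (and, at intermediate stages, $\mtyping[\vDelta;\Delta]{T'}{\Se_i}$) of the context-stack congruence rules are for: peeling $\vdash\vGamma;\Gamma\approx\vDelta;\Delta$ down to the point where $x$ is declared, that boxed premise gives the equivalence of the two declared types already in the appropriate target prefix; weakening it along the remaining $\wk$'s --- well typed because $\vdash\vDelta;\Delta$ --- yields $\mtyequiv[\vDelta;\Delta]{T}{T'}{\Se_i}$, and conversion (after symmetry) gives $\mtyping[\vDelta;\Delta]x T$. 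The explicit-substitution variable-lookup equivalence rules, $\mtyequiv{x[\vsigma, t/x]}{t}{T[\vsigma]}$ and its sibling, are handled the same way.

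I expect that variable case to be the main obstacle --- not for its length, but because it is where context conversion and the equivalence judgment would otherwise be mutually circular, and keeping the induction well-founded depends essentially on the auxiliary boxed premises the calculus deliberately carries in its context-stack-equivalence judgment (the very premises the surrounding text promises to remove afterwards). The remaining cases are routine: the world-boundary rules reduce to the $\approx$-truncation lemma plus presupposition, and every other term- and substitution-forming rule is closed by feeding its premises to the appropriate induction hypothesis and re-deriving the conclusion in $\vDelta$.
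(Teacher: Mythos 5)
Your proposal is correct and follows the same overall strategy as the paper: induction on the derivations, extending the context-stack equivalence under binders (via the induction hypothesis on the boxed type premise plus reflexivity) and under new worlds, using the prefix-truncation lemma for $\approx$ together with presupposition (\Cref{lem:dt:csequiv-psup}) for $\tunbox$, $\sextt\vsigma n$ and the related equivalence rules, and re-applying conversion in the conversion cases. The one place you genuinely diverge is the variable case. The paper invokes \Cref{lem:dt:tyequiv-lookup} to obtain $x : T' \in \vDelta;\Delta$ together with $\mjudge[\vGamma;\Gamma]{T \approx T'}$, and then transports that equivalence to $\vDelta;\Delta$ by appealing to the lemma itself ("by IH") before closing with the variable rule, symmetry and conversion -- a step whose well-foundedness is loose, since that equivalence is not a subderivation of the variable typing derivation. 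You instead extract $\mtyequiv[\vDelta;\Delta]{T}{T'}{\Se_i}$ directly from the boxed premise of the context-stack congruence rule at the point of declaration and weaken it along the remaining $\wk$'s, which sidesteps that circularity and is exactly the purpose for which the boxed premises were added; the paper's route is shorter on the page, yours is the more robustly well-founded reading, and both yield the same conclusion. The remaining cases of your proof match the paper's case analysis.
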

\begin{proof}
  We do induction on each premise derivation.
  \begin{itemize}[label=Case]
  \item $\vGamma = \vGamma'; \Gamma$ and $\vDelta = \vDelta'; \Delta$
    \begin{mathpar}
      \inferrule
      {\vdash \vGamma'; \Gamma \\ x : T \in \vGamma'; \Gamma}
      {\mtyping[\vGamma'; \Gamma]x T}
    \end{mathpar}
    \begin{align*}
      & \vdash \vDelta'; \Delta
        \tag{by \Cref{lem:dt:csequiv-psup}} \\
      & x : T' \in \vDelta';\Delta \tand \mjudge[\vGamma';\Gamma]{T \approx T'}
        \tag{by \Cref{lem:dt:tyequiv-lookup}} \\
      & \mjudge[\vDelta';\Delta]{T \approx T'}
        \byIH
    \end{align*}
    \begin{mathpar}
      \inferrule*
      {\inferrule*
        {\vdash \vDelta'; \Delta \\ x : T' \in \vDelta'; \Delta}
        {\mtyping[\vDelta'; \Delta]x T'} \\
        \inferrule*
        {\mtyequiv[\vDelta';\Delta]{T}{T'}{\Se_i}}
        {\mtyequiv[\vDelta';\Delta]{T'}{T}{\Se_i}}}
      {\mtyping[\vDelta'; \Delta]x T}
    \end{mathpar}
    
  \item $\vGamma = \vGamma'; \Gamma$ and $\vDelta = \vDelta'; \Delta$
    \begin{mathpar}
      \inferrule
      {\mtyping[\vGamma'; \Gamma]{S}{\Se_i} \\ \mtyping[\vGamma'; (\Gamma, x : S)]{t}{T}}
      {\mtyping[\vGamma'; \Gamma]{\lambda x. t}{\Pi(x : S). T}}
    \end{mathpar}
    \begin{align*}
      & \mtyping[\vDelta'; \Delta]{S}{\Se_i}
        \byIH \\
      & \vdash \vGamma'; (\Gamma, x : S) \approx \vDelta; (\Delta, x : S) \\
      & \mtyping[\vDelta'; (\Delta, x : S)]{t}{T}
        \byIH \\
      & \mtyping[\vDelta'; \Delta]{\lambda x. t}{\Pi(x : S). T}
    \end{align*}
    
  \item
    \begin{mathpar}
      \inferrule
      {\mtyping[\vGamma; \cdot]{T}{\Se_i}}
      {\mtyping{\square T}{\Se_i}}
    \end{mathpar}
    \begin{align*}
      & \vdash \vGamma; \cdot \approx \vDelta; \cdot \\
      & \mtyping[\vDelta; \cdot]{T}{\Se_i}
        \byIH \\
      & \mtyping[\vDelta]{\square T}{\Se_i}
    \end{align*}
    
  \item
    \begin{mathpar}
      \inferrule
      {\mtyping[\vGamma; \cdot]{t}{T}}
      {\mtyping{\boxit t}{\square T}}
    \end{mathpar}
    \begin{align*}
      & \vdash \vGamma; \cdot \approx \vDelta; \cdot \\
      & \mtyping[\vDelta; \cdot]{t}{T}
        \byIH \\
      & \mtyping[\vDelta]{\boxit t}{\square T}
    \end{align*}
    
  \item $\vGamma = \vGamma'; \vGamma''$ and $\vDelta = \vDelta'; \vDelta''$
    \begin{mathpar}
      \inferrule
      {\mtyping[\vGamma']{t}{\square T} \\ \vdash \vGamma'; \vGamma'' \\ |\vGamma''| = n}
      {\mtyping[\vGamma'; \vGamma'']{\unbox n t}{T[\sextt{\vect I} n]}}
    \end{mathpar}
    \begin{align*}
      & \vdash \vDelta'; \vDelta''
        \tag{by \Cref{lem:dt:csequiv-psup}} \\
      & \vdash \vGamma' \approx \vDelta' \\
      & \mtyping[\vDelta']{t}{\square T}
        \byIH \\
      & \mtyping[\vDelta'; \vDelta'']{\unbox n t}{T[\sextt{\vect I}n]}
    \end{align*}
    
  \item $\vGamma = \vGamma'; \Gamma$ and $\vDelta = \vDelta'; \Delta$
    \begin{mathpar}
      \inferrule
      {\mtyping[\vGamma'; \Gamma]{S}{\Se_i} \\ \mtyping[\vGamma'; (\Gamma, x : S)]{t}{T} \\ \mtyping[\vGamma';\Gamma]{s}{S}}
      {\mtyequiv[\vGamma'; \Gamma]{(\lambda x. t)\ s}{t[\vect I, s/x]}{T[\vect I, s/x]}}
    \end{mathpar}
    \begin{align*}
      & \mtyping[\vDelta'; \Delta]{S}{\Se_i}
      \byIH \\
      & \mtyping[\vDelta';\Delta]{s}{S}
        \byIH \\
      & \vdash \vGamma'; (\Gamma, x : S) \approx \vDelta'; (\Delta, x : S) \\
      & \mtyping[\vDelta'; (\Delta, x : S)]{t}{T}
        \byIH \\
      & \mtyequiv[\vDelta'; \Delta]{(\lambda x. t)\ s}{t[\vect I, s/x]}{T[\vect I, s/x]}
    \end{align*}
    
  \item $\vGamma = \vGamma'; \vGamma''$ 
    \begin{mathpar}
      \inferrule
      {\mtyping[\vGamma'; \cdot]{t}{T} \\ |\vGamma''| = n}
      {\mtyequiv[\vGamma'; \vGamma'']{\unbox n {(\boxit t)}}{t[\sextt{\vect I}n]}{T[\sextt{\vect I}n]}}
    \end{mathpar}
    \begin{align*}
      & \vdash \vGamma' \approx \trunc \vDelta n \\
      & \vdash \vGamma'; \cdot \approx \trunc \vDelta n; \cdot \\
      & \mtyping[\trunc \vDelta n; \cdot]{t}{T}
        \byIH \\
      & \mtyequiv[\vDelta]{\unbox n {(\boxit t)}}{t[\sextt{\vect I}n]}{T[\sextt{\vect I}n]}
    \end{align*}
    
  \item
    \begin{mathpar}
      \inferrule
      {\mtyping[\vGamma]{t}{\square T}}
      {\mtyequiv{t}{\boxit{(\unbox 1 t)}}{\square T}}  
    \end{mathpar}
    immediate by IH.
    
  \item
    \begin{mathpar}
      \inferrule
      {\vdash \vGamma}
      {\mtyping{\vect I}{\vGamma}}
    \end{mathpar}
    We have
    \begin{mathpar}
      \inferrule*
      {\inferrule*
        {\inferrule
          { }
          {\vdash \vDelta}\text{(\Cref{lem:dt:csequiv-psup})}}
        {\mtyping[\vDelta]{\vect I}{\vDelta}} \\
        \inferrule*
        {\vdash \vGamma \approx \vDelta}
        {\vdash \vDelta \approx \vGamma}}
      {\mtyping[\vDelta]{\vect I}{\vGamma}}
    \end{mathpar}
  \item $\vGamma = \vGamma'; \vGamma''$ 
    \begin{mathpar}
      \inferrule
      {\mtyping[\vGamma']{\vsigma}{\vDelta'} \\ \vdash \vGamma'; \vGamma'' \\ |\vGamma''| = n}
      {\mtyping[\vGamma'; \vGamma'']{\sextt\vsigma n}{\vDelta'; \cdot}}
    \end{mathpar}
    \begin{align*}
      & \vdash \vDelta
        \tag{by \Cref{lem:dt:csequiv-psup}} \\
      & \vdash \vGamma' \approx \trunc \vDelta n \\
      & \mtyping[\trunc \vDelta n]{\vsigma}{\vDelta'}
        \byIH \\
      & \mtyping[\vDelta]{\sextt\vsigma n}{\vDelta'; \cdot}
    \end{align*}
  \end{itemize}
\end{proof}

As a corollary, we prove the following theorem:
\begin{lemma}
  $\vdash \vGamma \approx \vGamma'$ is a PER. 
\end{lemma}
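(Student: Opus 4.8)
The plan is to establish the two halves of the PER property---symmetry and transitivity---by induction on derivations of $\vdash \vGamma \approx \vGamma'$, proving symmetry first so that it can be invoked inside the transitivity argument.

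For symmetry I would induct on $\vdash \vGamma \approx \vGamma'$. The base rule and the rule concluding $\vdash \vGamma; \cdot \approx \vDelta; \cdot$ are immediate (the latter by the induction hypothesis). The only case with content is the rule extending by a typed variable: from the premises $\vdash \vGamma; \Gamma \approx \vDelta; \Delta$, $\mtyequiv[\vGamma;\Gamma]{T}{T'}{\Se_i}$ and the three boxed premises $\mtyequiv[\vDelta;\Delta]{T}{T'}{\Se_i}$, $\mtyping[\vGamma;\Gamma]{T}{\Se_i}$ and $\mtyping[\vDelta;\Delta]{T'}{\Se_i}$, I get $\vdash \vDelta; \Delta \approx \vGamma; \Gamma$ from the induction hypothesis, $\mtyequiv[\vDelta;\Delta]{T'}{T}{\Se_i}$ by symmetry of term equivalence applied to the first boxed premise, $\mtyequiv[\vGamma;\Gamma]{T'}{T}{\Se_i}$ likewise from the main equivalence premise, and the remaining two typing premises are exactly the other two boxed premises with their roles exchanged. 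Reassembling the rule yields $\vdash \vDelta; (\Delta, x : T') \approx \vGamma; (\Gamma, x : T)$.

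For transitivity I would induct on the derivation of $\vdash \vGamma \approx \vGamma'$ and invert the derivation of $\vdash \vGamma' \approx \vGamma''$; since the head of $\vGamma'$ determines which rule can conclude the second derivation, the inversion is unambiguous, and the $\cdot$-extension and base cases are routine via the induction hypothesis. In the typed-variable case we have $\vGamma = \vGamma_1; (\Gamma_1, x : T)$, $\vGamma' = \vGamma_2; (\Gamma_2, x : T')$ and $\vGamma'' = \vGamma_3; (\Gamma_3, x : T'')$, with $\mtyequiv[\vGamma_1;\Gamma_1]{T}{T'}{\Se_i}$ and its boxed companions, and $\mtyequiv[\vGamma_2;\Gamma_2]{T'}{T''}{\Se_j}$ and its boxed companions. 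Setting $k := \max(i,j)$, I use \Cref{lem:dt:eq-cumu} and \Cref{lem:dt:cumu} to raise every relevant judgement to level $\Se_k$. The induction hypothesis on the prefixes gives $\vdash \vGamma_1; \Gamma_1 \approx \vGamma_3; \Gamma_3$. To obtain the required $\mtyequiv[\vGamma_1;\Gamma_1]{T}{T''}{\Se_k}$ I transport $\mtyequiv[\vGamma_2;\Gamma_2]{T'}{T''}{\Se_k}$ into the context stack $\vGamma_1; \Gamma_1$ via \Cref{lem:dt:cs-equiv} along $\vdash \vGamma_2;\Gamma_2 \approx \vGamma_1;\Gamma_1$ (the symmetric form of the first prefix premise, available by the symmetry half) and then compose it with $\mtyequiv[\vGamma_1;\Gamma_1]{T}{T'}{\Se_k}$ by transitivity of term equivalence. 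The conclusion's boxed equivalence $\mtyequiv[\vGamma_3;\Gamma_3]{T}{T''}{\Se_k}$ is handled symmetrically: transport the first derivation's boxed $\mtyequiv[\vGamma_2;\Gamma_2]{T}{T'}{\Se_k}$ into $\vGamma_3;\Gamma_3$ along the second prefix premise, then compose with the second derivation's boxed $\mtyequiv[\vGamma_3;\Gamma_3]{T'}{T''}{\Se_k}$. The two remaining boxed premises $\mtyping[\vGamma_1;\Gamma_1]{T}{\Se_k}$ and $\mtyping[\vGamma_3;\Gamma_3]{T''}{\Se_k}$ come directly from the corresponding boxed premises of the two input derivations by cumulativity. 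Reassembling the rule gives $\vdash \vGamma \approx \vGamma''$.

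I expect the main obstacle to be precisely the bookkeeping in this typed-variable case of transitivity: the two input equivalences live at possibly different universe levels and, crucially, over different context stacks, so the argument genuinely needs both cumulativity and context-stack conversion---which is exactly why the boxed premises are carried along in the conversion rule---and it relies on having proved symmetry beforehand in order to apply \Cref{lem:dt:cs-equiv} in the correct direction. Everything else is structural.
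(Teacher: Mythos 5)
Your proof is correct and takes essentially the same route as the paper's own (very terse) argument, which just observes that term equivalence is a PER, that $\vdash \vGamma \approx \vGamma'$ is defined pointwise from it, and that context-stack conversion (\Cref{lem:dt:cs-equiv}) is applied in the transitivity case. Your extra bookkeeping with cumulativity to reconcile the two universe levels is precisely the detail the paper leaves implicit.
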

\begin{proof}
  Because $\mtyequiv{t}{t'}{T}$ is a PER and $\vdash \vGamma \approx \vGamma'$ is
  defined pointwise. Apply context stack conversion in transitivity.
\end{proof}

\subsection{Presupposition}

\begin{lemma}[presupposition]
  \begin{itemize}
  \item If $\mtyping t T$, then $\vdash \vGamma$ and $\mjudge T$.
  \item If $\mtyequiv{t}{t'}T$, then $\vdash \vGamma$, $\mtyping t T$,
    $\mtyping{t'}{T}$ and $\mjudge T$. 
  \item If $\mtyping{\vsigma}{\vDelta}$, then $\vdash \vGamma$ and $\vdash \vDelta$. 
  \item If $\mtyequiv{\vsigma}{\vsigma'}{\vDelta}$, then $\vdash \vGamma$,
    $\mtyping{\vsigma}{\vDelta}$, $\mtyping{\vsigma'}{\vDelta}$ and $\vdash \vDelta$. 
  \end{itemize}
\end{lemma}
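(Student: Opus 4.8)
The plan is to prove all four statements simultaneously by a single mutual induction over the inductive family generating the typing and equivalence judgments. For each rule we must, from the data supplied by the induction hypotheses on the strict subderivations (which, by mutuality, already presuppose the desired facts), reconstruct the three ingredients of the conclusion: well-formedness $\vdash \vGamma$ of the ambient stack, the typing of the subject term(s) or substitution, and well-formedness $\mjudge T$ of the classifying type. The bulk of the cases — congruence, PER, and the structural rules — are routine: $\vdash\vGamma$ is always available from any subderivation's IH or directly from a $\vdash\vGamma$ premise; for an equivalence conclusion the two typing derivations are assembled from the typing IHs, using \Cref{lem:dt:cumu,lem:dt:eq-cumu} to align universe levels; and wherever a premise lives in an extended or converted stack we invoke context-stack conversion, \Cref{lem:dt:cs-equiv}. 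The variable rules need \Cref{lem:dt:ty-lookup} and \Cref{lem:dt:tyequiv-lookup} to know the looked-up type is well-formed, and the conversion rules are discharged immediately by the IH.

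The cases whose conclusion type is built by a substitution need one small extra step each. For $\mtyping{t[\vsigma]}{T[\vsigma]}$, the IH on $t$ gives $\mjudge[\vDelta]{T}$, i.e.\ $\mtyping[\vDelta]{T}{\Se_i}$, and the substitution-typing rule then yields $\mtyping{T[\vsigma]}{\Se_i}$. For the modal rules, the boxed premise $\mtyping[\vGamma;\cdot]{T}{\Se_i}$ together with $\vdash\vGamma;\vDelta$ gives $\mtyping[\vGamma;\vDelta]{\sextt{\vect I}n}{\vGamma;\cdot}$, hence $\mtyping[\vGamma;\vDelta]{T[\sextt{\vect I}n]}{\Se_i}$; the $\unbox n t[\vsigma]$ rule is similar, using $\trunc\vsigma n$ and the truncation/truncation-offset laws recorded just above. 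For $\Pi$-application and $\Nat$-elimination the boxed premises handing us $S$, $T$, resp.\ the motive $M$, are exactly what lets us build the singleton substitutions $\vect I, s/x$ (and its iterates) and conclude well-formedness of $T[\vect I,s/x]$, $M[\vect I,t/x]$, etc.; without them we would have to recover $S$ from $\Pi(x:S).T$ via \Cref{lem:dt:pi-inv}, which is available only at the same inductive stage as presupposition itself, so the boxed premises exist precisely to break this circularity.

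The genuinely laborious cases are the $\beta$ and $\eta$ equivalences for $\Pi$ (and, analogously, for $\square$), where the right-hand side must be type-checked from scratch: for $\Pi$-$\eta$ one types $t[\wk_x]$ at $(\Pi(x:S).T)[\wk_x]$, rewrites this along the substitution-pushing rule into $\Pi(x:S[\wk_x]).(T[q(\wk_x,x)])$, applies it to $x$, abstracts, and converts the resulting $\Pi$-type back using the congruence and algebraic laws — all mechanical given the boxed $S$ and $T$, but with many moving pieces; the $\square$ cases proceed the same way, with \Cref{lem:dt:sq-inv} supplying well-formedness of the body where no boxed premise is at hand. I expect this family of reconstructions, rather than any single deep idea, to be the main obstacle: the proof is a long but essentially routine bookkeeping exercise, and the only place real care is needed is checking that every appeal to an inversion lemma or to context-stack conversion is justified by data already produced by the induction — which is exactly why the calculus was presented with the boxed premises.
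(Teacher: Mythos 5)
Your proof matches the paper's: a mutual induction over the four judgment forms, reassembling in each case $\vdash \vGamma$, the required typings, and $\mjudge T$ from the induction hypotheses together with \Cref{lem:dt:ty-lookup,lem:dt:tyequiv-lookup,lem:dt:cumu,lem:dt:eq-cumu,lem:dt:cs-equiv} and the boxed premises, with the substitution, $\beta$, and $\eta$ cases handled by exactly the kind of type-reconstruction bookkeeping you describe. One minor correction: \Cref{lem:dt:pi-inv} (and \Cref{lem:dt:sq-inv}) is not circular with presupposition — the paper proves it beforehand by induction on the derivation of the $\Pi$-type and actually invokes it in the application case rather than the boxed premises — but this does not affect the validity of your argument.
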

\begin{proof}
  We do induction on each premise derivation.
  \begin{itemize}[label=Case]
  \item $\vGamma = \vGamma'; \Gamma$
    \begin{mathpar}
      \inferrule
      {\vdash \vGamma'; \Gamma \\ x : T \in \vGamma'; \Gamma}
      {\mtyping[\vGamma'; \Gamma]x T}
    \end{mathpar}
    We obtain $\mjudge[\vGamma';\Gamma]T$ by \Cref{lem:dt:ty-lookup}.

  \item $\vGamma = \vGamma'; \Gamma$
    \begin{mathpar}
      \inferrule
      {\mtyping[\vGamma'; \Gamma]{S}{\Se_i} \\ \mtyping[\vGamma'; (\Gamma, x : S)]{t}{T}}
      {\mtyping[\vGamma'; \Gamma]{\lambda x. t}{\Pi(x : S). T}}
    \end{mathpar}
    \begin{align*}
      & \vdash \vGamma'; \Gamma
        \byIH \\
      & \mjudge[\vGamma'; (\Gamma, x : S)] T
        \byIH \\
      & \mtyping[\vGamma'; \Gamma]{S}{\Se_j} \tand \mtyping[\vGamma'; (\Gamma, x : S)]{T}{\Se_j}
        \tag{for some $j$ by \Cref{lem:dt:cumu}} \\
      & \mtyping[\vGamma'; \Gamma]{\Pi(x : S).T}{\Se_j}
    \end{align*}
    
  \item
    \begin{mathpar}
      \inferrule
      {\mtyping{t}{\Pi(x : S). T} \\ \mtyping{s}{S}}
      {\mtyping{t\ s}{T[\vect I, s/x]}}
    \end{mathpar}
    \begin{align*}
      & \vdash \vGamma
        \byIH \\
      & \mjudge{\Pi(x : S). T}
        \byIH \\
      & \mjudge[\vGamma'; (\Gamma, x : S)] T
        \tag{by \Cref{lem:dt:pi-inv}, where $\vGamma = \vGamma'; \Gamma$} \\
      & \mjudge[\vGamma'; \Gamma]{T[\vect I, s/x]}
    \end{align*}
    
  \item
    \begin{mathpar}
      \inferrule
      {\mtyping[\vGamma; \cdot]{t}{T}}
      {\mtyping{\boxit t}{\square T}}
    \end{mathpar}
    \begin{align*}
      & \vdash \vGamma; \cdot
        \byIH \\
      & \vdash \Gamma \\
      & \mjudge[\vGamma; \cdot]T
        \byIH \\
      & \mjudge{\square T}
    \end{align*}
    
  \item $\vGamma = \vGamma'; \vGamma''$
    \begin{mathpar}
      \inferrule
      {\mtyping[\vGamma']{t}{\square T} \\ \vdash \vGamma'; \vGamma'' \\ |\vGamma''| = n}
      {\mtyping[\vGamma'; \vGamma'']{\unbox n t}{T[\sextt{\vect I}n]}}
    \end{mathpar}
    \begin{align*}
      & \mjudge[\vGamma'; \cdot]T
        \tag{by \Cref{lem:dt:sq-inv}} \\
      & \mjudge[\vGamma'; \vGamma'']{T[\sextt{\vect I}n]}
    \end{align*}

  \item $\vGamma = \vGamma'; \Gamma$
    \begin{mathpar}
      \inferrule
      {\mtyping[\vGamma'; \Gamma]{S}{\Se_i} \\ \mtyequiv[\vGamma'; (\Gamma, x : S)]{t}{t'}{T}}
      {\mtyequiv[\vGamma'; \Gamma]{\lambda x. t}{\lambda x. t'}{\Pi(x : S). T}}
    \end{mathpar}
    \begin{align*}
      H_1: & \vdash \vGamma'; (\Gamma, x : S)
             \byIH  \\
           & \mtyping[\vGamma'; (\Gamma, x : S)]{t}{T}
             \byIH \\
           & \mtyping[\vGamma'; (\Gamma, x : S)]{t'}{T}
             \byIH \\
           & \mjudge[\vGamma'; (\Gamma, x : S)]{T}
             \byIH \\
           & \mtyping[\vGamma'; \Gamma]{\lambda x. t}{\Pi(x : S).T}
             \tand \mtyping[\vGamma'; \Gamma]{\lambda x. t'}{\Pi(x : S).T} \\
           & \vdash \vGamma'; \Gamma \tand \mjudge[\vGamma'; \Gamma]S
             \tag{by $H_1$} \\
           & \mjudge[\vGamma'; \Gamma]{\Pi(x : S). T}
    \end{align*}
    
  \item $\vGamma = \vGamma'; \Gamma$
    \begin{mathpar}
      \inferrule
      {\mtyping[\vGamma'; \Gamma]{S}{\Se_i} \\ 
        \mtyequiv[\vGamma'; \Gamma]{S}{S'}{\Se_i} \\ \mtyequiv[\vGamma'; (\Gamma, x : S)]{T}{T'}{\Se_i}}
      {\mtyequiv[\vGamma'; \Gamma]{\Pi(x : S). T}{\Pi(x : S'). T'}{\Se_i}}
    \end{mathpar}
    \begin{align*}
      & \vdash \vGamma'; \Gamma \tand \mtyping[\vGamma'; \Gamma]{S}{\Se_i}
        \tand \mtyping[\vGamma'; \Gamma]{S'}{\Se_i}
        \byIH \\
      & \mtyping[\vGamma'; (\Gamma, x : S)]{T}{\Se_i} \tand
        \mtyping[\vGamma'; (\Gamma, x : S)]{T'}{\Se_i}
        \byIH \\
      & \vdash \vGamma; (\Gamma, x : S) \approx \vGamma; (\Gamma, x : S')
        \tag{by definition} \\
      & \mtyping[\vGamma'; (\Gamma, x : S')]{T'}{\Se_i}
        \tag{by \Cref{lem:dt:cs-equiv}} \\
      & \mtyping[\vGamma'; \Gamma]{\Pi(x : S). T}{\Se_i} \tand
        \mtyping[\vGamma'; \Gamma]{\Pi(x : S'). T'}{\Se_i}
    \end{align*}

  \item
    \begin{mathpar}
      \inferrule
      {\mtyping[\vGamma; \cdot]{t}{T} \\ \vdash \vGamma; \vDelta \\ |\vDelta| = n}
      {\mtyequiv[\vGamma; \vDelta]{\unbox n {(\boxit t)}}{t[\sextt{\vect I}n]}{T[\sextt{\vect I}n]}}  
    \end{mathpar}
    Immediate.
    
  \item
    \begin{mathpar}
      \inferrule
      {\mtyping[\vGamma]{t}{\square T}}
      {\mtyequiv{t}{\boxit{(\unbox 1 t)}}{\square T}}  
    \end{mathpar}
    Notice we have
    \begin{align*}
      \mtyequiv[\vGamma; \cdot]{\vect I}{\vect I; ()}{\vGamma; \cdot}
    \end{align*}
    This allows us to conclude $\mtyping{\boxit{(\unbox 1 t)}}{\square T}$.
    
  \item $\vGamma = \vGamma'; \vGamma''$
    \begin{mathpar}
      \inferrule
      {\mtyping[\vGamma']{\vsigma}{\vDelta} \\ \vdash \vGamma'; \vGamma'' \\ |\vGamma''| = n}
      {\mtyping[\vGamma'; \vGamma'']{\sextt\vsigma n}{\vDelta; \cdot}}
    \end{mathpar}
    We only need to prove $\vdash \vDelta; \cdot$ which we can get from
    $\vdash \vDelta$ after applying IH to $\mtyping[\vGamma']{\vsigma}{\vDelta}$.
    
  \item
    \begin{mathpar}
      \inferrule
      {\mtyping[\vGamma'']{\vsigma}{\vGamma'} \\ \mtyping{\vdelta}{\vGamma'';\vDelta} \\
        |\vDelta| = m}
      {\mtyequiv{(\sextt\vsigma m) \circ \vdelta}{\sextt{(\vsigma \circ \trunc \vdelta m)}
          {\Ltotal\vdelta m}}{\vGamma';\cdot}}
    \end{mathpar}
    \begin{align*}
      & \vdash \vGamma' \byIH \\
      & \vdash \vGamma \byIH \\
      & \mtyping[\trunc\vGamma{\Ltotal\vdelta m}]{\vdelta}{\vGamma''}
    \end{align*}
    We can derive the goal based on these judgments and the premises. 
  \end{itemize}
\end{proof}

\subsection{Removing Boxed Premises}

With presupposition, we are able to remove those boxed premises. The systems with and
without those premises can be proved equivalent.
\begin{theorem}
  We can remove the boxed premises and the resulting system remains equivalent to the
  one before the removal. 
\end{theorem}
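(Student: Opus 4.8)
Write $\mathcal{S}^+$ for the system carrying the boxed premises and $\mathcal{S}^-$ for the system obtained by erasing them, and recall that everything established in this section so far --- context stack conversion (\Cref{lem:dt:cs-equiv}), the inversion lemmas (\Cref{lem:dt:pi-inv,lem:dt:sq-inv}), cumulativity (\Cref{lem:dt:cumu,lem:dt:eq-cumu}) and presupposition --- is a theorem about $\mathcal{S}^+$. The plan is to show that $\mathcal{S}^+$ and $\mathcal{S}^-$ derive exactly the same judgments, by a mutual induction over the six judgment forms $\vdash \vGamma$, $\vdash \vGamma \approx \vGamma'$, $\mtyping t T$, $\mtyequiv{t}{t'}T$, $\mtyping{\vsigma}{\vDelta}$ and $\mtyequiv{\vsigma}{\vsigma'}{\vDelta}$.

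One inclusion is immediate: every inference rule of $\mathcal{S}^-$ is the corresponding rule of $\mathcal{S}^+$ with some premises deleted, so a direct induction on $\mathcal{S}^+$-derivations turns each of them into an $\mathcal{S}^-$-derivation of the same conclusion --- apply the induction hypothesis to the sub-derivations (discarding whatever was boxed) and re-apply the weaker $\mathcal{S}^-$ rule. For the converse I would induct on $\mathcal{S}^-$-derivations. Any rule with no boxed premise is handled by the induction hypothesis followed by a re-application of the identical $\mathcal{S}^+$ rule. For a rule with boxed premises, the induction hypothesis turns its non-boxed premises into $\mathcal{S}^+$-derivations; it then remains to reconstruct the boxed premises as $\mathcal{S}^+$-derivations and feed everything into the $\mathcal{S}^+$ rule. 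Each reconstruction is a one- or two-line consequence of presupposition and the inversion/conversion lemmas:
\begin{itemize}
\item In $\Pi$-introduction, the box $\mtyping[\vGamma; \Gamma]{S}{\Se_i}$ is obtained from $\vdash \vGamma; (\Gamma, x : S)$ --- which presupposition extracts from the non-boxed premise $\mtyping[\vGamma; (\Gamma, x : S)]{t}{T}$ --- by inverting context stack well-formedness.
\item In $\Pi$-application, $\Pi$-$\beta$, $\Pi$-$\eta$ and the substitution rule for $s\ t[\vsigma]$, the boxes $\mtyping[\vGamma; \Gamma]{S}{\Se_i}$ and $\mtyping[\vGamma; (\Gamma, x : S)]{T}{\Se_i}$ follow from $\mjudge{\Pi(x : S). T}$ (given by presupposition on the typing of $t$, resp. of $s$) via \Cref{lem:dt:pi-inv}; \Cref{lem:dt:cumu} then raises the two resulting levels to a common $i$.
\item In the $\Pi$-formation congruence rule, $\mtyping[\vGamma'; \Gamma]{S}{\Se_i}$ is exactly one half of what presupposition yields from the non-boxed premise $\mtyequiv[\vGamma'; \Gamma]{S}{S'}{\Se_i}$, and at the same universe level, so no cumulativity is needed.
\item In $\square$-elimination, $\square$-$\eta$ and the substitution rule for $\unbox n t[\vsigma]$, the box $\mtyping[\vGamma; \cdot]{T}{\Se_i}$ comes from $\mjudge{\square T}$ (presupposition on the typing of $t$) via \Cref{lem:dt:sq-inv}; in $\square$-$\beta$ it comes directly from presupposition on $\mtyping[\vGamma; \cdot]{t}{T}$.
\item In the composition rule $(\sextt\vsigma n) \circ \vdelta$, the box $\vdash \vGamma; \vDelta$ is the codomain well-formedness produced by presupposition on the non-boxed premise $\mtyping[\vGamma'']{\vdelta}{\vGamma; \vDelta}$.
\item In the context stack equivalence rule, presupposition on the non-boxed premise $\mtyequiv[\vGamma; \Gamma]{T}{T'}{\Se_i}$ gives $\mtyping[\vGamma; \Gamma]{T}{\Se_i}$ and $\mtyping[\vGamma; \Gamma]{T'}{\Se_i}$; context stack conversion (\Cref{lem:dt:cs-equiv}) along the already reconstructed $\vdash \vGamma; \Gamma \approx \vDelta; \Delta$ then supplies the remaining boxes $\mtyequiv[\vDelta; \Delta]{T}{T'}{\Se_i}$ and $\mtyping[\vDelta; \Delta]{T'}{\Se_i}$.
\end{itemize}

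Combining the two inclusions yields the claimed equivalence, and the structural metatheory already proved for $\mathcal{S}^+$ transfers verbatim to $\mathcal{S}^-$. There is no circularity: presupposition is invoked only as a finished theorem about $\mathcal{S}^+$, applied to the $\mathcal{S}^+$-derivations handed back by the induction hypothesis, while the induction itself runs on the $\mathcal{S}^-$-derivation. The main obstacle --- and the part most likely to hide a slip --- is the universe-level bookkeeping in the $\Pi$-rules, where a single $\Se_i$ is demanded for both domain and codomain but inversion delivers them at possibly different levels; this is precisely where \Cref{lem:dt:cumu} (and \Cref{lem:dt:eq-cumu} for the equivalence variants) is needed, mirroring how the original presupposition proof aligned levels. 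Everything else is a routine rule-by-rule check.
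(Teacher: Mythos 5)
Your proposal is correct and takes essentially the same route as the paper, whose proof is only the one-line sketch ``show equivalence of the typing and equivalence judgments, applying presupposition whenever necessary''; your mutual induction with reconstruction of the boxed premises via presupposition, the inversion lemmas, cumulativity, and context-stack conversion is exactly that argument spelled out. The only nit is the grouping of the $\Pi$-$\beta$ case: there no premise types a term at $\Pi(x : S).\,T$, so the boxes come instead from presupposition on $t$ (giving $\vdash \vGamma; (\Gamma, x : S)$ and the well-formedness of $T$) plus inversion of the context-stack well-formedness, just as in your $\Pi$-introduction bullet.
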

We do this by showing equivalence of typing and equivalence judgments. We apply
presupposition whenever necessary. This theorem also implies the resulting system also
has presupposition.

\section{Untyped Domain and Evaluation for Dependent Types}

In this section, we define the untyped domain in which we will operate and define the
NbE algorithm:
\begin{alignat*}{2}
  z & && \tag{Domain variables, $\N$} \\
  a, b, A &:=&&\ \Nd \sep \squared A \sep Pi(A, x.T, \vrho) \sep \Ud_i 
  \tag{Domain terms, $D$}\\
  & && \sep \zed \sep \sud a \sep \Lambda(x.t, \vrho) \sep \tbox(a) \sep \uparrow^A(c) \\
  c, C &:= &&\ z \sep \rec{x.M}{a}{(x_1, x_2.t)}{c}{\vrho} \sep c\ d \sep \tunbox(n, c)
  \tag{Neutral domain terms, $D^{\Ne}$} \\
  d, D &:= &&\ \downarrow^A(a)
  \tag{Normal domain terms, $D^{\Nf}$} \\
  \Env &:= &&\ \Var \rightharpoonup D \\
  \Envs &:= &&\ \N \to \N \times \Env \\
  \rho & && \tag{Local evaluation environment, $\Env$} \\
  \vect{\rho} & && \tag{(Global) evaluation environment, $\Envs$} 
\end{alignat*}
Similar to the syntactic case, we use capital cases for those values intended to be
types.  Compared to the simply typed case, the syntax for reflection $\uparrow^A(c)$
and reification $\downarrow^A(a)$ marks their types using semantic values. In $Pi$,
$\Lambda$ and $\textsf{rec}$, we capture the global evaluation environment.

We inherit all definitions of evaluation environments in \Cref{sec:st:domain} as $\vrho$ has the same structure. 

\subsection{Untyped Modal Transformations}

Untyped modal transformations, or just \UnMoTs, is ranged over $\kappa$ and modeled by the
meta function space $\N \to \N$, following the simply typed case. It is responsible
for shifting the $\tunbox$ levels of domain neutral values.  We inherit all definitions and lemmas in
\Cref{sec:st:ut-mtrans}, again due to the same structure of evaluation environments.  

We next define application of modal transformation to $D$ and $\Envs$:
\begin{align*}
  \Nd[\kappa] &:= \Nd \\
  \squared A[\kappa] &:= \squared(A[\sextt\kappa 1]) \\
  Pi(A, x.T, \vrho)[\kappa] &:= Pi(A[\kappa], x.T, \vrho[\kappa]) \\
  \Ud_i[\kappa] &:= \Ud_i \\
  \zed[\kappa] &:= \zed \\
  \sud a[\kappa] &:= \sud{a[\kappa]} \\
  \tbox(a)[\kappa] &:= \tbox(a[\sextt\kappa 1]) \\
  \Lambda(x.t, \vrho)[\kappa] &:= \Lambda(x.t, \vrho[\kappa]) \\
  \uparrow^A(c)[\kappa] &:= \uparrow^{A[\kappa]}(c[\kappa]) \\
  \\
  z[\kappa] &:= z \\
  \rec{x.M}{a}{(x_1,x_2.t)}{c}{\vrho}[\kappa] &:= \rec{x.M}{a[\kappa]}{(x_1,x_2.t)}{c[\kappa]}{\vrho[\kappa]} \\
  c\ d[\kappa] &:= (c[\kappa])\ (d[\kappa]) \\ 
  \tunbox(n, c)[\kappa] &:= \tunbox(\Ltotal\kappa n, c[\trunc\kappa n]) \\
  \\
  \downarrow^A(a)[\kappa] &:= \downarrow^{A[\kappa]}(a[\kappa]) \\
  \\
  \vrho[\kappa](0) &:= (\Ltotal\kappa n, \rho[\kappa]) \tag{where $(n, \rho) :=
                     \vrho(0)$} \\
  \vrho[\kappa](1 + n) &:= \trunc\vrho 1[\trunc\kappa{\Ltotal\vrho 1}](n)
\end{align*}
where $\rho[\kappa]$ is defined by mapping $x$ to $\rho(x)[\kappa]$ if $x$ is defined
in $\rho$.

\subsection{Evaluation}

Next we consider the evaluation function, which, given $\vrho$, translate an $\Exp$ to
a $D$.
\begin{align*}
  \intp{\_} &: \Exp \to \Envs \to D \\
  \intp{\Se_i}(\vrho) &:= \Ud_i \\
  \intp{\Nat}(\vrho) &:= \Nd \\
  \intp{\square T}(\vrho) &:= \squared (\intp{T}(\ext(\vrho))) \\
  \intp{\Pi(x : S). T}(\vrho) &:= Pi(\intp{S}(\vrho), x.T, \vrho) \\
  \intp{x}(\vrho) &:= \rho(x) \tag{where $(\_, \rho) := \vrho(0)$} \\
  \intp{\ze}(\vrho) &:= \zed \\
  \intp{\su t}(\vrho) &:= \sud{\intp{t}(\vrho)} \\
  \intp{\elimn{x.M}{s}{x, y.u}{t}}(\vrho) &:= \trec \cdot (x.M,
                                            \intp{s}(\vrho), (x,y.u), \intp{t}(\vrho),
                                            \vrho) \\
  \intp{\boxit t}(\vrho) &:= \tbox(\intp{t}(\ext(\vrho))) \\
  \intp{\unbox n t}(\vrho) &:= \tunbox \cdot (\Ltotal\vrho n, \intp{t}(\trunc\vrho n)) \\
  \intp{\lambda x. t}(\vrho) &:= \Lambda(x.t, \vrho) \\
  \intp{t\ s}(\vrho) &:= \intp{t}(\vrho) \cdot \intp{s}(\vrho) \\
  \intp{t[\vsigma]}(\vrho) &:= \intp{t}(\intp{\vsigma}(\vrho))
\end{align*}

Here we make use of three partial functions which evaluates $\tbox$, $\Lambda$ and
object numbers, respectively.  We define the partial $\tunbox$ as follows:
\begin{align*}
  \tunbox \cdot &: \N \to D \rightharpoonup D \\
  \tunbox \cdot (n, \tbox(a)) &:= a[\sextt\vone n]  \\
    \tunbox \cdot (n, \uparrow^{\squared A}(c)) &:= \uparrow^{A[\sextt\vone n]} (\tunbox(n, c))
\end{align*}
We define the partial application as follows:
\begin{align*}
  \_ \cdot \_ &: D \rightharpoonup D \rightharpoonup D \\
  (\Lambda(x.t, \vrho)) \cdot a &:= \intp{t}(\ext(\vrho, x, a)) \\
  (\uparrow^{Pi(A, x.T, \vrho)}(c)) \cdot a &:= \uparrow^{\intp{T}(\ext(\vrho, x, a))} (c\ \downarrow^A(a))
\end{align*}
We define the partial recursion as follows:
\begin{align*}
  \trec \cdot &: \Trm \to D \to \Trm \to D \to \Envs \rightharpoonup D \\
  \trec \cdot (x.M, a, (x_1, x_2.t), \zed, \vrho)
              &:= a \\
  \trec \cdot (x.M, a, (x_1, x_2.t), \sud b, \vrho)
              &:= \intp{t}(\ext(\vrho, x_1, b, x_2, \trec \cdot (x.M, a, (x_1, x_2.t), b, \vrho))) \\
  \trec \cdot (x.M, a, (x_1, x_2.t), \uparrow^{\Nd}(c), \vrho)
              &= \uparrow^{\intp{M}(\ext(\vrho, x, \uparrow^{\Nd}(c)))}(\rec{x.M}{a}{(x_1,x_2.t)}{c}{\vrho})
\end{align*}
In the $t[\vsigma]$ case, we need the interpretation of substitutions:
\begin{align*}
    \intp{\_} &: \Substs \to \Envs \to \Envs \\
    \intp{\vect I}(\vrho) &:= \vrho \\
    \intp{\wk_x}(\vrho) &:= \drop(\vrho, x) \\
    \intp{\vsigma, t/x}(\vrho) &:= \ext(\intp{\vsigma}(\vrho), x, \intp{t}(\vrho)) \\
    \intp{\sextt\vsigma n}(\vrho) &:= \ext(\intp{\vsigma}(\trunc\vrho n), \Ltotal\vrho n) \\
  \intp{\vsigma \circ \vdelta}(\vrho) &:= \intp{\vsigma}(\intp{\vdelta}(\vrho))
\end{align*}

We can prove the following just like in the simply typed case:
\begin{lemma}\labeledit{lem:dt:unbox.-mon}
  $\tunbox \cdot (n, a)[\kappa] = \tunbox \cdot (\Ltotal\kappa n, a[\trunc\kappa n])$
\end{lemma}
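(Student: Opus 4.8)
The statement to prove is \Cref{lem:dt:unbox.-mon}, the dependently typed analogue of \Cref{lem:unbox.-mon}:
\[
\tunbox \cdot (n, a)[\kappa] = \tunbox \cdot (\Ltotal\kappa n, a[\trunc\kappa n]).
\]

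\textbf{Proof plan.} The plan is to proceed by case analysis on the structure of $a$, following exactly the strategy used in the simply typed case (\Cref{lem:unbox.-mon}), since $\tunbox \cdot (n, -)$ is a partial function defined only on $\tbox$ and on reflected values at $\square$-types. There are precisely two cases where the left-hand side is defined, and in each we must check the right-hand side is defined and equal.

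First I would handle the case $a = \tbox(b)$. Here $\tunbox \cdot (n, \tbox(b)) = b[\sextt\vone n]$, so the left-hand side is $b[\sextt\vone n][\kappa]$. Using the composition lemma for \UnMoTs on $D$ (the bullet $a[\kappa][\kappa'] = a[\kappa \circ \kappa']$ inherited from \Cref{sec:st:ut-mtrans}), this equals $b[\sextt\vone n \circ \kappa]$. The key computation is that $\sextt\vone n \circ \kappa = \sextt{(\trunc\kappa n)}{1} \circ \sextt\vone{\Ltotal\kappa n}$, which follows by unfolding the definitions of $\circ$, $\trunc{}{}$, $\sextt{}{}$ and $\Ltotal{}{}$ on $\N \to \N$: at argument $0$ the left side is $\Ltotal\kappa{\vone(0)} = \Ltotal\kappa n$, matching $(\sextt\vone{\Ltotal\kappa n})(0)$; at successor arguments both collapse to $\trunc\kappa n$. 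Then re-splitting gives $b[\sextt{(\trunc\kappa n)}{1}][\sextt\vone{\Ltotal\kappa n}] = \tunbox \cdot (\Ltotal\kappa n, \tbox(b)[\trunc\kappa n])$, since $\tbox(b)[\trunc\kappa n] = \tbox(b[\sextt{(\trunc\kappa n)}{1}])$ by the definition of \UnMoT\ application to $\tbox$. This is exactly the right-hand side.

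Second, the case $a = \uparrow^{\squared A}(c)$. Here $\tunbox \cdot (n, \uparrow^{\squared A}(c)) = \uparrow^{A[\sextt\vone n]}(\tunbox(n, c))$, so applying $\kappa$ gives $\uparrow^{A[\sextt\vone n][\kappa]}(\tunbox(n, c)[\kappa])$. By the definition of \UnMoT\ application to neutral $\tunbox$, $\tunbox(n,c)[\kappa] = \tunbox(\Ltotal\kappa n, c[\trunc\kappa n])$, and for the type annotation $A[\sextt\vone n][\kappa] = A[\sextt\vone n \circ \kappa] = A[\sextt{(\trunc\kappa n)}{1} \circ \sextt\vone{\Ltotal\kappa n}] = A[\trunc\kappa n][\sextt\vone{\Ltotal\kappa n}]$ using the same identity $\sextt\vone n \circ \kappa = \sextt{(\trunc\kappa n)}{1} \circ \sextt\vone{\Ltotal\kappa n}$ as before (noting $A[\trunc\kappa n]$ here means applying $\sextt{(\trunc\kappa n)}{1}$ since $A$ lives one world up, consistently with how $\squared A[\trunc\kappa n] = \squared(A[\sextt{(\trunc\kappa n)}{1}])$ is defined). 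Reassembling, this is $\uparrow^{(A[\trunc\kappa n])[\sextt\vone{\Ltotal\kappa n}]}(\tunbox(\Ltotal\kappa n, c[\trunc\kappa n])) = \tunbox \cdot (\Ltotal\kappa n, \uparrow^{A[\trunc\kappa n]}(c[\trunc\kappa n])) = \tunbox \cdot (\Ltotal\kappa n, (\uparrow^{\squared A}(c))[\trunc\kappa n])$, the desired right-hand side.

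\textbf{Main obstacle.} The only real work is the bookkeeping identity $\sextt\vone n \circ \kappa = \sextt{(\trunc\kappa n)}{1} \circ \sextt\vone{\Ltotal\kappa n}$ on $\N \to \N$, together with making sure the type-annotation tracking in the reflection case lines up — the dependent case adds the wrinkle that $\uparrow$ and $\downarrow$ now carry semantic type annotations, so we must verify the annotation transforms correctly under $\kappa$, not just the underlying neutral term. This is a straightforward unfolding of the definitions from \Cref{sec:st:ut-mtrans} plus the already-established naturality/composition lemmas for \UnMoTs; I expect no genuine difficulty beyond careful index arithmetic.
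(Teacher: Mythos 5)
Your proposal is correct and follows essentially the paper's route: the paper proves the dependently typed lemma by the same two-case analysis as in the simply typed \Cref{lem:unbox.-mon} (the $\tbox$ case via $\sextt\vone n \circ \kappa = \sextt{\trunc\kappa n}{\Ltotal\kappa n} = \sextt{(\trunc\kappa n)}{1} \circ \sextt\vone{\Ltotal\kappa n}$, and the reflected case by pushing $\kappa$ under $\tunbox$), merely remarking that the dependent version goes "just like the simply typed case." Your explicit check that the semantic type annotation $A[\sextt\vone n][\kappa] = A[\sextt{(\trunc\kappa n)}{1}][\sextt\vone{\Ltotal\kappa n}]$ transforms correctly is exactly the small extra bookkeeping the dependent setting requires, and it is handled correctly.
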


Just as in the simply typed case, we also establish the following
naturality lemmas:
\begin{lemma}\labeledit{conj:dt:.-mon}
  $(a \cdot b)[\kappa] = (a[\kappa]) \cdot (b[\kappa])$
\end{lemma}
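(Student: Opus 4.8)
\subsection*{Proof proposal}

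The plan is to prove this equality together with its companion naturality equalities $\intp{t}(\vrho[\kappa]) = \intp{t}(\vrho)[\kappa]$, $\intp{\vsigma}(\vrho[\kappa]) = \intp{\vsigma}(\vrho)[\kappa]$ and the corresponding statements for $\trec\cdot$ and $\tunbox\cdot$, by a single simultaneous induction --- exactly the pattern used for the simply typed ``naturality equalities'' in \Cref{sec:st:ut-mtrans}. Since $\cdot$, $\tunbox\cdot$, $\trec\cdot$ and $\intp{\_}$ are partial, each equality is read as a Kleene equality (both sides defined, or both undefined, and equal when defined); the induction is on the height of the derivation witnessing that the left-hand side is defined, and each step case-splits on which defining clause of the relevant partial function fires.

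First I would isolate the bookkeeping facts the induction leans on. The ones about the global environment are all inherited from \Cref{sec:st:ut-mtrans}: $\trunc{(\vrho[\kappa])}n = \trunc\vrho n[\trunc\kappa{\Ltotal\vrho n}]$ and $\Ltotal{\vrho[\kappa]}n = \Ltotal\kappa{\Ltotal\vrho n}$ (\Cref{lem:envs-mon,lem:envs-L}), together with $\tunbox\cdot(n, a)[\kappa] = \tunbox\cdot(\Ltotal\kappa n, a[\trunc\kappa n])$ (\Cref{lem:dt:unbox.-mon}). The one new, but immediate, fact is that extending-then-transforming commutes with transforming-then-extending, namely $\ext(\vrho, x, a)[\kappa] = \ext(\vrho[\kappa], x, a[\kappa])$, which follows by unfolding both sides of $\vrho[\kappa]$ and noting that mapping $\kappa$ over $\rho[x \mapsto a]$ yields $\rho[\kappa][x \mapsto a[\kappa]]$; the remaining commutation facts needed in the $\intp{\vsigma}$ branch (for $\vsigma = \sextt{\vsigma'}n$) are similar direct computations on $\ext(\_, \_, \_)$.

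For the statement at hand, $a \cdot b$ is defined in exactly two cases. If $a = \Lambda(x.t, \vrho)$, then $(a \cdot b)[\kappa] = \intp{t}(\ext(\vrho, x, b))[\kappa]$, which by the $\ext$ fact and the $\intp{t}$-naturality equality taken as IH (its inner call being a strictly shorter derivation) equals $\intp{t}(\ext(\vrho[\kappa], x, b[\kappa]))$; since $a[\kappa] = \Lambda(x.t, \vrho[\kappa])$, this is precisely $(a[\kappa]) \cdot (b[\kappa])$. If $a = \uparrow^{Pi(A, x.T, \vrho)}(c)$, then $(a \cdot b)[\kappa] = \uparrow^{\intp{T}(\ext(\vrho, x, b))}(c\ \downarrow^A(b))[\kappa]$; pushing $[\kappa]$ through $\uparrow$, the neutral application and $\downarrow$ by definition gives $\uparrow^{\intp{T}(\ext(\vrho, x, b))[\kappa]}(c[\kappa]\ \downarrow^{A[\kappa]}(b[\kappa]))$, and one more use of the $\ext$ fact and the $\intp{T}$-naturality equality rewrites the type annotation to $\intp{T}(\ext(\vrho[\kappa], x, b[\kappa]))$, matching $(a[\kappa]) \cdot (b[\kappa])$ computed from $a[\kappa] = \uparrow^{Pi(A[\kappa], x.T, \vrho[\kappa])}(c[\kappa])$.

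I expect the only real subtlety --- and the reason this cannot be proven in isolation --- to be organizing the mutual recursion: the $\Lambda$ case calls into $\intp{t}$ and the $\uparrow^{Pi(\cdots)}$ case into $\intp{T}$, while $\intp{\_}$ at terms of the form $t\ s$ and $\elimn{x.M}{s}{x,y.u}{t}$ calls back into $\cdot$ and $\trec\cdot$, and the $\trec\cdot$ clauses call back into $\intp{\_}$. Once the combined induction measure is set up so that every recursive appeal is to a strictly smaller derivation, each individual clause is a routine unfolding plus one application of the commutation facts above; the $\tunbox\cdot$ clauses are \Cref{lem:dt:unbox.-mon} verbatim, and the $\Nd$, $\Ud_i$, $\squared$, $\zed$, $\sud$ and $\tbox$ clauses of $\intp{\_}$ and of $[\kappa]$ are immediate.
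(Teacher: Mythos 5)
Your proposal is correct and matches the paper's own treatment: the paper establishes this equality exactly as in the simply typed case, i.e.\ by a mutual argument with the other naturality equalities ($\intp{t}$, $\intp{\vsigma}$, $\tunbox\cdot$, $\trec\cdot$), analyzing each defined clause of the partial functions and applying the induction hypothesis, which is precisely your case split on $\Lambda(x.t,\vrho)$ and $\uparrow^{Pi(A,x.T,\vrho)}(c)$ using $\ext(\vrho,x,a)[\kappa]=\ext(\vrho[\kappa],x,a[\kappa])$. Your explicit organization via Kleene equality and induction on the definedness derivation merely spells out the well-foundedness that the paper leaves implicit.
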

\begin{lemma}\labeledit{conj:dt:rec.-mon}
  $\trec \cdot (\tau, a, \tau', b)[\kappa] = \trec \cdot (\tau[\kappa], a[\kappa], \tau'[\kappa], b[\kappa])$
\end{lemma}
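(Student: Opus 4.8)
The plan is to prove \Cref{conj:dt:rec.-mon} by case analysis on the last argument $b$ of $\trec\cdot$, exactly mirroring the structure of the partial recursion function's definition, together with the interpretation lemmas. Write $\tau := x.M$ and $\tau' := (x_1, x_2.t)$, so we must show $\trec\cdot(\tau, a, \tau', b, \vrho)[\kappa] = \trec\cdot(\tau[\kappa], a[\kappa], \tau'[\kappa], b[\kappa], \vrho[\kappa])$, where $\tau[\kappa]$ and $\tau'[\kappa]$ just denote propagating $\kappa$ into the captured environment (since the syntactic motive and step bodies are inert). As with the analogous simply typed naturality equalities, this must be proved mutually with \Cref{conj:dt:.-mon} and the naturality of $\intp\cdot$ on terms and substitutions; the whole bundle goes by simultaneous induction, with each defined clause discharged by computation plus an appeal to the appropriate IH.

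First I would handle $b = \zed$: both sides reduce to $a[\kappa]$ immediately, since $\zed[\kappa] = \zed$. Next, $b = \sud{b'}$: the left side is $\intp{t}(\ext(\vrho, x_1, b', x_2, r))[\kappa]$ where $r := \trec\cdot(\tau, a, \tau', b', \vrho)$; the right side is $\intp{t}(\ext(\vrho[\kappa], x_1, b'[\kappa], x_2, \trec\cdot(\tau[\kappa], a[\kappa], \tau'[\kappa], b'[\kappa], \vrho[\kappa])))$. By the IH for $\trec\cdot$ at the structurally smaller argument $b'$, we get $r[\kappa] = \trec\cdot(\tau[\kappa], a[\kappa], \tau'[\kappa], b'[\kappa], \vrho[\kappa])$; then since $\ext(\vrho, x_1, b', x_2, r)[\kappa] = \ext(\vrho[\kappa], x_1, b'[\kappa], x_2, r[\kappa])$ — a routine unfolding of $\ext$ and the definition of $\vrho[\kappa]$ on the topmost $\Env$ — the two sides agree after one more use of the naturality IH for $\intp{t}$. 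The key point here is that $\ext$ commutes with $[\kappa]$ on the top of the stack, which follows directly from how $\ext(\vrho, x, a)$ modifies only $\vrho(0)$.

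The remaining clause is the neutral case $b = \uparrow^{\Nd}(c)$. Here the left side is $\uparrow^{\intp{M}(\ext(\vrho, x, \uparrow^{\Nd}(c)))}(\rec{x.M}{a}{\tau'}{c}{\vrho})[\kappa]$, which by the definition of $[\kappa]$ on reflections and on $\rec{\cdots}$ becomes $\uparrow^{\intp{M}(\ext(\vrho, x, \uparrow^{\Nd}(c)))[\kappa]}(\rec{x.M}{a[\kappa]}{\tau'[\kappa]}{c[\kappa]}{\vrho[\kappa]})$; the right side, since $(\uparrow^{\Nd}(c))[\kappa] = \uparrow^{\Nd}(c[\kappa])$, unfolds to $\uparrow^{\intp{M}(\ext(\vrho[\kappa], x, \uparrow^{\Nd}(c[\kappa])))}(\rec{x.M}{a[\kappa]}{\tau'[\kappa]}{c[\kappa]}{\vrho[\kappa]})$. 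These match once we know $\intp{M}(\ext(\vrho, x, \uparrow^{\Nd}(c)))[\kappa] = \intp{M}(\ext(\vrho[\kappa], x, \uparrow^{\Nd}(c[\kappa])))$, which is again naturality of $\intp{M}$ (the term-level IH) combined with $\ext$–$[\kappa]$ commutation and $\uparrow^{\Nd}(c)[\kappa] = \uparrow^{\Nd}(c[\kappa])$.

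The main obstacle is not any single clause but the bookkeeping of the mutual induction: one has to set it up so that the IH for $\intp{M}$ and $\intp{t}$ (on syntactically fixed bodies but varying environments), for $\trec\cdot$ (on structurally smaller $b'$), for $\_\cdot\_$, and for $\intp{\vsigma}$ are all simultaneously available. Once the induction hypotheses are arranged and the auxiliary fact that $\ext$ (in all its overloaded forms) commutes with $[\kappa]$ on the topmost $\Env$ is isolated as a sublemma, every case is a short calculation. I expect to state the $\ext$–$[\kappa]$ commutation fact explicitly, since it is reused in each clause, and otherwise the proof reads "analyze each defined case and apply IH whenever necessary," exactly as in the simply typed analogue.
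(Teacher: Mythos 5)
Your proposal is sound in outline, but it takes a genuinely different route from the paper. The paper never proves \Cref{conj:dt:rec.-mon} by a direct induction in the untyped domain: the three naturality lemmas for the dependently typed domain are stated without proof, and the $\trec$ equation is only discharged later, inside the proof of \Cref{lem:dt:rec-per-helper}, where the remark explicitly says that the equation ``is proved in this helper lemma given terms are well-typed'' --- i.e.\ the paper obtains it from the logical-relation setup (whose semantic judgments carry the naturality equalities) and hence only for well-typed motives and step terms. Your argument instead mirrors the simply typed naturality proof: case analysis on $b$ ($\zed$, $\sud{b'}$, $\uparrow^{\Nd}(c)$), mutual induction with $\intp{t}(\vrho[\kappa]) = \intp{t}(\vrho)[\kappa]$ and the application/unbox equations, plus the commutation of $\ext$ with $[\kappa]$. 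If it goes through, it is the stronger statement (it holds for arbitrary domain values whenever both sides are defined), whereas the paper's version buys a cheaper proof at the price of generality.

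The one place where your write-up is thinner than it should be is the induction measure for the mutual bundle. Evaluation in this domain is partial and not structurally recursive: in your $\sud{b'}$ and neutral cases you appeal to naturality of $\intp{t}$ and $\intp{M}$, but $t$ and $M$ are syntactic bodies captured in the value, evaluated in freshly extended environments, so they are not smaller than anything in the current call --- the same issue already arises for $\Lambda(x.t,\vrho)\cdot a$ in \Cref{conj:dt:.-mon}. To make ``each defined clause discharged by computation plus an appeal to the appropriate IH'' precise, the simultaneous induction has to run over the inductively generated definedness (graph) of the partial evaluation/application/recursion functions, not over term or value structure alone; only the recursion on $b'$ in the successor case is genuinely structural. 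This is presumably why the paper labels these statements as conjectures in the dependent setting and routes the $\trec$ case through the well-typed logical relation instead. With the induction reformulated on definedness derivations, and your $\ext$--$[\kappa]$ commutation isolated as a sublemma (using $\Nd[\kappa]=\Nd$ in the neutral case), your calculation in each clause is correct.
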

\begin{lemma}\labeledit{conj:dt:intp-mon}
  \begin{itemize}
  \item $\intp{t}(\vrho[\kappa]) = \intp{t}(\vrho)[\kappa]$
  \item $\intp{\vsigma}(\vrho[\kappa]) = \intp{\vsigma}(\vrho)[\kappa]$
  \end{itemize}
\end{lemma}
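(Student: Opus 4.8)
The plan is to prove all three naturality equalities (\Cref{conj:dt:.-mon}, \Cref{conj:dt:rec.-mon}, and \Cref{conj:dt:intp-mon}) simultaneously by mutual induction, exactly as in the simply typed case (the \emph{Naturality Equalities} lemma in \Cref{sec:st:ut-mtrans}). The reason they must be proved together is that \texttt{rec$\cdot$}, partial application, and \texttt{$\intp{\_}$} are mutually recursive partial functions: evaluating a \texttt{$\lambda$}-body invokes \texttt{$\intp{\_}$} on a smaller term under an extended environment, evaluating \texttt{$\elimn{}{}{}{}$} invokes \texttt{rec$\cdot$}, and \texttt{rec$\cdot$} on a successor invokes \texttt{$\intp{\_}$} again. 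So the induction is on the combined size/structure of the term/substitution/neutral argument, and each case uses the induction hypotheses of the others freely. I would also fold in \Cref{lem:dt:unbox.-mon}, which is the \texttt{$\tunbox\cdot$} analogue and whose proof is identical to \Cref{lem:unbox.-mon} in the simply typed development.

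First I would dispatch the easy structural cases of \texttt{$\intp{t}(\vrho[\kappa]) = \intp{t}(\vrho)[\kappa]$}: the closed type constants \texttt{$\Se_i$}, \texttt{$\Nat$}, \texttt{$\ze$} are immediate since \texttt{$\kappa$} acts trivially on \texttt{$\Nd$}, \texttt{$\Ud_i$}, \texttt{$\zed$}; \texttt{$\square T$}, \texttt{$\boxit t$} use the definition of \texttt{$\ext$} together with the computation \texttt{$\ext(\vrho)[\kappa] = \ext(\vrho[\kappa])$} (which is a direct unfolding of the \texttt{$\Envs$}-application and \texttt{$\sextt\kappa 1$}); \texttt{$x$}, \texttt{$\su t$}, \texttt{$t[\vsigma]$}, \texttt{$\Pi(x{:}S).T$}, \texttt{$\lambda x.t$} are by IH and the definition of \texttt{$[\kappa]$} on the corresponding domain constructor, noting that \texttt{$\Lambda$} and \texttt{$Pi$} carry their environments so the equation reduces to \texttt{$\vrho[\kappa]$} commuting, which is the IH. The \texttt{$\unbox n t$} case uses \Cref{lem:envs-L} and \Cref{lem:envs-mon} (inherited from \Cref{sec:st:ut-mtrans}) to push \texttt{$\kappa$} through the truncation and truncation offset, then \Cref{lem:dt:unbox.-mon}, then IH — this mirrors the simply typed \texttt{$\tunbox$} case almost verbatim. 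The substitution clauses \texttt{$\intp{\vsigma}(\vrho[\kappa]) = \intp{\vsigma}(\vrho)[\kappa]$} are handled the same way: \texttt{$\vect I$} trivial, \texttt{$\wk_x$} needs \texttt{$\drop(\vrho,x)[\kappa] = \drop(\vrho[\kappa],x)$} (immediate from definitions), \texttt{$\vsigma,t/x$} by IH and the \texttt{$\ext$} commutation, \texttt{$\sextt\vsigma n$} by IH plus \Cref{lem:envs-L}/\Cref{lem:envs-mon}, and \texttt{$\circ$} by two IHs.

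The genuinely new work is the two cases that involve object-level natural number elimination: the term case \texttt{$t = \elimn{x.M}{s}{x,y.u}{t'}$} reduces to \Cref{conj:dt:rec.-mon}, so it suffices to prove the latter, which is itself an induction on the argument \texttt{$b$} of \texttt{$\trec\cdot$}. When \texttt{$b = \zed$}, both sides are \texttt{$a[\kappa]$}. When \texttt{$b = \sud{b'}$}, the left side is \texttt{$\intp{t}(\ext(\vrho, x_1, b', x_2, \trec\cdot(\ldots, b', \vrho)))[\kappa]$}; applying the \texttt{$\intp{\_}$} naturality IH pushes \texttt{$[\kappa]$} inside the environment, and then I need \texttt{$\ext$} with two bindings to commute with \texttt{$[\kappa]$} and the inner \texttt{$\trec\cdot$} to commute by the \emph{structural} IH (on \texttt{$b'$}). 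When \texttt{$b = \uparrow^{\Nd}(c)$}, the left side is \texttt{$\uparrow^{\intp{M}(\ext(\vrho,x,\uparrow^{\Nd}(c)))}(\rec{x.M}{a}{(x_1,x_2.t)}{c}{\vrho})[\kappa]$}; I push \texttt{$[\kappa]$} through \texttt{$\uparrow$} (which now carries a type, so I must also move \texttt{$[\kappa]$} into the type annotation \texttt{$\intp{M}(\ldots)$} — this is where carrying semantic types in the domain adds a wrinkle absent in the simply typed case), use the \texttt{$\intp{\_}$} IH on \texttt{$M$}, and unfold the \texttt{$[\kappa]$}-action on the \texttt{$\rec$} constructor. \Cref{conj:dt:.-mon} is proved in passing: \texttt{$\Lambda$}-case by IH on the body plus \texttt{$\ext(\vrho,x,a)[\kappa] = \ext(\vrho[\kappa],x,a[\kappa])$}, and the \texttt{$Pi$} neutral case needs \texttt{$\uparrow$} and \texttt{$\downarrow$} to carry their type annotations through \texttt{$[\kappa]$}, again using the \texttt{$\intp{\_}$} IH on the codomain type \texttt{$T$}.

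The main obstacle I anticipate is \emph{not} any single calculation but getting the mutual-induction scaffolding right: because \texttt{$\trec\cdot$}, \texttt{$\_\cdot\_$}, and \texttt{$\intp{\_}$} call each other and \texttt{$\trec\cdot(\ldots,\sud{b'},\ldots)$} reinvokes \texttt{$\trec\cdot$} on \texttt{$b'$}, one must phrase the induction measure carefully (e.g.\ a lexicographic order on the pair (evaluation-derivation depth, argument structure), or simply induction on the big-step evaluation derivation in the style of \citet{abel_normalization_2013}) so that every appeal to an IH is on a strictly smaller object — in particular the appeal to the \texttt{$\intp{\_}$}-equation inside the \texttt{$\sud{b'}$} branch of \texttt{$\trec\cdot$} must be justified. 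The secondary friction is bookkeeping the semantic type annotations on \texttt{$\uparrow$}/\texttt{$\downarrow$}: every place the simply typed proof wrote \texttt{$\uparrow^T(c)[\kappa] = \uparrow^T(c[\kappa])$} now becomes \texttt{$\uparrow^{A[\kappa]}(c[\kappa])$}, forcing an extra invocation of the \texttt{$\intp{\_}$}-naturality IH on the type component (for \texttt{$Pi$}, \texttt{$\square$}, and \texttt{$\Nd$}-eliminators). Once the measure is fixed and one agrees to thread the type-annotation equalities uniformly, every case is the same mechanical unfolding-plus-IH pattern as in \Cref{sec:st:ut-mtrans}, so I would state the lemma as a single mutual theorem and remark that the proof proceeds by induction on the evaluation derivation, analysing each defined clause and applying the appropriate induction hypothesis, exactly as in the simply typed case.
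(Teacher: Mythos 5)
Your plan is workable in outline, but it is a genuinely different route from what the paper actually does for this statement. The paper never proves \Cref{conj:dt:intp-mon} (or \Cref{conj:dt:rec.-mon}) by direct induction in the dependently typed, untyped-domain setting: the ``conj:'' labels are telling, the naturality equalities are instead folded into the semantic judgments of the PER model (``Similar to the simply typed case, naturality equalities are also present in the semantic judgments''), and they are only obtained for \emph{well-typed} terms through the fundamental theorem, with the recursor case discharged inside \Cref{lem:dt:rec-per-helper} using typing and realizability. You, by contrast, propose to replay the simply typed argument of \Cref{sec:st:ut-mtrans}: a single mutual induction covering $\intp{t}(\vrho[\kappa]) = \intp{t}(\vrho)[\kappa]$, the substitution clause, $(a \cdot b)[\kappa]$, $\trec$, and $\tunbox\,\cdot$, organized around the definedness (big-step evaluation) derivation rather than term structure. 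Your diagnosis of the two real difficulties is accurate: structural induction on terms does not suffice because closures re-invoke $\intp{\_}$ on captured bodies and $\trec$ re-invokes itself on the predecessor, and the dependent domain forces you to also commute $\kappa$ past the semantic type annotations on $\uparrow$/$\downarrow$ (the $Pi$ and neutral-$\Nd$ cases), which is again an instance of the $\intp{\_}$ clause on a sub-computation. What each approach buys: yours yields the stronger, typing-free statement and keeps the dependent development parallel to the simply typed one, but it requires making ``induction on the evaluation derivation'' precise (inductive graphs for the mutually recursive partial functions) and the equations must be read directionally or as Kleene equalities, since evaluation is partial on raw terms; the paper's route avoids that machinery entirely by claiming the equalities only where they are needed—on well-typed terms—at the cost of threading naturality clauses through the semantic judgments and the logical-relations proofs. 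Neither is a gap in your argument, but you should state explicitly the partiality reading and the induction measure, since that is precisely the point at which the paper chose to retreat to the typed, logical-relations formulation.
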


\subsection{Readback Functions}

After evaluating an $\Exp$ to $D$, we have already got the corresponding $\beta$
normal form of the term in $D$. We need a one last step to read from $D$ back to
normal form and do the $\eta$ expansion at the same time to obtain a $\beta\eta$
normal form:
\begin{align*}
  \Rnf &: (\N \rightharpoonup \Var) \rightharpoonup D^{\Nf} \rightharpoonup \Nf \\
  \Rnf_{\alpha} (\downarrow^{\Ud_i} (A))
       &:= \Rty_\alpha(A) \\
  \Rnf_{\alpha} (\downarrow^{\uparrow^A(c)} (\uparrow^{A'}(c')))
       &:= \Rne_\alpha(c') \\
  \Rnf_\alpha(\downarrow^\Nd(\zed))
       &:= \ze \\
  \Rnf_\alpha(\downarrow^\Nd(\sud a))
       &:= \su{\Rnf_\alpha(\downarrow^\Nd(a))} \\
  \Rnf_{\alpha} (\downarrow^{\squared A} (a))
       &:= \boxit \Rnf_{\alpha} (\downarrow^{A} (\tunbox \cdot (1, a)))
  \\
  \Rnf_{\alpha} (\downarrow^{Pi(A, x.T, \vrho)} (a))
       &:= \lambda x. \Rnf_{\alpha[z \mapsto x]} (\downarrow^{\intp{T}(\ext(\vrho, x, \uparrow^A(z)))} (a \cdot \uparrow^A(z)))
         \tag{where $z := \tnext(\alpha)$}\\
  \Rnf_{\alpha} (\downarrow^\Nd (\uparrow^\Nd (c)))
       &:= \Rne_{\alpha} (c) \\
  \\
  \Rne &: (\N \rightharpoonup \Var) \rightharpoonup D^{\Ne} \rightharpoonup \Ne \\
  \Rne_{\alpha}(z)
       &:= \alpha(z)
  \tag{if $\alpha$ is undefined at $z$, we assign some fixed default $\Var$}\\
  \Rne_{\alpha}(c\ d)
       &:= \Rne_{\alpha} (c)\ \Rnf_{\alpha}(d) \\
  \Rne_{\alpha}(\tunbox(n, c))
       &:= \unbox{n}{\Rne_{\alpha}(c)} \\
  \Rne_{\alpha}(\trec(x.M, a, (x_1,x_2.t), c, \vrho))
       &:= \elimn{x.\Rty_{\alpha[z \mapsto
         x]}(A)}{(\Rnf_\alpha(\downarrow^{\intp{M}(\ext(\vrho,x, \zed)))}(a)))\\
       & \qquad }
         {x,y. \Rnf_{\alpha[z \mapsto x, z' \mapsto
         y]}(\downarrow^{\intp{M}(\ext(\vrho,x, \sud b)))}(\intp{t}(\ext(\vrho, x, b,
         y, \uparrow^{A}(z')))))}{\Rne_\alpha(c)}
         \tag{where $b := \uparrow^\Nd(z)$, $z := \tnext(\alpha)$, $z' := \tnext(\alpha[z
         \mapsto x])$, $A = \intp{M}(\ext(\vrho,x, b))$} \\
    \\
  \Rty &: (\N \rightharpoonup \Var) \rightharpoonup D^{\Nf} \rightharpoonup \Nf \\
  \Rty_\alpha(\Ud_i) &:= \Se_i \\
  \Rty_\alpha(\Nd) &:= \Nat \\
  \Rty_\alpha(\squared A) &:= \square \Rty_\alpha(A) \\
  \Rty_\alpha(Pi(A, x.T, \vrho)) &:= \Pi(x : \Rty_\alpha(A)).\Rty_{\alpha[z \mapsto
                          x]}(\intp{T}(\ext(\vrho, x, \uparrow^A(z))))
                          \tag{where $z := \tnext(\alpha)$}\\
  \Rty_\alpha(\uparrow^A(c)) &:= \Rne_\alpha(c)
\end{align*}
$\alpha$ is the same as in \Cref{sec:st:rb}. 

We define the initial evaluation environment:
\begin{align*}
  \uparrow &: \vect{\Ctx} \to \Envs \\
  \uparrow^{\epsilon;\cdot} &:= \empenv \\
  \uparrow^{\vGamma; \cdot} &:= \ext(\uparrow^{\vGamma}) \\
  \uparrow^{\vGamma;(\Gamma, x : T)} &:= \ext(\vrho, x, \uparrow^{\intp{T}(\vrho)}(\alpha^{-1}(x)))
                                       \tag{where $\vrho := \uparrow^{\vGamma; \Gamma}$}
\end{align*}
where $\alpha^{-1} : \Var \to \N$ is the opposite effect of $\alpha$.

\begin{definition}
  For $\mtyping t T$, the NbE algorithm is defined to be
  \begin{align*}
    \nbe_{\vGamma}^T(t) := \Rnf_{\alpha}(\downarrow^{\intp{T}(\uparrow^{\vGamma})} (\intp{t}(\uparrow^{\vGamma})))
  \end{align*}
\end{definition}

\section{Completeness and Dependent Candidate Space}

Similar to the simply typed case, our device for proving completeness is some
candidate space. Due to dependent types, of course, we cannot simply port what is in
the simply typed case, because types can embed arbitrary computation. Thus, what we
need here is a little bit more complex, a dependent candidate space. Nevertheless, we
start with two PERs, $\top$ and $\bot$, the extrema of the space.
\begin{mathpar}  
  \inferrule*
  {\forall \alpha, \kappa. \Rnf_{\alpha} (d[\kappa]) = \Rnf_{\alpha} (d'[\kappa])}
  {d \approx d' \in \top}

  \inferrule*
  {\forall \alpha, \kappa. \Rty_{\alpha} (A[\kappa]) = \Rty_{\alpha} (A'[\kappa])}
  {A \approx A' \in \top}

  \inferrule*
  {\forall \alpha, \kappa. \Rne_{\alpha}(c[\kappa]) = \Rne_{\alpha}(c'[\kappa])}
  {c \approx c' \in \bot}
\end{mathpar}
where $\top \subseteq D^{\Nf} \times D^{\Nf}$, $\top \subseteq D \times D$ ($\top$ is
overloaded because one is meant for values and the other is meant for types) and
$\bot \subseteq D^{\Ne} \times D^{\Ne}$. It is easy to show that they are both PERs.

Now we need one PER for each semantic type, plus types represented by neutral
terms. We begin with $\Nd$, which is the easiest case:
\begin{mathpar}
  \inferrule*
  { }
  {\zed \approx \zed \in Nat}

  \inferrule*
  {a \approx b \in Nat}
  {\sud a \approx \sud b \in Nat}

  \inferrule*
  {c \approx c' \in \bot}
  {\uparrow^\Nd(c) \approx \uparrow^\Nd(c') \in Nat}
\end{mathpar}
Here we use $Nat$ to relate the semantic values that are considered equivalent natural
numbers. 
We can prove that $Nat$ is a PER by consider all cases and use the fact that $\bot$ is
also a PER. 

The next one is values related by a neutral type. We virtually can not do
anything but to related them by $\bot$:
\begin{mathpar}
  \inferrule
  {c \approx c' \in \bot}
  {\uparrow^{A}(c) \approx \uparrow^{A'}(c') \in Neu}
\end{mathpar}
Here the neutral type is represented by a domain neutral value $C$. Effectively,
$\uparrow^{A}(c)$ and $\uparrow^{A'}(c')$ are related, if $c$ and $c'$ are also
related by $\bot$. $A$ and $A'$ have no particular relation. It is immediate that
$Neu$ is a PER.

For the remaining PERs for universes, $\squared$ and $Pi$, we define them
simultaneously by using inductive-recursive definitions:
\begin{mathpar}
  \inferrule
  {C \approx C' \in \bot}
  {\uparrow^{A}(C) \approx \uparrow^{A'}(C') \in \Uc_i}

  \inferrule
  { }
  {\Nd \approx \Nd \in \Uc_i}

  \inferrule
  {j < i}
  {\Ud_j \approx \Ud_j \in \Uc_i}

  \inferrule
  {\forall \kappa. A[\kappa] \approx A'[\kappa] \in \Uc_i}
  {\squared A \approx \squared A' \in \Uc_i}

  \inferrule
  {\forall \kappa. A[\kappa] \approx A'[\kappa] \in \Uc_i \\
    \forall \kappa, a \approx a' \in \El_i(A[\kappa]). \intp{T}(\ext(\vrho[\kappa], x, a)) \approx \intp{T'}(\ext(\vrho'[\kappa], x, a')) \in \Uc_i}
  {Pi(A, x.T, \vrho) \approx Pi(A', x.T', \vrho') \in \Uc_i}
\end{mathpar}
In the $Pi$ case, we use $\El_i$ to compute a relation from $A$, which relates
$a$ and $a'$ and is defined below. By letting  $\El_i(A)$ is defined recursively on $A \approx A' \in \Uc_i$, 
we only need to consider the values quantified by $\Uc_i$. 
\begin{align*}
  \El_i(\uparrow^A(C)) &:= Neu \\
  \El_i(\Nd) &:= Nat \\
  \El_i(\Ud_j) &:= \Uc_j \\
  \El_i(\squared A) &:= \{ (a, b) \sep \forall \kappa, n. \tunbox \cdot (n, a[\kappa]) \approx \tunbox
                     \cdot (n, b[\kappa]) \in \El_i(A[\sextt\kappa n]) \} \\
  \El_i(Pi(A, x.T, \vrho)) &:= \{ (a, b) \sep \forall \kappa, a' \approx b' \in \El_i(A[\kappa]). a[\kappa] \cdot a' \approx b[\kappa] \cdot b' \in
                      \El_i(\intp{T}(\ext(\vrho[\kappa], x, a'))) \}
\end{align*}

In the next section, we verify properties of the definitions we just had.

\subsection{Properties of Candidate Space}

\begin{lemma}
  If $d \approx d' \in \top$, then $d[\kappa] \approx d'[\kappa] \in \top$.
\end{lemma}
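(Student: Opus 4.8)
The plan is to prove this by unfolding the definition of $\top$ and using the fact that untyped modal transformations (\UnMoTs) are closed under composition. Recall that $d \approx d' \in \top$ means $\forall \alpha, \kappa'.\ \Rnf_{\alpha}(d[\kappa']) = \Rnf_{\alpha}(d'[\kappa'])$. Given an arbitrary $\kappa$, I want to show $d[\kappa] \approx d'[\kappa] \in \top$, i.e. $\forall \alpha, \kappa'.\ \Rnf_{\alpha}((d[\kappa])[\kappa']) = \Rnf_{\alpha}((d'[\kappa])[\kappa'])$.

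First I would fix $\alpha$ and $\kappa'$. Then I would rewrite $(d[\kappa])[\kappa']$ as $d[\kappa \circ \kappa']$ using the composition lemma for application of \UnMoTs to domain terms (the lemma stating $a[\kappa][\kappa'] = a[\kappa \circ \kappa']$, which holds in the dependent setting just as in \Cref{sec:st:ut-mtrans}); similarly $(d'[\kappa])[\kappa'] = d'[\kappa \circ \kappa']$. After this rewriting the goal becomes $\Rnf_{\alpha}(d[\kappa \circ \kappa']) = \Rnf_{\alpha}(d'[\kappa \circ \kappa'])$, which is exactly an instance of the hypothesis $d \approx d' \in \top$ with the transformation $\kappa \circ \kappa'$ in place of the universally quantified one. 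So the proof is essentially an application of the hypothesis after one rewriting step.

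There is no real obstacle here — the proof is a short unfolding. The only thing to be careful about is that the composition lemma for \UnMoTs applied to domain normal forms $d$ must be available in the dependent setting; the excerpt states ``we inherit all definitions and lemmas in \Cref{sec:st:ut-mtrans}'' for the same structure of evaluation environments and domain terms, so I would cite that. The same one-line argument also works verbatim for the overloaded $\top$ on types (using $\Rty$ instead of $\Rnf$) and for $\bot$ (using $\Rne$), so if the paper wants those variants too they follow identically. I would present the proof as: assume $\kappa$, assume $\alpha, \kappa'$, compute $d[\kappa][\kappa'] = d[\kappa \circ \kappa']$ and likewise for $d'$, and conclude by the hypothesis instantiated at $\kappa \circ \kappa'$.

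\begin{proof}
  Assume $d \approx d' \in \top$ and let $\kappa$ be given. To show $d[\kappa]
  \approx d'[\kappa] \in \top$, we fix $\alpha$ and $\kappa'$ and compute
  \begin{align*}
    \Rnf_{\alpha}(d[\kappa][\kappa'])
    &= \Rnf_{\alpha}(d[\kappa \circ \kappa']) \\
    &= \Rnf_{\alpha}(d'[\kappa \circ \kappa']) \\
    &= \Rnf_{\alpha}(d'[\kappa][\kappa'])
  \end{align*}
  where the first and last equalities use the composition law for \UnMoTs on
  domain terms, and the middle equality is the hypothesis $d \approx d' \in \top$
  instantiated with $\kappa \circ \kappa'$. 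Since $\alpha$ and $\kappa'$ were
  arbitrary, $d[\kappa] \approx d'[\kappa] \in \top$.
\end{proof}
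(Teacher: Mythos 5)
Your proof is correct and matches the paper's intent: the paper dismisses this lemma with ``Immediate due to universal quantification,'' which is precisely the argument you spell out — instantiate the universally quantified transformation at $\kappa \circ \kappa'$ using the composition law $d[\kappa][\kappa'] = d[\kappa \circ \kappa']$ inherited from the simply typed setting. Your version is just a more explicit write-up of the same one-step argument.
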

\begin{lemma}
  If $A \approx A' \in \top$, then $A[\kappa] \approx A'[\kappa] \in \top$.
\end{lemma}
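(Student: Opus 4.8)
The plan is to unfold the definition of the relation $\top$ on domain types and then collapse the two successive untyped modal transformations into a single one, so that the statement follows directly from the hypothesis. Recall that $A \approx A' \in \top$ means exactly that $\Rty_{\alpha}(A[\kappa']) = \Rty_{\alpha}(A'[\kappa'])$ for every readback map $\alpha$ and every \UnMoT $\kappa'$. To prove $A[\kappa] \approx A'[\kappa] \in \top$ I would fix an arbitrary $\alpha$ and an arbitrary $\kappa'$, so that the goal becomes $\Rty_{\alpha}(A[\kappa][\kappa']) = \Rty_{\alpha}(A'[\kappa][\kappa'])$.

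First I would rewrite both sides using the composition law for applying \UnMoTs to domain values, $a[\kappa][\kappa'] = a[\kappa \circ \kappa']$, which is inherited from \Cref{sec:st:ut-mtrans} (where it is proved by induction on the domain syntax, relying on $\Ltotal{\kappa \circ \kappa'}n = \Ltotal{\kappa'}{\Ltotal\kappa n}$ and $\trunc{(\kappa \circ \kappa')}n = \trunc\kappa n \circ \trunc{\kappa'}{\Ltotal\kappa n}$). This turns the goal into $\Rty_{\alpha}(A[\kappa \circ \kappa']) = \Rty_{\alpha}(A'[\kappa \circ \kappa'])$. Since \UnMoTs are just meta-level functions $\N \to \N$ and are closed under composition, $\kappa \circ \kappa'$ is itself a legitimate \UnMoT, so I may instantiate the hypothesis $A \approx A' \in \top$ at $\alpha$ and $\kappa \circ \kappa'$ to close the goal. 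Abstracting over $\alpha$ and $\kappa'$ then yields $A[\kappa] \approx A'[\kappa] \in \top$.

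The companion statement immediately above — $d \approx d' \in \top$ implies $d[\kappa] \approx d'[\kappa] \in \top$ — as well as the corresponding fact for $\bot$ on $D^{\Ne}$, go through by exactly the same two-line argument, using $d[\kappa][\kappa'] = d[\kappa \circ \kappa']$ (resp. $c[\kappa][\kappa'] = c[\kappa \circ \kappa']$) in place of the type-level composition law. This is precisely the same ``closed under composition'' pattern already used for \Cref{lem:intpt-resp-mon}, so I would likely just state all three together and give the argument once.

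There is essentially no substantive obstacle here: the only ingredient that must be in place is the composition law for \UnMoTs acting on the dependently typed domain $D$ (now including $Pi(A, x.T, \vrho)$, $\squared A$, $\uparrow^A(c)$, and so on). The excerpt inherits this from the simply typed development; if one wanted to be fully careful one would re-run the routine induction over the enlarged $D$, but it mirrors the simply typed proof verbatim and uses only facts already established for $\kappa$-composition.
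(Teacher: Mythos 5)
Your proposal is correct and is essentially the paper's own argument: the paper dismisses all three monotonicity statements for $\top$ and $\bot$ with ``immediate due to universal quantification,'' which is exactly your move of fixing $\alpha,\kappa'$, collapsing $A[\kappa][\kappa']$ to $A[\kappa\circ\kappa']$ via the \UnMoT composition law, and instantiating the hypothesis at $\kappa\circ\kappa'$. Your remark that the composition law must be checked over the enlarged dependently typed domain is a fair point of care, but it is the same routine induction the paper takes for granted.
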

\begin{lemma}
  If $c \approx c' \in \bot$, then $c[\kappa] \approx c'[\kappa] \in \bot$.
\end{lemma}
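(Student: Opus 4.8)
The plan is to prove all three monotonicity lemmas (for $\top$ on normal domain terms, for $\top$ on types, and for $\bot$ on neutral domain terms) by a direct unfolding of the definitions, exploiting the fact that untyped modal transformations are closed under composition, exactly as in the simply typed analogue. Each of the three relations is defined by a universal quantification of the form ``for all $\alpha$ and all $\kappa'$, applying $[\kappa']$ and then reading back agrees''; so monotonicity under an additional $[\kappa]$ reduces to reindexing the quantifier by precomposition with $\kappa$.

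Concretely, for the first lemma I would assume $d \approx d' \in \top$, which by definition means $\forall \alpha, \kappa'.\ \Rnf_{\alpha}(d[\kappa']) = \Rnf_{\alpha}(d'[\kappa'])$. To show $d[\kappa] \approx d'[\kappa] \in \top$, I fix arbitrary $\alpha$ and $\kappa'$ and must establish $\Rnf_{\alpha}((d[\kappa])[\kappa']) = \Rnf_{\alpha}((d'[\kappa])[\kappa'])$. Using the composition law $a[\kappa][\kappa'] = a[\kappa \circ \kappa']$ for domain normal terms (stated earlier in the excerpt as one of the ``$a[\kappa][\kappa'] = a[\kappa \circ \kappa']$'' family of lemmas), this rewrites to $\Rnf_{\alpha}(d[\kappa \circ \kappa']) = \Rnf_{\alpha}(d'[\kappa \circ \kappa'])$, which is precisely the hypothesis instantiated at $\alpha$ and $\kappa \circ \kappa'$. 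The second lemma is identical with $\Rty$ in place of $\Rnf$ and the composition law for types $A[\kappa][\kappa'] = A[\kappa \circ \kappa']$; the third is identical with $\Rne$ in place of $\Rnf$ and the composition law for neutrals $c[\kappa][\kappa'] = c[\kappa \circ \kappa']$.

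Since the three proofs share the same two-line structure, I would present them together: ``All three follow by the same argument; we treat the first. Assume $d \approx d' \in \top$ and fix $\alpha, \kappa'$. Then $\Rnf_\alpha(d[\kappa][\kappa']) = \Rnf_\alpha(d[\kappa \circ \kappa']) = \Rnf_\alpha(d'[\kappa \circ \kappa']) = \Rnf_\alpha(d'[\kappa][\kappa'])$, where the outer equalities use the composition law for $[\cdot]$ and the middle one is the hypothesis. By abstraction, $d[\kappa] \approx d'[\kappa] \in \top$. The cases of $\top$ on types and $\bot$ on neutrals are the same, using $\Rty$ (resp. $\Rne$) and the corresponding composition law.''

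There is essentially no obstacle here: the only thing being used beyond pure definition-chasing is that $\N \to \N$ is closed under the composition operation $\circ$ and that $[\cdot]$ is functorial in $\kappa$, both of which are already established earlier in the paper. The mildly delicate point — and the only thing worth a sentence of care — is that the quantification over $\alpha$ in the definitions of $\top$ and $\bot$ ranges over readback environments that are supposed to ``maintain a correspondence'' (as the paper notes when introducing these PERs); but since $\alpha$ appears only as an outermost universally quantified parameter and is not touched by the rewriting, the correspondence condition is preserved verbatim, so this causes no difficulty.
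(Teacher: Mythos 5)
Your proof is correct and is exactly the argument the paper intends by its one-line justification ``Immediate due to universal quantification'': instantiate the hypothesis at $\kappa \circ \kappa'$ and use the composition law $c[\kappa][\kappa'] = c[\kappa \circ \kappa']$ (and likewise for $\Rnf$/$\Rty$ in the companion lemmas). No substantive difference from the paper's approach.
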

\begin{proof}
  Immediate due to universal quantification.
\end{proof}

The following two lemmas are proved mutually:
\begin{lemma}\labeledit{lem:dt:el-resp-u}
  If $A \approx A' \in \Uc_i$, then $a \approx b \in \El_i(A)$ and $a \approx b \in
  \El_i(A')$ are equivalent. 
\end{lemma}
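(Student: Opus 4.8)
The plan is to prove this by induction on the derivation of $A \approx A' \in \Uc_i$. Because $\Uc_i$ is defined inductive-recursively together with $\El_i$, this induction is well founded: each premise of a generating rule for $\Uc_i$ is a strictly smaller $\Uc_i$-derivation, including the universally quantified premises of the form $\forall \kappa.\ A_0[\kappa] \approx A_0'[\kappa] \in \Uc_i$, so the induction hypothesis is available at every instantiated \UnMoT. Since ``$a \approx b \in \El_i(A)$ and $a \approx b \in \El_i(A')$ are equivalent'' just asserts $\El_i(A) = \El_i(A')$ as binary relations on $D$, I will directly establish set equality in each case.

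The cases where $A$ is a neutral type, $\Nd$, or $\Ud_j$ are immediate from the clauses of $\El_i$: they give $Neu = Neu$, $Nat = Nat$, and $\Uc_j = \Uc_j$ respectively, with no appeal to the induction hypothesis. For $A = \squared A_0$ and $A' = \squared A_0'$, unfolding the definitions shows that $\El_i(\squared A_0)$ and $\El_i(\squared A_0')$ have identical shape, differing only in that one references $\El_i(A_0[\sextt\kappa n])$ and the other $\El_i(A_0'[\sextt\kappa n])$; since $\sextt\kappa n$ is again a \UnMoT, the premise $\forall \kappa.\ A_0[\kappa] \approx A_0'[\kappa] \in \Uc_i$ instantiated at $\sextt\kappa n$ is a smaller derivation, so the induction hypothesis yields $\El_i(A_0[\sextt\kappa n]) = \El_i(A_0'[\sextt\kappa n])$ and the two relations coincide.

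The $Pi$ case is the crux, and it is where this lemma is genuinely entangled with its companion (the statement, proved in the same mutual induction, that $\Uc_i$ and each $\El_i(A)$ are symmetric and transitive). Write $A = Pi(A_0, x.T, \vrho)$ and $A' = Pi(A_0', x.T', \vrho')$ with premises $\forall \kappa.\ A_0[\kappa] \approx A_0'[\kappa] \in \Uc_i$ and $\forall \kappa, c \approx c' \in \El_i(A_0[\kappa]).\ \intp{T}(\ext(\vrho[\kappa], x, c)) \approx \intp{T'}(\ext(\vrho'[\kappa], x, c')) \in \Uc_i$. To show $\El_i(A) \subseteq \El_i(A')$ (the converse is symmetric), take $(a, b) \in \El_i(A)$, fix a \UnMoT $\kappa$ and a pair $a' \approx b' \in \El_i(A_0'[\kappa])$. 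By the induction hypothesis on the first premise, $\El_i(A_0'[\kappa]) = \El_i(A_0[\kappa])$, so $a' \approx b' \in \El_i(A_0[\kappa])$, and $(a,b) \in \El_i(A)$ gives $a[\kappa]\cdot a' \approx b[\kappa]\cdot b' \in \El_i(\intp{T}(\ext(\vrho[\kappa], x, a')))$. It remains to transport this along $\El_i(\intp{T}(\ext(\vrho[\kappa], x, a'))) = \El_i(\intp{T'}(\ext(\vrho'[\kappa], x, a')))$, which follows from the induction hypothesis once we know $\intp{T}(\ext(\vrho[\kappa], x, a')) \approx \intp{T'}(\ext(\vrho'[\kappa], x, a')) \in \Uc_i$ — this is the second premise instantiated at the pair $(a', a')$, and $a' \approx a' \in \El_i(A_0[\kappa])$ is obtained from $a' \approx b' \in \El_i(A_0[\kappa])$ by symmetry and transitivity, i.e.\ by the companion lemma. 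Both invocations here are at premises of the current rule, hence at strictly smaller $\Uc_i$-derivations, so the mutual recursion is well founded. I expect this $Pi$ case — specifically, lining up the dependence on PER-ness of $\El_i$ at the domain with the structural ordering of the inductive-recursive definition — to be the only real obstacle; everything else is unfolding definitions.
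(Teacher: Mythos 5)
Your proof is correct and follows the paper's own route: induction on the derivation of $A \approx A' \in \Uc_i$, with the neutral, $\Nd$ and $\Ud_j$ cases immediate, the $\squared$ case handled by instantiating the premise at $\sextt\kappa n$, and the $Pi$ case by transporting the domain membership and the codomain membership with the IH. The one place you diverge is inside the $Pi$ case: the paper instantiates the codomain premise at the heterogeneous pair $a' \approx b' \in \El_i(A''[\kappa])$ that is already in hand, so it never needs reflexivity of $\El_i$ and never appeals to the companion PER lemma within this proof (the dependence between the two mutually stated lemmas then runs only one way, \Cref{lem:u-el-per} using \Cref{lem:dt:el-resp-u}); you instead instantiate at $(a', a')$, which forces you to manufacture $a' \approx a' \in \El_i(A_0[\kappa])$ by symmetry and transitivity, i.e.\ a genuine call to the companion lemma. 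Your variant is still well founded, since that call is justified by the premise $A_0[\kappa] \approx A_0'[\kappa] \in \Uc_i$, a strict subderivation, and it has the small advantage of landing exactly on $\El_i(\intp{T'}(\ext(\vrho'[\kappa], x, a')))$, the relation literally required by the definition of $\El_i(Pi(A', x.T', \vrho'))$, whereas the paper's instantiation ends at the $b'$-environment and silently identifies it with the target; the cost is a tighter entanglement of the two mutual lemmas than the paper's argument actually needs.
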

\begin{lemma}\labeledit{lem:u-el-per}
  $\Uc_i$ and if $A \approx A' \in \Uc_i$, $\El_i(A)$ are PERs. 
\end{lemma}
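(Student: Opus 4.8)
The plan is to establish \Cref{lem:dt:el-resp-u} and \Cref{lem:u-el-per} \emph{simultaneously}, bundling three assertions into one mutual induction: (a) $\El_i(A) = \El_i(A')$ as binary relations on $D$ whenever $A \approx A' \in \Uc_i$; (b) $\Uc_i$ is a PER; (c) $\El_i(A)$ is a PER whenever $A \approx A' \in \Uc_i$. The induction is lexicographic: an outer strong induction on the universe level $i$ --- needed because of the constructor $\Ud_j \approx \Ud_j \in \Uc_i$ with $j < i$, for which $\El_i(\Ud_j) = \Uc_j$ and we must invoke (b) at the smaller level $j$ --- and, at a fixed $i$, an inner induction on the derivation of $A \approx A' \in \Uc_i$, equivalently on the inductive-recursive structure used to define $\El_i$. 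Bundling (a)--(c) is essential, because the $Pi$ clause of each of the three statements depends on the other two applied to the domain relation and to the codomain instances, and these are exactly the immediate sub-derivations.

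The degenerate cases are immediate. For $\uparrow^A(C) \approx \uparrow^{A'}(C') \in \Uc_i$ and for $\Nd \approx \Nd \in \Uc_i$ the relations $\El_i(\uparrow^A(C)) = Neu$ and $\El_i(\Nd) = Nat$ are \emph{constant}, so (a) is vacuous and (c) reduces to $Neu$ and $Nat$ being PERs, which is read off their defining rules together with the fact that $\bot$ is a PER. For $\Ud_j \approx \Ud_j \in \Uc_i$ with $j < i$ we have $\El_i(\Ud_j) = \Uc_j$, a PER by the outer induction hypothesis; again $A = A'$ syntactically, so (a) is vacuous. In each of these cases (b) for the single formation step is trivial --- a reflexive premise, or the outer IH.

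The content is in the $\squared$ and $Pi$ clauses. For $\squared A \approx \squared A' \in \Uc_i$ the premise gives $A[\kappa] \approx A'[\kappa] \in \Uc_i$ for every $\kappa$, in particular for every $\sextt\kappa n$; the inner IH (a) then yields $\El_i(A[\sextt\kappa n]) = \El_i(A'[\sextt\kappa n])$, so the defining set-comprehensions of $\El_i(\squared A)$ and $\El_i(\squared A')$ literally coincide, proving (a); symmetry and transitivity of $\El_i(\squared A)$ follow pointwise from (c) for each $\El_i(A[\sextt\kappa n])$; and (b) at this step is just symmetry and transitivity of the premise family, by (b). For $Pi(A, x.T, \vrho) \approx Pi(A', x.T', \vrho') \in \Uc_i$ the first premise and (a) give $\El_i(A[\kappa]) = \El_i(A'[\kappa])$, so the quantification domains agree; fixing $\kappa$ and a related pair $a' \approx b' \in \El_i(A[\kappa])$, reflexivity of $a'$ there (by (c) of the IH) lets us feed $(a', a')$ to the second premise and obtain $\intp{T}(\ext(\vrho[\kappa], x, a')) \approx \intp{T'}(\ext(\vrho'[\kappa], x, a')) \in \Uc_i$, whence (a) again gives equality of the corresponding $\El_i$'s, proving (a) for the $Pi$ formation. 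Statement (c) for $\El_i(Pi(A, x.T, \vrho))$ is then the usual PER-chasing: transport related pairs between the codomain relations by noting the corresponding $\Uc_i$-members are related (again via the second premise and (b) of the IH) and hence carry equal $\El_i$'s by (a); and (b) for the $Pi$ formation follows by transporting pairs between $\El_i(A[\kappa])$ and $\El_i(A'[\kappa])$ via (a) and chaining with (b) and (c) of the induction hypothesis.

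The main obstacle is entirely organizational: arranging the mutual induction so that it is well-founded. One must notice that the three assertions are genuinely entangled --- (b) for $Pi$ uses (a) and (c) on the domain, (c) for $Pi$ uses (b) and (c) on the codomain, and (a) for $\squared$ and $Pi$ uses (a) on sub-derivations --- and that the right measure is the pair consisting of the universe level and the $\Uc_i$-derivation, taken lexicographically, under which the premises of the $\squared$ and $Pi$ rules (a $\kappa$-indexed family of $\Uc_i$-derivations for $A[\kappa] \approx A'[\kappa]$, together with a further family for the codomain) count as strictly smaller. Once this measure and the bundled induction hypothesis are fixed, every case is a routine unfolding of the definitions of $\Uc_i$ and $\El_i$ together with the PER laws of the component relations; no hard calculation is expected.
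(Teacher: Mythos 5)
Your proposal is correct and follows essentially the same route as the paper: the two statements (the invariance of $\El_i$ under $\Uc_i$-equivalence and the PER laws for $\Uc_i$ and $\El_i$) are proved by a mutual induction on the derivation of $A \approx A' \in \Uc_i$, with the only real work in the $\squared$ and $Pi$ cases, exactly as in the paper's proofs of \Cref{lem:dt:el-resp-u,lem:u-el-per}. Your explicit outer induction on the universe level $i$ (for the $\Ud_j$, $j<i$ case, where $\El_i(\Ud_j)=\Uc_j$) just spells out the well-founded measure the paper leaves implicit and uses elsewhere, so it is a presentational refinement rather than a different argument.
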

\begin{proof}[Proof of \Cref{lem:dt:el-resp-u}]
  We do induction on $A \approx A' \in \Uc_i$. We omit the easy cases.
  \begin{itemize}[label=Case]
  \item $A = \squared A''$, $A' = \squared A'''$,
    \begin{mathpar}
      \inferrule
      {\forall \kappa. A''[\kappa] \approx A'''[\kappa] \in \Uc_i}
      {\squared A'' \approx \squared A''' \in \Uc_i}
    \end{mathpar}
    \begin{align*}
      & \forall \kappa, n. \tunbox \cdot (n, a[\kappa]) \approx \tunbox
        \cdot (n, b[\kappa]) \in \El_i(A''[\sextt\kappa n])
        \tag{assumption} \\
      & \text{assume }\kappa, n, \\
      & A''[\sextt\kappa n] \approx A'''[\sextt\kappa n] \in \Uc_i \\
      &  \tunbox \cdot (n, a[\kappa]) \approx \tunbox \cdot (n, b[\kappa]) \in
        \El_i(A'''[\sextt\kappa n])
        \byIH \\
      & a \approx b \in \El_i(\squared A''')
        \tag{by abstraction}
    \end{align*}
  \item $A = Pi(A'', x.T'', \vrho'')$, $A' = Pi(A''', x.T''', \vrho''')$
    \begin{mathpar}
      \inferrule
      {\forall \kappa. A''[\kappa] \approx A'''[\kappa] \in \Uc_i \\
        \forall \kappa, a' \approx b' \in \El_i(A''[\kappa]).
        \intp{T''}(\ext(\vrho''[\kappa], x, a')) \approx \intp{T'''}(\ext(\vrho'''[\kappa], x, b')) \in \Uc_i}
      {Pi(A'', x.T'', \vrho'') \approx Pi(A''', x.T''', \vrho''') \in \Uc_i}
    \end{mathpar}
    \begin{align*}
      & \forall \kappa, a' \approx b' \in \El_i(A''[\kappa]). a[\kappa] \cdot a' \approx b[\kappa] \cdot b' \in
        \El_i(\intp{T''}(\ext(\vrho''[\kappa], x, a'))) \tag{by assumption} \\
      & \text{assume }\kappa, a' \approx b' \in \El_i(A'''[\kappa]) \\
      & A''[\kappa] \approx A'''[\kappa] \in \Uc_i \\
      & a' \approx b' \in \El_i(A''[\kappa])
        \byIH \\
      & a[\kappa] \cdot a' \approx b[\kappa] \cdot b' \in \El_i(\intp{T''}(\ext(\vrho''[\kappa], x, a')))
      \\
      & \intp{T''}(\ext(\vrho''[\kappa], x, a')) \approx \intp{T'''}(\ext(\vrho'''[\kappa], x, b')) \in \Uc_i \\
      & a[\kappa] \cdot a' \approx b[\kappa] \cdot b' \in \El_i(\intp{T'''}(\ext(\vrho'''[\kappa], x, b')))
        \byIH \\
      & a \approx b \in \El_i(Pi(A''', x.T''', \vrho'''))
        \tag{by abstraction}
    \end{align*}
  \end{itemize}
  The other direction is proved symmetrically. 
\end{proof}

\begin{proof}[Proof of \Cref{lem:u-el-per}]
  Again we only consider interesting cases. First consider symmetricity. We proceed by
  induction on $A \approx A' \in \Uc_i$:
  \begin{itemize}[label=Case]
  \item $A = Pi(A'', x.T'', \vrho'')$, $A' = Pi(A''', x.T''', \vrho''')$,
    \begin{mathpar}
      \inferrule
      {\forall \kappa. A''[\kappa] \approx A'''[\kappa] \in \Uc_i \\
        \forall \kappa, a' \approx b' \in \El_i(A''[\kappa]).
        \intp{T''}(\ext(\vrho''[\kappa], x, a')) \approx \intp{T'''}(\ext(\vrho'''[\kappa], x, b')) \in \Uc_i}
      {Pi(A'', x.T'', \vrho'') \approx Pi(A''', x.T''', \vrho''') \in \Uc_i}
    \end{mathpar}
    \begin{align*}
      & \text{assume }\kappa, a' \approx b' \in \El_i(A'''[\kappa]) \\
      & A''[\kappa] \approx A'''[\kappa] \in \Uc_i \\
      & a' \approx b' \in \El_i(A''[\kappa])
        \tag{by \Cref{lem:dt:el-resp-u}} \\
      & \intp{T''}(\ext(\vrho''[\kappa], x, a')) \approx \intp{T'''}(\ext(\vrho'''[\kappa], x, b')) \in \Uc_i \\
      & \intp{T'''}(\ext(\vrho'''[\kappa], x, b')) \approx \intp{T''}(\ext(\vrho''[\kappa], x, a')) \in \Uc_i
        \byIH  \\
      & Pi(A''', x.T''', \vrho''') \approx Pi(A'', x.T'', \vrho'') \in \Uc_i
    \end{align*}

    Next we consider symmetry of $\El_i(Pi(A'', x.T'', \vrho''))$. 
    \begin{align*}
      & \forall \kappa, a' \approx b' \in \El_i(A''[\kappa]). a[\kappa] \cdot a' \approx
        b[\kappa] \cdot b' \in \El_i(\intp{T''}(\ext(\vrho''[\kappa], x, a')))
        \tag{by assumption} \\
      & \text{assume }\kappa, b' \approx a' \in \El_i(A''[\kappa]) \\
      & a' \approx b' \in \El_i(A''[\kappa])
        \byIH \\
      & a[\kappa] \cdot a' \approx b[\kappa] \cdot b' \in \El_i(\intp{T''}(\ext(\vrho''[\kappa], x, a'))) \\
      & b[\kappa] \cdot b' \approx a[\kappa] \cdot a' \in \El_i(\intp{T''}(\ext(\vrho''[\kappa], x, a')))
        \byIH \\
      & b[\kappa] \cdot b' \approx a[\kappa] \cdot a' \in \El_i(\intp{T'''}(\ext(\vrho'''[\kappa], x, b')))
        \tag{by \Cref{lem:dt:el-resp-u}} \\
      & b \approx a \in \El_i(Pi(A''', x.T''', \vrho'''))
        \tag{by abstraction} \\
      & b \approx a \in \El_i(Pi(A'', x.T'', \vrho''))
        \tag{by \Cref{lem:dt:el-resp-u}}
    \end{align*}
  \end{itemize}

  Now we consider transitivity. We should show
  \begin{itemize}
  \item If $A \approx A' \in \Uc_i$ and $A' \approx A'' \in \Uc_i$, then $A \approx A''
    \in \Uc_i$. 
  \item If $a \approx a' \in \El_i(A)$ and $a' \approx a'' \in \El_i(A)$, then $a
    \approx a'' \in \El_i(A)$. 
  \end{itemize}
  We do induction on $A \approx A' \in \Uc_i$ and invert $A' \approx A'' \in \Uc_i$. We only consider the the interesting
  cases. 
  \begin{itemize}[label=Case]
  \item  $A = Pi(A_1, x.T_1, \vrho_1)$, $A' = Pi(A_2, x.T_2, \vrho_2)$, $A'' = Pi(A_3, x.T_3, \vrho_3)$
    \begin{mathpar}
      \inferrule
      {\forall \kappa. A_1[\kappa] \approx A_2[\kappa] \in \Uc_i \\
        \forall \kappa, b \approx b' \in \El_i(A_1[\kappa]).
        \intp{T_1}(\ext(\vrho_1[\kappa], x, b)) \approx \intp{T_1}(\ext(\vrho_2[\kappa], x, b')) \in \Uc_i}
      {Pi(A_1, x.T_1, \vrho_1) \approx Pi(A_2, x.T_2, \vrho_2) \in \Uc_i}

      \inferrule
      {\forall \kappa. A_2[\kappa] \approx A_3[\kappa] \in \Uc_i \\
        \forall \kappa, b \approx b' \in \El_i(A_2[\kappa]).
        \intp{T_2}(\ext(\vrho_2[\kappa], x, b)) \approx \intp{T_3}(\ext(\vrho_3[\kappa], x, b')) \in \Uc_i}
      {Pi(A_2, x.T_2, \vrho_2) \approx Pi(A_3, x.T_3, \vrho_3) \in \Uc_i}
    \end{mathpar}
    \begin{align*}
      & \text{assume }\kappa, b \approx b' \in \El_i(A_1[\kappa]) \\
      & \intp{T_1}(\ext(\vrho_1[\kappa], x, b)) \approx \intp{T_1}(\ext(\vrho_2[\kappa], x, b')) \in \Uc_i \\
      & b \approx b' \in \El_i(A_2[\kappa])
        \tag{by \Cref{lem:dt:el-resp-u}}\\
      & b' \approx b' \in \El_i(A_2[\kappa])
        \byIH \\
      & \intp{T_2}(\ext(\vrho_2[\kappa], x, b')) \approx \intp{T_3}(\ext(\vrho_3[\kappa], x, b')) \in \Uc_i \\
      & \intp{T_1}(\ext(\vrho_1[\kappa], x, b)) \approx \intp{T_3}(\ext(\vrho_3[\kappa], x, b')) \in \Uc_i
        \byIH \\
      & Pi(A_1, x.T_1, \vrho_1) \approx Pi(A_3, x.T_3, \vrho_3) \in \Uc_i
        \tag{by abstraction}
    \end{align*}

    We move on to the transitivity of $Pi(A_1, x.T_1, \vrho_1)$.
    \begin{align*}
      & \text{assume }\kappa, b \approx b' \in \El_i(A_1[\kappa]) \\
      & a[\kappa] \cdot b \approx a'[\kappa] \cdot b' \in \El_i(\intp{T_1}(\ext(\vrho_1[\kappa], x, b)))
      \\
      & b' \approx b' \in \El_i(A_1[\kappa])
      \byIH \\
      & a'[\kappa] \cdot b' \approx a''[\kappa] \cdot b' \in \El_i(\intp{T_1}(\ext(\vrho_1[\kappa], x, b'))) \\
      & a'[\kappa] \cdot b' \approx a''[\kappa] \cdot b' \in \El_i(\intp{T_1}(\ext(\vrho_1[\kappa], x, b)))
        \tag{by \Cref{lem:dt:el-resp-u}}\\
      & a[\kappa] \cdot b \approx a''[\kappa] \cdot b' \in \El_i(\intp{T_1}(\ext(\vrho_1[\kappa], x, b)))
        \byIH \\
      & a \approx a'' \in \El_i(Pi(A_1, x.T_1, \vrho))
    \end{align*}
  \end{itemize}
\end{proof}
The previous two lemmas justify that $\El_i(A)$ is define if $A \approx A' \in \Uc_i$ or
$A' \approx A \in \Uc_i$ holds. Thus, notationally we can write $A \in \Uc_i$ to express
the well-defined condition for $\El_i(A)$.

The following two lemmas are needed and can be proved just like in the simply typed
case. 
\begin{lemma}\labeledit{lem:dt:L-intp-vsigma-vrho}
  $\Ltotal{\intp{\vsigma}(\vrho)}n = \Ltotal\vrho{\Ltotal\vsigma n}$
\end{lemma}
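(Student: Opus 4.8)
The statement to prove is $\Ltotal{\intp{\vsigma}(\vrho)}n = \Ltotal\vrho{\Ltotal\vsigma n}$ (Lemma \ref{lem:dt:L-intp-vsigma-vrho}), which is precisely the dependently-typed analogue of Lemma \ref{lem:L-intp-vsigma-vrho} from the simply-typed development (\Cref{sec:st:untyped}). The paper even remarks ``can be proved just like in the simply typed case'', so the plan is to replay that proof verbatim, with the only change being the expanded grammar of $\vsigma$ now includes the modality-specific formers (but $\Ltotal{\_}{\_}$ on substitutions and on $\Envs$ is defined identically, and the $\Nat$/$\Se_i$/$\square T$/$\Pi$/$\elimn{}{}{}{}$ term-formers never touch the offset structure).

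The approach is an induction on $n$ with an inner case analysis on the syntactic form of $\vsigma$. For $n = 0$ both sides are $0$ and the equation is immediate. For $n = 1 + n'$, I would invert on $\vsigma$: in the cases $\vsigma = \vect I$ and $\vsigma = \wk_x$ the computation is direct from the defining equations of $\intp{\_}$ on substitutions and of $\Ltotal{\_}{\_}$; in the case $\vsigma = \vsigma', t/x$ one uses that $\intp{\vsigma', t/x}(\vrho) = \ext(\intp{\vsigma'}(\vrho), x, \intp{t}(\vrho))$ has the same leading offset as $\intp{\vsigma'}(\vrho)$ and that $\Ltotal{\vsigma', t/x}{1+n'} = \Ltotal{\vsigma'}{1+n'}$, then applies the IH. The case $\vsigma = \sextt{\vsigma'} m$ is the one that actually uses arithmetic: we have $\Ltotal{\intp{\sextt{\vsigma'}m}(\vrho)}{1+n'} = \Ltotal{\ext(\intp{\vsigma'}(\trunc\vrho m), \Ltotal\vrho m)}{1+n'} = \Ltotal\vrho m + \Ltotal{\intp{\vsigma'}(\trunc\vrho m)}{n'}$, which by the IH equals $\Ltotal\vrho m + \Ltotal{\trunc\vrho m}{\Ltotal{\vsigma'}{n'}}$, and then by \Cref{lem:L-add-envs} (distributivity of addition over truncation offset on $\Envs$) this is $\Ltotal\vrho{m + \Ltotal{\vsigma'}{n'}} = \Ltotal\vrho{\Ltotal{\sextt{\vsigma'}m}{1+n'}}$. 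Finally the case $\vsigma = \vsigma' \circ \vdelta$ uses the IH twice: $\Ltotal{\intp{\vsigma' \circ \vdelta}(\vrho)}{1+n'} = \Ltotal{\intp{\vsigma'}(\intp{\vdelta}(\vrho))}{1+n'} = \Ltotal{\intp{\vdelta}(\vrho)}{\Ltotal{\vsigma'}{1+n'}} = \Ltotal\vrho{\Ltotal\vdelta{\Ltotal{\vsigma'}{1+n'}}} = \Ltotal\vrho{\Ltotal{\vsigma' \circ \vdelta}{1+n'}}$.

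Concretely I would write it as follows.

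\begin{proof}
  We proceed by induction on $n$ and $\vsigma$. We analyze the valid cases by
  inverting $\intp{\vsigma}(\vrho)$ and $\Ltotal\vsigma n$.
  \begin{itemize}[label=Case]
  \item $n = 0$, immediate.
  \item $n = 1 + n'$, then
    \begin{itemize}[label=Subcase]
    \item $\vsigma = \vect I$ or $\vsigma = \wk_x$, immediate.
    \item $\vsigma = \vsigma', t/x$, then
      \begin{align*}
        \Ltotal{\intp{\vsigma}(\vrho)} n
        &= \Ltotal{\intp{\vsigma'}(\vrho)}n \\
        &= \Ltotal\vrho{\Ltotal{\vsigma'} n}
          \byIH \\
        &= \Ltotal\vrho{\Ltotal\vsigma n}
      \end{align*}
    \item $\vsigma = \sextt{\vsigma'}m$, then
      \begin{align*}
        \Ltotal{\intp{\vsigma}(\vrho)} n
        &= \Ltotal\vrho m + \Ltotal{\intp{\vsigma'}(\trunc \vrho m)}{n'} \\
        &= \Ltotal\vrho m + \Ltotal{\trunc \vrho m}{\Ltotal{\vsigma'}{n'}}
        \byIH \\
        &= \Ltotal{\vrho}{m + \Ltotal{\vsigma'}{n'}}
        \tag{by \Cref{lem:L-add-envs}} \\
        &= \Ltotal\vrho{\Ltotal\vsigma n}
      \end{align*}
    \item $\vsigma = \vsigma' \circ \vdelta$, then
      \begin{align*}
        \Ltotal{\intp{\vsigma}(\vrho)} n
        &= \Ltotal{\intp{\vsigma'}(\intp{\vdelta}(\vrho))}n \\
        &= \Ltotal{\intp{\vdelta}(\vrho)}{\Ltotal{\vsigma'} n}
          \byIH \\
        &= \Ltotal\vrho{\Ltotal{\vdelta}{\Ltotal{\vsigma'} n}}
          \byIH \\
        &= \Ltotal\vrho{\Ltotal{\vsigma' \circ \vdelta} n} \\
        &= \Ltotal\vrho{\Ltotal\vsigma n}
      \end{align*}
    \end{itemize}
  \end{itemize}
\end{proof}

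There is essentially no hard part: the only substantive ingredient is \Cref{lem:L-add-envs}, already available, and the proof is structurally identical to the simply-typed version. The one point deserving a moment's care is confirming that the newly added term-formers of \mintslang\ (universes, $\Nat$ and its eliminator, dependent $\Pi$, the modality) do not introduce new syntactic cases for \emph{substitutions} $\vsigma$ — they do not, since $\Substs$ has exactly the same grammar as in \Cref{sec:st:subst-calc} — so the case split above is exhaustive, and the evaluation equations $\intp{\vsigma}(\vrho)$ are literally unchanged.
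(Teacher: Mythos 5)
Your proposal is correct and matches the paper exactly: the paper gives no separate proof here, remarking only that the lemma ``can be proved just like in the simply typed case,'' and your argument is precisely that simply-typed proof (induction on $n$ with a case split on $\vsigma$, using \Cref{lem:L-add-envs} in the $\sextt{\vsigma'}m$ case and the IH twice for composition) transplanted unchanged, together with the correct observation that the grammar of $\Substs$ and its interpretation are unaltered in \mintslang.
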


\begin{lemma}\labeledit{lem:dt:trunc-intp-vsigma-vrho}
  $\trunc{\intp{\vsigma}(\vrho)}n = \intp{\trunc\vsigma n}(\trunc\vrho{\Ltotal\vsigma n})$
\end{lemma}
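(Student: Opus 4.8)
# Proof Proposal for Lemma~\ref{lem:dt:trunc-intp-vsigma-vrho}

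The plan is to mirror exactly the structure of the proof of \Cref{lem:trunc-intp-vsigma-vrho} from the simply typed development, proceeding by induction on $n$ and case analysis on $\vsigma$. Since the evaluation environments $\Envs$ and the interpretation of substitutions $\intp{\vsigma}(\vrho)$ in the dependently typed setting have exactly the same recursive structure as in \Cref{sec:st:domain,sec:st:ut-mtrans} (indeed the paper explicitly says ``We inherit all definitions of evaluation environments in \Cref{sec:st:domain}''), the computation is formally identical. The only new syntactic constructors in $\Trm$ relative to the simply typed case --- $\Nat$, $\Se_i$, $\Pi$-types, $\ze$, $\su$, $\elimn{}{}{}{}$ --- do not appear in $\Substs$, so the grammar of unified substitutions is unchanged and the induction has precisely the same five cases.

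First I would dispatch the base case $n = 0$, where both sides are $\vsigma$ applied, i.e.\ $\intp{\vsigma}(\vrho)$, immediately by definition of $\trunc{\_}{0}$ and $\Ltotal{\_}{0} = 0$. For $n = 1 + n'$, I would case split on the shape of $\vsigma$. The cases $\vsigma = \vect I$ and $\vsigma = \wk_x$ are immediate by unfolding. For $\vsigma = \vsigma', t/x$, note $\trunc{\intp{\vsigma', t/x}(\vrho)}{1+n'} = \trunc{\ext(\intp{\vsigma'}(\vrho), x, \intp{t}(\vrho))}{1+n'} = \trunc{\intp{\vsigma'}(\vrho)}{1+n'}$ since truncating past the top frame discards the binding, and then I apply the IH for $\vsigma'$ and use $\Ltotal{\vsigma}{1+n'} = \Ltotal{\vsigma'}{1+n'}$. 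For $\vsigma = \sextt{\vsigma'}m$, I compute $\trunc{\intp{\sextt{\vsigma'}m}(\vrho)}{1+n'} = \trunc{\ext(\intp{\vsigma'}(\trunc\vrho m), \Ltotal\vrho m)}{1+n'} = \trunc{\intp{\vsigma'}(\trunc\vrho m)}{n'}$, apply the IH, then collapse the iterated truncation via \Cref{lem:L-add-envs} together with the truncation-composition law for $\Envs$; the bookkeeping is the same calculation that appears verbatim in the simply typed proof. For $\vsigma = \vsigma' \circ \vdelta$, I unfold to $\trunc{\intp{\vsigma'}(\intp{\vdelta}(\vrho))}{1+n'}$, apply the IH twice (once for $\vsigma'$ and once for $\vdelta$), and reconcile using $\Ltotal{\vsigma' \circ \vdelta}{n} = \Ltotal{\vdelta}{\Ltotal{\vsigma'}n}$ from \Cref{lem:dt:L-intp-vsigma-vrho}.

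The companion \Cref{lem:dt:L-intp-vsigma-vrho} should be proved first (or jointly), by the same induction; it plays the role of \Cref{lem:L-intp-vsigma-vrho} and is invoked in the $\vsigma' \circ \vdelta$ case above. Both proofs rely only on the arithmetic of truncation offsets over $\Envs$ (namely \Cref{lem:L-add-envs}, which the paper has already established in the shared domain section) and the structure-decreasing equations for $\trunc{\_}{\_}$ and $\Ltotal{\_}{\_}$ on environments; no genuinely new ingredient from the dependently typed layer is needed.

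I do not expect any real obstacle here --- the paper signals this by stating ``can be proved just like in the simply typed case.'' The only thing to be mildly careful about is that $\intp{\_}$ on $\Trm$ now also computes type values ($\Ud_i$, $\Nd$, $\squared$, $Pi$), but since truncation and truncation offset of $\Envs$ never inspect the contents of the $\Env$ component of a frame --- only the natural-number offsets --- these extra constructors are irrelevant to the argument; the induction is on $\vsigma \in \Substs$ and $n \in \N$, not on terms. Hence the whole proof is a routine transcription, and I would simply write ``The same as above, we proceed by induction on $n$ and $\vsigma$, analyzing the valid cases by inverting $\intp{\vsigma}(\vrho)$ and $\trunc\vsigma n$; cf.\ the proof of \Cref{lem:trunc-intp-vsigma-vrho}.''
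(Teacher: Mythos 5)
Your proposal is correct and matches the paper exactly: the paper's own proof of \Cref{lem:dt:trunc-intp-vsigma-vrho} is simply ``can be proved just like in the simply typed case,'' i.e.\ the induction on $n$ and $\vsigma$ from \Cref{lem:trunc-intp-vsigma-vrho} that you transcribe, with the same five cases, the same use of the companion offset lemma, and the same appeal to \Cref{lem:L-add-envs}. Nothing further is needed.
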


\begin{lemma}[Monotonicity]\labeledit{lem:dt:u-el-resp-mon}
  \begin{itemize}
  \item If $A \approx A' \in \Uc_i$, then $A[\kappa] \approx A'[\kappa] \in \Uc_i$.
  \item If $a \approx b \in \El_i(A)$, then $a[\kappa] \approx b[\kappa] \in
    \El_i(A[\kappa])$. 
  \end{itemize}
\end{lemma}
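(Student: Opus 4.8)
The plan is to prove the two statements together by induction on the derivation of $A \approx A' \in \Uc_i$, which is also what the recursion defining $\El_i$ follows; since the second statement tacitly assumes $A \in \Uc_i$, the first statement is exactly what guarantees that $\El_i(A[\kappa])$ in its conclusion is well-defined, so I would establish the first statement first and then use it inside the second. The only machinery required is the algebra of \UnMoTs inherited from \Cref{sec:st:ut-mtrans}: associativity of $\circ$, the functoriality equations $a[\kappa][\kappa'] = a[\kappa \circ \kappa']$ for $D$, $D^{\Ne}$, and $\Envs$, closure of $\bot$ under \UnMoTs, and \Cref{lem:dt:unbox.-mon}.

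For the first statement I would induct on $A \approx A' \in \Uc_i$. The clauses $\Nd \approx \Nd$ and $\Ud_j \approx \Ud_j$ are immediate because $\Nd[\kappa] = \Nd$ and $\Ud_j[\kappa] = \Ud_j$, and $\uparrow^{A}(C) \approx \uparrow^{A'}(C') \in \Uc_i$ follows from closure of $\bot$ since $(\uparrow^{A}(C))[\kappa] = \uparrow^{A[\kappa]}(C[\kappa])$. For $\squared A \approx \squared A'$ the hypothesis is $\forall \kappa''.\ A[\kappa''] \approx A'[\kappa''] \in \Uc_i$; unfolding $(\squared A)[\kappa] = \squared(A[\sextt\kappa 1])$ and the $\squared$-clause of $\Uc_i$, the goal asks for $A[\sextt\kappa 1][\kappa'] \approx A'[\sextt\kappa 1][\kappa'] \in \Uc_i$ for every $\kappa'$, and since functoriality rewrites this to $A[(\sextt\kappa 1)\circ\kappa'] \approx A'[(\sextt\kappa 1)\circ\kappa'] \in \Uc_i$, it is the hypothesis instantiated at $\kappa'' := (\sextt\kappa 1)\circ\kappa'$. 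The $Pi$ case is the same pattern: both hypotheses are quantified over all \UnMoTs, so after rewriting $A[\kappa][\kappa'] = A[\kappa\circ\kappa']$, $\vrho[\kappa][\kappa'] = \vrho[\kappa\circ\kappa']$ and $a[\kappa][\kappa'] = a[\kappa\circ\kappa']$ the goal is an instance of the hypotheses at $\kappa\circ\kappa'$. Note that the induction hypothesis is not actually used in this statement; the quantifiers already built into $\Uc_i$ carry it.

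For the second statement I would recurse on the clauses of $\El_i(A)$. If $A = \uparrow^{A'}(C)$ then $\El_i(A) = Neu$ and a member $a \approx b$ is $\uparrow^{A_1}(c) \approx \uparrow^{A_2}(c')$ with $c\approx c'\in\bot$; closure of $\bot$ together with $(\uparrow^{A_1}(c))[\kappa] = \uparrow^{A_1[\kappa]}(c[\kappa])$ and the observation that $A[\kappa] = \uparrow^{A'[\kappa]}(C[\kappa])$ is still neutral-headed gives $a[\kappa]\approx b[\kappa]\in Neu = \El_i(A[\kappa])$. If $A = \Nd$, a short sub-induction on $a \approx b \in Nat$ covers $\zed$, successors (via the sub-IH), and $\uparrow^{\Nd}(c)$ (via closure of $\bot$). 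If $A = \Ud_j$ then $\El_i(\Ud_j) = \Uc_j$ and the claim is exactly the first statement at level $j$, already proven. If $A = Pi(A', x.T, \vrho)$ then $A[\kappa] = Pi(A'[\kappa], x.T, \vrho[\kappa])$, and since functoriality makes $A'[\kappa][\kappa'] = A'[\kappa\circ\kappa']$ and $\vrho[\kappa][\kappa'] = \vrho[\kappa\circ\kappa']$ hold as equalities, the goal unfolds to an instance of the defining property of $\El_i(Pi(A',x.T,\vrho))$ at $\kappa\circ\kappa'$. The remaining case $A = \squared A'$ is the one demanding care: from $a \approx b \in \El_i(\squared A')$ one knows that for all $\kappa''$ and $n$, $\tunbox \cdot (n, a[\kappa'']) \approx \tunbox \cdot (n, b[\kappa'']) \in \El_i(A'[\sextt{\kappa''}{n}])$, and one must reshape this into the statement about $a[\kappa][\kappa']$, $b[\kappa][\kappa']$ and $\El_i(A'[\sextt\kappa 1][\sextt{\kappa'}{n}])$. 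I would instantiate $\kappa'' := \kappa\circ\kappa'$, rewrite $a[\kappa][\kappa'] = a[\kappa\circ\kappa']$, apply \Cref{lem:dt:unbox.-mon}, and use the bookkeeping identity $(\sextt\kappa 1) \circ (\sextt{\kappa'}{n}) = \sextt{(\kappa\circ\kappa')}{n}$, which follows by unfolding the definitions of $\circ$, truncation and truncation offset on $\N\to\N$, to identify $A'[\sextt\kappa 1][\sextt{\kappa'}{n}]$ with $A'[\sextt{(\kappa\circ\kappa')}{n}]$.

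The hard part will be this last $\squared$ case, specifically lining up the semantic type indices: the payoff is the identity $(\sextt\kappa 1) \circ (\sextt{\kappa'}{n}) = \sextt{(\kappa\circ\kappa')}{n}$ and its interaction with \Cref{lem:dt:unbox.-mon}. Everything else is mechanical rewriting with the inherited functoriality laws, exploiting the fact that both $\Uc_i$ and $\El_i$ are already defined by quantifying over all \UnMoTs, so that the genuinely inductive content is confined to the trivial sub-induction on $Nat$.
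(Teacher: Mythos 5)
Your proposal is correct and takes essentially the same route as the paper, whose entire proof is ``Induction on $A \approx A' \in \Uc_i$; all cases are discharged due to the universal quantification over \UnMoTs.'' The details you supply — instantiating the quantified hypotheses at $\kappa \circ \kappa'$, rewriting with $a[\kappa][\kappa'] = a[\kappa\circ\kappa']$, and the identity $(\sextt\kappa 1) \circ (\sextt{\kappa'}{n}) = \sextt{(\kappa\circ\kappa')}{n}$ together with \Cref{lem:dt:unbox.-mon} in the $\squared$ case — are exactly the bookkeeping the paper leaves implicit.
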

\begin{proof}
  Induction on $A \approx A' \in \Uc_i$. All cases are discharged due to the universal
  quantification over \UnMoTs.
\end{proof}

Next, we establish that the PER model is cumulative. We first show that the model is
cumulative by one step. 
\begin{lemma}[Cumulativity and lowering]\labeledit{lem:dt:per-cumu}
  If $A \approx A' \in \Uc_i$,
  \begin{itemize}
  \item then $A \approx A' \in \Uc_{i+1}$;
  \item if $a \approx b \in \El_{i+1}(A)$, then $a \approx b \in \El_i(A)$.
  \item if $a \approx b \in \El_i(A)$, then $a \approx b \in \El_{i+1}(A)$.
  \end{itemize}
\end{lemma}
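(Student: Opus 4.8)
The plan is to prove all three statements of Cumulativity and lowering simultaneously by induction on the derivation of $A \approx A' \in \Uc_i$. This is necessary because $\Uc_i$ and $\El_i$ are defined by induction-recursion, so the structure of the proof must follow that of $\Uc_i$, and the claims about $\El_i$ and $\El_{i+1}$ must be threaded through as mutual induction hypotheses available at each case.

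First I would dispatch the easy cases. For $A = A' = \Nd$: $\Nd \approx \Nd \in \Uc_{i+1}$ holds by the same rule, and $\El_i(\Nd) = Nat = \El_{i+1}(\Nd)$ on the nose, so the membership claims are trivially equivalences. For $A = \uparrow^{A_0}(C)$, $A' = \uparrow^{A_0'}(C')$ with $C \approx C' \in \bot$: again the same rule gives membership in $\Uc_{i+1}$, and $\El_i(\uparrow^{A_0}(C)) = Neu = \El_{i+1}(\uparrow^{A_0}(C))$, so we are done. For $A = A' = \Ud_j$ with $j < i$: then $j < i + 1$, so $\Ud_j \approx \Ud_j \in \Uc_{i+1}$; the reflection-layer equations we need are $\El_i(\Ud_j) = \Uc_j = \El_{i+1}(\Ud_j)$, which hold definitionally, so the two $\El$-membership claims are again literal equalities rather than something requiring the IH.

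The two genuine cases are $\squared$ and $Pi$. For $A = \squared A_0$, $A' = \squared A_0'$: from the premise $\forall \kappa. A_0[\kappa] \approx A_0'[\kappa] \in \Uc_i$ the IH (first bullet) gives $\forall \kappa. A_0[\kappa] \approx A_0'[\kappa] \in \Uc_{i+1}$, hence $\squared A_0 \approx \squared A_0' \in \Uc_{i+1}$. For the $\El$ claims, unfold $\El_i(\squared A_0) = \{(a,b) \sep \forall \kappa, n.\ \tunbox \cdot (n, a[\kappa]) \approx \tunbox \cdot (n, b[\kappa]) \in \El_i(A_0[\sextt\kappa n])\}$ and similarly at level $i+1$; since $A_0[\sextt\kappa n] \in \Uc_i$ for each $\kappa, n$ (by the premise, using monotonicity \Cref{lem:dt:u-el-resp-mon}), the IH (second and third bullets) applied pointwise converts between $\El_i(A_0[\sextt\kappa n])$-membership and $\El_{i+1}(A_0[\sextt\kappa n])$-membership, giving both directions. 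The $Pi$ case is the analogue: the first bullet follows by applying the IH to the two premises of the $Pi$ rule (using \Cref{lem:dt:el-resp-u} to move $a' \approx b'$ between $\El_i(A_0[\kappa])$ and $\El_{i+1}(A_0[\kappa])$ as needed when feeding the second premise), and the $\El$-membership equivalences unfold the definition of $\El$ on a $Pi$ value and apply the IH at the codomain $\intp{T_0}(\ext(\vrho_0[\kappa], x, a'))$, again bridging the domain relation with \Cref{lem:dt:el-resp-u}.

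The main obstacle I expect is bookkeeping rather than conceptual: in the $Pi$ case the domain relation that the three $\El$-statements quantify over changes level ($\El_i(A_0[\kappa])$ versus $\El_{i+1}(A_0[\kappa])$), so to even state the hypothesis one wants to feed to the codomain IH one must first transport the witness $a' \approx b'$ across levels, and this is exactly where the induction hypothesis on the \emph{domain} subderivation (which sits inside the $Pi$ premise) together with \Cref{lem:dt:el-resp-u} is needed; care is required that the recursion is well-founded, i.e. that every appeal to the IH is on a structurally smaller derivation of membership in $\Uc_i$ (the domain and codomain subderivations in the $Pi$ rule, and $A_0[\kappa]$ for $\squared$). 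Once the simultaneous statement is set up carefully this is routine. Finally, the full \Cref{lem:dt:eq-cumu}-style statement for arbitrary $i \le j$ follows by iterating this one-step lemma.
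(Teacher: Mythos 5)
Your proposal is correct and takes essentially the same route as the paper: a simultaneous induction on $A \approx A' \in \Uc_i$ proving cumulativity and lowering together, with lowering needed precisely so that, in the $Pi$ case, a witness given in $\El_{i+1}$ of the (contravariant) domain can be brought down to $\El_i$ to feed the premise and invoke the IH. One small nit: \Cref{lem:dt:el-resp-u} only converts between $\El_i(A)$ and $\El_i(A')$ at a fixed level, so the level transport in the domain is carried entirely by the induction hypothesis, as you correctly acknowledge in your final paragraph.
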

\begin{proof}
  We proceed by induction on $A \approx A' \in \Uc_i$. Most cases are
  straightforward. However, in the cases of $Pi$ and $\Lambda$, we are given
  $a' \approx b' \in \El_{i + 1}(A')$ for some $A'$ and need to obtain
  $a' \approx b' \in \El_{i}(A')$ to invoke the premise and use IH. This therefore
  requires us to prove a \emph{lowering} lemma simultaneously, which lowers the
  universe level of $\El$. This allows us to conclude the goal. 
\end{proof}

\begin{lemma}[Cumulativity]
  If $A \approx A' \in \Uc_i$ and $i \le k$,
  \begin{itemize}
  \item then $A \approx A' \in \Uc_k$;
  \item if $a \approx b \in \El_i(A)$, then $a \approx b \in \El_k(A)$.
  \end{itemize}
\end{lemma}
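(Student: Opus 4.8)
The plan is to reduce this to the one-step version, \Cref{lem:dt:per-cumu}, by induction on the gap $k - i$. Since $i \le k$, I would write $k = i + n$ for some $n \in \N$ and proceed by induction on $n$, carrying both bullet points through the induction \emph{simultaneously}: the induction hypothesis must supply the $\Uc$-membership at the intermediate level, since \Cref{lem:dt:per-cumu}'s clause about $\El$ is conditioned on the type already living in the appropriate universe.

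For the base case $n = 0$ there is nothing to do, as $k = i$. For the step case $n = m + 1$, suppose $A \approx A' \in \Uc_i$ and $a \approx b \in \El_i(A)$. Applying the induction hypothesis with $i + m$ in place of $k$ (legitimate since $i \le i + m$) yields $A \approx A' \in \Uc_{i+m}$ and $a \approx b \in \El_{i+m}(A)$. Now invoke \Cref{lem:dt:per-cumu} on $A \approx A' \in \Uc_{i+m}$: its first conclusion gives $A \approx A' \in \Uc_{i + m + 1} = \Uc_k$, and its third conclusion, fed $a \approx b \in \El_{i+m}(A)$, gives $a \approx b \in \El_{i+m+1}(A) = \El_k(A)$. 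This closes the induction.

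I do not expect any real obstacle here. The single point needing care is exactly the one just noted — because \Cref{lem:dt:per-cumu} only lets one raise the $\El$-level when the underlying type is known to sit in the universe, the two bullets must be threaded through the induction together rather than proved in isolation. The lowering clause of \Cref{lem:dt:per-cumu} plays no role; only the upward direction for $\Uc$ and $\El$ is used.
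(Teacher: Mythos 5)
Your proof is correct and takes essentially the same route as the paper, which likewise sets $k = i + n$ and inducts on $n$, applying the one-step lemma (\Cref{lem:dt:per-cumu}) at each step. Your observation that the two bullets must be carried through the induction together is exactly the detail the paper's terse proof leaves implicit.
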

\begin{proof}
  $i \le k$ means $k = i + n$ for some $n$. We do induction on $n$. 
\end{proof}
This lemma shows that $\Uc_i$ indeed models the unvierses cumulatively and thus further
justifies that we can take the limit of $\Uc_i$ and $\El_i$. We write $\Uc$ for
$\Uc_{\infty}$ and $\El$ for $\El_{\infty}$. All previous lemmas about $\Uc_i$ and $\El_i$
hold for $\Uc$ and $\El$.  By construction, we have
\begin{lemma}[subsumption of limits]
  \begin{itemize}
  \item If $A \approx A' \in \Uc_i$, then $A \approx A' \in \Uc$. 
  \item If $a \approx b \in \El_i(A)$, then $a \approx b \in \El(A)$.
  \end{itemize}
\end{lemma}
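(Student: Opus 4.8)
The plan is to read off both statements directly from the way $\Uc$ and $\El$ are constructed as limits of the hierarchy $(\Uc_i)_{i \in \N}$ and $(\El_i)_{i \in \N}$. The key input is cumulativity: by \Cref{lem:dt:per-cumu} and its iterated form, whenever $i \le k$ we have $\Uc_i \subseteq \Uc_k$ as relations, and whenever $A \in \Uc_i$ we have $\El_i(A) \subseteq \El_k(A)$. Hence the two families form increasing chains, so that $\Uc := \Uc_{\infty}$ is literally $\bigcup_{i} \Uc_i$ and $\El := \El_{\infty}$ satisfies $\El(A) = \bigcup_{i} \El_i(A)$ for every $A$ on which some $\El_i$ is defined. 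In particular $\Uc_i \subseteq \Uc$ for each $i$, which is exactly the first bullet.

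For the second bullet I would argue as follows. Suppose $a \approx b \in \El_i(A)$. Then $\El_i(A)$ is defined, so $A \in \Uc_i$ (equivalently $A \approx A \in \Uc_i$, using that $\Uc_i$ is a PER by \Cref{lem:u-el-per}), hence $A \in \Uc$ by the first bullet and $\El(A)$ is defined. Using the chain inclusion $\El_i(A) \subseteq \El_k(A)$ for all $k \ge i$ obtained from \Cref{lem:dt:per-cumu}, we conclude $a \approx b \in \bigcup_k \El_k(A) = \El(A)$, as required.

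There is essentially no obstacle left at this stage: the only substantive point — that the union defining $\El_{\infty}$ is coherent, i.e. that $\El_i(A)$ and $\El_k(A)$ agree on their common domain so that passing to the limit does not depend on the chosen level — is precisely what the cumulativity-and-lowering lemma \Cref{lem:dt:per-cumu} already establishes, the ``lowering'' half supplying the reverse inclusion on overlaps and thereby pinning down $\El_\infty(A)$ uniquely. So the argument reduces to unfolding the definition of the limit together with a citation of \Cref{lem:dt:per-cumu}, which is why the excerpt can simply say ``By construction''.
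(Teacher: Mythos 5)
Your proposal is correct and matches the paper's treatment: the paper gives no separate argument, stating the lemma as holding ``by construction'' of $\Uc = \Uc_\infty$ and $\El = \El_\infty$ as limits justified by the cumulativity (and lowering) lemma, which is exactly the unfolding you supply.
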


Now we have $\Uc$ to relate two types and $\El$ to relate two values. We then further
extend this relation to context stack as follows:
\begin{mathpar}
  \inferrule
  { }
  {\vDash \epsilon; \cdot \approx \epsilon; \cdot}

  \inferrule
  {\vDash \vGamma \approx \vGamma'}
  {\vDash \vGamma; \cdot \approx \vGamma'; \cdot}

  \inferrule
  {\vDash \vGamma; \Gamma \approx \vGamma'; \Gamma' \\
  \forall \vrho \approx \vrho' \in \intp{\vGamma; \Gamma}. \intp{T}(\vrho) \approx
  \intp{T'}(\vrho') \in \Uc}
  {\vDash \vGamma; (\Gamma, x : T) \approx \vGamma'; (\Gamma', x : T')}
\end{mathpar}
By induction-recursion, we define the following relation relating two evaluation environments:
\begin{align*}
  \intp{\epsilon; \cdot}
  &:= \{(\vrho, \vrho')\} \\
  \intp{\vGamma; \cdot}
  &:= \{(\vrho, \vrho') \sep \trunc\vrho 1 \approx \trunc{\vrho'} 1 \in \intp{\vGamma} \tand k =
    k' \text{ where }(k, \_) := \vrho(0) \tand (k', \_) := \vrho'(0) \} \\
  \intp{\vGamma; (\Gamma, x : T)}
  &:= \{ (\vrho, \vrho') \sep \drop(\vrho, x) \approx \drop(\vrho', x) \in \intp{\vGamma; \Gamma} \tand
    \rho(x) \approx \rho'(x) \in \El(\intp{T}(\drop(\vrho, x))) \\
  & \qquad \text{ where }(\_, \rho) := \vrho(0) \tand (\_, \rho') := \vrho'(0) \}
\end{align*}
where $\intp{\vGamma}$ is defined by recursion on $\vDash \vGamma \approx \vGamma'$.

Interpretation of context stacks remains closed under modal transformations:
\begin{lemma}\labeledit{lem:dt:intpvg-resp-mon}
  Given $\vDash \vGamma \approx \vGamma'$, if
  $\vrho \approx \vrho' \in \intp{\vGamma}$, then
  $\vrho[\kappa] \approx \vrho'[\kappa] \in \intp{\vGamma}$.
\end{lemma}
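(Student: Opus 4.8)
The plan is to prove \Cref{lem:dt:intpvg-resp-mon} by induction on the derivation of $\vDash \vGamma \approx \vGamma'$, which is exactly the structure on which $\intp{\vGamma}$ is defined by recursion. Since the statement only quantifies over a single context stack $\vGamma$ with $\vDash \vGamma \approx \vGamma'$ as a side condition (so that $\intp{\vGamma}$ makes sense), the induction is on that well-formedness-equivalence derivation, and the three cases mirror the three constructors for $\vDash \vGamma \approx \vGamma'$.

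First I would handle the base case $\vGamma = \epsilon; \cdot$: here $\intp{\epsilon; \cdot} = \{(\vrho, \vrho')\}$ relates everything, so $\vrho[\kappa] \approx \vrho'[\kappa] \in \intp{\epsilon; \cdot}$ holds trivially. Next, the case $\vGamma = \vGamma_0; \cdot$ coming from $\vDash \vGamma_0 \approx \vGamma_0'$: unfolding $\intp{\vGamma_0; \cdot}$, from $\vrho \approx \vrho' \in \intp{\vGamma_0; \cdot}$ I get $\trunc\vrho 1 \approx \trunc{\vrho'}1 \in \intp{\vGamma_0}$ and the offsets agree. Applying \Cref{lem:envs-mon-1} (the $n=1$ case of the monotonicity lemma for $\Envs$, inherited from \Cref{sec:st:ut-mtrans}), $\trunc{\vrho[\kappa]}1 = \trunc\vrho 1[\trunc\kappa{\Ltotal\vrho 1}]$, and similarly for $\vrho'$; since the offsets of $\vrho$ and $\vrho'$ agree, $\Ltotal\vrho 1 = \Ltotal{\vrho'}1$, so the two truncations get transformed by the same \UnMoT. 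The IH on $\vDash \vGamma_0 \approx \vGamma_0'$ then gives $\trunc\vrho 1[\trunc\kappa{\Ltotal\vrho 1}] \approx \trunc{\vrho'}1[\trunc\kappa{\Ltotal\vrho 1}] \in \intp{\vGamma_0}$, and the top offsets of $\vrho[\kappa]$ and $\vrho'[\kappa]$ are both $\Ltotal\kappa{\Ltotal\vrho 1}$ by definition of $\vrho[\kappa](0)$, hence equal; this is exactly $\vrho[\kappa] \approx \vrho'[\kappa] \in \intp{\vGamma_0; \cdot}$.

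The last case, $\vGamma = \vGamma_0; (\Gamma_0, x : T)$, is the substantive one. From $\vrho \approx \vrho' \in \intp{\vGamma_0; (\Gamma_0, x : T)}$ I have $\drop(\vrho, x) \approx \drop(\vrho', x) \in \intp{\vGamma_0; \Gamma_0}$ and $\rho(x) \approx \rho'(x) \in \El(\intp{T}(\drop(\vrho, x)))$. Applying the IH (on the sub-derivation $\vDash \vGamma_0; \Gamma_0 \approx \vGamma_0'; \Gamma_0'$) to the first fact gives $\drop(\vrho, x)[\kappa] \approx \drop(\vrho', x)[\kappa] \in \intp{\vGamma_0; \Gamma_0}$. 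For the value component, I would use \Cref{lem:dt:u-el-resp-mon} (monotonicity of $\El$) to turn $\rho(x) \approx \rho'(x) \in \El(\intp{T}(\drop(\vrho, x)))$ into $\rho(x)[\kappa] \approx \rho'(x)[\kappa] \in \El(\intp{T}(\drop(\vrho, x))[\kappa])$. The main obstacle is bookkeeping: I must check that $\drop(\vrho[\kappa], x) = \drop(\vrho, x)[\kappa]$ (the top-level $\drop$ commutes with $[\kappa]$ since $\drop$ only touches the domain of the topmost local environment and $[\kappa]$ acts pointwise on it) and that $(\vrho[\kappa])(0) = (\Ltotal\kappa k, \rho[\kappa])$ so that the relevant projected local environment is $\rho[\kappa]$, giving $(\rho[\kappa])(x) = \rho(x)[\kappa]$; then the Naturality Equality $\intp{T}(\drop(\vrho, x)[\kappa]) = \intp{T}(\drop(\vrho, x))[\kappa]$ (from \Cref{conj:dt:intp-mon}) lines up the type argument of $\El$, so $\rho(x)[\kappa] \approx \rho'(x)[\kappa] \in \El(\intp{T}(\drop(\vrho[\kappa], x)))$. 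Assembling these two facts per the definition of $\intp{\vGamma_0; (\Gamma_0, x : T)}$ yields $\vrho[\kappa] \approx \vrho'[\kappa] \in \intp{\vGamma_0; (\Gamma_0, x : T)}$, completing the induction.
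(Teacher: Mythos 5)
Your proof is correct and takes essentially the same route as the paper, whose entire argument is ``immediate by induction on $\vDash \vGamma \approx \vGamma'$ and \Cref{lem:dt:u-el-resp-mon}.'' Your three cases, the use of the inherited $\Envs$-monotonicity facts for truncation and $\drop$, and the naturality equality aligning $\El(\intp{T}(\drop(\vrho,x))[\kappa])$ with $\El(\intp{T}(\drop(\vrho[\kappa],x)))$ just spell out the bookkeeping the paper leaves implicit.
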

\begin{proof}
  Immediate by induction on $\vDash \vGamma \approx \vGamma'$ and \Cref{lem:dt:u-el-resp-mon}. 
\end{proof}

\begin{lemma}\labeledit{lem:dt:intpvg-resp-eq}
  \begin{itemize}
  \item $\vDash \vGamma \approx \vGamma'$ is a PER.
  \item If $\vDash \vGamma \approx \vGamma'$ and
    $\vrho \approx \vrho' \in \intp{\vGamma}$, then
    $\vrho \approx \vrho' \in \intp{\vGamma'}$.
  \end{itemize}
\end{lemma}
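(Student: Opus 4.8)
The plan is to prove both parts by induction on the derivation of $\vDash \vGamma \approx \vGamma'$ (and, where needed, simultaneous induction on a second such derivation), exploiting the fact that $\intp{\vGamma}$, $\intp{\vGamma; \cdot}$ and $\intp{\vGamma; (\Gamma, x : T)}$ are each built by structural recursion on $\vDash \vGamma \approx \vGamma'$, and that the underlying relations $\Uc$ and $\El(A)$ are already known to be PERs (\Cref{lem:u-el-per} together with the passage to the limit). For the first bullet, symmetry and transitivity of $\vDash$ follow the same shape as the proof that $\vdash \vGamma \approx \vGamma'$ is a PER (the analogous syntactic statement in the excerpt): at the base case $\epsilon;\cdot$ everything is immediate; at $\vGamma; \cdot$ we simply apply the induction hypothesis; at $\vGamma;(\Gamma, x:T)$ we must, in addition, massage the side condition $\forall \vrho \approx \vrho' \in \intp{\vGamma;\Gamma}.\ \intp{T}(\vrho) \approx \intp{T'}(\vrho') \in \Uc$. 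Here I would use the second bullet of the lemma together with symmetry/transitivity of $\Uc$ to swap or chain the $T$'s, exactly as context-stack conversion was used in the syntactic PER argument; this is why the two bullets are proved together.

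For the second bullet, I would again induct on $\vDash \vGamma \approx \vGamma'$. The cases $\epsilon;\cdot$ and $\vGamma;\cdot$ are immediate from the definitions of $\intp{\epsilon;\cdot}$ and $\intp{\vGamma;\cdot}$ and the induction hypothesis on the truncation $\trunc\vrho 1 \approx \trunc{\vrho'}1$. The interesting case is $\vGamma;(\Gamma, x:T) \approx \vGamma';(\Gamma', x:T')$: given $\vrho \approx \vrho' \in \intp{\vGamma;(\Gamma, x:T)}$, we have $\drop(\vrho, x) \approx \drop(\vrho', x) \in \intp{\vGamma;\Gamma}$ and $\rho(x) \approx \rho'(x) \in \El(\intp{T}(\drop(\vrho, x)))$. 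By the induction hypothesis (on the premise $\vDash \vGamma;\Gamma \approx \vGamma';\Gamma'$) we get $\drop(\vrho, x) \approx \drop(\vrho', x) \in \intp{\vGamma';\Gamma'}$, which is the first conjunct required for membership in $\intp{\vGamma';(\Gamma', x:T')}$. For the second conjunct I must show $\rho(x) \approx \rho'(x) \in \El(\intp{T'}(\drop(\vrho, x)))$. I would obtain this by first invoking the side condition attached to the derivation of $\vDash \vGamma;(\Gamma, x:T) \approx \vGamma';(\Gamma', x:T')$, namely $\intp{T}(\drop(\vrho,x)) \approx \intp{T'}(\drop(\vrho,x)) \in \Uc$ (instantiating both environments with $\drop(\vrho, x)$, which is legitimate since $\drop(\vrho,x) \approx \drop(\vrho,x) \in \intp{\vGamma;\Gamma}$ by reflexivity of the already-established PER $\intp{\vGamma;\Gamma}$), and then applying \Cref{lem:dt:el-resp-u} to transport the membership $\rho(x)\approx\rho'(x) \in \El(\intp{T}(\drop(\vrho,x)))$ along this equality to $\El(\intp{T'}(\drop(\vrho,x)))$.

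The main obstacle I anticipate is the bookkeeping at the $x:T$ case: one must be careful that the environment supplied to the semantic-type side condition is exactly $\drop(\vrho,x)$ in both slots, that this environment is $\intp{\vGamma;\Gamma}$-reflexive, and that \Cref{lem:dt:el-resp-u} is applicable (which needs $\intp{T}(\drop(\vrho,x)) \in \Uc$, supplied by the side condition). A secondary subtlety is the interleaving of the two bullets in the symmetry/transitivity argument for $\vDash$ — when chaining $\vGamma;(\Gamma,x:T) \approx \vGamma';(\Gamma',x:T')$ with $\vGamma';(\Gamma',x:T') \approx \vGamma'';(\Gamma'',x:T'')$ we need the second bullet to move environments between $\intp{\vGamma;\Gamma}$ and $\intp{\vGamma';\Gamma'}$ before we can even state the composite side condition — so the induction has to be organized as a single simultaneous induction delivering both bullets at each context-stack length, rather than proving one bullet fully before the other. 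Everything else is a routine unfolding of definitions.
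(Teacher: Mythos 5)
Your proposal matches the paper's own proof, which is exactly an induction on the derivation of $\vDash \vGamma \approx \vGamma'$ combined with \Cref{lem:dt:el-resp-u}; your detailed handling of the $\vGamma;(\Gamma, x : T)$ case is just a spelled-out version of that one-line argument. The only step the paper does not make explicit is your appeal to reflexivity of $\intp{\vGamma; \Gamma}$ when instantiating the side condition at $(\drop(\vrho,x),\drop(\vrho,x))$, but this is routine given that $\Uc$ and $\El$ are PERs and does not change the approach.
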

\begin{proof}
  Immediate by induction on $\vDash \vGamma \approx \vGamma'$ and \Cref{lem:dt:el-resp-u}.
\end{proof}
\begin{lemma}
  If $\vrho \approx \vrho' \in \intp{\vGamma}$, then $\Ltotal\vrho n = \Ltotal{\vrho'} n$.
\end{lemma}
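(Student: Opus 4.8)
The plan is to induct on $n$, exactly as for the simply typed counterpart in \Cref{sec:st:per}. The statement implicitly carries an ambient derivation $\vDash \vGamma \approx \vGamma'$, since $\intp{\vGamma}$ is defined by recursion on such a derivation; I will case split on that derivation only when $n \geq 1$. Also recall that for $\Ltotal{\vrho}{n}$ and $\Ltotal{\vrho'}{n}$ to be meaningful we are tacitly assuming $n < |\vGamma|$, which will rule out the degenerate singleton case in the inductive step.

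For $n = 0$ both sides are $0$ by definition of truncation offset, so nothing is needed and the base clause $\intp{\epsilon; \cdot}$ (which imposes no constraint on the two environments) is harmless, as only $n = 0$ is meaningful over a singleton stack. For $n = 1 + n'$, the constraint $n < |\vGamma|$ forces $|\vGamma| \geq 2$, so $\vGamma = \vGamma_0; \Gamma$ and the derivation $\vDash \vGamma \approx \vGamma'$ ends with either the $(-);\cdot$ rule or the context-extension rule. Writing $(m, \rho) := \vrho(0)$ and $(m', \rho') := \vrho'(0)$, we have $\Ltotal{\vrho}{1 + n'} = m + \Ltotal{\trunc{\vrho}{1}}{n'}$ and symmetrically for $\vrho'$, so it suffices to establish $m = m'$ together with $\trunc{\vrho}{1} \approx \trunc{\vrho'}{1} \in \intp{\vGamma_0}$, after which the induction hypothesis applied at $n'$ (legitimate since $n' < |\vGamma_0|$) closes the case.

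Both facts are read off from the definition of $\intp{\vGamma}$: in the $(-);\cdot$ clause they are literally its two conjuncts, and in the extension clause $\vGamma = \vGamma_0; (\Gamma'', x : T)$ we are handed $\drop(\vrho, x) \approx \drop(\vrho', x) \in \intp{\vGamma_0; \Gamma''}$, where $\drop$ leaves the offset component of position $0$ untouched and satisfies $\trunc{\drop(\vrho, x)}{1} = \trunc{\vrho}{1}$, so a short auxiliary induction peeling the local context $\Gamma''$ off by repeated $\drop$'s reduces this to the $(-);\cdot$ clause and yields exactly $m = m'$ and $\trunc{\vrho}{1} \approx \trunc{\vrho'}{1} \in \intp{\vGamma_0}$. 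The only point demanding a moment's attention is thus reconciling the recursive shape of $\intp{\vGamma}$ (over $\vDash \vGamma \approx \vGamma'$) with the behaviour of $\drop$ in the extension case; beyond that the argument is the same routine computation as in the simply typed development, so I do not anticipate a genuine obstacle.
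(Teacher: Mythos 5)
Your proposal is correct: the paper states this lemma without an explicit proof (treating it as routine), and your argument --- induction on $n$, with the successor case extracting $m = m'$ and $\trunc{\vrho}{1} \approx \trunc{\vrho'}{1} \in \intp{\vGamma_0}$ by peeling the local context off with $\drop$ (which leaves the offset at position $0$ and $\trunc{\vrho}{1}$ unchanged) until the $\vGamma;\cdot$ clause of $\intp{\vGamma}$ applies --- is exactly the computation the paper leaves implicit. No gaps.
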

\begin{lemma}
  If $\vrho \approx \vrho' \in \intp{\vGamma}$, then $\trunc\vrho n \approx
  \trunc{\vrho'} n \in \intp{\trunc \vGamma n}$.
\end{lemma}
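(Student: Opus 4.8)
The plan is to prove this by induction on $n$, following the same pattern as the immediately preceding lemma on $\Ltotal{\vrho}{n}$ and its syntactic and gluing analogues \Cref{lem:glue-trunc,lem:glue-trunc'}. Implicit in the statement (as for all the truncation lemmas) is that $n < |\vGamma|$, which is exactly what makes $\trunc\vGamma n$ and $\trunc\vrho n$ well-formed; recall also that $\intp\vGamma$ is only defined relative to a derivation $\vDash \vGamma \approx \vGamma''$, so the induction is really a routine lexicographic one on that derivation together with $n$. For $n = 0$ there is nothing to do, since $\trunc\vrho 0 = \vrho$, $\trunc{\vrho'}0 = \vrho'$ and $\trunc\vGamma 0 = \vGamma$. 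For $n = 1 + n'$ I would write $\vGamma = \vGamma_0; \Gamma$ (legitimate because $|\vGamma| > n \ge 1$), so that $\trunc\vGamma n = \trunc{\vGamma_0}{n'}$ and, on environments, $\trunc\vrho n = \trunc{(\trunc\vrho 1)}{n'}$ with the analogous identity for $\vrho'$ — all three being immediate from the definitions of truncation on $\vect{\Ctx}$ and on $\Envs$. So it suffices to establish $\trunc\vrho 1 \approx \trunc{\vrho'} 1 \in \intp{\vGamma_0}$ and then invoke the induction hypothesis at $n'$.

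The remaining sub-claim — from $\vrho \approx \vrho' \in \intp{\vGamma_0; \Gamma}$ conclude $\trunc\vrho 1 \approx \trunc{\vrho'} 1 \in \intp{\vGamma_0}$ — I would prove by a side induction on the topmost local context $\Gamma$ (equivalently, on the well-formedness derivation of $\vGamma_0; \Gamma$). When $\Gamma = \cdot$ this is literally the definition of $\intp{\vGamma_0; \cdot}$. When $\Gamma = \Gamma', x : T$, the definition of $\intp{\vGamma_0; (\Gamma', x : T)}$ gives $\drop(\vrho, x) \approx \drop(\vrho', x) \in \intp{\vGamma_0; \Gamma'}$; the side induction hypothesis then yields $\trunc{(\drop(\vrho, x))}1 \approx \trunc{(\drop(\vrho', x))}1 \in \intp{\vGamma_0}$, and since $\drop$ only rewrites position $0$ of an $\Envs$ while $\trunc{\_}1$ discards position $0$, we have $\trunc{(\drop(\vrho, x))}1 = \trunc\vrho 1$ and symmetrically for $\vrho'$, which closes the case.

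Putting the pieces together: the sub-claim supplies $\trunc\vrho 1 \approx \trunc{\vrho'}1 \in \intp{\vGamma_0}$, the outer induction hypothesis at $n'$ (legitimate, since $n' < |\vGamma_0|$ and the derivation governing $\intp{\vGamma_0}$ is a strict subderivation of the one governing $\intp{\vGamma_0; \Gamma}$) gives $\trunc{(\trunc\vrho 1)}{n'} \approx \trunc{(\trunc{\vrho'}1)}{n'} \in \intp{\trunc{\vGamma_0}{n'}}$, and the bookkeeping identities noted above rewrite this to exactly $\trunc\vrho n \approx \trunc{\vrho'} n \in \intp{\trunc\vGamma n}$. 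I do not expect a genuine obstacle here; the only points that need a little care are making the nested induction manifestly well-founded and checking that $\drop$ commutes with $\trunc{\_}1$, both of which are routine from the definitions of \Cref{sec:st:domain}.
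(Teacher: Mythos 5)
Your proof is correct; the paper states this lemma (like its simply typed counterpart) without any proof, treating it as immediate by unfolding the definition of $\intp{\vGamma}$, and your induction on $n$ together with the side observation that $\drop$ only touches position $0$ of an $\Envs$ — so $\trunc{(\drop(\vrho, x))}{1} = \trunc{\vrho}{1}$ — is exactly the routine argument intended. Nothing further is needed.
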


We then can define the semantic judgments:
\begin{definition}
  We define semantic typing and semantic typing equivalence as follows:
  \begin{align*}
    \vDash \vGamma &:=\ \vDash \vGamma \approx \vGamma \\
    \msemtyeq{t}{t'}{T} &:= \vDash \vGamma \tand 
                          \forall \vrho \approx \vrho' \in
                          \intp{\vGamma}. \intp{T}(\vrho) \approx \intp{T}(\vrho') \in \Uc \tand \intp{t}(\vrho) \approx
                          \intp{t'}(\vrho') \in \El(\intp{T}(\vrho)) \\
    \msemtyeq{\vsigma}{\vsigma'}{\vDelta} &:=\ \vDash \vGamma \tand \vDash \vDelta \tand
                                            \forall \vrho \approx \vrho' \in \intp{\vGamma}. \intp{\vsigma}(\vrho) \approx
                                            \intp{\vsigma'}(\vrho') \in \intp{\vDelta} \\
    \msemtyp{t}{T} &:= \msemtyeq{t}{t}{T} \\
    \msemtyp{\vsigma}{\vDelta} &:= \msemtyeq{\vsigma}{\vsigma}{\vDelta}
  \end{align*}
\end{definition}
Similar to the simply typed case, naturality equalities are also present in the
semantic judgments. 

\subsection{Realizability}

Similar to the simply typed case, we need realizability to convert the PER model to
the $\top$ PER so that it serves as the final step of the completeness proof. 

\begin{lemma}\labeledit{lem:dt:varbot}
  $z \approx z \in \bot$
\end{lemma}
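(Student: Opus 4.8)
The plan is to simply unfold the definition of $\bot$ and observe that the goal is immediate. By the introduction rule for $\bot$, establishing $z \approx z \in \bot$ amounts to showing that for every name assignment $\alpha$ and every untyped modal transformation $\kappa$ we have $\Rne_{\alpha}(z[\kappa]) = \Rne_{\alpha}(z[\kappa])$. First I would rewrite $z[\kappa]$ using the defining clause $z[\kappa] := z$ for the action of \UnMoTs on domain neutral terms; this reduces the obligation to $\Rne_{\alpha}(z) = \Rne_{\alpha}(z)$, which holds by reflexivity of equality. Hence no further computation is required.

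This is the exact dependently-typed counterpart of \Cref{lem:varbot} in the simply typed development, and its proof is equally short. As there, the universal quantification over $\alpha$ built into the definition of $\bot$ causes no difficulty: both sides of the equation are evaluated under the same $\alpha$ and the same $\kappa$, so whatever correspondence $\alpha$ is expected to maintain is trivially preserved, and we do not even need $\Rne_{\alpha}(z)$ to be defined for the reflexive equation to hold.

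There is no real obstacle here; the lemma is recorded only because $\uparrow^{A}(z)$ (for instance the fresh variables introduced in the $\Pi$ and $\squared$ cases of the readback functions) must be shown to lie in $\bot$, which in turn feeds into the realizability theorem for the dependent PER model, exactly as \Cref{lem:varbot} feeds into \Cref{thm:real} in the simply typed case.
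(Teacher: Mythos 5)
Your proof is correct and matches the paper's own argument, which simply notes the lemma is immediate by definition: unfolding $\bot$, using $z[\kappa] = z$, and concluding by reflexivity of the equation $\Rne_{\alpha}(z) = \Rne_{\alpha}(z)$ is exactly the intended reasoning. No gaps.
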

\begin{proof}
  Immediate by definition. 
\end{proof}

\begin{theorem}[realizability]
  Given $A \approx A' \in \Uc_i$,
  \begin{itemize}
  \item If $c \approx c' \in \bot$, then $\uparrow^A(c) \approx \uparrow^{A'}(c') \in
    \El_i(A)$;
  \item If $a \approx a' \in \El_i(A)$, then $\downarrow^A(a) \approx \downarrow^{A'}(a')
    \in \top$;
  \item $A \approx A' \in \top$.
  \end{itemize}
\end{theorem}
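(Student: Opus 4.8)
The plan is to prove the three statements simultaneously by induction on the derivation of $A \approx A' \in \Uc_i$, since $\El_i$ is defined by recursion on exactly this derivation and the three claims are mutually entangled (realizability at $\Pi$ and $\square$ types needs realizability of the components, but also needs the $\top$-membership claim to feed reified arguments back into readback). For each constructor of $\Uc_i$ I would unfold $\El_i$ according to its defining clause and verify the three bullets. The neutral case $A = \uparrow^{A_0}(C)$ is immediate: $\El_i(\uparrow^{A_0}(C)) = Neu$, the first bullet is definitional, the second follows because $Neu$-related values are already $\uparrow^\bullet(\bullet)$ of $\bot$-related neutrals so their $\Rnf$ collapses to $\Rne$, and the third ($A \approx A' \in \top$) follows since $\Rty$ of a neutral type is $\Rne$ of the underlying neutral. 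The $\Nd$ case uses the analogous $\top$/$\bot$ bookkeeping for $Nat$, reading back $\zed$, $\su$, and neutrals compatibly; the $\Ud_j$ case with $j < i$ is trivial since $\El_i(\Ud_j) = \Uc_j$ and the IH applies one level down.

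The interesting cases are $\square$ and $\Pi$. For $A = \squared A_0$, $A' = \squared A_0'$ with $\forall \kappa.\ A_0[\kappa] \approx A_0'[\kappa] \in \Uc_i$: to show $\uparrow^{\squared A_0}(c) \approx \uparrow^{\squared A_0'}(c') \in \El_i(\squared A_0)$ I must produce, for arbitrary $\kappa$ and $n$, that $\tunbox\cdot(n, \uparrow^{\squared A_0}(c)[\kappa]) \approx \tunbox\cdot(n, \uparrow^{\squared A_0'}(c')[\kappa]) \in \El_i(A_0[\sextt\kappa n])$; unfolding the partial $\tunbox$ on a reflected value gives $\uparrow^{A_0[\sextt\kappa n]}(\tunbox(n, c[\kappa]))$ and similarly for $c'$, and $\tunbox(n, c[\kappa]) \approx \tunbox(n, c'[\kappa]) \in \bot$ follows from $c \approx c' \in \bot$ by the congruence rule for $\Rne$ on $\tunbox$ (using closure of $\bot$ under $\kappa$ and the interaction of $\tunbox$ with modal transformations, \Cref{lem:dt:unbox.-mon}). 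Then IH at $A_0[\sextt\kappa n]$ finishes the first bullet. For the second bullet, given $a \approx a' \in \El_i(\squared A_0)$, the readback $\Rnf_\alpha(\downarrow^{\squared A_0}(a[\kappa])) = \boxit \Rnf_\alpha(\downarrow^{A_0[\sextt\kappa 1]}(\tunbox\cdot(1, a[\kappa])))$, and by instantiating the hypothesis at $(\kappa, 1)$ together with the IH's $\top$-membership we get the two $\Rnf$'s equal, hence the $\boxit$'s equal; the third bullet uses $\Rty_\alpha(\squared A_0) = \square \Rty_\alpha(A_0)$ and the IH at every $\kappa$.

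The $\Pi$ case $A = Pi(A_0, x.T_0, \vrho_0)$ is the main obstacle and needs care with the dependency. For the first bullet, given $c \approx c' \in \bot$, I must show $\uparrow^{Pi(\cdots)}(c) \cdot$-applied-after-$\kappa$ agrees; unfolding $\El_i(Pi(\cdots))$ I take $\kappa$ and $a_0 \approx a_0' \in \El_i(A_0[\kappa])$, compute $(\uparrow^{Pi(A_0,x.T_0,\vrho_0)}(c))[\kappa] \cdot a_0 = \uparrow^{\intp{T_0}(\ext(\vrho_0[\kappa], x, a_0))}(c[\kappa]\ \downarrow^{A_0[\kappa]}(a_0))$, and I need this related by $\El_i$ at $\intp{T_0}(\ext(\vrho_0[\kappa], x, a_0))$. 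Here I invoke: (i) the second bullet at $A_0[\kappa]$ (which is in $\Uc_i$ by IH, using \Cref{lem:dt:u-el-resp-mon}) to get $\downarrow^{A_0[\kappa]}(a_0) \approx \downarrow^{A_0'[\kappa]}(a_0') \in \top$, hence the applied neutrals are $\bot$-related by $\Rne$-congruence on application; (ii) the defining hypothesis of the $\Pi$ constructor to get $\intp{T_0}(\ext(\vrho_0[\kappa], x, a_0)) \in \Uc_i$; (iii) the IH's first bullet at that motive type. For the second bullet, reifying a function calls $\Rnf$ on $\lambda x.\ \Rnf_{\alpha[z\mapsto x]}(\downarrow^{\intp{T_0}(\ext(\vrho_0,x,\uparrow^{A_0}(z)))}(a \cdot \uparrow^{A_0}(z)))$; I feed $\uparrow^{A_0}(z) \approx \uparrow^{A_0}(z) \in \El_i(A_0)$ (from \Cref{lem:dt:varbot} plus the first bullet) into $a \approx a' \in \El_i(Pi(\cdots))$, obtaining $\top$-membership of the bodies by IH, then apply $\lambda$-congruence; a subtlety is that $\El_i$ is sensitive to which $A$-witness we name, so I rely on \Cref{lem:dt:el-resp-u} to freely swap between $\El_i(A)$ and $\El_i(A')$ where the readback annotations demand it. The third bullet is $\Rty_\alpha(Pi(A_0,x.T_0,\vrho_0)) = \Pi(x : \Rty_\alpha(A_0)).\Rty_{\alpha[z\mapsto x]}(\intp{T_0}(\ext(\vrho_0, x, \uparrow^{A_0}(z))))$, discharged by the IH's $\top$-claim on $A_0$ and on the motive type (the latter again after feeding a reflected variable), uniform in $\alpha$ and $\kappa$. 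Throughout, the universal quantification over $\alpha$ and $\kappa$ in $\top$/$\bot$ is what makes all the congruence steps go through, exactly as in the simply typed realizability proof (\Cref{thm:real}).
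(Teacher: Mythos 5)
Your proposal is correct and follows essentially the same route as the paper: a simultaneous induction on the derivation of $A \approx A' \in \Uc_i$, with the neutral, $\Nd$ (inner induction on $Nat$), and $\Ud_j$ cases handled directly, and the $\squared$ and $Pi$ cases discharged exactly as in the paper — unfolding $\tunbox\cdot$ on reflected values, using $\bot$/$\top$ closure under $\kappa$ and $\Rne$/$\Rnf$ congruence, feeding a reflected variable via \Cref{lem:dt:varbot} plus the first bullet, and invoking the IH at $A''[\sextt\kappa n]$ and at the motive type $\intp{T''}(\ext(\vrho''[\kappa], x, b))$, with \Cref{lem:dt:el-resp-u} smoothing the choice of type witness. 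No gaps worth noting.
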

\begin{proof}
  We do induction on $A \approx A' \in \Uc_i$,
  \begin{itemize}[label=Case]
  \item $A = \uparrow^{A''}(C)$, $A' = \uparrow^{A'''}(C')$, 
    \begin{mathpar}
      \inferrule
      {C \approx C' \in \bot}
      {\uparrow^{A''}(C) \approx \uparrow^{A'''}(C') \in \Uc_i}
    \end{mathpar}
    Immediate because the first statement is just identity, and the second and the
    third statements
    follow immediately by expanding definition.
    
  \item $A = A' = \Nd$, 
    \begin{mathpar}
      \inferrule
      { }
      {\Nd \approx \Nd \in \Uc_i}
    \end{mathpar}
    Only need to check the second statement. We further do induction on $a \approx
    a' \in Nat$ and the conclusion follows by expanding the definitions.
    
  \item $A = A' = \Ud_j$, 
    \begin{mathpar}
      \inferrule
      {j < i}
      {\Ud_j \approx \Ud_j \in \Uc_i}
    \end{mathpar}
    All statements are immediate(apply IH when necessary).
    
  \item $A = \squared A''$, $A' = \squared A'''$,
    \begin{mathpar}
      \inferrule
      {\forall \kappa. A''[\kappa] \approx A'''[\kappa] \in \Uc_i}
      {\squared A'' \approx \squared A''' \in \Uc_i}
    \end{mathpar}
    \begin{itemize}[label=Subcase]
    \item
      \begin{align*}
        H_1: &\ c \approx c' \in \bot
               \tag{by assumption} \\
             &\ \text{assume }\kappa, n, \alpha, \kappa' \\
             &\ \Rne_\alpha (c[\kappa \circ (\trunc{\kappa'} n)]) = \Rne_\alpha (c'[\kappa \circ (\trunc{\kappa'} n)])
               \tag{by $H_1$} \\
             &\ \Rne_\alpha (\tunbox(\Ltotal{\kappa'}n, c[\kappa \circ (\trunc{\kappa'} n)])) =
               \Rne_\alpha (\tunbox(\Ltotal{\kappa'}n, c'[\kappa \circ (\trunc{\kappa'} n)]))
               \tag{by congruence} \\
             &\ \Rne_\alpha (\tunbox(n, c[\kappa])[\kappa']) = \Rne_\alpha (\tunbox(n, c'[\kappa])[\kappa'])
               \tag{by definition} \\
             &\ \tunbox(n, c[\kappa]) \approx \tunbox(n, c'[\kappa]) \in \bot
               \tag{by abstraction} \\
             &\ \uparrow^{A''[\sextt\kappa n]}(\tunbox(n, c[\kappa]))
               \approx \uparrow^{A'''[\sextt\kappa k]}(\tunbox(n, c'[\kappa]))
               \in \El_i(A''[\sextt\kappa n])
               \byIH \\
             &\ \tunbox \cdot (n, \uparrow^{\squared A''}(c)[\kappa])
               \approx \tunbox \cdot (n, \uparrow^{\squared A'''}(c')[\kappa])
               \in \El_i(A''[\sextt\kappa n])
               \tag{by definition} \\
             &\ \uparrow^{\squared A''}(c) \approx \uparrow^{\squared A'''}(c') \in
               \El_i(\squared A'')
               \tag{by abstraction} 
      \end{align*}
      
    \item
      \begin{align*}
        H_2: &\ a \approx a' \in \El_i(\squared A'')
               \tag{by assumption} \\
             &\ \text{assume }\alpha, \kappa \\
             &\ \tunbox \cdot (1, a[\kappa]) \approx \tunbox \cdot (1, a'[\kappa]) \in 
               \El_i(A''[\sextt\kappa 1])
               \tag{by $H_2$} \\
        H_3: &\ \downarrow^{A''[\sextt\kappa 1]}(\tunbox \cdot (1, a[\kappa]))
               \approx \downarrow^{A'''[\sextt\kappa 1]}(\tunbox \cdot (1, a'[\kappa]))
               \in \top
               \byIH \\
             &\ \Rnf_\alpha(\downarrow^{A''[\sextt\kappa 1]}(\tunbox \cdot (1, a[\kappa])))
               = \Rnf_\alpha(\downarrow^{A'''[\sextt\kappa 1]}(\tunbox \cdot (1, a'[\kappa])))
               \tag{by $H_3$} \\
             &\ \Rnf_\alpha(\downarrow^{\squared A''[\kappa]}(a[\kappa]))
               = \Rnf_\alpha(\downarrow^{\squared A'''[\kappa]}(a'[\kappa]))
               \tag{by congruence and definition} \\
             &\ \Rnf_\alpha(\downarrow^{\squared A''}(a)[\kappa])
               = \Rnf_\alpha(\downarrow^{\squared A'''}(a')[\kappa])
               \tag{by definition} \\
             &\ \downarrow^{\squared A''}(a) \approx \downarrow^{\squared A''}(a')
               \in \top
               \tag{by abstraction}
      \end{align*}
      
    \item
      \begin{align*}
        &\ \text{assume }\alpha, \kappa \\
        &\ A''[\sextt\kappa 1] \approx A''[\sextt\kappa 1] \in \Uc_i \\
        H_4: &\ A''[\sextt\kappa 1] \approx A'''[\sextt\kappa 1] \in \top
               \byIH \\
        &\ \Rty_\alpha (A''[\sextt\kappa 1]) = \Rty_\alpha (A'''[\sextt\kappa 1])
          \tag{by $H_4$} \\
        &\ \Rty_\alpha (\squared A''[\kappa]) = \Rty_\alpha (\squared A'''[\kappa])
          \tag{by congruence} \\
        &\ \squared A'' \approx \squared A''' \in \top
      \end{align*}
    \end{itemize}
    
  \item $A = Pi(A'', x.T'', \vrho'')$, $A' = Pi(A''', x.T''', \vrho''')$, 
    \begin{mathpar}
      \inferrule
      {\forall \kappa. A''[\kappa] \approx A'''[\kappa] \in \Uc_i \\
        \forall \kappa, a' \approx b' \in \El_i(A''[\kappa]).
        \intp{T''}(\ext(\vrho''[\kappa], x, a')) \approx \intp{T'''}(\ext(\vrho'''[\kappa], x, b')) \in \Uc_i}
      {Pi(A'', x.T'', \vrho'') \approx Pi(A''', x.T''', \vrho''') \in \Uc_i}
    \end{mathpar}
    \begin{itemize}[label=Subcase]
    \item
      \begin{align*}
        H_5:&\ c \approx c' \in \bot
              \tag{by assumption} \\
            &\ \text{assume } \kappa, b \approx b' \in \El_i(A''[\kappa]), \alpha, \kappa'
        \\
            &\ \Rne_\alpha (c[\kappa \circ \kappa']) = \Rne_\alpha (c'[\kappa \circ
              \kappa'])
              \tag{by $H_5$} \\
        H_6: &\ \downarrow^{A''[\kappa]}(b) \approx \downarrow^{A'''[\kappa]}(b') \in \top
              \byIH \\
            &\ \Rnf_\alpha(\downarrow^{A''[\kappa]}(b)[\kappa']) =
              \Rnf_\alpha(\downarrow^{A'''[\kappa]}(b')[\kappa'])
              \tag{by $H_6$} \\
            &\ \Rne_\alpha (c[\kappa][\kappa'])\ \Rnf_\alpha(\downarrow^{A''[\kappa]}(b)[\kappa'])
              = \Rne_\alpha (c'[\kappa][\kappa'])\
              \Rnf_\alpha(\downarrow^{A'''[\kappa]}(b')[\kappa'])
              \tag{by congruence} \\
            &\ \Rne_\alpha ((c[\kappa]\ \downarrow^{A''[\kappa]}(b))[\kappa'])
              = \Rne_\alpha ((c'[\kappa]\ \downarrow^{A'''[\kappa]}(b'))[\kappa'])
              \tag{by definition} \\
            &\ c[\kappa]\ \downarrow^{A''[\kappa]}(b) \approx
              c'[\kappa]\ \downarrow^{A'''[\kappa]}(b')
              \in \bot
              \tag{by abstraction} \\
            &\ \uparrow^{Pi(A'', x.T'', \vrho'')}(c)[\kappa] \cdot b \approx
              \uparrow^{Pi(A''', x.T''', \vrho''')}(c')[\kappa] \cdot b'
              \in \bot
              \tag{by definition} \\
            &\ \uparrow^{Pi(A'', x.T'', \vrho'')}(c)[\kappa] \cdot b \approx
              \uparrow^{Pi(A''', x.T''', \vrho''')}(c')[\kappa] \cdot b'
              \in \El_i(\intp{T''}(\ext(\vrho[\kappa], x, b)))
              \byIH \\
            &\ \uparrow^{Pi(A'', x.T'', \vrho'')}(c)\approx
              \uparrow^{Pi(A''', x.T''', \vrho''')}(c')
              \in \El_i(Pi(A'', x.T'', \vrho''))
      \end{align*}
      
    \item
      \begin{align*}
        H_7: &\ a \approx a' \in \El_i(Pi(A'', x.T'', \vrho''))
               \tag{by assumption} \\
             &\ \text{assume }\alpha, \kappa \\
             &\ z \approx z \in \bot
               \tag{for some $z$, by \Cref{lem:dt:varbot}} \\
             &\ \uparrow^{A''[\kappa]}(z) \approx \uparrow^{A'''[\kappa]}(z) \in
               \El_i(A''[\kappa])
               \byIH \\
             &\ a[\kappa] \cdot \uparrow^{A''[\kappa]}(z)
               \approx a'[\kappa] \cdot \uparrow^{A'''[\kappa]}(z)
               \in \El_i(\intp{T''}(B))
               \tag{by $H_7$, where $B := \intp{T''}(\ext(\vrho''[\kappa], x, \uparrow^{A''[\kappa]}(z)))$} \\
        H_8: &\ \downarrow^{B}(a[\kappa] \cdot \uparrow^{A''[\kappa]}(z))
               \approx \downarrow^{B'}(a'[\kappa] \cdot \uparrow^{A'''[\kappa]}(z))
               \in \top
               \tag{by IH, where $B' := \intp{T'''}(\ext(\vrho'''[\kappa], x, \uparrow^{A'''[\kappa]}(z)))$} \\
             &\ \Rnf_{\alpha[z \mapsto x]}(\downarrow^{B}(a[\kappa] \cdot \uparrow^{A''[\kappa]}(z)))
               =
               \Rnf_{\alpha[z \mapsto x]}(\downarrow^{B'}(a'[\kappa]
               \cdot \uparrow^{A'''[\kappa]}(z)))
               \tag{by $H_8$} \\
             &\ \Rnf_\alpha(\downarrow^{Pi(A''[\kappa],x.T'', \vrho''[\kappa])}(a[\kappa]))
               =
               \Rnf_\alpha(\downarrow^{Pi(A'''[\kappa],x.T''', \vrho'''[\kappa])}(a'[\kappa]))
               \tag{by congruence} \\
             &\ \Rnf_\alpha(\downarrow^{Pi(A'', x.T'', \vrho'')}(a)[\kappa])
               = \Rnf_\alpha(\downarrow^{Pi(A''', x.T''', \vrho''')}(a')[\kappa])
               \tag{by definition} \\
             &\ \downarrow^{Pi(A'', x.T'', \vrho'')}(a) \approx \downarrow^{Pi(A''', x.T''', \vrho''')}(a')
               \in \top
               \tag{by abstraction}
      \end{align*}
      
    \item
      \begin{align*}
        &\ \text{assume }\alpha, \kappa \\
        &\ z \approx z \in \bot
          \tag{for some $z$, by \Cref{lem:dt:varbot}} \\
        &\ \uparrow^{A''[\kappa]}(z) \approx \uparrow^{A'''[\kappa]}(z) \in
          \El_i(A''[\kappa])
          \byIH \\
        &\ B \approx B' \in \Uc_i
        \tag{where $B := \intp{T''}(\ext(\vrho''[\kappa], x,
          \uparrow^{A''[\kappa]}(z)))$ and $B' := \intp{T'''}(\ext(\vrho'''[\kappa], x, \uparrow^{A'''[\kappa]}(z)))$} \\
        &\ \Rty_{{\alpha[z \mapsto x]}} (B) = \Rty_{\alpha[z \mapsto x]}(B')
          \byIH \\
        &\ \Rty_\alpha(\Pi(A'', x.T'', \vrho'')[\kappa]) = \Rty_\alpha(\Pi(A''', x.T''', \vrho''')[\kappa])
          \tag{by congruence} \\
        &\ \Pi(A'', x.T'', \vrho'') \approx \Pi(A''', x.T''', \vrho''') \in \top
          \tag{by abstraction}
      \end{align*}
    \end{itemize}
  \end{itemize}
\end{proof}

\subsection{Completeness}

\subsubsection{Well-formed Context Stacks}

\begin{lemma}
  \begin{mathpar}
    \inferrule
    { }
    {\vDash \epsilon; \cdot \approx \epsilon; \cdot}

    \inferrule
    {\vDash \vGamma \approx \vDelta}
    {\vDash \vGamma; \cdot \approx \vDelta; \cdot}

    \inferrule
    {\vDash \vGamma; \Gamma \approx \vDelta; \Delta  \\ \msemtyeq[\vGamma;
      \Gamma]{T}{T'}{\Se_i}}
    {\vDash \vGamma; (\Gamma, x : T) \approx \vDelta; (\Delta, x : T')}
  \end{mathpar}
\end{lemma}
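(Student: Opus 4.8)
The plan is to verify the three semantic well-formedness rules one at a time, reading off each from the corresponding clause of the inductive--recursive definition of $\vDash \vGamma \approx \vGamma'$ stated just above. The first rule, $\vDash \epsilon; \cdot \approx \epsilon; \cdot$, is exactly the base clause of that definition, so there is nothing to prove. The second rule follows immediately: from the hypothesis $\vDash \vGamma \approx \vDelta$ the conclusion $\vDash \vGamma; \cdot \approx \vDelta; \cdot$ is precisely the second clause applied to that hypothesis.

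The third rule carries the only real content. We are given $\vDash \vGamma; \Gamma \approx \vDelta; \Delta$ and $\msemtyeq[\vGamma; \Gamma]{T}{T'}{\Se_i}$, and must produce $\vDash \vGamma; (\Gamma, x : T) \approx \vDelta; (\Delta, x : T')$. By the third clause of the definition it suffices to supply (i) $\vDash \vGamma; \Gamma \approx \vDelta; \Delta$, which is the first hypothesis verbatim, and (ii) that $\intp{T}(\vrho) \approx \intp{T'}(\vrho') \in \Uc$ for every $\vrho \approx \vrho' \in \intp{\vGamma; \Gamma}$. For (ii), first note that $\intp{\vGamma; \Gamma}$ is legitimately defined: the conjunct $\vDash \vGamma; \Gamma$ is already part of unfolding $\msemtyeq[\vGamma; \Gamma]{T}{T'}{\Se_i}$ (and can in any case be recovered from $\vDash \vGamma; \Gamma \approx \vDelta; \Delta$ by PER-ness of $\vDash$, \Cref{lem:dt:intpvg-resp-eq}), so the recursion defining $\intp{-}$ is invoked at a valid index. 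Now fix $\vrho \approx \vrho' \in \intp{\vGamma; \Gamma}$; unfolding $\msemtyeq[\vGamma; \Gamma]{T}{T'}{\Se_i}$ gives $\intp{T}(\vrho) \approx \intp{T'}(\vrho') \in \El(\intp{\Se_i}(\vrho))$. Since $\intp{\Se_i}(\vrho) = \Ud_i$ and $\El(\Ud_i) = \Uc_i$ by the definition of $\El$ (together with $\El = \El_\infty$ extending each $\El_i$), this reads $\intp{T}(\vrho) \approx \intp{T'}(\vrho') \in \Uc_i$, whence $\intp{T}(\vrho) \approx \intp{T'}(\vrho') \in \Uc$ by the subsumption-of-limits lemma. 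This establishes (ii) and closes the case.

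The argument is a straightforward definition chase, so I do not expect a genuine obstacle; the two points requiring a moment's care are the well-definedness of $\intp{\vGamma; \Gamma}$ (dispatched by PER-ness of $\vDash$ or, more directly, by the $\vDash$ conjunct built into $\msemtyeq$) and the identification $\El(\Ud_i) = \Uc_i$ followed by the passage from $\Uc_i$ up to $\Uc$. The naturality-equality conjuncts carried by $\msemtyeq$ play no role here, since the third clause of $\vDash \vGamma \approx \vGamma'$ only asks for membership in $\Uc$.
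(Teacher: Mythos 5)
Your proof is correct and matches the paper, whose entire proof is ``Immediate'': the three rules are read off directly from the clauses of the inductive--recursive definition of $\vDash \vGamma \approx \vGamma'$, with the third discharged by unfolding $\msemtyeq[\vGamma;\Gamma]{T}{T'}{\Se_i}$ and using $\intp{\Se_i}(\vrho) = \Ud_i$, $\El(\Ud_i) = \Uc_i \subseteq \Uc$. Your write-up simply makes that definition chase explicit.
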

\begin{proof}
  Immediate. 
\end{proof}

\subsubsection{Conversions}

\begin{lemma}
  \begin{mathpar}
    \inferrule
    {\msemtyeq{t}{t'}{T} \\ \msemtyeq{T}{T'}{\Se_i}}
    {\msemtyeq{t}{t'}{T'}}
  \end{mathpar}
\end{lemma}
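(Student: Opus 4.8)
The plan is to unfold both semantic judgments and push everything through pointwise, the only nontrivial ingredient being \Cref{lem:dt:el-resp-u}. By definition, $\msemtyeq{T}{T'}{\Se_i}$ gives $\vDash \vGamma$ together with: for all $\vrho_1 \approx \vrho_2 \in \intp{\vGamma}$, since $\intp{\Se_i}(\vrho_1) = \Ud_i$ and $\El(\Ud_i) = \Uc_i$, we have $\intp{T}(\vrho_1) \approx \intp{T'}(\vrho_2) \in \Uc_i$ (and hence, by the subsumption-of-limits lemma, $\in \Uc$). First I would discharge $\vDash \vGamma$ in the conclusion, which is immediate from either premise, and note that the naturality equalities for $t$ and $t'$ in the conclusion are literally the naturality equalities supplied by the first premise $\msemtyeq{t}{t'}{T}$, since they do not mention $T$ at all.

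Now fix $\vrho \approx \vrho' \in \intp{\vGamma}$. I must produce $\intp{T'}(\vrho) \approx \intp{T'}(\vrho') \in \Uc$ and $\intp{t}(\vrho) \approx \intp{t'}(\vrho') \in \El(\intp{T'}(\vrho))$. For the first, apply the second premise to $\vrho \approx \vrho'$ to get $\intp{T}(\vrho) \approx \intp{T'}(\vrho') \in \Uc_i$, and apply it to $\vrho \approx \vrho$ (valid since $\intp{\vGamma}$ is a PER and $\vrho \approx \vrho' \in \intp{\vGamma}$ implies $\vrho \approx \vrho \in \intp{\vGamma}$) to get $\intp{T}(\vrho) \approx \intp{T'}(\vrho) \in \Uc_i$. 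Using that $\Uc_i$ is a PER (\Cref{lem:u-el-per}), symmetry and transitivity chain these to $\intp{T'}(\vrho) \approx \intp{T}(\vrho) \approx \intp{T'}(\vrho') \in \Uc_i \subseteq \Uc$.

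For the second, the first premise $\msemtyeq{t}{t'}{T}$ gives $\intp{t}(\vrho) \approx \intp{t'}(\vrho') \in \El(\intp{T}(\vrho))$. Since $\intp{T}(\vrho) \approx \intp{T'}(\vrho) \in \Uc$ (shown just above), \Cref{lem:dt:el-resp-u} tells us that membership in $\El(\intp{T}(\vrho))$ and in $\El(\intp{T'}(\vrho))$ is equivalent, so $\intp{t}(\vrho) \approx \intp{t'}(\vrho') \in \El(\intp{T'}(\vrho))$, as required. Assembling the three pieces yields $\msemtyeq{t}{t'}{T'}$.

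I do not expect a genuine obstacle here: the argument is a routine unfolding. The only point demanding a little care is bookkeeping of universe levels — the premises speak of $\Uc_i$ and $\El_i$ while the semantic judgment is phrased at the limit $\Uc$, $\El$ — which is handled by the cumulativity and subsumption-of-limits lemmas; and one must remember to invoke $\intp{\vGamma}$ being a PER in order to instantiate the $\Se_i$-equivalence premise at the reflexive pair $\vrho \approx \vrho$.
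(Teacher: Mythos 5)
Your proposal is correct and follows essentially the same route as the paper: unfold both judgments, use that the second premise lands in $\El(\Ud_i)=\Uc_i$, and transport the $\El$-membership from $\intp{T}(\vrho)$ to $\intp{T'}(\vrho)$ via \Cref{lem:dt:el-resp-u}, with the naturality equalities carried over unchanged from the first premise. Your extra PER bookkeeping (instantiating the type-equivalence premise at $\vrho\approx\vrho$ and chaining with symmetry/transitivity to get $\intp{T'}(\vrho)\approx\intp{T'}(\vrho')\in\Uc$) is exactly what the paper leaves implicit.
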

\begin{proof}
  Assuming $\vrho \approx \vrho' \in \intp{\vGamma}$, we know $\intp{t}(\vrho) \approx
  \intp{t'}(\vrho') \in \El(\intp{T}(\vrho))$ and $\intp{T}(\vrho) \approx \intp{T'}(\vrho) \in
  \Uc$. We obtain $\intp{t}(\vrho) \approx
  \intp{t'}(\vrho') \in \El(\intp{T'}(\vrho))$ by \Cref{lem:dt:el-resp-u}.

  Naturality equalities are direct consequences. 
\end{proof}

\begin{lemma}
  \begin{mathpar}
    \inferrule
    {\msemtyeq{\vsigma}{\vsigma'}{\vDelta} \\ \vDash \vDelta \approx \vDelta'}
    {\msemtyeq{\vsigma}{\vsigma'}{\vDelta'}}
  \end{mathpar}
\end{lemma}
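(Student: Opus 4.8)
The plan is to observe that the codomain context stack $\vDelta$ enters the definition of $\msemtyeq{\vsigma}{\vsigma'}{\vDelta}$ only through two pieces of data: the side condition $\vDash \vDelta$, and the family of relatedness facts $\intp{\vsigma}(\vrho) \approx \intp{\vsigma'}(\vrho') \in \intp{\vDelta}$ ranging over all $\vrho \approx \vrho' \in \intp{\vGamma}$. The naturality equalities recorded in the semantic judgment speak only about $\vsigma$, $\vsigma'$ and the domain, so they are untouched. Hence it suffices to transport these two pieces along $\vDash \vDelta \approx \vDelta'$.

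First I would unfold $\msemtyeq{\vsigma}{\vsigma'}{\vDelta}$ to extract $\vDash \vGamma$, $\vDash \vDelta$, the relatedness family above, and the two naturality equalities. From $\vDash \vDelta \approx \vDelta'$, using that $\vDash {-} \approx {-}$ is a PER (first clause of \Cref{lem:dt:intpvg-resp-eq}), I obtain $\vDash \vDelta' \approx \vDelta'$, i.e.\ $\vDash \vDelta'$; in particular $\intp{\vDelta'}$ is well-defined. Then the second clause of \Cref{lem:dt:intpvg-resp-eq}, instantiated with the pair $\vDelta \approx \vDelta'$, yields the inclusion $\intp{\vDelta} \subseteq \intp{\vDelta'}$ as relations on evaluation environments.

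Combining these, for every $\vrho \approx \vrho' \in \intp{\vGamma}$ we have $\intp{\vsigma}(\vrho) \approx \intp{\vsigma'}(\vrho') \in \intp{\vDelta} \subseteq \intp{\vDelta'}$, while the naturality equalities carry over verbatim, so $\msemtyeq{\vsigma}{\vsigma'}{\vDelta'}$ follows. There is essentially no obstacle here: the only nontrivial ingredient is \Cref{lem:dt:intpvg-resp-eq}, which has already been established, and the remainder is a direct rewrap of the definition of the semantic substitution-equivalence judgment. The single point worth spelling out is that $\vDash \vDelta'$ is obtained from $\vDash \vDelta \approx \vDelta'$ by the PER property (symmetry followed by transitivity) rather than from a separate presupposition lemma.
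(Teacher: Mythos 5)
Your proposal is correct and matches the paper's proof, which is exactly "immediate by PER and \Cref{lem:dt:intpvg-resp-eq}": the PER property gives $\vDash \vDelta'$ and the second clause of \Cref{lem:dt:intpvg-resp-eq} transports $\intp{\vsigma}(\vrho) \approx \intp{\vsigma'}(\vrho') \in \intp{\vDelta}$ to $\intp{\vDelta'}$, with the naturality equalities unaffected. Your added remark on why $\vDash \vDelta'$ (and hence well-definedness of $\intp{\vDelta'}$) follows from symmetry and transitivity is a reasonable spelling-out of the same argument, not a different route.
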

\begin{proof}
  Immediate by PER and \Cref{lem:dt:intpvg-resp-eq}. 
\end{proof}

\subsubsection{Substitution}

All substitution related rules are immediate as they do essentially the same thing as the simply
typed case.
\begin{lemma}
  \begin{mathpar}
    \inferrule
    {\msemtyeq[\vDelta]{t}{t'}T \\ \msemtyeq{\vsigma}{\vsigma'}{\vDelta}}
    {\msemtyeq{t[\vsigma]}{t'[\vsigma']}{T[\vsigma]}}
  \end{mathpar}
\end{lemma}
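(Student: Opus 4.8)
The plan is to unfold the two semantic equivalence hypotheses and the goal, and to reduce everything to the definitional equation $\intp{t[\vsigma]}(\vrho) = \intp{t}(\intp{\vsigma}(\vrho))$ (and likewise $\intp{T[\vsigma]}(\vrho) = \intp{T}(\intp{\vsigma}(\vrho))$) coming from the evaluation function, together with the fact that $\intp{\vDelta}$ is a PER. Concretely, from $\msemtyeq{\vsigma}{\vsigma'}{\vDelta}$ we may use $\vDash\vGamma$, $\vDash\vDelta$, the relation $\intp{\vsigma}(\vrho)\approx\intp{\vsigma'}(\vrho')\in\intp{\vDelta}$ for every $\vrho\approx\vrho'\in\intp{\vGamma}$, and the naturality equalities for $\vsigma,\vsigma'$; from $\msemtyeq[\vDelta]{t}{t'}{T}$ we may use, for every $\vrho_1\approx\vrho_2\in\intp{\vDelta}$, both $\intp{T}(\vrho_1)\approx\intp{T}(\vrho_2)\in\Uc$ and $\intp{t}(\vrho_1)\approx\intp{t'}(\vrho_2)\in\El(\intp{T}(\vrho_1))$, plus the naturality equalities for $t,t'$.

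First I would fix $\vrho\approx\vrho'\in\intp{\vGamma}$. Applying $\msemtyeq{\vsigma}{\vsigma'}{\vDelta}$ to $\vrho\approx\vrho'$ gives $\intp{\vsigma}(\vrho)\approx\intp{\vsigma'}(\vrho')\in\intp{\vDelta}$; applying it to the reflexive instance $\vrho'\approx\vrho'$ gives $\intp{\vsigma}(\vrho')\approx\intp{\vsigma'}(\vrho')\in\intp{\vDelta}$, and then symmetry and transitivity in the PER $\intp{\vDelta}$ yield $\intp{\vsigma}(\vrho)\approx\intp{\vsigma}(\vrho')\in\intp{\vDelta}$. Feeding $(\intp{\vsigma}(\vrho),\intp{\vsigma}(\vrho'))$ into the $\Uc$-part of the premise on $t$ produces $\intp{T}(\intp{\vsigma}(\vrho))\approx\intp{T}(\intp{\vsigma}(\vrho'))\in\Uc$, which by the evaluation equation is precisely $\intp{T[\vsigma]}(\vrho)\approx\intp{T[\vsigma]}(\vrho')\in\Uc$. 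Feeding $(\intp{\vsigma}(\vrho),\intp{\vsigma'}(\vrho'))$ into the term-part produces $\intp{t}(\intp{\vsigma}(\vrho))\approx\intp{t'}(\intp{\vsigma'}(\vrho'))\in\El(\intp{T}(\intp{\vsigma}(\vrho)))$, i.e. $\intp{t[\vsigma]}(\vrho)\approx\intp{t'[\vsigma']}(\vrho')\in\El(\intp{T[\vsigma]}(\vrho))$; the type annotation matches on the nose, so no appeal to \Cref{lem:dt:el-resp-u} is needed (though it would also do).

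It remains to check the naturality equalities. For any $\kappa$ I would compute $\intp{t[\vsigma]}(\vrho[\kappa]) = \intp{t}(\intp{\vsigma}(\vrho[\kappa])) = \intp{t}(\intp{\vsigma}(\vrho)[\kappa]) = \intp{t}(\intp{\vsigma}(\vrho))[\kappa] = \intp{t[\vsigma]}(\vrho)[\kappa]$, using the evaluation equation, then the naturality equality for $\vsigma$ carried by $\msemtyeq{\vsigma}{\vsigma'}{\vDelta}$, then the naturality equality for $t$ applied at $\intp{\vsigma}(\vrho)$ (legitimate because $\intp{\vsigma}(\vrho)$ is related to itself in $\intp{\vDelta}$); the computation for $t'[\vsigma']$ is symmetric. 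Finally $\vDash\vGamma$ is inherited from $\msemtyeq{\vsigma}{\vsigma'}{\vDelta}$. The argument is essentially bookkeeping; the only step requiring a moment of thought is the PER juggling that turns the ``diagonal'' relation $\intp{\vsigma}(\vrho)\approx\intp{\vsigma'}(\vrho')$ into the ``vertical'' one $\intp{\vsigma}(\vrho)\approx\intp{\vsigma}(\vrho')$ demanded by the $\Uc$-component of the conclusion — which is exactly the situation the semantic judgment is designed to handle, since it bakes in the $\vDash$-conditions and relies on $\intp{\vDelta}$ being a PER.
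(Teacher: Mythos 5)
Your proof is correct and follows exactly the route the paper intends: the paper dismisses this rule as ``immediate'' by unfolding the definitions (just as in the simply typed case), and your argument supplies precisely that unfolding, composing the evaluation equation $\intp{t[\vsigma]}(\vrho)=\intp{t}(\intp{\vsigma}(\vrho))$ with the two semantic premises and their naturality equalities. The one detail the paper glosses over --- the PER juggling turning $\intp{\vsigma}(\vrho)\approx\intp{\vsigma'}(\vrho')$ into $\intp{\vsigma}(\vrho)\approx\intp{\vsigma}(\vrho')$ for the $\Uc$-component of the goal --- is exactly right and is the only nontrivial step.
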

\begin{lemma}
  \begin{mathpar}
    \inferrule
    {\vDash \vGamma}
    {\msemtyeq{\vect I}{\vect I}{\vGamma}}

    \inferrule
    {\vDash \vGamma; (\Gamma, x : T)}
    {\msemtyeq[\vGamma; (\Gamma, x : T)]{\wk_x}{\wk_x}{\vGamma; \Gamma}}

    \inferrule
    {\msemtyeq{\vsigma}{\vsigma'}{\vGamma'; \Gamma} \\ \msemtyp[\vGamma';
      \Gamma]{T}{\Se_i} \\ \msemtyeq{t}{t'}{T[\vsigma]}}
    {\msemtyeq{\vsigma, t/x}{\vsigma, t'/x}{\vGamma';(\Gamma, x : T)}}

    \inferrule
    {\msemtyeq{\vsigma}{\vsigma'}{\vDelta} \\ \vDash \vGamma; \vGamma' \\ |\vGamma'| = n}
    {\msemtyeq[\vGamma; \vGamma']{\sextt\vsigma n}{\sextt{\vsigma'} n}{\vDelta; \cdot}}

    \inferrule
    {\msemtyeq[\vGamma']{\vsigma}{\vsigma'}{\vGamma''} \\ \msemtyeq{\vdelta}{\vdelta'}{\vGamma'}}
    {\msemtyeq{\vsigma \circ \vdelta}{\vsigma' \circ \vdelta'}{\vGamma''}}
  \end{mathpar}
\end{lemma}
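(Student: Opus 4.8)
The plan is to discharge each of the six rules by unfolding the semantic judgment together with the definition of $\intp{\vsigma}(\vrho)$, reducing the goal to the premises plus a handful of standard facts: that $\intp{\vGamma}$, $\Uc$ and $\El$ are partial equivalence relations (\Cref{lem:u-el-per}), that $\El(A)$ depends only on the $\Uc$-class of $A$ (\Cref{lem:dt:el-resp-u}), monotonicity of $\Uc$ and $\El$ (\Cref{lem:dt:u-el-resp-mon}), and the naturality equalities for evaluation (\Cref{conj:dt:intp-mon}). Since the definition of $\intp{\vsigma}(\vrho)$ is syntactically the same as in the simply typed case, each rule goes through by the same calculation; the only addition is that in the dependently typed setting every semantic judgment also carries the conjunct $\intp{T}(\vrho) \approx \intp{T}(\vrho') \in \Uc$ (for terms) and the naturality equalities, which must be re-verified.

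For the rule on $t[\vsigma]$: fix $\vrho \approx \vrho' \in \intp{\vGamma}$. From $\msemtyeq{\vsigma}{\vsigma'}{\vDelta}$ we obtain $\intp{\vsigma}(\vrho) \approx \intp{\vsigma'}(\vrho') \in \intp{\vDelta}$, and since $\intp{\vGamma}$ is a PER also $\intp{\vsigma}(\vrho') \approx \intp{\vsigma'}(\vrho') \in \intp{\vDelta}$. Feeding these into $\msemtyeq[\vDelta]{t}{t'}{T}$ gives $\intp{t}(\intp{\vsigma}(\vrho)) \approx \intp{t'}(\intp{\vsigma'}(\vrho')) \in \El(\intp{T}(\intp{\vsigma}(\vrho)))$ together with $\intp{T}(\intp{\vsigma}(\vrho)) \approx \intp{T}(\intp{\vsigma}(\vrho')) \in \Uc$. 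Because evaluation satisfies $\intp{t[\vsigma]}(\vrho) = \intp{t}(\intp{\vsigma}(\vrho))$ and $\intp{T[\vsigma]}(\vrho) = \intp{T}(\intp{\vsigma}(\vrho))$, these are exactly the two conjuncts of the goal; any mismatch between $\El(\intp{T[\vsigma]}(\vrho))$ and a different instance of $\El$ is absorbed by \Cref{lem:dt:el-resp-u}. The naturality equalities follow by composing those of $\vsigma$, $\vsigma'$, $t$, $t'$ with \Cref{conj:dt:intp-mon}.

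For the substitution constructors: the $\vect I$ rule is immediate since $\intp{\vect I}(\vrho) = \vrho$, with the premise $\vDash \vGamma$ supplying the side condition; the $\wk_x$ rule unfolds $\intp{\vGamma; (\Gamma, x : T)}$ through $\drop$ and uses that $\drop$ commutes with modal transformations, while $\vDash \vGamma; \Gamma$ follows from $\vDash \vGamma; (\Gamma, x : T)$ by inversion; the $\vsigma, t/x$ rule simply pairs the two premises into $\ext(-, x, -)$, which is the definition of membership in $\intp{\vGamma'; (\Gamma, x : T)}$; the $\sextt\vsigma n$ rule uses that $\vrho \approx \vrho' \in \intp{\vGamma; \vGamma'}$ restricts to $\trunc\vrho n \approx \trunc{\vrho'}n \in \intp{\vGamma}$ with $\Ltotal\vrho n = \Ltotal{\vrho'}n$, so that $\ext(\intp{\vsigma}(\trunc\vrho n), \Ltotal\vrho n) \approx \ext(\intp{\vsigma'}(\trunc{\vrho'}n), \Ltotal{\vrho'}n) \in \intp{\vDelta; \cdot}$ — no check on the topmost context is needed since it is $\cdot$, and $\vDash \vDelta; \cdot$ follows from $\vDash \vDelta$ — ; and the $\vsigma \circ \vdelta$ rule is immediate by chaining the two premises through $\intp{\vsigma \circ \vdelta}(\vrho) = \intp{\vsigma}(\intp{\vdelta}(\vrho))$. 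I expect no genuine obstacle: the whole statement is a mechanical verification, and the only point needing a moment's attention — that in the $t[\vsigma]$ rule the interpretations land in the correct $\El$-instance — is settled by \Cref{lem:dt:el-resp-u} once the $\Uc$-equivalence of the relevant type instances is available, which is why the excerpt labels these rules immediate.
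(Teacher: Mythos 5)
Your proposal is correct and follows essentially the same route as the paper, which treats all of these substitution rules as immediate consequences of unfolding the semantic judgment, the clauses of $\intp{\vsigma}(\vrho)$ (identity, $\drop$, $\ext$, truncation/offset, composition), and the basic PER facts, exactly as in the simply typed case. Your extra paragraph on the $t[\vsigma]$ congruence rule actually belongs to the preceding lemma rather than this one, but it is harmless and argued correctly.
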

\begin{lemma}
  \begin{mathpar}
    \inferrule
    {\msemtyp t T}
    {\msemtyeq{t[\vect I]}{t}{T}}

    \inferrule
    {\msemtyp[\vGamma']{\vsigma}{\vGamma''} \\ \msemtyp[\vGamma]{\vdelta}{\vGamma'} \\
      \msemtyp[\vGamma'']{t}{T}}
    {\msemtyeq{t[\vsigma \circ \vdelta]}{t[\vsigma][\vdelta]}{T[\vsigma \circ
        \vdelta]}}
  \end{mathpar}
\end{lemma}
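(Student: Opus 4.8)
The plan is to unfold the semantic judgments and observe that in both rules the two sides evaluate, under any environment, to syntactically identical domain values, so that the statement reduces to the premises together with the basic naturality lemma \Cref{conj:dt:intp-mon}. Everything is forced by the defining equations of $\intp{-}$ on substitutions; this is a ``immediate by definitions'' lemma in the style of its neighbours, and I expect only a few lines.

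For the first rule, I would assume $\vrho \approx \vrho' \in \intp{\vGamma}$. Since $\intp{\vect I}(\vrho) = \vrho$ by the definition of $\intp{-}$ on substitutions, we get $\intp{t[\vect I]}(\vrho) = \intp{t}(\intp{\vect I}(\vrho)) = \intp{t}(\vrho)$, and likewise for $\vrho'$; the conclusion's type component is $T$ itself, so there is nothing to reconcile there. Hence $\intp{T}(\vrho) \approx \intp{T}(\vrho') \in \Uc$ and $\intp{t[\vect I]}(\vrho) = \intp{t}(\vrho) \approx \intp{t}(\vrho') = \intp{t[\vect I]}(\vrho') \in \El(\intp{T}(\vrho))$ follow immediately from $\msemtyp t T$. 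The naturality-equality conjuncts of the semantic judgment reduce, after the same rewriting, to those already supplied by $\msemtyp t T$. Also $\vDash \vGamma$ is a conjunct of $\msemtyp t T$.

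For the second rule, I again assume $\vrho \approx \vrho' \in \intp{\vGamma}$. By the definition of $\intp{-}$ on composition, $\intp{t[\vsigma \circ \vdelta]}(\vrho) = \intp{t}(\intp{\vsigma}(\intp{\vdelta}(\vrho)))$, while two applications of the substitution clause give $\intp{t[\vsigma][\vdelta]}(\vrho) = \intp{t[\vsigma]}(\intp{\vdelta}(\vrho)) = \intp{t}(\intp{\vsigma}(\intp{\vdelta}(\vrho)))$, so the two sides agree on the nose; likewise $\intp{T[\vsigma \circ \vdelta]}(\vrho) = \intp{T}(\intp{\vsigma}(\intp{\vdelta}(\vrho)))$. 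I then chain the premises: $\msemtyp[\vGamma]{\vdelta}{\vGamma'}$ gives $\intp{\vdelta}(\vrho) \approx \intp{\vdelta}(\vrho') \in \intp{\vGamma'}$; feeding this into $\msemtyp[\vGamma']{\vsigma}{\vGamma''}$ gives $\intp{\vsigma}(\intp{\vdelta}(\vrho)) \approx \intp{\vsigma}(\intp{\vdelta}(\vrho')) \in \intp{\vGamma''}$; feeding that into $\msemtyp[\vGamma'']{t}{T}$ yields both $\intp{T}(\intp{\vsigma}(\intp{\vdelta}(\vrho))) \approx \intp{T}(\intp{\vsigma}(\intp{\vdelta}(\vrho'))) \in \Uc$ and the corresponding membership in $\El(\intp{T}(\intp{\vsigma}(\intp{\vdelta}(\vrho))))$. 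Rewriting back by the computations above delivers the claim at type $T[\vsigma \circ \vdelta]$, and the naturality equalities follow the same way by pushing $\kappa$ through using \Cref{conj:dt:intp-mon} and the naturality parts of the three premises; $\vDash \vGamma$ comes from $\msemtyp[\vGamma]{\vdelta}{\vGamma'}$.

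There is essentially no hard step; the only points needing a word of care are (i) checking that the conclusion's type component in the second rule really is interpreted as $\intp{T}(\intp{\vsigma}(\intp{\vdelta}(\vrho)))$, so no appeal to \Cref{lem:dt:el-resp-u} is needed to repair a type mismatch, and (ii) not forgetting to discharge the naturality-equality conjuncts, which are not automatic but reduce directly to \Cref{conj:dt:intp-mon} applied to the premises.
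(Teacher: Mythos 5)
Your proposal is correct and matches the paper's intent: the paper dismisses these rules as immediate (``they do essentially the same thing as the simply typed case,'' where the proof is ``immediate by definitions''), and your argument is exactly that unfolding — $\intp{\vect I}(\vrho)=\vrho$ and $\intp{\vsigma\circ\vdelta}(\vrho)=\intp{\vsigma}(\intp{\vdelta}(\vrho))$ make both sides literally equal, after which chaining the premises and their naturality parts discharges everything, with no type-repair via \Cref{lem:dt:el-resp-u} needed.
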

\begin{lemma}
  \begin{mathpar}
    \inferrule
    {\msemtyp{\vsigma}{\vDelta}}
    {\msemtyeq{\vsigma \circ \vect I}{\vsigma}{\vDelta}}

    \inferrule
    {\msemtyp{\vsigma}{\vDelta}}
    {\msemtyeq{\vect I \circ \vsigma}{\vsigma}{\vDelta}}

    \inferrule
    {\msemtyp[\vGamma'']{\vsigma''}{\vGamma'''} \\ \msemtyp[\vGamma']{\vsigma'}{\vGamma''} \\ \msemtyp{\vsigma}{\vGamma'}}
    {\msemtyeq{(\vsigma'' \circ \vsigma') \circ \vsigma}{\vsigma'' \circ (\vsigma' \circ \vsigma)}{\vGamma'''}}
  \end{mathpar}
\end{lemma}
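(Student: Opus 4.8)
The plan is to reduce all three laws to a routine unfolding of the definition of $\msemtyeq{\cdot}{\cdot}{\cdot}$ together with the two defining equations of the substitution interpreter, namely $\intp{\vect I}(\vrho) = \vrho$ and $\intp{\vsigma \circ \vdelta}(\vrho) = \intp{\vsigma}(\intp{\vdelta}(\vrho))$. Recall that $\msemtyeq{\vsigma}{\vsigma'}{\vDelta}$ bundles three things: the well-formedness side conditions $\vDash \vGamma$ and $\vDash \vDelta$; the relational clause $\intp{\vsigma}(\vrho) \approx \intp{\vsigma'}(\vrho') \in \intp{\vDelta}$ for every $\vrho \approx \vrho' \in \intp{\vGamma}$; and the naturality equalities $\intp{\vsigma}(\vrho[\kappa]) = \intp{\vsigma}(\vrho)[\kappa]$ (and the same for $\vsigma'$). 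The well-formedness conditions are delivered verbatim by the semantic typing premises, which already carry $\vDash$, so the only real content is the relational clause and the naturality equalities.

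First I would dispatch the two unit laws. For $\vsigma \circ \vect I$ one computes $\intp{\vsigma \circ \vect I}(\vrho) = \intp{\vsigma}(\intp{\vect I}(\vrho)) = \intp{\vsigma}(\vrho)$, and symmetrically $\intp{\vect I \circ \vsigma}(\vrho) = \intp{\vect I}(\intp{\vsigma}(\vrho)) = \intp{\vsigma}(\vrho)$; so in both cases the relational clause and the naturality equalities of the conclusion are literally those supplied by $\msemtyp{\vsigma}{\vDelta}$. For associativity, both sides collapse to the same expression: $\intp{(\vsigma'' \circ \vsigma') \circ \vsigma}(\vrho) = \intp{\vsigma''}(\intp{\vsigma'}(\intp{\vsigma}(\vrho)))$ and $\intp{\vsigma'' \circ (\vsigma' \circ \vsigma)}(\vrho) = \intp{\vsigma''}(\intp{\vsigma'}(\intp{\vsigma}(\vrho)))$. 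Given $\vrho \approx \vrho' \in \intp{\vGamma}$, I would apply $\msemtyp{\vsigma}{\vGamma'}$ to obtain $\intp{\vsigma}(\vrho) \approx \intp{\vsigma}(\vrho') \in \intp{\vGamma'}$, then $\msemtyp[\vGamma']{\vsigma'}{\vGamma''}$ to push this through $\intp{\vsigma'}$, and finally $\msemtyp[\vGamma'']{\vsigma''}{\vGamma'''}$ to push it through $\intp{\vsigma''}$, landing on $\intp{\vsigma''}(\intp{\vsigma'}(\intp{\vsigma}(\vrho))) \approx \intp{\vsigma''}(\intp{\vsigma'}(\intp{\vsigma}(\vrho'))) \in \intp{\vGamma'''}$, which is exactly the relational clause of the conclusion; the side conditions $\vDash \vGamma$ and $\vDash \vGamma'''$ come from the first and last premise respectively.

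The naturality equalities of the conclusion are handled identically: $\intp{(\vsigma'' \circ \vsigma') \circ \vsigma}(\vrho[\kappa]) = \intp{\vsigma''}(\intp{\vsigma'}(\intp{\vsigma}(\vrho[\kappa])))$, and rewriting with the naturality equality of $\msemtyp{\vsigma}{\vGamma'}$, then that of $\vsigma'$, then that of $\vsigma''$, turns this into $\intp{\vsigma''}(\intp{\vsigma'}(\intp{\vsigma}(\vrho)))[\kappa] = \intp{(\vsigma'' \circ \vsigma') \circ \vsigma}(\vrho)[\kappa]$; the other bracketing is the same. I expect essentially no obstacle here — the statement is a formality once the interpreter equations and the PER structure of $\intp{\vGamma}$ are in hand. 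The only point requiring a little care is the bookkeeping of making sure each intermediate environment pair lies in the right $\intp{\vGamma^{(i)}}$ so that the next semantic typing premise applies, which is immediate since $\msemtyp{\vsigma}{\vGamma'}$ says precisely that $\intp{\vsigma}$ carries $\intp{\vGamma}$-related pairs to $\intp{\vGamma'}$-related pairs.
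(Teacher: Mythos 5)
Your proposal is correct and matches the paper, which dispatches this lemma as ``immediate by following the definitions'' (deferring to the analogous simply typed case): unfolding $\intp{\vect I}(\vrho)=\vrho$ and $\intp{\vsigma\circ\vdelta}(\vrho)=\intp{\vsigma}(\intp{\vdelta}(\vrho))$ collapses both sides of each equation, after which the relational clause, the $\vDash$ side conditions, and the naturality equalities are exactly those supplied by the premises, chained through the intermediate $\intp{\vGamma^{(i)}}$-related environments as you describe. Your write-up is just the explicit version of the paper's one-line argument.
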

\begin{lemma}
  \begin{mathpar}
    \inferrule
    {\msemtyp[\vGamma']{\vsigma}{\vGamma''; \Gamma} \\ \msemtyp[\vGamma';
      \Gamma]{T}{\Se_i} \\
      \msemtyp[\vGamma']{t}{T[\vsigma]} \\ \msemtyp{\vdelta}{\vGamma'}}
    {\msemtyeq{\vsigma, t/x \circ \vdelta}{(\vsigma \circ \vdelta), t[\vdelta]/x}{\vGamma''; (\Gamma, x : T)}}
    
    \inferrule
    {\msemtyp[\vGamma']{\vsigma}{\vGamma; \Gamma} \\ \msemtyp[\vGamma'; \Gamma]{T}{\Se_i} \\ \msemtyp[\vGamma']{t}{T[\vsigma]}}
    {\msemtyeq[\vGamma']{\wk_x \circ (\vsigma, t/x)}{\vsigma}{\vGamma; \Gamma}}

    \inferrule
    {\msemtyp[\vGamma']{\vsigma}{\vGamma; (\Gamma, x : T)}}
    {\msemtyeq[\vGamma']{\vsigma}{(\wk_x \circ \vsigma), x[\vsigma]/x}{\vGamma; (\Gamma, x : T)}}

    \inferrule
    {\msemtyp{\vsigma}{\vDelta; \cdot} \\ |\vDelta| > 0}
    {\msemtyeq{\vsigma}{\sextt{\trunc \vsigma 1}{\Ltotal\vsigma 1}}{\vDelta; \cdot}}
  \end{mathpar}
\end{lemma}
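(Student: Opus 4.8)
The plan is to discharge each of the four rules by the recipe used throughout this section: fix $\vrho \approx \vrho' \in \intp{\vGamma}$ in the domain PER (where $\vGamma$ is the context of the equivalence judgment), expand both substitutions through the interpretation function $\intp{\_} : \Substs \to \Envs \to \Envs$, and exhibit the two resulting environments as related in the codomain PER, using the hypotheses and the fact that $\intp{\vGamma}$ and $\El(\intp{T})$ are PERs; the required naturality equalities are then immediate from \Cref{conj:dt:intp-mon}. For $\vsigma, t/x \circ \vdelta \approx (\vsigma \circ \vdelta), t[\vdelta]/x$, one step of $\intp{\_ \circ \_}$, $\intp{\vsigma, t/x}$ and $\intp{t[\vdelta]}$ rewrites both sides to $\ext(\intp{\vsigma}(\intp{\vdelta}(\vrho)), x, \intp{t}(\intp{\vdelta}(\vrho)))$ and $\ext(\intp{\vsigma}(\intp{\vdelta}(\vrho')), x, \intp{t}(\intp{\vdelta}(\vrho')))$; since $\msemtyp{\vdelta}{\vGamma'}$ yields $\intp{\vdelta}(\vrho) \approx \intp{\vdelta}(\vrho') \in \intp{\vGamma'}$, the hypotheses for $\vsigma$ and for $t : T[\vsigma]$ close the goal directly, the latter contributing the only genuinely dependent bit (relating the inserted values in $\El(\intp{T}(\dots))$).

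For $\wk_x \circ (\vsigma, t/x) \approx \vsigma$, evaluating gives $\intp{\wk_x \circ (\vsigma, t/x)}(\vrho) = \drop(\ext(\intp{\vsigma}(\vrho), x, \intp{t}(\vrho)), x)$. Because $\vsigma$ targets $\vGamma; \Gamma$, whose topmost context $\Gamma$ does not bind $x$, the variable $x$ is fresh for the topmost local environment of $\intp{\vsigma}(\vrho)$, so $\drop(\ext(\rho', x, a), x) = \rho'$ and the whole expression collapses to $\intp{\vsigma}(\vrho)$; the goal is then just $\msemtyp{\vsigma}{\vGamma;\Gamma}$. For the $\eta$-expansion $\vsigma \approx (\wk_x \circ \vsigma), x[\vsigma]/x$, the right-hand side unfolds to $\ext(\drop(\intp{\vsigma}(\vrho), x), x, \rho_\sigma(x))$ where $(\_, \rho_\sigma) := \intp{\vsigma}(\vrho)(0)$; here $\vsigma$ targets $\vGamma; (\Gamma, x : T)$, so $x$ \emph{is} bound in the topmost local environment of $\intp{\vsigma}(\vrho)$, and dropping $x$ then re-inserting exactly its recorded value rebuilds $\intp{\vsigma}(\vrho)$, so both sides coincide.

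For the last rule $\vsigma \approx \sextt{\trunc\vsigma 1}{\Ltotal\vsigma 1}$ with codomain $\vDelta; \cdot$, I would expand the right-hand side to $\ext(\intp{\trunc\vsigma 1}(\trunc{\vrho'}{\Ltotal\vsigma 1}), \Ltotal{\vrho'}{\Ltotal\vsigma 1})$ and rewrite it by \Cref{lem:dt:L-intp-vsigma-vrho,lem:dt:trunc-intp-vsigma-vrho} into $\ext(\trunc{\intp{\vsigma}(\vrho')}1, \Ltotal{\intp{\vsigma}(\vrho')}1)$, i.e. the same environment as $\intp{\vsigma}(\vrho')$ except that its topmost local environment has been replaced by $\emp$. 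Since the topmost context of $\vDelta; \cdot$ is empty, the PER $\intp{\vDelta; \cdot}$ constrains only the leading offset and the $\trunc{\_}1$ part, never the topmost local environment, and both agree with those of $\intp{\vsigma}(\vrho)$; so $\msemtyp{\vsigma}{\vDelta; \cdot}$ finishes it. The main obstacle — in fact the only subtlety — is the freshness bookkeeping underlying the $\drop \circ \ext$ and $\ext \circ \drop$ collapses in the second and third rules: one must pin down precisely, from the well-typedness of $\vsigma$, whether $x$ lies in the domain of the topmost local environment of $\intp{\vsigma}(\vrho)$. This is exactly the argument already carried out in the simply typed setting (\Cref{sec:st:per}), so it transfers without essential change.
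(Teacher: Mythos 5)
Your proposal matches the paper's treatment: the paper declares these four rules immediate because they repeat the simply typed argument, and there the first three are discharged exactly as you do (unfold the interpretation, feed the related environments through the premises, noting the PER for context stacks never inspects the extra/absent $x$ binding), while the fourth is computed with \Cref{lem:dt:L-intp-vsigma-vrho,lem:dt:trunc-intp-vsigma-vrho} and closed by observing that the topmost context of $\vDelta;\cdot$ is empty. So the proof is correct and takes essentially the same route, with your freshness discussion being a slightly stronger statement than needed since relatedness in $\intp{\vGamma;\Gamma}$ only constrains variables bound in $\Gamma$.
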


\begin{lemma}
  \begin{mathpar}
    \inferrule
    {\msemtyp{\vsigma}{\vGamma'} \\ \msemtyp[\vGamma'']{\vdelta}{\vGamma;\vDelta} \\
      |\vDelta| = n}
    {\msemtyeq[\vGamma'']{(\sextt\vsigma n) \circ \vdelta}{\sextt{(\vsigma \circ \trunc\vdelta n)}
        {\Ltotal\vdelta n} }{\vGamma';\cdot}}
  \end{mathpar}
\end{lemma}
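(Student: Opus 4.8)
The plan is to follow the simply typed analogue essentially verbatim, adapting only for the dependently typed semantic judgment. After discharging the well-formedness side conditions $\vDash \vGamma''$ and $\vDash \vGamma'; \cdot$ from the presuppositions of the two semantic-typing hypotheses, the work is: fix an arbitrary pair $\vrho \approx \vrho' \in \intp{\vGamma''}$ and compute the two interpretations, showing they land in the PER $\intp{\vGamma'; \cdot}$. Unfolding the left-hand side using the interpretation of composition and of a lifted substitution gives
\[
  \intp{(\sextt\vsigma n) \circ \vdelta}(\vrho)
  = \ext\bigl(\intp{\vsigma}(\trunc{\intp{\vdelta}(\vrho)}{n}),\ \Ltotal{\intp{\vdelta}(\vrho)}{n}\bigr),
\]
and then rewriting $\trunc{\intp{\vdelta}(\vrho)}{n}$ and $\Ltotal{\intp{\vdelta}(\vrho)}{n}$ by \Cref{lem:dt:trunc-intp-vsigma-vrho} and \Cref{lem:dt:L-intp-vsigma-vrho} turns it into $\ext\bigl(\intp{\vsigma}(\intp{\trunc\vdelta n}(\trunc\vrho{\Ltotal\vdelta n})),\ \Ltotal\vrho{\Ltotal\vdelta n}\bigr)$. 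Directly unfolding the right-hand side of the desired equivalence on $\vrho'$ yields $\ext\bigl(\intp{\vsigma}(\intp{\trunc\vdelta n}(\trunc{\vrho'}{\Ltotal\vdelta n})),\ \Ltotal{\vrho'}{\Ltotal\vdelta n}\bigr)$, so the two sides have exactly the same shape.

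It then remains to relate these two environments in $\intp{\vGamma'; \cdot}$. Since $|\vDelta| = n$ we have $\trunc{(\vGamma;\vDelta)}{n} = \vGamma$, so from the hypothesis $\msemtyp[\vGamma'']{\vdelta}{\vGamma;\vDelta}$ we get $\intp{\vdelta}(\vrho) \approx \intp{\vdelta}(\vrho') \in \intp{\vGamma;\vDelta}$, and truncating by $n$ (plus the truncation lemma just used) gives $\intp{\trunc\vdelta n}(\trunc\vrho{\Ltotal\vdelta n}) \approx \intp{\trunc\vdelta n}(\trunc{\vrho'}{\Ltotal\vdelta n}) \in \intp{\vGamma}$. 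Feeding this into $\msemtyp{\vsigma}{\vGamma'}$ shows the two applications of $\intp{\vsigma}$ are related in $\intp{\vGamma'}$, and since $\vrho \approx \vrho' \in \intp{\vGamma''}$ the two offsets $\Ltotal\vrho{\Ltotal\vdelta n}$ and $\Ltotal{\vrho'}{\Ltotal\vdelta n}$ agree. By the definition of $\intp{\vGamma'; \cdot}$ — its truncation-by-one component lies in $\intp{\vGamma'}$, its topmost local environment is $\emp$, and its leading offsets coincide — this is precisely the conclusion. The naturality-equality component of the semantic judgment is discharged by \Cref{conj:dt:intp-mon}, since both $(\sextt\vsigma n) \circ \vdelta$ and $\sextt{(\vsigma \circ \trunc\vdelta n)}{\Ltotal\vdelta n}$ are substitutions whose interpretations commute with every \UnMoT.

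I do not anticipate a genuine obstacle: the proof is bookkeeping. The only points requiring care are (i) threading \Cref{lem:dt:trunc-intp-vsigma-vrho} and \Cref{lem:dt:L-intp-vsigma-vrho} in the right order so that the two sides become syntactically the same environment before matching them up in the PER, and (ii) using $|\vDelta| = n$ to identify $\trunc{(\vGamma;\vDelta)}{n}$ with $\vGamma$, which is what makes the outputs of $\intp{\trunc\vdelta n}$ live in the domain $\intp{\vGamma}$ on which the hypothesis on $\vsigma$ can be invoked.
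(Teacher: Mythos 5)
Your proof is correct and follows essentially the same route the paper takes: the paper leaves this dependently typed rule to mirror its simply typed counterpart, whose proof likewise fixes $\vrho \approx \vrho' \in \intp{\vGamma''}$, computes both interpretations, and rewrites with \Cref{lem:dt:L-intp-vsigma-vrho,lem:dt:trunc-intp-vsigma-vrho} (the dt analogues of \Cref{lem:L-intp-vsigma-vrho,lem:trunc-intp-vsigma-vrho}) before concluding from the premises. Your additional bookkeeping (the $\vDash$ side conditions and matching the offset and truncation-by-one components of $\intp{\vGamma';\cdot}$) is exactly the adaptation the paper intends.
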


\subsubsection{Variables}

\begin{lemma}
  \begin{mathpar}
    \inferrule
    {\vDash \vGamma; \Gamma \\ x : T \in \vGamma; \Gamma}
    {\msemtyeq[\vGamma; \Gamma]x x T}

    \inferrule
    {\vDash \vGamma; (\Gamma, x : T) \\ y : T' \in \vGamma; \Gamma}
    {\msemtyeq[\vGamma; \Gamma]{y[\wk_x]}y{T'[\wk_x]}}

    \inferrule
    {\msemtyp{\vsigma}{\vGamma'; \Gamma} \\ \msemtyp[\vGamma';\Gamma]{T}{\Se_i}  \\ \msemtyp t T[\vsigma]}
    {\msemtyeq{x[\vsigma, t/x]}{t}{T[\vsigma]}}
    
    \inferrule
    {\msemtyp{\vsigma}{\vGamma'; \Gamma} \\ \msemtyp[\vGamma';\Gamma]{T'}{\Se_i} \\ \msemtyp t T'[\vsigma] \\ y : T \in \vGamma';\Gamma}
    {\msemtyeq{y[\vsigma, t/x]}{y[\vsigma]}{T[\vsigma]}}
  \end{mathpar}
\end{lemma}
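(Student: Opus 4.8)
The plan is to unfold the definition of $\msemjudge$ in each of the four rules and reduce everything to a single auxiliary lemma about the binding judgment. Concretely, I would first establish: \emph{if $\vDash \vGamma; \Gamma$, $x : T \in \vGamma; \Gamma$, and $\vrho \approx \vrho' \in \intp{\vGamma; \Gamma}$, then $\intp{T}(\vrho) \approx \intp{T}(\vrho') \in \Uc$, and, writing $(\_,\rho) := \vrho(0)$ and $(\_,\rho') := \vrho'(0)$, also $\rho(x) \approx \rho'(x) \in \El(\intp{T}(\vrho))$} (together with the corresponding naturality equality for the lookup). This is proved by induction on the derivation of $x : T \in \vGamma; \Gamma$. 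In the base case $x : T[\wk_x] \in \vGamma; (\Gamma, x : T)$, unfolding $\intp{\vGamma; (\Gamma, x : T)}$ directly gives $\rho(x) \approx \rho'(x) \in \El(\intp{T}(\drop(\vrho, x)))$, and since $\intp{T[\wk_x]}(\vrho) = \intp{T}(\intp{\wk_x}(\vrho)) = \intp{T}(\drop(\vrho, x))$ this is exactly what is needed; the $\Uc$-membership follows from $\vDash \vGamma; (\Gamma, x : T)$ and monotonicity of the relevant interpretations. In the step case $T = T''[\wk_y]$ with $x : T'' \in \vGamma; \Gamma'$ and ambient stack $\vGamma; (\Gamma', y : T''')$, I observe $\drop(\vrho, y) \approx \drop(\vrho', y) \in \intp{\vGamma; \Gamma'}$, apply the IH there, rewrite through $\intp{T''[\wk_y]}(\vrho) = \intp{T''}(\drop(\vrho, y))$, and use that the local lookup of $x$ is unchanged by $\drop(-, y)$ because $x \neq y$.

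With this lemma, the four rules are routine definitional unfoldings. For the reflexivity rule, I fix $\vrho \approx \vrho' \in \intp{\vGamma; \Gamma}$, use $\intp{x}(\vrho) = \rho(x)$ and $\intp{x}(\vrho') = \rho'(x)$ by definition of evaluation, and conclude from the auxiliary lemma; the required naturality equality $\intp{x}(\vrho[\kappa]) = \intp{x}(\vrho)[\kappa]$ holds because $\vrho[\kappa](0) = (\Ltotal\kappa k, \rho[\kappa])$ and $\rho[\kappa](x) = \rho(x)[\kappa]$. For the $y[\wk_x]$ rule, I compute $\intp{y[\wk_x]}(\vrho) = \intp{y}(\drop(\vrho, x))$, note the lookup of $y$ is unaffected by $\drop(-, x)$ since $y \neq x$, rewrite $\intp{T'[\wk_x]}(\vrho) = \intp{T'}(\drop(\vrho, x))$, and apply the auxiliary lemma at $\drop(\vrho, x) \approx \drop(\vrho', x)$, with $y : T' \in \vGamma; \Gamma$. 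For the two substitution-lookup rules, I compute $\intp{y[\vsigma, t/x]}(\vrho) = \intp{y}(\ext(\intp{\vsigma}(\vrho), x, \intp{t}(\vrho)))$: when $y = x$ this collapses to $\intp{t}(\vrho)$, so the goal becomes $\intp{t}(\vrho) \approx \intp{t}(\vrho') \in \El(\intp{T[\vsigma]}(\vrho))$, which is precisely the premise $\msemtyp{t}{T[\vsigma]}$ after rewriting $\intp{T[\vsigma]}(\vrho) = \intp{T}(\intp{\vsigma}(\vrho))$; when $y \neq x$ the $\ext(-, x, -)$ does not touch the lookup of $y$, so the term reduces to $\intp{y[\vsigma]}(\vrho)$, matching the right-hand side, and the relation follows from the premise $\intp{\vsigma}(\vrho) \approx \intp{\vsigma}(\vrho') \in \intp{\vGamma'; \Gamma}$ together with the auxiliary lemma applied to $y : T \in \vGamma'; \Gamma$. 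In each case the naturality equalities reduce, via \Cref{conj:dt:intp-mon}, to those carried by the premises, and the $\vDash$-obligations come directly from the hypotheses.

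Throughout, wherever the argument silently swaps $\El(\intp{T}(\vrho))$ for $\El(\intp{T}(\vrho'))$ I would appeal to \Cref{lem:dt:el-resp-u}, and wherever symmetry or transitivity of these relations is used, to \Cref{lem:u-el-per}. I expect the main obstacle to be the step case of the auxiliary lemma: one must carefully track that the type recorded in the binding judgment is $T$ already weakened by all later bindings, so that interpreting the iterated $\wk$ corresponds to an iterated $\drop$ on the environment, ensuring the induction hypothesis is applied at the correct environment and with the correct type. Everything else is pure unfolding of the evaluation function and the environment PERs.
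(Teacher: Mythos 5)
Your proposal is correct and matches the paper's approach: the paper simply states that these rules are ``immediate by following the definitions,'' and your argument is exactly that unfolding, with the auxiliary induction on the binding judgment $x : T \in \vGamma; \Gamma$ being the natural (and necessary) elaboration of what the paper leaves implicit. Nothing in your treatment of the lookup, $\drop$/$\ext$ interaction, or the $\Uc$/$\El$ obligations diverges from the intended argument.
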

\begin{proof}
  We rules are immediate by following the definitions. 
\end{proof}

\subsubsection{$\Pi$ Types}

\begin{lemma}
  \begin{mathpar}
    \inferrule
    {\msemtyeq[\vGamma'; \Gamma]{S}{S'}{\Se_i} \\ \msemtyeq[\vGamma'; (\Gamma, x : S)]{T}{T'}{\Se_i}}
    {\msemtyeq[\vGamma'; \Gamma]{\Pi(x : S). T}{\Pi(x : S'). T'}{\Se_i}}
  \end{mathpar}
\end{lemma}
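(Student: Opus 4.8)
The plan is to unfold the semantic equivalence on both sides and reduce the goal to the inductive clause for $Pi$ in $\Uc_i$. Unfolding $\msemtyeq[\vGamma'; \Gamma]{\Pi(x : S). T}{\Pi(x : S'). T'}{\Se_i}$, I must supply $\vDash \vGamma'; \Gamma$ (which is already part of the hypothesis $\msemtyeq[\vGamma'; \Gamma]{S}{S'}{\Se_i}$), the naturality equalities for $\intp{\Pi(x:S).T}$ and $\intp{\Pi(x:S').T'}$, and for each $\vrho \approx \vrho' \in \intp{\vGamma'; \Gamma}$ both $\intp{\Se_i}(\vrho) \approx \intp{\Se_i}(\vrho') \in \Uc$ and $\intp{\Pi(x:S).T}(\vrho) \approx \intp{\Pi(x:S').T'}(\vrho') \in \El(\intp{\Se_i}(\vrho))$. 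The first of these is immediate since $\intp{\Se_i}(\vrho) = \Ud_i = \intp{\Se_i}(\vrho')$ and $\Ud_i \approx \Ud_i \in \Uc_{i+1}$, hence in $\Uc$ by subsumption of limits. For the second, since $\intp{\Se_i}(\vrho) = \Ud_i$, $\El(\Ud_i) = \Uc_i$, and $\intp{\Pi(x:S).T}(\vrho) = Pi(\intp{S}(\vrho), x.T, \vrho)$, the obligation becomes exactly $Pi(\intp{S}(\vrho), x.T, \vrho) \approx Pi(\intp{S'}(\vrho'), x.T', \vrho') \in \Uc_i$.

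To apply the $Pi$-introduction rule for $\Uc_i$, first I would discharge its first premise: given $\kappa$, I need $\intp{S}(\vrho)[\kappa] \approx \intp{S'}(\vrho')[\kappa] \in \Uc_i$. From $\msemtyeq[\vGamma'; \Gamma]{S}{S'}{\Se_i}$ applied to $\vrho \approx \vrho'$ I obtain $\intp{S}(\vrho) \approx \intp{S'}(\vrho') \in \El(\Ud_i) = \Uc_i$, and monotonicity of the PER model (\Cref{lem:dt:u-el-resp-mon}) closes this under $\kappa$. For the second premise, assume $\kappa$ and $a \approx a' \in \El_i(\intp{S}(\vrho)[\kappa])$; I want $\intp{T}(\ext(\vrho[\kappa], x, a)) \approx \intp{T'}(\ext(\vrho'[\kappa], x, a')) \in \Uc_i$. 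The key step is to check that $\ext(\vrho[\kappa], x, a) \approx \ext(\vrho'[\kappa], x, a') \in \intp{\vGamma'; (\Gamma, x : S)}$: this needs $\vrho[\kappa] \approx \vrho'[\kappa] \in \intp{\vGamma'; \Gamma}$, which follows from $\vrho \approx \vrho' \in \intp{\vGamma'; \Gamma}$ by \Cref{lem:dt:intpvg-resp-mon}, together with $a \approx a' \in \El(\intp{S}(\vrho[\kappa]))$, which follows from the hypothesis $a \approx a' \in \El_i(\intp{S}(\vrho)[\kappa])$ via the naturality equality $\intp{S}(\vrho)[\kappa] = \intp{S}(\vrho[\kappa])$ (contained in $\msemtyeq[\vGamma'; \Gamma]{S}{S'}{\Se_i}$) and subsumption of limits $\El_i \subseteq \El$. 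Then $\msemtyeq[\vGamma'; (\Gamma, x : S)]{T}{T'}{\Se_i}$ returns $\intp{T}(\ext(\vrho[\kappa], x, a)) \approx \intp{T'}(\ext(\vrho'[\kappa], x, a')) \in \El(\Ud_i) = \Uc_i$, which is precisely what the $Pi$ rule demands. The two naturality equalities come out of those for $S$ and $S'$ once one observes $\intp{\Pi(x:S).T}(\vrho)[\kappa] = Pi(\intp{S}(\vrho)[\kappa], x.T, \vrho[\kappa])$ and $\intp{\Pi(x:S).T}(\vrho[\kappa]) = Pi(\intp{S}(\vrho[\kappa]), x.T, \vrho[\kappa])$.

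The main obstacle I expect is not conceptual but clerical: keeping the distinction between $\El$ (the limit) and $\El_i$ straight when passing between the $Pi$ clause of $\Uc_i$, which quantifies over $\El_i$, and the semantic judgments, which speak of $\El$; and handling the interplay of $\ext$, $\drop$, the $\kappa$-action on environments, and the naturality equalities so that the environments actually land in $\intp{\vGamma'; (\Gamma, x : S)}$ and $\intp{\vGamma'; \Gamma}$. As the paper remarks in analogous places, this bookkeeping is exactly what \Cref{lem:dt:el-resp-u,lem:dt:intpvg-resp-mon}, the cumulativity and subsumption lemmas for $\Uc$ and $\El$, and the naturality equalities are designed to absorb, so no genuinely new difficulty should arise.
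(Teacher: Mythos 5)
Your proposal is correct and follows essentially the same route as the paper's proof: unfold the semantic judgment to the obligation $Pi(\intp{S}(\vrho), x.T, \vrho) \approx Pi(\intp{S'}(\vrho'), x.T', \vrho') \in \Uc_i$, discharge the first premise of the $Pi$ rule via the hypothesis on $S$ plus \Cref{lem:dt:u-el-resp-mon}, and the second by transporting $a \approx a'$ along the naturality equality for $S$, closing the environments under $\kappa$ with \Cref{lem:dt:intpvg-resp-mon}, and applying the hypothesis on $T$; the naturality equalities are immediate by computation, exactly as in the paper. Your extra care about $\El$ versus $\El_i$ and subsumption of limits is bookkeeping the paper leaves implicit, not a divergence.
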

\begin{proof}
  \begin{align*}
    H_1: &\ \msemtyeq[\vGamma'; \Gamma]{S}{S'}{\Se_i}
           \tag{by assumption} \\
    H_2: &\ \msemtyeq[\vGamma'; (\Gamma, x : S)]{T}{T'}{\Se_i}
           \tag{by assumption} \\
    H_3: &\ \vrho \approx \vrho' \in \intp{\vGamma'; \Gamma}
           \tag{by assumption} \\
         &\ \text{assume }\kappa \\
         &\ \intp{S}(\vrho) \approx \intp{S}(\vrho') \in \Uc_i
           \tag{by $H_1$} \\
         &\ \forall \kappa. \intp{S}(\vrho)[\kappa] \approx \intp{S}(\vrho')[\kappa] \in \Uc_i
           \tag{by \Cref{lem:dt:u-el-resp-mon}, then abstraction} \\
         &\ \text{assume }\kappa, a \approx a' \in \El_i(\intp{S}(\vrho)[\kappa]) \\
         &\ a \approx a' \in \El_i(\intp{S}(\vrho[\kappa]))
           \tag{by naturality equality of $S$}\\
         &\ \vrho[\kappa] \approx \vrho'[\kappa] \in \intp{\vGamma'; \Gamma}
           \tag{by $H_3$ and \Cref{lem:dt:intpvg-resp-mon}} \\
    H_4: &\ \ext(\vrho[\kappa], x, a) \approx \ext(\vrho'[\kappa], x, a') \in
           \intp{\vGamma'; (\Gamma, x : S)}
           \tag{by definition} \\
         & \intp{T}(\ext(\vrho[\kappa], x, a))
           \approx \intp{T'}(\ext(\vrho'[\kappa], x, a'))
           \in \Uc_i
           \tag{by $H_2$ and $H_4$} \\
         &\ Pi(\intp{S}(\vrho), x.T, \vrho)
           \approx Pi(\intp{S'}(\vrho), x.T', \vrho')
           \in \Uc_i
           \tag{by abstraction}
  \end{align*}
  Naturality equalities are immediate. 
\end{proof}

\begin{lemma}
  \begin{mathpar}
    \inferrule
    {\msemtyeq[\vGamma; (\Gamma, x : S)]{t}{t'}{T}}
    {\msemtyeq[\vGamma; \Gamma]{\lambda x. t}{\lambda x. t'}{\Pi(x : S). T}}
  \end{mathpar}
\end{lemma}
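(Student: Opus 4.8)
The plan is to unfold both sides of the semantic judgment and reduce everything to the premise, mirroring the proof of the $\Pi$-formation congruence immediately above together with the simply typed $\lambda$ congruence. Write the premise as $H : \msemtyeq[\vGamma; (\Gamma, x : S)]{t}{t'}{T}$. First I would extract from the $\vDash$ component of $H$, by inverting $\vDash \vGamma; (\Gamma, x : S)$, both $\vDash \vGamma; \Gamma$ (which is the well-formedness component required of the conclusion) and the semantic well-formedness of $S$, i.e.\ $\intp{S}(\vrho) \approx \intp{S}(\vrho') \in \Uc$ for all $\vrho \approx \vrho' \in \intp{\vGamma; \Gamma}$.

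Next, fixing $\vrho \approx \vrho' \in \intp{\vGamma; \Gamma}$, I handle the type component: I must show $Pi(\intp{S}(\vrho), x.T, \vrho) \approx Pi(\intp{S}(\vrho'), x.T, \vrho') \in \Uc$, so I apply the $Pi$-clause of $\Uc$. Its first obligation $\forall \kappa.\ \intp{S}(\vrho)[\kappa] \approx \intp{S}(\vrho')[\kappa] \in \Uc$ follows from the extracted well-formedness of $S$ and \Cref{lem:dt:u-el-resp-mon}. For its second obligation, I assume $\kappa$ and $a \approx a' \in \El(\intp{S}(\vrho)[\kappa])$, rewrite this as $a \approx a' \in \El(\intp{S}(\vrho[\kappa]))$ using the naturality equality for $S$, obtain $\vrho[\kappa] \approx \vrho'[\kappa] \in \intp{\vGamma; \Gamma}$ by \Cref{lem:dt:intpvg-resp-mon}, hence $\ext(\vrho[\kappa], x, a) \approx \ext(\vrho'[\kappa], x, a') \in \intp{\vGamma; (\Gamma, x : S)}$ by definition, and feed this into the type component of $H$ to get $\intp{T}(\ext(\vrho[\kappa], x, a)) \approx \intp{T}(\ext(\vrho'[\kappa], x, a')) \in \Uc$, as needed.

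For the term component I must show $\Lambda(x.t, \vrho) \approx \Lambda(x.t', \vrho') \in \El(Pi(\intp{S}(\vrho), x.T, \vrho))$. Unfolding $\El$ of a $Pi$, I assume $\kappa$ and $a \approx a' \in \El(\intp{S}(\vrho)[\kappa])$ and compute $\Lambda(x.t, \vrho)[\kappa] \cdot a = \Lambda(x.t, \vrho[\kappa]) \cdot a = \intp{t}(\ext(\vrho[\kappa], x, a))$ (and symmetrically for $t'$, $\vrho'$, $a'$), so the goal becomes $\intp{t}(\ext(\vrho[\kappa], x, a)) \approx \intp{t'}(\ext(\vrho'[\kappa], x, a')) \in \El(\intp{T}(\ext(\vrho[\kappa], x, a)))$, which is exactly the term component of $H$ applied to the environment relation $\ext(\vrho[\kappa], x, a) \approx \ext(\vrho'[\kappa], x, a') \in \intp{\vGamma; (\Gamma, x : S)}$ built as above. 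The residual naturality equalities $\intp{\lambda x. t}(\vrho[\kappa]) = \intp{\lambda x. t}(\vrho)[\kappa]$ and the analogue for $t'$ hold by definition, since both sides reduce to $\Lambda(x.t, \vrho[\kappa])$.

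The argument is essentially bookkeeping; the only delicate points are the repeated shuttling of modal transformations through $\intp{S}$ via its naturality equality and the occasional use of \Cref{lem:dt:el-resp-u} to switch the ambient type of an $\El$-membership when the two sides carry syntactically different but $\Uc$-related type annotations. So the main obstacle, to the extent there is one, is purely notational: making sure each $\El(-)$ and $\Uc$-membership is stated against the correct representative and that the naturality side conditions of the semantic judgments are all discharged.
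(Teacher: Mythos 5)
Your proposal is correct and follows essentially the same route as the paper's proof: fix $\vrho \approx \vrho'$, assume $\kappa$ and $a \approx a' \in \El(\intp{S}(\vrho)[\kappa])$, transport the environments along $\kappa$ via \Cref{lem:dt:intpvg-resp-mon}, extend with $a, a'$ to land in $\intp{\vGamma; (\Gamma, x : S)}$, and discharge both the $\Uc$-membership of the $Pi$ value and the $\El$-membership of the $\Lambda$ values from the premise, with naturality equalities immediate. The extra bookkeeping you flag (the naturality equality for $S$ and the use of \Cref{lem:dt:el-resp-u}) is exactly what the paper leaves implicit, so this is a faithful, slightly more explicit rendering of the same argument.
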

\begin{proof}
  \begin{align*}
    H_1: &\ \msemtyeq[\vGamma; (\Gamma, x : S)]{t}{t'}{T}
           \tag{by assumption} \\
    H_2: &\ \vrho \approx \vrho' \in \intp{\vGamma'; \Gamma}
           \tag{by assumption} \\
         &\ \text{assume }\kappa, a \approx a' \in \El_i(\intp{S}(\vrho)[\kappa]) \\
         &\ \vrho[\kappa] \approx \vrho'[\kappa] \in \intp{\vGamma'; \Gamma}
           \tag{by $H_2$ and \Cref{lem:dt:intpvg-resp-mon}} \\
    H_3: &\ \ext(\vrho[\kappa], x, a) \approx \ext(\vrho'[\kappa], x, a') \in
           \intp{\vGamma'; (\Gamma, x : S)}
           \tag{by definition} \\
         &\ \intp{T}(\ext(\vrho[\kappa], x, a))
           \approx \intp{T}(\ext(\vrho'[\kappa], x, a'))
           \in \Uc
           \tag{by $H_1$ and $H_3$} \\
         &\ \forall \kappa. \intp{S}(\vrho)[\kappa] \approx \intp{S}(\vrho')[\kappa]
           \in \Uc
           \tag{by inverting $H_1$ and \Cref{lem:dt:u-el-resp-mon}} \\
         &\ \msemjudge[\vGamma; \Gamma]{\Pi(x : S). T}
           \tag{first goal, by abstraction} \\
         &\ \intp{t}(\ext(\vrho[\kappa], x, a))
           \approx \intp{t'}(\ext(\vrho'[\kappa], x, a'))
           \in \El(\intp{T}(\ext(\vrho[\kappa], x, a)))
           \tag{by $H_1$} \\
         &\ \intp{t}(\vrho) \approx \intp{t'}(\vrho') \in \El(\intp{\Pi(x :
           S). T}(\vrho))
           \tag{by abstraction}
  \end{align*}
  Naturality equalities are immediate. 
\end{proof}

\begin{lemma}
  \begin{mathpar}
    \inferrule
    {\msemtyeq{t}{t'}{\Pi(x : S). T} \\ \msemtyeq{s}{s'}{S}}
    {\msemtyeq{t\ s}{t'\ s'}{T[\vect I, x : s]}}
  \end{mathpar}
\end{lemma}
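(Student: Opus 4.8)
The plan is to peel off both semantic-equivalence hypotheses and the goal, fix an arbitrary $\vrho \approx \vrho' \in \intp{\vGamma}$, and push everything through the $Pi$-clauses of $\El$ and of $\Uc$ specialized at the identity modal transformation $\vone$. The condition $\vDash \vGamma$ is inherited verbatim from either premise, so all the work is in the per-environment part.

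First I would compute $\intp{t\ s}(\vrho) = \intp{t}(\vrho) \cdot \intp{s}(\vrho)$, $\intp{t'\ s'}(\vrho') = \intp{t'}(\vrho') \cdot \intp{s'}(\vrho')$, and --- crucially --- observe that the target type evaluates definitionally as $\intp{T[\vect I, s/x]}(\vrho) = \intp{T}(\ext(\vrho, x, \intp{s}(\vrho)))$ (and likewise at $\vrho'$), since $\intp{\vect I}(\vrho) = \vrho$. From $\msemtyeq{t}{t'}{\Pi(x : S). T}$ I get $\intp{t}(\vrho) \approx \intp{t'}(\vrho') \in \El(\intp{\Pi(x : S). T}(\vrho))$, where $\intp{\Pi(x : S). T}(\vrho) = Pi(\intp{S}(\vrho), x.T, \vrho)$. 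Instantiating the defining clause of $\El(Pi(\dots))$ at $\kappa := \vone$ with the argument pair $\intp{s}(\vrho) \approx \intp{s'}(\vrho')$ --- which lies in $\El(\intp{S}(\vrho)[\vone]) = \El(\intp{S}(\vrho))$ by the second hypothesis together with $A[\vone] = A$ --- and simplifying via $a[\vone] = a$ and $\vrho[\vone] = \vrho$, yields exactly $\intp{t\ s}(\vrho) \approx \intp{t'\ s'}(\vrho') \in \El(\intp{T}(\ext(\vrho, x, \intp{s}(\vrho))))$, i.e. the $\El$-component of the goal.

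For the $\Uc$-component I would invert the type-component of $\msemtyeq{t}{t'}{\Pi(x : S). T}$, namely $Pi(\intp{S}(\vrho), x.T, \vrho) \approx Pi(\intp{S}(\vrho'), x.T, \vrho') \in \Uc$, and read off its codomain clause: for all $\kappa$ and $a' \approx b' \in \El(\intp{S}(\vrho)[\kappa])$, one has $\intp{T}(\ext(\vrho[\kappa], x, a')) \approx \intp{T}(\ext(\vrho'[\kappa], x, b')) \in \Uc$. Taking $\kappa := \vone$ and feeding the pair $\intp{s}(\vrho) \approx \intp{s}(\vrho')$ delivers $\intp{T[\vect I, s/x]}(\vrho) \approx \intp{T[\vect I, s/x]}(\vrho') \in \Uc$. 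The one point needing attention is manufacturing that last argument pair: $\msemtyeq{s}{s'}{S}$ supplies only $\intp{s}(\vrho) \approx \intp{s'}(\vrho')$ directly, so I would combine the PER properties of $\intp{\vGamma}$ (instantiating the hypothesis also at $\vrho' \approx \vrho'$), the type-component $\intp{S}(\vrho) \approx \intp{S}(\vrho') \in \Uc$ of $\msemtyeq{s}{s'}{S}$, \Cref{lem:dt:el-resp-u}, and transitivity in the PER $\El(\intp{S}(\vrho))$ to realign to $\intp{s}(\vrho) \approx \intp{s}(\vrho')$. This is pure bookkeeping rather than a genuine obstacle; indeed the whole lemma is the dependently typed echo of the simply typed application-congruence rule, the only new ingredient being the check that the $\Pi$-codomain lands on the intended substituted type $T[\vect I, s/x]$, which holds definitionally on the syntactic side and by $Pi$-inversion of $\Uc$ on the semantic side.

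Finally, the naturality equalities for $t\ s$ and $t'\ s'$ are routine: $\intp{t\ s}(\vrho[\kappa]) = \intp{t}(\vrho[\kappa]) \cdot \intp{s}(\vrho[\kappa]) = (\intp{t}(\vrho)[\kappa]) \cdot (\intp{s}(\vrho)[\kappa]) = (\intp{t}(\vrho) \cdot \intp{s}(\vrho))[\kappa] = \intp{t\ s}(\vrho)[\kappa]$, using the naturality equalities of $t$ and $s$ carried by the hypotheses together with \Cref{conj:dt:.-mon}, and the $t'\ s'$ case is symmetric. I expect no hard part here.
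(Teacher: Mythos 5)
Your proposal is correct and is essentially the paper's own argument: the paper proves this rule simply by "following the definition" (instantiate the $Pi$-clauses of $\El$ and $\Uc$ at $\vone$ with the evaluated argument), with naturality handled as in the simply typed application case, which is exactly what you spell out — including the only delicate point, realigning $\intp{s}(\vrho) \approx \intp{s'}(\vrho')$ to $\intp{s}(\vrho) \approx \intp{s}(\vrho')$ via the PER properties and \Cref{lem:dt:el-resp-u} for the $\Uc$-component. No gaps; your write-up is just a more explicit rendering of the paper's one-line proof.
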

\begin{proof}
  Follow definition. Naturality equalities hold for the same reason given in the
  simply typed case. 
\end{proof}

\begin{lemma}
  \begin{mathpar}
    \inferrule
    {\msemtyp[\vGamma; (\Gamma, x : S)]{t}{T} \\ \msemtyp[\vGamma;\Gamma]{s}{S}}
    {\msemtyeq[\vGamma; \Gamma]{(\lambda x. t)\ s}{t[\vect I, s/x]}{T[\vect I, s/x]}}
  \end{mathpar}
\end{lemma}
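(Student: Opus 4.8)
The plan is to unfold the definition of the semantic equivalence judgment and verify each conjunct by computing both sides of the evaluation function. Fix $\vrho \approx \vrho' \in \intp{\vGamma; \Gamma}$. On the left, using the clauses for application, for $\lambda$, and for the partial application $\_\cdot\_$ on a $\Lambda$, one has $\intp{(\lambda x. t)\ s}(\vrho) = \Lambda(x.t, \vrho) \cdot \intp{s}(\vrho) = \intp{t}(\ext(\vrho, x, \intp{s}(\vrho)))$. On the right, using the clauses for substitution evaluation and for $\intp{\vect I}$, $\intp{t[\vect I, s/x]}(\vrho') = \intp{t}(\ext(\vrho', x, \intp{s}(\vrho')))$, and likewise the target type unfolds definitionally as $\intp{T[\vect I, s/x]}(\vrho) = \intp{T}(\ext(\vrho, x, \intp{s}(\vrho)))$ (and similarly at $\vrho'$). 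So the $\Uc$- and $\El$-conjuncts of the goal reduce exactly to relatedness of $\intp{T}$ and $\intp{t}$ evaluated at the two extended environments $\ext(\vrho, x, \intp{s}(\vrho))$ and $\ext(\vrho', x, \intp{s}(\vrho'))$, and $\vDash \vGamma; \Gamma$ is inherited from the premises.

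The crux is therefore to show $\ext(\vrho, x, \intp{s}(\vrho)) \approx \ext(\vrho', x, \intp{s}(\vrho')) \in \intp{\vGamma; (\Gamma, x : S)}$. Checking the defining clause: since $\drop(\ext(\vrho, x, a), x) = \vrho$, the ``tail'' condition is $\vrho \approx \vrho' \in \intp{\vGamma; \Gamma}$, which is the hypothesis; and the topmost bindings $\intp{s}(\vrho)$ and $\intp{s}(\vrho')$ are required to lie in $\El(\intp{S}(\vrho))$, which is precisely the $\El$-part of the premise $\msemtyp[\vGamma; \Gamma]{s}{S}$ (appealing to \Cref{lem:dt:el-resp-u} if one prefers to read it against $\intp{S}(\vrho')$). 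With this established, instantiating the premise $\msemtyp[\vGamma; (\Gamma, x : S)]{t}{T}$ at these two environments yields $\intp{T}(\ext(\vrho, x, \intp{s}(\vrho))) \approx \intp{T}(\ext(\vrho', x, \intp{s}(\vrho'))) \in \Uc$ and $\intp{t}(\ext(\vrho, x, \intp{s}(\vrho))) \approx \intp{t}(\ext(\vrho', x, \intp{s}(\vrho'))) \in \El(\intp{T}(\ext(\vrho, x, \intp{s}(\vrho))))$, which by the definitional identities of the first paragraph are exactly the remaining two conjuncts of the goal.

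For the naturality equalities, one expands $\intp{(\lambda x. t)\ s}(\vrho[\kappa]) = \Lambda(x.t, \vrho[\kappa]) \cdot \intp{s}(\vrho[\kappa])$ and uses $\Lambda(x.t, \vrho[\kappa]) = \Lambda(x.t, \vrho)[\kappa]$ (definitional), the naturality equality $\intp{s}(\vrho[\kappa]) = \intp{s}(\vrho)[\kappa]$ supplied by $\msemtyp[\vGamma; \Gamma]{s}{S}$, and \Cref{conj:dt:.-mon} to pull $\kappa$ out; symmetrically $\intp{t[\vect I, s/x]}(\vrho'[\kappa])$ unfolds to $\intp{t}(\ext(\vrho'[\kappa], x, \intp{s}(\vrho'[\kappa])))$, and the routine behaviour of $\ext(-,x,-)$ under \UnMoTs together with the naturality parts of the two premises rewrites this to $\intp{t[\vect I, s/x]}(\vrho')[\kappa]$. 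I do not expect a genuine obstacle here: everything is a definitional unfolding plus invocation of the hypotheses, and it is the same calculation as in the simply typed explicit-substitution $\beta$ rule for $\lambda$. The only place demanding a little care is the environment bookkeeping — confirming $\drop(\ext(\vrho,x,a),x) = \vrho$ and that $\intp{T[\vect I, s/x]}$ and $\intp{(\lambda x. t)\ s}$ unfold to the claimed shapes, so that the instantiation of the first premise lands in precisely the sets named in the conclusion.
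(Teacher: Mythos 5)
Your proof is correct and follows essentially the same route as the paper: the paper's own proof is just ``following the definition, the premise $\msemtyp[\vGamma; (\Gamma, x : S)]{t}{T}$ gives the desired relation,'' and your unfolding of both sides to $\intp{t}$ at the extended environments, the check that $\ext(\vrho, x, \intp{s}(\vrho)) \approx \ext(\vrho', x, \intp{s}(\vrho')) \in \intp{\vGamma; (\Gamma, x : S)}$ via the second premise, and the instantiation of the first premise is exactly the computation the paper carries out explicitly in the simply typed analogue. The extra detail you give on the type conjunct, $\drop$/$\ext$ bookkeeping, and naturality is consistent with the paper's definitions and fills in what the paper leaves implicit.
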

\begin{proof}
  Following the definition. $\msemtyp[\vGamma; (\Gamma, x : S)]{t}{T}$ will give us
  the desired relation. 
\end{proof}

\begin{lemma}
  \begin{mathpar}
    \inferrule
    {\msemtyp{t}{\Pi(x : S). T}}
    {\msemtyeq{t}{\lambda x. t[\wk_x]\ x}{\Pi(x : S). T}}
  \end{mathpar}
\end{lemma}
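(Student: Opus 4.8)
The plan is to follow the same strategy as the simply typed $\eta$ rule for functions, replacing the PER $\intp{-}$ with the dependent candidate $\El(-)$ and unfolding the domain-level definitions. First I would extract from the hypothesis $\msemtyp{t}{\Pi(x : S). T}$ the facts $\vDash \vGamma$, the naturality equality $\intp{t}(\vrho[\kappa]) = \intp{t}(\vrho)[\kappa]$, and, for every $\vrho \approx \vrho' \in \intp{\vGamma}$, both $\intp{\Pi(x : S). T}(\vrho) \approx \intp{\Pi(x : S). T}(\vrho') \in \Uc$ and $\intp{t}(\vrho) \approx \intp{t}(\vrho') \in \El(\intp{\Pi(x : S). T}(\vrho))$; call this last one $H_3$. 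Since $\intp{\Pi(x : S). T}(\vrho) = Pi(\intp{S}(\vrho), x.T, \vrho)$ by definition of evaluation, the two side conditions of $\msemtyeq{t}{\lambda x. t[\wk_x]\ x}{\Pi(x : S). T}$ that concern well-formedness of $\vGamma$ and the equivalence of the types are already discharged, so it remains to prove $\intp{t}(\vrho) \approx \intp{\lambda x. t[\wk_x]\ x}(\vrho') = \Lambda(x.\, t[\wk_x]\ x,\, \vrho') \in \El(Pi(\intp{S}(\vrho), x.T, \vrho))$.

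Next I would unfold the definition of $\El$ on a $Pi$: assuming $\kappa$ and $a' \approx b' \in \El(\intp{S}(\vrho)[\kappa])$, the goal becomes
\[
  \intp{t}(\vrho)[\kappa] \cdot a' \approx \Lambda(x.\, t[\wk_x]\ x,\, \vrho')[\kappa] \cdot b' \in \El(\intp{T}(\ext(\vrho[\kappa], x, a'))).
\]
I would then compute the right-hand operand down using the domain definitions: $\Lambda(x.\, t[\wk_x]\ x,\, \vrho')[\kappa] = \Lambda(x.\, t[\wk_x]\ x,\, \vrho'[\kappa])$, hence $\Lambda(x.\, t[\wk_x]\ x,\, \vrho'[\kappa]) \cdot b' = \intp{t[\wk_x]\ x}(\ext(\vrho'[\kappa], x, b')) = \intp{t}(\intp{\wk_x}(\ext(\vrho'[\kappa], x, b'))) \cdot b' = \intp{t}(\drop(\ext(\vrho'[\kappa], x, b'), x)) \cdot b' = \intp{t}(\vrho'[\kappa]) \cdot b'$, using $\intp{\wk_x}(\vrho) = \drop(\vrho, x)$ and that $\drop$ undoes $\ext(-, x, b')$. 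One application of the naturality equality for $t$ rewrites $\intp{t}(\vrho'[\kappa])$ to $\intp{t}(\vrho')[\kappa]$, after which the goal is exactly the instance of $H_3$ obtained by feeding it $\kappa$ together with $a' \approx b' \in \El(\intp{S}(\vrho)[\kappa])$.

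Finally I would check the naturality equalities demanded of the new term $\lambda x. t[\wk_x]\ x$; these are immediate, since $\intp{\lambda x. t[\wk_x]\ x}(\vrho[\kappa]) = \Lambda(x.\, t[\wk_x]\ x,\, \vrho[\kappa]) = \Lambda(x.\, t[\wk_x]\ x,\, \vrho)[\kappa]$ by the definition of applying an \UnMoT to a domain value, and symmetrically for $\vrho'$. The computation is entirely routine; the only point that needs a little care is the bookkeeping that $H_3$ is already stated with the relation $\El(\intp{T}(\ext(\vrho[\kappa], x, a')))$ on the nose, so no appeal to \Cref{lem:dt:el-resp-u} is needed to transport between $\El$ of the $\vrho$-indexed and the $\vrho'$-indexed type — an extra conversion step that would otherwise be tempting to insert.
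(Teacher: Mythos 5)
Your proof is correct and takes essentially the same route as the paper: the paper's own (terse) argument for this rule is that $t$ and $\lambda x.\, t[\wk_x]\ x$ evaluate to the same thing modulo the captured evaluation environment and are then related by the premise, which is precisely the computation you spell out, mirroring the paper's detailed simply typed $\eta$ proof. Your closing observation that $H_3$ already lands in $\El(\intp{T}(\ext(\vrho[\kappa], x, a')))$ on the nose, so no $\El$-conversion is needed, is also accurate.
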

\begin{proof}
  $t$ and $\lambda x. t[p(\vect I, x)]\ x$ are the same modulo evaluation environments
  after evaluation, and they can be related just by the premise. 
\end{proof}

\begin{lemma}
  \begin{mathpar}
    \inferrule
    {\msemtyp[\vGamma; \Gamma]{\vsigma}{\vDelta;\Delta} \\
      \msemtyp[\vDelta;\Delta]{S}{\Se_i} \\ \msemtyp[\vDelta;(\Delta, x : S)]{T}{\Se_i}}
    {\msemtyeq[\vGamma; \Gamma]{\Pi(x : S). T[\vsigma]}{\Pi(x :
        S[\vsigma]).(T[q(\vsigma, x)])}{\Se_{i}}}
  \end{mathpar}
\end{lemma}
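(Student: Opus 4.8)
The plan is to unfold the definition of semantic typing equivalence and reason pointwise. Fix $\vrho \approx \vrho' \in \intp{\vGamma;\Gamma}$; the well-formedness obligation $\vDash \vGamma;\Gamma$ is supplied by the first premise, and since $\intp{\Se_i}(\vrho) = \Ud_i$ with $\El(\Ud_i) = \Uc_i$, the substantive goal is to show $\intp{(\Pi(x:S).T)[\vsigma]}(\vrho) \approx \intp{\Pi(x:S[\vsigma]).(T[q(\vsigma,x)])}(\vrho') \in \Uc_i$, together with the two naturality equalities for the left- and right-hand terms. Writing $\vrho_\sigma := \intp{\vsigma}(\vrho)$ and $\vrho_\sigma' := \intp{\vsigma}(\vrho')$ and evaluating, the left-hand side becomes $Pi(\intp{S}(\vrho_\sigma), x.T, \vrho_\sigma)$ and the right-hand side becomes $Pi(\intp{S}(\vrho_\sigma'), x.(T[q(\vsigma,x)]), \vrho')$; moreover $\msemtyp[\vGamma;\Gamma]{\vsigma}{\vDelta;\Delta}$ applied to $\vrho \approx \vrho'$ gives $\vrho_\sigma \approx \vrho_\sigma' \in \intp{\vDelta;\Delta}$.

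Next I would apply the $Pi$ rule of $\Uc_i$. Its first premise, $\intp{S}(\vrho_\sigma)[\kappa] \approx \intp{S}(\vrho_\sigma')[\kappa] \in \Uc_i$ for every $\kappa$, follows from $\msemtyp[\vDelta;\Delta]{S}{\Se_i}$ — which yields $\intp{S}(\vrho_\sigma) \approx \intp{S}(\vrho_\sigma') \in \Uc_i$ — combined with the naturality equality $\intp{S}(\vrho_\sigma)[\kappa] = \intp{S}(\vrho_\sigma[\kappa])$ (\Cref{conj:dt:intp-mon}), monotonicity of context stacks (\Cref{lem:dt:intpvg-resp-mon}), and monotonicity of $\Uc_i$ (\Cref{lem:dt:u-el-resp-mon}). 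For the second premise, fix $\kappa$ and $a \approx a' \in \El_i(\intp{S}(\vrho_\sigma)[\kappa])$; the goal is $\intp{T}(\ext(\vrho_\sigma[\kappa], x, a)) \approx \intp{T[q(\vsigma,x)]}(\ext(\vrho'[\kappa], x, a')) \in \Uc_i$. Note that the right-hand environment is $\ext(\vrho'[\kappa], x, a')$, built from $\vrho'$ and \emph{not} $\vrho_\sigma'$, because the $Pi$ domain term on the right captured $\vrho'$; this mismatch is exactly what the next step must repair.

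The crux is to rewrite that right-hand instance. Unfolding $q(\vsigma,x) = (\vsigma\circ\wk_x),x/x$ and the interpretation of substitutions,
\begin{align*}
  \intp{q(\vsigma,x)}(\ext(\vrho'[\kappa],x,a'))
  &= \ext\bigl(\intp{\vsigma}(\drop(\ext(\vrho'[\kappa],x,a'),x)),\,x,\,a'\bigr)\\
  &= \ext(\intp{\vsigma}(\vrho'[\kappa]),x,a')
  = \ext(\vrho_\sigma'[\kappa],x,a'),
\end{align*}
using freshness of $x$ for the $\drop$ step and the naturality equality $\intp{\vsigma}(\vrho'[\kappa]) = \intp{\vsigma}(\vrho')[\kappa]$ for the last step. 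Hence $\intp{T[q(\vsigma,x)]}(\ext(\vrho'[\kappa],x,a')) = \intp{T}(\ext(\vrho_\sigma'[\kappa],x,a'))$, and the goal reduces to $\intp{T}(\ext(\vrho_\sigma[\kappa],x,a)) \approx \intp{T}(\ext(\vrho_\sigma'[\kappa],x,a')) \in \Uc_i$. I would then verify $\ext(\vrho_\sigma[\kappa],x,a) \approx \ext(\vrho_\sigma'[\kappa],x,a') \in \intp{\vDelta;(\Delta,x:S)}$ — its truncation component is $\vrho_\sigma[\kappa] \approx \vrho_\sigma'[\kappa] \in \intp{\vDelta;\Delta}$ by \Cref{lem:dt:intpvg-resp-mon}, and its head component is $a \approx a' \in \El(\intp{S}(\vrho_\sigma[\kappa]))$, obtained from the hypothesis on $a,a'$ by naturality of $S$ and the subsumption $\El_i \subseteq \El$ — and then invoke $\msemtyp[\vDelta;(\Delta,x:S)]{T}{\Se_i}$ to finish.

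Finally, the naturality equalities demanded by the semantic judgment for the two terms follow by a direct computation from the naturality equalities already carried by the premises for $\vsigma$, $S$, and $T$, together with the defining clauses of $\intp{-}$ on $\Pi$, substitution, composition and $\wk_x$. The main obstacle is the middle step: unfolding $q(\vsigma,x)$ and the interpretations of composition and $\wk_x$ correctly, and invoking the naturality equality for $\vsigma$ at exactly the right place so that the right-hand body ends up evaluated in $\ext(\vrho_\sigma'[\kappa],x,a')$, where the premise on $T$ applies; a secondary bookkeeping point is staying at level $\Uc_i$ rather than the limit $\Uc$, which the cumulativity and subsumption-of-limits lemmas handle.
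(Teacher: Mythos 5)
Your proposal is correct and follows essentially the same route as the paper: evaluate both sides to $Pi$ values, fix $\kappa$ and $a \approx a' \in \El_i(\intp{S}(\intp{\vsigma}(\vrho))[\kappa])$, unfold $\intp{q(\vsigma,x)}$ and use the naturality equality for $\vsigma$ so that both bodies are evaluated in environments related in $\intp{\vDelta;(\Delta,x:S)}$ (via \Cref{lem:dt:intpvg-resp-mon}), then apply the premise on $T$; the naturality equalities come from those of the premises. Your write-up is merely more explicit than the paper's (e.g.\ checking the extended environments and the first $Pi$ premise in detail), but the argument is the same.
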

\begin{proof}
  \begin{align*}
    H_1: &\ \msemtyp[\vGamma; \Gamma]{\vsigma}{\vDelta;\Delta}
           \tag{by assumption} \\
    H_2: &\ \msemtyp[\vDelta;\Delta]{S}{\Se_i}
           \tag{by assumption} \\
    H_3: &\ \msemtyp[\vDelta;(\Delta, x : S)]{T}{\Se_i}
           \tag{by assumption} \\
    H_4: &\ \vrho \approx \vrho' \in \intp{\vGamma; \Gamma}
           \tag{by assumption}
  \end{align*}
  We evaluate both sides:
  \begin{align*}
    \intp{\Pi(x : S). T[\vsigma]}(\vrho)
    &= Pi(\intp{S}(\intp{\vsigma}(\vrho)), x.T, (\intp{\vsigma}(\vrho)))  \\
    \intp{{\Pi(x : S[\vsigma]).(T[q(\vsigma, x)])}}(\vrho')
    &= Pi(\intp{S}(\intp{\vsigma}(\vrho')), x.T[q(\vsigma, x)], \vrho') 
  \end{align*}
  We want to related them by $\Uc_i$. Assuming $\kappa$, $a \approx a' \in
  \El_i(\intp{S}(\intp{\vsigma}(\vrho))[\kappa])$,
  \begin{align*}
    \intp{T}(\ext(\intp{\vsigma}(\vrho)[\kappa], x, a)) 
    &= \intp{T}(\ext(\intp{\vsigma}(\vrho[\kappa]), x, a))
      \tag{by naturality equality} \\
    \intp{T}(\intp{q(\vsigma, x)}(\ext(\vrho'[\kappa], x, a'))) 
    &= \intp{T}(\ext(\intp{\vsigma}(\vrho'[\kappa]), x, a'))
  \end{align*}
  They only differ by evaluation environments, so we can conclude the goal by applying
  \Cref{lem:dt:intpvg-resp-mon} and then abstraction. Moreover, natural equalities
  hold because of natural equalities of $T$. 
\end{proof}

\begin{lemma}
  \begin{mathpar}
    \inferrule
    {\msemtyp[\vGamma; \Gamma]{\vsigma}{\vDelta;\Delta} \\
      \msemtyp[\vDelta; \Delta, x : S]{t}{T}}
    {\msemtyeq[\vGamma; \Gamma]{\lambda x. t[\vsigma]}{\lambda x. (t[q(\vsigma, x)])}{\Pi(x : S). T[\vsigma]}}
  \end{mathpar}
\end{lemma}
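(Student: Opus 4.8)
The plan is to follow the template of the preceding semantic-typing lemmas. Let $H_1$ denote $\msemtyp[\vGamma; \Gamma]{\vsigma}{\vDelta;\Delta}$ and $H_2$ denote $\msemtyp[\vDelta; \Delta, x : S]{t}{T}$, and assume $\vrho \approx \vrho' \in \intp{\vGamma; \Gamma}$. From $H_2$ I extract not only the relatedness of $\intp{t}$ but also the semantic well-formedness of $S$ (via $\vDash \vDelta; (\Delta, x : S)$, which is part of $H_2$ since $\msemtyp{t}{T}$ includes $\vDash \vGamma$) and of $T$; these are needed to supply the $\Uc$-component of the conclusion. First I would compute both interpretations, reading $\lambda x. t[\vsigma]$ as $(\lambda x. t)[\vsigma]$ and $\Pi(x:S).T[\vsigma]$ as $(\Pi(x:S).T)[\vsigma]$: we get $\intp{(\lambda x. t)[\vsigma]}(\vrho) = \Lambda(x.t, \intp{\vsigma}(\vrho))$, $\intp{\lambda x. (t[q(\vsigma, x)])}(\vrho') = \Lambda(x. t[q(\vsigma,x)], \vrho')$, and the type unfolds to $Pi(\intp{S}(\intp{\vsigma}(\vrho)), x.T, \intp{\vsigma}(\vrho))$.

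To establish the $\El$-membership via the $Pi$-clause, assume $\kappa$ and $a \approx a' \in \El(\intp{S}(\intp{\vsigma}(\vrho))[\kappa])$; applying $\kappa$ and then $\cdot\,a$, $\cdot\,a'$, the left side reduces to $\intp{t}(\ext(\intp{\vsigma}(\vrho)[\kappa], x, a))$ and the right side to $\intp{t}(\intp{q(\vsigma,x)}(\ext(\vrho'[\kappa], x, a')))$. The key syntactic-to-semantic unfolding is that $\intp{q(\vsigma, x)}(\ext(\vrho'[\kappa], x, a')) = \ext(\intp{\vsigma}(\vrho'[\kappa]), x, a')$, which follows by unfolding $q(\vsigma,x) = (\vsigma \circ \wk_x), x/x$ and using $\intp{\wk_x}(-) = \drop(-, x)$ to cancel the freshly added binding. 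Then I would line up the two environments using the naturality equalities of \Cref{conj:dt:intp-mon}: $\intp{\vsigma}(\vrho)[\kappa] = \intp{\vsigma}(\vrho[\kappa])$ (and likewise for $\vrho'$) together with the naturality equality for $S$, so that the hypothesis on $a \approx a'$ becomes $a \approx a' \in \El(\intp{S}(\intp{\vsigma}(\vrho[\kappa])))$. Combined with $\intp{\vsigma}(\vrho[\kappa]) \approx \intp{\vsigma}(\vrho'[\kappa]) \in \intp{\vDelta; \Delta}$ — obtained from $H_1$ applied to $\vrho[\kappa] \approx \vrho'[\kappa] \in \intp{\vGamma; \Gamma}$, which is valid by \Cref{lem:dt:intpvg-resp-mon} — this yields $\ext(\intp{\vsigma}(\vrho[\kappa]), x, a) \approx \ext(\intp{\vsigma}(\vrho'[\kappa]), x, a') \in \intp{\vDelta; (\Delta, x : S)}$ directly from the definition of the environment relation. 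Feeding this into $H_2$ gives the desired relatedness in $\El(\intp{T}(\ext(\intp{\vsigma}(\vrho[\kappa]), x, a)))$, which equals the required codomain $\El(\intp{T}(\ext(\intp{\vsigma}(\vrho)[\kappa], x, a)))$ again by naturality of $\vsigma$.

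The $\Uc$-component, $\intp{(\Pi(x:S).T)[\vsigma]}(\vrho) \approx \intp{(\Pi(x:S).T)[\vsigma]}(\vrho') \in \Uc$, is discharged by the $Pi$-clause of $\Uc$ reusing exactly these environment computations: the first subgoal uses semantic well-formedness of $S$ together with \Cref{lem:dt:u-el-resp-mon} and abstraction, and the second uses the $\ext$-relatedness just established together with semantic well-formedness of $T$ from $H_2$. Finally, the two naturality equalities for the conclusion are immediate from \Cref{conj:dt:intp-mon} and the naturality of $\vsigma$ in $H_1$, since both terms evaluate to a $\Lambda$ whose captured environment commutes with any $\kappa$. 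The only real obstacle is the bookkeeping: one must stay disciplined about which $\El(A)$ one is proving membership in and invoke the naturality equalities for $\vsigma$, $S$, and $T$ at precisely the points where the two sides' environments would otherwise fail to match syntactically — there is no conceptual difficulty beyond this, the argument being the dependently typed analogue of the simply typed substitution-into-$\lambda$ rule with the $\Uc$-component handled by the same manipulations.
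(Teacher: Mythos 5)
Your proposal is correct and follows essentially the same route as the paper's proof: evaluate both sides to $\Lambda(x.t, \intp{\vsigma}(\vrho))$ and $\Lambda(x.t[q(\vsigma,x)], \vrho')$, then for given $\kappa$ and $a \approx a' \in \El(\intp{S}(\intp{\vsigma}(\vrho))[\kappa])$ reduce both applications to evaluations of $t$ in extended environments (unfolding $q(\vsigma,x)$ via $\intp{\wk_x} = \drop$) and apply the premise $\msemtyp[\vDelta; \Delta, x : S]{t}{T}$, concluding by abstraction. You in fact spell out details the paper leaves implicit — the naturality bookkeeping aligning $\intp{\vsigma}(\vrho)[\kappa]$ with $\intp{\vsigma}(\vrho[\kappa])$, the use of \Cref{lem:dt:intpvg-resp-mon}, and the $\Uc$-component of the conclusion — so nothing is missing.
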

\begin{proof}
  \begin{align*}
    H_1: &\ \msemtyp[\vGamma; \Gamma]{\vsigma}{\vDelta;\Delta}
           \tag{by assumption} \\
    H_2: &\ \msemtyp[\vDelta; \Delta, x : S]{t}{T}
           \tag{by assumption} \\
    H_3: &\ \vrho \approx \vrho' \in \intp{\vGamma; \Gamma}
           \tag{by assumption}
  \end{align*}
  We evaluate both sides:
  \begin{align*}
    \intp{\lambda x. t[\vsigma]}(\vrho)
    &= \Lambda(x.t, \intp{\vsigma}(\vrho)) \\
    \intp{\lambda x. (t[q(\vsigma, x)])}(\vrho)
    &= \Lambda(x.t[q(\vsigma, x)], \vrho)
  \end{align*}
  Assuming $\kappa$, $a \approx a' \in \El_i(\intp{S}(\intp{\vsigma}(\vrho))[\kappa])$,
  we applying both sides to them, and then from $H_2$, we have
  \begin{align*}
    \intp{t}(\ext(\intp{\vsigma}(\vrho)[\kappa], x, a))
    \approx \intp{t}(\ext(\intp{\vsigma}(\vrho')[\kappa], x, a'))
    \in \El(\intp{T}(\ext(\intp{\vsigma}(\vrho)[\kappa], x, a)))
  \end{align*}
  By abstraction, we have
  \begin{align*}
    \intp{\lambda x. t[\vsigma]}(\vrho)
    \approx \intp{\lambda x. (t[q(\vsigma, x)])}(\vrho)
    \in \El(\intp{\Pi(x : S). T[\vsigma]}(\vrho))
  \end{align*}
\end{proof}

\begin{lemma}
  \begin{mathpar}
    \inferrule
    {\msemtyp{\vsigma}{\vDelta} \\
      \msemtyp[\vDelta]{s}{\Pi(x : S). T} \\
      \msemtyp[\vDelta]{t}{S}}
    {\msemtyeq{s\ t[\vsigma]}{(s[\vsigma])\ (t[\vsigma])}{T[\vsigma, t[\vsigma]/x]}}
  \end{mathpar}
\end{lemma}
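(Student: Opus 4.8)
The plan is to follow the definitions directly, in the same spirit as the analogous simply typed rule $\msemtyeq{s\ t[\vsigma]}{(s[\vsigma])\ (t[\vsigma])}{T}$, adding only a short computation to pin down the dependent result type. First I would unpack the three premises; their well-formedness conjuncts $\vDash \vGamma$ and $\vDash \vDelta$ come for free, so it remains to fix $\vrho \approx \vrho' \in \intp{\vGamma}$ and compare the two sides pointwise. Using $\intp{u[\vsigma]}(\vrho) = \intp{u}(\intp{\vsigma}(\vrho))$ and the fact that $\intp{s\ t}$ evaluates to the partial application $\intp{s} \cdot \intp{t}$, the left-hand term evaluates under $\vrho$ to $\intp{s}(\intp{\vsigma}(\vrho)) \cdot \intp{t}(\intp{\vsigma}(\vrho))$, while the right-hand term evaluates under $\vrho'$ to $\intp{s}(\intp{\vsigma}(\vrho')) \cdot \intp{t}(\intp{\vsigma}(\vrho'))$; the two differ only in which environment is threaded through $\vsigma$.

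Next I would apply $\msemtyp{\vsigma}{\vDelta}$ to get $\intp{\vsigma}(\vrho) \approx \intp{\vsigma}(\vrho') \in \intp{\vDelta}$, and then $\msemtyp[\vDelta]{s}{\Pi(x : S). T}$ and $\msemtyp[\vDelta]{t}{S}$ to obtain $\intp{s}(\intp{\vsigma}(\vrho)) \approx \intp{s}(\intp{\vsigma}(\vrho')) \in \El(Pi(\intp{S}(\intp{\vsigma}(\vrho)), x.T, \intp{\vsigma}(\vrho)))$ together with $\intp{t}(\intp{\vsigma}(\vrho)) \approx \intp{t}(\intp{\vsigma}(\vrho')) \in \El(\intp{S}(\intp{\vsigma}(\vrho)))$. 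Instantiating the $Pi$ clause of $\El$ at the identity transformation $\vone$ (so that $a[\vone] = a$ and $\vrho[\vone] = \vrho$) and feeding in these two facts yields exactly
\begin{align*}
  \intp{s}(\intp{\vsigma}(\vrho)) \cdot \intp{t}(\intp{\vsigma}(\vrho)) \approx \intp{s}(\intp{\vsigma}(\vrho')) \cdot \intp{t}(\intp{\vsigma}(\vrho')) \in \El(\intp{T}(\ext(\intp{\vsigma}(\vrho), x, \intp{t}(\intp{\vsigma}(\vrho))))),
\end{align*}
which is the desired relation for the two terms.

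It then remains to check that the index above is the denotation of the claimed type: $\intp{T[\vsigma, t[\vsigma]/x]}(\vrho) = \intp{T}(\intp{\vsigma, t[\vsigma]/x}(\vrho)) = \intp{T}(\ext(\intp{\vsigma}(\vrho), x, \intp{t}(\intp{\vsigma}(\vrho))))$, which matches. The corresponding computation under $\vrho'$, combined with the $Pi$ clause of $\Uc$ applied to $\intp{\Pi(x : S). T}(\intp{\vsigma}(\vrho)) \approx \intp{\Pi(x : S). T}(\intp{\vsigma}(\vrho'))$ (part of $\msemtyp[\vDelta]{s}{\Pi(x : S). T}$), gives $\intp{T[\vsigma, t[\vsigma]/x]}(\vrho) \approx \intp{T[\vsigma, t[\vsigma]/x]}(\vrho') \in \Uc$. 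Finally the naturality equalities hold for the same reason as in the simply typed case: each of $\intp{s}$, $\intp{t}$, $\intp{\vsigma}$ and $\_ \cdot \_$ commutes with \UnMoTs by the naturality equalities already established, and these compose.

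I expect no real obstacle here; the only step that needs a moment's attention is the middle one, where one must line up the type at which $\intp{s}(\intp{\vsigma}(\vrho)) \cdot \intp{t}(\intp{\vsigma}(\vrho))$ lives — obtained by unfolding $\El$ of a $Pi$ value after one application — with $\intp{T[\vsigma, t[\vsigma]/x]}(\vrho)$, and observe that the $\vone$-instance of the $Pi$ clause of $\El$ is precisely the statement about $\_ \cdot \_$ that we need. Everything else is immediate from the definitions.
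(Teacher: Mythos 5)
Your proposal is correct and follows the same route as the paper, whose entire proof is ``Apply the logical relation of $Pi$'': you simply spell out that application, instantiating the $\El$ and $\Uc$ clauses of $Pi$ at the identity \UnMoT $\vone$ and matching the evaluated index with $\intp{T[\vsigma, t[\vsigma]/x]}(\vrho)$. No gaps; the extra detail you give (the type computation and the naturality remark) is exactly what the paper leaves implicit.
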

\begin{proof}
  Apply the logical relation of $Pi$. 
\end{proof}

\subsubsection{Natural Numbers}

\begin{lemma}
  \begin{mathpar}
    \inferrule
    {\vDash \vGamma}
    {\msemtyeq \Nat \Nat \Se_i}

    \inferrule
    {\vDash \vGamma}
    {\msemtyeq{{\ze}}{\ze} \Nat}

    \inferrule
    {\msemtyeq{t}{t'}\Nat}
    {\msemtyeq{\su t}{\su t'}\Nat}
  \end{mathpar}
\end{lemma}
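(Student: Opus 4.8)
The plan is to unfold the definition of the semantic equivalence judgment and discharge each of the three rules by direct computation, exactly mirroring the natural-number congruence cases of the simply typed model. Recall that $\msemtyeq t {t'} T$ asks for $\vDash \vGamma$, together with $\intp{T}(\vrho) \approx \intp{T}(\vrho') \in \Uc$ and $\intp{t}(\vrho) \approx \intp{t'}(\vrho') \in \El(\intp{T}(\vrho))$ for all $\vrho \approx \vrho' \in \intp{\vGamma}$, plus the naturality equalities $\intp{t}(\vrho[\kappa]) = \intp{t}(\vrho)[\kappa]$ and likewise for $t'$. So for each rule I would first state $\vDash \vGamma$ (either a hypothesis or inherited from the premise), then fix $\vrho \approx \vrho' \in \intp{\vGamma}$ and compute.

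For $\msemtyeq \Nat \Nat \Se_i$: we have $\intp{\Se_i}(\vrho) = \Ud_i = \intp{\Se_i}(\vrho')$; since $\Ud_i \approx \Ud_i \in \Uc_{i+1}$ and hence $\Ud_i \approx \Ud_i \in \Uc$ by subsumption of limits, the type part holds, and because $\El(\Ud_i) = \Uc_i$ the element part reduces to $\intp{\Nat}(\vrho) = \Nd \approx \Nd = \intp{\Nat}(\vrho') \in \Uc_i$, which is precisely the $\Nd$ clause of $\Uc_i$. The naturality equalities are $\Nd[\kappa] = \Nd$ and $\Ud_i[\kappa] = \Ud_i$, definitional. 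For $\msemtyeq \ze \ze \Nat$: $\intp{\Nat}(\vrho) = \Nd \approx \Nd \in \Uc$ as above, and since $\El(\Nd) = Nat$ we need $\intp{\ze}(\vrho) = \zed \approx \zed = \intp{\ze}(\vrho') \in Nat$, which is the base rule for $Nat$; naturality is $\zed[\kappa] = \zed$. For $\msemtyeq {\su t}{\su t'}\Nat$ from $\msemtyeq t {t'}\Nat$: the premise gives $\vDash \vGamma$, $\intp{t}(\vrho) \approx \intp{t'}(\vrho') \in \El(\Nd) = Nat$, and the naturality equalities for $t$ and $t'$. Since $\intp{\su t}(\vrho) = \sud{\intp{t}(\vrho)}$ and similarly for $t'$, the successor clause of $Nat$ yields $\sud{\intp{t}(\vrho)} \approx \sud{\intp{t'}(\vrho')} \in Nat$; the type part $\Nd \approx \Nd \in \Uc$ is immediate; and naturality follows from $\sud a [\kappa] = \sud{a[\kappa]}$ combined with the inherited naturality equality of the premise.

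None of these steps constitutes a real obstacle: all three are immediate from the $Nat$ and $\Uc_i$ PER clauses and from the fact that $\Nd$, $\zed$, $\Ud_i$ are invariant under \UnMoTs while $\su$ commutes with them. The only place a hypothesis is used with any content is the successor rule, where the semantic equivalence and the naturality equality of the premise are passed straight through. Accordingly I would present the three rules together with a one-line proof marked "immediate by the definitions," observing that the argument is structurally identical to the corresponding congruence rules in the simply typed completeness proof, with $\top^B$-style reasoning replaced by the $Nat$ and $\Uc_i$ clauses.
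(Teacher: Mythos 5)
Your proposal is correct and matches the paper, which simply discharges these rules as immediate from the definitions: unfolding the semantic judgment, using the $\Nd$, $\Ud_i$ and $Nat$ clauses of the PER model, and noting that $\Nd$, $\Ud_i$, $\zed$ are invariant under \UnMoTs while $\sud{}$ commutes with them is exactly the intended argument. The only content-bearing step, passing the premise's relatedness and naturality through the successor clause, is handled as you describe.
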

\begin{proof}
  Immediate. 
\end{proof}

We need the following helper lemma to reason about the partial $\trec$ function:
\begin{lemma}\labeledit{lem:dt:rec-per-helper}
  If 
  \begin{itemize}
  \item $\vrho \approx \vrho' \in \intp{\vGamma; \Gamma}$,
  \item $\msemtyeq[\vGamma; (\Gamma, x : \Nat)]{M}{M'}{\Se_i}$,
    
  \item $\msemtyeq[\vGamma; \Gamma]{s}{s'}{M[\vect I, \ze/x]}$,
  \item $\msemtyeq[\vGamma; (\Gamma, x : \Nat, y : M)]{u}{u'}{M[\wk_x \circ \wk_y, \su
      x/x]}$ and,
  \item $a \approx a' \in Nat$,
  \end{itemize}
  then
  $\trec \cdot (x.M, \intp{s}(\vrho), (x,y.u), a, \vrho) \approx \trec
  \cdot (x.M', \intp{s'}(\vrho'), (x,y.u'), a', \vrho') \in
    \El(\intp{M}(\ext(\vrho, x, a)))$
\end{lemma}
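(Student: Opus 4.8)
The plan is to proceed by induction on the derivation of $a \approx a' \in Nat$. This relation has exactly three generating rules, and in each the heads of $a$ and $a'$ coincide ($\zed$, $\sud{(\cdot)}$, or $\uparrow^{\Nd}(\cdot)$), which pins the partial function $\trec\cdot$ to the same defining clause on both sides and so guarantees that both calls are defined whenever the semantic hypotheses hold. Before the case split I record two evaluation identities that are immediate from the definition of $\intp{\vsigma}(\cdot)$: $\intp{M[\vect I, \ze/x]}(\vrho) = \intp{M}(\ext(\vrho, x, \zed))$ and, for any $b$ and $r$, $\intp{M[\wk_x \circ \wk_y, \su x/x]}(\ext(\vrho, x, b, y, r)) = \intp{M}(\ext(\vrho, x, \sud b))$. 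Since $\intp{\Nat}(\cdot) = \Nd$ and $Nat = \El(\Nd)$, any $b \approx b' \in Nat$ lets us form $\ext(\vrho, x, b) \approx \ext(\vrho', x, b') \in \intp{\vGamma; (\Gamma, x : \Nat)}$; consequently $\intp{M}(\ext(\vrho, x, a)) \approx \intp{M'}(\ext(\vrho', x, a')) \in \Uc$ by the semantic equality of $M$ and $M'$, so the target relation $\El(\intp{M}(\ext(\vrho, x, a)))$ is defined and, by \Cref{lem:dt:el-resp-u}, coincides with $\El(\intp{M'}(\ext(\vrho', x, a')))$ — hence it is immaterial which motive one reifies against.

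For $a = a' = \zed$ both sides of $\trec\cdot$ reduce to $\intp{s}(\vrho)$ and $\intp{s'}(\vrho')$, and the hypothesis on $s$ at $\vrho \approx \vrho'$ gives $\intp{s}(\vrho) \approx \intp{s'}(\vrho') \in \El(\intp{M[\vect I, \ze/x]}(\vrho))$, which is the goal after rewriting with the first evaluation identity. For $a = \sud b$, $a' = \sud b'$ with $b \approx b' \in Nat$, unfolding $\trec\cdot$ yields $\intp{u}(\ext(\vrho, x, b, y, r))$ and $\intp{u'}(\ext(\vrho', x, b', y, r'))$, where $r, r'$ are the corresponding recursive $\trec\cdot$-calls on $b, b'$. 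The induction hypothesis gives $r \approx r' \in \El(\intp{M}(\ext(\vrho, x, b)))$, so $\ext(\vrho, x, b, y, r) \approx \ext(\vrho', x, b', y, r') \in \intp{\vGamma; (\Gamma, x : \Nat, y : M)}$, and the hypothesis on $u$ then delivers the goal after rewriting with the second evaluation identity.

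The substance of the argument is the neutral case $a = \uparrow^{\Nd}(c)$, $a' = \uparrow^{\Nd}(c')$ with $c \approx c' \in \bot$. Here $\trec\cdot$ returns $\uparrow^{A}(C)$ on the left, where $A := \intp{M}(\ext(\vrho, x, \uparrow^{\Nd}(c)))$ and $C$ is the $\rec$-neutral built from $x.M$, $\intp{s}(\vrho)$, $(x,y.u)$, $c$, $\vrho$, and the primed analogue on the right. By the realizability theorem it suffices to show $C \approx C' \in \bot$, i.e. that $\Rne_{\alpha}(C[\kappa]) = \Rne_{\alpha}(C'[\kappa])$ for all $\alpha, \kappa$; using \Cref{conj:dt:rec.-mon} and the naturality equalities \Cref{conj:dt:intp-mon} to absorb $\kappa$ into $\vrho$ and $c$ (still related, by \Cref{lem:dt:intpvg-resp-mon} and closure of $\bot$ under $\kappa$), this reduces to equating a single $\Rne$-readback of the $\rec$-neutral for an arbitrary $\alpha$. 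Expanding that readback leaves four subgoals, each discharged by pushing the relevant semantic equality through the realizability theorem to land in $\top$ (or, for the scrutinee, directly in $\bot$): (i) the motive reifies equally, from the equality of $M$ at $\ext(\vrho, x, \uparrow^{\Nd}(z)) \approx \ext(\vrho', x, \uparrow^{\Nd}(z))$, legitimate since $\uparrow^{\Nd}(z) \approx \uparrow^{\Nd}(z) \in Nat$ by \Cref{lem:dt:varbot} and realizability; (ii) the base case reifies equally, from the hypothesis on $s$; (iii) the step case reifies equally, from the hypothesis on $u$ evaluated at $\ext(\vrho, x, \uparrow^{\Nd}(z), y, \uparrow^{A}(z')) \approx \ext(\vrho', x, \uparrow^{\Nd}(z), y, \uparrow^{A'}(z'))$, a member of $\intp{\vGamma; (\Gamma, x : \Nat, y : M)}$ because $\uparrow^{A}(z') \approx \uparrow^{A'}(z') \in \El(A)$ by realizability applied to $A \approx A' \in \Uc$ and $z' \approx z' \in \bot$; and (iv) the scrutinee reifies equally, directly from $c \approx c' \in \bot$. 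Hence $C \approx C' \in \bot$, and realizability gives $\uparrow^{A}(C) \approx \uparrow^{A'}(C') \in \El(A)$, i.e. the goal.

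The step I expect to demand the most care is this neutral case: one must feed the freshly generated domain variables $z = \tnext(\alpha)$ and $z' = \tnext(\alpha[z \mapsto x])$ into precisely the evaluation environments that occur inside the definition of $\Rne$ of $\rec$, keep every sub-argument uniform in $\kappa$ (relying on the naturality equalities for evaluation, for $\trec\cdot$, and for application, together with \Cref{lem:dt:intpvg-resp-mon}), and reconcile the semantic hypotheses — stated at a fixed universe level $i$ — with the limit PERs $\Uc$ and $\El$ used in the conclusion, which is routine given the cumulativity and subsumption lemmas for the PER model. Everything else reduces to the two evaluation identities recorded above and the already-established closure and realizability properties.
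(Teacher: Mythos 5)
Your proof is correct and follows essentially the same route as the paper: induction on $a \approx a' \in Nat$, with the zero and successor cases discharged by the hypotheses on $s$ and $u$ plus the induction hypothesis, and the neutral case handled by showing the two $\rec$-neutrals are related in $\bot$ (reading back motive, base, step, and scrutinee via realizability and fresh domain variables) and then lifting to $\El$ by realizability. The only cosmetic difference is your appeal to \Cref{conj:dt:rec.-mon} to absorb $\kappa$ in the neutral case, where the paper instead uses the definitional action of $\kappa$ on the syntactic $\rec$-neutral (the paper in fact remarks that the $\trec\cdot$ naturality equality is itself established by this helper lemma), but this does not affect the substance of the argument.
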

\begin{proof}
  We do induction on $a \approx a' \in Nat$.
  \begin{itemize}
  \item $a = a' = \zed$,
    \begin{mathpar}
      \inferrule
      { }
      {\zed \approx \zed \in Nat}
    \end{mathpar}
    then
    \begin{align*}
      \trec \cdot (x.M \lhd \vrho, \intp{s}(\vrho), (x,y.u \lhd \vrho), \zed)
      &= \intp{s}(\vrho) \\
      &\approx \intp{s'}(\vrho') \\
      &= \trec \cdot (x.M' \lhd \vrho', \intp{s'}(\vrho'), (x,y.u' \lhd \vrho'), \zed)
    \end{align*}
    Assuming $\kappa$, we have
    \begin{align*}
      &\ \trec \cdot ((x.M \lhd \vrho), \intp{s}(\vrho), (x,y.u\lhd \vrho),
        \zed)[\kappa] \\
      =&\ \intp{s}(\vrho)[\kappa] \\
      =&\ \intp{s}(\vrho[\kappa])
        \tag{by naturality equality} \\
      =&\ \trec \cdot ((x.M \lhd \vrho[\kappa]), \intp{s}(\vrho[\kappa]), (x,y.u\lhd
   \vrho[\kappa]), \zed[\kappa])
    \end{align*}
    The other equality is proved symmetrically. 

  \item $a = \sud b$ and $a' = \sud{b'}$, 
    \begin{mathpar}
      \inferrule
      {b \approx b' \in Nat}
      {\sud b \approx \sud{b'} \in Nat}
    \end{mathpar}
    By IH, we have
    $\trec \cdot (x.M \lhd \vrho, \intp{s}(\vrho), (x,y.u \lhd \vrho), b) \approx
    \trec \cdot (x.M' \lhd \vrho', \intp{s'}(\vrho'), (x,y.u' \lhd \vrho'), b') \in
    \El(\intp{M}(\ext(\vrho, x, b)))$.
    We let
    \begin{align*}
      r_1 &:= \trec \cdot (x.M \lhd \vrho, \intp{s}(\vrho), (x,y.u \lhd \vrho), b) \\
      r_2 &:= \trec \cdot (x.M' \lhd \vrho', \intp{s'}(\vrho'), (x,y.u' \lhd \vrho'), b')
    \end{align*}
    Due to
    $\msemtyeq[\vGamma; (\Gamma, x : \Nat, y : M)]{u}{u'}{M[\wk_x \circ \wk_y, \su x/x]}$,
    we further have
    \begin{align*}
      \intp{u}(\ext(\vrho, x, b, y, r_1)) \approx \intp{u'}(\ext(\vrho, x, b', y,
      r_2))
      \in \El(\intp{M}(\ext(\vrho, x, \sud b)))
    \end{align*}
    Notice that this is our first goal. 
    
    We further assume $\kappa$. The equalities can be easily established by IH.
    
  \item $a = \uparrow^\Nd(c)$ and $a' = \uparrow^\Nd(c')$, 
    \begin{mathpar}
      \inferrule*
      {c \approx c' \in \bot}
      {\uparrow^\Nd(c) \approx \uparrow^\Nd(c') \in Nat}
    \end{mathpar}
    The first goal is established by first showing the desired values are related by
    $\bot$, and then realizability shows they are also related in the target PER. Assuming $\alpha$, $\kappa$,
    \begin{align*}
      &\ \uparrow^\Nd(z) \approx \uparrow^\Nd(z) \in Nat
        \tag{for some $z$} \\
      &\ \uparrow^{\intp{M}(\ext(\vrho[\kappa], x, \uparrow^\Nd(z)))}(z') \approx
        \uparrow^{\intp{M'}(\ext(\vrho'[\kappa], x, \uparrow^\Nd(z)))}(z') \in
        \El(\intp{M}(\ext(\vrho[\kappa], x, \uparrow^\Nd(z))))
        \tag{for some $z'$} \\
      &\ \intp{M}(\ext(\vrho[\kappa], x, \uparrow^\Nd(z))) \approx \intp{M'}(\ext(\vrho'[\kappa], x, \uparrow^\Nd(z))) \in \Uc_i \\
      &\ \downarrow^{\Ud_i}(\intp{M}(\ext(\vrho[\kappa], x, \uparrow^\Nd(z))))
        \approx \downarrow^{\Ud_i}(\intp{M'}(\ext(\vrho'[\kappa], x, \uparrow^\Nd(z))))
        \in \top
        \tag{by realizability}
    \end{align*}
    and similar for $u$ and $u'$. Since all components have equal normal forms, by
    applying realizability, we get the conclusion.

    The equalities hold by definition. 
  \end{itemize}
\end{proof}
Notice that \Cref{conj:dt:rec.-mon} is proved in this helper lemma given terms are
well-typed and existing setup in the logical relations is sufficient to establish this
fact. 

Rules related to induction on natural numbers are consequence of this helper lemma.
this 
\begin{lemma}
  \begin{mathpar}
    \inferrule
    {\msemtyeq[\vGamma; (\Gamma, x : \Nat)]{M}{M'}{\Se_i} \\ \msemtyeq[\vGamma; \Gamma]{s}{s'}{M[\vect I, \ze/x]} \\
      \msemtyeq[\vGamma; (\Gamma, x : \Nat, y : M)]{u}{u'}{M[\wk_x \circ \wk_y, \su x/x]} \\ \msemtyeq[\vGamma; \Gamma]{t}{t'}\Nat}
    {\msemtyeq[\vGamma; \Gamma]{\elimn {x.M} s {x,y.u} t}{\elimn {x.M'}{s'}{x,y.u'}{t'}}{M[\vect I, t/x]}}
  \end{mathpar}
\end{lemma}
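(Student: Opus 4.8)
The plan is to reduce the statement almost entirely to the helper lemma \Cref{lem:dt:rec-per-helper}, which already contains the recursion-on-$Nat$ argument. Assume $\vrho \approx \vrho' \in \intp{\vGamma; \Gamma}$; by the definition of the semantic equivalence judgment I must establish three things: that $\intp{M[\vect I, t/x]}(\vrho) \approx \intp{M[\vect I, t/x]}(\vrho') \in \Uc$, that the two eliminators are related in $\El(\intp{M[\vect I, t/x]}(\vrho))$, and the naturality equalities for both eliminators.

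First I would unfold the relevant semantics. Since $\intp{\Nat}(\vrho) = \Nd$ and $\El(\Nd) = Nat$, the premise $\msemtyeq[\vGamma; \Gamma]{t}{t'}{\Nat}$ gives $\intp{t}(\vrho) \approx \intp{t'}(\vrho') \in Nat$; combined with $\vrho \approx \vrho' \in \intp{\vGamma; \Gamma}$ and the definition of $\intp{\vGamma; (\Gamma, x : \Nat)}$, this yields $\ext(\vrho, x, \intp{t}(\vrho)) \approx \ext(\vrho', x, \intp{t'}(\vrho')) \in \intp{\vGamma; (\Gamma, x : \Nat)}$. Feeding this into the premise $\msemtyeq[\vGamma; (\Gamma, x : \Nat)]{M}{M'}{\Se_i}$ discharges the first obligation, once I record that $\intp{M[\vect I, t/x]}(\vrho) = \intp{M}(\ext(\vrho, x, \intp{t}(\vrho)))$ by the evaluation clauses for substitution application, for $\intp{\vect I}$, and for the extension operator (and likewise for $\vrho'$).

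For the second obligation, the evaluation clause for $\elimn{}{}{}{}$ gives $\intp{\elimn{x.M}{s}{x,y.u}{t}}(\vrho) = \trec \cdot (x.M, \intp{s}(\vrho), (x,y.u), \intp{t}(\vrho), \vrho)$ and similarly for the primed term. I would then apply \Cref{lem:dt:rec-per-helper} with $a := \intp{t}(\vrho)$ and $a' := \intp{t'}(\vrho')$ — its hypotheses are exactly the four premises of this lemma together with $\vrho \approx \vrho' \in \intp{\vGamma; \Gamma}$ and $a \approx a' \in Nat$ established above — to obtain the two $\trec \cdot$ values related in $\El(\intp{M}(\ext(\vrho, x, \intp{t}(\vrho))))$, which by the identity noted above is exactly $\El(\intp{M[\vect I, t/x]}(\vrho))$. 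That is precisely what is needed.

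Finally, the naturality equalities follow from \Cref{conj:dt:rec.-mon} (which, as the text remarks, is in fact established inside \Cref{lem:dt:rec-per-helper}) together with the naturality equalities carried by the premises for $s$, $u$, $t$, and $M$: for any $\kappa$, pushing $\kappa$ through $\trec \cdot$ and through each evaluated component commutes with evaluating at $\vrho[\kappa]$. I do not expect a genuine obstacle here. The only care required is the bookkeeping that identifies $\intp{M[\vect I, t/x]}(\vrho)$ with $\intp{M}(\ext(\vrho, x, \intp{t}(\vrho)))$ and checks that the environment extensions land in the correct PER, both of which are immediate from the definitions; all the real work sits in the already-proven helper lemma.
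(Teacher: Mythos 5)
Your proposal is correct and follows exactly the paper's route: the paper proves this lemma in one line as "Immediate by" the helper lemma \Cref{lem:dt:rec-per-helper}, and your argument is just a careful spelling-out of that same reduction (instantiating the helper with $a := \intp{t}(\vrho)$, $a' := \intp{t'}(\vrho')$, identifying $\intp{M[\vect I, t/x]}(\vrho)$ with $\intp{M}(\ext(\vrho, x, \intp{t}(\vrho)))$, and noting the naturality equalities). No issues to report.
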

\begin{proof}
  Immediate by \Cref{lem:dt:rec-per-helper}. 
\end{proof}

\begin{lemma}
  \begin{mathpar}
  \inferrule
  {\msemtyp[\vGamma; (\Gamma, x : \Nat)]{M}{\Se_i} \\ \msemtyp[\vGamma; \Gamma]{s}{M[\vect I, \ze/x]} \\
    \msemtyp[\vGamma; (\Gamma, x : \Nat, y : M)]{u}{M[\wk_x \circ \wk_y, \su x/x]}}
  {\msemtyeq[\vGamma; \Gamma]{\elimn {x.M} s {x,y.u} \ze}{s}{M[\vect I, \ze/x]}}  
\end{mathpar}
\end{lemma}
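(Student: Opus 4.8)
The plan is to unfold the definition of semantic typing equivalence and reduce the left-hand side by one step of the partial recursion $\trec\cdot$. Fix $\vrho \approx \vrho' \in \intp{\vGamma; \Gamma}$. First I would compute the left-hand side:
\begin{align*}
  \intp{\elimn{x.M}{s}{x,y.u}{\ze}}(\vrho)
  &= \trec \cdot (x.M, \intp{s}(\vrho), (x,y.u), \intp{\ze}(\vrho), \vrho) \\
  &= \trec \cdot (x.M, \intp{s}(\vrho), (x,y.u), \zed, \vrho) \\
  &= \intp{s}(\vrho),
\end{align*}
where the last step is the first defining clause of $\trec\cdot$ and the middle step uses $\intp{\ze}(\vrho) = \zed$. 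On the type side, unfolding the interpretation of substitutions gives $\intp{M[\vect I, \ze/x]}(\vrho) = \intp{M}(\ext(\vrho, x, \zed))$, and likewise for $\vrho'$.

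Then the remaining obligations are supplied directly by the premises. Since $\zed \approx \zed \in Nat$, the environment extensions satisfy $\ext(\vrho, x, \zed) \approx \ext(\vrho', x, \zed) \in \intp{\vGamma; (\Gamma, x : \Nat)}$; feeding this into $\msemtyp[\vGamma; (\Gamma, x : \Nat)]{M}{\Se_i}$ yields $\intp{M[\vect I, \ze/x]}(\vrho) \approx \intp{M[\vect I, \ze/x]}(\vrho') \in \Uc$. Feeding $\vrho \approx \vrho'$ into $\msemtyp[\vGamma; \Gamma]{s}{M[\vect I, \ze/x]}$ yields $\intp{s}(\vrho) \approx \intp{s}(\vrho') \in \El(\intp{M[\vect I, \ze/x]}(\vrho))$, which after the reduction above is exactly $\intp{\elimn{x.M}{s}{x,y.u}{\ze}}(\vrho) \approx \intp{s}(\vrho') \in \El(\intp{M[\vect I, \ze/x]}(\vrho))$. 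Alternatively, one can invoke \Cref{lem:dt:rec-per-helper} specialised to $M = M'$, $s = s'$, $u = u'$ and $a = a' = \zed$, whose $\zed$-case performs precisely this computation.

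Finally, for the naturality equalities, fix $\kappa$. Both $\intp{\elimn{x.M}{s}{x,y.u}{\ze}}(\vrho[\kappa])$ and $\intp{\elimn{x.M}{s}{x,y.u}{\ze}}(\vrho)[\kappa]$ reduce, by the same step as above together with \Cref{conj:dt:intp-mon}, to $\intp{s}(\vrho[\kappa])$ and $\intp{s}(\vrho)[\kappa]$ respectively, so the equality follows from the naturality equality of $s$ carried by $\msemtyp[\vGamma; \Gamma]{s}{M[\vect I, \ze/x]}$. There is no real obstacle here: the only point that needs a little care is the routine bookkeeping that $\intp{M[\vect I, \ze/x]}(\vrho) = \intp{M}(\ext(\vrho, x, \zed))$ and that the extended environment lands in $\intp{\vGamma; (\Gamma, x : \Nat)}$, both immediate once the relevant definitions are spelled out.
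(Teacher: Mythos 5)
Your proof is correct and takes essentially the same route as the paper, which simply declares this case ``Immediate'': unfolding the evaluation so that $\trec\cdot$ reduces at $\zed$ to $\intp{s}(\vrho)$ and then appealing to the premise for $s$ (together with the routine type-side and naturality bookkeeping) is exactly the content of that remark.
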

\begin{proof}
  Immediate.
\end{proof}

\begin{lemma}
  \begin{mathpar}
    \inferrule
    {\msemtyp[\vGamma; (\Gamma, x : \Nat)]{M}{\Se_i} \\ \msemtyp[\vGamma; \Gamma]{s}{M[\vect I, \ze/x]} \\
      \msemtyp[\vGamma; (\Gamma, x : \Nat, y : M)]{u}{M[\wk_x \circ \wk_y, \su x/x]} \\ \msemtyp[\vGamma; \Gamma] t \Nat}
    {\msemtyeq[\vGamma; \Gamma]{\elimn{x.M}s{x,y.u}{(\su t)}}{u[\vect I, t/x, 
        \elimn {x.M} s {x,y.u} t/y]} {M[\vect I, \su t/x]}}
  \end{mathpar}
\end{lemma}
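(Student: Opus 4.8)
The plan is to unfold the evaluation function on both sides of the equation and then reduce everything to the helper lemma \Cref{lem:dt:rec-per-helper}. Fix $\vrho \approx \vrho' \in \intp{\vGamma; \Gamma}$. Since $\intp{\su t}(\vrho) = \sud{\intp{t}(\vrho)}$, the successor clause of the partial recursor gives
\begin{align*}
  \intp{\elimn{x.M}s{x,y.u}{(\su t)}}(\vrho)
  &= \trec \cdot (x.M, \intp{s}(\vrho), (x,y.u), \sud{\intp{t}(\vrho)}, \vrho) \\
  &= \intp{u}(\ext(\vrho, x, \intp{t}(\vrho), y, r)),
\end{align*}
writing $r := \trec \cdot (x.M, \intp{s}(\vrho), (x,y.u), \intp{t}(\vrho), \vrho)$. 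On the right-hand side, unfolding $\intp{\vect I, t/x, \elimn{x.M}s{x,y.u}t/y}(\vrho')$ produces $\ext(\vrho', x, \intp{t}(\vrho'), y, r')$ with $r' := \trec \cdot (x.M, \intp{s}(\vrho'), (x,y.u), \intp{t}(\vrho'), \vrho')$, so $\intp{u[\vect I, t/x, \elimn{x.M}s{x,y.u}t/y]}(\vrho') = \intp{u}(\ext(\vrho', x, \intp{t}(\vrho'), y, r'))$. Thus both sides are $\intp{u}$ applied to two environments, and it remains to relate them in $\intp{\vGamma; (\Gamma, x : \Nat, y : M)}$ and to identify the codomain PER.

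First I would relate the environments. From $\msemtyp[\vGamma; \Gamma]{t}{\Nat}$ (using reflexivity of semantic typing for the equivalence form) I get $\intp{t}(\vrho) \approx \intp{t}(\vrho') \in \El(\intp{\Nat}(\vrho)) = Nat$, hence $\ext(\vrho, x, \intp{t}(\vrho)) \approx \ext(\vrho', x, \intp{t}(\vrho')) \in \intp{\vGamma; (\Gamma, x : \Nat)}$. Feeding this, together with the premises on $M$, $s$, $u$ and with $\intp{t}(\vrho) \approx \intp{t}(\vrho') \in Nat$, into \Cref{lem:dt:rec-per-helper} yields $r \approx r' \in \El(\intp{M}(\ext(\vrho, x, \intp{t}(\vrho))))$. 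By the definition of the environment relation these two facts give precisely $\ext(\vrho, x, \intp{t}(\vrho), y, r) \approx \ext(\vrho', x, \intp{t}(\vrho'), y, r') \in \intp{\vGamma; (\Gamma, x : \Nat, y : M)}$.

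Next I would apply $\msemtyp[\vGamma; (\Gamma, x : \Nat, y : M)]{u}{M[\wk_x \circ \wk_y, \su x/x]}$ to this relation, which gives that the two $\intp{u}$ values are related in $\El(\intp{M[\wk_x \circ \wk_y, \su x/x]}(\ext(\vrho, x, \intp{t}(\vrho), y, r)))$; a routine evaluation of the explicit substitution shows
\begin{align*}
  \intp{M[\wk_x \circ \wk_y, \su x/x]}(\ext(\vrho, x, \intp{t}(\vrho), y, r))
  = \intp{M}(\ext(\vrho, x, \sud{\intp{t}(\vrho)}))
  = \intp{M[\vect I, \su t/x]}(\vrho),
\end{align*}
which is exactly the codomain demanded by the statement; repeating the computation with $\vrho'$ and invoking $\msemtyp[\vGamma;(\Gamma,x:\Nat)]{M}{\Se_i}$ (applied to $\ext(\vrho, x, \sud{\intp{t}(\vrho)}) \approx \ext(\vrho', x, \sud{\intp{t}(\vrho')}) \in \intp{\vGamma; (\Gamma, x : \Nat)}$, which follows from $\vrho \approx \vrho'$ and the successor rule for $Nat$) gives $\intp{M[\vect I, \su t/x]}(\vrho) \approx \intp{M[\vect I, \su t/x]}(\vrho') \in \Uc$. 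The naturality equalities follow from those of $u$ and $M$ carried in the premises together with \Cref{conj:dt:rec.-mon} (which, as noted right after \Cref{lem:dt:rec-per-helper}, is already established inside that lemma's proof). The main obstacle is purely clerical: carefully matching the substitution evaluations $\intp{\vect I, t/x, \dots/y}$ and $\intp{M[\wk_x \circ \wk_y, \su x/x]}$ against the $\ext$-built environments so that the codomain PERs line up exactly; once \Cref{lem:dt:rec-per-helper} is in hand there is no genuine mathematical content left.
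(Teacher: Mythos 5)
Your proof is correct and takes essentially the same route as the paper: the paper's own argument is simply to evaluate the successor case and apply \Cref{lem:dt:rec-per-helper} to relate the recursive results, then conclude via the semantic premise on $u$, which is exactly what you spell out (with the codomain computation and naturality remarks made explicit).
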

\begin{proof}
  Apply \Cref{lem:dt:rec-per-helper} to the recursive case. 
\end{proof}

\begin{lemma}
  \begin{mathpar}
    \inferrule
    {\msemtyp{\vsigma}{\vDelta}}
    {\msemtyeq{\Nat[\vsigma]}{\Nat}\Se_i}

    \inferrule
    {\msemtyp{\vsigma}{\vDelta}}
    {\msemtyeq{\ze[\vsigma]}\ze \Nat}

    \inferrule
    {\msemtyp{\vsigma}{\vDelta} \\ \msemtyp[\vDelta] t \Nat}
    {\msemtyeq{\su t[\vsigma]}{\su{(t[\vsigma])}}\Nat}
  \end{mathpar}
\end{lemma}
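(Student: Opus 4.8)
The plan is to unfold the definition of the semantic equivalence judgment in each of the three rules and reduce everything to the PER rules defining $Nat$ and $\Uc_i$, together with the naturality equalities already packaged in the premises. First I would extract from $\msemtyp{\vsigma}{\vDelta}$ the well-formedness facts $\vDash \vGamma$ and $\vDash \vDelta$ that the conclusions require, and then fix an arbitrary pair $\vrho \approx \vrho' \in \intp{\vGamma}$, using $\msemtyp{\vsigma}{\vDelta}$ to obtain $\intp{\vsigma}(\vrho) \approx \intp{\vsigma}(\vrho') \in \intp{\vDelta}$ (and the naturality equalities $\intp{\vsigma}(\vrho[\kappa]) = \intp{\vsigma}(\vrho)[\kappa]$).

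For the first rule, I would compute $\intp{\Nat[\vsigma]}(\vrho) = \intp{\Nat}(\intp{\vsigma}(\vrho)) = \Nd$ and $\intp{\Nat}(\vrho') = \Nd$, and note $\intp{\Se_i}(\vrho) = \Ud_i$ with $\El(\Ud_i) = \Uc_i$. Then $\Ud_i \approx \Ud_i \in \Uc$ follows from the universe rule (at level $i+1$) and subsumption of limits, and $\Nd \approx \Nd \in \Uc_i$ holds by the corresponding rule directly, discharging the relational part; the naturality equalities are immediate since $\Nd[\kappa] = \Nd$. For the second rule, similarly $\intp{\Nat}(\vrho) = \Nd$, $\El(\Nd) = Nat$, $\intp{\ze[\vsigma]}(\vrho) = \intp{\ze}(\intp{\vsigma}(\vrho)) = \zed$ and $\intp{\ze}(\vrho') = \zed$, so the goal reduces to $\zed \approx \zed \in Nat$, an instance of the first $Nat$ rule, with $\zed[\kappa] = \zed$ settling naturality. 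For the third rule, $\intp{\su t[\vsigma]}(\vrho) = \sud{\intp{t}(\intp{\vsigma}(\vrho))}$ and $\intp{\su (t[\vsigma])}(\vrho') = \sud{\intp{t}(\intp{\vsigma}(\vrho'))}$; applying $\msemtyp[\vDelta]{t}{\Nat}$ to the pair $\intp{\vsigma}(\vrho) \approx \intp{\vsigma}(\vrho') \in \intp{\vDelta}$ gives $\intp{t}(\intp{\vsigma}(\vrho)) \approx \intp{t}(\intp{\vsigma}(\vrho')) \in Nat$, and the successor congruence rule for $Nat$ then yields the conclusion. The naturality equalities use $\sud{a}[\kappa] = \sud{a[\kappa]}$ together with the naturality equalities already known for $\vsigma$ and $t$.

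I do not expect a genuine obstacle here — these are the ``immediate by definitions'' rules. The only points requiring a little care are making the universe-level bookkeeping explicit in the first rule (that $\Nd$ lives in every $\Uc_i$, so the level $i$ in $\Se_i$ imposes no constraint, and that $\Ud_i \approx \Ud_i$ lands in $\Uc$ only after passing through $\Uc_{i+1}$ and the limit), and threading the naturality equalities through the third rule so that the $[\kappa]$ action commutes past $\sud{-}$ and the substitution. Both are routine calculations once the relevant definitions are expanded.
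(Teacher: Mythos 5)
Your proposal is correct and matches the paper, which simply records this lemma as ``Immediate'': the argument is exactly the routine unfolding of the semantic judgments, evaluation of $\Nat$, $\ze$, $\su$ to $\Nd$, $\zed$, $\sud{}$, and the corresponding $\Uc$/$Nat$ PER rules (with the premise applied to $\intp{\vsigma}(\vrho) \approx \intp{\vsigma}(\vrho') \in \intp{\vDelta}$ in the successor case). Your extra care with the universe level of $\Ud_i$ and the naturality equalities is fine but adds nothing beyond what the definitions already give.
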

\begin{proof}
  Immediate.
\end{proof}

The following additional lemma is need to prove the semantic substitution lemma for
eliminator for natural numbers:
\begin{lemma}\labeledit{lem:dt:rec-per-helper2}
  If
  \begin{itemize}
  \item $\msemtyp{\vsigma}{\vDelta;\Delta}$,
  \item $\msemtyp[\vDelta; (\Delta, x : \Nat)]{M}{\Se_i}$,
  \item $\msemtyp[\vDelta; \Delta]{s}{M[\vect I, \ze/x]}$,
  \item $\msemtyp[\vDelta; (\Delta, x : \Nat, y : M)]{u}{M[\wk_x \circ \wk_y, \su x/x]}$,
  \item $\vrho \in \intp{\vGamma}$, and
  \item $a \in Nat$,
  \end{itemize}
  then
  $\trec \cdot (x.M, \intp{s}(\intp{\vsigma}(\vrho)), (x,y.u), a,
  \intp{\vsigma}(\vrho)) \approx \trec \cdot (x.M[q(\vsigma, x)],
  \intp{s[\vsigma]}(\vrho), (x,y.u[q(q(\vsigma, x), y)]), a, \vrho) \in \El(\intp{M}(\ext(\intp{\vsigma}(\vrho), x, a)))$.
\end{lemma}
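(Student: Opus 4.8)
The plan is to prove Lemma~\ref{lem:dt:rec-per-helper2} by induction on the derivation of $a \in Nat$, closely mirroring the structure of the proof of \Cref{lem:dt:rec-per-helper}. The crucial observation is that $\intp{\vsigma}(\vrho)$ plays the role of the ``semantic environment'' on the left-hand side, and the right-hand side merely packages this same substitution into the neutral recursor's captured environment via the $q(\vsigma, x)$ operation; after evaluation, both sides should produce the same domain values up to how environments are threaded. The key ingredients will be \Cref{lem:dt:L-intp-vsigma-vrho,lem:dt:trunc-intp-vsigma-vrho} for reasoning about composed substitutions and truncation, the naturality equalities (\Cref{conj:dt:intp-mon}) to commute \UnMoTs past evaluation, and \Cref{lem:dt:el-resp-u} to transport membership in $\El$ along equal types.

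First I would fix all the hypotheses and assume $a \in Nat$. In the base case $a = \zed$, both $\trec\cdot$ expressions reduce immediately: the left to $\intp{s}(\intp{\vsigma}(\vrho))$ and the right to $\intp{s[\vsigma]}(\vrho) = \intp{s}(\intp{\vsigma}(\vrho))$ by the interpretation of substitution application. These are manifestly equal, and the required relation in $\El(\intp{M}(\ext(\intp{\vsigma}(\vrho), x, \zed)))$ follows from $\msemtyp[\vDelta;\Delta]{s}{M[\vect I, \ze/x]}$ together with $\intp{\vsigma}(\vrho) \in \intp{\vDelta; \Delta}$, which we get from $\msemtyp{\vsigma}{\vDelta;\Delta}$ applied to $\vrho \approx \vrho \in \intp{\vGamma}$. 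For the successor case $a = \sud b$, the induction hypothesis on $b$ gives the relation between the two inner recursors at type $\El(\intp{M}(\ext(\intp{\vsigma}(\vrho), x, b)))$; I then feed these into $\msemtyp[\vDelta; (\Delta, x:\Nat, y:M)]{u}{\cdots}$ applied to the environment $\ext(\intp{\vsigma}(\vrho), x, b, y, r_1)$ on one side and the corresponding $q(q(\vsigma,x),y)$-packaged environment on the other, checking that $\intp{u[q(q(\vsigma,x),y)]}(\ext(\vrho, x, b, y, r_2)) = \intp{u}(\ext(\intp{\vsigma}(\vrho), x, b, y, r_2))$ by the substitution interpretation. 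The neutral case $a = \uparrow^\Nd(c)$ proceeds by observing both sides evaluate to $\uparrow^{(\cdots)}(\rec{x.M}{\cdots}{\cdots}{c}{\cdots})$ with motives $\intp{M}(\ext(\intp{\vsigma}(\vrho), x, \uparrow^\Nd(c)))$ and $\intp{M[q(\vsigma,x)]}(\ext(\vrho, x, \uparrow^\Nd(c)))$ respectively, which are equal again by the substitution interpretation; realizability (the $\bot \subseteq \El$ inclusion from the realizability theorem) then closes the goal once the neutral subterms are shown related in $\bot$, which holds because the readbacks of the recursors agree componentwise after applying any $\alpha$ and $\kappa$.

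The naturality equalities (the $\forall \kappa$ conjuncts implicit in the statement, matching how $\msemtyp$ is phrased) are handled exactly as in \Cref{lem:dt:rec-per-helper}: one invokes \Cref{conj:dt:rec.-mon} together with \Cref{conj:dt:intp-mon} to push a \UnMoT\ $\kappa$ through $\trec\cdot$, and the composed-substitution case requires \Cref{lem:dt:L-intp-vsigma-vrho,lem:dt:trunc-intp-vsigma-vrho} to see that $\intp{q(\vsigma,x)}(\vrho[\kappa])$ and $\intp{\vsigma}(\vrho)[\kappa]$ thread the monotonicity consistently through the extended environments. I expect the main obstacle to be the bookkeeping in the successor case, specifically verifying that the environment fed to $u[q(q(\vsigma,x),y)]$ on the right unfolds to exactly the environment fed to $u$ on the left after the two layers of $q$ are evaluated; this is where the interaction between $\ext$, $\drop$ (implicit in $q$'s $\wk_x$), and the interpretation of composition must line up precisely. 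Once that identity is established, everything else follows by the premises and the previously-proved properties of $\Uc$ and $\El$, so the lemma should go through without genuinely new ideas beyond what \Cref{lem:dt:rec-per-helper} already supplies.
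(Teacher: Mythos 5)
Your proposal is correct: induction on $a \in Nat$ with the three cases ($\zed$, $\sud b$, $\uparrow^\Nd(c)$), using the evaluation equation $\intp{t[\vsigma]}(\vrho) = \intp{t}(\intp{\vsigma}(\vrho))$ (and its $q$-unfoldings) to collapse both sides to evaluations of $M$, $s$, $u$ in the same environment, the IH plus the semantic typing of $u$ in the successor case, and $\bot$-relatedness plus realizability in the neutral case, is exactly the argument the paper intends. The paper in fact states \Cref{lem:dt:rec-per-helper2} without giving a proof, and your sketch mirrors the paper's own proof of \Cref{lem:dt:rec-per-helper} step for step (the citations of \Cref{lem:dt:L-intp-vsigma-vrho,lem:dt:trunc-intp-vsigma-vrho} are not actually needed here, but that is harmless).
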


\begin{lemma}
  \begin{mathpar}
    \inferrule
    {\msemtyp{\vsigma}{\vDelta;\Delta} \\
      \msemtyp[\vDelta; (\Delta, x : \Nat)]{M}{\Se_i} \\ \msemtyp[\vDelta; \Delta]{s}{M[\vect I, \ze/x]} \\
      \msemtyp[\vDelta; (\Delta, x : \Nat, y : M)]{u}{M[\wk_x \circ \wk_y, \su x/x]} \\ \msemtyp[\vDelta; \Delta] t \Nat}
    {\msemtyeq[\vGamma]{\elimn {x.M} s {x,y.u} t[\vsigma]}{\elimn
        {x.M[q(\vsigma, x)]} {(s[\vsigma])}
        {x,y.u[q(q(\vsigma, x), y)]}{(t[\vsigma])}}{M[\vsigma, t[\vsigma]/x]}}
  \end{mathpar}
\end{lemma}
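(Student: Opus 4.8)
The plan is to prove this semantic substitution rule for the eliminator of natural numbers the same way the analogous non-substitution rule was handled, namely by reducing everything to the helper lemma \Cref{lem:dt:rec-per-helper2}. First I would unfold the definition of $\msemjudge$: assume $\vrho \approx \vrho' \in \intp{\vGamma}$ and aim to show that the two interpretations are related in $\El(\intp{M[\vsigma, t[\vsigma]/x]}(\vrho))$, together with the two naturality equalities. Using the premise $\msemtyp{\vsigma}{\vDelta;\Delta}$, we obtain $\intp{\vsigma}(\vrho) \approx \intp{\vsigma}(\vrho') \in \intp{\vDelta;\Delta}$, and using $\msemtyp[\vDelta;\Delta]t\Nat$ we get $\intp{t}(\intp{\vsigma}(\vrho)) \approx \intp{t}(\intp{\vsigma}(\vrho')) \in Nat$; by symmetry and transitivity of $Nat$ it suffices to work with a single representative $a := \intp{t}(\intp{\vsigma}(\vrho)) = \intp{t[\vsigma]}(\vrho)$ (the equality following from the definition of $\intp{t[\vsigma]}$).

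Next I would compute both sides. On the left, $\intp{\elimn{x.M}s{x,y.u}t[\vsigma]}(\vrho) = \intp{\elimn{x.M}s{x,y.u}t}(\intp{\vsigma}(\vrho)) = \trec\cdot(x.M, \intp{s}(\intp{\vsigma}(\vrho)), (x,y.u), a, \intp{\vsigma}(\vrho))$ directly from the definition of evaluation. On the right, applying the evaluation clauses for substitution and for the eliminator, $\intp{\elimn{x.M[q(\vsigma,x)]}{(s[\vsigma])}{x,y.u[q(q(\vsigma,x),y)]}{(t[\vsigma])}}(\vrho) = \trec\cdot(x.M[q(\vsigma,x)], \intp{s[\vsigma]}(\vrho), (x,y.u[q(q(\vsigma,x),y)]), a, \vrho)$. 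These are precisely the two sides compared in \Cref{lem:dt:rec-per-helper2}, so I would instantiate that lemma with the given premises and $\vrho$ and the representative $a$ to conclude they are related in $\El(\intp{M}(\ext(\intp{\vsigma}(\vrho), x, a)))$. It then remains to identify this PER with $\El(\intp{M[\vsigma, t[\vsigma]/x]}(\vrho))$: by the evaluation clauses $\intp{M[\vsigma, t[\vsigma]/x]}(\vrho) = \intp{M}(\intp{\vsigma, t[\vsigma]/x}(\vrho)) = \intp{M}(\ext(\intp{\vsigma}(\vrho), x, \intp{t[\vsigma]}(\vrho))) = \intp{M}(\ext(\intp{\vsigma}(\vrho), x, a))$, so the two PERs coincide on the nose. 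Finally, to obtain the full relation in $\El(\intp{M[\vsigma,t[\vsigma]/x]}(\vrho))$ between the $\vrho$-value and the $\vrho'$-value I would chain: left side at $\vrho$ equals (by \Cref{lem:dt:rec-per-helper2}) right side at $\vrho$, which relates (by the already established semantic congruence rule for the eliminator applied to the substituted motive, base, step and target, all of which are semantically well-typed by presupposition-style reasoning from the premises) to the right side at $\vrho'$; combining with $\msemtyp{M[\vsigma,t[\vsigma]/x]}{\Se_i}$ for the $\Uc$-relatedness of the type gives the claim. The naturality equalities follow, as in all the other $\beta$/substitution rules, from the naturality equalities of $M$, $s$, $u$, $t$ and $\vsigma$ together with \Cref{conj:dt:rec.-mon} (which, as noted after \Cref{lem:dt:rec-per-helper}, is itself validated inside the helper-lemma machinery).

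The main obstacle is the precise form of \Cref{lem:dt:rec-per-helper2}: it only gives relatedness of the two $\trec\cdot$ expressions \emph{with the same $\vrho$}, so I cannot directly get the $\vrho$-to-$\vrho'$ relation I ultimately want. I expect the cleanest route is to prove \Cref{lem:dt:rec-per-helper2} in a slightly generalized form that allows $\vrho \approx \vrho' \in \intp{\vGamma}$ and related inputs $a \approx a' \in Nat$ (mirroring exactly the generalization that \Cref{lem:dt:rec-per-helper} has over a naive version), and then the present lemma falls out immediately by the same bookkeeping used for the congruence rule. If one instead keeps \Cref{lem:dt:rec-per-helper2} in its stated single-$\vrho$ form, one must additionally invoke the semantic congruence rule for the eliminator on the fully substituted components and then transitivity of $\El$; either way the only real work is ensuring all the substituted subterms ($M[q(\vsigma,x)]$, $s[\vsigma]$, $u[q(q(\vsigma,x),y)]$, $t[\vsigma]$) are seen to be semantically well-typed at the appropriate context stacks, which is routine from the premises and the already-proven semantic substitution and congruence lemmas.
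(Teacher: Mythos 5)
Your proposal is correct and, in substance, matches the paper's proof: the paper discharges this rule by expanding the definitions and applying both \Cref{lem:dt:rec-per-helper} and \Cref{lem:dt:rec-per-helper2}, with \Cref{lem:dt:rec-per-helper2} bridging the substituted and unsubstituted $\trec$ expressions at the same environment and \Cref{lem:dt:rec-per-helper} (the congruence-style helper) supplying the $\vrho$-to-$\vrho'$ relatedness, chained by transitivity of $\El$. The "obstacle" you flag is thus resolved exactly by your second option; your alternative of generalizing \Cref{lem:dt:rec-per-helper2} to two related environments would also work but is not what the paper does.
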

\begin{proof}
  Expand the definitions and apply
  \Cref{lem:dt:rec-per-helper,lem:dt:rec-per-helper2}. \Cref{lem:dt:rec-per-helper2}
  is needed to relate the results of elimination of natural numbers on both sides. 
\end{proof}

\subsubsection{Universes}

\begin{lemma}
  \begin{mathpar}
    \inferrule
    {\vDash \vGamma}
    {\msemtyeq{\Se_n}{\Se_n}{\Se_{n + 1}}}

    \inferrule
    {\msemtyeq{T}{T'}{\Se_i}}
    {\msemtyeq{T}{T'}{\Se_{i + 1}}}

    \inferrule
    {\msemtyp{\vsigma}{\vDelta}}
    {\msemtyeq{\Se_n[\vsigma]}{\Se_n}{\Se_{n + 1}}}
  \end{mathpar}
\end{lemma}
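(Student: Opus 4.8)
The plan is to unfold the definition of the semantic typing judgment for each of the three rules and discharge every resulting obligation using the basic facts about $\Uc_i$, $\El_i$ and their limits $\Uc$, $\El$ established above. The key observation throughout is that $\intp{\Se_k}(\vrho) = \Ud_k$ for \emph{every} environment $\vrho$ and every $k$, so the interpretation of a universe never depends on the environment and never does anything under a \UnMoT; combined with $\El(\Ud_k) = \Uc_k$, this collapses almost all obligations to trivialities about $\Ud$.

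For the first rule, given $\vDash \vGamma$, I must produce (i) $\vDash \vGamma$ itself, (ii) for all $\vrho \approx \vrho' \in \intp{\vGamma}$ that $\intp{\Se_{n+1}}(\vrho) \approx \intp{\Se_{n+1}}(\vrho') \in \Uc$ and $\intp{\Se_n}(\vrho) \approx \intp{\Se_n}(\vrho') \in \El(\intp{\Se_{n+1}}(\vrho))$, and (iii) the naturality equalities. Obligation (ii) becomes $\Ud_{n+1} \approx \Ud_{n+1} \in \Uc$ --- which holds because $\Ud_{n+1} \approx \Ud_{n+1} \in \Uc_{n+2}$ by the rule $j < i \Rightarrow \Ud_j \approx \Ud_j \in \Uc_i$ followed by subsumption of limits --- together with $\Ud_n \approx \Ud_n \in \El(\Ud_{n+1}) = \Uc_{n+1}$, which is the same rule with $n < n+1$. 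Obligation (iii) holds because $\intp{\Se_n}(\vrho[\kappa]) = \Ud_n = \Ud_n[\kappa] = \intp{\Se_n}(\vrho)[\kappa]$. The third rule is handled identically: from $\msemtyp{\vsigma}{\vDelta}$ I obtain $\vDash \vGamma$, $\vDash \vDelta$ and the naturality equality for $\vsigma$; then $\intp{\Se_n[\vsigma]}(\vrho) = \intp{\Se_n}(\intp{\vsigma}(\vrho)) = \Ud_n = \intp{\Se_n}(\vrho')$, so the two membership obligations in $\El(\Ud_{n+1}) = \Uc_{n+1}$ and in $\Uc$ are exactly as before, while the naturality equality follows from $\intp{\vsigma}(\vrho[\kappa]) = \intp{\vsigma}(\vrho)[\kappa]$ (\Cref{conj:dt:intp-mon}) and $\Ud_n[\kappa] = \Ud_n$.

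The second rule is the only one carrying content. Given $\msemtyeq{T}{T'}{\Se_i}$ and $\vrho \approx \vrho' \in \intp{\vGamma}$, the hypothesis supplies $\intp{T}(\vrho) \approx \intp{T'}(\vrho') \in \El(\intp{\Se_i}(\vrho)) = \El(\Ud_i) = \Uc_i$, and I must promote this to $\Uc_{i+1} = \El(\Ud_{i+1}) = \El(\intp{\Se_{i+1}}(\vrho))$, which is precisely one-step cumulativity of the PER model (\Cref{lem:dt:per-cumu}). The remaining obligation $\intp{\Se_{i+1}}(\vrho) \approx \intp{\Se_{i+1}}(\vrho') \in \Uc$ is again $\Ud_{i+1} \approx \Ud_{i+1} \in \Uc$, and the naturality equalities are inherited verbatim from the hypothesis, since only the type annotation changed from $\Se_i$ to $\Se_{i+1}$ and the interpretations $\intp{T}$, $\intp{T'}$ are unaffected.

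The main ``obstacle'' is purely bookkeeping: one must be disciplined about unfolding $\El(\Ud_k)$ to $\Uc_k$, about working with the limit $\Uc$ in the first conjunct of the semantic judgment but with the indexed $\Uc_k$ inside $\El$, and about invoking \Cref{lem:dt:per-cumu} in the promotion direction rather than the lowering direction. No new semantic construction is required.
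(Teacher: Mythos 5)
Your proof is correct and matches the paper's intent: the paper dismisses these rules as ``immediate by the definitions,'' and your unfolding --- $\intp{\Se_k}(\vrho) = \Ud_k$, $\El(\Ud_k) = \Uc_k$, the rule $j < i \Rightarrow \Ud_j \approx \Ud_j \in \Uc_i$, and one-step cumulativity (\Cref{lem:dt:per-cumu}) for the lifting rule --- is exactly the routine verification being elided. No difference in approach, only in the level of detail.
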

\begin{proof}
  Immediate by the definitions. 
\end{proof}

\subsubsection{Modality}

\begin{lemma}
  \begin{mathpar}
    \inferrule
    {\msemtyeq[\vGamma; \cdot]{T}{T'}{\Se_i}}
    {\msemtyeq{\square T}{\square T'}{\Se_i}}
  \end{mathpar}
\end{lemma}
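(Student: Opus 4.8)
The plan is to unfold the definition of semantic type equivalence on both the hypothesis and the goal and close the goal with the $\squared$-clause of the inductive-recursive definition of $\Uc_i$. Unfolding $\msemtyeq[\vGamma; \cdot]{T}{T'}{\Se_i}$ gives three pieces: well-formedness $\vDash \vGamma; \cdot$; for every $\vrho \approx \vrho' \in \intp{\vGamma; \cdot}$, the relation $\intp{T}(\vrho) \approx \intp{T'}(\vrho') \in \El(\Ud_i) = \Uc_i$ (using $\intp{\Se_i}(\vrho) = \Ud_i$, and that the companion obligation $\intp{\Se_i}(\vrho) \approx \intp{\Se_i}(\vrho') \in \Uc$ is trivially $\Ud_i \approx \Ud_i$); and the naturality equalities $\intp{T}(\vrho[\kappa]) = \intp{T}(\vrho)[\kappa]$ and likewise for $T'$. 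Inverting the derivation of $\vDash \vGamma; \cdot$ yields $\vDash \vGamma$, discharging the well-formedness obligation of the goal, and the $\Se_i$-in-$\Uc$ obligation of the goal is again trivial.

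First I would fix $\vrho \approx \vrho' \in \intp{\vGamma}$ and compute $\intp{\square T}(\vrho) = \squared(\intp{T}(\ext(\vrho)))$ and $\intp{\square T'}(\vrho') = \squared(\intp{T'}(\ext(\vrho')))$. Since $\vrho \approx \vrho' \in \intp{\vGamma}$, the definition of $\intp{\vGamma; \cdot}$ (whose top slot carries the offset $1$ and the empty local environment, and whose first truncation is the original environment) gives $\ext(\vrho) \approx \ext(\vrho') \in \intp{\vGamma; \cdot}$. Feeding this into the hypothesis yields $\intp{T}(\ext(\vrho)) \approx \intp{T'}(\ext(\vrho')) \in \Uc_i$; then by monotonicity of $\Uc_i$ (\Cref{lem:dt:u-el-resp-mon}) we get $\intp{T}(\ext(\vrho))[\kappa] \approx \intp{T'}(\ext(\vrho'))[\kappa] \in \Uc_i$ for every $\kappa$. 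The $\squared$-constructor of $\Uc_i$ then delivers $\squared(\intp{T}(\ext(\vrho))) \approx \squared(\intp{T'}(\ext(\vrho'))) \in \Uc_i$, i.e., $\intp{\square T}(\vrho) \approx \intp{\square T'}(\vrho') \in \Uc_i = \El(\intp{\Se_i}(\vrho))$, which is the membership obligation of the goal. For the naturality equalities I would check that $\ext$ commutes with modal transformations, $\ext(\vrho[\kappa]) = \ext(\vrho)[\sextt\kappa 1]$ (a short unfolding of the definitions of $\ext$ and of $\_[\kappa]$ on environments, using $\trunc{(\sextt\kappa 1)}1 = \kappa$); combined with the naturality equality for $T$ instantiated at $\sextt\kappa 1$ this gives $\intp{\square T}(\vrho[\kappa]) = \squared(\intp{T}(\ext(\vrho[\kappa]))) = \squared(\intp{T}(\ext(\vrho))[\sextt\kappa 1]) = \intp{\square T}(\vrho)[\kappa]$, and symmetrically for $T'$.

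There is no serious obstacle: this is essentially the dependently typed transcription of the simply typed congruence lemma for $\boxit$, with the PER $\intp{\square T}$ replaced by the $\squared$-clause of $\Uc_i$ and with $\El$ tracking membership. The only point needing mild care is the bookkeeping around the environment operation $\ext$ — its interaction with $\_[\kappa]$ and with the offset-$1$ top slot of $\intp{\vGamma;\cdot}$ — and invoking \Cref{lem:dt:u-el-resp-mon} rather than the naturality equality for the $\Uc_i$-membership so the two uses of monotonicity are not conflated; everything else is immediate from the hypothesis and the definitions.
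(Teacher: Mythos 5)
Your proposal is correct and follows the same route as the paper: the paper's proof is simply ``immediate by \Cref{lem:dt:u-el-resp-mon}'', i.e., extend the environments, apply the hypothesis, and use monotonicity of $\Uc_i$ to discharge the $\forall\kappa$ premise of the $\squared$-clause, exactly as you do. Your extra bookkeeping (the $\ext(\vrho)\approx\ext(\vrho')\in\intp{\vGamma;\cdot}$ step and the naturality equality via $\ext(\vrho[\kappa])=\ext(\vrho)[\sextt\kappa 1]$) is just the detail the paper leaves implicit.
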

\begin{proof}
  Immediate by \Cref{lem:dt:u-el-resp-mon}. 
\end{proof}

\begin{lemma}
  \begin{mathpar}
    \inferrule
    {\msemtyeq[\vGamma; \cdot]{t}{t'}{T}}
    {\msemtyeq{\boxit t}{\boxit t'}{\square T}}
  \end{mathpar}
\end{lemma}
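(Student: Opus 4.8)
\emph{Proof proposal.} The plan is to unfold the definition of $\msemtyeq{\boxit t}{\boxit{t'}}{\square T}$ and discharge its three components --- well-formedness $\vDash \vGamma$, the type obligation $\intp{\square T}(\vrho) \approx \intp{\square T}(\vrho') \in \Uc$, and the value obligation $\intp{\boxit t}(\vrho) \approx \intp{\boxit{t'}}(\vrho') \in \El(\intp{\square T}(\vrho))$ --- together with the naturality equalities, essentially transcribing the simply typed congruence rule for $\boxit$ but with $\El$ and the candidate space $\Uc$ in place of the bare PER. The hypothesis $\msemtyeq[\vGamma; \cdot]{t}{t'}{T}$ carries $\vDash \vGamma; \cdot$, from which $\vDash \vGamma$ follows by inversion; that settles the first component.

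For the main part I would fix $\vrho \approx \vrho' \in \intp{\vGamma}$ and first observe that $\ext(\vrho) \approx \ext(\vrho') \in \intp{\vGamma; \cdot}$: unfolding the definition of $\intp{\vGamma; \cdot}$ this reduces to $\trunc{\ext(\vrho)}1 = \vrho \approx \vrho' = \trunc{\ext(\vrho')}1 \in \intp{\vGamma}$ and the two leading offsets being equal (both $1$), which hold. Applying the hypothesis at $\ext(\vrho) \approx \ext(\vrho')$ gives $\intp{T}(\ext(\vrho)) \approx \intp{T}(\ext(\vrho')) \in \Uc$ and $\intp{t}(\ext(\vrho)) \approx \intp{t'}(\ext(\vrho')) \in \El(\intp{T}(\ext(\vrho)))$. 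Since $\intp{\square T}(\vrho) = \squared(\intp{T}(\ext(\vrho)))$ (and likewise for $\vrho'$), the type obligation is the conclusion of the $\squared$ introduction rule of $\Uc$, whose premise $\forall \kappa.\ \intp{T}(\ext(\vrho))[\kappa] \approx \intp{T}(\ext(\vrho'))[\kappa] \in \Uc$ is exactly monotonicity of $\Uc$ (\Cref{lem:dt:u-el-resp-mon}) applied to the type equivalence just obtained.

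For the value obligation, writing $A := \intp{T}(\ext(\vrho))$, $a := \intp{t}(\ext(\vrho))$ and $b := \intp{t'}(\ext(\vrho'))$, one must show $\tbox(a) \approx \tbox(b) \in \El(\squared A)$; unfolding $\El(\squared A)$ this means: for all $\kappa$ and $n$, $\tunbox \cdot (n, \tbox(a)[\kappa]) \approx \tunbox \cdot (n, \tbox(b)[\kappa]) \in \El(A[\sextt\kappa n])$. Using $\tbox(c)[\kappa] = \tbox(c[\sextt\kappa 1])$ and $\tunbox \cdot (n, \tbox(c)) = c[\sextt\vone n]$, the composition law for \UnMoTs, and the identity $\sextt\kappa 1 \circ \sextt\vone n = \sextt\kappa n$ (a direct computation from the definitions of composition, truncation and truncation offset on $\N \to \N$), the left side evaluates to $a[\sextt\kappa n]$ and the right side to $b[\sextt\kappa n]$, so the required relation is precisely \Cref{lem:dt:u-el-resp-mon} (monotonicity of $\El$) applied with the \UnMoT $\sextt\kappa n$ to $a \approx b \in \El(A)$. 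Finally, for the naturality equalities I would use $\ext(\vrho)[\sextt\kappa 1] = \ext(\vrho[\kappa])$ together with the naturality equalities for $t$ and $t'$ from the hypothesis to compute $\intp{\boxit t}(\vrho[\kappa]) = \tbox(\intp{t}(\ext(\vrho[\kappa]))) = \tbox(\intp{t}(\ext(\vrho))[\sextt\kappa 1]) = \tbox(\intp{t}(\ext(\vrho)))[\kappa] = \intp{\boxit t}(\vrho)[\kappa]$, and symmetrically for $t'$.

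The only place that needs care --- the ``hard part'', such as it is --- is the \UnMoT index bookkeeping: verifying $\ext(\vrho)[\sextt\kappa 1] = \ext(\vrho[\kappa])$, verifying $\sextt\kappa 1 \circ \sextt\vone n = \sextt\kappa n$, and making sure the type argument produced by unfolding $\El(\squared A)$, namely $A[\sextt\kappa n]$, is exactly the one delivered by monotonicity of $\El$. All of this is routine calculation with the operations of the untyped domain, and the argument is structurally identical to the simply typed congruence rule for $\boxit$; no new conceptual ingredient is required.
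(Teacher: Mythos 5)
Your proposal is correct and follows essentially the same route as the paper: relate $\ext(\vrho)$ and $\ext(\vrho')$, discharge the type obligation via monotonicity of $\Uc$, and discharge the value obligation by assuming $\kappa, n$ and applying monotonicity of $\El$ with $\sextt\kappa n$ before abstracting. The only difference is that you spell out the bookkeeping the paper leaves implicit (the identity $\sextt\kappa 1 \circ \sextt\vone n = \sextt\kappa n$, the equation $\ext(\vrho)[\sextt\kappa 1] = \ext(\vrho[\kappa])$, and the naturality equalities), which is a welcome addition rather than a deviation.
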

\begin{proof}
  \begin{align*}
    H_1: &\ \msemtyeq[\vGamma; \cdot]{t}{t'}{T}
           \tag{by assumption} \\
    H_2: &\ \vrho \approx \vrho' \in \intp{\vGamma}
           \tag{by assumption} \\
         &\ \ext(\vrho) \approx \ext(\vrho') \in \intp{\vGamma; \cdot}
           \tag{by definition} \\
         &\ \intp{T}(\ext(\vrho)) \approx \intp{T'}(\ext(\vrho')) \in \Uc
           \tag{by $H_1$} \\
         &\ \forall \kappa. \intp{T}(\ext(\vrho))[\kappa] \approx
           \intp{T'}(\ext(\vrho'))[\kappa] \in \Uc
           \tag{by \Cref{lem:dt:u-el-resp-mon}} \\
         &\ \intp{\square T}(\vrho) \approx\intp{\square T}(\vrho') \in \Uc
           \tag{first goal, by definition} \\
    H_3: &\ \intp{t}(\ext(\vrho)) \approx \intp{t'}(\ext(\vrho'))
           \in \El(\intp{T}(\ext(\vrho)))
           \tag{by $H_1$} \\
         &\ \text{assume }\kappa, n, \\
         &\ \intp{t}(\ext(\vrho))[\sextt\kappa n] \approx
           \intp{t'}(\ext(\vrho'))[\sextt\kappa n]
           \in \El(\intp{T}(\ext(\vrho))[\sextt\kappa n])
           \tag{by applying \Cref{lem:dt:u-el-resp-mon} to $H_3$} \\
         &\ \intp{\boxit t}(\vrho) \approx \intp{\boxit t'}(\vrho')
           \in \El(\intp{\square T}(\vrho))
           \tag{by abstraction}
  \end{align*}
\end{proof}

\begin{lemma}
  \begin{mathpar}
    \inferrule
    {\msemtyeq{t}{t'}{\square T} \\ \vDash \vGamma; \vDelta \\ |\vDelta| = n}
    {\msemtyeq[\vGamma; \vDelta]{\unbox n t}{\unbox n t'}{T[\sextt{\vect I} n]}}
  \end{mathpar}
\end{lemma}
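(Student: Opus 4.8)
The plan is to unfold the semantic judgment at an arbitrary pair $\vrho \approx \vrho' \in \intp{\vGamma; \vDelta}$ and reduce everything to the two conditions packaged in $\msemtyeq t{t'}{\square T}$. First I would invoke the truncation laws for interpretations of context stacks to obtain $\trunc\vrho n \approx \trunc{\vrho'}n \in \intp{\vGamma}$ together with $\Ltotal\vrho n = \Ltotal{\vrho'}n$ (recall $|\vDelta| = n$). Feeding $\trunc\vrho n \approx \trunc{\vrho'}n$ into the hypothesis yields $\intp{\square T}(\trunc\vrho n) \approx \intp{\square T}(\trunc{\vrho'}n) \in \Uc$ and $\intp{t}(\trunc\vrho n) \approx \intp{t'}(\trunc{\vrho'}n) \in \El(\intp{\square T}(\trunc\vrho n))$. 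Since $\intp{\square T}(\trunc\vrho n) = \squared(\intp T(\ext(\trunc\vrho n)))$ (and likewise for $\vrho'$), unfolding the clause of $\Uc$ for $\squared$ gives $\forall \kappa.\ \intp T(\ext(\trunc\vrho n))[\kappa] \approx \intp T(\ext(\trunc{\vrho'}n))[\kappa] \in \Uc$, and unfolding the clause of $\El(\squared A)$ gives $\forall \kappa, m.\ \tunbox \cdot (m, \intp t(\trunc\vrho n)[\kappa]) \approx \tunbox \cdot (m, \intp{t'}(\trunc{\vrho'}n)[\kappa]) \in \El(\intp T(\ext(\trunc\vrho n))[\sextt\kappa m])$.

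Next I would match these against the evaluation of $\unbox n t$. By definition $\intp{\unbox n t}(\vrho) = \tunbox \cdot (\Ltotal\vrho n, \intp t(\trunc\vrho n))$, so instantiating the $\El(\squared A)$ clause at $\kappa = \vone$ and $m = \Ltotal\vrho n$ (using $a[\vone] = a$ and $\Ltotal\vrho n = \Ltotal{\vrho'}n$) produces exactly $\intp{\unbox n t}(\vrho) \approx \intp{\unbox n t'}(\vrho') \in \El(\intp T(\ext(\trunc\vrho n))[\sextt{\vone}{\Ltotal\vrho n}])$. It remains to identify the target type: a routine unfolding of $\intp{\sextt{\vect I}n}$ gives $\intp{T[\sextt{\vect I}n]}(\vrho) = \intp T(\ext(\trunc\vrho n, \Ltotal\vrho n))$, and since $\ext(\trunc\vrho n, \Ltotal\vrho n) = \ext(\trunc\vrho n)[\sextt{\vone}{\Ltotal\vrho n}]$, the naturality equality for $\intp T$ rewrites this to $\intp T(\ext(\trunc\vrho n))[\sextt{\vone}{\Ltotal\vrho n}]$, which is precisely the semantic type appearing above. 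The same rewriting, together with $\Ltotal\vrho n = \Ltotal{\vrho'}n$, turns the $\Uc$-clause instantiated at $\kappa = \sextt{\vone}{\Ltotal\vrho n}$ into $\intp{T[\sextt{\vect I}n]}(\vrho) \approx \intp{T[\sextt{\vect I}n]}(\vrho') \in \Uc$, discharging the type-equivalence obligation, while $\vDash \vGamma; \vDelta$ is a hypothesis.

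Finally I would check the naturality equalities, i.e. $\intp{\unbox n t}(\vrho[\kappa]) = \intp{\unbox n t}(\vrho)[\kappa]$ and the analogue for $t'$. This is a direct computation: $\Ltotal{\vrho[\kappa]}n = \Ltotal\kappa{\Ltotal\vrho n}$ and $\trunc{\vrho[\kappa]}n = \trunc\vrho n[\trunc\kappa{\Ltotal\vrho n}]$ by \Cref{lem:envs-L,lem:envs-mon}, the naturality equality for $t$ pushes the \UnMoT inside $\intp t(\trunc\vrho n)$, and \Cref{lem:dt:unbox.-mon} then recombines $\tunbox \cdot$ with the weakening. There is no real obstacle here; the only point demanding care is the bookkeeping that identifies $\intp{T[\sextt{\vect I}n]}(\vrho)$ with $\intp T(\ext(\trunc\vrho n))[\sextt{\vone}{\Ltotal\vrho n}]$, so that the $\El$-membership extracted from the hypothesis lands in the type required by the conclusion — once that identification is in place, the argument is immediate, exactly as for the simply typed congruence rule for $\tunbox$.
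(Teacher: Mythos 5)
Your proposal is correct and follows essentially the same route as the paper's proof: assume $\vrho \approx \vrho' \in \intp{\vGamma;\vDelta}$, truncate to get $\trunc\vrho n \approx \trunc{\vrho'}n \in \intp{\vGamma}$ with $\Ltotal\vrho n = \Ltotal{\vrho'}n$, apply the hypothesis, and instantiate the $\El(\squared\cdot)$ clause at $\kappa = \vone$ and level $\Ltotal\vrho n$, identifying $\intp{T[\sextt{\vect I}n]}(\vrho)$ with $\intp T(\ext(\trunc\vrho n,\Ltotal\vrho n))$. You are in fact somewhat more thorough than the paper, which leaves the $\Uc$ type-equivalence obligation and the naturality equalities implicit.
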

\begin{proof}
  \begin{align*}
    H_1 :&\ \msemtyeq{t}{t'}{\square T}
           \tag{by assumption} \\
    H_2: &\ \vDash \vGamma; \vDelta
           \tag{by assumption} \\
    H_3: &\ \vrho \approx \vrho' \in \intp{\vGamma; \vDelta}
           \tag{by assumption} \\
         &\ \trunc \vrho n \approx \trunc{\vrho'}n \in \intp{\vGamma}
           \tag{by $H_3$} \\
    H_4: &\ \intp{t}(\trunc \vrho n) \approx \intp{t'}(\trunc{\vrho'}n)
           \in \El(\intp{\square T}(\trunc \vrho n))
           \tag{by $H_1$} \\
         &\ \tunbox \cdot (\Ltotal\vrho n, \intp{t}(\trunc \vrho n))
           \approx \tunbox \cdot (\Ltotal{\vrho'} n, \intp{t}(\trunc{\vrho'}n))
           \in \El(\intp{T}(\ext(\trunc \vrho n, \Ltotal\vrho n)))
           \tag{by $H_4$} \\
         &\ \intp{\unbox n t}(\vrho)
           \approx \intp{\unbox n t'}(\vrho')
           \in \El(\intp{\square T}(\vrho))
           \tag{by abstraction}
  \end{align*}
\end{proof}

\begin{lemma}
  \begin{mathpar}
    \inferrule
    {\msemtyp[\vGamma; \cdot]{t}{T} \\ \vdash \vGamma; \vDelta \\ |\vDelta| = n}
    {\msemtyeq[\vGamma; \vDelta]{\unbox n {(\boxit t)}}{t[\sextt{\vect I}n]}{T[\sextt{\vect I}n]}}  

    \inferrule
    {\msemtyp[\vGamma]{t}{\square T}}
    {\msemtyeq{t}{\boxit{(\unbox 1 t)}}{\square T}}  
  \end{mathpar}
\end{lemma}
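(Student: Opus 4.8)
Both rules are the dependently typed counterparts of the simply typed $\square$-$\beta$ and $\square$-$\eta$ completeness lemmas, so the plan is to reuse the same recipe: fix a pair of related environments, evaluate both sides of the equation in the untyped domain, simplify using the naturality equalities (\Cref{conj:dt:intp-mon}, \Cref{lem:dt:unbox.-mon}) and the defining clauses for $\ext$, $\trunc$ and $\Ltotal$ on $\Envs$, and then discharge the remaining relatedness obligation by the semantic typing premise. The only genuinely new point is that the modal type is now the domain value $\squared A$ whose PER is $\El(\squared A)$.

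For the $\beta$ rule, assume $\vrho \approx \vrho' \in \intp{\vGamma; \vDelta}$ (the ambient well-formedness of $\vGamma; \vDelta$ needed to state the conclusion is available from the premises). On the left, $\intp{\unbox n {(\boxit t)}}(\vrho) = \tunbox \cdot (\Ltotal\vrho n, \tbox(\intp{t}(\ext(\trunc\vrho n))))$, which by the clause of the partial $\tunbox$ on $\tbox$ equals $\intp{t}(\ext(\trunc\vrho n))[\sextt\vone{\Ltotal\vrho n}]$, and then by naturality together with the action of a \UnMoT on an environment this is $\intp{t}(\ext(\trunc\vrho n, \Ltotal\vrho n))$. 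On the right, $\intp{t[\sextt{\vect I}n]}(\vrho') = \intp{t}(\ext(\trunc{\vrho'}n, \Ltotal{\vrho'}n))$ directly from the clauses for $\intp{\sextt\vsigma n}$ and $\intp{\vect I}$, and $\intp{T[\sextt{\vect I}n]}$ evaluates the same way. Since $\trunc\vrho n \approx \trunc{\vrho'}n \in \intp{\vGamma}$ and $\Ltotal\vrho n = \Ltotal{\vrho'}n$ by the truncation and truncation-offset lemmas for $\intp{\vGamma}$, we get $\ext(\trunc\vrho n, \Ltotal\vrho n) \approx \ext(\trunc{\vrho'}n, \Ltotal{\vrho'}n) \in \intp{\vGamma; \cdot}$; feeding this into $\msemtyp[\vGamma; \cdot]{t}{T}$ yields both $\intp{T[\sextt{\vect I}n]}(\vrho) \approx \intp{T[\sextt{\vect I}n]}(\vrho') \in \Uc$ and the required relation in $\El(\intp{T[\sextt{\vect I}n]}(\vrho))$, and the naturality equalities for both sides follow from those of $t$ via \Cref{conj:dt:intp-mon} and \Cref{lem:dt:unbox.-mon}.

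For the $\eta$ rule, assume $\vrho \approx \vrho' \in \intp{\vGamma}$; by $\msemtyp[\vGamma]{t}{\square T}$ we have $\intp{t}(\vrho) \approx \intp{t}(\vrho') \in \El(\intp{\square T}(\vrho))$ with $\intp{\square T}(\vrho) = \squared(\intp{T}(\ext(\vrho)))$, and the type obligation $\intp{\square T}(\vrho) \approx \intp{\square T}(\vrho') \in \Uc$ is immediate from the premise and \Cref{lem:dt:u-el-resp-mon}. Computing, $\intp{\boxit{(\unbox 1 t)}}(\vrho') = \tbox(\tunbox \cdot (1, \intp{t}(\vrho')))$ (using $\trunc{\ext(\vrho')}1 = \vrho'$ and $\Ltotal{\ext(\vrho')}1 = 1$), so it remains to show $\intp{t}(\vrho) \approx \tbox(\tunbox \cdot (1, \intp{t}(\vrho'))) \in \El(\squared(\intp{T}(\ext(\vrho))))$. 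Unfolding the definition of $\El(\squared \cdot)$, fix $\kappa$ and $n$: the right component is $(\tunbox \cdot (1, \intp{t}(\vrho')))[\sextt\kappa 1][\sextt\vone n]$, which by two applications of \Cref{lem:dt:unbox.-mon} together with the composition and truncation laws for \UnMoTs collapses to $\tunbox \cdot (n, \intp{t}(\vrho')[\kappa])$; the required $\tunbox \cdot (n, \intp{t}(\vrho)[\kappa]) \approx \tunbox \cdot (n, \intp{t}(\vrho')[\kappa]) \in \El(\intp{T}(\ext(\vrho))[\sextt\kappa n])$ is then exactly the instance of $\intp{t}(\vrho) \approx \intp{t}(\vrho') \in \El(\squared(\intp{T}(\ext(\vrho))))$ at $\kappa$ and $n$. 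The naturality equalities hold for the same reason as in the simply typed $\eta$ case.

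The step I expect to require the most care is the \UnMoT bookkeeping in the $\eta$ argument: verifying that $(\tunbox \cdot (1, a))[\sextt\kappa 1][\sextt\vone n]$ really reduces to $\tunbox \cdot (n, a[\kappa])$, i.e. that $\Ltotal{\sextt\kappa 1}1 = 1$, $\trunc{(\sextt\kappa 1)}1 = \kappa$, $\Ltotal{\sextt\vone n}1 = n$ and $\trunc{(\sextt\vone n)}1 = \vone$, and likewise the identity $\ext(\trunc\vrho n)[\sextt\vone{\Ltotal\vrho n}] = \ext(\trunc\vrho n, \Ltotal\vrho n)$ used in the $\beta$ case. These are routine consequences of the operations inherited from \Cref{sec:st:ut-mtrans}, but they are the places where an off-by-one would hide; everything else is direct unfolding of definitions and an appeal to the premise.
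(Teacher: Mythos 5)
Your proposal is correct and follows essentially the same route as the paper: the paper's own proof is just "evaluate both sides and the results only differ by evaluation environments, which can be related by definition," and your detailed computation is the expected expansion of that, mirroring the simply typed $\beta$/$\eta$ arguments (unfolding $\intp{\sextt{\vect I}n}$, the $\tunbox\cdot$/$\tbox$ clause, and \Cref{lem:dt:unbox.-mon}) before discharging the remaining obligation with the semantic premise. The \UnMoT bookkeeping you flag ($\Ltotal{\sextt\kappa 1}1 = 1$, $\trunc{(\sextt\kappa 1)}1 = \kappa$, etc.) indeed checks out against the operations inherited from \Cref{sec:st:ut-mtrans}, so no gap remains.
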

\begin{proof}
  Evaluate both sides and the results only differ by evaluation environments, which
  can be related by definition. 
\end{proof}

\begin{lemma}
    \begin{mathpar}
    \inferrule
    {\msemtyp[\vDelta; \cdot]{T}{\Se_i} \\ \msemtyp{\vsigma}{\vDelta}}
    {\msemtyeq{\square T[\vsigma]}{\square (T[\sextt\vsigma 1])}{\Se_i}}

    \inferrule
    {\msemtyp[\vDelta; \cdot]{t}{T} \\ \msemtyp{\vsigma}{\vDelta}}
    {\msemtyeq{\boxit t[\vsigma]}{\boxit {(t[\sextt\vsigma 1])}}{\square T [\vsigma]}}
  \end{mathpar}
\end{lemma}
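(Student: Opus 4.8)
The plan is to dispatch both congruence-with-substitution rules in parallel, since after evaluation both reduce to the single fact that the two sides agree up to the choice of evaluation environment. The well-formedness conjuncts $\vDash\vGamma$ in the semantic equivalence judgments come from the premise $\msemtyp{\vsigma}{\vDelta}$ (and $\vDash\vGamma;\cdot$ from $\msemtyp[\vDelta;\cdot]{-}{-}$), so I would fix $\vrho \approx \vrho' \in \intp{\vGamma}$ and work pointwise. The one computation to record first is $\intp{\sextt\vsigma 1}(\ext(\vrho)) = \ext(\intp{\vsigma}(\vrho))$: since $\trunc{\ext(\vrho)}{1} = \vrho$ and $\Ltotal{\ext(\vrho)}{1} = 1$, the defining clause $\intp{\sextt\vsigma 1}(\vrho'') = \ext(\intp{\vsigma}(\trunc{\vrho''}{1}), \Ltotal{\vrho''}{1})$ gives $\ext(\intp{\vsigma}(\vrho), 1) = \ext(\intp{\vsigma}(\vrho))$. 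Consequently $\square T[\vsigma]$ and $\square(T[\sextt\vsigma 1])$ both evaluate under $\vrho$ (resp.\ $\vrho'$) to $\squared(\intp{T}(\ext(\intp{\vsigma}(\vrho))))$ (resp.\ with $\vrho'$ in place of $\vrho$), and likewise $\boxit t[\vsigma]$ and $\boxit(t[\sextt\vsigma 1])$ both evaluate to $\tbox(\intp{t}(\ext(\intp{\vsigma}(\vrho))))$ up to the same swap of environment.

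Next I would supply the common engine. From $\msemtyp{\vsigma}{\vDelta}$ and $\vrho \approx \vrho' \in \intp{\vGamma}$ we get $\intp{\vsigma}(\vrho) \approx \intp{\vsigma}(\vrho') \in \intp{\vDelta}$, whence $\ext(\intp{\vsigma}(\vrho)) \approx \ext(\intp{\vsigma}(\vrho')) \in \intp{\vDelta;\cdot}$ by unfolding the definition of $\intp{\vDelta;\cdot}$ (both offsets are $1$ and the two $1$-truncations are $\intp{\vsigma}(\vrho)$, $\intp{\vsigma}(\vrho')$). For the first rule, feeding this into $\msemtyp[\vDelta;\cdot]{T}{\Se_i}$ gives $\intp{T}(\ext(\intp{\vsigma}(\vrho))) \approx \intp{T}(\ext(\intp{\vsigma}(\vrho'))) \in \Uc_i$ (using $\El(\Ud_i) = \Uc_i$), and the $\squared$ clause of $\Uc_i$ closes the goal, its hypothesis $\forall\kappa.\ A[\kappa] \approx A'[\kappa] \in \Uc_i$ being discharged by \Cref{lem:dt:u-el-resp-mon}; the same argument also furnishes the ambient requirement $\intp{\square T[\vsigma]}(\vrho) \approx \intp{\square T[\vsigma]}(\vrho') \in \Uc$ needed for the second rule (the $\Se_i$-side relation $\Ud_i \approx \Ud_i \in \Uc$ is trivial). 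For the second rule, feeding $\ext(\intp{\vsigma}(\vrho)) \approx \ext(\intp{\vsigma}(\vrho'))$ into $\msemtyp[\vDelta;\cdot]{t}{T}$ gives $\intp{t}(\ext(\intp{\vsigma}(\vrho))) \approx \intp{t}(\ext(\intp{\vsigma}(\vrho'))) \in \El(\intp{T}(\ext(\intp{\vsigma}(\vrho))))$. Unfolding $\El_i(\squared A)$, I assume $\kappa, n$ and compute $\tunbox \cdot (n, \tbox(a)[\kappa]) = \tunbox \cdot (n, \tbox(a[\sextt\kappa 1])) = a[\sextt\kappa 1][\sextt\vone n] = a[\sextt\kappa n]$, the last step by the \UnMoT identity $\sextt\kappa 1 \circ \sextt\vone n = \sextt\kappa n$ (a direct calculation on the meta-function-space representation). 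Applying \Cref{lem:dt:u-el-resp-mon} to the relation just obtained, with the \UnMoT $\sextt\kappa n$, yields $a[\sextt\kappa n] \approx a'[\sextt\kappa n] \in \El(\intp{T}(\ext(\intp{\vsigma}(\vrho)))[\sextt\kappa n])$ where $a = \intp{t}(\ext(\intp{\vsigma}(\vrho)))$ and $a'$ the $\vrho'$-variant, which is exactly the clause of $\El_i(\squared -)$ we need. The naturality equalities for both rules are then immediate from \Cref{conj:dt:intp-mon}, exactly as in the simply typed case.

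The main obstacle I expect is not any single hard step but the \UnMoT bookkeeping that glues the pieces: establishing $\intp{\sextt\vsigma 1}(\ext(\vrho)) = \ext(\intp{\vsigma}(\vrho))$ and the identity $\sextt\kappa 1 \circ \sextt\vone n = \sextt\kappa n$, and then confirming that the semantic type under which the two $\boxit{-}$ values must be related, namely $\El_i(\intp{T}(\ext(\intp{\vsigma}(\vrho)))[\sextt\kappa n])$, is \emph{literally} what \Cref{lem:dt:u-el-resp-mon} returns, so that no transport of $\El$ across provably-equal-but-syntactically-distinct type codes is required (and if a small such transport turns out to be needed, \Cref{lem:dt:el-resp-u} supplies it).
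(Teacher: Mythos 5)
Your proposal is correct and follows essentially the same route as the paper's (deliberately terse) proof: evaluate both sides, observe via $\intp{\sextt\vsigma 1}(\ext(\vrho)) = \ext(\intp{\vsigma}(\vrho))$ that they differ only in the evaluation environment, and conclude from the premises together with monotonicity (\Cref{lem:dt:u-el-resp-mon}); your explicit unfolding of the $\squared$ clause of $\Uc_i$, of $\El_i(\squared A)$, and the identity $\sextt\kappa 1 \circ \sextt\vone n = \sextt\kappa n$ just spells out what the paper leaves implicit. (The only nit is the parenthetical claim that $\msemtyp[\vDelta;\cdot]{-}{-}$ yields $\vDash\vGamma;\cdot$ --- it yields $\vDash\vDelta;\cdot$, which is what is actually needed, so nothing is affected.)
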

\begin{proof}
  Evaluate both sides. We match the evaluation environment and the goal is immediate. 
\end{proof}

\begin{lemma}
  \begin{mathpar}
    \inferrule
    {\msemtyp[\vDelta]{t}{\square T} \\ |\vDelta'| = n \\ \msemtyp{\vsigma}{\vDelta; \vDelta'}}
    {\msemtyeq{\unbox n t[\vsigma]}{\unbox{\Ltotal\vsigma
          n}{(t[\trunc\vsigma n])}}{T[\sextt{\trunc\vsigma n}{\Ltotal{\vsigma}n}]}}
  \end{mathpar}
\end{lemma}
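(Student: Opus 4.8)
The plan is to follow the recurring template for semantic substitution-application rules in the completeness proof: assume $\vrho \approx \vrho' \in \intp{\vGamma}$ together with the three premises, evaluate both sides of the claimed equation, and show they reduce to the same form up to replacing $\vrho$ by $\vrho'$. Unfolding the evaluation rule for $\unbox$ and then applying \Cref{lem:dt:L-intp-vsigma-vrho,lem:dt:trunc-intp-vsigma-vrho} gives
\begin{align*}
  \intp{\unbox n t[\vsigma]}(\vrho)
  &= \tunbox \cdot (\Ltotal{\intp{\vsigma}(\vrho)}n, \intp{t}(\trunc{\intp{\vsigma}(\vrho)}n)) \\
  &= \tunbox \cdot (\Ltotal\vrho{\Ltotal\vsigma n}, \intp{t}(\intp{\trunc\vsigma n}(\trunc\vrho{\Ltotal\vsigma n}))),
\end{align*}
while, directly from the evaluation rules, $\intp{\unbox{\Ltotal\vsigma n}{(t[\trunc\vsigma n])}}(\vrho') = \tunbox \cdot (\Ltotal{\vrho'}{\Ltotal\vsigma n}, \intp{t}(\intp{\trunc\vsigma n}(\trunc{\vrho'}{\Ltotal\vsigma n})))$. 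Writing $\vrho_0 := \intp{\trunc\vsigma n}(\trunc\vrho{\Ltotal\vsigma n})$, $\vrho_0' := \intp{\trunc\vsigma n}(\trunc{\vrho'}{\Ltotal\vsigma n})$, $m := \Ltotal\vrho{\Ltotal\vsigma n}$, $m' := \Ltotal{\vrho'}{\Ltotal\vsigma n}$, the two sides are $\tunbox \cdot (m, \intp t(\vrho_0))$ and $\tunbox \cdot (m', \intp t(\vrho_0'))$.

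Next I would instantiate the premise $\msemtyp[\vDelta]t{\square T}$ at the pair $\vrho_0 \approx \vrho_0'$. That $\vrho_0 \approx \vrho_0' \in \intp{\vDelta}$ follows by applying $\msemtyp\vsigma{\vDelta; \vDelta'}$ to $\vrho \approx \vrho'$, truncating by $n$ (using closure of $\intp{-}$ under truncation and $|\vDelta'| = n$, so that $\trunc{(\vDelta;\vDelta')}n = \vDelta$), and rewriting both truncated environments via \Cref{lem:dt:trunc-intp-vsigma-vrho}; the same step, plus the lemma that truncation offset respects $\intp{-}$, gives $m = m'$. Hence $\intp t(\vrho_0) \approx \intp t(\vrho_0') \in \El(\intp{\square T}(\vrho_0))$ and $\intp{\square T}(\vrho_0) \approx \intp{\square T}(\vrho_0') \in \Uc$. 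Since $\intp{\square T}(\vrho_0) = \squared(\intp T(\ext(\vrho_0)))$, unfolding the $\El(\squared -)$ clause at the identity \UnMoT{} $\vone$ with offset $m$ yields $\tunbox \cdot (m, \intp t(\vrho_0)) \approx \tunbox \cdot (m, \intp t(\vrho_0')) \in \El(\intp T(\ext(\vrho_0))[\sextt\vone m])$, and unfolding the $\squared$ clause of $\Uc$ at the \UnMoT{} $\sextt\vone m$ gives $\intp T(\ext(\vrho_0))[\sextt\vone m] \approx \intp T(\ext(\vrho_0'))[\sextt\vone m] \in \Uc$. Using $m = m'$, these are exactly the $\El$- and $\Uc$-statements the conclusion requires, provided one identifies $\intp T(\ext(\vrho_0))[\sextt\vone m]$ with $\intp{T[\sextt{\trunc\vsigma n}{\Ltotal\vsigma n}]}(\vrho)$ and likewise for $\vrho'$.

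That identification is the one delicate point. By the naturality equality for $\intp{-}$ (\Cref{conj:dt:intp-mon}), $\intp T(\ext(\vrho_0))[\sextt\vone m] = \intp T(\ext(\vrho_0)[\sextt\vone m])$, and unfolding the definitions of $\vrho[\kappa]$ and $\ext$ shows $\ext(\vrho_0)[\sextt\vone m] = \ext(\vrho_0, m)$, which is precisely $\intp{\sextt{\trunc\vsigma n}{\Ltotal\vsigma n}}(\vrho)$; the primed case is the same. So the main obstacle will be this routine but fiddly unfolding of $\vrho[\kappa]$, $\ext$, and the evaluation of $\sextt{-}{-}$ — entirely parallel to the bookkeeping already performed for the $\square$-$\beta$ rule, where the equation $\ext(\vrho)[\sextt\vone{-}] = \ext(\vrho, -)$ plays the same role. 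The remaining naturality-equality obligations of $\msemtyeq{-}{-}{-}$ follow immediately from \Cref{conj:dt:intp-mon} applied to the two displayed reductions above.
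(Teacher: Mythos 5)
Your proposal is correct and follows essentially the same route as the paper: assume related environments, evaluate both sides, rewrite with \Cref{lem:dt:L-intp-vsigma-vrho,lem:dt:trunc-intp-vsigma-vrho}, truncate the environments related by the premise for $\vsigma$, and apply the semantic premise for $t$. The extra bookkeeping you flag as delicate — unfolding $\El(\squared{-})$ at $\vone$ with offset $m$ and identifying $\ext(\vrho_0)[\sextt\vone m]$ with $\ext(\vrho_0,m)$ via the naturality equalities — is exactly what the paper compresses into ``we conclude the goal by applying $H_1$ to $H_4$,'' so you have simply made its implicit steps explicit.
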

\begin{proof}
  \begin{align*}
    H_1: &\ \msemtyp[\vDelta]{t}{\square T}
           \tag{by assumption} \\ 
    H_2: &\ \msemtyp{\vsigma}{\vDelta; \vDelta'}
           \tag{by assumption} \\ 
    H_3: &\ \vrho \approx \vrho' \in \intp{\vGamma}
           \tag{by assumption} \\
         &\ \intp{\vsigma}(\vrho)
           \approx \intp{\vsigma}(\vrho')
           \in \intp{\vDelta; \vDelta'}
           \tag{by $H_2$} \\
    H_4: &\ \trunc{\intp{\vsigma}(\vrho)} n
           \approx \intp{\trunc\vsigma n}(\trunc{\vrho'}{\Ltotal\vsigma n})
           \in \intp{\vDelta}
           \tag{by \Cref{lem:dt:L-intp-vsigma-vrho,lem:dt:trunc-intp-vsigma-vrho}}
  \end{align*}
  We conclude the goal by applying $H_1$ to $H_4$. 
\end{proof}

\subsection{Fundamental Theorem}

We have proved all semantic rules. We can thus conclude the fundamental theorem.
\begin{theorem}\labeledit{thm:dt:per-fund}
  (fundamental)
  \begin{itemize}
  \item If $\vdash \vGamma$, then $\vDash \vGamma$. 
  \item If $\mtyping t T$, then $\msemtyp t T$.
  \item If $\mtyping{\vsigma}{\vDelta}$, then $\msemtyp{\vsigma}{\vDelta}$.
  \item If $\vdash \vGamma \approx \vGamma'$, then $\vDash \vGamma \approx \vGamma'$. 
  \item If $\mtyequiv{t}{t'}T$, then $\msemtyeq{t}{t'}T$.
  \item If $\mtyequiv{\vsigma}{\vsigma'}{\vDelta}$, then $\msemtyeq{\vsigma}{\vsigma'}{\vDelta}$.
  \end{itemize}
\end{theorem}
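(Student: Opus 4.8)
The plan is to prove all six clauses simultaneously by a single mutual induction on the derivation of the syntactic judgment appearing in each clause --- on $\vdash \vGamma$, $\mtyping t T$, $\mtyping{\vsigma}{\vDelta}$, $\vdash \vGamma \approx \vGamma'$, $\mtyequiv{t}{t'}T$, and $\mtyequiv{\vsigma}{\vsigma'}{\vDelta}$ together. A single simultaneous induction is forced because these judgments are mutually defined: context well-formedness refers to typing through the premise $\mtyping[\vGamma;\Gamma]{T}{\Se_i}$, substitution typing refers to context equivalence through its conversion rule, and so on.

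For each inference rule the argument is uniform. We apply the induction hypotheses to the premises to obtain their semantic counterparts, and then invoke the matching semantic lemma established in the preceding subsections to discharge the conclusion. Every syntactic rule of \mintslang{} --- the well-formedness rules, the conversion rules, the substitution algebra (identity, composition, $\wk_x$, $\sextt{\vsigma}{n}$, extension), the variable rules, the $\Pi$-type rules, the natural-number rules, the universe rules, and the modality rules --- has exactly such a semantic lemma above, so the inductive step for that rule is immediate once its premises have been translated. For instance, for the conversion rule concluding $\mtyping t {T'}$ from $\mtyping t T$ and $\mtyequiv{T}{T'}{\Se_i}$, the IH yields $\msemtyp t T$ and $\msemtyeq{T}{T'}{\Se_i}$, and the semantic conversion lemma gives $\msemtyp t {T'}$; for the rule concluding $\mtyping[\vGamma; \vGamma']{\sextt\vsigma n}{\vDelta; \cdot}$, the IH on $\mtyping{\vsigma}{\vDelta}$ together with the first clause applied to $\vdash \vGamma; \vGamma'$ supply exactly the hypotheses of the corresponding semantic lemma.

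Some attention is needed to the well-formedness side conditions $\vDash \vGamma$ and $\vDash \vDelta$ that are packaged into the semantic typing judgments. When a syntactic rule carries an explicit premise $\vdash \vGamma$ or $\vdash \vGamma; \vGamma'$, the first clause applied to that premise supplies the fact directly; otherwise the required $\vDash$ fact is either returned by the relevant semantic lemma itself, or is recovered by first applying syntactic presupposition and then the first (or fourth) clause to the resulting well-formedness derivation. The $\vdash \vGamma \approx \vGamma'$ clause is handled by the three context-equivalence lemmas in the ``Well-formed Context Stacks'' subsection, again with the IH translating their premises.

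The main difficulty is not any individual case but the organization of this simultaneous induction: one must verify that every premise of every rule falls under one of the six clauses so that the IH applies, and that the naturality-equality conjuncts built into the semantic judgments are threaded through --- which is no extra burden, since each semantic lemma was stated and proven together with its naturality equalities. With all of this in place the fundamental theorem is essentially an assembly of the semantic lemmas, and, combined with realizability and the reflexivity of $\uparrow^{\vGamma}$ in $\intp{\vGamma}$, it is the engine of the completeness proof.
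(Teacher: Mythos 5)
Your proposal is correct and matches the paper's argument: the paper establishes a semantic lemma for every syntactic rule in the preceding subsections and then concludes the fundamental theorem by exactly the mutual induction on derivations you describe, each case discharged by the corresponding semantic rule. Your additional remarks on threading the $\vDash$ side conditions and naturality equalities are consistent with how those semantic lemmas were stated, so nothing is missing.
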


We also can show the initial environment is reflexive in $\intp{\vGamma}$.
\begin{lemma}
  $\uparrow^{\vGamma} \approx \uparrow^{\vGamma} \in \intp{\vGamma}$
\end{lemma}
\begin{proof}
  Immediate by \Cref{lem:dt:varbot} and realizability. 
\end{proof}

\begin{theorem} (completeness)
  If $\mtyequiv{t}{t'}{T}$, then $\nbe^T_{\vGamma}(t) = \nbe^T_{\vGamma}(t')$. 
\end{theorem}
\begin{proof}
  Combine fundamental theorem and realizability. 
\end{proof}

\section{Soundness for Dependent Types}

\subsection{Gluing Model}

Following the simply typed case, when constructing a gluing model, it has to be Kripke
w.r.t. restricted weakenings. We continue to use the definition of restricted
weakenings in \Cref{sec:st:rweaken} and inherit its properties, as these properties do
not rely on whether types depend on terms or not.
To define the gluing model, we shall work with the values considered as
types. Luckily, $\Uc_i$ does exactly that. The following relations are defined by
induction on $A \in \Uc_i$.
\begin{definition}
  Given $A \in \Uc_i$,
  \begin{itemize}
  \item $\mgluty T[i] A$ denotes that $T$ is a well-formed type in $\vGamma$ and
    corresponds to a domain value $A$ as a semantic type. 
  \item $\mglutm t T a [i] A$ denotes that $t$ has type $T$ in $\vGamma$ and corresponds
    to a domain value $a$ which has semantic type $A$. 
  \end{itemize}
  They are defined as follows by induction on well-foundedness of $i$ and $A \in \Uc_i$:
  \begin{itemize}
  \item $A = \uparrow^{A'}(C)$, 
    \begin{mathpar}
      \inferrule
      {C \approx C \in \bot}
      {\uparrow^{A'}(C) \approx \uparrow^{A'}(C) \in \Uc_i}
    \end{mathpar}
  \begin{align*}
    \mgluty T[i]\D
    &:= \typing{T}{\Se_i} \tand \forall \vsigma : \vDelta \To_r \vGamma.\  \mtyequiv[\vDelta]{T[\vsigma]}{\Rne_\alpha(C[\vsigma])}{\Se_i} \\
    \mglutm t T {\uparrow^{A''}(c)}[i] {\D}
    &:= c \in \bot \tand \typing T{\Se_i} \tand \typing t T \\
    & \tand \forall \vsigma : \vDelta \To_r \vGamma.\ \mtyequiv[\vDelta]{T[\vsigma]}{\Rne_\alpha(C[\vsigma])}{\Se_i} \tand \\
    & \qquad \mtyequiv[\vDelta]{t[\vsigma]}{\Rne_\alpha(c[\vsigma])}{T[\vsigma]}
  \end{align*}
    
  \item $A = \Nd$,
    \begin{mathpar}
      \inferrule
      { }
      {\Nd \approx \Nd \in \Uc_i}
    \end{mathpar}
    \begin{align*}
      \mgluty T[i]\D
      &:= \mtyequiv{T}{\Nat}{\Se_i} \\
      \mglutm t T a [i] \D
      &:= \mglunat{t}{a} \tand \mtyequiv{T}{\Nat}{\Se_i}
    \end{align*}
    where we need an auxiliary relation to relate syntactic terms of type $\Nat$ and
    semantic values of PER $Nat$:
    \begin{mathpar}
      \inferrule
      {\mtyequiv{t}{\ze}{\Nat}}
      {\mglunat{t}{\zed}}

      \inferrule
      {\mtyequiv{t}{\su t'}{\Nat} \\ \mglunat{t'}{b}}
      {\mglunat{t}{\sud b}}

      \inferrule
      {c \in \bot \\
        \forall \vsigma : \vDelta \To_r \vGamma.\ \mtyequiv{t[\vsigma]}{\Rne_\alpha(c[\vsigma])}{\Nat}}
      {\mglunat{t}{\uparrow^\Nd(c)}}
    \end{mathpar}
    
  \item $A = \Ud_j$, 
    \begin{mathpar}
      \inferrule
      {j < i}
      {\Ud_j \approx \Ud_j \in \Uc_i}
    \end{mathpar}
    \begin{align*}
      \mgluty T[i]{\D}
      &:= \mtyequiv{T}{\Se_j}{\Se_i} \\
      \mglutm t T a[i]{\D}
      &:= \typing t T \tand \mtyequiv{T}{\Se_j}{\Se_i} \tand a \in \Uc_j
        \tand \mgluty t[j]{a} 
    \end{align*}
    
  \item $A = \squared A'$,
    \begin{mathpar}
      \inferrule
      {\forall \kappa. A'[\kappa] \approx A'[\kappa] \in \Uc_i}
      {\squared A' \approx \squared A' \in \Uc_i}
    \end{mathpar}
    \begin{align*}
      \mgluty T[i]{\squared A'}
      &:= \exists T'. \mtyequiv{T}{\square T'}{\Se_i} \\
        & \quad \tand \forall \vDelta',
        \vdash \vDelta; \vDelta',
        \vsigma : \vDelta
        \To_r \vGamma.\ \mgluty[\vDelta; \vDelta']{T'[\sextt\vsigma {|\vDelta'|}]}[i]{A'[\sextt\vsigma {|\vDelta'|}]} \\
      \mglutm t T a[i]{\squared A'}
      &:= \mtyping t T \tand a \in \El_i(\squared A')  \tand \exists T'. \mtyequiv{T}{\square T'}{\Se_i} \tand \\
      &  \qquad \forall \vDelta', \vdash \vDelta; \vDelta', \vsigma : \vDelta \To_r \vGamma. \\
      & \qquad\quad \mglutm[\vDelta;
        \vDelta']{\unbox{|\vDelta'|}{(t[\vsigma])}}{T'[\sextt\vsigma {|\vDelta'|}]}{\tunbox \cdot
        (|\vDelta'|, a[\vsigma])}[i]{A'[\sextt{\mt{\vsigma}}{|\vDelta'|}]}
    \end{align*}
    
  \item $A = Pi(A', x.T', \vrho)$,
    \begin{mathpar}
      \inferrule
      {\forall \kappa. A'[\kappa] \approx A'[\kappa] \in \Uc_i \\ \forall \kappa, a
        \approx a' \in \El_i(A'[\kappa]). \intp{T'}(\ext(\vrho[\kappa], x, a))
        \approx \intp{T'}(\ext(\vrho[\kappa], x, a') \in \Uc_i}
      {Pi(A', x.T', \vrho) \approx Pi(A', x.T', \vrho) \in \Uc_i}
    \end{mathpar}
    \begin{align*}
      \mgluty[\vGamma; \Gamma] T[i]{Pi(A', x.T', \vrho)}
      &:= \exists T_1, T_2. \mtyequiv[\vGamma; \Gamma]{T}{\Pi(x : T_1). T_2}{\Se_i} \tand
        \mtyping[\vGamma;\Gamma,x :T_1]{T_2}{\Se_i} \tand \\
      & \qquad \forall \vsigma : \vDelta \To_r \vGamma;\Gamma.\
        \mgluty[\vDelta]{T_1[\vsigma]}[i]{A'[\vsigma]} \tand \\
      & \qquad \forall
        \mglutm[\vDelta]{s}{T_1[\vsigma]}{b}[i]{A'[\vsigma]}.\ \mgluty[\vDelta]{T_2[\vsigma, 
        s/x]}[i]{\intp{T'}(\ext(\vrho[\vsigma],x, b))} \\
      \mglutm[\vGamma; \Gamma] t T a[i]{Pi(A', x.T', \vrho)}
      &:= \mtyping[\vGamma; \Gamma] t T \tand a \in \El_i(Pi(A', x.T', \vrho)) \\
      & \tand \exists T_1, T_2. \mtyequiv[\vGamma; \Gamma]{T}{\Pi(x : T_1). T_2}{\Se_i} \tand \\
      & \qquad \forall \vsigma : \vDelta \To_r
        \vGamma.\ \mgluty[\vDelta]{T_1[\vsigma]}[i]{A'[\vsigma]} \tand \\
      & \qquad \forall \mglutm[\vDelta]{s}{T_1[\vsigma]}{b}[i]{A'[\vsigma]}. \\
      & \qquad\quad \mglutm[\vDelta]{t[\vsigma]\ s}{T_2[\vsigma, s/x]}{a[\vsigma] \cdot b}[i]{\intp{T'}(\ext(\vrho[\vsigma],x, b))}
    \end{align*}
  \end{itemize}
\end{definition}

\subsection{Properties of Gluing Model}

In the following lemmas, we assume $A \in \Uc_i$ whenever necessary. 


\begin{lemma}
  If $\mgluty T[i]A$, then $\mtyping T \Se_i$. 
\end{lemma}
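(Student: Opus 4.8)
The statement to prove is: if $\mgluty T[i]A$, then $\mtyping T \Se_i$. This is a straightforward inversion-style lemma, so my proof plan is short.

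\begin{proof}
The plan is to proceed by induction on the derivation of $A \in \Uc_i$, which is exactly the structure on which $\mgluty T[i]A$ is defined. For each shape of $A$ I will unfold the definition of $\mgluty T[i]A$ and read off the typing of $T$.

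First, when $A = \uparrow^{A'}(C)$, the definition gives $\typing{T}{\Se_i}$ as its first conjunct directly, so there is nothing to do. When $A = \Nd$, the definition gives $\mtyequiv{T}{\Nat}{\Se_i}$; applying presupposition for equivalence judgments yields $\mtyping T \Se_i$. The cases $A = \Ud_j$ and the $\square$ and $Pi$ cases are handled the same way: in each, the definition of $\mgluty T[i]A$ contains either a direct premise $\typing T \Se_i$ or a type-equivalence premise ($\mtyequiv{T}{\Se_j}{\Se_i}$, or $\mtyequiv{T}{\square T'}{\Se_i}$, or $\mtyequiv[\vGamma;\Gamma]{T}{\Pi(x:T_1).T_2}{\Se_i}$), from which $\mtyping T \Se_i$ follows immediately, in the equivalence cases by presupposition. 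In none of these cases is the induction hypothesis actually needed; the recursion on $A \in \Uc_i$ only serves to make the case analysis well-founded, since $\mgluty{\_}{\_}{\_}$ is defined by that recursion.

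There is essentially no obstacle here: the lemma is immediate by inspecting the defining clauses of the gluing relation for types and invoking presupposition where an equivalence rather than a typing judgment is recorded. The only mild subtlety is that the definition is by induction on well-foundedness of $i$ together with $A \in \Uc_i$, so the proof must mirror that recursion to be formally valid, but each individual case is discharged without appeal to the inductive hypothesis.
\end{proof}
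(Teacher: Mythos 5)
Your proof is correct and matches the argument the paper intends: the paper states this lemma without proof, treating it as immediate from the defining clauses of $\mgluty T[i]A$ (each case either records $\mtyping T{\Se_i}$ directly or records a type equivalence from which presupposition yields it). Your observation that the induction on $A \in \Uc_i$ is only needed to make the case analysis on the recursively defined relation well-founded, with no genuine use of the IH, is accurate.
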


\begin{lemma}\labeledit{lem:dt:inv-glu-tm}
  If $\mglutm t T a[i]A$, then
  \begin{itemize}
  \item $\mtyping t T$,
  \item $a \in \El_i(A)$;
  \item $\mgluty T[i]A$.
  \end{itemize}
\end{lemma}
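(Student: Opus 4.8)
The plan is to prove the statement by induction on the derivation of $A \in \Uc_i$ — the same recursion on which $\mglutm t T a[i]A$ is defined, except that here no descent on the universe level $i$ is needed because the $\Ud_j$ clause reads all three conclusions off directly. For the conclusion $\mtyping t T$: in the clauses $A = \uparrow^{A'}(C)$, $A = \Ud_j$, $A = \squared A'$ and $A = Pi(A',x.T',\vrho)$ the conjunct $\mtyping t T$ occurs literally in the definition of $\mglutm t T a[i]A$, so nothing is to be done. The only genuine case is $A = \Nd$, where the definition instead provides $\mglunat t a$ and $\mtyequiv{T}{\Nat}{\Se_i}$; here an easy side induction on $\mglunat t a$ yields $\mtyping{t}{\Nat}$ — each of its three clauses carries a judgemental equation between $t$ and a canonical form at type $\Nat$, instantiating $\vsigma$ at $\vect I$ (which is restricted) in the neutral clause, and presupposition promotes this to $\mtyping{t}{\Nat}$ — whence $\mtyping t T$ by the conversion rule together with $\mtyequiv{\Nat}{T}{\Se_i}$.

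For the conclusion $a \in \El_i(A)$: in the clauses $A = \Ud_j$, $A = \squared A'$ and $A = Pi(A',x.T',\vrho)$ it is literally a conjunct of $\mglutm t T a[i]A$ (reading $a \in \Uc_j = \El_i(\Ud_j)$ in the first); for $A = \uparrow^{A'}(C)$ we have $c \in \bot$, hence $\uparrow^{A''}(c) \in Neu = \El_i(\uparrow^{A'}(C))$; and for $A = \Nd$ the same side induction on $\mglunat t a$ gives $a \in Nat = \El_i(\Nd)$. For the conclusion $\mgluty T[i]A$: in the clauses $A = \uparrow^{A'}(C)$, $A = \Nd$ and $A = \Ud_j$ the defining conditions of $\mgluty T[i]A$ already appear among the conjuncts of $\mglutm t T a[i]A$, so this is immediate. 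The two substantive clauses use the induction hypothesis. For $A = \squared A'$, $\mglutm t T a[i]{\squared A'}$ supplies a witness $T'$ with $\mtyequiv{T}{\square T'}{\Se_i}$ and, for every $\vDelta'$ with $\vdash\vDelta;\vDelta'$ and every restricted $\vsigma$, the relation $\mglutm[\vDelta;\vDelta']{\unbox{|\vDelta'|}{(t[\vsigma])}}{T'[\sextt\vsigma{|\vDelta'|}]}{\tunbox\cdot(|\vDelta'|, a[\vsigma])}[i]{A'[\sextt{\mt{\vsigma}}{|\vDelta'|}]}$; since $A'[\sextt{\mt{\vsigma}}{|\vDelta'|}] \in \Uc_i$ is a premise of the $\Uc_i$-derivation of $\squared A'$, the induction hypothesis extracts $\mgluty[\vDelta;\vDelta']{T'[\sextt\vsigma{|\vDelta'|}]}[i]{A'[\sextt\vsigma{|\vDelta'|}]}$, which is exactly what $\mgluty T[i]{\squared A'}$ demands. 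For $A = Pi(A',x.T',\vrho)$ the witnesses $T_1$, $T_2$, the convertibility $\mtyequiv{T}{\Pi(x:T_1).T_2}{\Se_i}$, and the clause $\mgluty[\vDelta]{T_1[\vsigma]}[i]{A'[\vsigma]}$ for all restricted $\vsigma$ come verbatim from $\mglutm t T a[i]A$; the remaining clause $\mgluty[\vDelta]{T_2[\vsigma,s/x]}[i]{\intp{T'}(\ext(\vrho[\vsigma],x,b))}$ for every $\mglutm[\vDelta]{s}{T_1[\vsigma]}{b}[i]{A'[\vsigma]}$ follows by the induction hypothesis applied to $\mglutm[\vDelta]{t[\vsigma]\ s}{T_2[\vsigma,s/x]}{a[\vsigma]\cdot b}[i]{\intp{T'}(\ext(\vrho[\vsigma],x,b))}$ at the subderivation $\intp{T'}(\ext(\vrho[\vsigma],x,b)) \in \Uc_i$; and $\mtyping[\vGamma;\Gamma,x:T_1]{T_2}{\Se_i}$ is recovered from $\mtyequiv{T}{\Pi(x:T_1).T_2}{\Se_i}$ by presupposition, \Cref{lem:dt:pi-inv} and \Cref{lem:dt:cumu}.

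The main obstacle is the well-foundedness bookkeeping: whenever the induction hypothesis is invoked — which happens only in the $\squared$ and $Pi$ clauses — one must verify that the semantic type at which it is applied ($A'[\kappa]$, respectively $\intp{T'}(\ext(\vrho[\kappa],x,b))$) really is a premise of the $\Uc_i$-derivation of the ambient type, so that the recursion stays inside the well-founded order underlying the inductive–recursive definition. This is immediate from the shapes of the introduction rules for $\Uc_i$, but it is the one point that must be made carefully; everything else is unfolding definitions, the elementary side induction on $\mglunat t a$, and isolated uses of the conversion rule, presupposition, \Cref{lem:dt:pi-inv} and \Cref{lem:dt:cumu}.
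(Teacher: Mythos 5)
Your proof is correct and takes essentially the same route as the paper's: induction on $A \in \Uc_i$, with the first two conclusions read off the defining conjuncts except for a side induction on $\mglunat t a$ in the $\Nd$ case, and the induction hypothesis invoked only in the $\squared$ and $Pi$ cases to establish $\mgluty T[i]A$. You in fact supply more detail than the paper (which dismisses the $Pi$ case with "apply IH appropriately"), e.g.\ the recovery of $\mtyping[\vGamma; \Gamma, x : T_1]{T_2}{\Se_i}$ via presupposition, \Cref{lem:dt:pi-inv} and \Cref{lem:dt:cumu}, and the well-foundedness bookkeeping for the recursive calls.
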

\begin{proof}
  Induction on $A \in \Uc_i$ immediately discharges most cases in the first two
  statements, except
  \begin{itemize}[label=Case]
  \item $A = \Nd$,
    \begin{mathpar}
      \inferrule
      { }
      {\Nd \approx \Nd \in \Uc_i}
    \end{mathpar}
    \begin{align*}
      \mglunat{t}{a} \tag{by assumption} 
    \end{align*}
    We do induction on this structure and obtain have $a \in Nat$. 
  \end{itemize}

  In the last statement, we consider two interesting cases:
  \begin{itemize}[label=Case]
  \item $A = \squared A'$,
    \begin{mathpar}
      \inferrule
      {\forall \kappa. A'[\kappa] \approx A'[\kappa] \in \Uc_i}
      {\squared A' \approx \squared A' \in \Uc_i}
    \end{mathpar}
    \begin{align*}
      &\ \mjudge{T \approx \square T'}
        \tag{by assumption} \\
      &\ \forall \vDelta', \vdash \vDelta; \vDelta', \vsigma : \vDelta \To_r \vGamma. \mglutm[\vDelta;
        \vDelta']{\unbox{|\vDelta'|}{(t[\vsigma])}}{T'[\sextt{\vsigma}{|\vDelta'|}]}{\tunbox \cdot
        (|\vDelta'|, a[\vsigma])}[i]{A'[\sextt{\mt{\vsigma}}{|\vDelta'|}]}
        \tag{by assumption} \\
      &\ \forall \vDelta', \vdash \vDelta; \vDelta', \vsigma : \vDelta \To_r \vGamma. \mgluty[\vDelta;
        \vDelta']{T'[\sextt{\vsigma}{|\vDelta'|}]}[i]{A'[\sextt{\mt{\vsigma}}{|\vDelta'|}]}
        \byIH
    \end{align*}
    
  \item $A = Pi(A', x.T', \vrho)$,
    \begin{mathpar}
      \inferrule
      {\forall \kappa. A'[\kappa] \approx A'[\kappa] \in \Uc_i \\ \forall \kappa, a
        \approx a' \in \El_i(A'[\kappa]). \intp{T'}(\ext(\vrho[\kappa], x, a))
        \approx \intp{T'}(\ext(\vrho[\kappa], x, a') \in \Uc_i}
      {Pi(A', x.T', \vrho) \approx Pi(A', x.T', \vrho) \in \Uc_i}
    \end{mathpar}
    Similar to the previous case, we just need to apply IH appropriately. 
  \end{itemize}
\end{proof}

\begin{lemma}[Type conversion]\labeledit{lem:dt:gluty-resp-equiv}
  If $\mgluty T[i]A$ and $\mtyequiv{T}{T'}{\Se_i}$, then $\mgluty{T'}[i]A$.
\end{lemma}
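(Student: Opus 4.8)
The plan is to proceed by induction on the derivation $A \in \Uc_i$, exactly mirroring the case structure used in the definition of $\mgluty{\_}[i]A$ and $\mglutm{\_}{\_}{\_}[i]A$. For each shape of $A$, the gluing relation $\mgluty T[i]A$ consists of (i) a core convertibility statement relating $T$ to some canonical syntactic form (e.g.\ $\mtyequiv{T}{\Nat}{\Se_i}$, or $\mtyequiv{T}{\square T'}{\Se_i}$, or $\mtyequiv{T}{\Pi(x:T_1).T_2}{\Se_i}$, or $\mtyequiv{T[\vsigma]}{\Rne_\alpha(C[\vsigma])}{\Se_i}$ under all restricted weakenings), together with (ii) some auxiliary data that does not mention $T$ directly but only these canonical pieces. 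Since $\mtyequiv{T}{T'}{\Se_i}$ is symmetric, from $\mtyequiv{T'}{T}{\Se_i}$ and transitivity of $\approx$ we can rewrite every occurrence of $T$ in part (i) into one of $T'$, and part (ii) is untouched. That immediately yields $\mgluty{T'}[i]A$.

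Concretely, first I would handle the base cases. For $A = \Nd$: from $\mtyequiv{T}{\Nat}{\Se_i}$ and $\mtyequiv{T'}{T}{\Se_i}$, transitivity gives $\mtyequiv{T'}{\Nat}{\Se_i}$, which is the whole content of $\mgluty{T'}[i]\Nd$. For $A = \Ud_j$: identical, with $\Se_j$ in place of $\Nat$. For $A = \uparrow^{A'}(C)$: we must re-prove $\mtyequiv[\vDelta]{T'[\vsigma]}{\Rne_\alpha(C[\vsigma])}{\Se_i}$ for every $\vsigma : \vDelta \To_r \vGamma$; this follows by applying the given equivalence $\mtyequiv[\vGamma]{T[\vsigma]}{\Rne_\alpha(C[\vsigma])}{\Se_i}$ after first observing $\mtyequiv[\vDelta]{T'[\vsigma]}{T[\vsigma]}{\Se_i}$ (congruence of substitution application on $\mtyequiv{T'}{T}{\Se_i}$), then using transitivity.

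For the recursive cases $A = \squared A'$ and $A = Pi(A', x.T', \vrho)$, the argument is the same in spirit: the witnesses ($T'$ for the $\square$ case; $T_1, T_2$ for the $\Pi$ case) are reused verbatim, and only the top-level convertibility $\mtyequiv{T}{\square T'}{\Se_i}$ (resp.\ $\mtyequiv{T}{\Pi(x:T_1).T_2}{\Se_i}$) is re-derived for $T'$ by transitivity with the hypothesis $\mtyequiv{T'}{T}{\Se_i}$. Note that the universally quantified sub-clauses (the ones indexed by $\vDelta', \vsigma$ or by $\mglutm[\vDelta]{s}{T_1[\vsigma]}{b}[i]{A'[\vsigma]}$) speak only about the witnesses, not about $T$ itself, so they transfer unchanged; in particular we do not need the inductive hypothesis to rewrite anything inside them — the induction here is only driven by the shape of $A$, and in fact for $\squared$ and $Pi$ one could even avoid the IH entirely, as was done for the analogous \Cref{lem:glue-resp-equiv'} in the simply typed setting.

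The main obstacle, such as it is, is purely bookkeeping: making sure that in the neutral-type case ($A = \uparrow^{A'}(C)$) one correctly pushes the convertibility through an arbitrary restricted weakening before invoking transitivity, and that presupposition (already available in this section) supplies the well-typedness side-conditions — e.g.\ $\mtyping{T'}{\Se_i}$, which follows from $\mtyequiv{T}{T'}{\Se_i}$. There is no genuinely hard content: the lemma is true essentially because the gluing predicate on types is, by design, stable under convertibility of the indexing type, and the only work is to verify this claim case by case.
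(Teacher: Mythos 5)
Your proof is correct and matches the paper's own argument, which is simply induction on $A \in \Uc_i$ (preceded by well-founded induction on $i$, reflecting how the gluing relations are defined) followed by congruence and transitivity of the equivalence judgment in each case. Your additional observation that the auxiliary clauses only mention the canonical witnesses, so the inductive hypothesis is hardly needed, is consistent with the paper's one-line proof.
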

\begin{proof}
  Induction on $i$ and then induction on $A \in \Uc_i$ and apply congruence.
\end{proof}

\begin{lemma}[Type conversion]\labeledit{lem:dt:glutm-resp-equiv-ty}
  If $\mglutm t T a[i]A$ and $\mtyequiv{T}{T'}{\Se_i}$, then $\mglutm t{T'}a[i]A$.
\end{lemma}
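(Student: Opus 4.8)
The statement to prove is \Cref{lem:dt:glutm-resp-equiv-ty}: if $\mglutm t T a[i]A$ and $\mtyequiv{T}{T'}{\Se_i}$, then $\mglutm t{T'}a[i]A$.

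The plan is to proceed by induction on $i$ (well-founded) and then by case analysis on the derivation $A \in \Uc_i$, exactly mirroring the structure of the definition of $\mglutm{\cdot}{\cdot}{\cdot}$ and the proof of \Cref{lem:dt:gluty-resp-equiv}. In each case the gluing relation $\mglutm t T a[i]A$ unfolds to a conjunction whose components are: (i) a typing premise $\mtyping t T$; (ii) a semantic membership $a \in \El_i(A)$; and (iii) some data relating $T$ (and $t$) to the semantic type $A$ via restricted weakenings, which in every case passes through an \emph{equivalence of types} of the form $\mtyequiv{T}{(\text{canonical form})}{\Se_i}$ or a $\Rne_\alpha$/$\Rty_\alpha$ readback equation at type $T$ under substitutions. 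The key observation is that component (ii) does not mention $T$ at all, so it transfers verbatim; and components (i) and (iii) only use $T$ through typing or $\approx$-judgements, so we can rewrite along $\mtyequiv{T}{T'}{\Se_i}$ using the conversion rule and transitivity/symmetry of type equivalence.

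Concretely: first I would invoke \Cref{lem:dt:inv-glu-tm} to extract $\mtyping t T$, $a \in \El_i(A)$, and $\mgluty T[i]A$ from the hypothesis. From $\mtyping t T$ and $\mtyequiv{T}{T'}{\Se_i}$ the conversion rule gives $\mtyping t{T'}$. From $\mgluty T[i]A$ and \Cref{lem:dt:gluty-resp-equiv} we get $\mgluty{T'}[i]A$. This already handles the $A = \Nd$ and $A = \Ud_j$ cases, where the only remaining content of $\mglutm t{T'}a[i]A$ beyond these three facts is $\mglunat t a$ (resp.\ $a \in \Uc_j$ and $\mgluty t[j]a$), which is literally a subfield of the hypothesis untouched by the change of ascribed type. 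For $A = \uparrow^{A'}(C)$ one additionally needs the readback equation $\mtyequiv[\vDelta]{t[\vsigma]}{\Rne_\alpha(c[\vsigma])}{T[\vsigma]}$; here I rewrite the target of the equivalence from $T[\vsigma]$ to $T'[\vsigma]$ using $\mtyequiv[\vDelta]{T[\vsigma]}{T'[\vsigma]}{\Se_i}$ (obtained by applying $\vsigma$ to the hypothesis $\mtyequiv{T}{T'}{\Se_i}$) together with the conversion rule for equivalences. For $A = \squared A'$ and $A = Pi(A', x.T', \vrho)$, the hypothesis provides a witness $T'$ (resp.\ $T_1, T_2$) with $\mtyequiv{T}{\square T'}{\Se_i}$ (resp.\ $\mtyequiv{T}{\Pi(x : T_1). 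T_2}{\Se_i}$); by symmetry and transitivity with $\mtyequiv{T}{T'}{\Se_i}$ we obtain $\mtyequiv{T'}{\square T'}{\Se_i}$ (resp.\ the $\Pi$ version), and the same witnesses $T'$, $T_1$, $T_2$ serve for the new ascription, with all the nested $\circledR$ clauses carried over unchanged since they only refer to $T'$, $T_1$, $T_2$ and not to the outer $T$.

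I do not expect a genuine obstacle here; this is a routine conversion-respecting lemma whose proof is essentially bookkeeping over the definitional unfolding of the gluing model. The only mildly delicate point is that the gluing relation is defined by induction on $i$ and on $A \in \Uc_i$, so when an inner occurrence of $\mgluty{\cdot}[j]{\cdot}$ with $j < i$ appears (the $A = \Ud_j$ case, and inside the $\Pi$ and $\square$ clauses), one must make sure to appeal to the inductive hypothesis at the smaller index (or to \Cref{lem:dt:gluty-resp-equiv} which is already proved by such an induction) rather than circularly to the statement being proved. Since \Cref{lem:dt:gluty-resp-equiv} is already available, and the type-level components are all I need to touch, the induction closes cleanly. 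Thus the proof is: unfold by cases on $A \in \Uc_i$; in each case apply the conversion rule to the typing premise, \Cref{lem:dt:gluty-resp-equiv} to the type-gluing premise, transitivity/symmetry of $\approx$ to the canonical-form equivalence (substituting along $\vsigma$ where a $\vsigma$-indexed clause demands it), and copy the semantic-membership and auxiliary components verbatim.
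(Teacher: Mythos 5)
Your proposal is correct and follows essentially the same route as the paper, whose proof is simply ``Induction on $i$ and then induction on $A \in \Uc_i$, and apply congruence and conversion.'' Your elaboration --- carrying the semantic component $a \in \El_i(A)$ unchanged, converting the typing premise, rewriting the canonical-form equivalences by symmetry/transitivity (pushed under $\vsigma$ via congruence where needed), and reusing \Cref{lem:dt:gluty-resp-equiv} for the type-gluing part --- is exactly the bookkeeping the paper leaves implicit.
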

\begin{proof}
  Induction on $i$ and then induction on $A \in \Uc_i$, and apply congruence and conversion.
\end{proof}

\begin{lemma}[Term conversion]\labeledit{lem:dt:glutm-resp-equiv-tm}
  If $\mglutm t T a[i]A$ and $\mtyequiv{t}{t'}{T}$, then $\mglutm{t'}{T}a[i]A$.
\end{lemma}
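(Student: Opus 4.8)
The plan is to prove \Cref{lem:dt:glutm-resp-equiv-tm} by induction on $A \in \Uc_i$, mirroring the structure used in the preceding conversion lemmas (\Cref{lem:dt:gluty-resp-equiv,lem:dt:glutm-resp-equiv-ty}). The key observation is that in every clause of the definition of $\mglutm t T a [i] A$, the syntactic term $t$ enters only through (i) the typing premise $\mtyping t T$, and (ii) some equivalence judgments of the form $\mtyequiv[\vDelta]{t[\vsigma]}{\cdots}{T[\vsigma]}$ (possibly after an $\unbox$ or an application, and after a restricted weakening $\vsigma$). So at each case I need to (a) replace $\mtyping t T$ with $\mtyping{t'}{T}$, which follows from presupposition applied to $\mtyequiv{t}{t'}{T}$ together with symmetry; and (b) push the equivalence $\mtyequiv{t}{t'}{T}$ through $[\vsigma]$, through $\unbox{|\vDelta'|}{(-)}$ or application to $s$, using the congruence rules for substitution application, $\unbox$, and $\Pi$-application, and then splice it in by transitivity with the existing equivalence chain.

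Concretely, I would proceed as follows. First, from $\mtyequiv{t}{t'}{T}$ obtain $\mtyping t T$ and $\mtyping{t'}{T}$ by presupposition. The cases $A = \uparrow^{A'}(C)$ and $A = \Ud_j$ are immediate: for the neutral-type case I keep $c$, keep all the type-level data, replace the clause $\mtyequiv[\vDelta]{t[\vsigma]}{\Rne_\alpha(c[\vsigma])}{T[\vsigma]}$ by first noting $\mtyequiv[\vDelta]{t'[\vsigma]}{t[\vsigma]}{T[\vsigma]}$ (congruence for substitution plus symmetry, and conversion by $T[\vsigma]$) and then transitivity; for the universe case I additionally invoke \Cref{lem:dt:gluty-resp-equiv} on the nested $\mgluty t[j]{a}$ if needed — but actually $t$ there is used as a type, so I only need $\mtyequiv{t}{t'}{\Se_j}$, which follows from $\mtyequiv{t}{t'}{T}$ and $\mtyequiv{T}{\Se_j}{\Se_i}$ via conversion and cumulativity/lowering. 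For $A = \Nd$ I need a small auxiliary fact: $\mglunat{t}{a}$ and $\mtyequiv{t}{t'}{\Nat}$ imply $\mglunat{t'}{a}$, proved by induction on the $\mglunat{}{}$ derivation, again using transitivity in each of the three $\mglunat$ clauses. For $A = \squared A'$, I take the same witness $T'$, and for each $\vDelta', \vsigma : \vDelta \To_r \vGamma$ I appeal to the IH at $A'[\sextt{\mt{\vsigma}}{|\vDelta'|}]$ after establishing $\mtyequiv[\vDelta;\vDelta']{\unbox{|\vDelta'|}{(t[\vsigma])}}{\unbox{|\vDelta'|}{(t'[\vsigma])}}{T'[\sextt\vsigma{|\vDelta'|}]}$ by congruence for $\unbox$ and substitution. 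For $A = Pi(A', x.T', \vrho)$, I take the same $T_1, T_2$ and for each restricted $\vsigma$ and each $\mglutm[\vDelta]{s}{T_1[\vsigma]}{b}[i]{A'[\vsigma]}$ I apply the IH at $\intp{T'}(\ext(\vrho[\vsigma],x,b))$ to the application, using $\mtyequiv[\vDelta]{t[\vsigma]\ s}{t'[\vsigma]\ s}{T_2[\vsigma, s/x]}$ from congruence for substitution and $\Pi$-application.

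The main obstacle is bookkeeping rather than anything deep: in the $\squared$ and $Pi$ cases one must be careful that the equivalence $\mtyequiv{t}{t'}{T}$ is at type $T$, not at the $\square T'$ or $\Pi(x : T_1).T_2$ form, so before applying congruence under $[\vsigma]$ one should use the type-level equivalence $\mtyequiv{T}{\square T'}{\Se_i}$ (resp. $\mtyequiv{T}{\Pi(x : T_1).T_2}{\Se_i}$) that comes with the gluing clause to convert, via the conversion rule, and then push things through. This is exactly the same manoeuvre already used silently in \Cref{lem:glue-resp-equiv'} in the simply typed development, so no genuinely new ingredient is required; the induction is well-founded on $A \in \Uc_i$ (with the nested induction on $\mglunat$ in the $\Nd$ case) precisely because the IH is only ever invoked at $A'[\kappa] \in \Uc_i$ for smaller $A'$, as in the realizability theorem. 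I would conclude by remarking, as the paper does elsewhere, that the three conversion lemmas together show the gluing model respects the equational theory on both types and terms.
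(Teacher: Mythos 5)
Your proposal is correct and follows essentially the same route as the paper's proof: induction on $A \in \Uc_i$, using presupposition, congruence (for substitution application, $\tunbox$, and application), the conversion rule via the type equivalence stored in the gluing clause, and transitivity, with the $Pi$ case handled exactly as the paper does by deriving $\mtyequiv[\vDelta]{t[\vsigma]\ s}{t'[\vsigma]\ s}{T_2[\vsigma, s/x]}$ and invoking the IH. The extra detail you supply (the auxiliary induction on $\mglunat$ for the $\Nd$ case) is a faithful elaboration of what the paper leaves implicit.
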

\begin{proof}
  Induction on $A \in \Uc_i$, and apply \Cref{lem:dt:gluty-resp-equiv}, congruence, conversion
  and transitivity. We consider one interesting case: $A = Pi(A', x.T', \vrho)$,
    \begin{mathpar}
      \inferrule
      {\forall \kappa. A'[\kappa] \approx A'[\kappa] \in \Uc_i \\
        \forall \kappa, a
        \approx a' \in \El_i(A'[\kappa]). \intp{T'}(\ext(\vrho[\kappa], x, a))
        \approx \intp{T'}(\ext(\vrho[\kappa], x, a') \in \Uc_i}
      {Pi(A', x.T', \vrho) \approx Pi(A', x.T', \vrho) \in \Uc_i}
    \end{mathpar}
  \begin{align*}
    &\ \mjudge{T \approx \Pi(x : T_1). T_2}
      \tag{by assumption, for some $T_1$ and $T_2$} \\
    &\ \text{assume } \vsigma : \vDelta \To_r \vGamma,
      \mglutm[\vDelta]{s}{S[\vsigma]}{b}[i]{A'[\vsigma]} \\
    &\ \mglutm[\vDelta]{t[\vsigma]\ s}{T_2[\vsigma, s/x]}{a[\vsigma] \cdot
      b}[i]{\intp{T'}(\vrho[\vsigma], x , b)}
      \tag{by assumption} 
  \end{align*}
  \begin{mathpar}
    \inferrule
    {\inferrule
      {\inferrule
        {\mtyequiv{t}{t'}{T}}
        {\mtyequiv[\vDelta]{t[\vsigma]}{t'[\vsigma]}{T[\vsigma]}}}
      {\mtyequiv[\vDelta]{t[\vsigma]}{t'[\vsigma]}{\Pi(x : T_1). T_2[\vsigma]}} \\
      \inferrule
      {\inferrule
        {\Cref{lem:dt:inv-glu-tm}}
        {\mtyping{s}{T_1}}}
      {\mtyequiv{s}{s}{T_1[\vsigma]}}}
    {\mtyequiv[\vDelta]{t[\vsigma]\ s}{t'[\vsigma]\ s}{T_2[\vsigma, s/x]}}
  \end{mathpar}
  Now IH applies. 
\end{proof}

The gluing relations respect equivalence in semantic types:
\begin{lemma}[$\Uc$ conversion]\labeledit{lem:dt:glu-resp-u}
  Given $A \approx A' \in \Uc_i$, 
  \begin{itemize}
  \item $\mgluty T[i]A$ and $\mgluty T[i]A'$ are equivalent;
  \item $\mglutm t T a[i]A$ and $\mglutm t T a[i]{A'}$ are equivalent. 
  \end{itemize}
\end{lemma}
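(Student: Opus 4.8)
The plan is to prove Lemma~\ref{lem:dt:glu-resp-u} by induction on the derivation $A \approx A' \in \Uc_i$, mirroring the structure of the analogous PER-level lemma \Cref{lem:dt:el-resp-u} but now carrying along the syntactic component of the gluing relations. The statement is symmetric, so once I establish the forward implication ($\mgluty T[i]A \Rightarrow \mgluty T[i]{A'}$ and likewise for terms) for an arbitrary derivation of $A \approx A' \in \Uc_i$, the reverse implication follows by applying the forward implication to the PER-symmetric derivation $A' \approx A \in \Uc_i$ (available from \Cref{lem:u-el-per}). Throughout I may freely use that $\El_i(A) = \El_i(A')$ as relations, which is exactly \Cref{lem:dt:el-resp-u}, so the semantic membership side-conditions $a \in \El_i(A)$ transport for free; the work is entirely in matching up the recursively-quantified data in the definitions of $\mgluty{}[i]{}$ and $\mglutm{}{}{}[i]{}$.

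First I would dispatch the base cases. For $A = \uparrow^{A''}(C)$, $A = \Nd$, and $A = \Ud_j$, the gluing relations $\mgluty T[i]A$ and $\mglutm t T a[i]A$ do not actually mention $A''$, $\Nd$, or $\Ud_j$ in any essential way beyond what is already determined by $A \in \Uc_i$: in the $\Nd$ and $\Ud_j$ cases the relations are literally identical for $A$ and $A'$ (since $\Nd \approx \Nd$ forces $A = A' = \Nd$, and $\Ud_j \approx \Ud_j$ forces $A = A' = \Ud_j$), and in the neutral-type case $\uparrow^{A''}(C) \approx \uparrow^{A'''}(C')$ the only datum used is $C$, for which we have $C \approx C' \in \bot$; here I would use that $\Rne_\alpha(C[\vsigma]) = \Rne_\alpha(C'[\vsigma])$ by definition of $\bot$ together with transitivity of $\approx$ at the syntactic level to swap $C$ for $C'$ in the equivalence $\mtyequiv[\vDelta]{T[\vsigma]}{\Rne_\alpha(C[\vsigma])}{\Se_i}$ (and similarly for the term component). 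For $A = \Ud_j$ the term case also invokes the inner relation $\mgluty t[j]a$ with $a \in \Uc_j$, which is unchanged.

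The two real cases are $A = \squared A'$ (matching $\squared A''$) and $A = Pi(A'', x.T'', \vrho'')$ (matching $Pi(A''', x.T''', \vrho''')$). In the $\squared$ case, $\mgluty T[i]{\squared A''}$ asks for a $T'$ with $\mtyequiv{T}{\square T'}{\Se_i}$ and, for every $\vDelta'$, $\vdash \vDelta; \vDelta'$ and $\vsigma : \vDelta \To_r \vGamma$, the fact $\mgluty[\vDelta; \vDelta']{T'[\sextt\vsigma{|\vDelta'|}]}[i]{A''[\sextt{\mt{\vsigma}}{|\vDelta'|}]}$; I keep the same $T'$ and invoke the IH at the weakened derivation $A''[\sextt{\mt\vsigma}{|\vDelta'|}] \approx A'''[\sextt{\mt\vsigma}{|\vDelta'|}] \in \Uc_i$ (obtained from $\forall\kappa.\,A''[\kappa]\approx A'''[\kappa]\in\Uc_i$, which is exactly the premise of the $\Uc_i$-formation rule for $\squared$) to convert to $A'''[\sextt{\mt\vsigma}{|\vDelta'|}]$. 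The term component is handled identically, additionally using \Cref{lem:dt:el-resp-u} for the $a \in \El_i(\squared A'')$ side-condition. The $Pi$ case is the main obstacle: $\mgluty[\vGamma;\Gamma] T[i]{Pi(A'', x.T'', \vrho'')}$ quantifies over restricted weakenings $\vsigma$, asserts $\mgluty[\vDelta]{T_1[\vsigma]}[i]{A''[\vsigma]}$, and then for every glued argument $\mglutm[\vDelta]{s}{T_1[\vsigma]}{b}[i]{A''[\vsigma]}$ asserts $\mgluty[\vDelta]{T_2[\vsigma, s/x]}[i]{\intp{T''}(\ext(\vrho''[\vsigma],x,b))}$. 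Converting the $T_1$-component to $A'''[\vsigma]$ uses the IH at $A''[\vsigma] \approx A'''[\vsigma] \in \Uc_i$, but then a glued argument against $A'''[\vsigma]$ must be turned back into one against $A''[\vsigma]$ (again by the IH, run in the other direction) before feeding it to the hypothesis, and the resulting codomain glued type $\mgluty[\vDelta]{T_2[\vsigma,s/x]}[i]{\intp{T''}(\ext(\vrho''[\vsigma],x,b))}$ must be converted along the codomain premise $\intp{T''}(\ext(\vrho''[\kappa],x,a)) \approx \intp{T'''}(\ext(\vrho'''[\kappa],x,a')) \in \Uc_i$ of the $Pi$-formation rule, instantiated at $\kappa = \mt{\vsigma}$ and $a = a' = b$ — here I need $b \approx b \in \El_i(A''[\vsigma])$, which comes from \Cref{lem:dt:inv-glu-tm} plus \Cref{lem:dt:el-resp-u}. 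The bookkeeping of which direction the IH runs in, and keeping the universe levels straight when the $Pi$ premises are themselves recursive in $\Uc_i$, is the delicate part; everything else is routine transport of equivalences through congruence rules. The same pattern, with $T_1, T_2$ replaced by the application and the value $a[\vsigma]\cdot b$, handles the term component, closing the induction.
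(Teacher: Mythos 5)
Your proposal is correct and takes essentially the same route as the paper's own proof: induction on the derivation $A \approx A' \in \Uc_i$, trivial treatment of the neutral, $\Nd$ and $\Ud_j$ cases, IH together with \Cref{lem:dt:el-resp-u} for $\squared$, and in the $Pi$ case converting the glued argument across $A''[\vsigma] \approx A'''[\vsigma]$ by IH before feeding it to the hypothesis and then converting the resulting codomain gluing by IH again. One small remark: since the $Pi$ case needs the IH in both directions at subderivations, the inductive statement should be the stated biconditional itself (as in the paper), rather than only the forward implication with the reverse obtained afterwards by symmetry of $\Uc_i$ — the symmetric derivation is not a subderivation, so that framing would not strictly justify the backward use of the IH, although your actual case analysis already does the right thing.
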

\begin{proof}
  Induction on $A \in \Uc_i$:
  \begin{itemize}[label=Case]
  \item $A = \uparrow^{A''}(C)$, $A' = \uparrow^{A'''}(C')$,
    \begin{mathpar}
      \inferrule
      {C \approx C' \in \bot}
      {\uparrow^{A''}(C) \approx \uparrow^{A'''}(C') \in \Uc_i}
    \end{mathpar}
    Notice that the premise implies $\Rne_\alpha(C[\kappa]) = \Rne_\alpha(C'[\kappa])$ for
    all $\alpha$ and $\kappa$. We simply substitute $\Rne_\alpha(C[\vsigma])$ with
    $\Rne_\alpha(C'[\vsigma])$ and achieve the goal.
    
  \item $A = \squared A''$, $A' = \squared A'''$,
    \begin{mathpar}
      \inferrule
      {\forall \kappa. A''[\kappa] \approx A'''[\kappa] \in \Uc_i}
      {\squared A'' \approx \squared A''' \in \Uc_i}
    \end{mathpar}
    We can resort to IHs and \Cref{lem:dt:el-resp-u} for the goal.
    
  \item $A = Pi(A'', x.T'', \vrho'')$, $A' = Pi(A''', x.T''', \vrho''')$,
    \begin{mathpar}
      \inferrule
      {\forall \kappa. A''[\kappa] \approx A'''[\kappa] \in \Uc_i \\
        \forall \kappa, a
        \approx a' \in \El_i(A''[\kappa]). \intp{T''}(\ext(\vrho''[\kappa], x, a))
        \approx \intp{T'''}(\ext(\vrho'''[\kappa], x, a''') \in \Uc_i}
      {Pi(A'', x.T'', \vrho'') \approx Pi(A''', x.T''', \vrho''') \in \Uc_i}
    \end{mathpar}
    Assuming $\vsigma : \vDelta \To_r \vGamma$,
    $\mglutm[\vDelta]{s}{S[\vsigma]}{b}[i]{A'''[\vsigma]}$,
    \begin{align*}
      &\ \mtyequiv{T}{\Pi(x : T_1). T_2}{\Se_i}
        \tag{by assumption, for some $T_1$ and $T_2$} \\
      &\ \mglutm[\vDelta]{s}{T_1[\vsigma]}{b}[i]{A''[\vsigma]}
        \byIH \\
      &\ \mgluty[\vDelta]{T_2[\vsigma, s/x]}[i]{\intp{T''}(\ext(\vrho''[\vsigma], x, b))}
        \tag{by $\mgluty T[i]{Pi(A'', x.T'', \vrho'')}$} \\
      &\ \mgluty[\vDelta]{T_2[\vsigma, s/x]}[i]{\intp{T'''}(\ext(\vrho'''[\vsigma], x, b))}
        \byIH \\
      &\ \mglutm[\vDelta]{t[\vsigma]\ s}{T_2[\vsigma, s/x]}{a[\vsigma] \cdot
        b}[i]{\intp{T''}(\ext(\vrho''[\vsigma], x, b))}
        \tag{by $\mglutm t T a[i]{Pi(A'', x.T'', \vrho'')}$} \\
      &\ \mglutm[\vDelta]{t[\vsigma]\ s}{T_2[\vsigma, s/x]}{a[\vsigma] \cdot
        b}[i]{\intp{T'''}(\ext(\vrho'''[\vsigma], x, b))}
        \byIH 
    \end{align*}
    We obtain the goals by abstraction. 
  \end{itemize}
\end{proof}

\begin{lemma}[Monotonicity]\labeledit{lem:dt:glu-mon}
  \begin{itemize}
  \item If $\mgluty T[i]A$ and $\vdelta : \vDelta \To_r \vGamma$, then
    $\mgluty[\vDelta]{T[\vdelta]}[i]{A[\vdelta]}$. 
  \item If $\mglutm t T a[i]A$ and $\vdelta : \vDelta \To_r \vGamma$, then
    $\mglutm[\vDelta]{t[\vdelta]}{T[\vdelta]}{a[\vdelta]}[i]{A[\vdelta]}$.
  \end{itemize}
\end{lemma}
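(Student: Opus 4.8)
The plan is to prove both statements simultaneously by well-founded induction, first on the universe level $i$ and then on the derivation of $A \in \Uc_i$ --- the same order in which the predicates $\mgluty{T}[i]{A}$ and $\mglutm{t}{T}{a}[i]{A}$ are themselves defined. The engine is exactly the one used for the simply typed ancestor \Cref{lem:glue-mon'}: since restricted weakenings are closed under composition, any restricted weakening $\vsigma : \vDelta' \To_r \vDelta$ occurring inside the Kripke quantifiers of the gluing predicates at $\vDelta$ may be pre-composed with $\vdelta$ to yield $\vdelta \circ \vsigma : \vDelta' \To_r \vGamma$, which is then handed to the hypothesis living at $\vGamma$. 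The two supporting ingredients are, on the syntactic side, the substitution-composition equivalences $T[\vdelta][\vsigma] \approx T[\vdelta \circ \vsigma]$ and $t[\vdelta][\vsigma] \approx t[\vdelta \circ \vsigma]$ (together with the $\square$- and $\Pi$-substitution rules, which turn the witnesses of $T \approx \square T'$ and $T \approx \Pi(x:T_1).T_2$ into witnesses for $T[\vdelta]$); and, on the semantic side, the \UnMoT{} facts $a[\mt{\vdelta}][\mt{\vsigma}] = a[\mt{\vdelta \circ \vsigma}]$, $\mt{\vdelta \circ \vsigma} = \mt{\vdelta} \circ \mt{\vsigma}$, $\mt{\trunc{\vsigma}{n}} = \trunc{\mt{\vsigma}}{n}$ and $\Ltotal{\mt{\vsigma}}{n} = \Ltotal{\vsigma}{n}$ from \Cref{lem:L-extract,lem:trunc-extract}, plus \Cref{lem:dt:u-el-resp-mon} to get $A[\mt{\vdelta}] \in \Uc_i$ and the monotonicity of $\El_i$. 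Note that a restricted $\vdelta$ is in particular well-typed, which legitimises applying the substitution typing rule to obtain $\mtyping[\vDelta]{T[\vdelta]}{\Se_i}$ and $\mtyping[\vDelta]{t[\vdelta]}{T[\vdelta]}$.

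For the flat cases the argument is pure bookkeeping. When $A = \uparrow^{A'}(C)$ or $A = \Nd$, one checks the typing obligations as above and rewrites each convertibility condition --- e.g. $T[\vdelta][\vsigma] \approx \Rne_{\alpha}(C[\vdelta][\vsigma])$, and for $\Nd$ the $\Rne$-conditions appearing in $\mglunat{t}{a}$ --- by $T[\vdelta][\vsigma] \approx T[\vdelta \circ \vsigma]$ and the corresponding $\mt{\_}$ identities, reducing to an instance of the hypothesis at $\vdelta \circ \vsigma$; for $\Nd$ one additionally inducts on the structure of $\mglunat{t}{a}$, which is routine. The case $A = \Ud_j$ is the only place the outer induction on $i$ is genuinely used: after checking the typing, the universe conversions $\Se_j[\vdelta] \approx \Se_j$ and $\Se_i[\vdelta] \approx \Se_i$, and $a[\mt{\vdelta}] \in \Uc_j$, the residual obligation $\mgluty[\vDelta]{t[\vdelta]}[j]{a[\mt{\vdelta}]}$ is precisely the first statement of the lemma at the strictly smaller level $j$, so the induction hypothesis discharges it.

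The real work lies in the $\square$ and $\Pi$ cases, where the offset arithmetic must line up. For $A = \squared A'$, the witness for $T[\vdelta]$ is $T'[\sextt{\vdelta}{1}]$; the obligation for a weakening $\vsigma : \vDelta' \To_r \vDelta$ and an extension $\vDelta''$ concerns $(T'[\sextt{\vdelta}{1}])[\sextt{\vsigma}{|\vDelta''|}]$, which by the algebraic identity $\sextt{\vdelta}{1} \circ \sextt{\vsigma}{n} = \sextt{(\vdelta \circ \vsigma)}{n}$ (using $\trunc{(\sextt{\vsigma}{n})}{1} = \vsigma$ and $\Ltotal{\sextt{\vsigma}{n}}{1} = n$) equals $T'[\sextt{(\vdelta \circ \vsigma)}{|\vDelta''|}]$, and likewise on the domain side $A'[\mt{\vdelta}][\mt{\vsigma}]$-shifts collapse to $A'[\sextt{\mt{\vdelta\circ\vsigma}}{|\vDelta''|}]$; this is then an instance of the $\square$-hypothesis of $\mgluty[\vGamma]{T}[i]{\squared A'}$ at $\vdelta \circ \vsigma$. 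For $A = Pi(A', x.T', \vrho)$ the witness becomes $\Pi(x : T_1[\vdelta]).\,T_2[q(\vdelta, x)]$, and the decomposition $q(\vdelta, x) \circ (\vsigma, s/x) = (\vdelta \circ \vsigma), s/x$ --- from $\wk_x \circ (\vsigma, s/x) = \vsigma$ and $x[\vsigma, s/x] = s$ --- together with $\vrho[\mt{\vdelta}][\mt{\vsigma}] = \vrho[\mt{\vdelta\circ\vsigma}]$ reduces both the argument-domain and the body obligations to instances of the $\Pi$-hypothesis at $\vdelta \circ \vsigma$. Throughout these two cases one must patch the mismatch between $A[\mt{\vdelta}][\mt{\vsigma}]$-indexed and $A[\mt{\vdelta\circ\vsigma}]$-indexed gluing via \Cref{lem:dt:glu-resp-u}, and between convertible syntactic types and terms via \Cref{lem:dt:gluty-resp-equiv,lem:dt:glutm-resp-equiv-ty}. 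I expect this synchronisation --- keeping context-stack offsets, the extracted \UnMoTs $\mt{\_}$, and the syntactic-versus-semantic equalities aligned across the $\square$ and $\Pi$ clauses, and in particular the $q(\vsigma,x)$ bookkeeping forced by the dependency of types on terms --- to be the main obstacle; no new idea beyond \Cref{lem:glue-mon'} is needed.
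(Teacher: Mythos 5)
Your proposal matches the paper's proof in essentially every respect: the paper also proceeds by well-founded induction on $i$ followed by induction on $A \in \Uc_i$, discharges the flat cases by routine conversion, and in the $Pi$ case (the one it spells out) pre-composes the inner restricted weakening with $\vdelta$, rewrites $T_2[q(\vdelta,x)][\vsigma, s/x]$ to $T_2[\vdelta\circ\vsigma, s/x]$ exactly as you do, and patches mismatches with \Cref{lem:dt:glutm-resp-equiv-ty,lem:dt:glutm-resp-equiv-tm} and \Cref{lem:dt:u-el-resp-mon}. Your treatment of the $\squared$ and $\Ud_j$ cases, which the paper leaves implicit, fills in the same argument correctly, so there is nothing substantive to add.
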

\begin{proof}
  We do well-foundned induction on $i$ and then induction on $A \in \Uc_i$, apply
  \Cref{lem:dt:glutm-resp-equiv-ty,lem:dt:glutm-resp-equiv-tm} when necessary.  We
  consider one case: $A = Pi(A', x.T', \vrho)$,
    \begin{mathpar}
      \inferrule
      {\forall \kappa. A'[\kappa] \approx A'[\kappa] \in \Uc_i \\ \forall \kappa, a
        \approx a' \in \El_i(A'[\kappa]). \intp{T'}(\ext(\vrho[\kappa], x, a))
        \approx \intp{T'}(\ext(\vrho[\kappa], x, a') \in \Uc_i}
      {Pi(A', x.T', \vrho) \approx Pi(A', x.T', \vrho) \in \Uc_i}
    \end{mathpar}
  We know
  \begin{align*}
    \mtyequiv{T}{\Pi(x : T_1). T_2}{\Se_i}
  \end{align*}
  for some $T_1$ and $T_2$. Thus we have 
  \begin{align*}
    \mtyequiv[\vDelta]{T[\vdelta]}{\Pi(x : T_1[\vdelta]). (T_2[q(\vdelta, x)])}{\Se_i}
  \end{align*}
  Hence we can discharge the first statement.

  For the second one, \Cref{lem:dt:u-el-resp-mon} ensures
  $a[\vdelta] \in \El_i(Pi(A'[\vdelta], x.T', \vrho[\vdelta]))$. Assuming
  $\vsigma : \vDelta' \To_r \vDelta$ and
  $\mglutm[\vDelta']{s}{S[\vdelta][\vsigma]}{b}[i]{A'[\vdelta][\vsigma]}$, we have
  \begin{align*}
    \mglutm[\vDelta']{s}{S[\vdelta \circ \vsigma]}{b}[i]{A'[\vdelta \circ \vsigma]}
  \end{align*}
  and thus by premise, we have 
  \begin{align*}
    \mglutm[\vDelta']{t[\vdelta \circ \vsigma]\ s}{T_2[\vdelta \circ \vsigma,
    s/x]}{a[\vdelta \circ \vsigma] \cdot b}[i]{\intp{T'}(\ext(\vrho[\vdelta \circ
    \vsigma], x, b))}
  \end{align*}
  This is equivalent to
  \begin{align*}
    \mglutm[\vDelta']{t[\vdelta][\vsigma]\ s}{T_2[q(\vdelta, x)][\vsigma, s/x]}{a[\vdelta][\vsigma] \cdot b}[i]{\intp{T'}(\ext(\vrho[\vdelta][\vsigma], x, b))}
  \end{align*}
  and we conclude the second statement. 
\end{proof}

\subsection{Realizability}

\begin{definition}
  Given $A \in \Uc_i$, 
  \begin{align*}
    \mglutyu T [i] A
    &:= \typing T{\Se_i} \tand A \in \top \tand
      \forall \vsigma : \vDelta \To_r \vGamma.\
      \mtyequiv[\vDelta]{T[\vsigma]}{\Rty_\alpha(A[\vsigma])}{\Se_i}\\
    \mglutmu t T a[i]A
    &:= \mtyping t T \tand \mgluty T[i]A \tand \downarrow^A(a) \in \top \\
    & \qquad \tand \forall \vsigma : \vDelta \To_r \vGamma.\
      \mtyequiv[\vDelta]{t[\vsigma]}{\Rnf_\alpha(\downarrow^A(a)[\vsigma])}{T[\vsigma]} \\
    \mglutmd t T c[i]A
    &:= \mtyping t T \tand \mgluty T[i]A \tand c \in \bot \\
    & \qquad \tand \forall \vsigma : \vDelta \To_r \vGamma.\
      \mtyequiv[\vDelta]{t[\vsigma]}{\Rne_\alpha(c[\vsigma])}{T[\vsigma]} 
  \end{align*}
\end{definition}

\begin{lemma}\labeledit{lem:dt:glu-var}
  If $\mgluty T[i]A$, then $\mglutmd[\vGamma; (\Gamma, x : T)]{x}{T[\wk_x]}{\alpha^{-1}(x)}[i]A$. 
\end{lemma}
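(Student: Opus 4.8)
The plan is to unfold the definition of $\mglutmd$ and verify each of its four conjuncts for the putative witness $x$ of type $T[\wk_x]$ paired with the neutral domain value $c := \alpha^{-1}(x)$, which is literally the domain variable $z$ assigned to $x$ by the readback bookkeeping. The first conjunct, $\mtyping[\vGamma; (\Gamma, x : T)]{x}{T[\wk_x]}$, follows immediately from the variable typing rule together with $x : T[\wk_x] \in \vGamma; (\Gamma, x : T)$. The second conjunct, $\mgluty[\vGamma; (\Gamma, x : T)]{T[\wk_x]}[i]{A[\wk_x]}$ — here I would actually want the semantic type to be $A$ itself (the lemma statement has $A$; since $A \in \Uc_i$ is closed and $\wk_x$ extracts only offsets via $\mt{\cdot}$, we have $A[\wk_x] = A$ because $\mt{\wk_x} = \vone$ by definition of $\mt{\_}$, so applying the monotonicity lemma \Cref{lem:dt:glu-mon} to the hypothesis $\mgluty T[i]A$ along the restricted weakening $\wk_x : \vGamma; (\Gamma, x : T) \To_r \vGamma; \Gamma$ gives exactly $\mgluty[\vGamma; (\Gamma, x : T)]{T[\wk_x]}[i]A$). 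The third conjunct $c \in \bot$ is \Cref{lem:dt:varbot} ($z \approx z \in \bot$).

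The fourth and main conjunct is the universal statement: for all restricted weakenings $\vsigma : \vDelta \To_r \vGamma; (\Gamma, x : T)$, we must show $\mtyequiv[\vDelta]{x[\vsigma]}{\Rne_{\alpha}(\alpha^{-1}(x)[\vsigma])}{T[\wk_x][\vsigma]}$. The key structural fact — exactly as in the simply-typed analogue \Cref{lem:glu-varbot} — is that a restricted weakening into $\vGamma; (\Gamma, x : T)$ can only be built from clauses (1) and (2) of the definition of restricted weakening, never clause (3) with a nontrivial offset acting on the topmost context position occupied by $x$; hence $\vsigma$ is (equivalent to) a composite of local weakenings $\wk_{y}$'s followed possibly by a $\sextt{\cdot}{n}$ block below. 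So I would proceed by induction on the derivation that $\vsigma$ is restricted: in the $\vsigma \approx \vect I$ case both sides reduce via the variable-substitution equivalence rule and the definition of $\Rne_\alpha$ on a domain variable, which returns $\alpha(\alpha^{-1}(x)) = x$; in the $\vsigma \approx \wk_y \circ \vdelta$ case we peel off the $\wk_y$ using the $\mtyequiv{y'[\wk_x]}{y'}{T'[\wk_x]}$-style congruence and apply the inductive hypothesis for $\vdelta$, noting that $\alpha^{-1}$ and $\alpha$ stay in sync because the readback environment shrinks correspondingly. The compatibility of $[\vsigma]$ on $c = \alpha^{-1}(x)$ with the syntactic $[\vsigma]$ on the term is exactly the convention fixed just before \Cref{lem:L-extract} ($a[\vsigma]$ means $a[\mt{\vsigma}]$), and \Cref{lem:trunc-extract}/\Cref{lem:L-extract} let us commute truncations through this notation when a $\sextt{\cdot}{n}$ clause is encountered.

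The main obstacle I anticipate is purely the bookkeeping around $\alpha$: making precise that "the $\alpha$ on the left of $\To_r$ is the restriction of the $\alpha$ on the right along $\vsigma$", so that $\Rne_\alpha(\alpha^{-1}(x)[\vsigma])$ genuinely evaluates back to the syntactic variable $x$ (or its correct renaming) in $\vDelta$. With de Bruijn indices as in \Cref{sec:st:debruijn} this becomes the arithmetic identity that a domain variable plus its de Bruijn index sum to the local context length minus one, and restricted weakenings only insert variables, never permute or fuse the topmost world's bindings — so the claimed equivalence $\mtyequiv[\vDelta]{x[\vsigma]}{\Rne_\alpha(\alpha^{-1}(x)[\vsigma])}{T[\wk_x][\vsigma]}$ holds definitionally after the rename is accounted for. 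Once that lemma about how $\alpha$ transforms under restricted weakening is isolated (it is essentially the content of \Cref{lem:glu-varbot} transported to the dependent setting, with the only new ingredient being that the type now carries $[\wk_x][\vsigma]$ rather than being constant), the rest is routine: type conversion via \Cref{lem:dt:glutm-resp-equiv-ty} to clean up $T[\wk_x][\vsigma]$ versus $T[\vsigma \circ \wk_x]$, and abstraction over $\vsigma$ to close the universal.
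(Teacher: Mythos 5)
Your proposal is correct and follows essentially the same route as the paper: induct on the derivation that $\vsigma$ is restricted, observe that with codomain $\vGamma; (\Gamma, x : T)$ (nonempty topmost context) only the identity and $\wk_y$-composition clauses can fire, so $\vsigma$ is just a composite of local weakenings with $\mt{\vsigma} = \vone$, whence $\mtyequiv[\vDelta]{x[\vsigma]}{\Rne_\alpha(\alpha^{-1}(x)[\vsigma])}{T[\wk_x][\vsigma]}$ holds essentially by reflexivity, and the $\bot$ membership is \Cref{lem:dt:varbot}. Your explicit discharge of the $\mgluty$ conjunct via \Cref{lem:dt:glu-mon} (using $A[\wk_x] = A$) is detail the paper leaves implicit, but it is the intended justification, so the two arguments coincide.
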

\begin{proof}
  Assume $\vsigma : \vDelta \To_r \vGamma; (\Gamma, x : T)$ and then induction on it.
  We know $\vsigma \approx \wk_{x_1} \circ \cdots \circ \wk_{x_m}$ for some
  $m$. That means $\vsigma$ is just some local weakening. Then our goal
  \begin{align*}
    \mtyequiv[\vDelta]{x[\vsigma]}{\alpha(\alpha^{-1}(x))}{T[\wk_x \circ \vsigma]}
  \end{align*}
  is immediate by reflexivity.

  At last, we show $\alpha^{-1}(x) \in \El_i(A)$ by \Cref{lem:dt:varbot} and
  realizability of the PER model. 
\end{proof}

\begin{theorem}[Realizability]
  Given $A \in \Uc_i$, 
  \begin{itemize}
  \item If $\mglutmd t T c[i]A$, then $\mglutm t T{\uparrow^A(c)}[i]A$.
  \item If $\mglutm t T a[i]A$, then $\mglutmu t T a[i]A$. 
  \item If $\mgluty T[i]A$, then $\mglutyu T [i] A$. 
  \end{itemize}
\end{theorem}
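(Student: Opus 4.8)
The plan is to prove the three statements simultaneously by well-founded induction on the universe level $i$, and inside that by induction on the derivation of $A \in \Uc_i$. This is the same pattern used for the realizability theorem of the PER model and for the simply typed gluing realizability, and the mutual setup is essential: reifying a function forces a reflection of a fresh variable, so the reification direction consumes the reflection direction (and dually for reflecting an application). The semantic halves of the three conclusions---$\uparrow^A(c) \in \El_i(A)$, $\downarrow^A(a) \in \top$, and $A \in \top$---are supplied verbatim by the PER realizability theorem, so the actual work is to produce, for an arbitrary restricted weakening $\vsigma : \vDelta \To_r \vGamma$, the syntactic equivalences $\mtyequiv[\vDelta]{t[\vsigma]}{\Rne_\alpha(c[\vsigma])}{T[\vsigma]}$, $\mtyequiv[\vDelta]{t[\vsigma]}{\Rnf_\alpha(\downarrow^A(a)[\vsigma])}{T[\vsigma]}$ and $\mtyequiv[\vDelta]{T[\vsigma]}{\Rty_\alpha(A[\vsigma])}{\Se_i}$, by unfolding the readback functions and appealing to congruence, to the conversion lemmas (\Cref{lem:dt:gluty-resp-equiv}, \Cref{lem:dt:glutm-resp-equiv-ty}, \Cref{lem:dt:glutm-resp-equiv-tm}, \Cref{lem:dt:glu-resp-u}), and to the $\beta\eta$ rules.

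The leaf cases are quick. For $A = \uparrow^{A'}(C)$ all of $\Rne$, $\Rnf$ and $\Rty$ reduce to $\Rne_\alpha$, so the three statements are immediate. For $A = \Nd$ I first run an auxiliary induction on the derivation of $\mglunat{t}{a}$ to extract $a \in Nat$ and the required equivalence against $\Rne_\alpha$ in the reflect/reify direction, the type conclusion being just $\mtyequiv T \Nat {\Se_i}$. For $A = \Ud_j$ with $j < i$, the term conclusion unfolds to exactly ``$\mgluty{t}[j]{a}$ implies $\mglutyu{t}[j]{a}$'', which is the third statement at the strictly smaller level $j$; well-founded recursion on $i$ discharges it, and the reflect and type conclusions follow by inspection.

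The $\square A'$ case relies on the $\eta$ rule for $\square$ together with \Cref{lem:dt:unbox.-mon}, \Cref{lem:L-extract} and \Cref{lem:trunc-extract} to commute readback past $\tunbox \cdot$, and on \Cref{lem:dt:glu-mon} to push $\vsigma$ through the $\vDelta'$-quantified gluing clause. Writing $T \approx \square T'$, reification proceeds by instantiating the hypothesis at the singleton stack $\vDelta' = \cdot$ to get $\unbox 1 {(t[\vsigma])} \sim \tunbox \cdot (1, a[\vsigma])$, applying the reification IH at the smaller domain type $A'[\sextt\vsigma 1]$, wrapping both sides in $\tbox$ by congruence, and closing with $t \approx \boxit{(\unbox 1 t)}$. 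Reflection and the type statement are handled symmetrically, using $\Rty_\alpha(\squared A'') = \square\Rty_\alpha(A'')$.

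The $\Pi(x : S). T'$ case is the main obstacle and is where the simultaneous induction earns its keep. For reification, fix $\vsigma : \vDelta \To_r \vGamma$ and $T \approx \Pi(x : T_1). T_2$; extend $\vsigma$ by $\wk_x$ to a restricted weakening out of $\vDelta; \Delta, x : T_1[\vsigma]$, reflect the fresh variable via \Cref{lem:dt:glu-var} and promote it by the reflection IH (at the sub-derivation $A'[\vsigma] \in \Uc_i$) to a full gluing membership at $A'[\vsigma]$, feed this argument to the $\Pi$-clause of the hypothesis to obtain $t[\vsigma \circ \wk_x]\ x$ glued against $a[\vsigma] \cdot \uparrow^{A'[\vsigma]}(z)$ at the codomain type $\intp{T'}(\ext(\vrho[\vsigma], x, \uparrow^{A'[\vsigma]}(z)))$, apply the reification IH again (at this codomain domain type, which is once more in $\Uc_i$ by the clause defining $Pi$ in $\Uc_i$), and re-assemble with the $\lambda$-congruence rule and the $\eta$ rule $t \approx \lambda x. t[\wk_x]\ x$; along the way \Cref{lem:dt:glu-resp-u} and \Cref{lem:dt:glutm-resp-equiv-ty} are used to match the codomain type that $\Rnf$ expects with the one produced by the hypothesis. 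Reflection is dual: given the reflect-style data for $t$, for each glued argument $s \sim b$ one reifies $s$ by the reification IH at $A'[\vsigma]$, uses congruence on neutral application together with $\uparrow^{Pi(\dots)}(c) \cdot b = \uparrow^{\dots}(c\ \downarrow^{A'[\vsigma]}(b))$, and promotes by the reflection IH at the codomain type. The points I expect to cost the most care are the $\El_i$-versus-$\El$ and universe-level bookkeeping once the codomain type is evaluated, and checking that both recursion targets ($A'$ and the evaluated codomain) are genuinely sub-derivations of $A \in \Uc_i$ so that the inner induction hypothesis applies; everything else reduces to routine congruence and $\eta$ manipulation.
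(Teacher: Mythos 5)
Your proposal matches the paper's proof essentially step for step: well-founded induction on $i$ with an inner induction on the derivation of $A \in \Uc_i$, the three statements proved mutually, the semantic components discharged by PER realizability, the $\Nd$ case by an auxiliary induction on $\mglunat{t}{a}$, the $\Ud_j$ case by recursing at the strictly smaller level $j$, and the $\squared A'$ and $Pi$ cases handled exactly as you describe (singleton-stack instantiation plus the $\square$-$\eta$ rule, and fresh-variable reflection via \Cref{lem:dt:glu-var} followed by the $\Pi$-$\eta$ rule and the conversion lemmas, respectively). No gaps; the points you flag as delicate (matching the evaluated codomain types and justifying the inner IH at the codomain via the inductive-recursive premise of $\Uc_i$) are handled the same way in the paper.
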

\begin{proof}
  We do well-foundned induction on $i$ and then induction on $A \in \Uc_i$.
  \begin{itemize}[label=Case]
  \item $A = \uparrow^{A'}(C)$,
    \begin{mathpar}
      \inferrule
      {C \approx C \in \bot}
      {\uparrow^{A'}(C) \approx \uparrow^{A'}(C) \in \Uc_i}
    \end{mathpar}
    All statements are immediate by definitions. 

  \item $A = \Nd$,
    \begin{mathpar}
      \inferrule
      { }
      {\Nd \approx \Nd \in \Uc_i}
    \end{mathpar}
    We shall prove the second statement. We do induction on $\mglunat t a$. Given
    $\vsigma : \vDelta \To_r \vGamma$, we should prove
    $\mtyequiv[\vDelta]{t[\vsigma]}{\Rnf_\alpha(\downarrow^\Nd(a)[\vsigma])}{\Nat}$.
    \begin{itemize}[label=Subcase]
    \item
      \begin{mathpar}
        \inferrule
        {\mtyequiv{t}{\ze}{\Nat}}
        {\mglunat{t}{\zed}}
      \end{mathpar}
      Immediate.
      
    \item
      \begin{mathpar}
        \inferrule
        {\mtyequiv{t}{\su t'}{\Nat} \\ \mglunat{t'}{b}}
        {\mglunat{t}{\sud b}}
      \end{mathpar}
      \begin{align*}
        & \mtyequiv[\vDelta]{t'[\vsigma]}{\Rnf_\alpha(\downarrow^\Nd(b)[\vsigma])}{\Nat}
          \byIH \\
        & \mtyequiv[\vDelta]{t[\vsigma]}{\Rnf_\alpha(\downarrow^\Nd(a)[\vsigma])}{\Nat}
          \tag{by congruence}
      \end{align*}
      
    \item
      \begin{mathpar}
        \inferrule
        {c \in \bot \\
          \forall \vsigma : \vDelta \To_r \vGamma.\ \mtyequiv{t[\vsigma]}{\Rne_\alpha(c[\vsigma])}{\Nat}}
        {\mglunat{t}{\uparrow^\Nd(c)}}
      \end{mathpar}
      Immediate by definition. 
    \end{itemize}
    
  \item $A = \Ud_j$,
    \begin{mathpar}
      \inferrule
      {j < i}
      {\Ud_j \approx \Ud_j \in \Uc_i}
    \end{mathpar}
    To prove that if $\mglutmd t T c[i]{\Ud_j}$, then $\mglutm t
    T{\uparrow^A(c)}[i]{\Ud_j}$, notice that the premise gives
    \begin{align*}
      \forall \vsigma : \vDelta \To_r \vGamma.\
      \mtyequiv[\vDelta]{t[\vsigma]}{\Rne_\alpha(c[\vsigma])}{\Se_j}
    \end{align*}
    This proposition, when given $\vect I : \vGamma \To_r \vGamma$, is just $\mgluty t[j]{\uparrow^A(c)}$ and is our
    goal. 
    
  \item $A = \squared A'$,
    \begin{mathpar}
      \inferrule
      {\forall \kappa. A'[\kappa] \approx A'[\kappa] \in \Uc_i}
      {\squared A' \approx \squared A' \in \Uc_i}
    \end{mathpar}
    \begin{itemize}
    \item
      \begin{align*}
        &\ \mtyping{T \approx \square T'}{\Se_i}
          \tag{by assumption} \\ 
        H_1: &\ \forall \vsigma : \vDelta \To_r \vGamma.\
               \mtyequiv[\vDelta]{t[\vsigma]}{\Rne_\alpha(c[\vsigma])}{T[\vsigma]}
               \tag{by assumption} \\
        &\ \text{assume }\vDelta',  \vdash \vDelta; \vDelta', \vsigma : \vDelta \To_r \vGamma, \vdelta :
          \vDelta'' \To_r \vDelta; \vDelta', \text{let }n := |\vDelta'| \\
        &\ \mtyequiv[\trunc{\vDelta''}{\Ltotal\vdelta n}]{t[\vsigma \circ \trunc \vdelta n]}{\Rne_\alpha(c[\vsigma \circ \trunc \vdelta n])}{T[\vsigma \circ \trunc \vdelta n]}
          \tag{by $H_1$} \\
        &\ \text{let }n' := \Ltotal\vdelta n \\
        &\ \mtyequiv[\vDelta'']{\unbox{n'}(t[\vsigma \circ
          \trunc \vdelta n])}{\Rne_\alpha(\tunbox(n', c[\vsigma \circ
          \trunc \vdelta n]))}{T'[\sextt{(\vsigma \circ \trunc \vdelta n)}{n'}]}
          \tag{by congruence and conversion} \\
        &\ \mtyequiv[\vDelta'']{\unbox{n}(t[\vsigma])[\vdelta]}{\Rne_\alpha(\tunbox(n,
          c[\vsigma])[\vdelta])}{T'[\sextt\vsigma n][\vdelta]}
          \tag{by congruence and conversion} \\
        &\ \mglutmd[\vDelta; \vDelta']{\unbox{n}(t[\vsigma])}{T'[\sextt\vsigma
          n]}{\tunbox(n, c[\vsigma])}[i]{A'[\sextt\vsigma n]}
          \tag{by abstraction} \\
        &\ \mglutm t T{\downarrow^A(c)}[i]A
          \tag{by abstraction}
      \end{align*}
      
    \item
      \begin{align*}
        &\ \mjudge{T \approx \square T'}
          \tag{by assumption} \\ 
        H_2: &\ \forall \vDelta', \vdash \vDelta; \vDelta', \vsigma : \vDelta \To_r \vGamma. \\
        &\ \mglutm[\vDelta; \vDelta']{\unbox{|\vDelta'|}{(t[\vsigma])}}{T'[\sextt\vsigma{|\vDelta'|}]}{\tunbox \cdot (|\vDelta'|,
          a[\vsigma])}[i]{A'[\sextt{\mt{\vsigma}}{|\vDelta'|}]} 
          \tag{by assumption} \\
        &\ \text{assume }\vsigma : \vDelta \To_r \vGamma \\
        &\ \mglutm[\vDelta; \cdot]{\unbox 1{(t[\vsigma])}}{T'[\sextt\vsigma 1]}
          {\tunbox \cdot (1,
          a[\vsigma])}[i]{A'[\sextt{\mt{\vsigma}} 1]}
          \tag{by $H_2$} \\
        H_3: &\ \mglutmu[\vDelta; \cdot]{\unbox 1{(t[\vsigma])}}{T'[\sextt\vsigma 1]}
          {\tunbox \cdot (1,
          a[\vsigma])}[i]{A'[\sextt{\mt{\vsigma}} 1]}
          \byIH \\
        &\ \mtyequiv[\vDelta; \cdot]{\unbox
          1{(t[\vsigma])}}{\Rnf_\alpha(\downarrow^{A'[\sextt{\mt{\vsigma}} 1]}(\tunbox
          \cdot (1, a[\vsigma])))}{T'[\sextt\vsigma 1]}
          \tag{by $H_3$} \\
        &\ \mtyequiv[\vDelta]{t[\vsigma]}{\Rnf_\alpha(\downarrow^{\squared
          A'[\vsigma]}(a[\vsigma]))}{\square T'[\vsigma]}
          \tag{by congruence and conversion} \\
        &\ \mtyequiv[\vDelta]{t[\vsigma]}{\Rnf_\alpha(\downarrow^{\squared
          A'}(a)[\vsigma])}{T[\vsigma]}
          \tag{by definition} \\
        &\ \mglutmu t T a[i]A
          \tag{by abstraction}
      \end{align*}
      
    \item
      \begin{align*}
        &\ \mjudge{T \approx \square T'}
          \tag{by assumption} \\ 
        &\ \text{assume }\vsigma : \vDelta \To_r \vGamma \\
        H_4: &\ \forall \vDelta', \vsigma : \vDelta
               \To_r \vGamma.\ \mgluty[\vDelta; \vDelta']{T'[\sextt\vsigma{|\vDelta'|}]}[i]{A'[\sextt\vsigma{|\vDelta'|}]}
               \tag{by assumption} \\
        &\ \mgluty[\vDelta; \cdot]{T'[\sextt\vsigma 1]}[i]{A'[\sextt\vsigma 1]}
          \tag{by $H_4$} \\
        &\ \mtyequiv[\vDelta; \cdot]{T'[\sextt\vsigma 1]}{\Rty_\alpha(A'[\sextt\vsigma 1])}{\Se_i}
          \byIH \\
        &\ \mtyequiv[\vDelta]{\square T'[\vsigma]}{\Rty_\alpha(\squared
          A'[\vsigma])}{\Se_i}
          \tag{by congruence} \\
        &\ \mtyequiv[\vDelta]{T[\vsigma]}{\Rty_\alpha(\squared
          A'[\vsigma])}{\Se_i}
          \tag{by congruence}
      \end{align*}
    \end{itemize}
    
  \item $A = Pi(A', x.T', \vrho')$,
    \begin{mathpar}
      \inferrule
      {\forall \kappa. A'[\kappa] \approx A'[\kappa] \in \Uc_i \\
        \forall \kappa, a \approx a' \in
        \El_i(A'[\kappa]). \intp{T'}(\ext(\vrho'[\kappa], x, a))
        \approx \intp{T'}(\ext(\vrho'[\kappa], x, a')) \in \Uc_i}
      {Pi(A', x.T', \vrho') \approx Pi(A', x.T', \vrho') \in \Uc_i}
    \end{mathpar}
    \begin{itemize}
    \item
      \begin{align*}
        &\ \mtyequiv{T}{\Pi(x:S). T'}{\Se_i}
          \tag{by assumption, for some $S$ and $T'$} \\
        H_5: &\ \forall \vsigma : \vDelta \To_r \vGamma.\
               \mtyequiv[\vDelta]{t[\vsigma]}{\Rne_\alpha(c[\vsigma])}{T[\vsigma]}
               \tag{by assumption} \\
        &\ \text{assume }\vsigma : \vDelta \To_r \vGamma,
          \mglutm[\vDelta]{s}{S[\vsigma]}{b}[i]{A'[\vsigma]}, \vdelta : \vDelta' \To_r \vDelta \\
        &\ \mtyequiv[\vDelta']{t[\vsigma \circ \vdelta]}{\Rne_\alpha(c[\vsigma \circ \vdelta])}{T[\vsigma \circ \vdelta]}
          \tag{by $H_5$} \\
        H_6: &\ \mglutmu[\vDelta]{s}{S[\vsigma]}{b}[i]{A'[\vsigma]}
          \byIH \\
        &\
          \mtyequiv[\vDelta']{s[\vdelta]}{\Rnf_\alpha(\downarrow^{A'[\vsigma]}(b)[\vdelta])}{S[\vsigma][\vdelta]}
          \tag{by $H_6$} \\
        &\ \mtyequiv[\vDelta']{t[\vsigma \circ \vdelta]\
          s[\vdelta]}{\Rne_\alpha(c[\vsigma \circ \vdelta])\
          \Rnf_\alpha(\downarrow^{A'[\vsigma]}(b)[\vdelta])}
          {T'[\vsigma \circ \vdelta; s[\vdelta]/x]}
          \tag{by congruence and conversion} \\
        &\ \mtyequiv[\vDelta']{(t[\vsigma]\ s)[\vdelta]}
          {\Rne_\alpha((c[\vsigma]\ \downarrow^{A'[\vsigma]}(b))[\vdelta])}
          {T'[\vsigma; s/x][\vdelta]}
          \tag{by congruence and conversion} \\
        &\ \mglutmd[\vDelta']{(t[\vsigma]\ s)}{T'[\vsigma; s/x]}
          {c[\vsigma]\ \downarrow^{A'[\vsigma]}(b)}[i]{\intp{T'}(\ext(\vrho'[\vsigma], x, b))}
          \tag{by abstraction} \\
        &\ \mglutm[\vDelta']{(t[\vsigma]\ s)}{T'[\vsigma; s/x]}
          {c[\vsigma]\ \downarrow^{A'[\vsigma]}(b)}[i]{\intp{T'}(\ext(\vrho'[\vsigma], x, b))}
          \byIH \\
        &\ \mglutm t T a[i]A
          \tag{by abstraction}
      \end{align*}

    \item
      \begin{align*}
        &\ \mtyequiv{T}{\Pi(x : S). T'}{\Se_i}
          \tag{by assumption, for some $S$ and $T'$} \\
        H_7: &\ \forall \vsigma : \vDelta \To_r
               \vGamma, \mglutm[\vDelta]{s}{S[\vsigma]}{b}[i]{A'[\vsigma]}. \\
        & \qquad \mglutm[\vDelta]{t[\vsigma]\ s}{T'[\vsigma, s/x]}{a[\vsigma] \cdot
          b}[i]{\intp{T'}(\ext(\vrho'[\vsigma], x, b))}
          \tag{by assumption} \\
        &\ \text{assume }\vsigma : \vDelta;\Delta \To_r \vGamma \\
        &\ \text{let }\vsigma' := \vsigma \circ \wk_x : \vDelta;(\Delta, x :
          S[\vsigma]) \To_r \vGamma \\
        &\ \mglutmd[\vDelta; (\Delta, x : S[\vsigma])]{x}{S[\vsigma']}z[i]{A'[\vsigma]}
          \tag{by \Cref{lem:dt:glu-var}, for some $z$} \\
        &\ \mglutm[\vDelta; (\Delta, x : S[\vsigma])]{x}{S[\vsigma']}{\uparrow^{A'[\vsigma]}(z)}[i]{A'[\vsigma]}
          \byIH \\
        &\ \text{let }b := \uparrow^{A'[\vsigma]}(z) \\
        &\ \mglutm[\vDelta; (\Delta, x : S[\vsigma])]{t[\vsigma']\ x}{T'[\vsigma', x/x]}{a[\vsigma] \cdot
          b}[i]{\intp{T'}(\ext(\vrho'[\vsigma], x, b))}
          \tag{by $H_7$}\\
        H_8: &\ \mglutmu[\vDelta; (\Delta, x : S[\vsigma])]{t[\vsigma']\ x}{T'[\vsigma', x/x]}{a[\vsigma] \cdot
               b}[i]{\intp{T'}(\ext(\vrho'[\vsigma], x, b))}
               \byIH \\
        &\ \mtyequiv[\vDelta; (\Delta, x : S[\vsigma])]{t[\vsigma']\
          x}{\Rnf_\alpha(\downarrow^{\intp{T'}(\ext(\vrho'[\vsigma], x, b))}(a[\vsigma] \cdot
          b))}{T'[q(\vsigma, x)]}
          \tag{by $H_8$} \\
        &\ \mtyequiv[\vDelta; \Delta]{t[\vsigma]}{\Rnf_\alpha(\downarrow^{Pi(A', x.T', \vrho')[\vsigma]}(a[\vsigma]))}{\Pi(x : S).T'[\vsigma]}
          \tag{by congruence and conversion} \\
        &\ \mtyequiv[\vDelta; \Delta]{t[\vsigma]}{\Rnf_\alpha(\downarrow^{Pi(A', x.T', \vrho')}(a)[\vsigma])}{\Pi(x : S).T'[\vsigma]}
          \tag{by definition} \\
        &\ \mglutmu t T a[i]A
          \tag{by abstraction}
      \end{align*}
      
    \item
      \begin{align*}
        &\ \mtyequiv{T}{\Pi(x : S). T'}{\Se_i}
          \tag{by assumption, for some $S$ and $T'$} \\
        H_9: &\ \forall \vsigma : \vDelta \To_r \vGamma, 
               \mglutm[\vDelta]{s}{S[\vsigma]}{b}[i]{A'[\vsigma]}.\ \mgluty[\vDelta]{T'[\vsigma, s/x]}[i]{\intp{T'}(\ext(\vrho'[\vsigma], x, b))} \\
        &\ \text{assume }\vsigma : \vDelta;\Delta \To_r \vGamma \\
        &\ \text{let }\vsigma' := \vsigma \circ \wk_x : \vDelta;(\Delta, x :
          S[\vsigma]) \To_r \vGamma \\
        &\ \mglutm[\vDelta; (\Delta, x : S[\vsigma])]{x}{S[\vsigma']}{b}[i]{A'[\vsigma]}
          \tag{by IH, let $b := \uparrow^{A'[\vsigma]}(z)$} \\
        &\ \mgluty[\vDelta; (\Delta, x : S[\vsigma])]{T'[q(\vsigma, x)]}[i]{\intp{T'}(\ext(\vrho'[\vsigma], x, b))}
          \tag{by $H_9$} \\
        &\ \mtyequiv[\vDelta; (\Delta, x :
          S[\vsigma])]{T'[q(\vsigma, x)]}{\Rty_\alpha(\intp{T'}(\ext(\vrho'[\vsigma], x, b)))}{\Se_i}
          \byIH \\
        &\ \mtyequiv[\vDelta; \Delta]{\Pi(x :
          S[\vsigma]).(T'[q(\vsigma, x)])}{\Rty_\alpha(Pi(A', x.T', \vrho')[\vsigma])}{\Se_i}
          \tag{by congruence} \\
        &\ \mtyequiv[\vDelta; \Delta]{\Pi(x :
          S).T'[\vsigma]}{\Rty_\alpha(Pi(A', x.T', \vrho')[\vsigma])}{\Se_i}
      \end{align*}
    \end{itemize}
    
  \end{itemize}
\end{proof}

From realizability, we see that the two syntactic types glued by the same value are
equivalent.
\begin{lemma}\labeledit{lem:ty-glu-d}
  If $\mgluty T[i]A$ and $\mgluty{T'}[i]A$, then $\mtyequiv{T}{T'}{\Se_i}$. 
\end{lemma}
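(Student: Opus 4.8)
The plan is to reduce the statement to realizability. Recall from the Realizability theorem that $\mgluty T[i]A$ entails $\mglutyu T[i]A$, and unfolding the latter gives in particular that for every restricted weakening $\vsigma : \vDelta \To_r \vGamma$ we have $\mtyequiv[\vDelta]{T[\vsigma]}{\Rty_\alpha(A[\vsigma])}{\Se_i}$. Applying the same theorem to the second hypothesis yields, for the same family of $\vsigma$, that $\mtyequiv[\vDelta]{T'[\vsigma]}{\Rty_\alpha(A[\vsigma])}{\Se_i}$. So the common ``normal form'' $\Rty_\alpha(A[\vsigma])$ glues both $T$ and $T'$, and it remains only to knit the two equivalences together.

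Concretely, I would instantiate the universally quantified $\vsigma$ with the identity substitution $\vect I : \vGamma \To_r \vGamma$, which is restricted (clause (1) of the definition of restricted weakening). This gives $\mtyequiv{T[\vect I]}{\Rty_\alpha(A[\vect I])}{\Se_i}$ and $\mtyequiv{T'[\vect I]}{\Rty_\alpha(A[\vect I])}{\Se_i}$. Now I simplify both sides: on the syntactic side, the equivalence rule $t[\vect I] \approx t$ lets me replace $T[\vect I]$ by $T$ and $T'[\vect I]$ by $T'$; on the domain side, since $\mt{\vect I} = \vone$ (by definition of $\mt{\_}$) and $a[\vone] = a$ (from the list of $\vone$-properties), we have $A[\vect I] = A$, hence $\Rty_\alpha(A[\vect I]) = \Rty_\alpha(A)$. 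This leaves $\mtyequiv{T}{\Rty_\alpha(A)}{\Se_i}$ and $\mtyequiv{T'}{\Rty_\alpha(A)}{\Se_i}$.

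Finally, since $\mtyequiv{\_}{\_}{\Se_i}$ is a PER (it enjoys symmetry and transitivity), I apply symmetry to the second equivalence to obtain $\mtyequiv{\Rty_\alpha(A)}{T'}{\Se_i}$ and then transitivity with the first to conclude $\mtyequiv{T}{T'}{\Se_i}$, as required.

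There is essentially no hard step here; this is a routine corollary of realizability. The only points needing a moment's care are the two bookkeeping facts that $\vect I$ counts as a restricted weakening and that applying the extracted modal transformation $\mt{\vect I} = \vone$ to a domain value is the identity, both of which are already available in the excerpt.
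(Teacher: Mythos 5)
Your proposal is correct and is essentially the paper's own argument: the paper's proof is the one-line chain $T \approx T[\vect I] \approx \Rty_\alpha(A) \approx T'[\vect I] \approx T'$ obtained from realizability, which is exactly your instantiation at the restricted weakening $\vect I$ followed by the identity laws and symmetry/transitivity.
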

\begin{proof}
  By realizability, we have
  \begin{align*}
    T \approx T[\vect I] \approx \Rty_\alpha(A) \approx T'[\vect I] \approx T'
  \end{align*}
\end{proof}

\subsection{Cumulativity}

Similar to the completeness case (\Cref{lem:dt:per-cumu}), when proving cumulativity,
we also need to mutually prove a lowering theorem. This again is because in the
dependent function case, the input is in a contravariant position. When we lift a
level in the output, we are only given an input on a higher level. Unlike in the
completeness case, where cumulativity and lowering are rather uniform, in the gluing
model, lowering does not come for free: since the gluing model also stores syntactic
information, we must have enough syntactic information in order to prove
lowering. More precisely, we prove cumulativity as follows:
\begin{lemma}[Cumulativity and lowering]
  Given $A \in \Uc_i$, 
  \begin{itemize}
  \item if $\mgluty T[i]A$, then $\mgluty T[1 + i]A$;
  \item if $\mglutm t T a [i] A$, then $\mglutm t T a [1 + i] A$;
  \item if $\mglutm t T a [1 + i] A$ and $\mgluty T[i]A$, then $\mglutm t T a [i] A$.
  \end{itemize}
\end{lemma}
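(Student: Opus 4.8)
The plan is to proceed by well-founded induction on $i$ and then by induction on the derivation of $A \in \Uc_i$, proving the three statements simultaneously (cumulativity for types, cumulativity for terms, and lowering for terms). The structure mirrors the proof of \Cref{lem:dt:per-cumu}, but now each clause must also transport the \emph{syntactic} data recorded in $\mgluty{-}[i]{-}$ and $\mglutm{-}{-}{-}[i]{-}$. The key observation that makes this work is that the syntactic components of the gluing predicates (typing judgments $\mtyping{T}{\Se_i}$, equivalences $\mtyequiv{T[\vsigma]}{\Rty_\alpha(A[\vsigma])}{\Se_i}$, etc.) are almost stable under changing $i$: by \Cref{lem:dt:cumu} and \Cref{lem:dt:eq-cumu}, $\mtyping{T}{\Se_i}$ gives $\mtyping{T}{\Se_{i+1}}$ and $\mtyequiv{-}{-}{\Se_i}$ gives $\mtyequiv{-}{-}{\Se_{i+1}}$, and these are used uniformly whenever a universe level appears syntactically. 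The semantic side is handled by \Cref{lem:dt:per-cumu} itself for membership facts like $a \in \El_i(A)$.

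First I would dispatch the easy base cases: $A = \uparrow^{A'}(C)$, $A = \Nd$ (here the auxiliary relation $\mglunat{t}{a}$ does not mention universe levels, so only the $\mtyequiv{T}{\Nat}{\Se_i}$ clause needs lifting via \Cref{lem:dt:eq-cumu}), and $A = \Ud_j$ with $j < i$. For $\Ud_j$ the condition $j < i$ trivially implies $j < i+1$, and the nested $\mgluty{t}[j]{a}$ and $a \in \Uc_j$ components are unchanged; lowering from $1+i$ to $i$ is symmetric and requires the hypothesis $\mgluty{T}[i]{A}$ only to re-certify $\mtyequiv{T}{\Se_j}{\Se_i}$ at the lower level, which it directly supplies. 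For $A = \squared A'$, I would unfold the definition, assume the ambient $\vDelta'$, $\vdash \vDelta;\vDelta'$ and $\vsigma : \vDelta \To_r \vGamma$, and apply the IH at the same index $i$ (well-founded induction allows this since the index does not decrease here; but the structural induction on $A' \in \Uc_i$ is what makes the recursion legitimate) to the shrunk predicate $\mglutm[\vDelta;\vDelta']{\unbox{|\vDelta'|}{(t[\vsigma])}}{T'[\sextt\vsigma{|\vDelta'|}]}{\tunbox\cdot(|\vDelta'|, a[\vsigma])}[i]{A'[\sextt{\mt{\vsigma}}{|\vDelta'|}]}$, noting that $A'[\sextt{\mt{\vsigma}}{|\vDelta'|}] \in \Uc_i$ is a sub-derivation by the universal quantification over \UnMoTs in the $\Uc_i$-rule for $\squared$. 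The existential witness $T'$ and the syntactic equivalence $\mtyequiv{T}{\square T'}{\Se_i}$ are carried over by \Cref{lem:dt:eq-cumu}.

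The main obstacle, exactly as flagged in the lemma statement, is the $\Pi$ case, and specifically its \emph{contravariant} input position. When proving cumulativity $\mgluty{T}[i]{Pi(A',x.T',\vrho)} \implies \mgluty{T}[1+i]{Pi(A',x.T',\vrho)}$, I must, given a generic $\vsigma : \vDelta \To_r \vGamma;\Gamma$ and a gluing witness $\mglutm[\vDelta]{s}{T_1[\vsigma]}{b}[1+i]{A'[\vsigma]}$ at level $1+i$, produce $\mgluty[\vDelta]{T_2[\vsigma, s/x]}[1+i]{\intp{T'}(\ext(\vrho[\vsigma],x,b))}$; but the premise only tells me what happens for witnesses at level $i$. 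Hence I need \emph{lowering} simultaneously: from $\mglutm[\vDelta]{s}{T_1[\vsigma]}{b}[1+i]{A'[\vsigma]}$ and $\mgluty[\vDelta]{T_1[\vsigma]}[i]{A'[\vsigma]}$ (which I get by cumulativity/lowering applied to the already-known $\mgluty{T_1[\vsigma]}[i]{A'[\vsigma]}$, itself obtained by monotonicity \Cref{lem:dt:glu-mon}) I recover $\mglutm[\vDelta]{s}{T_1[\vsigma]}{b}[i]{A'[\vsigma]}$, feed it to the level-$i$ premise to get the type-gluing at level $i$, then lift to level $1+i$ by the cumulativity clause for types. The subtle point — and the reason lowering is nontrivial here unlike in \Cref{lem:dt:per-cumu} — is that the lowering clause for terms requires the auxiliary hypothesis $\mgluty{T}[i]{A}$, so I must be careful to always have the appropriate lower-level type-gluing in hand before invoking term lowering; the sub-derivations of $A'[\vsigma] \in \Uc_i$ and $\intp{T'}(\ext(\vrho[\vsigma],x,b)) \in \Uc_i$ (from the $\Uc_i$-rule for $Pi$) justify the recursive calls. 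Lowering for the $\Pi$-term clause is handled symmetrically, using \Cref{lem:dt:inv-glu-tm} to extract the typing and $\El$-membership needed to re-assemble the lower-level predicate, and \Cref{lem:dt:glutm-resp-equiv-ty} and \Cref{lem:dt:glu-resp-u} to reconcile the semantic types along the way. Throughout, I would use \Cref{lem:dt:el-resp-u} to move $\El$-memberships across the definitionally-equal semantic types that arise from naturality equalities of $T'$.
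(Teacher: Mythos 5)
Your proposal matches the paper's proof: induction on $A \in \Uc_i$ with cumulativity and lowering proved simultaneously, the crux being the $Pi$ case where lowering is applied to the (contravariant) argument when proving cumulativity and cumulativity when proving lowering, with the auxiliary lower-level type-gluing hypothesis supplying what term-lowering needs. The extra bookkeeping you describe (lifting the syntactic clauses via \Cref{lem:dt:cumu,lem:dt:eq-cumu}, and using \Cref{lem:dt:el-resp-u,lem:dt:glu-resp-u}) is consistent with the paper's (much terser) argument.
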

The first two statements are intuitive cumulativity. The last one is lowering. Notice
that in order to lower $\mglutm t T a [1 + i] A$ by one level, we must know that $T$
and $A$ are related one level lower (i.e. the premise $\mgluty T[i]A$).
\begin{proof}
  We proceed by induction on $A \in \Uc_i$. The trickiest cases is $Pi$. When proving
  cumulativity, we must apply lowering to the argument so that the input level is
  lowered by one, using which we apply IH; vice versa, we apply cumulativity when
  proving lowering, where we must increase the input level by one so that we can use
  it to apply IH. 
\end{proof}

Once we establish one-step cumulativity, we can generalize it to any greater level:
\begin{lemma}[Cumulativity]
  If $i \le j$ and $\mgluty T[i]A$ and $\mglutm t T a[i]A$, resp., then $\mgluty T[j] A$ and
  $\mglutm t T a[j]A$, resp..
\end{lemma}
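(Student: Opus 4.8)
The plan is to reduce the statement to iterated applications of the one-step cumulativity lemma proved immediately above. Since $i \le j$, I would first write $j = i + n$ for some $n \in \N$, and then proceed by induction on $n$.

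In the base case $n = 0$ we have $j = i$, so both conclusions $\mgluty T[j]A$ and $\mglutm t T a[j]A$ are literally the hypotheses, and there is nothing to prove. For the inductive step, suppose the statement holds for $n$, i.e. from $\mgluty T[i]A$ and $\mglutm t T a[i]A$ (with $A \in \Uc_i$) we may conclude $\mgluty T[i+n]A$ and $\mglutm t T a[i+n]A$; we must establish it for $n+1$. By the induction hypothesis we obtain $\mgluty T[i+n]A$ and $\mglutm t T a[i+n]A$. Here it is worth noting that $A \in \Uc_i$ entails $A \in \Uc_{i+n}$ by the (completeness-side) cumulativity lemma for the PER model, so the one-step gluing cumulativity lemma is applicable at level $i+n$. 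Applying the first two bullets of that lemma with $i$ replaced by $i+n$ then yields $\mgluty T[(i+n)+1]A = \mgluty T[i+(n+1)]A$ and $\mglutm t T a[i+(n+1)]A$, which is exactly what is needed since $j = i+(n+1)$.

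I do not anticipate a serious obstacle: the only point requiring a little care is bookkeeping of which universe level each semantic value lives in, so that the premise ``$A \in \Uc_i$'' of the one-step lemma is available at the intermediate level $i+n$ — this is handled by PER-model cumulativity. The lowering statement from the one-step lemma is not needed here, since we only ever move levels upward. Thus the whole argument is a straightforward finite iteration, and I would keep the write-up to essentially the two sentences above: ``$i \le j$ means $j = i + n$ for some $n$; do induction on $n$ and apply the previous lemma.''
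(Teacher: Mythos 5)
Your proposal is correct and matches the paper's intended argument: the paper proves this exactly by writing $j = i + n$ and inducting on $n$, iterating the one-step cumulativity lemma (just as it does for the PER-model version). Your extra remark that $A \in \Uc_{i+n}$ at intermediate levels is supplied by PER-model cumulativity, and that lowering is not needed for the iteration itself, is accurate bookkeeping and consistent with the paper.
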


Due to cumulativity, we can do the same as the PER models, where we hide the levels by
taking limits of the gluing model. That is
\begin{align*}
  \mgluty T A &:= \mgluty T[\infty]A \\
  \mglutm t T a A &:= \mglutm t T a[\infty]A 
\end{align*}
These limits are well defined because if $T$ and $A$ are related by the limit, then
cumulativity ensures that they must be related by some concrete level in natural
numbers (but we do not have to be concerned about its actual level). Similarly, if $t$
and $a$ are related by the limit, they are actually related at some level. 

\subsection{Gluing of Substitutions}

We generalize $\mglutm t T a A$ to a gluing relation between substitutions and
evaluation environments.
\begin{definition}
  $\mglutms \vsigma \vDelta \vrho$ relates $\vsigma$ and $\vrho$ and is defined by
  induction on $\vDelta$:
  \begin{align*}
    \mglutms \vsigma{\epsilon; \cdot} \vrho
    &:= \mtyping \vsigma \epsilon; \cdot \\
    \mglutms \vsigma{\vDelta; \cdot} \vrho
    &:= \mtyping{\vsigma}{\vDelta; \cdot} \\
    &\tand \exists \vsigma', n.\ \mtyequiv[\trunc\vGamma n]{\trunc \vsigma 1}{\vsigma'}{\vDelta}
      \tand \Ltotal\vsigma 1 = \Ltotal\vrho 1 = n \tand \\
    & \qquad \mglutms[\trunc \vGamma n]{\vsigma'}{\vDelta}{\trunc \vrho 1} \\
    \mglutms \vsigma{\vDelta; (\Delta, x : T)} \vrho 
    &:= \mtyping{\vsigma}{\vDelta; (\Delta, x : T)} \\
    & \tand \exists \vsigma', t.\ \mtyequiv{\wk_x \circ \vsigma}{\vsigma'}{\vDelta; \Delta}
      \tand \mtyequiv{x[\vsigma]}{t}{T[\vsigma']} \tand \\
    & \qquad \intp{T}(\drop(\vrho,x)) \in \Uc \tand \\
    & \qquad \mglutm{t}{T[\vsigma']}{\rho(x)}{\intp{T}(\drop(\vrho,x))} \tand \\
    & \qquad \mglutms{\vsigma'}{\vDelta; \Delta}{\drop(\vrho, x)}
      \tag{where $(\_, \rho) := \vrho(0)$}
  \end{align*}
\end{definition}

\begin{lemma}\labeledit{lem:dt:glue-L}
  If $\mglutms{\vsigma}{\vDelta}{\vrho}$, then $\Ltotal\vsigma n = \Ltotal\vrho n$.
\end{lemma}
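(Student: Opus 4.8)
The plan is to follow the proofs of the simply typed analogues \Cref{lem:glue-L} and \Cref{lem:glue-L'}: induct on the structure of $\vDelta$, matching the three clauses that define $\mglutms{\vsigma}{\vDelta}{\vrho}$, and within each clause reduce to the case $n = 1 + m$ (the case $n = 0$ is immediate, $\Ltotal{\vsigma}{0} = 0 = \Ltotal{\vrho}{0}$, and we may assume $n < |\vDelta|$ since otherwise neither side is defined). The base clause $\vDelta = \epsilon; \cdot$ is trivial, as then $|\vDelta| = 1$ forces $n = 0$.

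For the clause $\vDelta = \vDelta'; \cdot$, unfolding $\mglutms$ yields a local substitution $\vsigma'$ and a number $k$ with $\Ltotal{\vsigma}{1} = \Ltotal{\vrho}{1} = k$, together with $\mtyequiv[\trunc{\vGamma}{k}]{\trunc{\vsigma}{1}}{\vsigma'}{\vDelta'}$ and $\mglutms[\trunc{\vGamma}{k}]{\vsigma'}{\vDelta'}{\trunc{\vrho}{1}}$. Using distributivity of truncation offset over addition (\Cref{lem:L-add}), the fact that truncation offset respects substitution equivalence, the inductive hypothesis for $\vDelta'$ at $m$ (valid since $m < |\vDelta'|$), and \Cref{lem:L-add-envs} for $\vrho$, we obtain
\begin{align*}
  \Ltotal{\vsigma}{1 + m} = k + \Ltotal{\trunc{\vsigma}{1}}{m} = k + \Ltotal{\vsigma'}{m} = k + \Ltotal{\trunc{\vrho}{1}}{m} = \Ltotal{\vrho}{1 + m}.
\end{align*}

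For the clause $\vDelta = \vDelta'; (\Delta, x : T)$, unfolding gives $\vsigma'$ with $\mtyequiv{\wk_x \circ \vsigma}{\vsigma'}{\vDelta'; \Delta}$ and $\mglutms{\vsigma'}{\vDelta'; \Delta}{\drop(\vrho, x)}$. Since $|\vDelta'; (\Delta, x : T)| = |\vDelta'; \Delta|$, the index $1 + m$ is still in range; since $\wk_x$ carries only the trivial offset, the definition of truncation offset on composition gives $\Ltotal{\wk_x \circ \vsigma}{1 + m} = \Ltotal{\vsigma}{1 + m}$; and since $\drop$ rewrites only the topmost local environment, $\Ltotal{\drop(\vrho, x)}{1 + m} = \Ltotal{\vrho}{1 + m}$ (a quick induction on $m$). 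Combining these with truncation offset respecting equivalence and the inductive hypothesis for $\vDelta'; \Delta$ yields $\Ltotal{\vsigma}{1 + m} = \Ltotal{\vsigma'}{1 + m} = \Ltotal{\drop(\vrho, x)}{1 + m} = \Ltotal{\vrho}{1 + m}$.

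The argument is essentially bookkeeping, and I expect the only mildly delicate point to be this last clause, where one passes through an equivalence involving $\wk_x$ and replaces $\vrho$ by $\drop(\vrho, x)$; it is dispatched by the observation that truncation offset depends only on the offsets recorded in a substitution or environment, which are untouched by variable-level weakening and by $\drop$. Everything else is a direct appeal to distributivity and the equivalence-respecting properties of truncation offset already established.
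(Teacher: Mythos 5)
Your proposal is correct and follows the same route as the paper, whose entire proof is ``Immediate by induction on $n$ and $\vDelta$''; you have simply spelled out the bookkeeping (the $n=0$ base case, the distributivity of truncation offset over addition, the fact that truncation offset respects substitution equivalence and is unaffected by $\wk_x$ and $\drop$) that the paper leaves implicit. Note only that the well-definedness caveat $n < |\vDelta|$ and the structural decrease in the $\vDelta'; (\Delta, x:T)$ clause (same $n$, smaller stack) are exactly what the paper's ``induction on $n$ and $\vDelta$'' is packaging, so there is no divergence in substance.
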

\begin{lemma}\labeledit{lem:dt:glue-trunc}
  If $\mglutms{\vsigma}{\vDelta}{\vrho}$, then $\mglutms[\trunc\vGamma{\Ltotal\vsigma n}]{\trunc \vsigma n}{\trunc \vDelta n}{\trunc \vrho n}$.
\end{lemma}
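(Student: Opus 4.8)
The plan is to prove \Cref{lem:dt:glue-trunc} by induction on $n$, paralleling the proof of its simply typed counterpart \Cref{lem:glue-trunc} as well as the structure already used for \Cref{lem:dt:glue-L}. The base case $n = 0$ is immediate: truncation by $0$ leaves $\vsigma$, $\vDelta$, $\vrho$ unchanged, and $\Ltotal\vsigma 0 = 0$, so the goal reduces to the hypothesis $\mglutms{\vsigma}{\vDelta}{\vrho}$ itself. For the inductive step $n = 1 + n'$, the key observation is that $|\vDelta| > 1$ (otherwise truncation by $1 + n'$ would be meaningless), so $\vDelta$ has the shape $\vDelta_0; \Delta$ for some nonempty stack $\vDelta_0$ --- concretely it is either $\vDelta_0; \cdot$ or $\vDelta_0; (\Delta', x : T)$. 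In either case, inverting the definition of $\mglutms{\vsigma}{\vDelta}{\vrho}$ produces a "tail" gluing witness one context shorter, together with a numerical bookkeeping fact matching $\Ltotal\vsigma 1$ against the leading offset of $\vrho$.

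First I would handle the case $\vDelta = \vDelta_0; \cdot$. Unfolding $\mglutms{\vsigma}{\vDelta_0; \cdot}{\vrho}$ gives some $\vsigma'$, $k$ with $\mtyequiv[\trunc\vGamma k]{\trunc\vsigma 1}{\vsigma'}{\vDelta_0}$, $\Ltotal\vsigma 1 = \Ltotal\vrho 1 = k$, and $\mglutms[\trunc\vGamma k]{\vsigma'}{\vDelta_0}{\trunc\vrho 1}$. Applying the induction hypothesis to this tail witness at level $n'$ yields
\begin{align*}
  \mglutms[\trunc{(\trunc\vGamma k)}{\Ltotal{\vsigma'}{n'}}]{\trunc{\vsigma'}{n'}}{\trunc{\vDelta_0}{n'}}{\trunc{\trunc\vrho 1}{n'}}.
\end{align*}
Now I would rewrite each component to the form demanded by the goal: $\trunc{(\trunc\vGamma k)}{\Ltotal{\vsigma'}{n'}} = \trunc\vGamma{k + \Ltotal{\vsigma'}{n'}}$ by \Cref{lem:trunc-sum}, and $k + \Ltotal{\vsigma'}{n'} = \Ltotal\vsigma{1 + n'}$ using $\Ltotal\vsigma 1 = k$, \Cref{lem:L-add}, and the congruence of $\Ltotal{}{-}$ under $\approx$ (the equivalence $\trunc\vsigma 1 \approx \vsigma'$ gives $\Ltotal{\trunc\vsigma 1}{n'} = \Ltotal{\vsigma'}{n'}$). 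Similarly $\trunc{\trunc\vrho 1}{n'} = \trunc\vrho{1 + n'}$ on evaluation environments (\Cref{lem:L-add-envs}-style rewriting), $\trunc{\vDelta_0}{n'} = \trunc\vDelta{1 + n'}$, and $\trunc{\vsigma'}{n'}$ must be reconciled with $\trunc\vsigma{1 + n'}$: the latter equals $\trunc{\trunc\vsigma 1}{n'}$, which is $\approx \trunc{\vsigma'}{n'}$ by congruence of truncation under substitution equivalence, so a final application of \Cref{lem:glue-stack-resp-equiv} (or rather its dependently typed analogue used throughout this section) closes the gap.

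The case $\vDelta = \vDelta_0; (\Delta', x : T)$ proceeds the same way, inverting $\mglutms{\vsigma}{\vDelta}{\vrho}$ to obtain $\vsigma'$, $t$ with $\wk_x \circ \vsigma \approx \vsigma'$, the term-gluing facts about $x[\vsigma]$, and the tail witness $\mglutms{\vsigma'}{\vDelta_0; \Delta'}{\drop(\vrho, x)}$; since $\Ltotal{\vsigma}{1} = \Ltotal{\wk_x \circ \vsigma}{1} = \Ltotal{\vsigma'}{1} = 1$ here (the topmost context is a genuine context, not a world boundary), and $\trunc{(\vDelta_0; \Delta')}{n} = \trunc{\vDelta}{n}$, $\drop(\vrho,x)$ truncates the same as $\vrho$, one applies the induction hypothesis to the tail and then reconciles with \Cref{lem:glue-stack-resp-equiv}. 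I expect the main obstacle to be purely bureaucratic rather than conceptual: getting all the $\trunc{}{-}$ and $\Ltotal{}{-}$ rewrites and the interleaved $\approx$-congruences to line up so that the induction hypothesis applies on the nose, particularly in the world-boundary case where the offset $k$ shifts the truncation index of $\vGamma$. The proof itself I would simply state as "immediate by induction on $n$", matching the terse style of the surrounding lemmas, since every ingredient (the truncation/offset laws, the respect-for-equivalence lemmas) is already available.
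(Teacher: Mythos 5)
Your overall route is the same as the paper's: the paper disposes of this lemma (together with \Cref{lem:dt:glue-L}) by ``induction on $n$ and $\vDelta$'', and your unfolding of the three cases of $\mglutms{\vsigma}{\vDelta}{\vrho}$, followed by the truncation/offset laws and \Cref{lem:dt:glue-stack-resp-equiv}, is exactly the bookkeeping that terse proof hides. The one real problem is the induction measure: you announce induction on $n$ alone, but in the case $\vDelta = \vDelta_0;(\Delta', x : T)$ the recursive call is at the \emph{same} $n$ on the tail witness $\mglutms{\vsigma'}{\vDelta_0;\Delta'}{\drop(\vrho,x)}$ --- as you yourself signal by noting $\trunc{(\vDelta_0;\Delta')}{n} = \trunc{\vDelta}{n}$ --- so only $\vDelta$ shrinks there, and an induction hypothesis indexed by $n$ alone is not applicable. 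The induction has to be on $n$ and $\vDelta$ (lexicographically, or structurally on $\vDelta$ with $n$ generalized), which is precisely the scheme the paper states.

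Separately, your side claim in that case that $\Ltotal{\vsigma}{1} = \Ltotal{\wk_x \circ \vsigma}{1} = \Ltotal{\vsigma'}{1} = 1$ is false in general: a substitution into a stack whose topmost context is nonempty can still carry a topmost offset other than $1$, e.g.\ $(\sextt{\vsigma_0}{2}), t/x$ targets $\vDelta;(\cdot, x:T)$ yet has truncation offset $2$ at depth $1$. Fortunately nothing of the sort is needed. Since $\wk_x$ only weakens inside the topmost context, it contributes identity offsets, so $\Ltotal{\wk_x \circ \vsigma}{n} = \Ltotal{\vsigma}{n}$ and $\trunc{(\wk_x \circ \vsigma)}{n} = \vect I \circ \trunc{\vsigma}{n}$, which is equivalent to $\trunc{\vsigma}{n}$; combining this with the lemmas stating that equivalent substitutions have equal truncation offsets and equivalent truncations, and with $\trunc{\drop(\vrho,x)}{n} = \trunc{\vrho}{n}$ for $n \ge 1$, the (corrected) inductive hypothesis on the tail plus \Cref{lem:dt:glue-stack-resp-equiv} closes the case. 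With the measure fixed and that numeric claim replaced by this computation, your argument coincides with the paper's.
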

\begin{proof}
  Immediate by induction on $n$ and $\vDelta$. 
\end{proof}

\begin{lemma}
  If $\mglutms{\vsigma}{\vDelta}{\vrho}$, then $\mtyping \vsigma \vDelta$.
\end{lemma}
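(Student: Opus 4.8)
The plan is to observe that $\mtyping{\vsigma}{\vDelta}$ appears literally as the first conjunct in every clause of the definition of $\mglutms{\vsigma}{\vDelta}{\vrho}$, so the statement follows by a trivial case analysis on the shape of $\vDelta$ with no induction required.

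Concretely, I would proceed by cases on $\vDelta$ matching the three defining clauses of $\mglutms$. In the base case $\vDelta = \epsilon; \cdot$, the definition gives $\mglutms{\vsigma}{\epsilon; \cdot}{\vrho} := \mtyping{\vsigma}{\epsilon; \cdot}$, which is exactly the goal. In the case $\vDelta = \vDelta'; \cdot$, the definition unfolds to a conjunction whose first conjunct is $\mtyping{\vsigma}{\vDelta'; \cdot}$; discard the remaining conjuncts (the existentials giving $\vsigma'$, $n$, the equivalences, and the recursive gluing) and keep that one. Likewise, in the case $\vDelta = \vDelta'; (\Delta, x : T)$, the definition unfolds to a conjunction whose first conjunct is $\mtyping{\vsigma}{\vDelta'; (\Delta, x : T)}$, which is again precisely the goal, so we discard the rest.

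There is no main obstacle here: the lemma is definitional unpacking, analogous to the earlier \Cref{lem:dt:inv-glu-tm} extracting $\mtyping t T$ from $\mglutm t T a[i]A$, and to the simply typed counterpart where $t \sim a \in \glu{T}_{\;\vGamma}$ implies $\mtyping t T$. The only thing to be careful about is that the ambient side-conditions (e.g.\ $A \in \Uc_i$ assumptions mentioned "whenever necessary") are not needed at all for this particular extraction, since the well-typedness conjunct is present unconditionally in each clause.

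\begin{proof}
  Immediate by case analysis on $\vDelta$: in each of the three defining clauses of
  $\mglutms{\vsigma}{\vDelta}{\vrho}$, the first conjunct is $\mtyping{\vsigma}{\vDelta}$.
\end{proof}
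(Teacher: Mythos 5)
Your proposal is correct and matches the paper's argument: the paper simply records the proof as ``Immediate,'' since $\mtyping{\vsigma}{\vDelta}$ is the first conjunct in every defining clause of $\mglutms{\vsigma}{\vDelta}{\vrho}$, which is exactly the unpacking you perform.
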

\begin{proof}
  Immediate.
\end{proof}

\begin{lemma}\labeledit{lem:dt:glu-stack-vrho}
  If $\mglutms{\vsigma}{\vDelta}{\vrho}$, then $\vrho \in \intp{\vDelta}$. 
\end{lemma}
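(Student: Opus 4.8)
\subsection*{Proof plan for \Cref{lem:dt:glu-stack-vrho}}

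The plan is to prove this by induction on $\vDelta$, following the three clauses in the definition of $\mglutms\vsigma\vDelta\vrho$; equivalently, this is an induction on the structure of the gluing derivation, which is well-founded with respect to the pair consisting of the number of contexts in $\vDelta$ and the size of the topmost context of $\vDelta$. Before starting I would record a preliminary fact: $\intp{\vDelta}$ is only defined once we know $\vDash \vDelta$, so from $\mglutms\vsigma\vDelta\vrho$ we first obtain $\mtyping\vsigma\vDelta$ (the preceding lemma), hence $\vdash \vDelta$ by presupposition, hence $\vDash \vDelta$ by the fundamental theorem \Cref{thm:dt:per-fund}. In particular, the well-formedness condition needed for the PER clause on $\vDelta'; (\Delta, x:T)$ — namely that $\intp{T}$ sends $\intp{\vDelta'; \Delta}$-related environments into $\Uc$ — is available, and by inversion $\vDash$ holds for every prefix of $\vDelta$ that the induction touches.

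For the base case $\vDelta = \epsilon; \cdot$ there is nothing to do: $\intp{\epsilon; \cdot}$ is the total relation on environments, so $\vrho \approx \vrho \in \intp{\epsilon; \cdot}$ holds trivially. For $\vDelta = \vDelta'; \cdot$, the gluing clause supplies $\vsigma'$ and $n$ with $\Ltotal\vsigma 1 = \Ltotal\vrho 1 = n$ together with $\mglutms[\trunc\vGamma n]{\vsigma'}{\vDelta'}{\trunc\vrho 1}$; applying the induction hypothesis (the lemma is stated for an arbitrary ambient context, so working in $\trunc\vGamma n$ is harmless) yields $\trunc\vrho 1 \approx \trunc\vrho 1 \in \intp{\vDelta'}$, and the offset-equality condition $k = k'$ in the definition of $\intp{\vDelta'; \cdot}$ is immediate for the reflexive pair $(\vrho, \vrho)$. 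Hence $\vrho \approx \vrho \in \intp{\vDelta'; \cdot}$.

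For $\vDelta = \vDelta'; (\Delta, x : T)$, write $(\_, \rho) := \vrho(0)$. The gluing clause gives some $\vsigma', t$ with $\mglutm{t}{T[\vsigma']}{\rho(x)}{\intp{T}(\drop(\vrho, x))}$, together with $\intp{T}(\drop(\vrho, x)) \in \Uc$ and $\mglutms{\vsigma'}{\vDelta'; \Delta}{\drop(\vrho, x)}$. The induction hypothesis applied to the last of these yields $\drop(\vrho, x) \approx \drop(\vrho, x) \in \intp{\vDelta'; \Delta}$, and from the first \Cref{lem:dt:inv-glu-tm} extracts $\rho(x) \in \El(\intp{T}(\drop(\vrho, x)))$, i.e. $\rho(x) \approx \rho(x) \in \El(\intp{T}(\drop(\vrho, x)))$. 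These are precisely the two conjuncts demanded by the definition of $\intp{\vDelta'; (\Delta, x : T)}$ for the pair $(\vrho, \vrho)$, so $\vrho \approx \vrho \in \intp{\vDelta}$.

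I do not anticipate a genuine obstacle here: each case amounts to reading off the appropriate conjunct of the gluing relation and feeding it either to the induction hypothesis or to \Cref{lem:dt:inv-glu-tm}. The only place that deserves a moment's care is making sure the semantic type $\intp{T}(\drop(\vrho,x))$ produced by the gluing clause is literally the one that the PER clause for $\vDelta'; (\Delta, x : T)$ expects — it is, since both use $\drop(\vrho, x)$ — and that the well-definedness of $\intp{\vDelta}$, which rests on $\vDash \vDelta$, has been secured up front. Everything else is routine bookkeeping about $\trunc{(-)}{n}$ and $\drop$ lining up between the gluing relation and the inductively defined PER $\intp{\vDelta}$.
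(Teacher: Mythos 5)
Your proof is correct and follows essentially the same route as the paper's: secure well-definedness of $\intp{\vDelta}$ up front via typing of $\vsigma$ and the fundamental theorem, then induct on $\vDelta$, discharging the extended-context case by combining the induction hypothesis on $\drop(\vrho,x)$ with \Cref{lem:dt:inv-glu-tm} to place $\rho(x)$ in $\El(\intp{T}(\drop(\vrho,x)))$. The only difference is that you spell out the $\vDelta';\cdot$ case that the paper dismisses as immediate, which is harmless.
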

\begin{proof}
  Notice that $\mtyping \vsigma \vDelta$ and thus $\msemtyp \vsigma \vDelta$ and $\vDash \vDelta$, so $\vrho \in
  \intp{\vDelta}$ is a well-defined statement. We do induction on $\vDelta$.
  \begin{itemize}[label=Case]
  \item $\vDelta = \epsilon; \cdot$ or $\vDelta = \vDelta' ;\cdot$, then it is
    immediate. Apply IH when necessary. 
  \item $\vDelta = \vDelta'; (\Delta, x : T)$, then
    \begin{align*}
      H_1: &\ \mglutm{t}{T[\vsigma']}{\rho(x)}{\intp{T}(\drop(\vrho,x))}
      \tag{by assumption} \\
      H_2: &\ \mglutms{\vsigma'}{\vDelta; \Delta}{\drop(\vrho, x)}
             \tag{by assumption} \\
           &\ \rho(x) \in \El(\intp{T}(\drop(\vrho,x)))
             \tag{by $H_1$ and \Cref{lem:dt:inv-glu-tm}} \\
           &\ \drop(\vrho, x) \in \intp{\vDelta; \Delta}
             \byIH \\
           &\ \vrho \in \intp{\vDelta}
             \tag{by definition}
    \end{align*}
  \end{itemize}
\end{proof}

 \begin{lemma}\labeledit{lem:dt:glue-stack-conv}
  If $\mglutms{\vsigma}{\vDelta}{\vrho}$ and $\vdash \vDelta \approx \vDelta'$,  then
  $\mglutms{\vsigma}{\vDelta'}{\vrho}$. 
\end{lemma}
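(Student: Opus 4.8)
The plan is to induct on the derivation of $\vdash \vDelta \approx \vDelta'$, whose three rules mirror exactly the three clauses of the recursive definition of $\mglutms{\vsigma}{\vDelta}{\vrho}$. In each case I unpack $\mglutms{\vsigma}{\vDelta}{\vrho}$ according to the shape of $\vDelta$, transport each conjunct to the primed context stack, and reassemble. The purely syntactic conjuncts ($\mtyping{\vsigma}{\vDelta}$ and the equivalences of the form $\mtyequiv{\wk_x \circ \vsigma}{\vsigma'}{\cdots}$ or $\mtyequiv[\trunc\vGamma n]{\trunc\vsigma 1}{\vsigma'}{\cdots}$) move across by the two conversion rules for substitutions, fed with the appropriate sub-derivation of $\vdash \vDelta \approx \vDelta'$; the gluing conjuncts move across by the type-conversion lemmas for the gluing model (\Cref{lem:dt:gluty-resp-equiv,lem:dt:glutm-resp-equiv-ty}) and by $\Uc$-conversion (\Cref{lem:dt:glu-resp-u}); and the tail $\mglutms$ statement moves across by the induction hypothesis.

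First I would dispatch the two easy cases. If $\vDelta = \epsilon;\cdot$ then $\vDelta' = \epsilon;\cdot$ and $\mglutms{\vsigma}{\epsilon;\cdot}{\vrho}$ is literally $\mtyping{\vsigma}{\epsilon;\cdot}$, so nothing changes. If $\vDelta = \vDelta_0;\cdot$ then $\vDelta' = \vDelta_0';\cdot$ with sub-derivation $\vdash \vDelta_0 \approx \vDelta_0'$; I keep the witness $n$ (truncation offsets are unaffected), move $\mtyequiv[\trunc\vGamma n]{\trunc\vsigma 1}{\vsigma'}{\vDelta_0}$ to codomain $\vDelta_0'$ by the substitution-equivalence conversion rule, and apply the induction hypothesis to $\mglutms[\trunc\vGamma n]{\vsigma'}{\vDelta_0}{\trunc\vrho 1}$.

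The main case is $\vDelta = \vDelta_0;(\Delta, x : T)$ against $\vDelta' = \vDelta_0';(\Delta', x : T')$, where the derivation supplies $\vdash \vDelta_0;\Delta \approx \vDelta_0';\Delta'$ and $\mtyequiv[\vDelta_0;\Delta]{T}{T'}{\Se_i}$ (together with the boxed well-formedness premises, in particular $\mtyping[\vDelta_0;\Delta]{T}{\Se_i}$ and $\mtyping[\vDelta_0';\Delta']{T'}{\Se_i}$). From the witnesses $\vsigma', t$ for $\mglutms{\vsigma}{\vDelta}{\vrho}$ I proceed as follows. The typings $\mtyping{\vsigma}{\vDelta}$ and $\mtyequiv{\wk_x \circ \vsigma}{\vsigma'}{\vDelta_0;\Delta}$ pass to the primed codomains by the conversion rules for substitutions. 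Applying $\vsigma' : \vGamma \To \vDelta_0;\Delta$ to $\mtyequiv[\vDelta_0;\Delta]{T}{T'}{\Se_i}$ gives $\mtyequiv[\vGamma]{T[\vsigma']}{T'[\vsigma']}{\Se_i}$; this converts $\mtyequiv{x[\vsigma]}{t}{T[\vsigma']}$ into $\mtyequiv{x[\vsigma]}{t}{T'[\vsigma']}$, and, via \Cref{lem:dt:glutm-resp-equiv-ty}, rewrites the gluing judgment for $t$ from syntactic type $T[\vsigma']$ to $T'[\vsigma']$ while keeping the semantic type $\intp{T}(\drop(\vrho,x))$. To finally replace $\intp{T}(\drop(\vrho,x))$ by $\intp{T'}(\drop(\vrho,x))$ I would invoke the fundamental theorem (\Cref{thm:dt:per-fund}) on $\mtyequiv[\vDelta_0;\Delta]{T}{T'}{\Se_i}$, yielding $\msemtyeq[\vDelta_0;\Delta]{T}{T'}{\Se_i}$, combined with $\drop(\vrho,x) \in \intp{\vDelta_0;\Delta}$, which follows from the tail witness $\mglutms{\vsigma'}{\vDelta_0;\Delta}{\drop(\vrho,x)}$ by \Cref{lem:dt:glu-stack-vrho}; this produces $\intp{T}(\drop(\vrho,x)) \approx \intp{T'}(\drop(\vrho,x)) \in \Uc$, so \Cref{lem:dt:glu-resp-u} transports the gluing judgment for $t$ into its primed form (and simultaneously shows the primed statement is well-posed). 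The tail $\mglutms{\vsigma'}{\vDelta_0;\Delta}{\drop(\vrho,x)}$ passes to $\mglutms{\vsigma'}{\vDelta_0';\Delta'}{\drop(\vrho,x)}$ by the induction hypothesis applied to the sub-derivation $\vdash \vDelta_0;\Delta \approx \vDelta_0';\Delta'$. Reassembling these conjuncts gives $\mglutms{\vsigma}{\vDelta'}{\vrho}$.

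The step I expect to need the most care is the transport of the semantic type, i.e.\ establishing $\intp{T}(\drop(\vrho,x)) \approx \intp{T'}(\drop(\vrho,x)) \in \Uc$: it is the one place where a non-syntactic ingredient — the fundamental theorem for the PER model, supplied with a well-formed environment via \Cref{lem:dt:glu-stack-vrho} — enters what is otherwise a mechanical conversion argument, and one must keep track that all the well-formedness side conditions ($\mtyping[\vDelta_0;\Delta]{T}{\Se_i}$, $\mtyping[\vDelta_0';\Delta']{T'}{\Se_i}$, etc.) are indeed delivered by the context-stack equivalence derivation. The remaining bookkeeping — checking that the same syntactic $\wk_x$ and the same evaluation environment $\vrho$ stay well-typed against the primed context stack — is routine and handled uniformly by the conversion rules.
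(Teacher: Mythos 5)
Your proposal is correct and takes essentially the same route as the paper: induction on $\vdash \vDelta \approx \vDelta'$, conversion rules for the syntactic conjuncts, the fundamental theorem of the PER model combined with \Cref{lem:dt:glu-stack-vrho} and \Cref{lem:dt:glu-resp-u} to transport the semantic type $\intp{T}(\drop(\vrho,x))$ to $\intp{T'}(\drop(\vrho,x))$, and the induction hypothesis for the tail. The only inessential differences are that the paper obtains the related environment from $\vrho \in \intp{\vDelta}$ via \Cref{lem:dt:intpvg-resp-eq} while you extract $\drop(\vrho,x) \in \intp{\vDelta_0;\Delta}$ directly from the tail witness, and that you are slightly more explicit about also converting the syntactic type of the gluing judgment from $T[\vsigma']$ to $T'[\vsigma']$.
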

\begin{proof}
  We do induction on $\vdash \vDelta \approx \vDelta'$.
  \begin{itemize}[label=Case]
  \item
    \begin{mathpar}
      \inferrule
      { }
      {\vdash \epsilon; \cdot \approx \epsilon; \cdot}
    \end{mathpar}
    Immediate.
    
  \item $\vDelta = \vDelta''; \cdot$, $\vDelta' = \vDelta'''; \cdot$, 
    \begin{mathpar}
      \inferrule
      {\vdash \vDelta'' \approx \vDelta'''}
      {\vdash \vDelta''; \cdot \approx \vDelta'''; \cdot}
    \end{mathpar}
    Apply conversion when necessary. Finally,
    \begin{align*}
      H_1: &\ \mglutms[\trunc \vGamma n]{\vsigma'}{\vDelta}{\trunc \vrho 1}
             \tag{by assumption, for some $\vsigma'$ and $n$} \\
           &\ \mglutms[\trunc \vGamma n]{\vsigma'}{\vDelta'}{\trunc \vrho 1}
             \tag{by $H_1$} \\
           &\ \mglutms{\vsigma}{\vDelta'}{\vrho}
    \end{align*}
    
  \item $\vDelta = \vDelta''; (\Delta, x : T)$, $\vDelta' = \vDelta'''; (\Delta', x : T')$,
    \begin{mathpar}
      \inferrule
      {\vdash \vDelta''; \Delta \approx \vDelta'''; \Delta'  \\ \mtyequiv[\vDelta''; \Delta]{T}{T'}{\Se_i}}
      {\vdash \vDelta''; (\Delta, x : T) \approx \vDelta'''; (\Delta', x : T')}
    \end{mathpar}
    Interestingly, we need to apply the fundamental theorem of the PER model:
    \begin{align*}
      H_2: &\ \vdash \vDelta''; \Delta \approx \vDelta'''; \Delta'
             \tag{by assumption} \\
      H_3: &\ \mtyequiv[\vDelta''; \Delta]{T}{T'}{\Se_i}
             \tag{by assumption} \\
      H_4: &\ \vDash \vDelta''; \Delta \approx \vDelta'''; \Delta'
             \tag{by \Cref{thm:dt:per-fund}} \\
      H_5: &\ \msemtyeq[\vDelta''; \Delta]{T}{T'}{\Se_i}
             \tag{by \Cref{thm:dt:per-fund}} \\
           &\ \vrho \in \intp{\vDelta}
             \tag{by \Cref{lem:dt:glu-stack-vrho}} \\
           &\ \vrho \in \intp{\vDelta'}
             \tag{by \Cref{lem:dt:intpvg-resp-eq}} \\
           &\ \drop(\vrho, x) \in \intp{\vDelta'''; \Delta'}
             \tag{by definition} \\
           &\ \intp{T}(\drop(\vrho, x)) \approx \intp{T'}(\drop(\vrho, x)) \in \Uc_i
             \tag{by $H_3$} \\
           &\ \intp{T'}(\drop(\vrho, x)) \in \Uc
             \tag{PER property} \\
           &\ \mglutm{t}{T[\vsigma']}{\rho(x)}{\intp{T'}(\drop(\vrho,x))}
             \tag{by \Cref{lem:dt:glu-resp-u}} \\
           &\ \mglutms{\vsigma'}{\vDelta'''; \Delta'}{\drop(\vrho, x)}
             \byIH \\
           &\ \mglutms{\vsigma}{\vDelta'}{\vrho}
             \tag{by definition}
    \end{align*}
  \end{itemize}
\end{proof}

 \begin{lemma}\labeledit{lem:dt:glue-stack-resp-equiv}
  If $\mglutms{\vsigma}{\vDelta}{\vrho}$ and $\mtyequiv{\vdelta}{\vsigma}{\vDelta}$,  then
  $\mglutms{\vdelta}{\vDelta}{\vrho}$. 
\end{lemma}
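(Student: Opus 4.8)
The plan is to prove the lemma by induction on the codomain stack $\vDelta$, unfolding the definition of $\mglutms{\cdot}{\vDelta}{\cdot}$ in each case, \emph{reusing verbatim} the existential witnesses supplied by $\mglutms{\vsigma}{\vDelta}{\vrho}$, and then patching the finitely many equivalence side-conditions so that they mention $\vdelta$ rather than $\vsigma$. The only external input is presupposition for \mintslang, which turns $\mtyequiv{\vdelta}{\vsigma}{\vDelta}$ into $\mtyping{\vdelta}{\vDelta}$; everything else is congruence, transitivity, the conversion rule, and the two basic lemmas that truncation offset and truncation respect substitution equivalence (namely $\Ltotal{\vdelta}{n} = \Ltotal{\vsigma}{n}$ and $\mtyequiv[\trunc\vGamma{\Ltotal{\vdelta}{n}}]{\trunc\vdelta n}{\trunc\vsigma n}{\trunc\vDelta n}$).

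For $\vDelta = \epsilon;\cdot$ the relation is just the typing judgment $\mtyping{\vdelta}{\epsilon;\cdot}$, so presupposition closes it. For $\vDelta = \vDelta';\cdot$, take the same witnesses $\vsigma'$ and $n$ that witness $\mglutms{\vsigma}{\vDelta';\cdot}{\vrho}$. From $\mtyequiv{\vdelta}{\vsigma}{\vDelta';\cdot}$ one gets $\Ltotal{\vdelta}{1} = \Ltotal{\vsigma}{1} = n = \Ltotal{\vrho}{1}$ and $\mtyequiv[\trunc\vGamma n]{\trunc\vdelta 1}{\trunc\vsigma 1}{\vDelta'}$; composing this by transitivity with the given $\mtyequiv[\trunc\vGamma n]{\trunc\vsigma 1}{\vsigma'}{\vDelta'}$ yields $\mtyequiv[\trunc\vGamma n]{\trunc\vdelta 1}{\vsigma'}{\vDelta'}$, and the remaining requirement $\mglutms[\trunc\vGamma n]{\vsigma'}{\vDelta'}{\trunc\vrho 1}$ is already available. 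For $\vDelta = \vDelta';(\Delta, x : T)$ keep the witnesses $\vsigma'$ and $t$. Congruence for composition together with reflexivity of $\wk_x$ gives $\mtyequiv{\wk_x \circ \vdelta}{\wk_x \circ \vsigma}{\vDelta';\Delta}$, and transitivity with the given $\mtyequiv{\wk_x \circ \vsigma}{\vsigma'}{\vDelta';\Delta}$ gives $\mtyequiv{\wk_x \circ \vdelta}{\vsigma'}{\vDelta';\Delta}$. For the variable slot, substitution-application congruence gives $\mtyequiv{x[\vdelta]}{x[\vsigma]}{(T[\wk_x])[\vdelta]}$; the equivalences just obtained supply the conversion chain $(T[\wk_x])[\vdelta] \approx T[\wk_x \circ \vdelta] \approx T[\wk_x \circ \vsigma] \approx (T[\wk_x])[\vsigma] \approx T[\vsigma']$ (using the substitution-composition equivalence and congruence for $[\cdot]$ along $\wk_x \circ \vdelta \approx \wk_x \circ \vsigma$ and $\wk_x \circ \vsigma \approx \vsigma'$), so by the conversion rule $\mtyequiv{x[\vdelta]}{x[\vsigma]}{T[\vsigma']}$, and transitivity with the given $\mtyequiv{x[\vsigma]}{t}{T[\vsigma']}$ yields $\mtyequiv{x[\vdelta]}{t}{T[\vsigma']}$. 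The components $\intp{T}(\drop(\vrho,x)) \in \Uc$, $\mglutm{t}{T[\vsigma']}{\rho(x)}{\intp{T}(\drop(\vrho,x))}$, and $\mglutms{\vsigma'}{\vDelta';\Delta}{\drop(\vrho,x)}$ transfer unchanged.

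The main obstacle, to the extent there is one, is purely the type-level bookkeeping in the $x : T$ case: one must keep the universe levels aligned (appealing to cumulativity if the substitution-composition equivalences land at different levels) and chain together the conversions $(T[\wk_x])[\vdelta] \approx \cdots \approx T[\vsigma']$ using exactly the substitution-equivalence and substitution-application congruence rules, invoking the conversion rule at each link. It is worth noting that no use of the induction hypothesis is in fact required — every recursive component of $\mglutms{\vsigma}{\vDelta}{\vrho}$ is carried over verbatim — so casting the argument as induction on $\vDelta$ is merely a convenient organization of the case split, and the companion lemmas \Cref{lem:dt:gluty-resp-equiv} and \Cref{lem:dt:glutm-resp-equiv-tm} are not needed here, since the syntactic witnesses themselves never change.
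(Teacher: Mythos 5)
Your proposal is correct and follows essentially the same route as the paper: a case split/induction on $\vDelta$, using presupposition to obtain $\mtyping{\vdelta}{\vDelta}$, the lemmas that truncation offset and truncation respect substitution equivalence, and congruence, transitivity, and conversion to transfer the side conditions while reusing the same existential witnesses. Your observation that the induction hypothesis is never actually invoked is accurate and does not change the argument.
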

\begin{proof}
  We do induction on $\vDelta$. In each case, we apply presupposition and properties of
    $L$ and truncation.
\end{proof}

\begin{lemma}\labeledit{lem:dt:glue-stack-mon}
  If $\mglutms{\vsigma}{\vDelta}{\vrho}$ and $\vdelta : \vGamma' \To_r \vGamma$, then
  $\mglutms[\vGamma']{\vsigma \circ \vdelta}{\vDelta}{\vrho[\vdelta]}$.
\end{lemma}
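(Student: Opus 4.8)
The plan is to induct on $\vDelta$, following the same recipe as the simply typed \Cref{lem:glue-stack-mon'} but with extra care for the type components, which now carry semantic values. In each of the three clauses of the definition of $\mglutms{\cdot}{\cdot}{\cdot}$ I construct the existential witnesses for $\vsigma \circ \vdelta$ by composing the witnesses for $\vsigma$ with (appropriately truncated copies of) $\vdelta$, invoke \Cref{lem:dt:glu-mon} for the per-type and per-term pieces, and use the inductive hypothesis on the tail of the stack; wherever the witness I produce is only provably equivalent to, rather than literally, what the definition demands, I close the gap with \Cref{lem:dt:glue-stack-resp-equiv}, \Cref{lem:dt:glutm-resp-equiv-ty}, \Cref{lem:dt:glutm-resp-equiv-tm}, and the congruence and algebraic substitution rules.

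The base case $\vDelta = \epsilon;\cdot$ is immediate: a restricted weakening is a well-typed substitution, so $\mtyping[\vGamma']{\vsigma \circ \vdelta}{\epsilon;\cdot}$. For $\vDelta = \vDelta';\cdot$ I am given $\vsigma'$, $n$ with $\mtyequiv[\trunc\vGamma n]{\trunc\vsigma 1}{\vsigma'}{\vDelta'}$, $n = \Ltotal\vsigma 1 = \Ltotal\vrho 1$, and $\mglutms[\trunc\vGamma n]{\vsigma'}{\vDelta'}{\trunc\vrho 1}$; I take the new witnesses to be $\vsigma' \circ (\trunc\vdelta n)$ and $m := \Ltotal\vdelta n$. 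Then $\Ltotal{\vsigma \circ \vdelta}1 = m$ by \Cref{lem:L-comp} and $\Ltotal{\vrho[\vdelta]}1 = \Ltotal\vdelta{\Ltotal\vrho 1} = m$ by \Cref{lem:envs-L}; moreover $\trunc{(\vsigma\circ\vdelta)}1 = (\trunc\vsigma 1)\circ(\trunc\vdelta n)$ by \Cref{lem:trunc-comp} and $\trunc{\vrho[\vdelta]}1 = (\trunc\vrho 1)[\trunc\vdelta n]$ by \Cref{lem:envs-mon}. Since $\trunc\vdelta n : \trunc{\vGamma'}m \To_r \trunc\vGamma n$ is again restricted, composing the given equivalence with it and applying the inductive hypothesis to $\mglutms[\trunc\vGamma n]{\vsigma'}{\vDelta'}{\trunc\vrho 1}$ along this weakening yields exactly $\mglutms[\trunc{\vGamma'}m]{\vsigma'\circ(\trunc\vdelta n)}{\vDelta'}{(\trunc\vrho 1)[\trunc\vdelta n]}$, which is the required tail condition.

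The case $\vDelta = \vDelta'; (\Delta, x : T)$ carries the real content. With witnesses $\vsigma' \circ \vdelta$ and $t[\vdelta]$: associativity turns $\wk_x \circ (\vsigma \circ \vdelta)$ into $(\wk_x \circ \vsigma) \circ \vdelta$, so $\vdelta$-composing the equivalence $\wk_x \circ \vsigma \approx \vsigma'$ gives the first clause, and applying $\vdelta$ to $x[\vsigma] \approx t : T[\vsigma']$, then rewriting with the substitution-composition rule ($t[\vsigma][\vdelta] \approx t[\vsigma\circ\vdelta]$, $T[\vsigma'][\vdelta] \approx T[\vsigma'\circ\vdelta]$), gives the second. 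Applying \Cref{lem:dt:glu-mon} to $\mglutm{t}{T[\vsigma']}{\rho(x)}{\intp{T}(\drop(\vrho,x))}$ along $\vdelta$ yields $\mglutm[\vGamma']{t[\vdelta]}{T[\vsigma'\circ\vdelta]}{\rho(x)[\vdelta]}{\intp{T}(\drop(\vrho,x))[\vdelta]}$; I then rewrite $\intp{T}(\drop(\vrho,x))[\vdelta]$ as $\intp{T}(\drop(\vrho,x)[\vdelta]) = \intp{T}(\drop(\vrho[\vdelta],x))$ using the naturality equality of \Cref{conj:dt:intp-mon} together with the commutation $\drop(\vrho,x)[\vdelta] = \drop(\vrho[\vdelta],x)$ (immediate from the definitions of $\drop$ and of $(-)[\vdelta]$ on $\Envs$), and note $\rho(x)[\vdelta]$ is precisely the $x$-entry of the top local environment of $\vrho[\vdelta]$; membership $\intp{T}(\drop(\vrho[\vdelta],x)) \in \Uc$ then follows from \Cref{lem:dt:u-el-resp-mon}. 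Finally the inductive hypothesis applied to $\mglutms{\vsigma'}{\vDelta';\Delta}{\drop(\vrho,x)}$ along $\vdelta$, after the same $\drop$-commutation rewrite, supplies the tail condition. The main obstacle is not conceptual but bookkeeping: keeping the truncation offsets, truncated restricted weakenings, and the $\drop$-versus-$(-)[\vdelta]$ commutation consistent, and inserting the equivalence-respecting lemmas at every spot where the constructed witness differs from the named one only up to provable equivalence.
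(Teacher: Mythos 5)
Your proposal matches the paper's own proof essentially step for step: the same induction on $\vDelta$, the same choice of witnesses ($\vsigma'\circ(\trunc\vdelta n)$ with offset $\Ltotal\vdelta n$ in the $\vDelta';\cdot$ case, and $\vsigma'\circ\vdelta$, $t[\vdelta]$ in the binder case), and the same supporting lemmas (\Cref{lem:dt:glu-mon}, \Cref{lem:dt:glutm-resp-equiv-ty}, \Cref{lem:dt:u-el-resp-mon}, the truncation/offset distributivity facts, and \Cref{lem:dt:glue-stack-resp-equiv} for the up-to-equivalence bookkeeping). You are in fact slightly more explicit than the paper about the $\drop$-versus-weakening commutation and the naturality rewrite of $\intp{T}(\drop(\vrho,x))[\vdelta]$, which the paper uses silently; this is a correct and welcome clarification, not a deviation.
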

\begin{proof}
  Immediate by induction on $\vDelta$.
  \begin{itemize}[label=Case]
  \item $\vDelta = \epsilon; \cdot$, immediate.
  \item $\vDelta = \vDelta'; \cdot$, then
    \begin{align*}
      H_1: &\ \mtyequiv[\trunc \vGamma n]{\trunc \vsigma 1}{\vsigma'}{\vDelta}
             \tag{by assumption, for some $\vsigma'$ and $n$}\\
           &\ \Ltotal\vsigma 1 = \Ltotal\vrho 1 = n
             \tag{by assumption} \\
      H_2: &\ \mglutms[\trunc \vGamma n]{\vsigma'}{\vDelta'}{\trunc \vrho 1}
             \tag{by assumption} \\
           &\ \mtyequiv[\trunc{\vGamma'}{\Ltotal\vdelta n}]{\trunc \vsigma 1 \circ
             \trunc \vdelta n}{\vsigma' \circ \trunc \vdelta n}{\vDelta}
             \tag{by $H_1$} \\
           &\ \mtyequiv[\trunc{\vGamma'}{\Ltotal\vdelta n}]{\trunc{(\vsigma \circ
             \vdelta)}1}{\vsigma' \circ \trunc \vdelta n}{\vDelta} \\
           &\ \Ltotal{(\vsigma \circ \vdelta)}1 = \Ltotal\vdelta{\Ltotal\vsigma 1} = \Ltotal\vdelta
             {\Ltotal\vrho 1} = \Ltotal{\vrho[\vdelta]}1 = \Ltotal\vdelta n
    \end{align*}
    Thus we let the existentials in the goal to be $\vsigma \circ \trunc \vdelta n$ and
    $\Ltotal\vdelta n$.
    \begin{align*}
      &\ \mglutms[\trunc{\vGamma'}{\Ltotal\vdelta n}]{\vsigma' \circ
        \trunc \vdelta n}{\vDelta'}{\trunc \vrho 1[\trunc \vdelta n]}
        \byIH \\
      &\ \mglutms[\trunc{\vGamma'}{\Ltotal\vdelta n}]{\vsigma' \circ
        \trunc \vdelta n}{\vDelta'}{\trunc{\vrho[\vdelta]}1}
        \tag{by definition}  \\
      &\ \mglutms{\vdelta}{\vDelta}{\vrho}
        \tag{by abstraction}
    \end{align*}
    
  \item $\vDelta = \vDelta'; (\Delta, x : T)$, then
    \begin{align*}
      H_3: &\ \mtyequiv{\wk_x \circ \vsigma}{\vsigma'}{\vDelta; \Delta}
      \\
      H_4: &\ \mtyequiv{x[\vsigma]}{t}{T[\vsigma']}
             \tag{by assumption, for some $\vsigma'$ and $t$} \\
      H_5: &\ \intp{T}(\drop(\vrho,x)) \in \Uc
             \tag{by assumption} \\
      H_6: &\ \mglutm{t}{T[\vsigma']}{\rho(x)}{\intp{T}(\drop(\vrho,x))}
             \tag{by assumption} \\
      H_7: &\ \mglutms{\vsigma'}{\vDelta; \Delta}{\drop(\vrho, x)}
             \tag{by assumption}
    \end{align*}
    Notice that
    \begin{align*}
      &\ \mtyequiv[\vGamma']{\wk_x \circ \vsigma \circ \vdelta}{\vsigma' \circ
        \vdelta}{\vDelta; \Delta}
      \\
      &\ \mtyequiv[\vGamma']{x[\vsigma \circ \vdelta]}{t[\vdelta]}{T[\vsigma' \circ \vdelta]}
    \end{align*}
    so we let the existentials in the goal be $\vsigma' \circ \vdelta$ and
    $t[\vdelta]$.
    \begin{align*}
      &\ \intp{T}(\drop(\vrho[\vdelta],x)) \in \Uc
        \tag{by \Cref{lem:dt:u-el-resp-mon}} \\
      &\ \mglutm{t[\vdelta]}{T[\vsigma' \circ \vdelta]}{\rho[\vdelta](x)}{\intp{T}(\drop(\vrho[\vdelta],x))}
        \tag{by \Cref{lem:dt:glu-mon,lem:dt:glutm-resp-equiv-ty}} \\
      &\ \mglutms[\vGamma']{\vsigma' \circ \vdelta}{\vDelta;
        \Delta}{\drop(\vrho[\vdelta], x)}
        \byIH \\
      &\ \mglutms{\vdelta}{\vDelta}{\vrho}
        \tag{by abstraction}
    \end{align*}
  \end{itemize}
\end{proof}

Finally, we can define the semantic judgment:
\begin{definition}
  \begin{align*}
    \mSemtyp t T &:= \forall \mglutms[\vDelta]{\vsigma}{\vGamma}{\vrho}.\
    \mglutm[\vDelta]{t[\vsigma]}{T[\vsigma]}{\intp{t}(\vrho)}{\intp{T}(\vrho)} \\
    \mSemtyp{\vdelta}{\vGamma'} &:= \forall
                                  \mglutms[\vDelta]{\vsigma}{\vGamma}{\vrho}.\
                                  \mglutms[\vDelta]{\vdelta \circ \vsigma}{\vGamma'}{\intp{\vdelta}(\vrho)}
  \end{align*}
\end{definition}

\subsection{Semantic Judgments}

\begin{lemma}
  \begin{mathpar}
    \inferrule
    {\mSemtyp{t}{T} \\ \mtyequiv{T}{T'}{\Se_i}}
    {\mSemtyp{t}{T'}}
  \end{mathpar}
\end{lemma}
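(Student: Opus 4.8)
The plan is to unfold the semantic judgment and transport the hypothesis along both the syntactic equivalence $\mtyequiv{T}{T'}{\Se_i}$ and its semantic shadow in the PER model. Fix an arbitrary gluing substitution $\mglutms[\vDelta]{\vsigma}{\vGamma}{\vrho}$; the goal is $\mglutm[\vDelta]{t[\vsigma]}{T'[\vsigma]}{\intp{t}(\vrho)}{\intp{T'}(\vrho)}$, while the hypothesis $\mSemtyp t T$, applied to this very $\vsigma$, supplies $\mglutm[\vDelta]{t[\vsigma]}{T[\vsigma]}{\intp{t}(\vrho)}{\intp{T}(\vrho)}$. So the whole proof reduces to replacing $T[\vsigma]$ by $T'[\vsigma]$ in the syntactic slot and $\intp{T}(\vrho)$ by $\intp{T'}(\vrho)$ in the semantic slot, then abstracting over $\vsigma\sim\vrho$.

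First I would produce the syntactic bridge $\mtyequiv[\vDelta]{T[\vsigma]}{T'[\vsigma]}{\Se_i}$: from $\mglutms[\vDelta]{\vsigma}{\vGamma}{\vrho}$ we get $\mtyping[\vDelta]{\vsigma}{\vGamma}$, hence $\mtyequiv[\vDelta]{\vsigma}{\vsigma}{\vGamma}$ by reflexivity, and the congruence rule for substitution applied to $\mtyequiv{T}{T'}{\Se_i}$ gives $\mtyequiv[\vDelta]{T[\vsigma]}{T'[\vsigma]}{\Se_i[\vsigma]}$; since $\Se_i[\vsigma]\approx\Se_i$, conversion yields the claim. Next I would produce the semantic bridge $\intp{T}(\vrho)\approx\intp{T'}(\vrho)\in\Uc$: by \Cref{lem:dt:glu-stack-vrho}, $\mglutms[\vDelta]{\vsigma}{\vGamma}{\vrho}$ implies $\vrho\in\intp{\vGamma}$ in the PER sense, and the fundamental theorem of the PER model (\Cref{thm:dt:per-fund}) turns $\mtyequiv{T}{T'}{\Se_i}$ into $\msemtyeq{T}{T'}{\Se_i}$; instantiating the latter at $\vrho\approx\vrho\in\intp{\vGamma}$ gives exactly $\intp{T}(\vrho)\approx\intp{T'}(\vrho)\in\El(\Ud_i)=\Uc_i\subseteq\Uc$.

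Then I would chain the two transport lemmas already proved for the gluing model. Term type-conversion (\Cref{lem:dt:glutm-resp-equiv-ty}) with the syntactic bridge rewrites the hypothesis to $\mglutm[\vDelta]{t[\vsigma]}{T'[\vsigma]}{\intp{t}(\vrho)}{\intp{T}(\vrho)}$, and $\Uc$-conversion (\Cref{lem:dt:glu-resp-u}) with the semantic bridge rewrites the semantic type to $\mglutm[\vDelta]{t[\vsigma]}{T'[\vsigma]}{\intp{t}(\vrho)}{\intp{T'}(\vrho)}$, which is the goal. If the $\mgluty$ component is needed separately, it is handled by the same pair of moves: \Cref{lem:dt:inv-glu-tm} extracts $\mgluty{T[\vsigma]}{\intp{T}(\vrho)}$, then \Cref{lem:dt:gluty-resp-equiv} and the type-level half of \Cref{lem:dt:glu-resp-u} relocate it to $\mgluty{T'[\vsigma]}{\intp{T'}(\vrho)}$.

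The only point needing care — and the closest thing to an obstacle — is universe-level bookkeeping: the semantic judgment uses the limit forms $\mgluty T A$ and $\mglutm t T a A$, which hide the level, whereas \Cref{lem:dt:glutm-resp-equiv-ty} and \Cref{lem:dt:glu-resp-u} are stated at a fixed $\Uc_i$. I would discharge this by observing that the limit hypothesis holds at some concrete level $j$, lifting $\mtyequiv{T}{T'}{\Se_i}$ to $\mtyequiv{T}{T'}{\Se_{\max(i,j)}}$ by \Cref{lem:dt:eq-cumu}, and invoking the two transport lemmas at level $\max(i,j)$ before returning to the limit. Beyond this, the lemma is a direct composition of facts already established, with no genuinely hard step.
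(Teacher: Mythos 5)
Your proposal is correct and follows essentially the same route as the paper: apply the hypothesis at the given $\vsigma\sim\vrho$, transport the syntactic type via the congruence-derived $\mtyequiv[\vDelta]{T[\vsigma]}{T'[\vsigma]}{\Se_i}$ using \Cref{lem:dt:glutm-resp-equiv-ty}, then obtain $\intp{T}(\vrho)\approx\intp{T'}(\vrho)\in\Uc$ from \Cref{thm:dt:per-fund} together with \Cref{lem:dt:glu-stack-vrho} and finish with \Cref{lem:dt:glu-resp-u}. Your extra remarks on the $\Se_i[\vsigma]$ conversion and the universe-level/limit bookkeeping are details the paper leaves implicit, but they do not change the argument.
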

\begin{proof}
  \begin{align*}
    H_1: &\ \mSemtyp{t}{T}
           \tag{by assumption} \\
    H_2: &\ \mtyequiv{T}{T'}{\Se_i}
           \tag{by assumption} \\
    H_3: &\ \mglutms[\vDelta]{\vsigma}{\vGamma}{\vrho}
           \tag{by assumption} \\
    H_4: &\ \mglutm[\vDelta]{t[\vsigma]}{T[\vsigma]}{\intp{t}(\vrho)}{\intp{T}(\vrho)}
           \tag{by $H_1$ and $H_3$}\\
         &\ \mtyequiv[\vDelta]{T[\vsigma]}{T'[\vsigma]}{\Se_i}
    \\
    H_5: &\ \mglutm[\vDelta]{t[\vsigma]}{T'[\vsigma]}{\intp{t}(\vrho)}{\intp{T}(\vrho)}
           \tag{by \Cref{lem:dt:glutm-resp-equiv-ty}} \\
    H_6: &\ \msemtyeq{T}{T'}{\Se_i}
           \tag{by \Cref{thm:dt:per-fund}} \\
         &\ \vrho \in \intp{\vGamma}
           \tag{by \Cref{lem:dt:glu-stack-vrho}}\\
         &\ \intp{T}(\vrho) \approx \intp{T'}(\vrho) \in \Uc_i
           \tag{by $H_6$} \\
         &\ \mglutm[\vDelta]{t[\vsigma]}{T'[\vsigma]}{\intp{t}(\vrho)}{\intp{T'}(\vrho)}
           \tag{by $H_5$ and \Cref{lem:dt:glu-resp-u}}
  \end{align*}
\end{proof}

\begin{lemma}
  \begin{mathpar}
    \inferrule
    {\mSemtyp{\vsigma}{\vDelta} \\ \vdash \vDelta \approx \vDelta'}
    {\mSemtyp{\vsigma}{\vDelta'}}
  \end{mathpar}
\end{lemma}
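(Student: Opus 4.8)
The statement is an immediate consequence of the gluing-level context-stack conversion lemma (\Cref{lem:dt:glue-stack-conv}) already established above. The plan is to simply unfold the definition of the semantic substitution judgment, apply the hypothesis, and then convert the target context stack.

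Concretely, I would proceed as follows. Unfolding $\mSemtyp{\vsigma}{\vDelta'}$, I fix an arbitrary gluing substitution $\mglutms[\vPhi]{\vgamma}{\vGamma}{\vrho}$ and must show $\mglutms[\vPhi]{\vsigma \circ \vgamma}{\vDelta'}{\intp{\vsigma}(\vrho)}$. From the hypothesis $\mSemtyp{\vsigma}{\vDelta}$ applied to this very gluing substitution, I obtain $\mglutms[\vPhi]{\vsigma \circ \vgamma}{\vDelta}{\intp{\vsigma}(\vrho)}$. Now, since $\vdash \vDelta \approx \vDelta'$ is among the premises, \Cref{lem:dt:glue-stack-conv} applies directly and yields $\mglutms[\vPhi]{\vsigma \circ \vgamma}{\vDelta'}{\intp{\vsigma}(\vrho)}$, which is exactly the goal. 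Abstracting over the chosen gluing substitution gives $\mSemtyp{\vsigma}{\vDelta'}$.

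There is essentially no obstacle here: all the work has been delegated to \Cref{lem:dt:glue-stack-conv}, whose proof in turn relies on the fundamental theorem of the PER model (\Cref{thm:dt:per-fund}) together with \Cref{lem:dt:intpvg-resp-eq} and \Cref{lem:dt:glu-resp-u} to keep the evaluation environment in the interpretation of the converted context stack and to transport the typed-value gluing relations across $\Uc$-equivalent semantic types. The only point worth flagging is that one should make sure the quantifier structure lines up — the semantic judgment quantifies over gluing substitutions into the source stack $\vGamma$ (unchanged here), so no manipulation of the domain side is needed; only the codomain changes, and that is precisely what the conversion lemma handles.
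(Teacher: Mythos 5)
Your proposal is correct and follows essentially the same route as the paper's own proof, which likewise unfolds the semantic judgment, applies the hypothesis to the given gluing substitution into the (unchanged) domain stack, and then invokes \Cref{lem:dt:glue-stack-conv} to transport the resulting gluing along $\vdash \vDelta \approx \vDelta'$. Your side remarks about \Cref{lem:dt:glue-stack-conv} resting on \Cref{thm:dt:per-fund}, \Cref{lem:dt:intpvg-resp-eq} and \Cref{lem:dt:glu-resp-u} are also accurate.
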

\begin{proof}
  Similar to the previous lemma, bu apply \Cref{lem:dt:glue-stack-conv} instead.
\end{proof}

\begin{lemma}
  \begin{mathpar}
    \inferrule
    {\mSemtyp[\vDelta]t T \\ \mSemtyp{\vsigma}{\vDelta}}
    {\mSemtyp{t[\vsigma]}{T[\vsigma]}}
  \end{mathpar}
\end{lemma}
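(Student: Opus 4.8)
The plan is to unfold the semantic judgment and thread the two hypotheses through it. Write $H_1$ for $\mSemtyp[\vDelta]{t}{T}$ and $H_2$ for $\mSemtyp{\vsigma}{\vDelta}$. To prove $\mSemtyp{t[\vsigma]}{T[\vsigma]}$ I fix an arbitrary $\mglutms[\vDelta_0]{\vdelta}{\vGamma}{\vrho}$ and must exhibit
\begin{align*}
  \mglutm[\vDelta_0]{(t[\vsigma])[\vdelta]}{(T[\vsigma])[\vdelta]}{\intp{t[\vsigma]}(\vrho)}{\intp{T[\vsigma]}(\vrho)}.
\end{align*}
Feeding $\mglutms[\vDelta_0]{\vdelta}{\vGamma}{\vrho}$ to $H_2$ gives, by the definition of semantic substitution typing, $\mglutms[\vDelta_0]{\vsigma \circ \vdelta}{\vDelta}{\intp{\vsigma}(\vrho)}$; feeding \emph{that} to $H_1$ gives
\begin{align*}
  \mglutm[\vDelta_0]{t[\vsigma \circ \vdelta]}{T[\vsigma \circ \vdelta]}{\intp{t}(\intp{\vsigma}(\vrho))}{\intp{T}(\intp{\vsigma}(\vrho))}.
\end{align*}

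The key observation is that the semantic components already agree with the goal on the nose: by the defining clause of the evaluation function on explicit substitutions, $\intp{t[\vsigma]}(\vrho) = \intp{t}(\intp{\vsigma}(\vrho))$ and $\intp{T[\vsigma]}(\vrho) = \intp{T}(\intp{\vsigma}(\vrho))$, so the semantic type in the relation is literally the same one required and there is nothing to do on the semantic side. What remains is purely syntactic: the composition-equivalence rule for substitutions supplies $\mtyequiv[\vDelta_0]{t[\vsigma \circ \vdelta]}{t[\vsigma][\vdelta]}{T[\vsigma \circ \vdelta]}$ and $\mtyequiv[\vDelta_0]{T[\vsigma \circ \vdelta]}{T[\vsigma][\vdelta]}{\Se_i}$. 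Applying \Cref{lem:dt:glutm-resp-equiv-tm} with the first rewrites the subject of the gluing relation to $(t[\vsigma])[\vdelta]$, and then \Cref{lem:dt:glutm-resp-equiv-ty} with the second rewrites its type to $(T[\vsigma])[\vdelta]$, producing exactly the required statement; I conclude by abstraction over $\vdelta$ and $\vrho$.

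I expect the only real friction to be discharging the syntactic side conditions of the composition-equivalence rule, namely $\mtyping[\vGamma]{\vsigma}{\vDelta}$, $\mtyping[\vDelta_0]{\vdelta}{\vGamma}$ and $\mtyping[\vDelta]{t}{T}$: the second holds because a gluing substitution is in particular well typed, the first follows by presupposition (inversion on the well-typed composition $\vsigma \circ \vdelta$ obtained above), and $\mtyping[\vDelta]{t}{T}$ is recovered either by \Cref{lem:dt:inv-glu-tm} applied after instantiating $H_1$ at the identity gluing substitution, or again by presupposition on $\mtyping[\vDelta_0]{t[\vsigma \circ \vdelta]}{T[\vsigma \circ \vdelta]}$. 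Apart from this bookkeeping the argument is mechanical; it is the dependently typed counterpart of the simply typed composition rule, whose proof is simply ``immediate'', with the extra $T[\vsigma]$ on the type absorbed by the type-conversion lemma of the gluing model.
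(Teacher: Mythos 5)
Your proposal matches the paper's proof: instantiate the gluing hypothesis, push it through $\mSemtyp{\vsigma}{\vDelta}$ and then $\mSemtyp[\vDelta]{t}{T}$ to glue $t[\vsigma\circ\vdelta]$ at $T[\vsigma\circ\vdelta]$, observe the semantic side is unchanged by the definition of $\intp{t[\vsigma]}$, and rewrite the syntactic side with \Cref{lem:dt:glutm-resp-equiv-ty,lem:dt:glutm-resp-equiv-tm}. The extra bookkeeping about presuppositions you mention is harmless and does not change the argument.
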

\begin{proof}
  \begin{align*}
    H_1: &\ \mSemtyp[\vDelta]t T
           \tag{by assumption} \\
    H_2: &\ \mSemtyp{\vsigma}{\vDelta}
           \tag{by assumption} \\
    H_3: &\ \mglutms[\vDelta']{\vdelta}{\vGamma}{\vrho}
           \tag{by assumption} \\
         &\ \mglutms[\vDelta']{\vsigma \circ \vdelta}{\vDelta}{\intp{\vsigma}(\vrho)}
           \tag{by $H_3$} \\
         &\ \mglutm[\vDelta]{t[\vsigma \circ \vdelta]}{T[\vsigma \circ \vdelta]}{\intp{t}(\intp{\vsigma}(\vrho))}{\intp{T}(\intp{\vsigma}(\vrho))}
           \tag{by $H_1$}\\
         &\
           \mglutm[\vDelta]{t[\vsigma][\vdelta]}{T[\vsigma][\vdelta]}{\intp{t}(\intp{\vsigma}(\vrho))}{\intp{T}(\intp{\vsigma}(\vrho))}
           \tag{by \Cref{lem:dt:glutm-resp-equiv-ty,lem:dt:glutm-resp-equiv-tm}}
  \end{align*}
\end{proof}

The following lemmas are immediate by following definitions.
\begin{lemma}
  \begin{mathpar}
    \inferrule
    {\vdash \vGamma}
    {\mSemtyp{\vect I}{\vGamma}}

    \inferrule
    {\mSemtyp{\vsigma}{\vGamma'; \Gamma} \\ \mSemtyp[\vGamma';
      \Gamma]{T}{\Se_i} \\ \mSemtyp{t}{T[\vsigma]}}
    {\mSemtyp{\vsigma, t/x}{\vGamma';(\Gamma, x : T)}}

    \inferrule
    {\vdash \vGamma'; (\Gamma, x : T)}
    {\mSemtyp[\vGamma'; (\Gamma, x : T)]{\wk_x}{\vGamma';\Gamma}}
    
    \inferrule
    {\mSemtyp[\vGamma']{\vsigma}{\vGamma''} \\ \mSemtyp{\vdelta}{\vGamma'}}
    {\mSemtyp{\vsigma \circ \vdelta}{\vGamma''}}
  \end{mathpar}
\end{lemma}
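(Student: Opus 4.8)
The plan is to unfold the definition of $\mSemtyp{\vsigma}{\vDelta}$, which requires that for every gluing substitution $\mglutms[\vDelta']{\vgamma}{\vGamma}{\vrho}$ we have $\mglutms[\vDelta']{\vsigma \circ \vgamma}{\vDelta}{\intp{\vsigma}(\vrho)}$, and to verify each of the four rules by instantiating the premises at $\vgamma$ and then reconciling the outcome with the required conclusion using the algebraic equivalences for substitutions (associativity, $\wk_x \circ (\vsigma,t/x) \approx \vsigma$, $x[\vsigma,t/x] \approx t$, $(\vsigma,t/x)\circ\vdelta \approx (\vsigma\circ\vdelta), t[\vdelta]/x$, and $(\cdot)[\vsigma\circ\vdelta] \approx (\cdot)[\vsigma][\vdelta]$), the functoriality of $\intp{-}$ on evaluation environments, and \Cref{lem:dt:glue-stack-resp-equiv} to swap an equivalent substitution into a gluing judgment. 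Three of the four cases are pure bookkeeping; the case $\vsigma, t/x$ carries all the content.

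For $\vect I$, from $\mglutms[\vDelta']{\vgamma}{\vGamma}{\vrho}$ I would observe $\intp{\vect I}(\vrho) = \vrho$ and $\mtyequiv[\vDelta']{\vect I \circ \vgamma}{\vgamma}{\vGamma}$ and close by \Cref{lem:dt:glue-stack-resp-equiv}. For $\wk_x$, unfolding $\mglutms[\vDelta']{\vgamma}{\vGamma'; (\Gamma, x : T)}{\vrho}$ directly yields a $\vgamma'$ with $\mtyequiv[\vDelta']{\wk_x \circ \vgamma}{\vgamma'}{\vGamma'; \Gamma}$ and $\mglutms[\vDelta']{\vgamma'}{\vGamma'; \Gamma}{\drop(\vrho, x)}$, and since $\intp{\wk_x}(\vrho) = \drop(\vrho, x)$, \Cref{lem:dt:glue-stack-resp-equiv} again finishes. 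For $\vsigma \circ \vdelta$, I apply the semantic typing of $\vdelta$ to $\vgamma$, then the semantic typing of $\vsigma$ to $\vdelta \circ \vgamma$, obtaining $\mglutms[\vDelta']{\vsigma \circ (\vdelta \circ \vgamma)}{\vGamma''}{\intp{\vsigma}(\intp{\vdelta}(\vrho))}$, and rewrite by associativity and $\intp{\vsigma\circ\vdelta}(\vrho) = \intp{\vsigma}(\intp{\vdelta}(\vrho))$.

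The case $\vsigma, t/x$ is the one to get right. Let $\vrho' := \intp{\vsigma, t/x}(\vrho) = \ext(\intp{\vsigma}(\vrho), x, \intp{t}(\vrho))$, so $\drop(\vrho', x) = \intp{\vsigma}(\vrho)$ and the top environment of $\vrho'$ sends $x$ to $\intp{t}(\vrho)$. I would supply the data demanded by $\mglutms[\vDelta']{(\vsigma, t/x) \circ \vgamma}{\vGamma'; (\Gamma, x : T)}{\vrho'}$ with witnesses $\vsigma' := \vsigma \circ \vgamma$ and $t' := t[\vgamma]$: the tail gluing $\mglutms[\vDelta']{\vsigma \circ \vgamma}{\vGamma'; \Gamma}{\drop(\vrho', x)}$ is exactly the semantic typing of $\vsigma$ at $\vgamma$; the equivalence $\mtyequiv[\vDelta']{\wk_x \circ ((\vsigma, t/x) \circ \vgamma)}{\vsigma \circ \vgamma}{\vGamma'; \Gamma}$ is associativity plus $\wk_x \circ (\vsigma, t/x) \approx \vsigma$; the term obligation comes from the semantic typing of $t$ at $\vgamma$, which gives $\mglutm[\vDelta']{t[\vgamma]}{T[\vsigma][\vgamma]}{\intp{t}(\vrho)}{\intp{T[\vsigma]}(\vrho)}$, after which I use $\intp{T[\vsigma]}(\vrho) = \intp{T}(\intp{\vsigma}(\vrho)) = \intp{T}(\drop(\vrho', x))$, retype against $T[\vsigma\circ\vgamma]$ by \Cref{lem:dt:glutm-resp-equiv-ty}, and note $\mtyequiv[\vDelta']{x[(\vsigma, t/x) \circ \vgamma]}{t[\vgamma]}{T[\vsigma\circ\vgamma]}$ by the composition and variable-substitution laws; and $\intp{T}(\drop(\vrho', x)) \in \Uc$ follows from the semantic typing of $T$ together with $\intp{\vsigma}(\vrho) \in \intp{\vGamma'; \Gamma}$ (\Cref{lem:dt:glu-stack-vrho}). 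A final appeal to \Cref{lem:dt:glue-stack-resp-equiv} replaces the chosen witnesses by $(\vsigma, t/x) \circ \vgamma$ itself.

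The main obstacle I expect is purely notational: in the $\vsigma, t/x$ case, keeping the substitution-algebra rewrites in lockstep with the $\intp{-}$-functoriality rewrites, so that the semantic type $\intp{T}(\drop(\vrho', x))$ matches the $\intp{T[\vsigma]}(\vrho)$ produced by the semantic typing of $t$, and the syntactic type $T[\vsigma']$ matches the $T[\vsigma][\vgamma]$ it comes with. Once the witnesses $\vsigma' := \vsigma \circ \vgamma$ and $t' := t[\vgamma]$ are chosen, no new ideas are needed — every remaining step is an instance of an equivalence-respecting lemma or a substitution equation already established for \mintslang.
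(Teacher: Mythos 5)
Your proof is correct and takes essentially the approach the paper intends: the paper's own "proof" is the one-line remark that these rules are immediate by unfolding the definitions, and your verification — identity, weakening and composition by the categorical equivalences plus \Cref{lem:dt:glue-stack-resp-equiv}, and the extension case with witnesses $\vsigma\circ\vgamma$ and $t[\vgamma]$, reconciled via $\intp{T[\vsigma]}(\vrho)=\intp{T}(\intp{\vsigma}(\vrho))$ and \Cref{lem:dt:glutm-resp-equiv-ty} — is exactly that bookkeeping spelled out. No substantive deviation or gap.
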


\begin{lemma}
  \begin{mathpar}
    \inferrule
    {\mSemtyp{\vsigma}{\vDelta} \\ |\vGamma'| = n}
    {\mSemtyp[\vGamma; \vGamma']{\sextt\vsigma n}{\vDelta; \cdot}}
  \end{mathpar}
\end{lemma}
\begin{proof}
  \begin{align*}
    H_1: &\ \mSemtyp{\vsigma}{\vDelta}
           \tag{by assumption} \\
    H_2: &\ \mglutms[\vDelta']{\vdelta}{\vGamma; \vGamma'}{\vrho}
           \tag{by assumption} \\
         &\ \mglutms[\trunc{\vDelta'}{\Ltotal\vdelta n}]{\trunc \vdelta n}{\vGamma}{\trunc \vrho n}
           \tag{by \Cref{lem:dt:glue-trunc}} \\
         &\ \mglutms[\trunc{\vDelta'}{\Ltotal\vdelta n}]{\vsigma \circ
           \trunc \vdelta n}{\vDelta}{\intp{\vsigma}(\trunc \vrho n)}
           \tag{by $H_1$} \\
         &\ \mglutms[\vDelta']{\sextt{\vsigma \circ
           \trunc \vdelta n}{\Ltotal\vdelta n}}{\vDelta;
           \cdot}{\ext(\intp{\vsigma}(\trunc \vrho n), \Ltotal\vdelta n)}
           \tag{by definition} \\
         &\ \mglutms[\vDelta']{(\sextt\vsigma n) \circ
           \vdelta}{\vDelta;
           \cdot}{\ext(\intp{\vsigma}(\trunc \vrho n), \Ltotal\vdelta n)}
           \tag{by \Cref{lem:dt:glue-stack-resp-equiv}}
  \end{align*}
\end{proof}

\begin{lemma}
  \begin{mathpar}
    \inferrule
    {x : T \in \vGamma; \Gamma}
    {\mSemtyp[\vGamma; \Gamma]x T}
  \end{mathpar}
\end{lemma}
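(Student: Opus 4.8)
The plan is to unfold the semantic judgment and proceed by induction on the derivation of $x : T \in \vGamma; \Gamma$, which peels the topmost context one binding at a time and mirrors exactly the way $\mglutms{\_}{\vGamma;\Gamma}{\_}$ is built up by repeated $\drop$. So fix $\mglutms[\vDelta]{\vsigma}{\vGamma; \Gamma}{\vrho}$, write $(k, \rho) := \vrho(0)$ so that $\intp{x}(\vrho) = \rho(x)$, and the goal is to establish $\mglutm[\vDelta]{x[\vsigma]}{T[\vsigma]}{\rho(x)}{\intp{T}(\vrho)}$.

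In the base case $\Gamma = \Gamma_0, x : T_0$ and $T = T_0[\wk_x]$. Here the definition of $\mglutms$ at a context ending in $x : T_0$ directly provides some $\vsigma', t$ with $\mtyequiv[\vDelta]{x[\vsigma]}{t}{T_0[\vsigma']}$, $\intp{T_0}(\drop(\vrho,x)) \in \Uc$, and $\mglutm[\vDelta]{t}{T_0[\vsigma']}{\rho(x)}{\intp{T_0}(\drop(\vrho,x))}$. Now I would observe that the decorations line up definitionally and up to equivalence: $\intp{T}(\vrho) = \intp{T_0[\wk_x]}(\vrho) = \intp{T_0}(\intp{\wk_x}(\vrho)) = \intp{T_0}(\drop(\vrho,x))$ by the interpretation of substitutions, while $T[\vsigma] = T_0[\wk_x][\vsigma]$ is equivalent (composition law) to $T_0[\wk_x \circ \vsigma]$ and hence to $T_0[\vsigma']$. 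Thus the base case follows from the glued data by term conversion (\Cref{lem:dt:glutm-resp-equiv-tm}, using the equivalence $x[\vsigma] \approx t$) followed by type conversion (\Cref{lem:dt:glutm-resp-equiv-ty}).

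In the inductive case $\Gamma = \Gamma_0, y : T'$ with $y \neq x$, $x : T_1 \in \vGamma; \Gamma_0$ and $T = T_1[\wk_y]$. The definition of $\mglutms$ hands me $\vsigma'$ with $\mtyequiv[\vDelta]{\wk_y \circ \vsigma}{\vsigma'}{\vGamma; \Gamma_0}$ and $\mglutms[\vDelta]{\vsigma'}{\vGamma; \Gamma_0}{\drop(\vrho, y)}$. Applying the induction hypothesis to $x : T_1 \in \vGamma; \Gamma_0$ against this gluing yields $\mglutm[\vDelta]{x[\vsigma']}{T_1[\vsigma']}{\intp{x}(\drop(\vrho,y))}{\intp{T_1}(\drop(\vrho,y))}$. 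Since $x \neq y$, $\intp{x}(\drop(\vrho,y)) = \rho(x) = \intp{x}(\vrho)$, and $\intp{T_1}(\drop(\vrho,y)) = \intp{T_1[\wk_y]}(\vrho) = \intp{T}(\vrho)$; moreover the composition law and the weakening rules give $x[\vsigma'] \approx (x[\wk_y])[\vsigma]$ and $T_1[\vsigma'] \approx (T_1[\wk_y])[\vsigma] = T[\vsigma]$. Another application of \Cref{lem:dt:glutm-resp-equiv-tm,lem:dt:glutm-resp-equiv-ty} closes this case, and the two remaining cases where $\Gamma = \cdot$ are vacuous.

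I do not expect a genuine obstacle here; the work is entirely bookkeeping. The one point that needs care is keeping the $\wk_x$/$\wk_y$ annotations that the binding judgment attaches to $T$ in sync on three fronts at once — syntactically ($T[\vsigma]$ versus $T_0[\vsigma']$), semantically (matching $\intp{\wk_x}$ and $\intp{\wk_y}$ with $\drop$ so the domain types coincide), and making sure every gluing statement stays well-formed, i.e. that the relevant semantic type lies in $\Uc$ (available from the $\mglutms$ data, or via \Cref{lem:dt:glu-stack-vrho} and \Cref{thm:dt:per-fund}). Once those are aligned, the conversion lemmas for the gluing model finish everything.
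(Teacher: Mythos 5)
Your proposal is correct and follows essentially the same route as the paper: induction on the derivation of $x : T \in \vGamma; \Gamma$, inverting the gluing of substitutions $\mglutms[\vDelta]{\vsigma}{\vGamma;\Gamma}{\vrho}$ to extract the glued datum for $x$ (base case) or the glued tail for the induction hypothesis (step case), and then realigning the $\wk_x$/$\wk_y$ annotations with $\drop$ via \Cref{lem:dt:glutm-resp-equiv-ty,lem:dt:glutm-resp-equiv-tm}. The extra care you flag about keeping $T[\vsigma]$, $\intp{T}(\vrho)$, and well-formedness in $\Uc$ in sync is exactly the bookkeeping the paper's proof performs implicitly.
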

\begin{proof}
  \begin{align*}
    H_1: &\ x : T \in \vGamma; \Gamma
           \tag{by assumption} \\
    H_2: &\ \mglutms[\vDelta]{\vsigma}{\vGamma; \Gamma}{\vrho}
           \tag{by assumption}
  \end{align*}
  We proceed by induction on $H_1$.
  \begin{itemize}[label=Case]
  \item $T = T'[\wk_x]$ and $\Gamma = \Gamma', x : T'$, 
    \begin{mathpar}
      \inferrule
      { }
      {x : T'[\wk_x] \in \vGamma; (\Gamma', x : T')}
    \end{mathpar}
    Inverting $H_2$, we have
    \begin{align*}
      & \mtyequiv[\vDelta]{\wk_x \circ \vsigma}{\vsigma'}{\vGamma; \Gamma'}
        \tag{for some $\vsigma'$} \\
      & \mtyequiv[\vDelta]{x[\vsigma]}{t}{T'[\vsigma']}
        \tag{ for some  and $t$}\\
      & \mglutm[\vDelta]{t}{T'[\vsigma']}{\rho(x)}{\intp{T'}(\drop(\vrho, x))}
        \tag{where $(\_, \rho) := \vrho(0)$}
    \end{align*}
    We can conclude
    \begin{align*}
      \mglutm[\vDelta]{x[\vsigma]}{T'[\wk_x][\vsigma]}{\rho(x)}{\intp{T'[\wk_x]}(\vrho)}
    \end{align*}
    by \Cref{lem:dt:glutm-resp-equiv-ty,lem:dt:glutm-resp-equiv-tm}.
    
  \item $T = T'[\wk_x]$ and $\Gamma = \Gamma', s/x$, 
    \begin{mathpar}
      \inferrule
      {x : T' \in \vGamma; \Gamma'}
      {x : T'[\wk_y] \in \vGamma; \Gamma', y : S}
    \end{mathpar}
    Inverting $H_2$, we have
    \begin{align*}
      & \mtyequiv[\vDelta]{\wk_y \circ \vsigma}{\vsigma'}{\vGamma; \Gamma'}
        \tag{for some $\vsigma'$} \\
      & \mglutms[\vDelta]{\vsigma'}{\vGamma; \Gamma'}{\drop(\vrho, y)}
    \end{align*}
    By IH, we have
    \begin{align*}
      \mglutm[\vDelta]{x[\vsigma']}{T'[\vsigma']}{\rho(x)}{\intp{T'}(\drop(\vrho, y))}
    \end{align*}
    and then
    \begin{align*}
      \mglutm[\vDelta]{x[\wk_y][\vsigma]}{T'[\wk_y][\vsigma]}{\rho(x)}{\intp{T'[\wk_y]}(\vrho)}
    \end{align*}
    by \Cref{lem:dt:glutm-resp-equiv-ty,lem:dt:glutm-resp-equiv-tm}.
  \end{itemize}
\end{proof}

\subsubsection{$\Pi$ Types}

\begin{lemma}
  \begin{mathpar}
    \inferrule
    {\mSemtyp[\vGamma; \Gamma]{S}{\Se_i} \\ \mSemtyp[\vGamma; (\Gamma, x : S)]{T}{\Se_i}}
    {\mSemtyp[\vGamma; \Gamma]{\Pi(x : S). T}{\Se_{i}}}
  \end{mathpar}
\end{lemma}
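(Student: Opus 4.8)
The plan is to unfold the gluing-semantic judgment down to the definition of the gluing relation at a universe value $\Ud_i$ and at a $Pi$ value, and then discharge the resulting obligations by feeding the two premises suitably chosen gluing substitutions. First I would fix an arbitrary $\mglutms[\vDelta]{\vsigma}{\vGamma; \Gamma}{\vrho}$ and record, via \Cref{lem:dt:glu-stack-vrho}, that $\vrho \in \intp{\vGamma; \Gamma}$ and (by presupposition) $\mtyping[\vDelta]{\vsigma}{\vGamma; \Gamma}$, $\vdash \vDelta$. Applying the premise $\mSemtyp[\vGamma; \Gamma]{S}{\Se_{i}}$ to this substitution gives $\mglutm[\vDelta]{S[\vsigma]}{\Se_i[\vsigma]}{\intp{S}(\vrho)}{\Ud_i}$; unfolding the $\Ud_i$-clause (and using $\Se_i[\vsigma] \approx \Se_i$) I extract $\mtyping[\vDelta]{S[\vsigma]}{\Se_i}$, $\intp{S}(\vrho) \in \Uc_i$, and $\mgluty[\vDelta]{S[\vsigma]}[i]{\intp{S}(\vrho)}$. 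Since $\intp{\Pi(x : S). T}(\vrho) = Pi(\intp{S}(\vrho), x.T, \vrho)$, the target $\mglutm[\vDelta]{(\Pi(x : S).T)[\vsigma]}{\Se_i[\vsigma]}{Pi(\intp{S}(\vrho), x.T, \vrho)}{\Ud_i}$ again falls under the $\Ud_i$-clause, so it suffices to establish (a) $\mtyping[\vDelta]{(\Pi(x : S).T)[\vsigma]}{\Se_i}$, (b) $\mtyequiv[\vDelta]{\Se_i[\vsigma]}{\Se_i}{\Se_{i+1}}$, (c) $Pi(\intp{S}(\vrho), x.T, \vrho) \in \Uc_i$, and (d) $\mgluty[\vDelta]{(\Pi(x : S).T)[\vsigma]}[i]{Pi(\intp{S}(\vrho), x.T, \vrho)}$. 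Item (b) is immediate from the substitution rule for $\Se$, and (c) follows from the completeness development: the PER semantic $\Pi$-formation rule proven earlier, together with \Cref{thm:dt:per-fund} and $\vrho \in \intp{\vGamma; \Gamma}$, $\intp{S}(\vrho) \in \Uc_i$, yields $\intp{\Pi(x:S).T}(\vrho) \in \El(\Ud_i) = \Uc_i$. Note that (c) must be secured \emph{before} (d), since $\mgluty{\cdot}[i]{Pi(\ldots)}$ is only defined when its semantic-type argument lies in $\Uc_i$.

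The substance is (d). I would unfold the $Pi$-case of $\mgluty$ with witnesses $T_1 := S[\vsigma]$ and $T_2 := T[q(\vsigma, x)]$: then $\mtyequiv[\vDelta]{(\Pi(x:S).T)[\vsigma]}{\Pi(x : T_1).T_2}{\Se_i}$ is exactly the substitution-commutation rule for $\Pi$, and $\mtyping[\vDelta; (\Delta', x : T_1)]{T_2}{\Se_i}$ follows by applying the substitution typing rule to the well-typedness of $T$ (obtained from the second premise via \Cref{thm:dt:per-fund}) along the typed weakening $q(\vsigma, x)$. For the two universally quantified clauses, fix a restricted weakening $\vdelta : \vDelta' \To_r \vDelta$. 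The first clause, $\mgluty[\vDelta']{S[\vsigma][\vdelta]}[i]{\intp{S}(\vrho)[\vdelta]}$, follows from $\mgluty[\vDelta]{S[\vsigma]}[i]{\intp{S}(\vrho)}$ by monotonicity of the gluing model (\Cref{lem:dt:glu-mon}).

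For the second clause, assume $\mglutm[\vDelta']{s}{S[\vsigma][\vdelta]}{b}[i]{\intp{S}(\vrho)[\vdelta]}$; the key move is to assemble the extended gluing substitution $\mglutms[\vDelta']{(\vsigma \circ \vdelta), s/x}{\vGamma; (\Gamma, x : S)}{\ext(\vrho[\vdelta], x, b)}$. Its components are checked directly against the definition of $\mglutms$ at $\vGamma; (\Gamma, x : S)$: $\mglutms[\vDelta']{\vsigma \circ \vdelta}{\vGamma; \Gamma}{\vrho[\vdelta]}$ by \Cref{lem:dt:glue-stack-mon}; the algebraic equivalence $\wk_x \circ ((\vsigma \circ \vdelta), s/x) \approx \vsigma \circ \vdelta$ and the variable equivalence $x[(\vsigma \circ \vdelta), s/x] \approx s$; $\intp{S}(\drop(\ext(\vrho[\vdelta], x, b), x)) = \intp{S}(\vrho[\vdelta]) \in \Uc$ from $\intp{S}(\vrho) \in \Uc_i$ and monotonicity; and $\mglutm[\vDelta']{s}{S[\vsigma \circ \vdelta]}{b}[i]{\intp{S}(\vrho[\vdelta])}$, which is the hypothesis on $s$ after rewriting $S[\vsigma][\vdelta] \approx S[\vsigma \circ \vdelta]$ (\Cref{lem:dt:glutm-resp-equiv-ty}), using the naturality equality $\intp{S}(\vrho)[\vdelta] = \intp{S}(\vrho[\vdelta])$ and \Cref{lem:dt:glu-resp-u} to align representations. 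Feeding this substitution to the second premise $\mSemtyp[\vGamma; (\Gamma, x : S)]{T}{\Se_i}$ yields $\mglutm[\vDelta']{T[(\vsigma \circ \vdelta), s/x]}{\Se_i[(\vsigma \circ \vdelta), s/x]}{\intp{T}(\ext(\vrho[\vdelta], x, b))}{\Ud_i}$, whose $\mgluty$-component, after rewriting $T[(\vsigma \circ \vdelta), s/x] \approx T[q(\vsigma, x)][\vdelta, s/x] = T_2[\vdelta, s/x]$ by the substitution algebra and applying \Cref{lem:dt:gluty-resp-equiv}, is precisely $\mgluty[\vDelta']{T_2[\vdelta, s/x]}[i]{\intp{T}(\ext(\vrho[\vdelta], x, b))}$, which matches the clause since there $T' = T$, $\vrho_0 = \vrho$. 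This completes (d), and (a) then follows from the equivalence in (d) by presupposition, so the three conjuncts of the $\Ud_i$-clause are in place.

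The main obstacle I expect is the substitution bookkeeping in this last clause: one must keep three encodings of ``substitute by $\vsigma$, then bind $x$ to $s$'' --- namely $(\vsigma \circ \vdelta), s/x$, $q(\vsigma, x) \circ (\vdelta, s/x)$, and the semantic $\ext(\intp{\vsigma}(\vrho[\vdelta]), x, b)$ --- in sync, threading the syntactic substitution laws through the naturality equalities $\intp{\cdot}(\vrho[\kappa]) = \intp{\cdot}(\vrho)[\kappa]$ and through the conversion lemmas for the gluing relation. A secondary subtlety, already noted, is obtaining $Pi(\intp{S}(\vrho), x.T, \vrho) \in \Uc_i$ up front so that the recursively defined predicate in (d) is even well-posed, which is why the appeal to the completeness-side fundamental theorem cannot be avoided.
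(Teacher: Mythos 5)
Your proposal is correct and follows essentially the same route as the paper's proof: fix a glued substitution, use \Cref{lem:dt:glu-stack-vrho} and the PER fundamental theorem (\Cref{thm:dt:per-fund}) to get $Pi(\intp{S}(\vrho), x.T, \vrho) \in \Uc_i$, discharge the existential in the $Pi$-gluing clause with $S[\vsigma]$ and $T[q(\vsigma,x)]$, and for the body clause build the extended glued substitution $(\vsigma\circ\vdelta), s/x \sim \ext(\vrho[\vdelta],x,b)$ via \Cref{lem:dt:glue-stack-mon} and feed it to the second premise, finishing with the conversion lemmas. The extra bookkeeping you make explicit (monotonicity for the domain clause, well-posedness of the $Pi$ gluing before unfolding it) is left implicit in the paper but does not change the argument.
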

\begin{proof}
  \begin{align*}
    H_1: &\ \mSemtyp[\vGamma; \Gamma]{S}{\Se_i}
           \tag{by assumption} \\
    H_2: &\ \mSemtyp[\vGamma; (\Gamma, x : S)]{T}{\Se_i}
           \tag{by assumption} \\           
    H_3: &\ \mglutms[\vDelta]{\vsigma}{\vGamma; \Gamma}{\vrho}
           \tag{by assumption} \\
    H_4: &\ \mglutm[\vDelta]{S[\vsigma]}{\Se_i}{\intp{S}(\vrho)}{\Ud_i}
           \tag{by $H_1$ and \Cref{lem:dt:glutm-resp-equiv-ty}} \\
         &\ \vrho \in \intp{\vGamma; \Gamma}
           \tag{by \Cref{lem:dt:glu-stack-vrho}} \\
    H_5: &\ \msemtyp[\vGamma; \Gamma]{\Pi(x : S). T}{\Se_{i}}
           \tag{by \Cref{thm:dt:per-fund}} \\
         &\ Pi(\intp{S}(\vrho), x.T, \vrho) \in \Uc_i
           \tag{by $H_5$}
  \end{align*}
  The goal is
  \begin{align*}
    \mgluty[\vDelta]{\Pi(x : S).T[\vsigma]}{Pi(\intp{S}(\vrho), x.T, \vrho)}
  \end{align*}
  First, we know
  \begin{align*}
    \mjudge[\vDelta]{\Pi(x : S).T[\vsigma] \approx \Pi(x : S[\vsigma]).(T[q(\vsigma, x)])}
  \end{align*}
  This discharges the existential. Assuming $\vdelta : \vDelta' \To_r \vDelta$ and
  $\mglutm[\vDelta']{s}{S[\vsigma][\vdelta]}{a}{\intp{S}(\vrho)[\vdelta]}$, we have
  \begin{align*}
    &\ \mglutm[\vDelta']{s}{S[\vsigma\circ\vdelta]}{a}{\intp{S}(\vrho[\vdelta])}
      \tag{by \Cref{lem:dt:glutm-resp-equiv-ty}} \\
    &\ \mglutms[\vDelta']{\vsigma \circ \vdelta}{\vGamma; \Gamma}{\vrho[\vdelta]}
      \tag{by \Cref{lem:dt:glue-stack-mon}} \\
    &\ \mglutms[\vDelta']{(\vsigma \circ \vdelta), s/x}{\vGamma; (\Gamma, x :
      S)}{\ext(\vrho[\vdelta], x, a)}
      \tag{by definition} \\
    &\ \mglutm[\vDelta']{T[(\vsigma \circ \vdelta), s/x]}{\Se_i}{\intp{T}(\ext(\vrho[\vdelta], x, a))}{\Ud_i}
      \tag{by $H_2$} \\
    &\ \mgluty[\vDelta']{T[(\vsigma \circ \vdelta), s/x]}{\intp{T}(\ext(\vrho[\vdelta], x, a))}
      \tag{by definition} \\
    &\ \mgluty[\vDelta']{T[q(\vsigma, x)][\vdelta, s/x]}{\intp{T}(\ext(\vrho[\vdelta], x, a))}
      \tag{by \Cref{lem:dt:gluty-resp-equiv}}
  \end{align*}
  By abstraction, we arrive at the goal as desired. 
\end{proof}

\begin{lemma}
  \begin{mathpar}
    \inferrule
    {\mSemtyp[\vGamma; \Gamma]{S}{\Se_i} \\ \mSemtyp[\vGamma; (\Gamma, x : S)]{t}{T}}
    {\mSemtyp[\vGamma; \Gamma]{\lambda x. t}{\Pi(x : S). T}}
  \end{mathpar}
\end{lemma}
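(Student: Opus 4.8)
The plan is to unfold both semantic judgments and mirror, at the level of the gluing model, the argument used for the simply typed $\lambda$-rule, inserting the extra bookkeeping forced by type dependency. First I would assume $\mglutms[\vDelta]{\vsigma}{\vGamma; \Gamma}{\vrho}$ and compute $\intp{\lambda x. t}(\vrho) = \Lambda(x.t, \vrho)$ and $\intp{\Pi(x : S). T}(\vrho) = Pi(\intp{S}(\vrho), x.T, \vrho)$, so the goal is $\mglutm[\vDelta]{(\lambda x. t)[\vsigma]}{(\Pi(x : S). T)[\vsigma]}{\Lambda(x.t, \vrho)}{Pi(\intp{S}(\vrho), x.T, \vrho)}$. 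Writing $\vDelta = \vDelta_0; \Delta_0$, I would discharge the ``static'' obligations of the $Pi$-case of $\mglutm$ first: that $Pi(\intp{S}(\vrho), x.T, \vrho) \in \Uc_i$ and $\Lambda(x.t, \vrho) \in \El_i(Pi(\intp{S}(\vrho), x.T, \vrho))$ follow from the PER-model fundamental theorem \Cref{thm:dt:per-fund} applied to the premises, together with $\vrho \in \intp{\vGamma; \Gamma}$ from \Cref{lem:dt:glu-stack-vrho}; the syntactic obligation $\mtyping[\vDelta]{(\lambda x. t)[\vsigma]}{(\Pi(x : S). T)[\vsigma]}$ follows from the syntactic $\Pi$-introduction and substitution rules (the syntactic content of the premises being recovered by instantiating them at the identity gluing witness $\vect I \sim \uparrow^{\vGamma; \Gamma}$ and invoking \Cref{lem:dt:inv-glu-tm}); and for the existential witnesses I would take $T_1 := S[\vsigma]$ and $T_2 := T[q(\vsigma, x)]$, noting that $(\Pi(x : S). T)[\vsigma] \approx \Pi(x : S[\vsigma]). (T[q(\vsigma, x)])$ is exactly the $\Pi$-substitution equivalence and $\mtyping[\vDelta_0; (\Delta_0, x : S[\vsigma])]{T[q(\vsigma, x)]}{\Se_i}$ holds by the substitution typing rule.

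The heart of the proof is the function obligation. Fixing a restricted weakening $\vdelta : \vDelta' \To_r \vDelta$ and some $\mglutm[\vDelta']{s}{T_1[\vdelta]}{b}[i]{(\intp{S}(\vrho))[\vdelta]}$, I must show $\mglutm[\vDelta']{(\lambda x. t)[\vsigma][\vdelta]\ s}{T_2[\vdelta, s/x]}{\Lambda(x.t, \vrho)[\vdelta] \cdot b}[i]{\intp{T}(\ext(\vrho[\vdelta], x, b))}$. Semantically $\Lambda(x.t, \vrho)[\vdelta] \cdot b = \Lambda(x.t, \vrho[\vdelta]) \cdot b = \intp{t}(\ext(\vrho[\vdelta], x, b))$ by monotonicity of $\Lambda$ and the definition of the partial application, so I would aim to feed the premise $\mSemtyp[\vGamma; (\Gamma, x : S)]{t}{T}$ the witness $\mglutms[\vDelta']{(\vsigma \circ \vdelta), s/x}{\vGamma; (\Gamma, x : S)}{\ext(\vrho[\vdelta], x, b)}$. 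Its truncation conjunct $\mglutms[\vDelta']{\vsigma \circ \vdelta}{\vGamma; \Gamma}{\vrho[\vdelta]}$ comes from \Cref{lem:dt:glue-stack-mon}; the well-definedness conjunct $\intp{S}(\vrho[\vdelta]) \in \Uc$ from \Cref{thm:dt:per-fund} again (using $\vrho[\vdelta] \in \intp{\vGamma; \Gamma}$); and the head conjunct $\mglutm[\vDelta']{s}{S[\vsigma \circ \vdelta]}{b}[i]{\intp{S}(\vrho[\vdelta])}$ from the hypothesis on $s$ after rewriting $(\intp{S}(\vrho))[\vdelta] = \intp{S}(\vrho[\vdelta])$ by the naturality equality \Cref{conj:dt:intp-mon} and correcting the type along $S[\vsigma][\vdelta] \approx S[\vsigma \circ \vdelta]$ with \Cref{lem:dt:glutm-resp-equiv-ty}.

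Applying the premise to this witness yields $\mglutm[\vDelta']{t[(\vsigma \circ \vdelta), s/x]}{T[(\vsigma \circ \vdelta), s/x]}{\intp{t}(\ext(\vrho[\vdelta], x, b))}{\intp{T}(\ext(\vrho[\vdelta], x, b))}$, and it remains to transport this to the desired statement along two syntactic equivalences: $(\lambda x. t)[\vsigma][\vdelta]\ s \approx t[(\vsigma \circ \vdelta), s/x]$, obtained by composing $t[\vsigma][\vdelta] \approx t[\vsigma \circ \vdelta]$, the $\lambda$-substitution rule, $\Pi$-$\beta$, and the substitution algebra $q(\vsigma \circ \vdelta, x) \circ (\vect I, s/x) \approx (\vsigma \circ \vdelta), s/x$; and $T_2[\vdelta, s/x] = T[q(\vsigma, x)][\vdelta, s/x] \approx T[(\vsigma \circ \vdelta), s/x]$, again by the substitution algebra. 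A final use of \Cref{lem:dt:glutm-resp-equiv-ty} and \Cref{lem:dt:glutm-resp-equiv-tm} closes the function obligation, and abstracting over $\vdelta, s, b$ and then over $\vsigma, \vrho$ finishes the lemma. I expect the main obstacle to be exactly this last reconciliation: as in the simply typed case the two sides are ``a $\beta$-rule away,'' but the dependence of $T_2$ on the bound variable forces one to chase the equivalence for the type component in lockstep with the term, and to check that every rewrite leaves the semantic type index of $\mglutm$ untouched --- which is legitimate precisely because that index only ever moves under naturality equalities and $\Uc$-conversion (\Cref{lem:dt:glu-resp-u}), never under arbitrary definitional equality.
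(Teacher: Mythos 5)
Your proposal is correct and follows essentially the same route as the paper's proof: take the existential witnesses $S[\vsigma]$ and $T[q(\vsigma,x)]$, extend the glued pair to $\mglutms[\vDelta']{(\vsigma\circ\vdelta), s/x}{\vGamma;(\Gamma,x:S)}{\ext(\vrho[\vdelta],x,b)}$ via \Cref{lem:dt:glue-stack-mon}, apply the second premise, and transport back along the $\beta$/substitution equivalences with \Cref{lem:dt:glutm-resp-equiv-ty,lem:dt:glutm-resp-equiv-tm}, using \Cref{thm:dt:per-fund} for the semantic side conditions. The extra detail you give on the ``static'' obligations matches what the paper leaves implicit.
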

\begin{proof}
  \begin{align*}
    H_1: &\ \mSemtyp[\vGamma; \Gamma]{S}{\Se_i}
           \tag{by assumption} \\           
    H_2: &\ \mSemtyp[\vGamma; (\Gamma, x : S)]{t}{T}
           \tag{by assumption} \\
    H_3: &\ \mglutms[\vDelta]{\vsigma}{\vGamma; \Gamma}{\vrho}
           \tag{by assumption}
  \end{align*}
  The goal is
  \begin{align*}
    \mglutm[\vDelta]{\lambda x. t[\vsigma]}{\Pi(x : S).T[\vsigma]}{\Lambda(x.t , 
    \vrho)}{Pi(\intp{S}(\vrho), x.T, \vrho)}
  \end{align*}
  Most obligations are easily discharged, given existentials as $S[\vsigma]$ and
  $T[q(\vsigma, x)]$. We will need to apply the fundamental theorem of the PER model
  (\Cref{thm:dt:per-fund}). Assuming $\vdelta : \vDelta' \To_r \vDelta$ and
  $\mglutm[\vDelta']{s}{S[\vsigma][\vdelta]}{a}{\intp{S}(\vrho)[\vdelta]}$, we have
  \begin{align*}
    &\ \mglutm[\vDelta']{s}{S[\vsigma\circ\vdelta]}{a}{\intp{S}(\vrho[\vdelta])}
      \tag{by \Cref{lem:dt:glutm-resp-equiv-ty}} \\
    &\ \mglutms[\vDelta']{\vsigma \circ \vdelta}{\vGamma; \Gamma}{\vrho[\vdelta]}
      \tag{by \Cref{lem:dt:glue-stack-mon}} \\
    &\ \mglutms[\vDelta']{(\vsigma \circ \vdelta), s/x}{\vGamma; (\Gamma, x :
      S)}{\ext(\vrho[\vdelta], x, a)}
      \tag{by definition} \\
    &\ \mglutm[\vDelta']{t[(\vsigma \circ \vdelta), s/x]}{T[(\vsigma \circ \vdelta),
      s/x]}{\intp{t}(\ext(\vrho[\vdelta], x, a))}{\intp{T}(\ext(\vrho[\vdelta], x,
      a))}
      \tag{by $H_2$} \\
    &\ \mglutm[\vDelta']{(\lambda x. t)[\vsigma][\vdelta]\ s}{T[q(\vsigma, x)][\vdelta,
      s/x]}{(x.t, \vrho)[\vdelta](a)}{(x.T, \vrho)[\vdelta](a)}
      \tag{by \Cref{lem:dt:glutm-resp-equiv-ty,lem:dt:glutm-resp-equiv-tm}}
  \end{align*}
  By abstraction, we arrive at the goal as desired. 
\end{proof}

\begin{lemma}
  \begin{mathpar}
    \inferrule
    {\mSemtyp[\vGamma; (\Gamma, x : S)]{T}{\Se_i} \\ \mSemtyp[\vGamma; \Gamma]{t}{\Pi(x : S). T} \\ \mSemtyp[\vGamma; \Gamma]{s}{S}}
    {\mSemtyp[\vGamma; \Gamma]{t\ s}{T[\vect I, s/x]}}
  \end{mathpar}
\end{lemma}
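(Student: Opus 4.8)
The plan is to unfold $\mSemtyp[\vGamma; \Gamma]{t\ s}{T[\vect I, s/x]}$ and reduce everything to the gluing relation at a $\Pi$ type. So assume $\mglutms[\vDelta]{\vsigma}{\vGamma; \Gamma}{\vrho}$; we must establish $\mglutm[\vDelta]{(t\ s)[\vsigma]}{T[\vect I, s/x][\vsigma]}{\intp{t\ s}(\vrho)}{\intp{T[\vect I, s/x]}(\vrho)}$. The second premise gives $\mglutm[\vDelta]{t[\vsigma]}{(\Pi(x : S).T)[\vsigma]}{\intp{t}(\vrho)}{Pi(\intp{S}(\vrho), x.T, \vrho)}$ and the third gives $\mglutm[\vDelta]{s[\vsigma]}{S[\vsigma]}{\intp{s}(\vrho)}{\intp{S}(\vrho)}$. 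Unfolding the $Pi$ case of $\mglutm$ on the former produces syntactic types $T_1, T_2$ with $\mtyequiv[\vDelta]{(\Pi(x : S).T)[\vsigma]}{\Pi(x : T_1).T_2}{\Se_i}$, the clause that for every restricted $\vdelta : \vDelta' \To_r \vDelta$ one has $\mgluty[\vDelta']{T_1[\vdelta]}{\intp{S}(\vrho)[\vdelta]}$, and the clause that feeding any $\mglutm[\vDelta']{s'}{T_1[\vdelta]}{b}{\intp{S}(\vrho)[\vdelta]}$ returns $\mglutm[\vDelta']{t[\vsigma][\vdelta]\ s'}{T_2[\vdelta, s'/x]}{\intp{t}(\vrho)[\vdelta] \cdot b}{\intp{T}(\ext(\vrho[\vdelta], x, b))}$.

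First I would align the domain of the function. Instantiating the first clause with $\vdelta := \vect I$ gives $\mgluty[\vDelta]{T_1}{\intp{S}(\vrho)}$ (using $T_1[\vect I] \approx T_1$ and $\intp{S}(\vrho)[\vone] = \intp{S}(\vrho)$), while \Cref{lem:dt:inv-glu-tm} on the third premise gives $\mgluty[\vDelta]{S[\vsigma]}{\intp{S}(\vrho)}$; since these two types glue to the same semantic value, \Cref{lem:ty-glu-d} yields $\mtyequiv[\vDelta]{T_1}{S[\vsigma]}{\Se_i}$, so \Cref{lem:dt:glutm-resp-equiv-ty} retypes the third premise to $\mglutm[\vDelta]{s[\vsigma]}{T_1}{\intp{s}(\vrho)}{\intp{S}(\vrho)}$. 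Feeding this into the second clause with $\vdelta := \vect I$ and $b := \intp{s}(\vrho)$ (and simplifying $t[\vsigma][\vect I] \approx t[\vsigma]$, $\intp{t}(\vrho)[\vect I] = \intp{t}(\vrho)$, $\vrho[\vect I] = \vrho$) gives $\mglutm[\vDelta]{t[\vsigma]\ s[\vsigma]}{T_2[\vect I, s[\vsigma]/x]}{\intp{t}(\vrho) \cdot \intp{s}(\vrho)}{\intp{T}(\ext(\vrho, x, \intp{s}(\vrho)))}$.

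It then remains to rewrite each ingredient into the required shape. The semantic term $\intp{t}(\vrho) \cdot \intp{s}(\vrho)$ is $\intp{t\ s}(\vrho)$ by definition of evaluation, and $\intp{T}(\ext(\vrho, x, \intp{s}(\vrho))) = \intp{T[\vect I, s/x]}(\vrho)$ by unfolding $\intp{\vect I, s/x}(\vrho)$. For the syntactic term, the $\Pi$ substitution-application equivalence gives $\mtyequiv[\vDelta]{(t\ s)[\vsigma]}{t[\vsigma]\ s[\vsigma]}{\,\cdots}$, so \Cref{lem:dt:glutm-resp-equiv-tm} replaces $t[\vsigma]\ s[\vsigma]$ by $(t\ s)[\vsigma]$. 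The only genuine work is the syntactic type: we need $\mtyequiv[\vDelta]{T_2[\vect I, s[\vsigma]/x]}{T[\vect I, s/x][\vsigma]}{\Se_i}$, and this cannot be read off by $\Pi$-injectivity. Here I would use the first premise: from $\mglutms[\vDelta]{\vsigma}{\vGamma; \Gamma}{\vrho}$ and the third premise the definition of gluing for substitutions (via the extensionality and variable-substitution rules together with \Cref{lem:dt:inv-glu-tm}) builds $\mglutms[\vDelta]{\vsigma, s[\vsigma]/x}{\vGamma; (\Gamma, x : S)}{\ext(\vrho, x, \intp{s}(\vrho))}$; applying $\mSemtyp[\vGamma; (\Gamma, x : S)]{T}{\Se_i}$ then yields $\mgluty[\vDelta]{T[\vsigma, s[\vsigma]/x]}{\intp{T}(\ext(\vrho, x, \intp{s}(\vrho)))}$, and since $T[\vsigma, s[\vsigma]/x] \approx T[\vect I, s/x][\vsigma]$ (the composition law $(\vect I, s/x) \circ \vsigma \approx \vsigma, s[\vsigma]/x$), \Cref{lem:dt:gluty-resp-equiv} upgrades this to $\mgluty[\vDelta]{T[\vect I, s/x][\vsigma]}{\intp{T}(\ext(\vrho, x, \intp{s}(\vrho)))}$. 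Now $T_2[\vect I, s[\vsigma]/x]$ (via \Cref{lem:dt:inv-glu-tm}) and $T[\vect I, s/x][\vsigma]$ glue to the same semantic type, so \Cref{lem:ty-glu-d} delivers the type equivalence; one last application of \Cref{lem:dt:glutm-resp-equiv-ty} and \Cref{lem:dt:glutm-resp-equiv-tm} closes the goal. The main obstacle is exactly this detour around $\Pi$-injectivity, which is precisely why the rule carries the extra premise $\mSemtyp[\vGamma; (\Gamma, x : S)]{T}{\Se_i}$.
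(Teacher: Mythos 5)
Your proposal is correct and follows essentially the same route as the paper's own proof: apply the premises to the glued substitution, use \Cref{lem:dt:inv-glu-tm} together with \Cref{lem:ty-glu-d} to retype the argument at the existential domain type from the $Pi$ gluing, instantiate the application clause at the identity restricted weakening, and then use the extra premise $\mSemtyp[\vGamma; (\Gamma, x : S)]{T}{\Se_i}$ on the extended glued substitution plus \Cref{lem:ty-glu-d} to identify the existential codomain type with $T[\vect I, s/x][\vsigma]$, closing with \Cref{lem:dt:glutm-resp-equiv-ty,lem:dt:glutm-resp-equiv-tm}. The only differences are notational (your $T_1, T_2$ versus the paper's $S', T'$) and the order in which the final syntactic substitution equivalence $T[\vsigma, s[\vsigma]/x] \approx T[\vect I, s/x][\vsigma]$ is applied.
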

\begin{proof}
  \begin{align*}
    H_1: &\ \mSemtyp[\vGamma; (\Gamma, x : S)]{T}{\Se_i}
           \tag{by assumption} \\
    H_2: &\ \mSemtyp[\vGamma; \Gamma]{t}{\Pi(x : S). T}
           \tag{by assumption} \\
    H_3: &\ \mSemtyp[\vGamma; \Gamma]{s}{S}
           \tag{by assumption} \\
    H_4: &\ \mglutms[\vDelta]{\vsigma}{\vGamma; \Gamma}{\vrho}
           \tag{by assumption} \\
    H_5: &\ \mglutm[\vDelta]{t[\vsigma]}{\Pi(x :
           S). T[\vsigma]}{\intp{t}(\vrho)}{Pi(\intp{S}(\vrho), x.T, \vrho)}
           \tag{by $H_2$} \\
    H_6: &\ \mglutm[\vDelta]{s[\vsigma]}{S[\vsigma]}
           {\intp{s}{\vrho}}{\intp{S}(\vrho)}
           \tag{by $H_3$} \\
         &\ \mgluty[\vDelta]{S[\vsigma]}{\intp{S}(\vrho)}
           \tag{by \Cref{lem:dt:inv-glu-tm}} \\
         &\ \mgluty[\vDelta]{S'[\vsigma]}{\intp{S}(\vrho)}
           \tag{by $H_5$ for some $S'$} \\
         &\ \mjudge[\vDelta]{S[\vsigma] \approx S'[\vsigma]}
           \tag{by \Cref{lem:ty-glu-d}} \\
    H_7: &\ \mglutm[\vDelta]{s[\vsigma]}{S'[\vsigma]}
           {\intp{s}{\vrho}}{\intp{S}(\vrho)}
           \tag{by $H_6$ and \Cref{lem:dt:glutm-resp-equiv-ty}} \\
    H_8: &\ \mglutm[\vDelta]{t[\vsigma]\ s[\vsigma]}
           {T'[\vsigma, s[\vsigma]/x]}{\intp{t}(\vrho) \cdot
           \intp{s}(\vrho)}{\intp{T}(\ext(\vrho, x, \intp{s}(\vrho)))}
           \tag{by $H_5$ and $H_7$} \\
         &\ \mglutm[\vDelta]{(t\ s)[\vsigma]}
           {T'[\vsigma, s[\vsigma]/x]}{\intp{t\ s}(\vrho)}{\intp{T[\vect I, s/x]}(\vrho)}
           \tag{by \Cref{lem:dt:glutm-resp-equiv-ty,lem:dt:glutm-resp-equiv-tm}} \\
         &\ \mglutms[\vDelta]{\vsigma, s[\vsigma]/x}{\vGamma; (\Gamma, x :
           S)}{\ext(\vrho, x, \intp{s}(\vrho))}
           \tag{by $H_4$ and $H_6$} \\
         &\ \mglutm[\vDelta]{T[\vsigma, s[\vsigma]/x]}{\Se_i}{\intp{T}(\ext(\vrho, x,
           \intp{s}(\vrho)))}{\Ud_i}
           \tag{by $H_1$} \\
         &\ \mgluty[\vDelta]{T[\vsigma, s[\vsigma]/x]}{\intp{T}(\ext(\vrho, x,
           \intp{s}(\vrho)))}
           \tag{by definition} \\
         &\ \mgluty[\vDelta]{T'[\vsigma, s[\vsigma]/x]}{\intp{T}(\ext(\vrho, x,
           \intp{s}(\vrho)))}
           \tag{by $H_8$} \\
         &\ \mjudge[\vDelta]{T[\vsigma, s[\vsigma]/x] \approx T'[\vsigma, 
           s[\vsigma]/x]}
           \tag{by \Cref{lem:ty-glu-d}} \\
         &\ \mglutm[\vDelta]{(t\ s)[\vsigma]}
           {T[\vsigma, s[\vsigma]/x]}{\intp{t\ s}(\vrho)}{\intp{T[\vect I, s/x]}(\vrho)}
           \tag{by $H_8$} \\
         &\ \mglutm[\vDelta]{(t\ s)[\vsigma]}
           {T[\vect I, s/x][\vsigma]}{\intp{t\ s}(\vrho)}{\intp{T[\vect I, s/x]}(\vrho)}
  \end{align*}
  By abstraction, this concludes the goal.
\end{proof}

\subsubsection{Natural Numbers}

\begin{lemma}
  \begin{mathpar}
    \inferrule
    { }
    {\mSemtyp \Nat \Se_i}

    \inferrule
    { }
    {\mSemtyp \ze \Nat}

    \inferrule
    {\mSemtyp t \Nat}
    {\mSemtyp{\su t}\Nat}
  \end{mathpar}
\end{lemma}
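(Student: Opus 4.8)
The plan is to unfold the semantic judgment $\mSemtyp{-}{-}$ in each of the three cases, fix an arbitrary gluing of substitutions $\mglutms[\vDelta]{\vsigma}{\vGamma}{\vrho}$, and then feed the relevant constructor of the gluing model. Throughout I will use that $\mglutms[\vDelta]{\vsigma}{\vGamma}{\vrho}$ yields $\mtyping[\vDelta]{\vsigma}{\vGamma}$, so presupposition gives $\vdash\vDelta$; hence $\mtyping[\vDelta]{\Nat}{\Se_i}$ and reflexivity of all the equivalences below are available. The semantic components are definitional, $\intp{\Nat}(\vrho)=\Nd$, $\intp{\Se_i}(\vrho)=\Ud_i$, $\intp{\ze}(\vrho)=\zed$, $\intp{\su t}(\vrho)=\sud{\intp{t}(\vrho)}$, and the side conditions $\Nd\in\Uc_i$, $\Ud_i\in\Uc_{i+1}$ hold directly from the PER rules, so the ambient requirement ``$A\in\Uc$'' in $\mglutm{-}{-}{-}{-}$ is met without invoking \Cref{thm:dt:per-fund}. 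The only recurring bookkeeping step is rewriting the syntactic components along $\mtyequiv[\vDelta]{\Nat[\vsigma]}{\Nat}{\Se_i}$, $\mtyequiv[\vDelta]{\Se_i[\vsigma]}{\Se_i}{\Se_{i+1}}$, $\mtyequiv[\vDelta]{\ze[\vsigma]}{\ze}{\Nat}$, and $\mtyequiv[\vDelta]{\su t[\vsigma]}{\su{(t[\vsigma])}}{\Nat}$, which is handled by \Cref{lem:dt:glutm-resp-equiv-ty,lem:dt:glutm-resp-equiv-tm}.

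For $\mSemtyp{\Nat}{\Se_i}$: after fixing $\vsigma,\vrho$ it suffices to prove $\mglutm[\vDelta]{\Nat}{\Se_i}{\Nd}{\Ud_i}$ and then rewrite back along $\mtyequiv[\vDelta]{\Nat[\vsigma]}{\Nat}{\Se_i}$ and $\mtyequiv[\vDelta]{\Se_i[\vsigma]}{\Se_i}{\Se_{i+1}}$ using the conversion lemmas. Expanding the $\Ud_j$ clause of the gluing model (with $j=i$), this amounts to $\mtyping[\vDelta]{\Nat}{\Se_i}$, $\mtyequiv[\vDelta]{\Se_i}{\Se_i}{\Se_{i+1}}$, $\Nd\in\Uc_i$, and $\mgluty[\vDelta]{\Nat}{\Nd}$; expanding the $\Nd$ clause, the last conjunct is just $\mtyequiv[\vDelta]{\Nat}{\Nat}{\Se_i}$. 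All four follow from $\vdash\vDelta$ and reflexivity.

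For $\mSemtyp{\ze}{\Nat}$: it suffices to prove $\mglutm[\vDelta]{\ze}{\Nat}{\zed}{\Nd}$ and rewrite along $\mtyequiv[\vDelta]{\ze[\vsigma]}{\ze}{\Nat}$ and $\mtyequiv[\vDelta]{\Nat[\vsigma]}{\Nat}{\Se_i}$. By the $\Nd$ clause this is $\mglunat[\vDelta]{\ze}{\zed}$ together with $\mtyequiv[\vDelta]{\Nat}{\Nat}{\Se_i}$, and the first conjunct is the $\zed$ rule for $\mglunat$ applied to $\mtyequiv[\vDelta]{\ze}{\ze}{\Nat}$. For $\mSemtyp{\su t}{\Nat}$, assuming $\mSemtyp{t}{\Nat}$: the hypothesis gives $\mglutm[\vDelta]{t[\vsigma]}{\Nat[\vsigma]}{\intp{t}(\vrho)}{\Nd}$, hence by type conversion along $\mtyequiv[\vDelta]{\Nat[\vsigma]}{\Nat}{\Se_i}$ we get $\mglunat[\vDelta]{t[\vsigma]}{\intp{t}(\vrho)}$; applying the $\sud{-}$ rule for $\mglunat$ with $\mtyequiv[\vDelta]{\su{(t[\vsigma])}}{\su{(t[\vsigma])}}{\Nat}$ yields $\mglunat[\vDelta]{\su{(t[\vsigma])}}{\sud{\intp{t}(\vrho)}}$, i.e. $\mglutm[\vDelta]{\su{(t[\vsigma])}}{\Nat}{\sud{\intp{t}(\vrho)}}{\Nd}$, and rewriting the term along $\mtyequiv[\vDelta]{\su t[\vsigma]}{\su{(t[\vsigma])}}{\Nat}$, the type back to $\Nat[\vsigma]$, and using $\sud{\intp{t}(\vrho)}=\intp{\su t}(\vrho)$ gives the goal.

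There is essentially no obstacle here: the rules for natural numbers involve only data constructors with no binders and no eliminator, so each case is a direct instance of the $\Nd$ or $\Ud_j$ clause of the gluing model together with the PER rules for $Nat$. The only point requiring care is the routine bookkeeping of discharging the implicit ``$A\in\Uc$'' condition (immediate from the PER rules) and chaining the substitution equivalences through \Cref{lem:dt:glutm-resp-equiv-ty,lem:dt:glutm-resp-equiv-tm}.
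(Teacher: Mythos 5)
Your proof is correct and matches the paper's intent: the paper dismisses this lemma with ``Immediate,'' and your write-up is exactly the routine unfolding that remark presupposes — fixing a glued substitution, evaluating to $\Nd$, $\Ud_i$, $\zed$, $\sud{\intp{t}(\vrho)}$, invoking the $\Nd$/$\Ud_j$ clauses and the $\mglunat$ rules, and discharging the substitution equivalences via the type/term conversion lemmas.
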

\begin{proof}
  Immediate.
\end{proof}

We need a lemma in order to handle recursion. It effectively keeps track of evaluation
and shows that the gluing is maintained.
\begin{lemma}
  If
  \begin{itemize}
  \item $\mglutms[\vDelta]{\vsigma}{\vGamma; \Gamma}{\vrho}$,
  \item $\forall \mglutm[\vDelta]{t'}{\Nat}{a}{\Nd}.\ \mgluty[\vDelta]{M[\vsigma, 
      t'/x]}{\intp{M}(\ext(\vrho, x, a))}$,
  \item $\mglutm[\vDelta]{s[\vsigma]}{M[\vsigma, 
      \ze/x]}{\intp{s}(\vrho)}{\intp{M}(\ext(\vrho, x, \zed))}$,
  \item $\forall \mglutm[\vDelta]{t'}{\Nat}{a}{\Nd},
    \mglutm[\vDelta]{u'}{M[\vsigma, t'/x]}{b}{\intp{M}(\ext(\vrho, x, a))}.\
    \mglutm[\vDelta]{u[\vsigma, t'/x, u'/y]}{M[\vsigma, \su
      t'/x]}{\intp{u}(\ext(\vrho, x, a, y, b))}{\intp{M}(\ext(\vrho, x, \sud{a}))}$
  \item $\mglunat[\vDelta]t a$,
  \end{itemize}
  then
  $\mglutm[\vDelta]{\elimn {x.M[q(\vsigma, x)]}{(s[\vsigma])}{x,y.u[q(q(\vsigma, x), y)]}t}{M[\vsigma, t/x]}
  {\trec \cdot (x.M ,  \vrho, \intp{s}(\vrho), (x,y.u, \vrho),
    a)}{\intp{M}(\ext(\vrho, x, a))}$.
\end{lemma}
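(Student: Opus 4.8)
The plan is to follow the structure of the PER-model helper \Cref{lem:dt:rec-per-helper}, this time in the gluing setting, proceeding by induction on the derivation of $\mglunat[\vDelta]{t}{a}$. Before the case split I would record the ambient syntactic data: from $\mglutms[\vDelta]{\vsigma}{\vGamma;\Gamma}{\vrho}$ and presupposition we get $\mtyping[\vDelta]{\vsigma}{\vGamma;\Gamma}$, and from \Cref{lem:dt:inv-glu-tm} applied to the motive, base, and step hypotheses we obtain $\mtyping[\vDelta;(\Gamma',x:\Nat)]{M[q(\vsigma,x)]}{\Se_i}$, $\mtyping[\vDelta]{s[\vsigma]}{M[\vsigma,\ze/x]}$ and the corresponding typing of the step; together with $\mtyping[\vDelta]{t}{\Nat}$ these give the well-typedness of the eliminator at $M[\vsigma,t/x]$. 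I would also note that $\intp{M}(\ext(\vrho,x,a))\in\Uc$: in each case of the induction $a\in Nat$ (the successor and neutral subcases needing an easy subinduction), and feeding $a$ into the second hypothesis and then using the gluing Realizability theorem yields the needed $\Uc$-membership; by \Cref{lem:dt:glu-stack-vrho} we also know $\vrho\in\intp{\vGamma;\Gamma}$, which makes these semantic statements well-posed.

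For the zero case, $a=\zed$ with $\mtyequiv[\vDelta]{t}{\ze}{\Nat}$: the partial recursion computes to $\intp{s}(\vrho)$, and on the syntactic side the substitution equations for the eliminator together with its $\beta$ rule and congruence give $\mtyequiv[\vDelta]{\elimn{x.M[q(\vsigma,x)]}{(s[\vsigma])}{x,y.u[q(q(\vsigma,x),y)]}{t}}{s[\vsigma]}{M[\vsigma,\ze/x]}$ and $\mtyequiv[\vDelta]{M[\vsigma,t/x]}{M[\vsigma,\ze/x]}{\Se_i}$. The third hypothesis supplies the gluing, and \Cref{lem:dt:glutm-resp-equiv-ty,lem:dt:glutm-resp-equiv-tm} transport it to the stated type and term. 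For the successor case, $a=\sud b$ with $\mtyequiv[\vDelta]{t}{\su t'}{\Nat}$ and $\mglunat[\vDelta]{t'}{b}$: applying the induction hypothesis at $t'$ and $b$ gives a gluing of the eliminator on $t'$ with $\trec\cdot(x.M,\intp{s}(\vrho),(x,y.u),b,\vrho)$ at $\intp{M}(\ext(\vrho,x,b))$; feeding this, together with $\mglutm[\vDelta]{t'}{\Nat}{b}{\Nd}$, into the fourth hypothesis gives a gluing of $u[\vsigma,t'/x,(\elimn{x.M[q(\vsigma,x)]}{(s[\vsigma])}{x,y.u[q(q(\vsigma,x),y)]}{t'})/y]$ with $\trec\cdot(x.M,\intp{s}(\vrho),(x,y.u),\sud b,\vrho)=\intp{u}(\ext(\vrho,x,b,y,\trec\cdot(x.M,\intp{s}(\vrho),(x,y.u),b,\vrho)))$ at $\intp{M}(\ext(\vrho,x,\sud b))$. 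The $\beta$ rule for the successor eliminator and congruence relate the two sides syntactically and yield $\mtyequiv[\vDelta]{M[\vsigma,t/x]}{M[\vsigma,\su t'/x]}{\Se_i}$, and \Cref{lem:dt:glutm-resp-equiv-ty,lem:dt:glutm-resp-equiv-tm} finish the case.

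The hard case is the neutral one, $a=\uparrow^\Nd(c)$ with $c\in\bot$ and $\mtyequiv[\vDelta']{t[\vsigma']}{\Rne_\alpha(c[\vsigma'])}{\Nat}$ for every restricted $\vsigma':\vDelta'\To_r\vDelta$. Here $\trec\cdot(x.M,\intp{s}(\vrho),(x,y.u),\uparrow^\Nd(c),\vrho)=\uparrow^{\intp{M}(\ext(\vrho,x,\uparrow^\Nd(c)))}(\rec{x.M}{\intp{s}(\vrho)}{(x,y.u)}{c}{\vrho})$, so the strategy is to establish $\mglutmd[\vDelta]{\elimn{x.M[q(\vsigma,x)]}{(s[\vsigma])}{x,y.u[q(q(\vsigma,x),y)]}{t}}{M[\vsigma,t/x]}{\rec{x.M}{\intp{s}(\vrho)}{(x,y.u)}{c}{\vrho}}[i]{\intp{M}(\ext(\vrho,x,\uparrow^\Nd(c)))}$ and then upgrade it to the required relation by the gluing Realizability theorem. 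Establishing this needs two things. First, $\rec{x.M}{\intp{s}(\vrho)}{(x,y.u)}{c}{\vrho}\in\bot$: unfolding $\Rne_\alpha$ of this term one sees its readback is stable under \UnMoTs once $c\in\bot$, $\intp{M}(\ext(\vrho,x,\uparrow^\Nd(z)))\in\top$, $\downarrow^{\intp{M}(\ext(\vrho,x,\zed))}(\intp{s}(\vrho))\in\top$ and the analogous fact for the step all hold, and each of these follows from applying the gluing Realizability theorem to the second, third, and fourth hypotheses (after introducing fresh neutral variables via \Cref{lem:dt:glu-var}). Second, for each restricted $\vsigma':\vDelta'\To_r\vDelta$ one must prove $\mtyequiv[\vDelta']{(\elimn{x.M[q(\vsigma,x)]}{(s[\vsigma])}{x,y.u[q(q(\vsigma,x),y)]}{t})[\vsigma']}{\Rne_\alpha((\rec{x.M}{\intp{s}(\vrho)}{(x,y.u)}{c}{\vrho})[\vsigma'])}{M[\vsigma,t/x][\vsigma']}$; after pushing substitutions inward and unfolding the readback of $\trec$ this factors into matching the argument ($t[\vsigma']\approx\Rne_\alpha(c[\vsigma'])$, immediate from the neutral hypothesis), the motive, the base branch, and the step branch. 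For the motive I would extend the context by a fresh variable of type $\Nat$, relate it to a fresh domain variable with \Cref{lem:dt:glu-var}, move the second hypothesis along $\vsigma'\circ\wk_x$ using \Cref{lem:dt:glu-mon} and the naturality equality \Cref{conj:dt:intp-mon}, and then apply the gluing Realizability theorem to obtain the $\Rty$-readback equivalence; the base and step branches are handled the same way, the step requiring two fresh variables and an inner invocation of the induction hypothesis for the eliminator readback, with \Cref{lem:ty-glu-d} keeping the various motive instances aligned. Finally, \Cref{lem:dt:glutm-resp-equiv-ty,lem:dt:glutm-resp-equiv-tm,lem:dt:gluty-resp-equiv} absorb the remaining syntactic discrepancies, and the Realizability theorem turns the resulting neutral-gluing fact into the desired gluing fact.

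I expect the neutral case to be the main obstacle. It is the only place where the gluing Realizability theorem must be invoked simultaneously for the motive, the base branch, and the step branch, and the bookkeeping of restricted weakenings, of the three context extensions by fresh domain variables, and of the naturality equalities \Cref{conj:dt:intp-mon,conj:dt:rec.-mon} against the readback functions is genuinely delicate; in particular one must check that the fresh variables chosen by $\tnext$ on the two sides of each equivalence coincide, exactly as in the PER version of this lemma. Once that case is discharged, the $\zed$ and $\sud b$ cases are routine, being essentially the syntactic $\beta$ rules for the eliminator together with the three glued conversion lemmas.
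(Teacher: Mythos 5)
Your proposal follows essentially the same route as the paper's proof: induction on $\mglunat[\vDelta]{t}{a}$, discharging the $\zed$ and $\sud{}$ cases via the eliminator's $\beta$/substitution equivalences, the hypotheses, the induction hypothesis, and \Cref{lem:dt:glutm-resp-equiv-ty,lem:dt:glutm-resp-equiv-tm}, and handling the neutral case by first establishing the corresponding $\mglutmd$ statement under restricted weakenings and then upgrading it with realizability. The paper compresses the neutral case to ``follow the definitions and realizability,'' so your more detailed bookkeeping there (fresh variables via \Cref{lem:dt:glu-var}, monotonicity, naturality) is just an elaboration of the same argument.
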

\begin{proof}
  We do induction on $\mglunat[\vDelta]t a$.
  \begin{itemize}[label=Case]
  \item $a = \zed$,
    \begin{mathpar}
      \inferrule
      {\mtyequiv[\vDelta]{t}{\ze}{\Nat}}
      {\mglunat[\vDelta]{t}{\zed}}
    \end{mathpar}
    We have 
    \begin{align*}
      \mtyequiv[\vDelta]{\elimn {x.M[q(\vsigma, x)]}{(s[\vsigma])}{x,y.u[q(q(\vsigma,
      x), y)]} \ze}{s[\vsigma]}{M[\vsigma, \ze/x]}
    \end{align*}
    so the goal is discharged by
    $\mglutm[\vDelta]{s[\vsigma]}{M[\vsigma,
      \ze/x]}{\intp{s}(\vrho)}{\intp{M}(\ext(\vrho, x, \zed))}$ modulo
    \Cref{lem:dt:glutm-resp-equiv-ty,lem:dt:glutm-resp-equiv-tm}.
    
  \item $a = \sud{a'}$,
    \begin{mathpar}
      \inferrule
      {\mtyequiv[\vDelta]{t}{\su t'}{\Nat} \\ \mglunat[\vDelta]{t'}{a'}}
      {\mglunat[\vDelta]{t}{\sud b}}
    \end{mathpar}
    We have
    \begin{align*}
      \mtyequiv[\vDelta]{\elimn {x.M[q(\vsigma, x)]}{(s[\vsigma])}{x,y.u[q(q(\vsigma,
      x), y)]}{t}}{\\{u[\vsigma, t'/x, \elimn {x.M[q(\vsigma,
      x)]}{(s[\vsigma])}{x,y.u[q(q(\vsigma, x), y)]}{t'}/y]}}M[\vsigma, t/x]
    \end{align*}
    
    We first have
    \begin{align*}
      \mglutm[\vDelta]{t'}{\Nat}{a'}{\Nd}
    \end{align*}
    Then by IH, we have
    $\mglutm[\vDelta]{\elimn {x.M[q(\vsigma, x)]}{(s[\vsigma])}{x,y.u[q(q(\vsigma, x),
        y)]}{t'}}{M[\vsigma, x : t']} {\trec \cdot (x.M ,  \vrho, \intp{s}(\vrho),
      (x,y.u, \vrho), a')}{\intp{M}(\ext(\vrho, x, a'))}$. We plug this in the $u$
    case and then we are done modulo
    \Cref{lem:dt:glutm-resp-equiv-ty,lem:dt:glutm-resp-equiv-tm}.
    
  \item $a = \uparrow^\Nd(c)$,
    \begin{mathpar}
      \inferrule
      {c \in \bot \\
        \forall \vdelta : \vDelta' \To_r \vDelta.\ \mtyequiv[\vDelta']{t[\vdelta]}{\Rne_\alpha(c[\vdelta])}{\Nat}}
      {\mglunat[\vDelta]{t}{\uparrow^\Nd(c)}}
    \end{mathpar}
  \end{itemize}
  We prove
  $\mglutmd[\vDelta]{\elimn {x.M[q(\vsigma, x)]}{(s[\vsigma])}{x,y.u[q(q(\vsigma, x),
      y)]}t}{M[\vsigma, t/x]} {\trec(x.M ,  \vrho, \intp{s}(\vrho),
    (x,y.u, \vrho), c)}{\intp{M}(\ext(\vrho, x, a))}$ instead and apply
  realizability to obtain the goal.
  
  To prove the subgoal, we assume $\vdelta : \vDelta' \To_r \vDelta$, and then prove the
  subgoal by just following the definitions and realizability. 
\end{proof}

\begin{lemma}
  \begin{mathpar}
    \inferrule
    {\mSemtyp[\vGamma; (\Gamma, x : \Nat)]{M}{\Se_i} \\
      \mSemtyp[\vGamma; \Gamma]{s}{M[\vect I, \ze/x]} \\
      \mSemtyp[\vGamma; (\Gamma, x : \Nat, y : M)]{u}{M[\wk_x \circ \wk_y, \su x/x]} \\
      \mSemtyp[\vGamma; \Gamma] t \Nat}
    {\mSemtyp[\vGamma; \Gamma]{\elimn {x.M} s {x,y.u} t}{M[\vect I, t/x]}}
  \end{mathpar}
\end{lemma}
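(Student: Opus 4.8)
The plan is to follow the same template used for every other semantic typing rule in this section. Unfolding $\mSemtyp[\vGamma;\Gamma]{\cdot}{\cdot}$, I fix an arbitrary $\mglutms[\vDelta]{\vsigma}{\vGamma;\Gamma}{\vrho}$ and must produce
\[
  \mglutm[\vDelta]{(\elimn{x.M}{s}{x,y.u}{t})[\vsigma]}{(M[\vect I, t/x])[\vsigma]}{\intp{\elimn{x.M}{s}{x,y.u}{t}}(\vrho)}{\intp{M[\vect I, t/x]}(\vrho)}.
\]
By the definition of evaluation, $\intp{\elimn{x.M}{s}{x,y.u}{t}}(\vrho) = \trec \cdot (x.M, \intp{s}(\vrho), (x,y.u), \intp{t}(\vrho), \vrho)$, which is precisely the domain value delivered by the helper lemma stated just above (the gluing-model analogue of \Cref{lem:dt:rec-per-helper}) once its scrutinee data is instantiated with $t[\vsigma]$ and $\intp{t}(\vrho)$. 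So the whole proof reduces to discharging that lemma's five hypotheses and then adjusting the conclusion by conversion.

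The first hypothesis is exactly the assumed $\mglutms[\vDelta]{\vsigma}{\vGamma;\Gamma}{\vrho}$. For the motive hypothesis, given any $\mglutm[\vDelta]{t'}{\Nat}{a}{\Nd}$, I extend to $\mglutms[\vDelta]{\vsigma, t'/x}{\vGamma;(\Gamma, x:\Nat)}{\ext(\vrho, x, a)}$ — using $\mtyequiv[\vDelta]{\Nat[\vsigma]}{\Nat}{\Se_i}$ to retype $t'$ — and apply the premise $\mSemtyp[\vGamma;(\Gamma, x:\Nat)]{M}{\Se_i}$; after rewriting $\Se_i[\vsigma, t'/x] \approx \Se_i$ and reading off the gluing relation at $\Ud_i$, this yields $\mgluty[\vDelta]{M[\vsigma, t'/x]}{\intp{M}(\ext(\vrho, x, a))}$ as required. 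The base-case hypothesis comes from instantiating $\mSemtyp[\vGamma;\Gamma]{s}{M[\vect I, \ze/x]}$ at $\vsigma$: $(M[\vect I, \ze/x])[\vsigma] \approx M[\vsigma, \ze/x]$ by the substitution algebra and $\intp{M[\vect I, \ze/x]}(\vrho) = \intp{M}(\ext(\vrho, x, \zed))$ by evaluating the substitution, so \Cref{lem:dt:glutm-resp-equiv-ty} gives the stated shape. The step-case hypothesis is handled in the same way from $\mSemtyp[\vGamma;(\Gamma, x:\Nat, y:M)]{u}{M[\wk_x \circ \wk_y, \su x/x]}$, now extending twice to $\mglutms[\vDelta]{\vsigma, t'/x, u'/y}{\vGamma;(\Gamma, x:\Nat, y:M)}{\ext(\vrho, x, a, y, b)}$ and chasing $(M[\wk_x \circ \wk_y, \su x/x])[\vsigma, t'/x, u'/y] \approx M[\vsigma, \su t'/x]$. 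Finally, the scrutinee hypothesis $\mglunat[\vDelta]{t[\vsigma]}{\intp{t}(\vrho)}$ is immediate from $\mSemtyp[\vGamma;\Gamma]{t}{\Nat}$, since $\mglutm[\vDelta]{t[\vsigma]}{\Nat}{\intp{t}(\vrho)}{\Nd}$ literally unfolds to that (together with $\mtyequiv[\vDelta]{\Nat[\vsigma]}{\Nat}{\Se_i}$). Wherever well-definedness of a semantic-type argument (membership in $\Uc$) is needed, I invoke the fundamental theorem of the PER model (\Cref{thm:dt:per-fund}), using \Cref{lem:dt:glu-stack-vrho} to obtain $\vrho \in \intp{\vGamma;\Gamma}$ and hence $\intp{M}(\ext(\vrho, x, \cdot)) \in \Uc$.

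With the helper lemma applied I get $\mglutm$ for the term $\elimn{x.M[q(\vsigma, x)]}{(s[\vsigma])}{x,y.u[q(q(\vsigma, x), y)]}{(t[\vsigma])}$ at type $M[\vsigma, t[\vsigma]/x]$ and semantic type $\intp{M}(\ext(\vrho, x, \intp{t}(\vrho)))$, and it remains to transport this to the goal via \Cref{lem:dt:glutm-resp-equiv-ty} and \Cref{lem:dt:glutm-resp-equiv-tm}: the syntactic substitution rule for $\elimn{\cdot}{\cdot}{\cdot}{\cdot}$ identifies the term with $(\elimn{x.M}{s}{x,y.u}{t})[\vsigma]$; the substitution-composition equivalences give $(M[\vect I, t/x])[\vsigma] \approx M[\vsigma, t[\vsigma]/x]$; and $\intp{M[\vect I, t/x]}(\vrho) = \intp{M}(\ext(\vrho, x, \intp{t}(\vrho)))$ by evaluating the substitution, so both semantic types coincide and the value is already in place. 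I expect the main obstacle to be the bookkeeping in the step case: iterating the gluing-substitution extension through $x:\Nat$ and then $y:M$, and then reducing $M[\wk_x \circ \wk_y, \su x/x]$ composed with $\vsigma, t'/x, u'/y$ down to $M[\vsigma, \su t'/x]$ while keeping the syntactic types synchronized with the semantic type environments $\intp{M}(\ext(\vrho, x, \sud{a}))$. Everything else is a routine application of the conversion lemmas and the previously established substitution equivalences.
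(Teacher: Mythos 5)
Your proposal follows exactly the paper's route: instantiate the helper lemma (the gluing analogue of \Cref{lem:dt:rec-per-helper}) by extracting $\mglunat[\vDelta]{t[\vsigma]}{\intp{t}(\vrho)}$ and discharging its remaining hypotheses from the semantic premises at suitably extended glued substitutions, then transport the result to the stated goal via \Cref{lem:dt:glutm-resp-equiv-ty,lem:dt:glutm-resp-equiv-tm}. The paper's proof is just a terser statement of the same argument, so your more explicit bookkeeping of the motive, base, and step hypotheses is correct and matches it.
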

\begin{proof}
  Assuming $\mglutms[\vDelta]{\vsigma}{\vGamma; \Gamma}{\vrho}$, we extract
  $\mglunat[\vDelta]{t[\vsigma]}{\intp{t}(\vrho)}$. Then we apply the previous lemma
  and obtain the goal up to
  \Cref{lem:dt:glutm-resp-equiv-ty,lem:dt:glutm-resp-equiv-tm}.
\end{proof}

\subsubsection{Universes}

\begin{lemma}
  \begin{mathpar}
    \inferrule
    { }
    {\mSemtyp{\Se_n}{\Se_{n + 1}}}

    \inferrule
    {\mSemtyp{T}{\Se_i}}
    {\mSemtyp{T}{\Se_{i + 1}}}
  \end{mathpar}
\end{lemma}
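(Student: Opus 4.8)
The plan is to unfold the definition of $\mSemtyp$ and reduce each of the two rules to a routine ``level-bumping'' computation inside the $\Ud_j$ clause of the gluing model, using the syntactic cumulativity lemmas (\Cref{lem:dt:cumu}, \Cref{lem:dt:eq-cumu}), the one-step cumulativity lemmas of the PER model (\Cref{lem:dt:per-cumu}) and of the gluing model, and the conversion lemmas (\Cref{lem:dt:gluty-resp-equiv}, \Cref{lem:dt:glutm-resp-equiv-ty}, \Cref{lem:dt:glutm-resp-equiv-tm}). In both cases I would fix an arbitrary $\mglutms[\vDelta]{\vsigma}{\vGamma}{\vrho}$; this yields $\mtyping[\vDelta]{\vsigma}{\vGamma}$ and hence $\vdash \vDelta$ by presupposition, and $\vrho \in \intp{\vGamma}$ by \Cref{lem:dt:glu-stack-vrho}, so that every interpretation $\intp{\Se_k}(\vrho) = \Ud_k$ appearing in the goal is well defined.

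For $\mSemtyp{\Se_n}{\Se_{n+1}}$, the goal after unfolding is $\mglutm[\vDelta]{\Se_n[\vsigma]}{\Se_{n+1}[\vsigma]}{\Ud_n}{\Ud_{n+1}}$. Since $\Se_n[\vsigma] \approx \Se_n$ and $\Se_{n+1}[\vsigma] \approx \Se_{n+1}$ by the substitution rule for universes, by \Cref{lem:dt:glutm-resp-equiv-ty,lem:dt:glutm-resp-equiv-tm} it suffices to establish $\mglutm[\vDelta]{\Se_n}{\Se_{n+1}}{\Ud_n}{\Ud_{n+1}}$. Reading off the $\Ud_j$ clause at semantic type $\Ud_{n+1}$, this splits into: $\typing[\vDelta]{\Se_n}{\Se_{n+1}}$ (the syntactic universe rule), $\mtyequiv[\vDelta]{\Se_{n+1}}{\Se_{n+1}}$ (reflexivity, using $\vdash \vDelta$), $\Ud_n \in \Uc_{n+1}$ (the PER rule instantiated at $n < n+1$), and $\mgluty[\vDelta]{\Se_n}[n+1]{\Ud_n}$. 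This last obligation, again by the $\Ud_j$ clause (now at semantic type $\Ud_n$), is exactly $\mtyequiv[\vDelta]{\Se_n}{\Se_n}{\Se_{n+1}}$, which follows from reflexivity at level $n$ and \Cref{lem:dt:eq-cumu}.

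For the cumulativity rule, assume $\mSemtyp{T}{\Se_i}$. The premise applied to the fixed $\mglutms[\vDelta]{\vsigma}{\vGamma}{\vrho}$ gives, after a conversion step through $\Se_i[\vsigma] \approx \Se_i$, the relation $\mglutm[\vDelta]{T[\vsigma]}{\Se_i}{\intp{T}(\vrho)}{\Ud_i}$, which unfolds to $\typing[\vDelta]{T[\vsigma]}{\Se_i}$, $\intp{T}(\vrho) \in \Uc_i$, and $\mgluty[\vDelta]{T[\vsigma]}[i]{\intp{T}(\vrho)}$. The goal is $\mglutm[\vDelta]{T[\vsigma]}{\Se_{i+1}}{\intp{T}(\vrho)}{\Ud_{i+1}}$ (then convert back using $\Se_{i+1} \approx \Se_{i+1}[\vsigma]$ and $\intp{\Se_{i+1}}(\vrho) = \Ud_{i+1}$), whose $\Ud_j$-clause obligations are: $\typing[\vDelta]{T[\vsigma]}{\Se_{i+1}}$ by \Cref{lem:dt:cumu}; $\mtyequiv[\vDelta]{\Se_{i+1}}{\Se_{i+1}}$ by reflexivity; $\intp{T}(\vrho) \in \Uc_{i+1}$ by one-step cumulativity of the PER model (\Cref{lem:dt:per-cumu}); and $\mgluty[\vDelta]{T[\vsigma]}[i+1]{\intp{T}(\vrho)}$ by one-step cumulativity of the gluing model, whose precondition $\intp{T}(\vrho) \in \Uc_i$ we already have in hand.

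I do not expect a genuine obstacle; the only point demanding care is the level bookkeeping in the $\Ud_j$ clause. Because $\mSemtyp$ is phrased with the limit gluing relations while the cumulativity lemma for the gluing model is stated one level at a time, I must invoke it with the correct intermediate $\Uc_i$-membership, and I should check that the conversion lemmas used to absorb the $\Se_k[\vsigma] \approx \Se_k$ equations still apply at the limit --- which they do, since they hold at every finite level and the limit is the directed union of those.
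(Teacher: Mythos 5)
Your proposal is correct and follows the route the paper intends: the paper dismisses this lemma as ``Immediate,'' and your argument is exactly the unfolding of that immediacy --- instantiate the semantic judgment at an arbitrary $\mglutms[\vDelta]{\vsigma}{\vGamma}{\vrho}$, absorb $\Se_k[\vsigma] \approx \Se_k$ via the conversion lemmas, and discharge the $\Ud_j$-clause obligations using the syntactic universe rule, reflexivity, $\Uc$-level membership, and the one-step cumulativity lemmas of the PER and gluing models. The level bookkeeping you flag (including the limit versus finite-level point) is handled correctly, so no further changes are needed.
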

\begin{proof}
  Immediate.
\end{proof}

\subsubsection{Modality}

\begin{lemma}
  \begin{mathpar}
    \inferrule
    {\mSemtyp[\vGamma; \cdot]{T}{\Se_i}}
    {\mSemtyp{\square T}{\Se_i}}
  \end{mathpar}
\end{lemma}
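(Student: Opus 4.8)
The plan is to peel off the semantic judgment and reduce to the premise via the structural gluing lemmas, mirroring the soundness proof of the $\boxit$ introduction rule. Fix a glued substitution $\mglutms[\vDelta]{\vsigma}{\vGamma}{\vrho}$; we must establish $\mglutm[\vDelta]{\square T[\vsigma]}{\Se_i[\vsigma]}{\intp{\square T}(\vrho)}{\intp{\Se_i}(\vrho)}$. Since $\intp{\Se_i}(\vrho) = \Ud_i$, $\Se_i[\vsigma] \approx \Se_i$, and $\intp{\square T}(\vrho) = \squared A'$ with $A' := \intp{T}(\ext(\vrho))$, after \Cref{lem:dt:glutm-resp-equiv-ty} it suffices, according to the $\Ud_i$ clause of the gluing relation, to provide: (i) $\typing[\vDelta]{\square T[\vsigma]}{\Se_i}$, (ii) $\squared A' \in \Uc_i$, and (iii) $\mgluty[\vDelta]{\square T[\vsigma]}{\squared A'}$.

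First I would extend $\vsigma$ by an empty context, obtaining $\mglutms[\vDelta; \cdot]{\sextt\vsigma 1}{\vGamma; \cdot}{\ext(\vrho)}$ directly from the definition of glued substitutions in the ``$\cdot$'' case (truncation by $1$ returns the original pair, the offset being $1$). Applying the premise $\mSemtyp[\vGamma; \cdot]{T}{\Se_i}$ to it, and again using $\Se_i[\sextt\vsigma 1] \approx \Se_i$ and \Cref{lem:dt:glutm-resp-equiv-ty}, gives $\mglutm[\vDelta; \cdot]{T[\sextt\vsigma 1]}{\Se_i}{A'}{\Ud_i}$; unfolding this (and using \Cref{lem:dt:inv-glu-tm}) yields $\typing[\vDelta; \cdot]{T[\sextt\vsigma 1]}{\Se_i}$, $A' \in \Uc_i$, and $\mgluty[\vDelta; \cdot]{T[\sextt\vsigma 1]}{A'}$. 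From $A' \in \Uc_i$ and monotonicity (\Cref{lem:dt:u-el-resp-mon}) I get $\squared A' \in \Uc_i$, settling (ii); and from $\typing[\vDelta; \cdot]{T[\sextt\vsigma 1]}{\Se_i}$ the formation rule for $\square$ gives $\typing[\vDelta]{\square (T[\sextt\vsigma 1])}{\Se_i}$, while the substitution rule for $\square$ gives $\mtyequiv[\vDelta]{\square T[\vsigma]}{\square (T[\sextt\vsigma 1])}{\Se_i}$; together these discharge (i) and supply the witness $T' := T[\sextt\vsigma 1]$ and the equivalence $\mtyequiv[\vDelta]{\square T[\vsigma]}{\square T'}{\Se_i}$ required by the $\squared$-clause of $\mgluty$.

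It remains to prove (iii): for every $\vDelta'$ and every restricted weakening $\vdelta : \vGamma' \To_r \vDelta$ with $\vdash \vGamma'; \vDelta'$, that $\mgluty[\vGamma'; \vDelta']{T'[\sextt\vdelta{|\vDelta'|}]}{A'[\sextt\vdelta{|\vDelta'|}]}$. By \Cref{lem:dt:glue-stack-mon}, $\mglutms[\vGamma']{\vsigma \circ \vdelta}{\vGamma}{\vrho[\vdelta]}$, which I extend to $\mglutms[\vGamma'; \vDelta']{\sextt{(\vsigma\circ\vdelta)}{|\vDelta'|}}{\vGamma; \cdot}{\ext(\vrho[\vdelta], |\vDelta'|)}$ (the ``$\cdot$'' case again, using $\vdash \vGamma'; \vDelta'$). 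Applying the premise once more and stripping $\Se_i$ off the codomain as before gives $\mgluty[\vGamma'; \vDelta']{T[\sextt{(\vsigma\circ\vdelta)}{|\vDelta'|}]}{\intp{T}(\ext(\vrho[\vdelta], |\vDelta'|))}$. The two sides then coincide: syntactically, $T'[\sextt\vdelta{|\vDelta'|}] = T[\sextt\vsigma 1][\sextt\vdelta{|\vDelta'|}] \approx T[(\sextt\vsigma 1) \circ (\sextt\vdelta{|\vDelta'|})] \approx T[\sextt{(\vsigma \circ \vdelta)}{|\vDelta'|}]$ by the substitution-composition rule and the $\sextt{}{}$-composition equivalence (using $\trunc{(\sextt\vdelta{|\vDelta'|})}{1} = \vdelta$ and $\Ltotal{(\sextt\vdelta{|\vDelta'|})}{1} = |\vDelta'|$); semantically, $A'[\sextt\vdelta{|\vDelta'|}] = \intp{T}(\ext(\vrho))[\sextt{\mt{\vdelta}}{|\vDelta'|}] = \intp{T}(\ext(\vrho)[\sextt{\mt{\vdelta}}{|\vDelta'|}]) = \intp{T}(\ext(\vrho[\vdelta], |\vDelta'|))$ by the naturality equality for $\intp{T}$ (\Cref{conj:dt:intp-mon}) together with the environment identity $\ext(\vrho)[\sextt\kappa m] = \ext(\vrho[\kappa], m)$. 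So \Cref{lem:dt:gluty-resp-equiv} finishes (iii), and assembling (i)--(iii) completes the proof.

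The main obstacle I expect is exactly this last bookkeeping: matching $(\sextt\vsigma 1) \circ (\sextt\vdelta{|\vDelta'|})$ with $\sextt{(\vsigma\circ\vdelta)}{|\vDelta'|}$ through the truncation-offset computations, and the parallel semantic identity $\ext(\vrho)[\sextt{\mt{\vdelta}}{|\vDelta'|}] = \ext(\vrho[\vdelta], |\vDelta'|)$; crucially one must notice that \Cref{lem:dt:glue-stack-mon} transports $\vrho$ only along $\vdelta$ (not along $\vsigma\circ\vdelta$), so the modal transformation $\mt{\vsigma}$ never intrudes and the two presentations of the semantic type are literally equal rather than merely related by $\Uc$. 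Everything else is a direct instantiation of lemmas already established for the gluing model.
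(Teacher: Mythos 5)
Your proof is correct, and your bookkeeping checks out: the extension $\mglutms[\vDelta;\cdot]{\sextt\vsigma 1}{\vGamma;\cdot}{\ext(\vrho)}$ is indeed immediate from the ``$\cdot$'' clause, the identity $(\sextt\vsigma 1)\circ(\sextt\vdelta{|\vDelta'|}) \approx \sextt{(\vsigma\circ\vdelta)}{|\vDelta'|}$ follows from the composition rule with $\trunc{(\sextt\vdelta{|\vDelta'|})}1 = \vdelta$ and $\Ltotal{\sextt\vdelta{|\vDelta'|}}1 = |\vDelta'|$, and $\ext(\vrho)[\sextt{\mt{\vdelta}}{|\vDelta'|}] = \ext(\vrho[\vdelta],|\vDelta'|)$ holds by computing both environments pointwise. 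However, you take a more laborious route than the paper. The paper's proof is a one-liner by \Cref{lem:dt:glu-mon}: apply the premise once to $\sextt\vsigma 1 \sim \ext(\vrho)$ to obtain $\mgluty[\vDelta;\cdot]{T[\sextt\vsigma 1]}{A'}$, and then discharge the universally quantified $\squared$-clause by transporting this single gluing fact along the restricted weakening $\sextt\vdelta{|\vDelta'|} : \vGamma';\vDelta' \To_r \vDelta;\cdot$ using monotonicity of the gluing model, which lands exactly on $\mgluty[\vGamma';\vDelta']{T'[\sextt\vdelta{|\vDelta'|}]}{A'[\sextt\vdelta{|\vDelta'|}]}$ with $T' := T[\sextt\vsigma 1]$ and no composition or naturality reasoning needed. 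You instead avoid \Cref{lem:dt:glu-mon} entirely and re-instantiate the semantic premise at every extended world via \Cref{lem:dt:glue-stack-mon}, paying for it with the $\sextt{}{}$-composition equivalence, \Cref{conj:dt:intp-mon}, and the environment identity. Your version is self-contained in that respect, but it re-proves in this special case precisely what gluing monotonicity already packages; if you are writing this up, invoking \Cref{lem:dt:glu-mon} collapses your step (iii) to a single line.
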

\begin{proof}
  Immediate by \Cref{lem:dt:glu-mon}. 
\end{proof}

\begin{lemma}
  \begin{mathpar}
    \inferrule
    {\mSemtyp[\vGamma; \cdot]{t}{T}}
    {\mSemtyp{\boxit t}{\square T}}
  \end{mathpar}
\end{lemma}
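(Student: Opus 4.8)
The plan is to mirror the simply typed proof of the analogous $\tbox$ rule, now threading everything through the dependent gluing model, which additionally carries a syntactic type and the semantic type $\intp{T}$. Write $H_1$ for the hypothesis $\mSemtyp[\vGamma; \cdot] t T$. Given $H_2 := \mglutms[\vDelta]{\vsigma}{\vGamma}{\vrho}$, the goal is $\mglutm[\vDelta]{\boxit t[\vsigma]}{\square T[\vsigma]}{\intp{\boxit t}(\vrho)}{\intp{\square T}(\vrho)}$; since $\intp{\square T}(\vrho) = \squared(\intp{T}(\ext(\vrho)))$, this is an instance of the $\squared A'$ clause of the gluing model with $A' := \intp{T}(\ext(\vrho))$. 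So I must supply: the syntactic typing $\mtyping[\vDelta]{\boxit t[\vsigma]}{\square T[\vsigma]}$ (routine from the substitution typing rules and the $\tbox$ rule); the membership $\intp{\boxit t}(\vrho) \in \El(\squared A')$; a witness $T'$ with $\mtyequiv[\vDelta]{\square T[\vsigma]}{\square T'}{\Se_i}$; and the Kripke family of gluings indexed by fresh $\vDelta'$ and restricted weakenings.

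First I would obtain the ``base'' instance of $H_1$. From $H_2$ one builds $\mglutms[\vDelta; \cdot]{\sextt\vsigma 1}{\vGamma; \cdot}{\ext(\vrho)}$ straight from the $\mglutms$-rule for a stack ending in $\cdot$, taking witnesses $\vsigma' := \vsigma$, $n := 1$ and using $\trunc{(\sextt\vsigma 1)}1 = \vsigma$, $\Ltotal{(\sextt\vsigma 1)}1 = 1 = \Ltotal{\ext(\vrho)}1$, $\trunc{\ext(\vrho)}1 = \vrho$ (plus the evident well-formedness checks). Applying $H_1$ gives $\mglutm[\vDelta; \cdot]{t[\sextt\vsigma 1]}{T[\sextt\vsigma 1]}{\intp{t}(\ext(\vrho))}{\intp{T}(\ext(\vrho))}$. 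From this and \Cref{lem:dt:inv-glu-tm} I read off $\mgluty[\vDelta; \cdot]{T[\sextt\vsigma 1]}{\intp{T}(\ext(\vrho))}$, hence $\mtyping[\vDelta; \cdot]{T[\sextt\vsigma 1]}{\Se_i}$ at the appropriate level; by the $\square$-formation rule and the $\square$-substitution rule $\mtyequiv[\vDelta]{\square T[\vsigma]}{\square (T[\sextt\vsigma 1])}{\Se_i}$, which discharges the existential with $T' := T[\sextt\vsigma 1]$. Likewise \Cref{lem:dt:inv-glu-tm} gives $\intp{t}(\ext(\vrho)) \in \El(\intp{T}(\ext(\vrho)))$, and then the definition of $\El$ at $\squared$ together with \Cref{lem:dt:u-el-resp-mon} yields $\intp{\boxit t}(\vrho) = \tbox(\intp{t}(\ext(\vrho))) \in \El(\squared(\intp{T}(\ext(\vrho))))$.

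For the Kripke part of the $\squared$ clause I would fix fresh $\vDelta'$ with $\vdash \vDelta''; \vDelta'$ and a restricted weakening $\vdelta : \vDelta'' \To_r \vDelta$. By \Cref{lem:dt:glue-stack-mon} applied to $H_2$, $\mglutms[\vDelta'']{\vsigma \circ \vdelta}{\vGamma}{\vrho[\vdelta]}$, and then, again by the $\mglutms$-rule for a $\cdot$-ending stack, $\mglutms[\vDelta''; \vDelta']{\sextt{(\vsigma \circ \vdelta)}{|\vDelta'|}}{\vGamma; \cdot}{\ext(\vrho[\vdelta], |\vDelta'|)}$. Feeding this to $H_1$ produces a gluing of $t[\sextt{(\vsigma \circ \vdelta)}{|\vDelta'|}]$ at type $T[\sextt{(\vsigma \circ \vdelta)}{|\vDelta'|}]$ with semantic value $\intp{t}(\ext(\vrho[\vdelta], |\vDelta'|))$ and semantic type $\intp{T}(\ext(\vrho[\vdelta], |\vDelta'|))$. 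It then remains to check, up to \Cref{lem:dt:glutm-resp-equiv-ty,lem:dt:glutm-resp-equiv-tm}, that these four components agree with what the $\squared$ clause demands: on the term, $\unbox{|\vDelta'|}{(\boxit t[\vsigma][\vdelta])}$ reduces by the $\square$-substitution law, the $\beta$-rule for $\square$, and the offset-composition laws ($\sextt{(\vsigma\circ\vdelta)}1 \circ \sextt{\vect I}{|\vDelta'|} = \sextt{(\vsigma\circ\vdelta)}{|\vDelta'|}$) to $t[\sextt{(\vsigma \circ \vdelta)}{|\vDelta'|}]$; on the syntactic type, $T'[\sextt\vdelta{|\vDelta'|}] = T[\sextt\vsigma 1][\sextt\vdelta{|\vDelta'|}]$ simplifies by $\sextt\vsigma 1 \circ \sextt\vdelta{|\vDelta'|} = \sextt{(\vsigma\circ\vdelta)}{|\vDelta'|}$ to $T[\sextt{(\vsigma \circ \vdelta)}{|\vDelta'|}]$; on the domain value, $\tunbox\cdot(|\vDelta'|, \intp{\boxit t}(\vrho)[\vdelta])$ equals $\intp{t}(\ext(\vrho[\vdelta], |\vDelta'|))$ by the naturality equalities (\Cref{conj:dt:intp-mon}), the $\tbox$/$\tunbox\cdot$ reduction, and the computation of $\ext$ under an \UnMoTs; and $A'[\sextt{\mt{\vdelta}}{|\vDelta'|}] = (\intp{T}(\ext(\vrho)))[\sextt{\mt{\vdelta}}{|\vDelta'|}] = \intp{T}(\ext(\vrho[\vdelta], |\vDelta'|))$ by the same naturality equality. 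Abstracting over $\vDelta'$, the well-formedness hypothesis and $\vdelta$ completes the $\squared$ clause, and abstracting over $H_2$ yields $\mSemtyp{\boxit t}{\square T}$.

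The main obstacle is the bookkeeping in the last step: one must be precise with the algebra of offsets and \UnMoTs{} (the identities $\sextt\vsigma 1 \circ \sextt\vdelta k = \sextt{(\vsigma \circ \vdelta)}k$, $\ext(\vrho)[\sextt\kappa k] = \ext(\vrho[\cdots], k)$, etc.) so that the syntactic term, syntactic type, domain value and semantic type all land in exactly the form the $\squared$ clause requires, and keep an eye on universe levels --- handled, as elsewhere, by working with the limit $\Uc = \Uc_{\infty}$ and invoking cumulativity where a concrete level is needed. Everything else is a routine instantiation of the infrastructure lemmas already established for the gluing model.
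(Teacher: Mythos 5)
Your proposal is correct and follows essentially the same route as the paper: extend the glued substitution across a restricted weakening to $\sextt{(\vsigma\circ\vdelta)}{n}$ paired with $\ext(\vrho[\vdelta],n)$ (via \Cref{lem:dt:glue-stack-mon} and the $\cdot$-clause of $\circledR$), apply the hypothesis, and convert term, type, value and semantic type using the $\square$ substitution law, the $\beta$ rule, naturality, and \Cref{lem:dt:glutm-resp-equiv-ty,lem:dt:glutm-resp-equiv-tm}, then abstract. The only difference is that you spell out explicitly the base instance yielding the witness $T' = T[\sextt\vsigma 1]$ and the $\El$ membership, which the paper leaves implicit in its final abstraction step.
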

\begin{proof}
  \begin{align*}
    H_1: &\ \mSemtyp[\vGamma; \cdot]{t}{T}
           \tag{by assumption} \\
    H_2: &\ \mglutms[\vDelta]{\vsigma}{\vGamma}{\vrho}
           \tag{by assumption} \\
         &\ \text{assume }\vDelta'', \vdash \vDelta'; \vDelta'', \vdelta : \vDelta' \To_r \vDelta, \text{let }n =
           |\vDelta''| \\
    H_3: &\ \mglutms[\vDelta'; \vDelta'']{\sextt{(\vsigma \circ \vdelta)} n}{\vGamma;
           \cdot}{\ext(\vrho[\vdelta], n)}
           \tag{by \Cref{lem:dt:glue-stack-mon}} \\
         &\ \mglutm[\vDelta'; \vDelta'']{t[\sextt{(\vsigma \circ \vdelta)} n]}
           {T[\sextt{(\vsigma \circ \vdelta)} n]}
           {\intp{t}(\ext(\vrho[\vdelta], n))}{\intp{T}(\ext(\vrho[\vdelta], n))}
           \tag{by $H_1$ and $H_3$} \\
         &\ \mglutm[\vDelta'; \vDelta'']{t[\sextt\vsigma 1][\sextt\vdelta n]}
           {T[\sextt\vsigma 1][\sextt\vdelta n]}
           {\intp{t}(\ext(\vrho))[\sextt\vdelta n]}{\intp{T}(\ext(\vrho))[\sextt\vdelta n]}
           \tag{by \Cref{lem:dt:glutm-resp-equiv-tm,lem:dt:glutm-resp-equiv-ty}} \\
         &\ \mglutm[\vDelta'; \vDelta'']{\unbox{n}{(\boxit t[\vsigma][\vdelta])}}
           {T[\sextt\vsigma 1][\sextt\vdelta n]}
           {\tunbox \cdot (n, \intp{\boxit
           t}(\vrho)[\vdelta])}{\intp{T}(\ext(\vrho))[\sextt\vdelta n]}
           \tag{by \Cref{lem:dt:glutm-resp-equiv-tm}} \\
         &\ \mglutm[\vDelta]{\boxit t[\vsigma]}
           {\square T[\vsigma]}
           {\intp{\boxit t}(\vrho)}{\intp{\square T}(\vrho)}
           \tag{by abstraction}
  \end{align*}
\end{proof}

\begin{lemma}
  \begin{mathpar}
    \inferrule
    {\mSemtyp[\vGamma; \cdot]{T}{\Se_i} \\ \mSemtyp{t}{\square T} \\ |\vDelta| = n}
    {\mSemtyp[\vGamma; \vDelta]{\unbox n t}{T[\sextt{\vect I}n]}}
  \end{mathpar}
\end{lemma}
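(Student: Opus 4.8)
\emph{Proof sketch.} The plan mirrors the simply typed argument (``immediate by definition of $\glu{\square T}$ and \Cref{lem:dt:glue-trunc}''), the only extra work being to track the type component, which now depends on $t$. Fix a glued substitution $\mglutms[\vDelta']{\vsigma}{\vGamma; \vDelta}{\vrho}$; the goal is $\mglutm[\vDelta']{(\unbox n t)[\vsigma]}{(T[\sextt{\vect I}n])[\vsigma]}{\intp{\unbox n t}(\vrho)}{\intp{T[\sextt{\vect I}n]}(\vrho)}$. Write $m := \Ltotal\vsigma n = \Ltotal\vrho n$ (the second equality by \Cref{lem:dt:glue-L}); since $|\vDelta| = n$ we have $\trunc{(\vGamma;\vDelta)}{n} = \vGamma$, and $\vDelta'$ splits as $\trunc{\vDelta'}m$ followed by a segment of length $m$.

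First I would apply \Cref{lem:dt:glue-trunc} to obtain $\mglutms[\trunc{\vDelta'}m]{\trunc\vsigma n}{\vGamma}{\trunc\vrho n}$, and feed this into the hypothesis $\mSemtyp{t}{\square T}$, yielding $\mglutm[\trunc{\vDelta'}m]{t[\trunc\vsigma n]}{(\square T)[\trunc\vsigma n]}{\intp{t}(\trunc\vrho n)}{\intp{\square T}(\trunc\vrho n)}$, where $\intp{\square T}(\trunc\vrho n) = \squared A'$ with $A' := \intp{T}(\ext(\trunc\vrho n))$. I then unfold the $\squared$-clause of the gluing model: it is Kripke over a fresh top segment together with a restricted weakening, so I instantiate the restricted weakening with $\vect I : \trunc{\vDelta'}m \To_r \trunc{\vDelta'}m$ and the fresh segment with the top-$m$ portion of $\vDelta'$, so that the resulting stack is exactly $\vDelta'$. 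This instantiation is legal because $\vdash \vDelta'$ (from $\mtyping[\vDelta']{\vsigma}{\vGamma;\vDelta}$ by presupposition). The output is $\mglutm[\vDelta']{\unbox m{(t[\trunc\vsigma n])}}{T'[\sextt{\vect I}m]}{\tunbox\cdot(m, \intp{t}(\trunc\vrho n))}{A'[\sextt\vone m]}$, where $T'$ is the type witness of the $\squared$-clause ($\mtyequiv[\trunc{\vDelta'}m]{(\square T)[\trunc\vsigma n]}{\square T'}{\Se_i}$), and $\sextt{\mt{\vect I}}m = \sextt\vone m$.

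It remains to rewrite this into the target membership. On the term side, $\unbox m{(t[\trunc\vsigma n])} \approx (\unbox n t)[\vsigma]$ by the substitution-equivalence rule for $\tunbox$ (recall $m = \Ltotal\vsigma n$), so \Cref{lem:dt:glutm-resp-equiv-tm} handles the term. On the semantic side, $\tunbox\cdot(m, \intp{t}(\trunc\vrho n)) = \intp{\unbox n t}(\vrho)$ by definition of $\intp{\cdot}$ and $m = \Ltotal\vrho n$, while $A'[\sextt\vone m] = \intp{T}(\ext(\trunc\vrho n, m)) = \intp{T}(\intp{\sextt{\vect I}n}(\vrho)) = \intp{T[\sextt{\vect I}n]}(\vrho)$ using the naturality equality \Cref{conj:dt:intp-mon} and the definition of $\intp{\sextt{\vect I}n}$. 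For the syntactic type I use the remaining hypothesis $\mSemtyp[\vGamma; \cdot]{T}{\Se_i}$: applied to the glued substitution $\sextt{\trunc\vsigma n}{m}$ into $\vGamma;\cdot$ (obtained directly from $\mglutms[\trunc{\vDelta'}m]{\trunc\vsigma n}{\vGamma}{\trunc\vrho n}$ by unfolding the $\mglutms{\cdot}{\vDelta;\cdot}{\cdot}$ clause, with environment $\ext(\trunc\vrho n, m) = \intp{\sextt{\vect I}n}(\vrho)$), it gives $\mgluty[\vDelta']{(T[\sextt{\vect I}n])[\vsigma]}{\intp{T[\sextt{\vect I}n]}(\vrho)}$ after \Cref{lem:dt:gluty-resp-equiv} and $\sextt{\vect I}n \circ \vsigma \approx \sextt{\trunc\vsigma n}{\Ltotal\vsigma n}$; on the other hand \Cref{lem:dt:inv-glu-tm} applied to the membership above gives $\mgluty[\vDelta']{T'[\sextt{\vect I}m]}{A'[\sextt\vone m]}$, and since the two semantic types coincide, \Cref{lem:ty-glu-d} yields $\mtyequiv[\vDelta']{T'[\sextt{\vect I}m]}{(T[\sextt{\vect I}n])[\vsigma]}{\Se_i}$, so \Cref{lem:dt:glutm-resp-equiv-ty} finishes. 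Abstracting over $\mglutms[\vDelta']{\vsigma}{\vGamma; \vDelta}{\vrho}$ concludes $\mSemtyp[\vGamma; \vDelta]{\unbox n t}{T[\sextt{\vect I}n]}$.

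The main obstacle is the bookkeeping in the unfolding step: one must line up the \emph{arbitrary} fresh segment and restricted weakening appearing in the $\squared$-clause of the gluing model with the \emph{specific} stack $\vDelta'$ and substitution $\vsigma$ carried by the glued substitution — the choice of $\vect I$ and of the top-$m$ portion of $\vDelta'$ is exactly what makes this work — and then reconcile the cluster of substitution-calculus identities ($\square$-substitution, $\sextt{\vect I}n \circ \vsigma \approx \sextt{\trunc\vsigma n}{\Ltotal\vsigma n}$, and the $\Ltotal,\trunc$ laws) with their semantic-side counterparts (the naturality equalities and the equation for $\intp{\sextt{\vect I}n}$) so that $T'[\sextt{\vect I}m]$ matches $A'[\sextt\vone m]$ exactly; routing the type-equivalence through $\mSemtyp[\vGamma;\cdot]{T}{\Se_i}$ together with \Cref{lem:ty-glu-d}/\Cref{lem:dt:inv-glu-tm} is the cleanest way to close that gap without inverting $\square$ on types.
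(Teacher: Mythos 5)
Your proof follows the same route as the paper's: truncate the glued substitution via \Cref{lem:dt:glue-trunc}, feed the result to the semantic hypothesis for $t$, unfold the $\squared$-clause of the gluing model instantiated at the restricted weakening $\vect I$ and the top-$m$ segment of $\vDelta'$, and close with the substitution equations via \Cref{lem:dt:glutm-resp-equiv-tm,lem:dt:glutm-resp-equiv-ty}. The only difference is that you spell out the reconciliation of the existential type witness $T'$ from the $\squared$-clause with $T[\sextt{\vect I}n][\vsigma]$ (routing through $\mSemtyp[\vGamma; \cdot]{T}{\Se_i}$, \Cref{lem:dt:inv-glu-tm}, and \Cref{lem:ty-glu-d} rather than any injectivity of $\square$), a bookkeeping step the paper's proof leaves implicit in its ``by $H_3$'' line.
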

\begin{proof}
  \begin{align*}
    H_1: &\ \mSemtyp{t}{\square T}
           \tag{by assumption} \\
    H_2: &\ \mglutms[\vDelta']{\vsigma}{\vGamma;\vDelta}{\vrho}
           \tag{by assumption} \\
         &\ \mglutms[\trunc{\vDelta'}{\Ltotal\vsigma n}]{\trunc \vsigma n}{\vGamma}{\trunc \vrho n}
           \tag{by \Cref{lem:dt:glue-trunc}} \\
    H_3: &\ \mglutm[\trunc{\vDelta'}{\Ltotal\vsigma n}]
           {t[\trunc \vsigma n]}{\square T[\trunc \vsigma n]}
           {\intp{t}(\trunc \vrho n)}
           {\intp{\square T}(\trunc \vrho n)}
           \tag{by $H_1$ and $H_2$} \\
         &\ \mglutm[\vDelta']
           {\unbox{\Ltotal\vsigma n}{(t[\trunc \vsigma n])}}{T[\sextt{\trunc \vsigma n}{\Ltotal\vsigma n}]}
           {\tunbox \cdot (\Ltotal\vrho n, \intp{t}(\trunc \vrho n))}
           {\intp{T}(\ext(\trunc \vrho n, \Ltotal\vrho n))}
           \tag{by $H_3$} \\
         &\ \mglutm[\vDelta']
           {\unbox{n}{t}[\vsigma]}{T[\sextt{\vect I} n][\vsigma]}
           {\intp{\unbox n t}(\vrho)}
           {\intp{T[\sextt{\vect I} n]}(\vrho)}
           \tag{by definition and \Cref{lem:dt:glutm-resp-equiv-tm,lem:dt:glutm-resp-equiv-ty}}
  \end{align*}
\end{proof}

\subsection{Fundamental Theorem}

Since we have proven the semantic typing rules, we can conclude the fundamental
theorem:
\begin{theorem}[Fundamental]
  \begin{itemize}
  \item If $\mtyping t T$, then $\mSemtyp t T$. 
  \item If $\mtyping \vsigma \vDelta$, then $\mSemtyp \vsigma \vDelta$. 
  \end{itemize}
\end{theorem}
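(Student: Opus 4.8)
The plan is to prove both statements simultaneously by induction on the derivations of $\mtyping t T$ and $\mtyping{\vsigma}{\vDelta}$. For every typing rule of \mintslang, the induction hypotheses supply the semantic judgments $\mSemtyp{}{}$ for the subderivations, and the matching semantic typing lemma established in the preceding subsection discharges the conclusion. Concretely, the variable rule is handled by the semantic variable lemma; the $\Pi$-formation, $\lambda$-introduction, and application rules by the three $\Pi$-type lemmas; the $\Nat$, $\ze$, $\su$, and eliminator rules by the natural-number lemmas (the eliminator via its auxiliary recursion-tracking lemma); the $\Se_n$ rules by the universe lemmas; and the $\square$-formation, $\tbox$, and $\tunbox$ rules by the modality lemmas. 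For substitutions, the rules for $\vect I$, $\vsigma, t/x$, $\wk_x$, $\sextt\vsigma n$, and $\vsigma \circ \vdelta$ are each matched by the corresponding semantic substitution lemma.

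The conversion rules require a small detour. The rule converting a term along an equivalence of types, and the rule converting a substitution along an equivalence of context stacks, are discharged by the two semantic conversion lemmas; those lemmas invoke the fundamental theorem of the PER model (\Cref{thm:dt:per-fund}) to extract the needed $\Uc$-membership and $\vDash$-facts, together with \Cref{lem:dt:glu-resp-u} and \Cref{lem:dt:glue-stack-conv}. Because presupposition guarantees that the premises of every rule are themselves well-formed, and because the boxed premises have already been shown removable, no additional side conditions intervene; in particular there is no need for a separate clause about $\vdash\vGamma$, since $\mglutms[\vDelta]{\vsigma}{\vGamma}{\vrho}$ together with \Cref{lem:dt:glu-stack-vrho} already carries the semantic well-formedness along.

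The genuine content has been front-loaded into the per-rule semantic lemmas, so the induction itself is bookkeeping. The one point that needs attention is that $\mSemtyp t T$ is stated for an arbitrary gluing substitution $\mglutms[\vDelta]{\vsigma}{\vGamma}{\vrho}$, so at each rule one must thread this substitution through the subderivations, transporting it along a restricted weakening with \Cref{lem:dt:glue-stack-mon} whenever a rule opens a new binder or a new world, and truncating it with \Cref{lem:dt:glue-trunc} whenever a rule drops contexts. I expect the least mechanical step to be the eliminator for natural numbers: the recursion-tracking lemma must be applied with precisely the right instantiation of motive, base case, step case, and gluing substitution so that the evaluated $\trec\cdot$ value is glued to the syntactic eliminator, and the resulting type must be massaged into the form $M[\vect I, t/x][\vsigma]$ using \Cref{lem:dt:glutm-resp-equiv-ty,lem:dt:glutm-resp-equiv-tm}. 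Everything else follows the template already used in the simply typed soundness proof.
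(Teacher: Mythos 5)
Your overall strategy—mutual induction on the two typing derivations, with each case discharged by the per-rule semantic lemma proved in the preceding subsections, and the conversion rules handled via \Cref{thm:dt:per-fund}, \Cref{lem:dt:glu-resp-u} and \Cref{lem:dt:glue-stack-conv}—is exactly the paper's proof skeleton. However, there is a genuine gap in your claim that ``because the boxed premises have already been shown removable, no additional side conditions intervene.'' Several of the semantic lemmas you rely on need \emph{semantic} well-formedness of types that the unboxed syntactic rules do not carry as premises: the application lemma requires $\mSemtyp[\vGamma; (\Gamma, x : S)]{T}{\Se_i}$, the $\lambda$-lemma requires $\mSemtyp[\vGamma; \Gamma]{S}{\Se_i}$, and the \tunbox{} lemma requires $\mSemtyp[\vGamma; \cdot]{T}{\Se_i}$. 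If you induct on derivations in the system \emph{without} the boxed premises, these facts are not available from the induction hypothesis, and presupposition does not rescue you: it only produces syntactic judgments such as $\mjudge[\vGamma;(\Gamma,x:S)]{T}$, which are not subderivations of the derivation you are inducting on, so the IH cannot be applied to them (doing so would be circular or break well-foundedness).

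The paper resolves this precisely the other way around from what you assert: the induction is carried out in the \emph{adjusted} system that retains the boxed premises, so that the strong IH supplies the extra semantic judgments needed by the application, $\lambda$, \tunbox{} (and eliminator/substitution) lemmas; only afterwards is the previously established equivalence of the boxed and unboxed systems used to transfer the fundamental theorem back to the official calculus. Your proof would get stuck at those cases unless you either adopt this device or separately prove that, e.g., $\mSemtyp{t}{\Pi(x:S).T}$ already entails $\mSemtyp[\vGamma;(\Gamma,x:S)]{T}{\Se_i}$, which is an additional (nontrivial) lemma you neither state nor prove. The rest of your outline, including the threading of the gluing substitution via \Cref{lem:dt:glue-stack-mon} and \Cref{lem:dt:glue-trunc} and the treatment of the natural-number eliminator, matches the paper.
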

In the theorem, we need to work in a slightly adjusted system with some extra premises
for strong IHs. We can discharge these premises once the theorem is established. 

We also need to relate $\vect I$ and $\uparrow^{\vGamma}$:
\begin{lemma}
  $\mglutms{\vect I}{\vGamma}{\uparrow^{\vGamma}}$
\end{lemma}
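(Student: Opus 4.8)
The plan is to induct on the well-formedness derivation $\vdash \vGamma$, unfolding $\uparrow^{\vGamma}$ and the gluing relation $\mglutms{\vsigma}{\vDelta}{\vrho}$ in lockstep, since both peel one context (or one variable) off the top of the stack at a time. The base case $\vGamma = \epsilon; \cdot$ reduces to checking $\mtyping[\epsilon; \cdot]{\vect I}{\epsilon; \cdot}$. For $\vGamma = \vGamma'; \cdot$, I would unfold $\mglutms[\vGamma'; \cdot]{\vect I}{\vGamma'; \cdot}{\uparrow^{\vGamma'; \cdot}}$ and supply the existential witnesses $\vsigma' := \vect I$ and $n := 1$: since $\uparrow^{\vGamma'; \cdot} = \ext(\uparrow^{\vGamma'})$ one computes $\Ltotal{\vect I}{1} = 1 = \Ltotal{\uparrow^{\vGamma'; \cdot}}{1}$, $\trunc{\vect I}{1} = \vect I$, and $\trunc{\uparrow^{\vGamma'; \cdot}}{1} = \uparrow^{\vGamma'}$, so the remaining obligation is $\mglutms[\vGamma']{\vect I}{\vGamma'}{\uparrow^{\vGamma'}}$, which is exactly the induction hypothesis (with a harmless appeal to reflexivity of $\approx$ for the first conjunct).

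The only nontrivial case is $\vGamma = \vGamma'; (\Gamma, x : T)$, where $\uparrow^{\vGamma} = \ext(\uparrow^{\vGamma'; \Gamma}, x, \uparrow^{\intp{T}(\uparrow^{\vGamma'; \Gamma})}(\alpha^{-1}(x)))$. Here I would take the witnesses in the $\mglutms$ clause to be $\vsigma' := \wk_x$ and $t := x$, using $\wk_x \circ \vect I \approx \wk_x$ and $x[\vect I] \approx x$ together with the fact that $\drop$ inverts the topmost $\ext$, so that $\drop(\uparrow^{\vGamma}, x) = \uparrow^{\vGamma'; \Gamma}$ and the looked-up value is $\rho(x) = \uparrow^{\intp{T}(\uparrow^{\vGamma'; \Gamma})}(\alpha^{-1}(x))$. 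To supply $\intp{T}(\uparrow^{\vGamma'; \Gamma}) \in \Uc$ and the type gluing $\mgluty[\vGamma'; \Gamma]{T}{\intp{T}(\uparrow^{\vGamma'; \Gamma})}$, I would invoke the (already established) fundamental theorem of the gluing model on $\mtyping[\vGamma'; \Gamma]{T}{\Se_i}$ — obtained by inverting $\vdash \vGamma'; (\Gamma, x : T)$ — instantiated at the induction-hypothesis environment $\mglutms[\vGamma'; \Gamma]{\vect I}{\vGamma'; \Gamma}{\uparrow^{\vGamma'; \Gamma}}$, and then read the type gluing off the $\Ud_i$ clause in the definition of the term-gluing relation (up to $T[\vect I] \approx T$, $\Se_i[\vect I] \approx \Se_i$, and $\intp{\Se_i}(\uparrow^{\vGamma'; \Gamma}) = \Ud_i$). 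With this in hand, \Cref{lem:dt:glu-var} yields $\mglutmd[\vGamma]{x}{T[\wk_x]}{\alpha^{-1}(x)}{\intp{T}(\uparrow^{\vGamma'; \Gamma})}$, and realizability promotes it to the term gluing $\mglutm[\vGamma]{x}{T[\wk_x]}{\uparrow^{\intp{T}(\uparrow^{\vGamma'; \Gamma})}(\alpha^{-1}(x))}{\intp{T}(\uparrow^{\vGamma'; \Gamma})}$, which is precisely what is required for $t = x$.

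The last obligation in this case is $\mglutms[\vGamma]{\wk_x}{\vGamma'; \Gamma}{\uparrow^{\vGamma'; \Gamma}}$. For this I would transport the induction hypothesis $\mglutms[\vGamma'; \Gamma]{\vect I}{\vGamma'; \Gamma}{\uparrow^{\vGamma'; \Gamma}}$ along the restricted weakening $\wk_x : \vGamma \To_r \vGamma'; \Gamma$ via \Cref{lem:dt:glue-stack-mon}, obtaining $\mglutms[\vGamma]{\vect I \circ \wk_x}{\vGamma'; \Gamma}{\uparrow^{\vGamma'; \Gamma}[\wk_x]}$; since $\mt{\wk_x} = \vone$ and $\vrho[\vone] = \vrho$ the environment is left unchanged, and $\vect I \circ \wk_x \approx \wk_x$ lets \Cref{lem:dt:glue-stack-resp-equiv} finish. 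I expect the main obstacle to be organizational rather than conceptual: correctly lining up the three moving parts of the $\vGamma'; (\Gamma, x : T)$ clause — in particular recognizing that $\drop$ inverts the topmost $\ext$ in $\uparrow^{\vGamma}$ and that a pure local weakening such as $\wk_x$ acts as the identity on semantic environments because it contributes offset $\vone$ — and checking that the appeal to the gluing fundamental theorem is not circular, which it is not, since this lemma is consumed only afterwards in the soundness proof and the fundamental theorem never refers to it.
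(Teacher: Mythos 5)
Your proposal is correct and is essentially the paper's own argument: the paper proves this lemma by induction on $\vGamma$ together with \Cref{lem:dt:glu-var}, which is exactly your skeleton. The extra details you supply — the gluing fundamental theorem (already established, so not circular) to obtain $\mgluty[\vGamma'; \Gamma]{T}{\intp{T}(\uparrow^{\vGamma'; \Gamma})}$, realizability to promote \Cref{lem:dt:glu-var} to term gluing, and \Cref{lem:dt:glue-stack-mon} with $\mt{\wk_x} = \vone$ for the $\wk_x$ leg — are precisely what the paper's terse ``immediate'' leaves implicit.
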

\begin{proof}
  Immediate by induction on $\vGamma$ and apply \Cref{lem:dt:glu-var}. 
\end{proof}

\begin{theorem}[Soundness]
  If $\mtyping t T$, then $\mtyequiv{t}{\nbe^T_{\vGamma}(t)}T$. 
\end{theorem}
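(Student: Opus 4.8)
The plan is to replay the soundness argument from the simply typed setting almost verbatim, with the gluing model now indexed by semantic types in $\Uc$. First I would apply the Fundamental Theorem to the hypothesis $\mtyping t T$, obtaining $\mSemtyp t T$. To use this semantic judgment I need an identity-shaped gluing of substitutions: the preceding lemma gives $\mglutms{\vect I}{\vGamma}{\uparrow^{\vGamma}}$, so instantiating $\mSemtyp t T$ at this pair yields
\[
  \mglutm[\vGamma]{t[\vect I]}{T[\vect I]}{\intp{t}(\uparrow^{\vGamma})}{\intp{T}(\uparrow^{\vGamma})}.
\]
Note this statement is well-formed because $\intp{T}(\uparrow^{\vGamma}) \in \Uc$, which follows from the PER fundamental theorem (\Cref{thm:dt:per-fund}) applied to $\msemjudge T$ together with reflexivity of $\uparrow^{\vGamma}$ in $\intp{\vGamma}$.

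Next I would clean up the explicit substitution by $\vect I$. Since $\mtyequiv[\vGamma]{t[\vect I]}{t}{T}$ and $\mtyequiv[\vGamma]{T[\vect I]}{T}{\Se_i}$ (the identity-substitution rules), I apply term conversion and type conversion for the gluing relation, \Cref{lem:dt:glutm-resp-equiv-tm} and \Cref{lem:dt:glutm-resp-equiv-ty}, to get $\mglutm[\vGamma]{t}{T}{\intp{t}(\uparrow^{\vGamma})}{\intp{T}(\uparrow^{\vGamma})}$. Then I invoke the realizability theorem for the gluing model to upgrade this to $\mglutmu[\vGamma]{t}{T}{\intp{t}(\uparrow^{\vGamma})}{\intp{T}(\uparrow^{\vGamma})}$, which by definition says that for every restricted weakening $\vsigma : \vDelta \To_r \vGamma$ we have $\mtyequiv[\vDelta]{t[\vsigma]}{\Rnf_\alpha(\downarrow^{\intp{T}(\uparrow^{\vGamma})}(\intp{t}(\uparrow^{\vGamma}))[\vsigma])}{T[\vsigma]}$.

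Finally I would specialize this universal statement to $\vsigma := \vect I : \vGamma \To_r \vGamma$ (which is restricted, as noted in \Cref{sec:st:rweaken}). The explicit substitutions by $\vect I$ drop up to equivalence, the domain-level $[\vect I]$ acts as the identity \UnMoT, and the right-hand side is by definition exactly $\nbe^T_{\vGamma}(t) = \Rnf_\alpha(\downarrow^{\intp{T}(\uparrow^{\vGamma})}(\intp{t}(\uparrow^{\vGamma})))$. Hence $\mtyequiv[\vGamma]{t}{\nbe^T_{\vGamma}(t)}{T}$, as desired. There is essentially no new obstacle here: all the difficulty has been absorbed into the gluing-model realizability theorem and the fundamental theorem, which are already established; the only points requiring a moment's care are checking that $\intp{T}(\uparrow^{\vGamma}) \in \Uc$ so the gluing judgments typecheck, and tracking the harmless $\vect I$-substitutions through the conversion lemmas.
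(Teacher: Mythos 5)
Your proposal is correct and follows essentially the same route as the paper: instantiate the fundamental theorem at $\mglutms{\vect I}{\vGamma}{\uparrow^{\vGamma}}$, discharge the identity substitutions with \Cref{lem:dt:glutm-resp-equiv-ty,lem:dt:glutm-resp-equiv-tm}, and finish with gluing-model realizability instantiated at a restricted identity weakening. The extra details you spell out (well-definedness of $\intp{T}(\uparrow^{\vGamma}) \in \Uc$ and the final specialization to $\vect I$) are exactly the steps the paper leaves implicit.
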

\begin{proof}
  Applying the fundamental theorem, we have
  $\mglutm{t[\vect I]}{T[\vect
    I]}{\intp{t}(\uparrow^{\vGamma})}{\intp{T}(\uparrow^{\vGamma})}$.  We conclude the
  goal by applying \Cref{lem:dt:glutm-resp-equiv-ty,lem:dt:glutm-resp-equiv-tm} and realizability.
\end{proof}

\bibliographystyle{ACM-Reference-Format}


\end{document}